\documentclass[acmtog,table]{acmart}
\acmSubmissionID{876}

\usepackage{booktabs} 

\citestyle{acmauthoryear}
\setcitestyle{nosort,square} 


\usepackage{picinpar,moresize,xfrac,graphpap,dcolumn,wrapfig,graphicx}

\microtypesetup{stretch=30,shrink=30}
\DeclareGraphicsExtensions{.png,.pdf,.jpg}
\usepackage{subcaption}
\usepackage{stackengine}

\usepackage{tikz,pgfplots}
\usepackage{tikz-cd}
\usepackage{pgf}
\usepgfplotslibrary{colormaps}
\usetikzlibrary{pgfplots.colormaps}
\usetikzlibrary{arrows,automata,positioning,backgrounds}
\usepackage{rotating}
\pgfplotsset{compat=newest}
\pgfplotsset{plot coordinates/math parser=false}
\newlength\figureheight
\newlength\figurewidth

\usepackage{soul,array,calc,url,ragged2e,graphpap}
\urlstyle{rm}
\usepackage{booktabs} 
\usepackage{tabularx}
\usepackage{colortbl}
\usepackage{multirow}
\usepackage{hyphenat}
\usepackage{transparent}
\usepackage{enumerate}
\usepackage{mathrsfs}
\usepackage{mdframed}
\usepackage{acro}

\usepackage[figure]{hypcap}
\usepackage{mathtools}
\mathtoolsset{centercolon}
\usepackage{nicefrac}
\usepackage{units}
\usepackage[normalem]{ulem}
\usepackage{cancel}
\usepackage{pbox}
\usepackage{multicol}
\usepackage{cases}
\usepackage[all]{xy}
\usepackage{nicematrix}

\usepackage{adjustbox}
\usepackage{wrapfig}
\AfterEndEnvironment{wrapfigure}{\setlength{\intextsep}{0mm}}

\usepackage{cleveref}
\crefmultiformat{subequation}%
{\edef\crefstripprefixinfo{#1}(#2#1#3}%
{,#2\crefstripprefix{\crefstripprefixinfo}{#1}#3)}%
{,#2\crefstripprefix{\crefstripprefixinfo}{#1}#3}%
{,#2\crefstripprefix{\crefstripprefixinfo}{#1}#3)}


\makeatletter
\DeclareFontFamily{OMX}{MnSymbolE}{}
\DeclareSymbolFont{MnLargeSymbols}{OMX}{MnSymbolE}{m}{n}
\SetSymbolFont{MnLargeSymbols}{bold}{OMX}{MnSymbolE}{b}{n}
\DeclareFontShape{OMX}{MnSymbolE}{m}{n}{
    <-6>  MnSymbolE5
   <6-7>  MnSymbolE6
   <7-8>  MnSymbolE7
   <8-9>  MnSymbolE8
   <9-10> MnSymbolE9
  <10-12> MnSymbolE10
  <12->   MnSymbolE12
}{}
\DeclareFontShape{OMX}{MnSymbolE}{b}{n}{
    <-6>  MnSymbolE-Bold5
   <6-7>  MnSymbolE-Bold6
   <7-8>  MnSymbolE-Bold7
   <8-9>  MnSymbolE-Bold8
   <9-10> MnSymbolE-Bold9
  <10-12> MnSymbolE-Bold10
  <12->   MnSymbolE-Bold12
}{}
\let\llangle\@undefined
\let\rrangle\@undefined
\DeclareMathDelimiter{\llangle}{\mathopen}%
                     {MnLargeSymbols}{'164}{MnLargeSymbols}{'164}
\DeclareMathDelimiter{\rrangle}{\mathclose}%
                     {MnLargeSymbols}{'171}{MnLargeSymbols}{'171}
\makeatother



\usepackage{fancyvrb,listings}
\usepackage{algorithm}
\usepackage{algorithmicx}
\usepackage[noend]{algpseudocode}

\algrenewcommand\alglinenumber[1]{\footnotesize #1:}
\makeatletter  
 \renewcommand{\ALG@name}{\small Algorithm} 
\makeatother 
\algdef{SE}[DOWHILE]{Do}{doWhile}{\algorithmicdo}[1]{\algorithmicwhile\ #1}%


\newtheoremstyle{mine}{3pt}{3pt}{\itshape}{}{\bfseries}{.}{.5em}{}
\theoremstyle{mine}
\newtheorem{theorem}{Theorem}[section]
\newtheorem{lemma}{Lemma}[section]
\newtheorem{proposition}{Proposition}[section]
\newtheorem{corollary}{Corollary}[section]
\newtheorem{definition}{Definition}[section]
\newtheorem{example}{Example}[section]

\newtheorem{remark}{Remark}[section]

\newcommand{\figref}[1]{\textup{Fig.~\ref{#1}}}
\newcommand{\tabref}[1]{\textup{Table~\ref{#1}}}

\newcommand{\secref}[1]{\textup{Section~\ref{#1}}}
\newcommand{\appref}[1]{\textup{Appendix~\ref{#1}}}

\newcommand{\thmref}[1]{\textup{Theorem~\ref{#1}}}
\newcommand{\lemref}[1]{\textup{Lemma~\ref{#1}}}
\newcommand{\remref}[1]{\textup{Remark~\ref{#1}}}
\newcommand{\defref}[1]{\textup{Definition~\ref{#1}}}
\newcommand{\propref}[1]{\textup{Proposition~\ref{#1}}}
\newcommand{\corref}[1]{\textup{Corollary~\ref{#1}}}
\newcommand{\exref}[1]{\textup{Example~\ref{#1}}}

\def\etal{\emph{et al.}}
\def\cf{\emph{cf.}}
\def\ie{\emph{i.e.}}
\def\eg{\emph{e.g.}}


\def\RR{\mathbb{R}}

\def\bA{\mathbf{A}}
\def\bB{\mathbf{B}}

\def\bP{\mathbf{P}}

\def\bV{\mathbf{V}}

\def\cB{\mathcal{B}}
\def\cC{\mathcal{C}}

\def\cE{\mathcal{E}}
\def\cF{\mathcal{F}}

\def\cH{\mathcal{H}}
\def\cI{\mathcal{I}}
\def\cJ{\mathcal{J}}

\def\cO{\mathcal{O}}
\def\cP{\mathcal{P}}

\def\cW{\mathcal{W}}

\def\sfi{\mathsf{i}}
\def\sfj{\mathsf{j}}
\def\sfk{\mathsf{k}}

\def\sfn{\mathsf{n}}

\def\sfp{\mathsf{p}}

\def\ba{\mathbf{a}}
\def\bb{\mathbf{b}}
\def\bc{\mathbf{c}}
\def\bd{\mathbf{d}}

\def\bff{\mathbf{f}}
\def\bg{\mathbf{g}}
\def\bh{\mathbf{h}}

\def\bp{\mathbf{p}}
\def\bq{\mathbf{q}}
\def\br{\mathbf{r}}

\def\bw{\mathbf{w}}
\def\bx{\mathbf{x}}
\def\by{\mathbf{y}}

\def\bLambda{\boldsymbol{\Lambda}}

\def\btau{\boldsymbol{\tau}}

\def\bpsi{\boldsymbol{\psi}}

\def\fB{\mathfrak{B}}
\def\fD{\mathfrak{D}}

\def\fX{\mathfrak{X}}

\def\bzero{\mathbf{0}}
\def\bpartial{\boldsymbol{\partial}}

\def\continuity{\mathcal{C}}
\def\dt{{\Deltait t}}
\def\adjoint{\intercal}

\usepackage{bbm}
\DeclareSymbolFont{bbold}{U}{bbold}{m}{n}
\DeclareSymbolFontAlphabet{\mathbbold}{bbold}





\def\tr {\operatorname{tr}}
\def\det{\operatorname{det}}

\def\im {\operatorname{im}}
\def\ker{\operatorname{ker}}

\def\id{\operatorname{id}}

\def\cof{\operatorname{cof}}
\def\J{\cJ}

\DeclareMathOperator*{\argmin}{argmin}


\DeclareMathOperator{\SL}{SL}

\renewcommand{\so}{\mathfrak{so}} 

\DeclareMathOperator{\gl}{\mathfrak{gl}}
\DeclareMathOperator{\slla}{\mathfrak{sl}}

\DeclareMathOperator{\SDiff}{SDiff}

\DeclareMathOperator{\ad}{ad}

\DeclareMathOperator{\ham}{\mathfrak{ham}}

\DeclareMathOperator{\Der}{Der}

\DeclareMathOperator{\Adv}{Adv}
\DeclareMathOperator{\adv}{adv}

\DeclarePairedDelimiterX\braket[2]{\langle}{\rangle}{#1\,\delimsize\vert\,\mathopen{}#2}


\newcommand{\grad}{\mathop{\mathrm{grad}}\nolimits}
\newcommand{\sgrad}{\mathop{\mathrm{sgrad}}\nolimits}
\newcommand{\curl}{\mathop{\mathrm{curl}}\nolimits}

\renewcommand{\div}{\mathop{\mathrm{div}}\nolimits}
\newcommand{\LD}{\mathop{\mathscr{L}}\nolimits}




\def\ip{i}









\def\bpsi{\boldsymbol{\psi}}

\setcopyright{rightsretained}
\acmJournal{TOG}
\acmYear{2024} \acmVolume{43} \acmNumber{6} \acmArticle{270} \acmMonth{12}\acmDOI{10.1145/3687970}

\begin{document}
\title{Fluid Implicit Particles on Coadjoint Orbits}

\author{Mohammad Sina Nabizadeh}
\affiliation{%
  \institution{University of California San Diego}
  \streetaddress{9500 Gilman Dr, MC 0404}
  \city{La Jolla}
  \state{CA}
  \postcode{92093}
  \country{USA}
}
\email{mnabizad@ucsd.edu}

\author{Ritoban Roy-Chowdhury}
\affiliation{%
  \institution{University of California San Diego}
  \streetaddress{9500 Gilman Dr, MC 0404}
  \city{La Jolla}
  \state{CA}
  \postcode{92093}
  \country{USA}
}
\email{rroychowdhury@ucsd.edu}

\author{Hang Yin}
\affiliation{%
  \institution{University of California San Diego}
  \streetaddress{9500 Gilman Dr, MC 0404}
  \city{La Jolla}
  \state{CA}
  \postcode{92093}
  \country{USA}
}
\email{h7yin@ucsd.edu}

\author{Ravi Ramamoorthi}
\affiliation{%
  \institution{University of California San Diego}
  \streetaddress{9500 Gilman Dr, MC 0404}
  \city{La Jolla}
  \state{CA}
  \postcode{92093}
  \country{USA}
}
\email{ravir@ucsd.edu}
\author{Albert Chern}
\affiliation{%
  \institution{University of California San Diego}
  \streetaddress{9500 Gilman Dr, MC 0404}
  \city{La Jolla}
  \state{CA}
  \postcode{92093}
  \country{USA}
}
\email{alchern@ucsd.edu}

\begin{abstract}
We propose Coadjoint Orbit FLIP (CO-FLIP), a high order accurate, structure preserving fluid simulation method in the hybrid Eulerian-Lagrangian framework. We start with a Hamiltonian formulation of the incompressible Euler Equations, and then, using a local, explicit, and high order divergence free interpolation, construct a modified Hamiltonian system that governs our discrete Euler flow.  The resulting discretization, when paired with a geometric time integration scheme, is energy and circulation preserving (formally the flow evolves on a coadjoint orbit) and is similar to the Fluid Implicit Particle (FLIP) method.  CO-FLIP enjoys multiple additional properties including that the pressure projection is exact in the weak sense, and the particle-to-grid transfer is an exact inverse of the grid-to-particle interpolation. The method is demonstrated numerically with outstanding stability, energy, and Casimir preservation.  We show that the method produces benchmarks and turbulent visual effects even at low grid resolutions.
\end{abstract}

%
\begin{CCSXML}
<ccs2012>
   <concept>
       <concept_id>10010147.10010371.10010352.10010379</concept_id>
       <concept_desc>Computing methodologies~Physical simulation</concept_desc>
       <concept_significance>500</concept_significance>
       </concept>
   <concept>
       <concept_id>10002950.10003714.10003715.10003722</concept_id>
       <concept_desc>Mathematics of computing~Interpolation</concept_desc>
       <concept_significance>300</concept_significance>
       </concept>
   <concept>
       <concept_id>10002950.10003714.10003715.10003750</concept_id>
       <concept_desc>Mathematics of computing~Discretization</concept_desc>
       <concept_significance>300</concept_significance>
       </concept>
   <concept>
       <concept_id>10002950.10003714.10003727.10003729</concept_id>
       <concept_desc>Mathematics of computing~Partial differential equations</concept_desc>
       <concept_significance>500</concept_significance>
       </concept>
 </ccs2012>
\end{CCSXML}

\ccsdesc[500]{Computing methodologies~Physical simulation}
\ccsdesc[300]{Mathematics of computing~Interpolation}
\ccsdesc[300]{Mathematics of computing~Discretization}
\ccsdesc[500]{Mathematics of computing~Partial differential equations}

\keywords{Geometric fluid mechanics, Hamiltonian mechanics, Structure preserving discretizations, Mimetic interpolation}

\maketitle

\section{Introduction}
\label{sec:Introduction}

\begin{figure} \centering
\begin{picture}(240,280)
    \put(0,0){\includegraphics[{trim=85px 0 95px 0px},clip,width=\columnwidth]{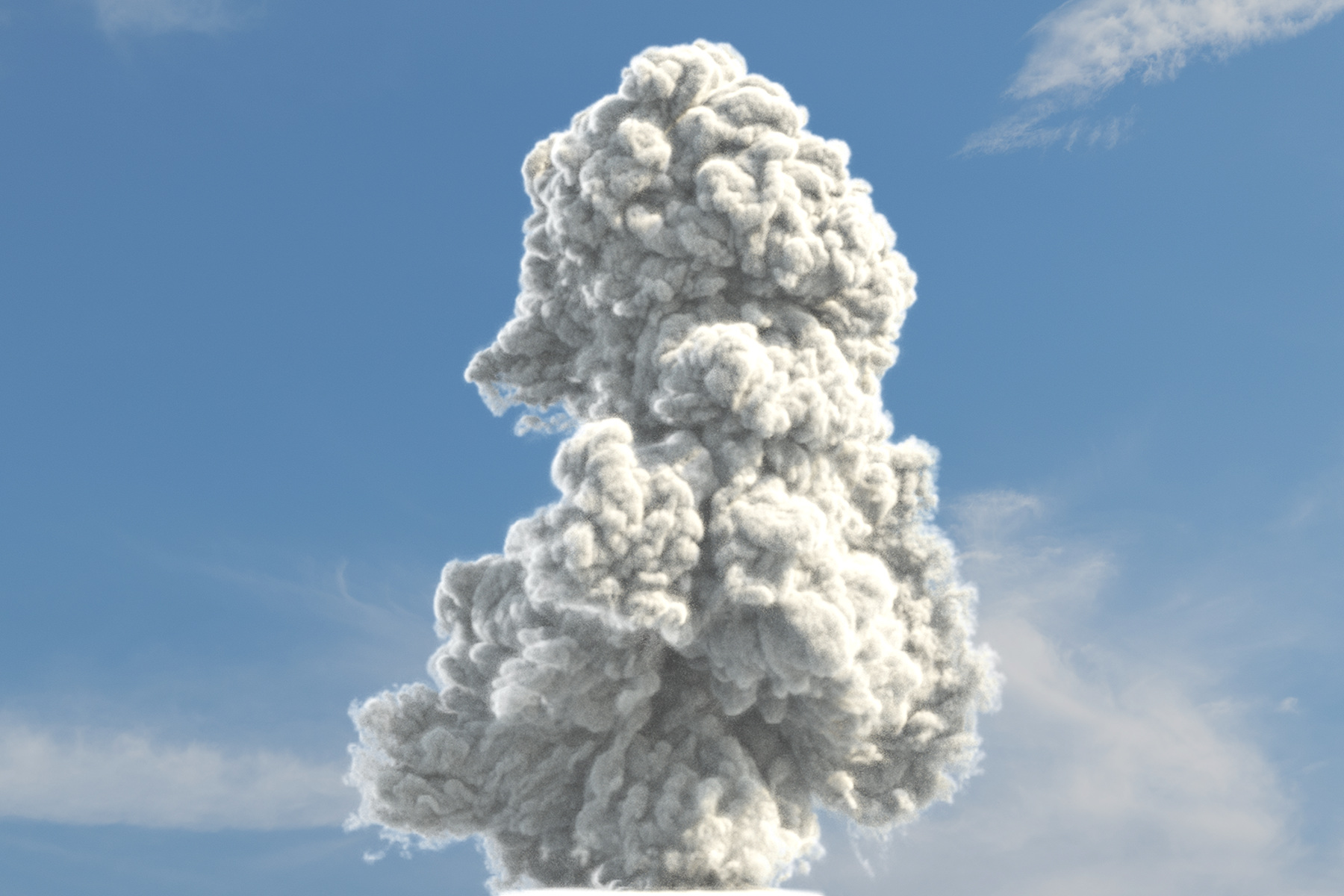}}
    \put(0.5,201.5){\frame{\scalebox{-1}[1]{\includegraphics[trim={320px 1100px 340px 414px},clip,width=0.3\columnwidth]{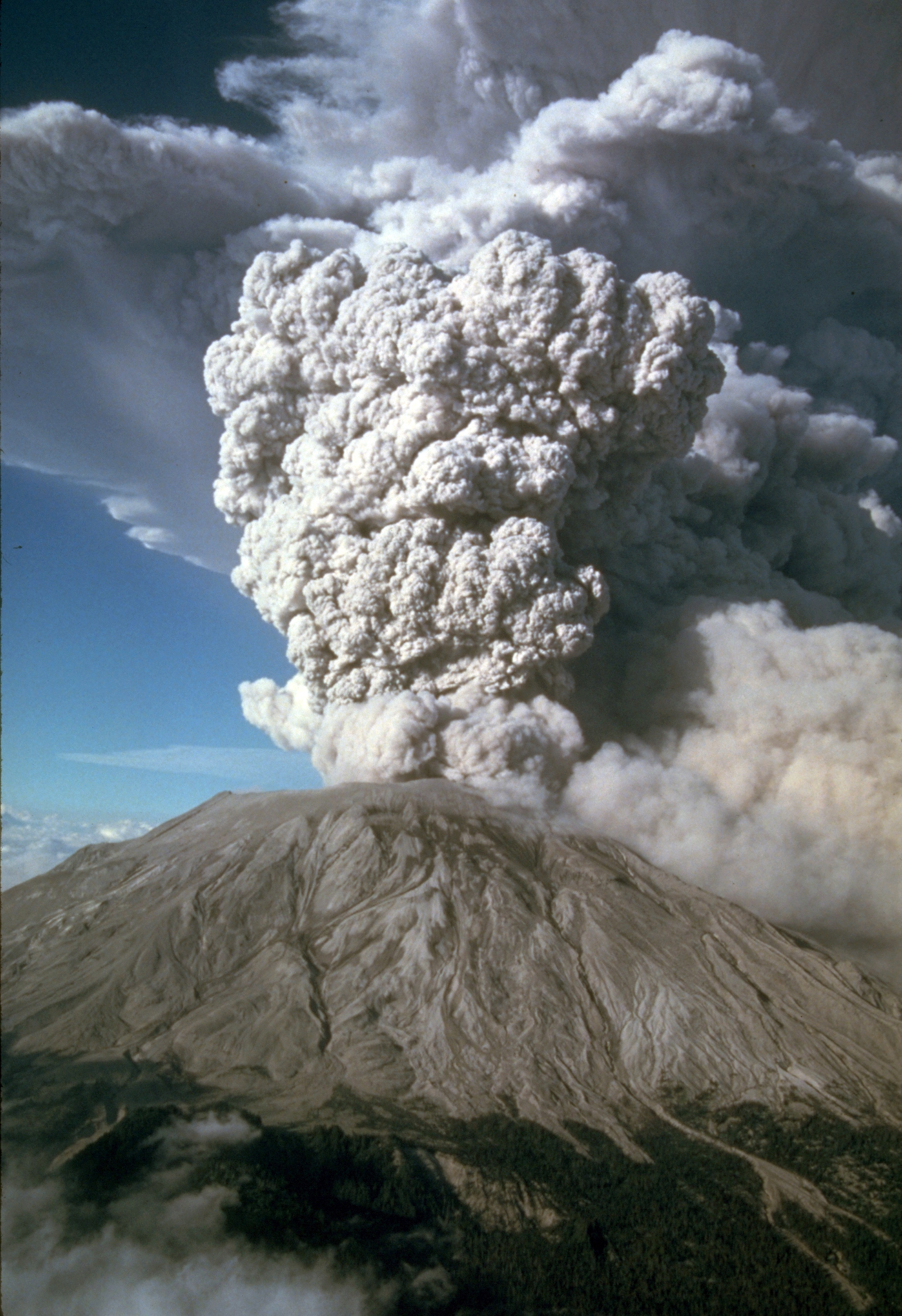}}}}
\end{picture}
\caption{
    Pyroclastic cloud simulated using our method (CO-FLIP).
    Note the intricate vortical structures present on the plume despite the low simulation grid resolution of $96\times192\times96$. 
    Inset: Image of volcano eruption from Mount St. Helens, Washington, USA, on May 18th, 1980 (Photograph by Harris \& Ewing, Inc. [Public domain], via Wikimedia Commons).
    }
    \label{fig:pyroclastic}
\end{figure}

Discretizing incompressible fluid dynamical systems has been a central focus in the development of computational fluid dynamics and fluid animations in computer graphics.
The goal is to find a computationally tractable finite dimensional analogue of the infinite dimensional continuous fluid system that properly represents the simulated measurements and visual phenomena. 
Among these fluid phenomena, the one that has received considerable attention is the conservation of \emph{vortices} in fluids at low or zero viscosity.
They constitute the main visual elements in smoke visual effects as well as ingredients in turbulent flows.
This conservation of vortices, better known as the \emph{circulation conservation}, is, however, challenging to reproduce accurately and stably after discretizations.

In this paper, we revisit the mathematical foundation of the vortex-related conservation laws.  
Specifically, the theory involves describing incompressible fluid dynamics using Lie algebras and Hamiltonian formulations.
The payoff of the theory is that it provides a concrete characterization for the circulation conservation: ``\emph{the circulation state stays on a coadjoint orbit},'' a proposition that will be explained later (\secref{sec:Theory1}).
We propose a discretization framework that satisfies this proposition.
It turns out the natural algorithm derived from the framework is reminiscent of a Fluid Implicit Particles (FLIP) method \cite{Brackbill:1988:FLIP}, a familiar method in the fluid simulation community in graphics.
Our discrete fluid system is therefore referred to as Coadjoint Orbit Fluid Implicit Particles (CO-FLIP).
Inheriting from the mathematical properties of the Hamiltonian formulation of fluid dynamics, CO-FLIP preserves energy and circulation even on a low resolution grid.
Despite such low resolutions, our method produces intricate and fractal-shaped vortical structures (see \figref{fig:pyroclastic}).
In addition, CO-FLIP exhibits outstanding stability compared to similar methods.

While the mathematical properties of CO-FLIP involve a deeper study in geometric fluid mechanics, our CO-FLIP method can be described using the familiar language of FLIP-based algorithms. See \tabref{tab:methods} for a list of acronyms used throughout the text.

\begin{table}
    \centering
    \caption{Method acronyms used throughout the paper. }
    \vspace*{-3ex}
    \label{tab:methods}
    \footnotesize
    \setlength{\tabcolsep}{1pt}
\begin{tabularx}{\columnwidth}{p{100pt}p{60pt}p{30pt}}
    \toprule
    \rowcolor{white}
    Method & Reference & Acronym \\
    \midrule
    Polynomial Particle-in-Cell & \scriptsize \cite{Fu:2017:PolyPIC} & PolyPIC \\
    Fluid Implicit Particles & \scriptsize \cite{Brackbill:1988:FLIP} & FLIP \\
    Polynomial FLIP & \scriptsize \cite{Fei:2021:ASFLIP} & PolyFLIP \\
    Reflection & \scriptsize \cite{Zehnder:2018:ARS} & R \\
    Covector Fluids & \scriptsize \cite{Nabizadeh:2022:CF} & CF \\
    Neural Flow Map & \scriptsize \cite{Deng:2023:FSN} & NFM \\
    \textbf{Coadjoint Orbit FLIP} & \scriptsize \textbf{Our method} & \textbf{CO-FLIP} \\
    \bottomrule
\end{tabularx}
\end{table}

\begin{figure}
\centering
\setlength{\unitlength}{1pt}
\begin{picture}(240,180)
\put(0,0){
\includegraphics[width=240pt]{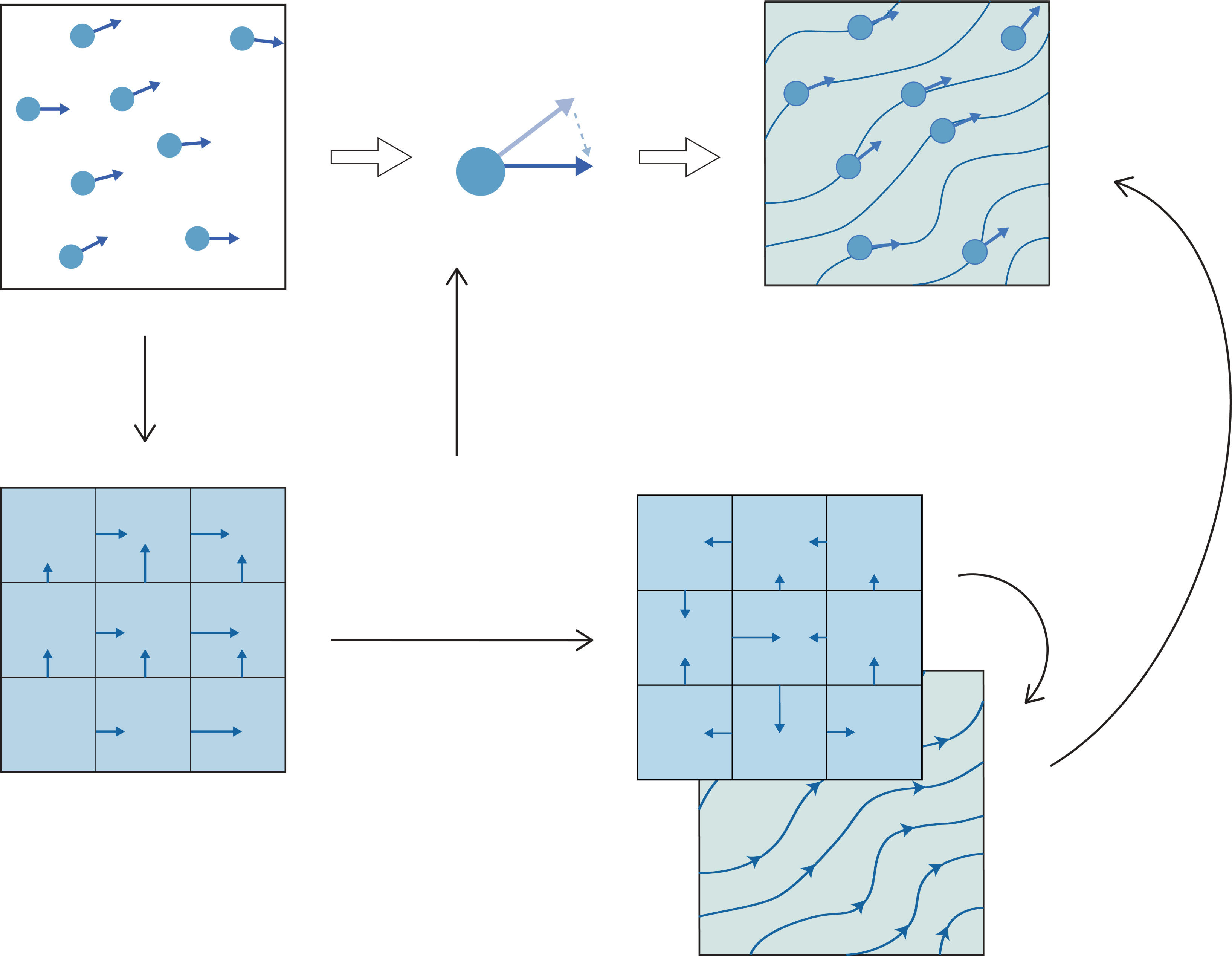}
}
\put(10,105){\sffamily P2G}
\put(75,70){\parbox{10ex}{\linespread{0.6} \small \sffamily Pressure projection}}
\put(97,110){\parbox{30pt}{\linespread{0.6} \small \sffamily Interpolate pressure force}}
\put(190,73){\parbox{30pt}{\linespread{0.6} \small \sffamily Interpolate velocity}}
\put(180,105){ \sffamily G2P}
\put(218,100){\parbox{30pt}{\linespread{0.6} \small \sffamily Advect particles}}
\end{picture}
\caption{The classical FLIP method for simulating incompressible fluids. Particles (top row) are moved, and their momentum updated, using a background velocity field which is reconstructed from the pressure projection of the transferred particle data to the grid (bottom row).  The CO-FLIP algorithm modifies each stage: The particles move and their velocity covector is transformed (i) using a point-wise div-free interpolation (ii) of grid velocity data transferred from particles using the pseudoinverse of the div-free interpolation (iii), and, further, Galerkin pressure projected (iv). This time evolution is now a Hamiltonian ODE which we integrate geometrically (v). }
\label{fig:flip_algorithm}
\end{figure}

\paragraph{Quick Recall of the FLIP Method}
FLIP \cite{Brackbill:1988:FLIP} is one of the family of Particle in Cell (PIC) \cite{Harlow:1962:PIC} methods or Material Point Methods (MPMs) \cite{Sulsky:1995:MPM}. 
They model incompressible and inviscid fluids governed by the Euler equations using a hybrid of Lagrangian and Eulerian coordinates.  
The fluid state is given by the positions and velocities of a large number of particles.  There is also an auxiliary grid that discretizes the fluid domain.  A so-called \emph{particle-to-grid transfer} (P2G) interpolates the particle velocities to the grid as an Eulerian velocity field.  
This grid velocity is typically stored in a Marker-And-Cell (MAC) staggered grid, with fluxes annotated on the grid's cell walls.
The grid velocity data is then \emph{pressure-projected} (PP), via solving a central-difference Poisson equation, to become a discrete divergence-free velocity field.
The divergence-free grid velocity is interpolated (\emph{grid-to-particle} (G2P)) into a continuous velocity field over the fluid domain.  
The particle positions are updated by flowing along this interpolated velocity field, and the particle velocity is updated by the interpolated pressure force computed in the pressure projection step.
This concludes the entire time step of the FLIP algorithm. See \figref{fig:flip_algorithm}.

Popular cousin methods include the classical PIC and Affine-PIC (APIC) \cite{Jiang:2015:APIC} methods.
The classical PIC method replaces the particle velocity by the G2P velocity at every time step. 
APIC is a PIC method that additionally stores spatial gradients of velocities on particles for better quality P2G and G2P.  
FLIP is more advantageous to PIC or APIC as FLIP does not overwrite the particle velocity with a lower-resolution grid velocity.  
Thus FLIP is seen as a long-time method of characteristic, while PIC or APIC has their velocity state limited to a grid resolution comparable to purely grid-based semi-Lagrangian \cite{Stam:1999:SF} or finite volume methods \cite{Mullen:2009:EPI}.
The downside of FLIP is that it has been observed to be more unstable compared to PIC.
Common implementations of FLIP blend FLIP's particle velocity with PIC or APIC velocity for stability.

\paragraph{Our CO-FLIP Method}
From a numerical viewpoint, to obtain CO-FLIP, we modify the FLIP method to satisfy the following properties. See \figref{fig:flip_algorithm} for a summary of the modifications.
\begin{enumerate}[(i)]
    \item\label{item:intro-covector} The particle velocity is stored as a \emph{velocity covector} \cite{Nabizadeh:2022:CF}, also known as an \emph{impulse} \cite{Feng:2022:IFS,Deng:2023:FSN}.  As a particle position is updated by flowing along a G2P velocity, its velocity covector is transformed by the transpose of the G2P velocity gradient. As depicted in the inset, vectors are thought of as arrows with magnitude and direction, whereas covectors are taken as local level sets with magnitude (its compactness) and direction (its orientation).
    
        \setlength{\unitlength}{1pt}
        \begin{picture}(240,90)
        \put(20,15){
        \includegraphics[width=60pt]{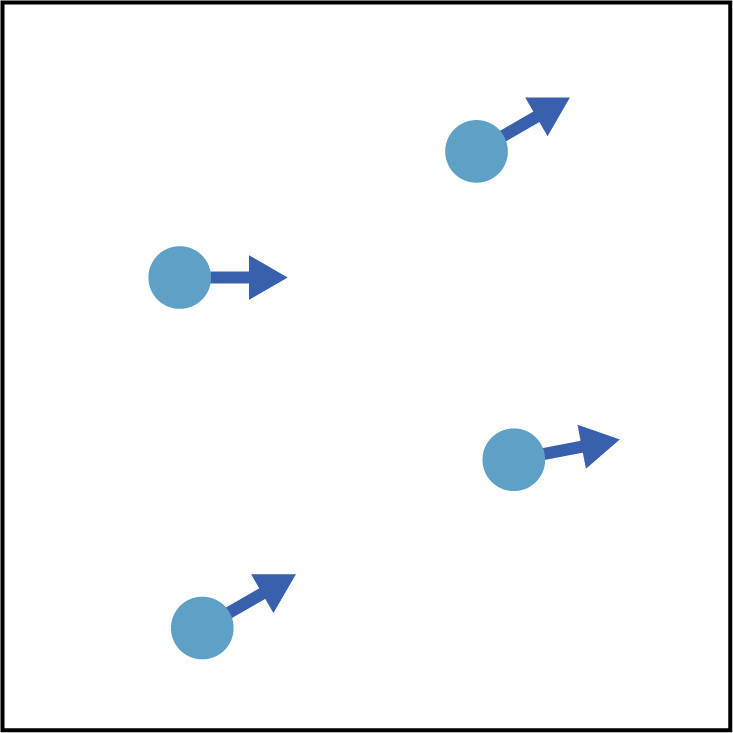}}
        \put(125,15){
        \includegraphics[width=60pt]{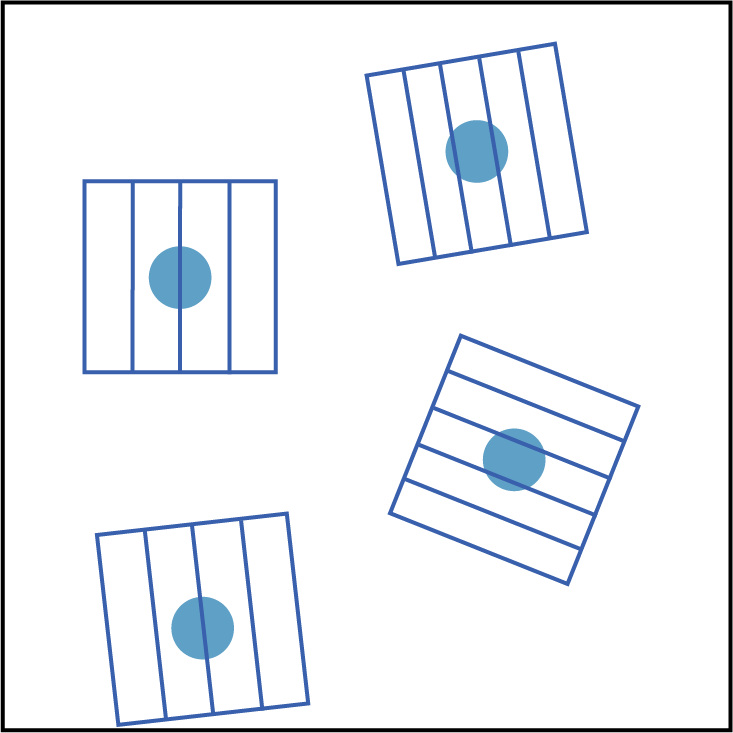}}
        \put(28,5){\small \sffamily Velocity vectors}
        \put(113,5){\small \sffamily Velocity covectors (impulse)}
        \put(51,80){\makebox(0,0)[c]{\small\sffamily FLIP}}
        \put(156,80){\makebox(0,0)[c]{\small\sffamily CO-FLIP}}
        \end{picture}

    \item\label{item:intro-mimetic} The G2P interpolation is \emph{mimetic}.  That is, a discretely divergence-free MAC-grid velocity will interpolate into a continuously divergence-free vector field.  This G2P interpolation is, therefore, an embedding from the finite dimensional space of grid velocity data into the infinite dimensional space of vector fields so that the discrete divergence-free subspace maps into the continuous divergence-free subspace.  Using a specific usage of B-splines \cite{Buffa:2010:IGA,Schroeder:2022:LDP,RoyChowdhury:2024:HOD}, we have an explicit local interpolation scheme that is mimetic and of arbitrarily high order. As shown in the inset, this results in point-wise divergence-free interpolated velocity fields for the CO-FLIP method; as opposed to the traditional FLIP methods which show particle clumping.
    
        \setlength{\unitlength}{1pt}
        \begin{picture}(240,126)
        \put(-5,-3){
        \includegraphics[width=230pt]{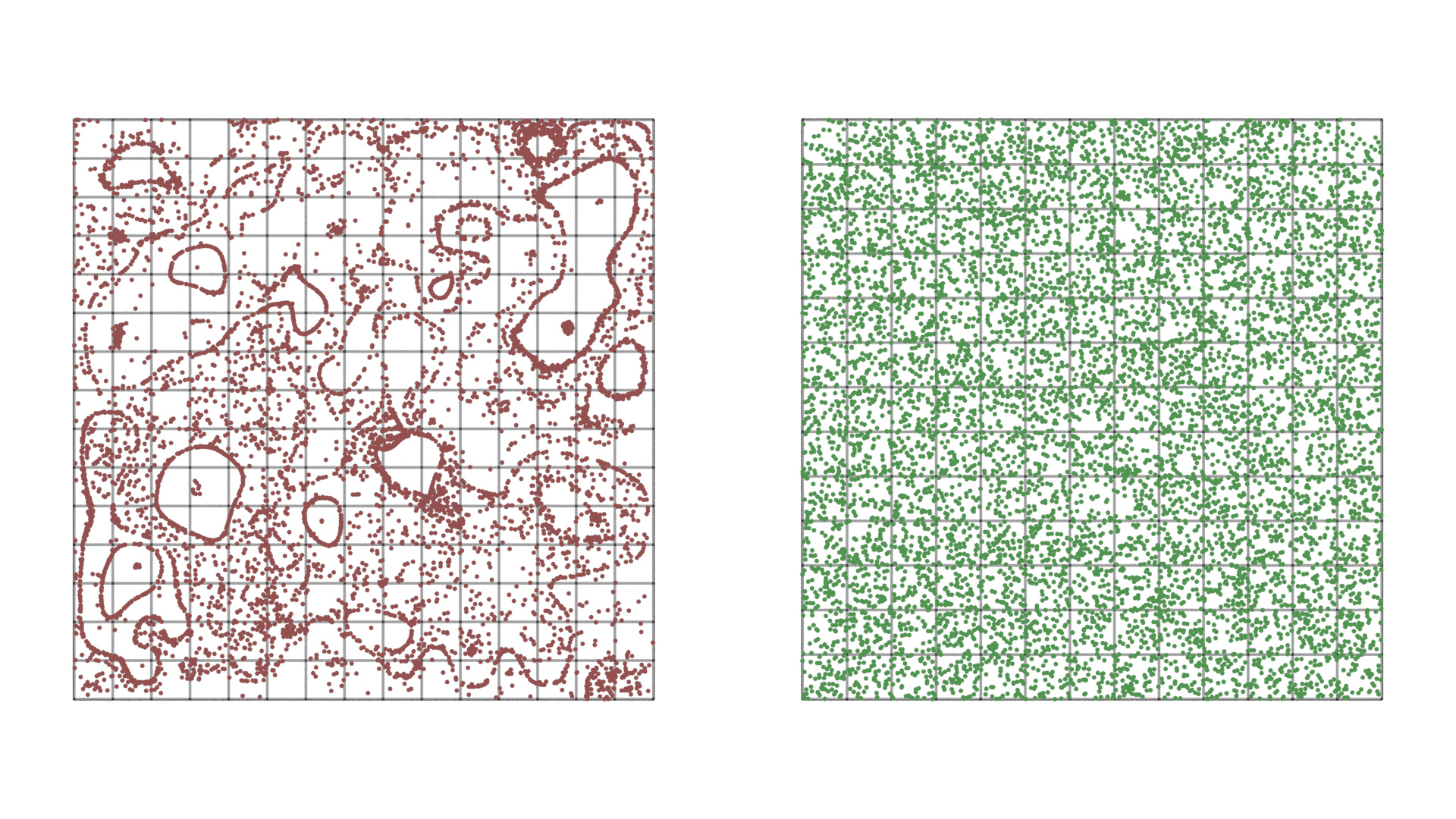}}
        \put(22,5){\small \sffamily Not divergence-free}
        \put(146,5){\small \sffamily Divergence-free}
        \put(0,63){\makebox(0,0)[c]{\small\sffamily \rotatebox{90}{FLIP}}}
        \put(119,66){\makebox(0,0)[c]{\small\sffamily \rotatebox{90}{CO-FLIP}}}
        \end{picture}

    \begin{adjustbox}{minipage={\linewidth}, valign=t}    
    \begin{wrapfigure}{t}{0.4\linewidth}
        \vspace{10pt}
        \hspace{-15pt}
        \includegraphics[width=\linewidth+10pt]{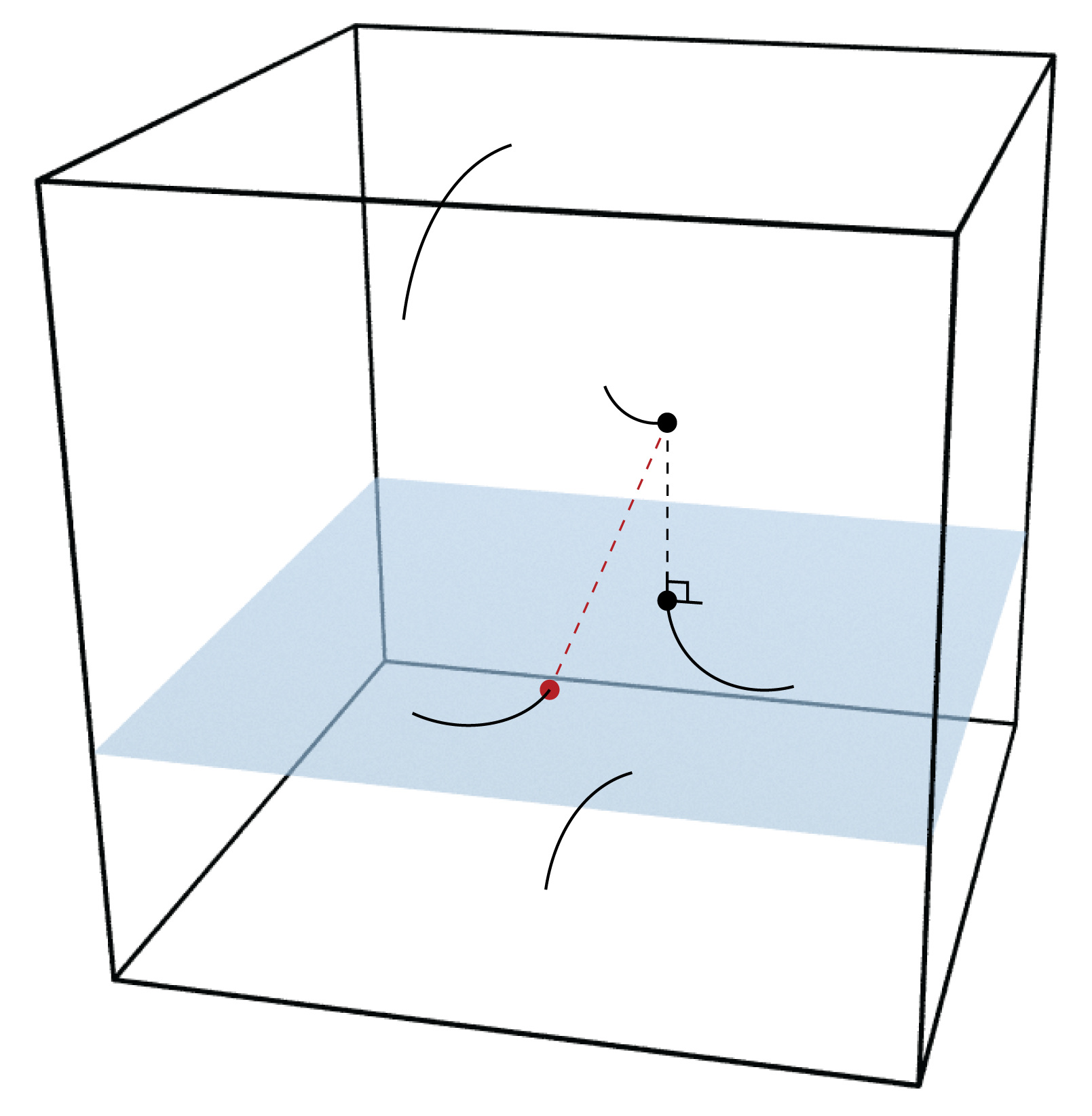}
        \begin{picture}(0,0)(0,0)
            \put(-55,88){\sffamily \scriptsize \parbox{7em}{\linespread{0.6} Space of smooth vector fields}}
            \put(-67,14){\sffamily \scriptsize \parbox{8em}{\linespread{0.6} Subspace of interpolated vector fields}}
            \put(-57,68){\sffamily \scriptsize {Particle velocity}}
            \put(-27, 37){\sffamily \scriptsize \parbox{6em}{\linespread{0.6} CO-FLIP \\ grid velocity}}
            \put(-88, 34){\sffamily \scriptsize \parbox{6em}{\linespread{0.6} Traditional grid velocity}}
        \end{picture}
        \vspace{-2\baselineskip}
    \end{wrapfigure}
    \vspace*{0.15em}
    \item\label{item:intro-galerkin-P2G} The P2G interpolation is derived as the \emph{pseudoinverse} of the mimetic G2P interpolation of (\ref{item:intro-mimetic}).  If a particle velocity data is assigned by an interpolation from a grid velocity data, its P2G interpolation exactly reconstructs the grid velocity data (P2G\({}\mathbin{\circ}{}\)G2P is the identity map on the space of grid velocities).  If the particle velocity is not in the range of G2P, the P2G operator orthogonally projects it on the image of G2P.
    In particular, the order of accuracy of the interpolation is directly governed by the B-spline order of (\ref{item:intro-mimetic}). 
    There is no need for the APIC or PolyPIC \cite{Fu:2017:PolyPIC} technique to include additional Taylor expansion modes on particles for improving the P2G accuracy.  As illustrated in the inset, this means we orthogonally project to the subspace of interpolated velocities. In contrast, previous methods find a less accurate, non-orthogonal projection.
    \end{adjustbox}

    \begin{adjustbox}{minipage={\linewidth}, valign=t}    
    \begin{wrapfigure}{t}{0.4\linewidth}
        \vspace{10pt}
        \hspace{-15pt}
        \includegraphics[width=\linewidth+10pt]{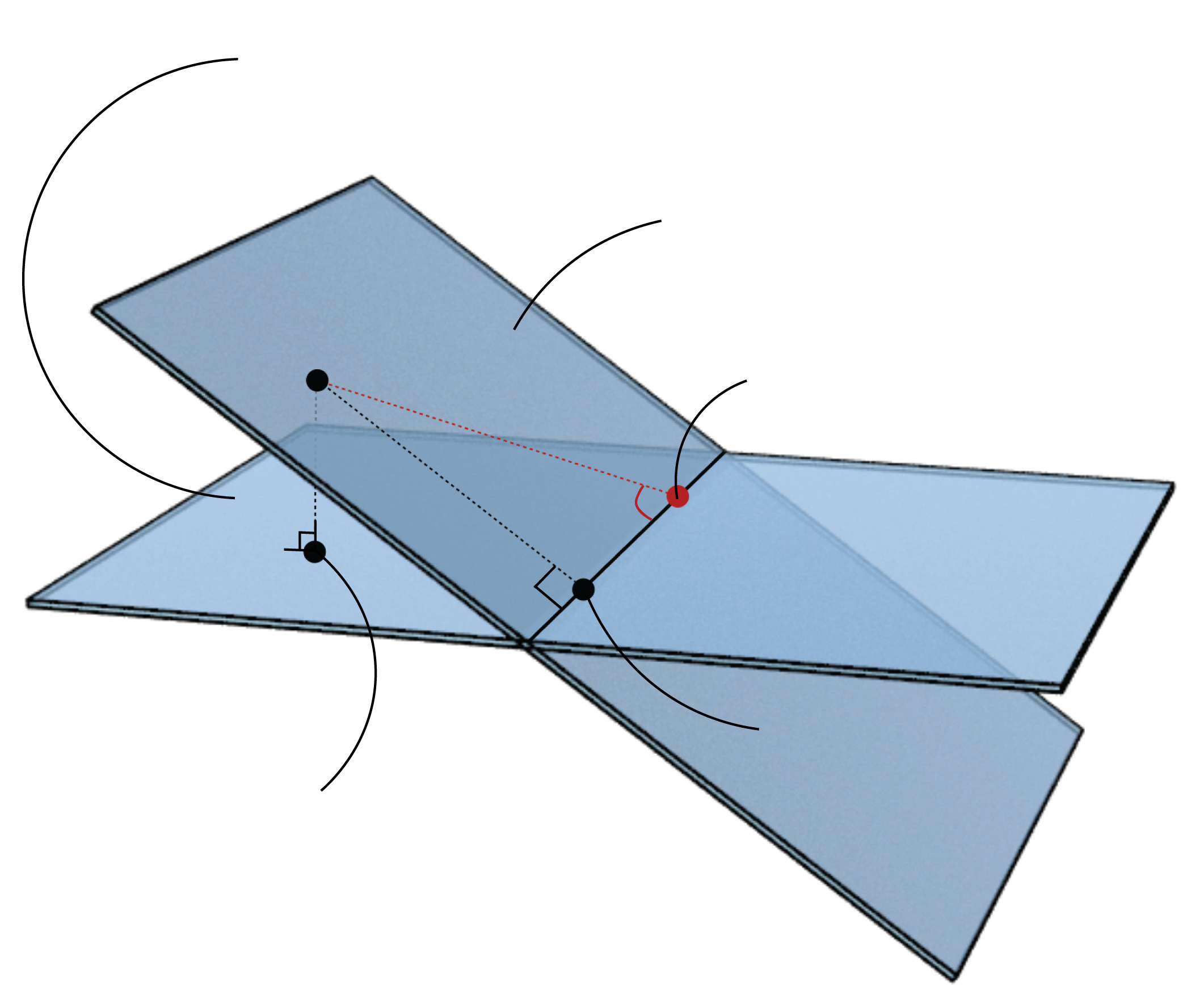}
        \begin{picture}(0,0)(0,0)
            \put(-81,78){\sffamily \scriptsize \parbox{8em}{\linespread{0.6} Subspace of smooth div-free vector fields}}
            \put(-47,63){\sffamily \scriptsize \parbox{8em}{\linespread{0.6} Subspace of interpolated vector fields}}
            \put(-40,46){\sffamily \scriptsize \parbox{8em}{\linespread{0.6} Traditional \\ pressure projection}}
            \put(-39, 16){\sffamily \scriptsize \parbox{8em}{\linespread{0.6} CO-FLIP \\ pressure projection}}
            \put(-90,9){\sffamily \scriptsize \parbox{6em}{\linespread{0.6} Exact pressure projection}}
        \end{picture}
        \vspace{-2\baselineskip}
    \end{wrapfigure}
    \vspace*{0.15em}
    \item\label{item:intro-galerkin-PP} The discrete pressure projection step over the grid is \emph{Galerkin}.  Instead of solving the simplified central difference Poisson equation, we derive the pressure projection respecting the inner product structure induced by the continuous \(L^2\) structure through the embedding (\ref{item:intro-mimetic}).
    In particular, this pressure projection is the exact \(L^2\) projection to the divergence-free subspace of B-splines when it is mapped by the mimetic interpolation of (\ref{item:intro-mimetic}).
    Compared to traditional methods, this means we have a high order pressure projection with minimal error as demonstrated in the inset. 
    \end{adjustbox}

    \item\label{item:intro-Hamiltonian}
    The resulting time-evolution equation for the particles is a Hamiltonian ODE, allowing us to employ structure preserving geometric integrators.
\end{enumerate}
Some of the above numerical ideas were known in the literature but have not been integrated into fluid simulations in a practical manner. In that regard, we highlight a few additional numerical contributions.
\begin{itemize}
    \item The divergence-free interpolation methods were recently introduced to computer graphics. Unlike these recent techniques requiring global linear solves \cite{Chang:2022:CurlFlow,Lyu:2024:WPE}, our mimetic interpolation (\ref{item:intro-mimetic}) is local and explicit and does not require any linear solve.
    \item Pseudoinverse-based P2G (\ref{item:intro-galerkin-P2G}) discussed above is introduced in the original MPM work of \cite{Sulsky:1995:MPM} but was immediately simplified to the common PIC interpolations through mass lumping.  Existing work XPIC \cite{Hammerquist:2017:XPIC} and FMPM \cite{Nairn:2021:FMPM} attempts to better approximate the pseudoinverse-based P2G through a truncated Neumann series.  We show that the exact pseudoinverse-based P2G can be computed efficiently by a standard preconditioned conjugate gradient (PCG) solver.

    \item Structure-preserving time integrators \cite{Mullen:2009:EPI,Pavlov:2011:SPD,Azencot:2014:FFS} for (\ref{item:intro-Hamiltonian}) are usually implicit, involving expensive high-dimensional root finding with Newton solves.  We show that these implicit methods can be bootstrapped from simple explicit integrators via a few fixed point iterations.
\end{itemize}

In addition to commonly tested fluid simulation benchmarks, we demonstrate the following conservation laws quantitatively.  Some of these conservation laws have rarely been tested in previous numerical work.  In addition to energy, momentum, and angular momentum conservation,  
in 2D \(\iint (\curl\vec u)^k dA\) for every \(k=1,2,3,\ldots\) is conserved under the Euler equation.  In 3D, the helicity \(\iiint \vec u\cdot(\curl\vec u) dV\) is conserved.  Here \(\vec u\) denotes the fluid velocity.  These invariants of the Euler flow are known as the \emph{Casimir invariants}.
We point out that the conservation of these Casimir invariants is the quantitative measure for the circulation conservation.

Besides being a higher order method based on (\ref{item:intro-mimetic})--(\ref{item:intro-galerkin-PP}), the CO-FLIP algorithm shows superior invariant conservation, both in energy and Casimirs, and stability compared to a classical FLIP method.  Moreover, CO-FLIP is capable of producing benchmark vortex phenomena at low grid resolutions (see \figref{fig:unknot_evolution}). For instance, $64^3$, or even lower resolutions of $32^3$.

\begin{figure*}
    \centering
    \includegraphics[trim={450px 0 600px 0},clip,width=0.16\linewidth]{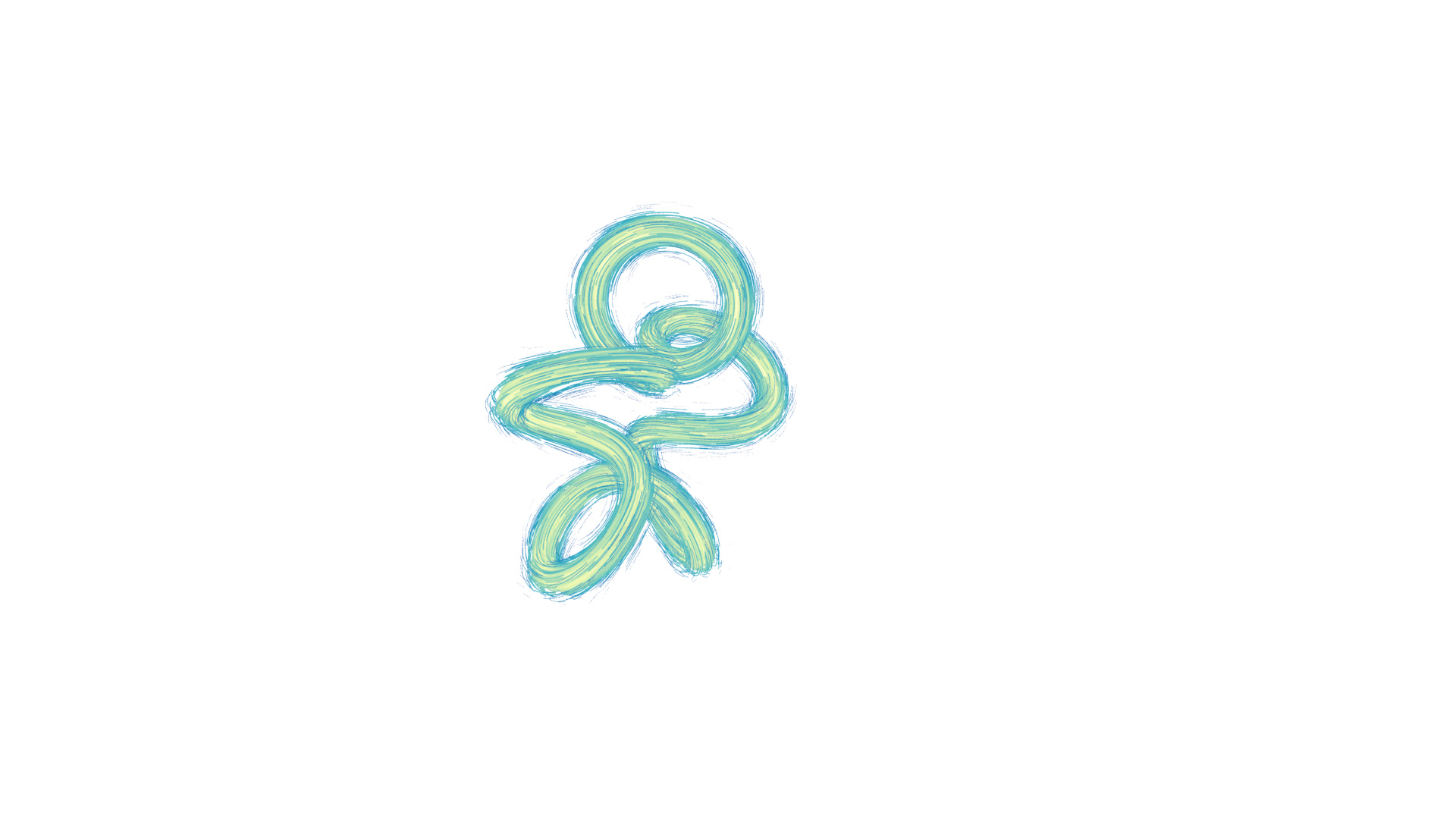}
    \includegraphics[trim={450px 0 600px 0},clip,width=0.16\linewidth]{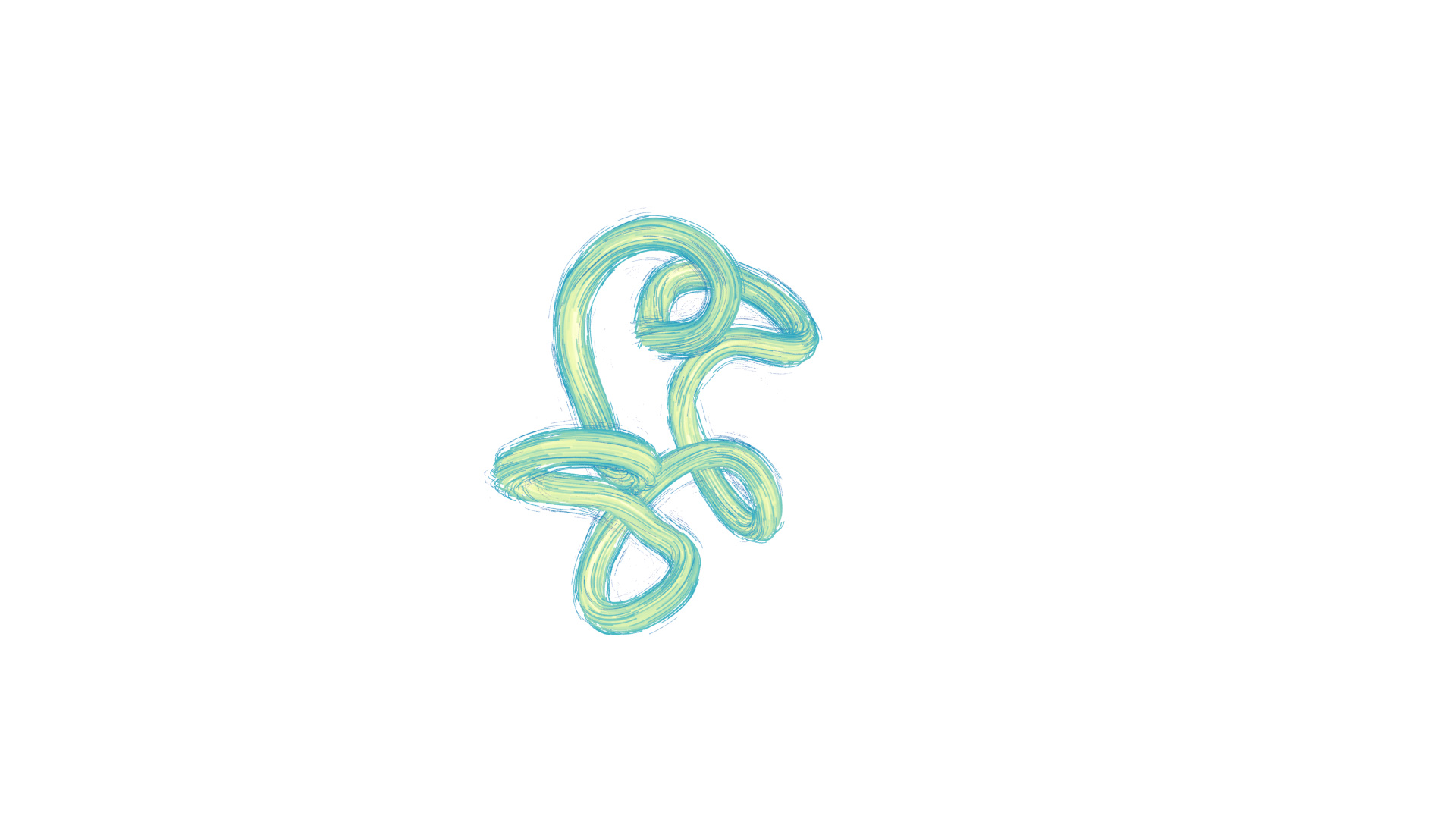}
    \includegraphics[trim={450px 0 600px 0},clip,width=0.16\linewidth]{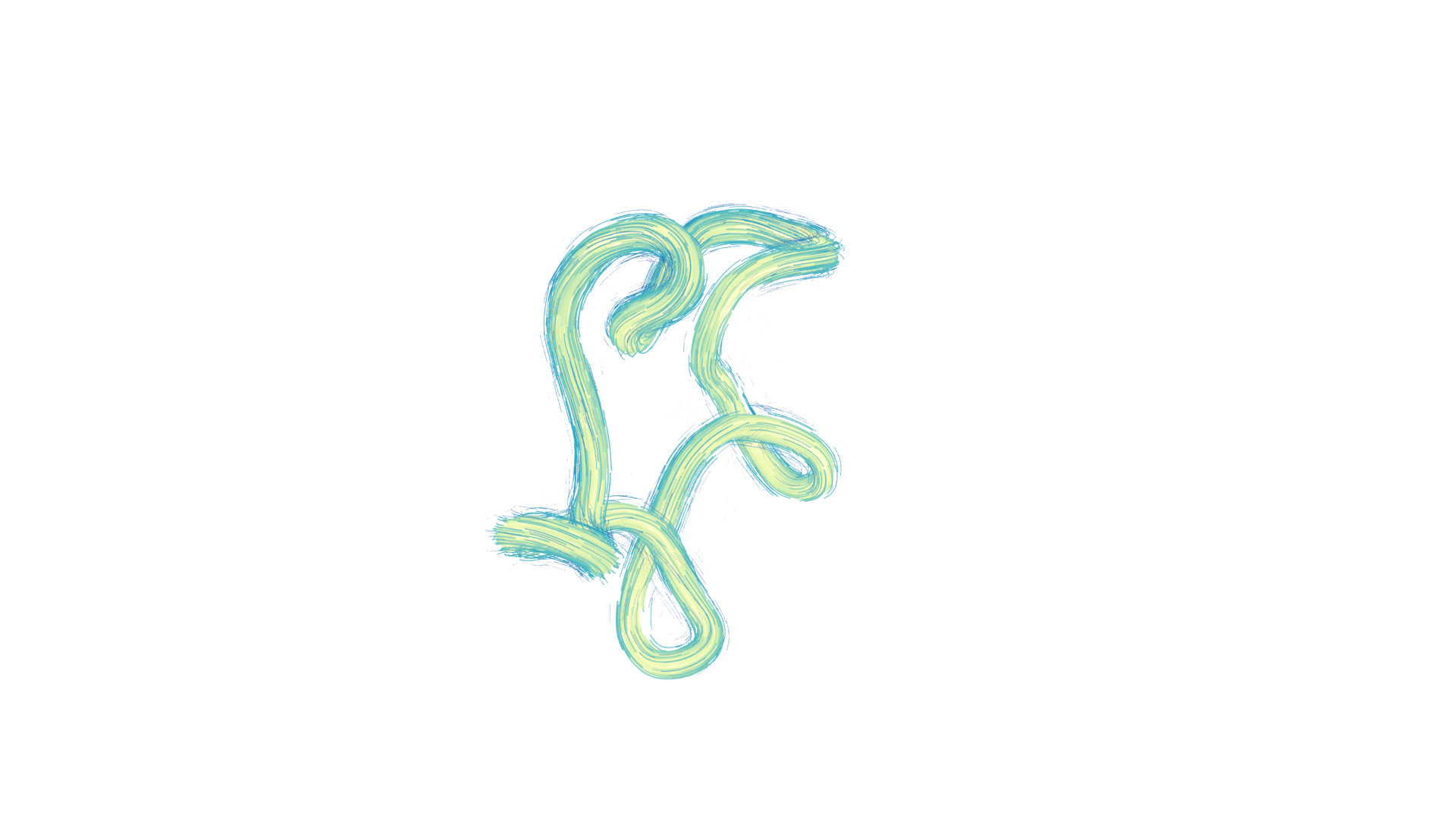}
    \includegraphics[trim={450px 0 600px 0},clip,width=0.16\linewidth]{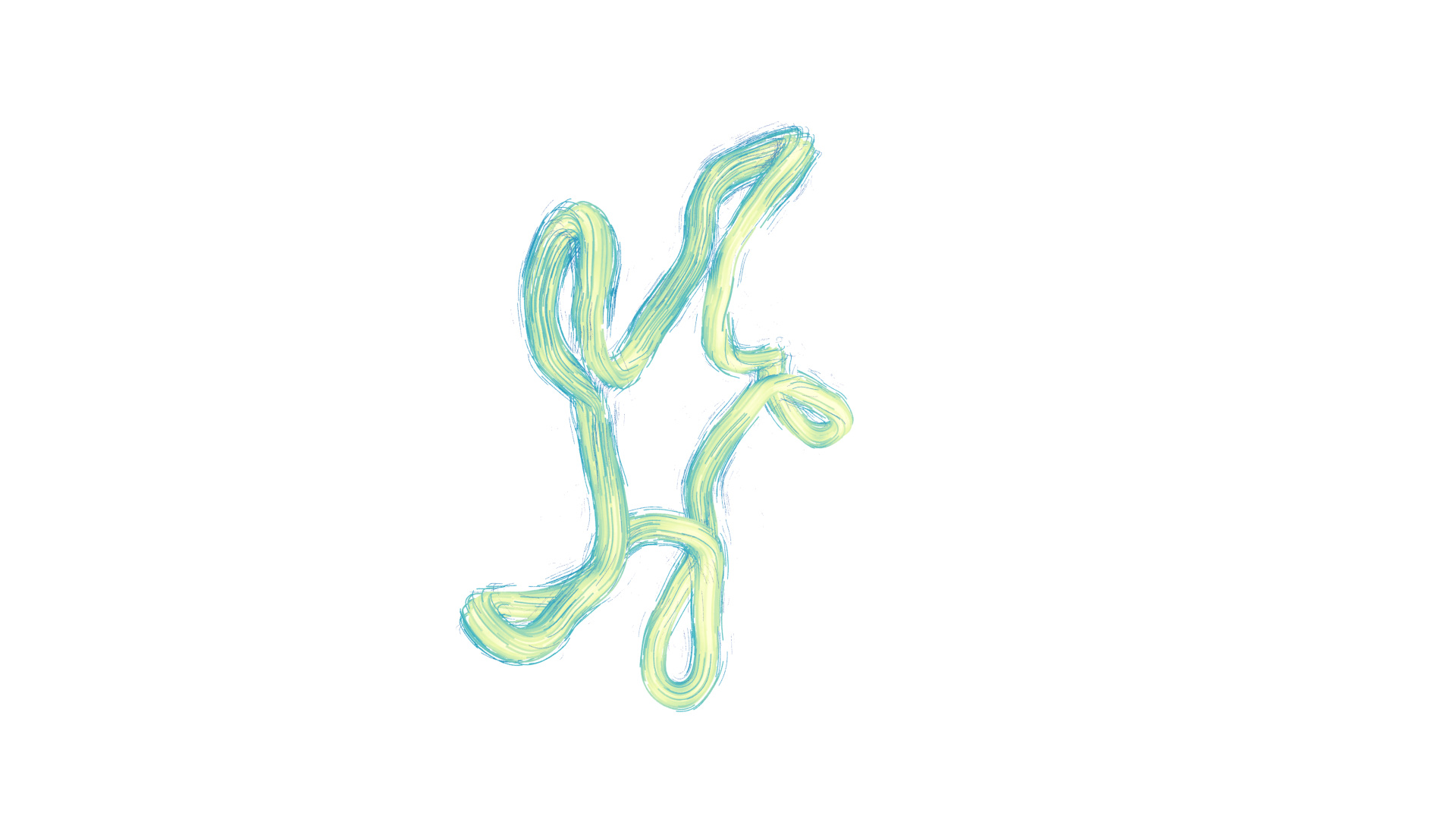}
    \includegraphics[trim={450px 0 600px 0},clip,width=0.16\linewidth]{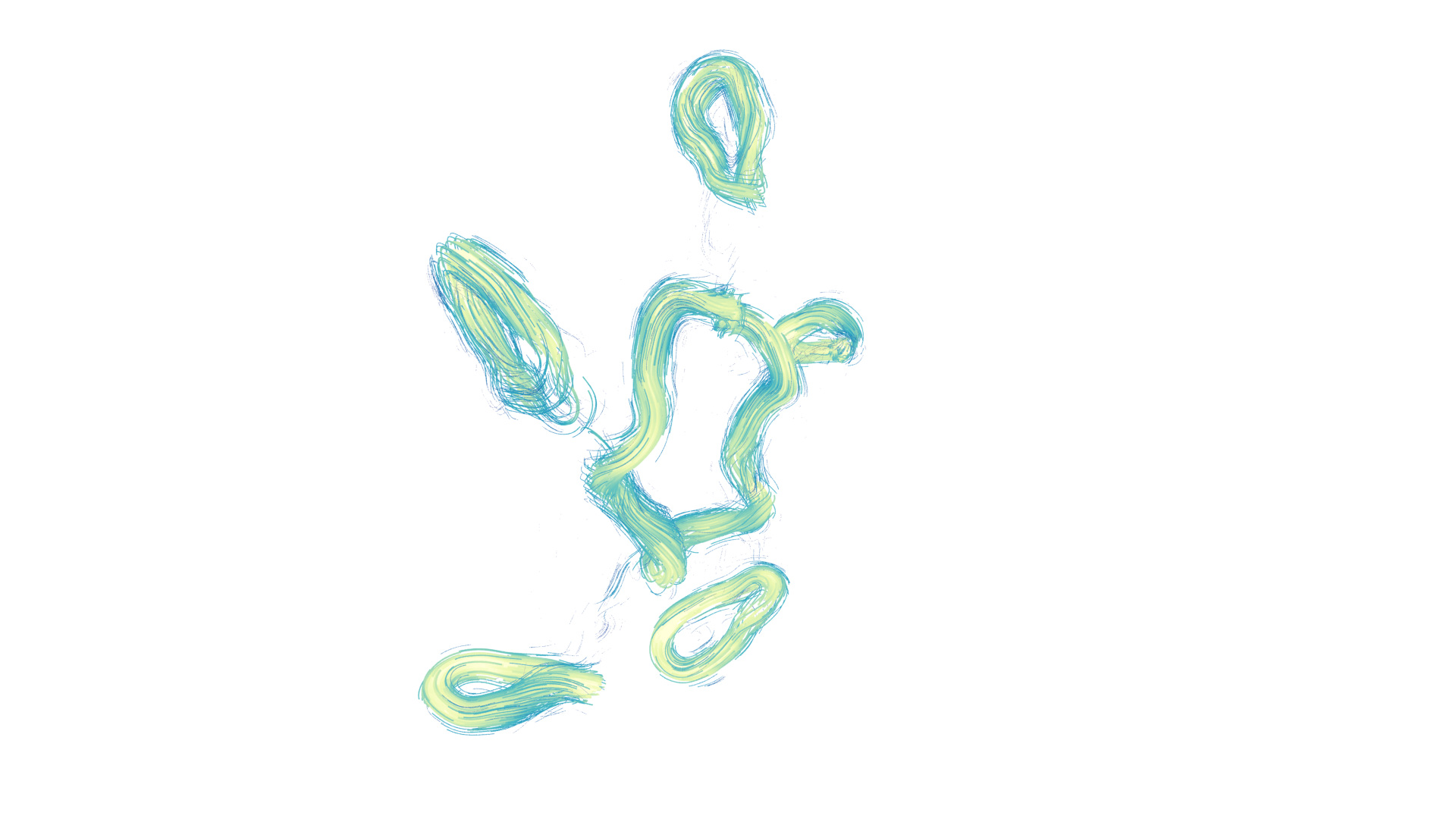}
    \includegraphics[trim={450px 0 600px 0},clip,width=0.16\linewidth]{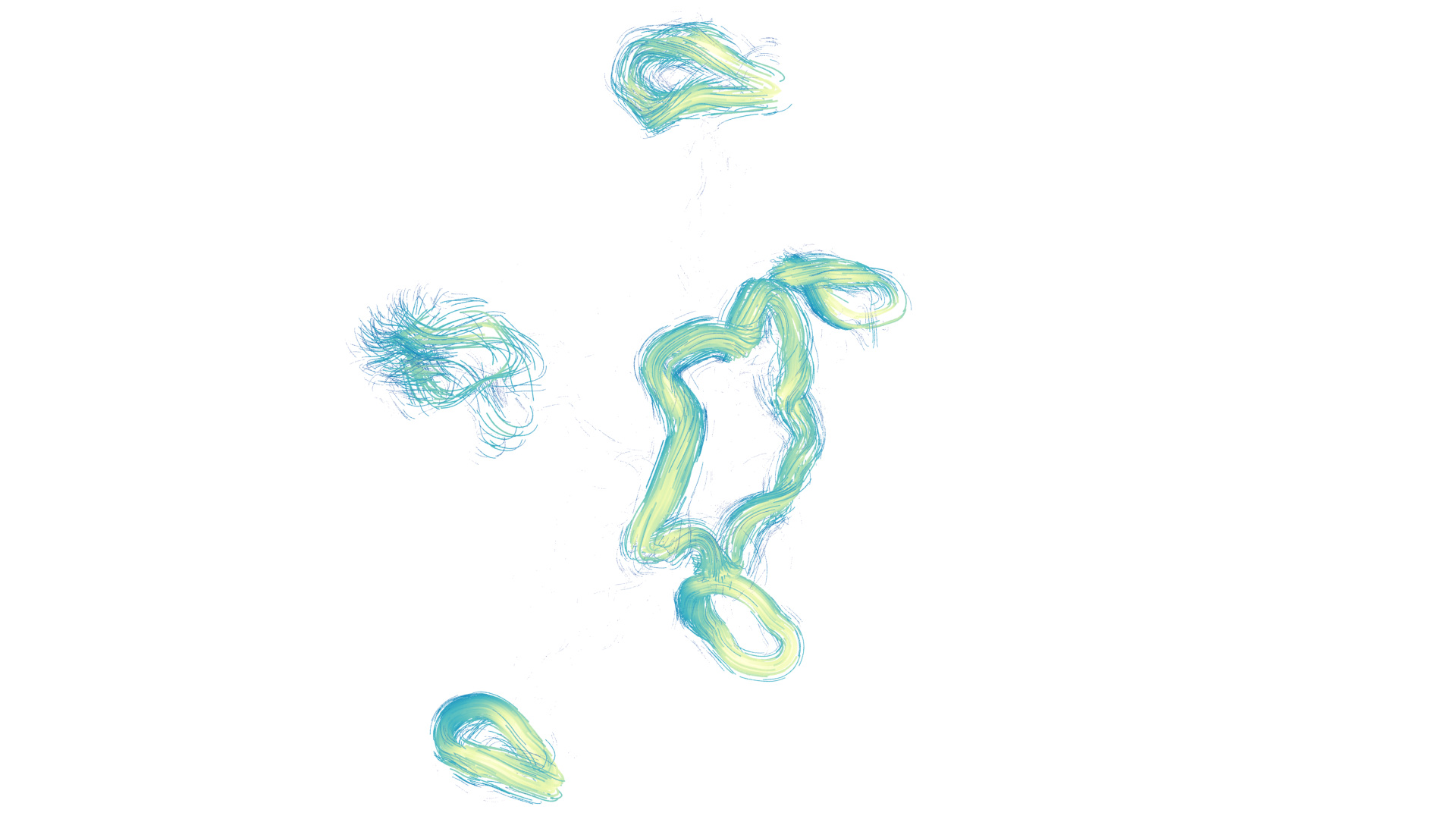}
    \vspace{10pt}
    \includegraphics[trim={450px 0 400px 0},clip,width=0.16\linewidth]{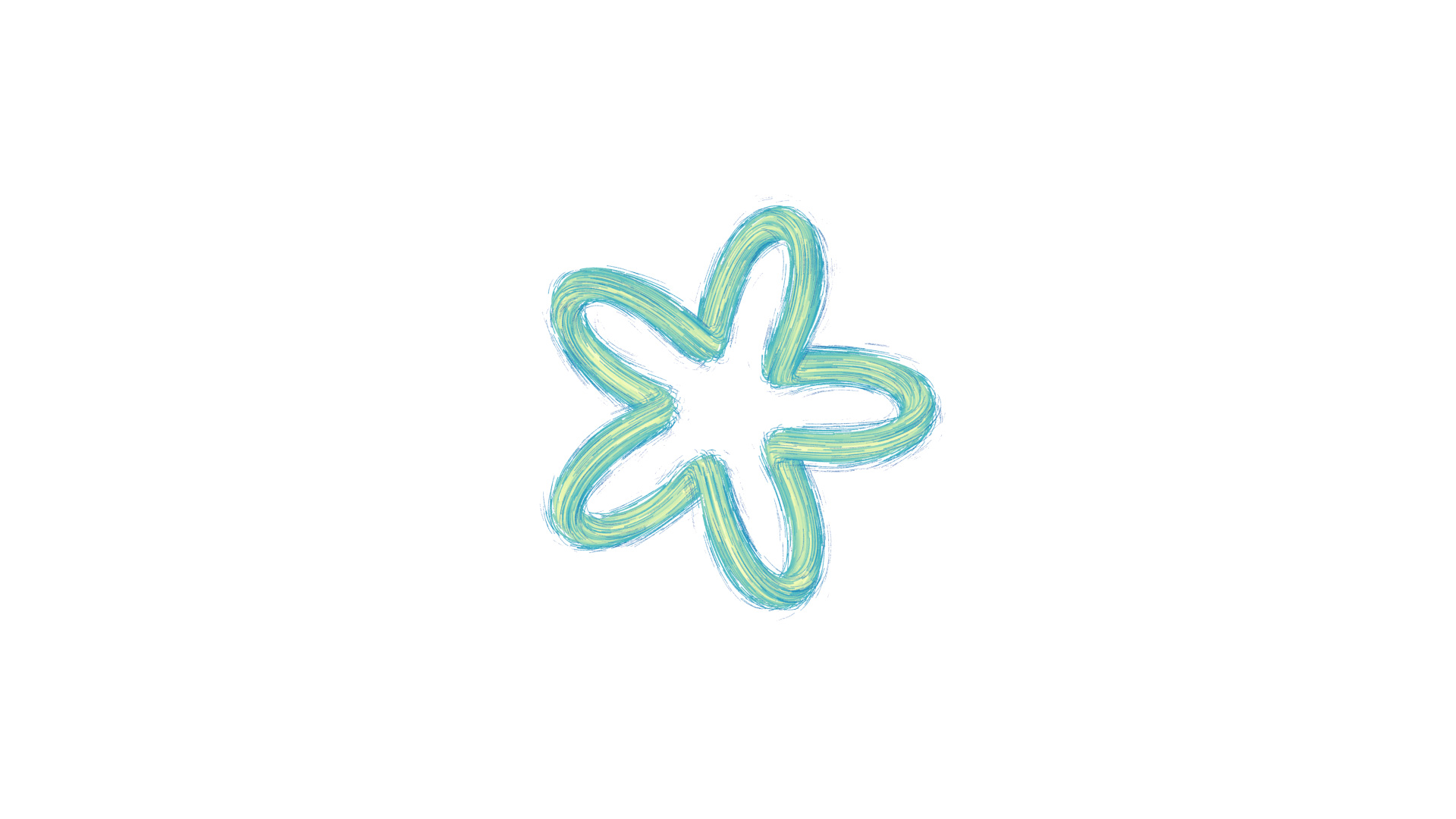}
    \includegraphics[trim={450px 0 400px 0},clip,width=0.16\linewidth]{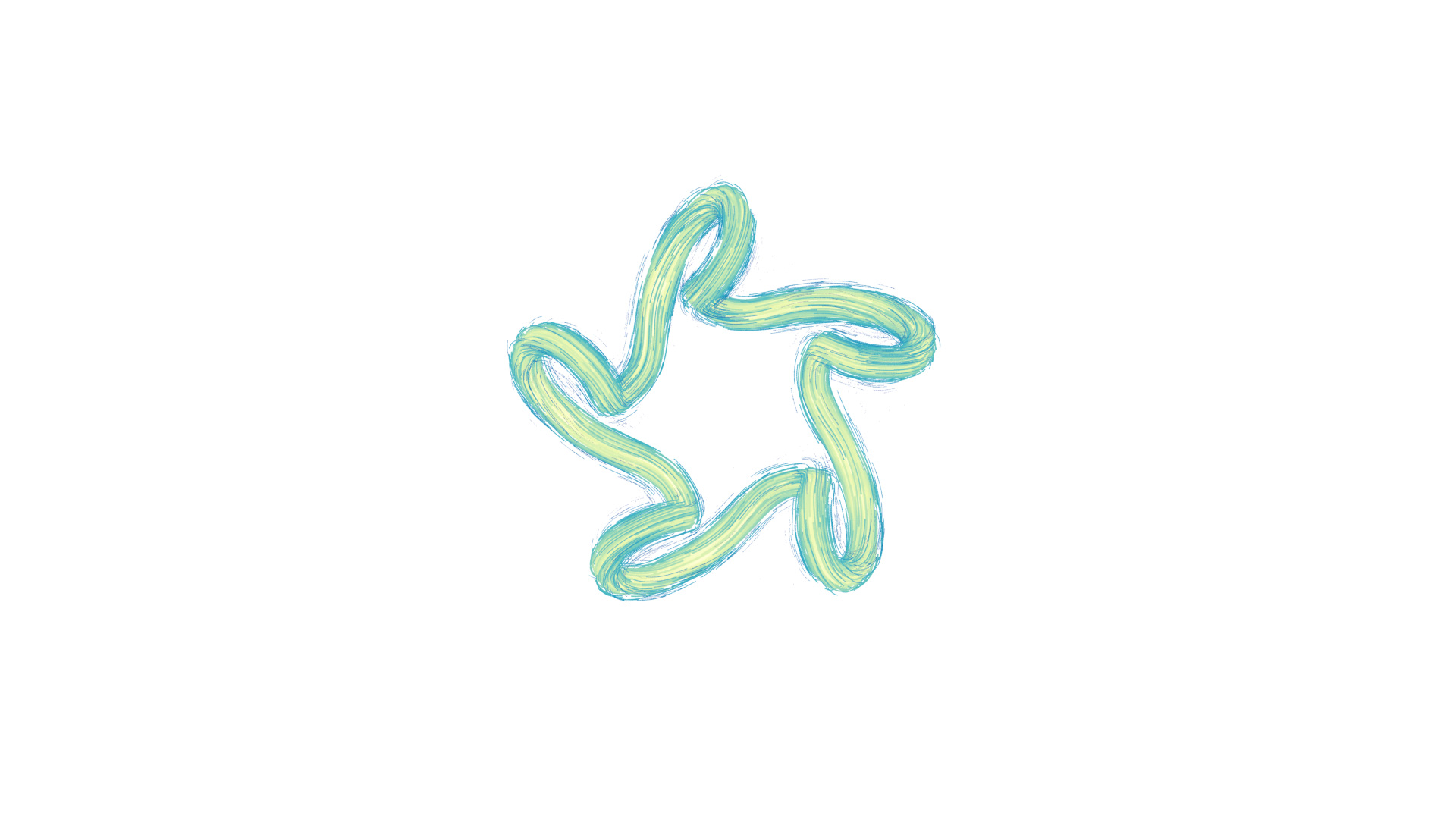}
    \includegraphics[trim={450px 0 400px 0},clip,width=0.16\linewidth]{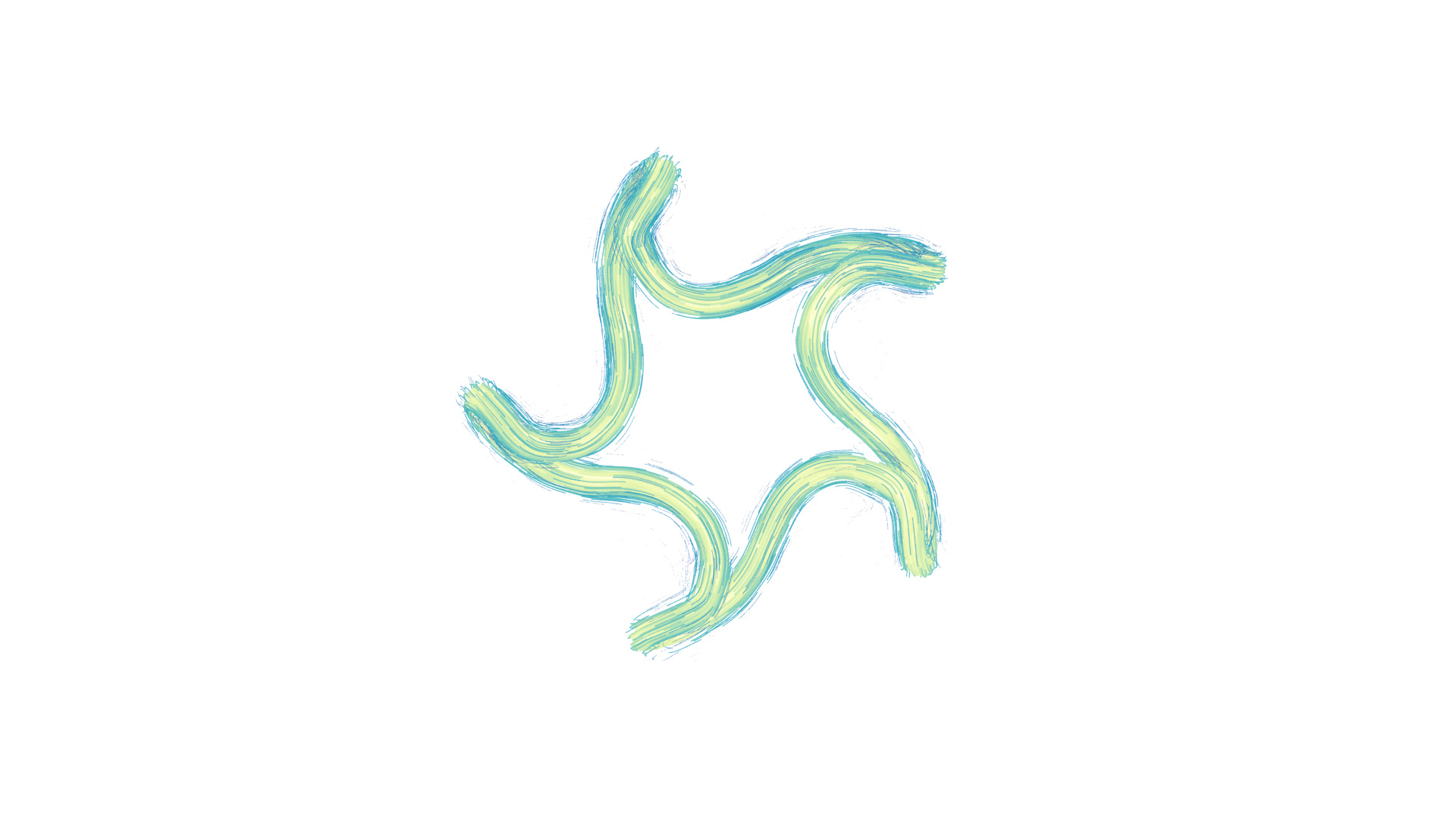}
    \includegraphics[trim={450px 0 400px 0},clip,width=0.16\linewidth]{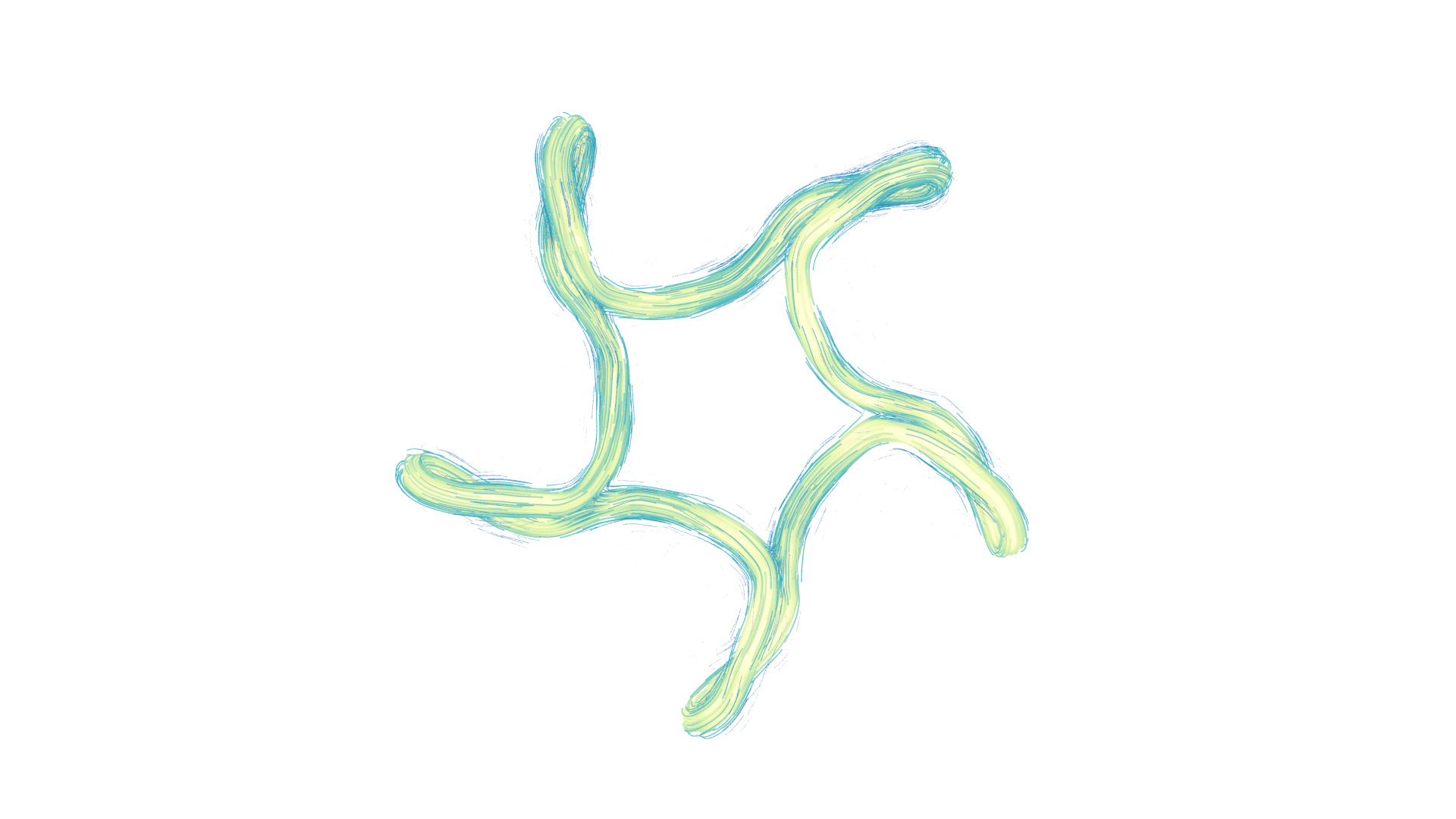}
    \includegraphics[trim={450px 0 400px 0},clip,width=0.16\linewidth]{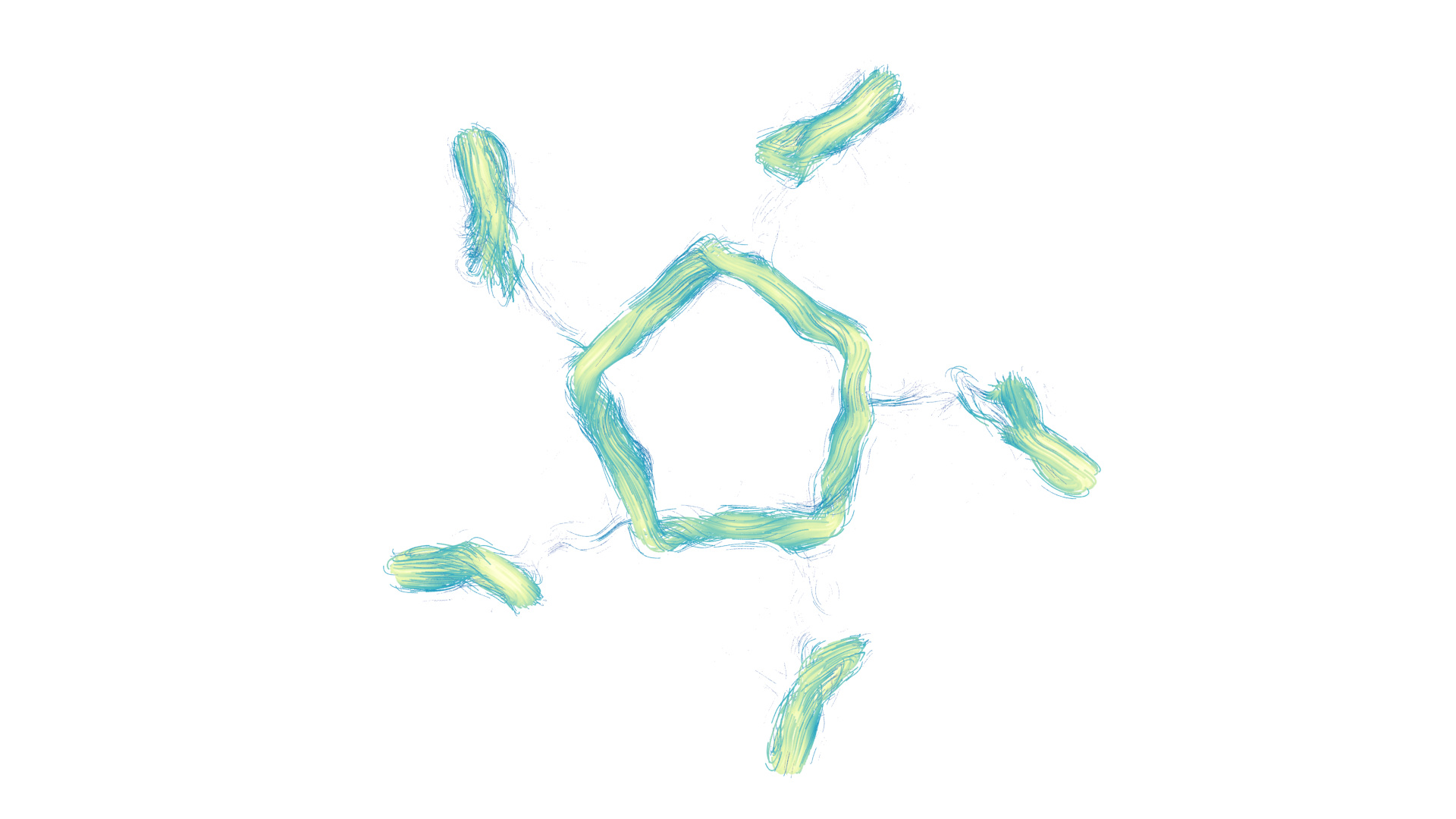}
    \includegraphics[trim={450px 0 400px 0},clip,width=0.16\linewidth]{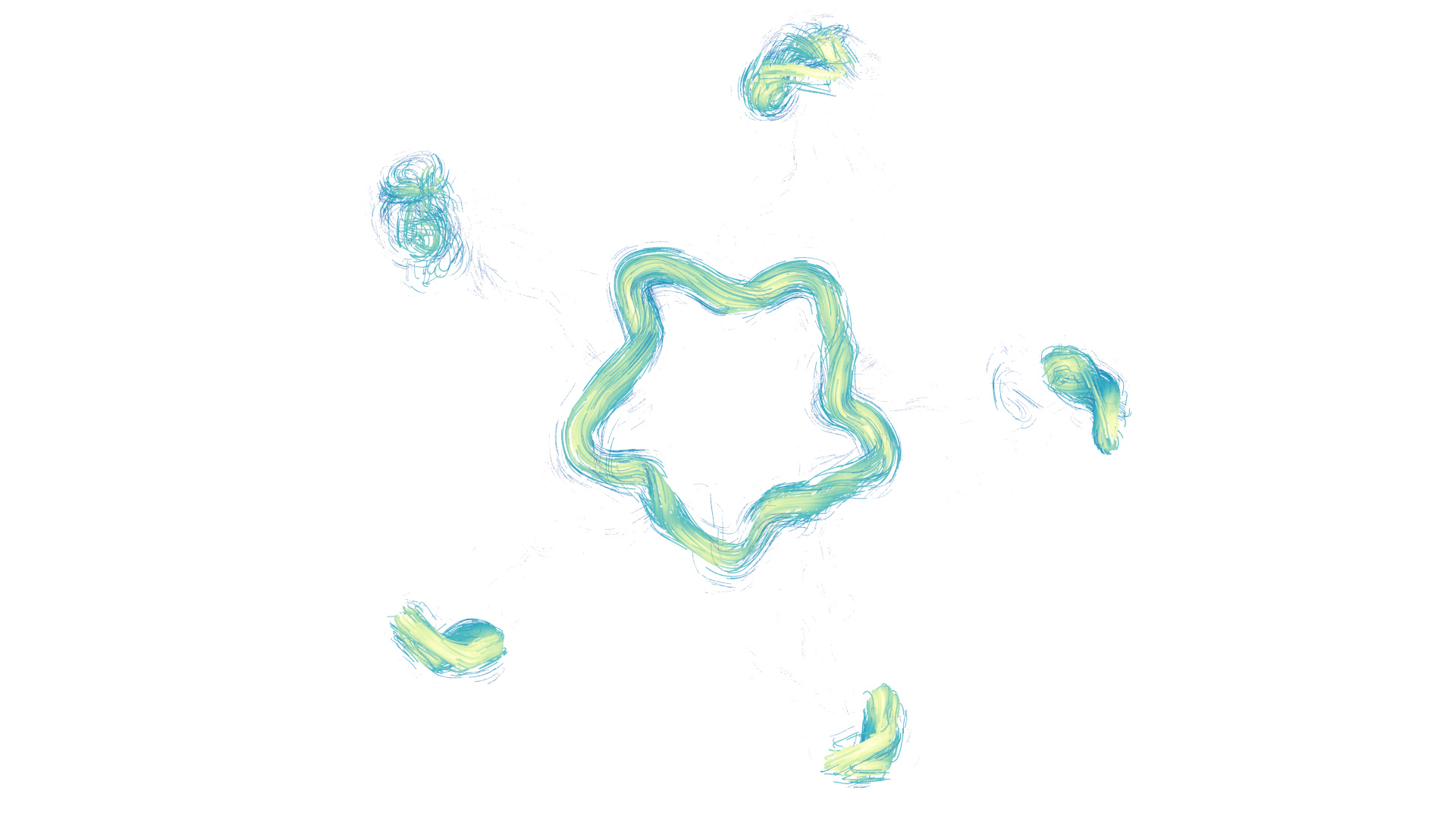}
    \\
    \begin{picture}(0,0)(0,0)
        \put(-260,65){\begin{tikzpicture}
\pgfplotscolorbardrawstandalone[ 
    colormap={myProteinColor}{
        rgb255=(8, 29, 88)
        rgb255=(37, 52, 148)
        rgb255=(34, 94, 168)
        rgb255=(29, 145, 192)
        rgb255=(65, 182, 196)
        rgb255=(127, 205, 187)
        rgb255=(199, 233, 180)
        rgb255=(237, 248, 177)
        rgb255=(255,255, 217)
    },
    point meta min=-1,
    point meta max=1,
    colorbar style={
        width=4pt,
        height=35pt,
        ytick={-1,-0.5,0,0.5,1},
        ytick style={draw=none},
        yticklabels={{0},{},{},{},{10}},
        yticklabel style={font=\scriptsize, xshift=-0.5ex},
        ylabel={\sffamily vorticity norm ($\nicefrac{1}{\text{s}}$)},
        ylabel style={font=\tiny, yshift=25pt}
        }]
\end{tikzpicture}}
        \put(-22,98){\sffamily\scriptsize (a) Side view.}
        \put(-25,0){\sffamily\scriptsize (b) Head-on view.}
        \put(-220,10){\sffamily \scriptsize Frame 0}
        \put(-140,10){\sffamily \scriptsize Frame 72}
        \put(-50,10){\sffamily \scriptsize Frame 144}
        \put(25,10){\sffamily \scriptsize Frame 216}
        \put(115,10){\sffamily \scriptsize Frame 288}
        \put(197,10){\sffamily \scriptsize Frame 360}
    \end{picture}
    \caption{The (1,5)-torus unknot.
    They are referred to as \emph{unknot} as the numbers 1 and 5 are no longer coprime \cite{Maggioni:2010:VEH}.
    With time, the unknot moves forward and stretches apart until the vortex reconnection event happens.
    Our method at a low resolution of $64\times64\times64$ captures this phenomenon.
    }
    \label{fig:unknot_evolution}
\end{figure*}

\begin{figure}
    \centering
    \includegraphics[trim={450px 0 400px 0},clip,width=0.32\columnwidth]{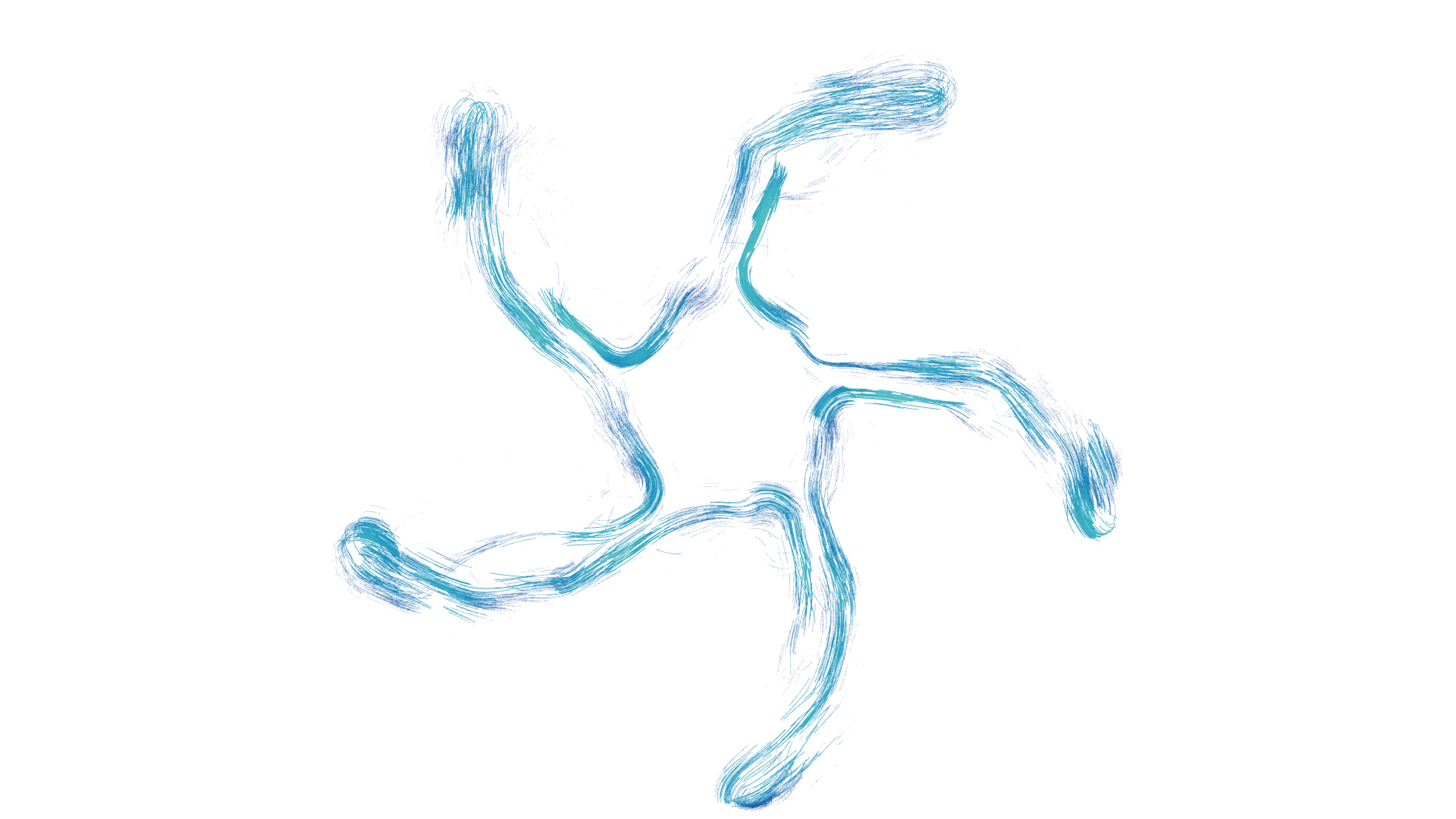}
    \includegraphics[trim={450px 0 400px 0},clip,width=0.32\columnwidth]{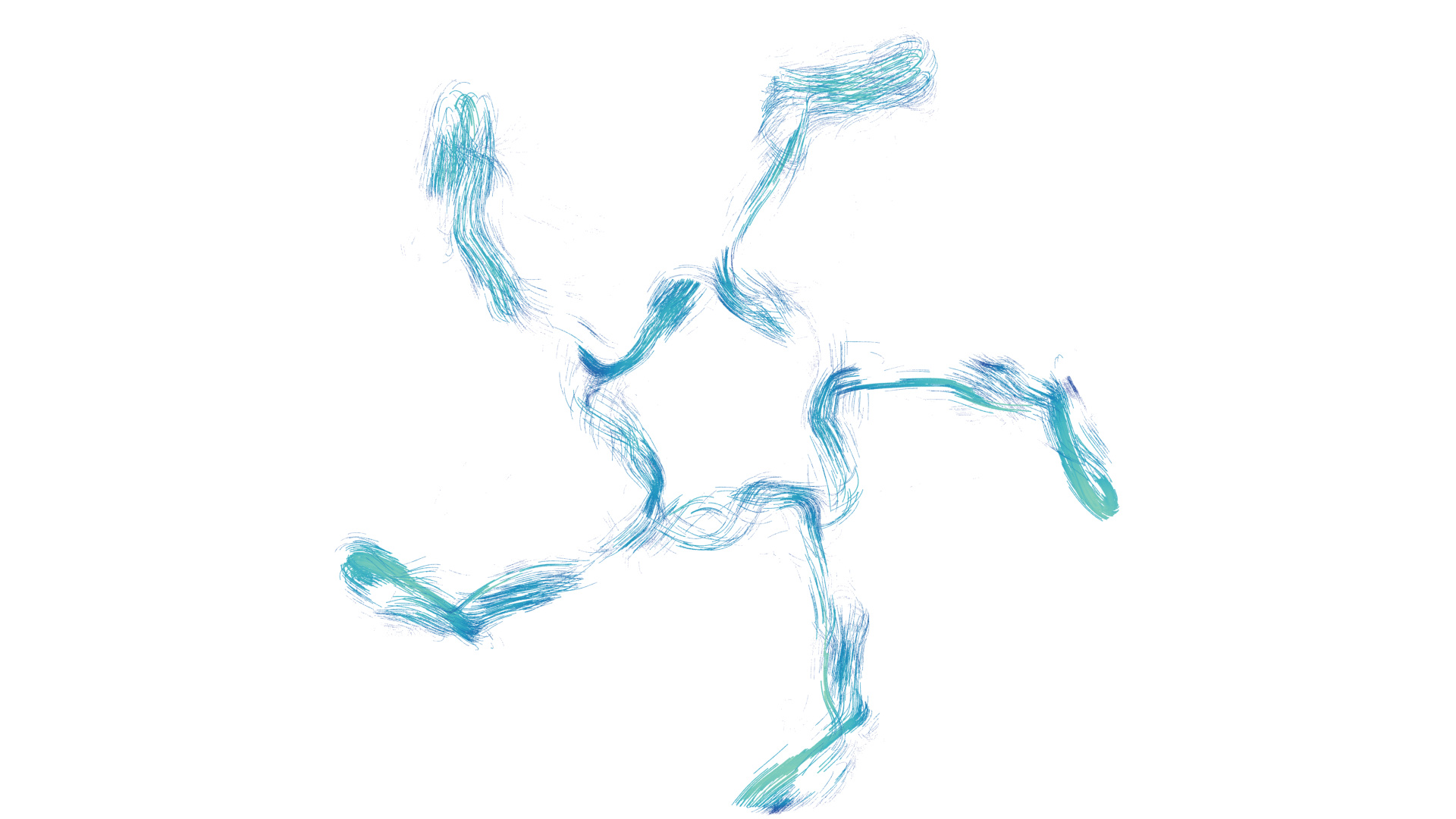}
    \includegraphics[trim={450px 0 400px 0},clip,width=0.32\columnwidth]{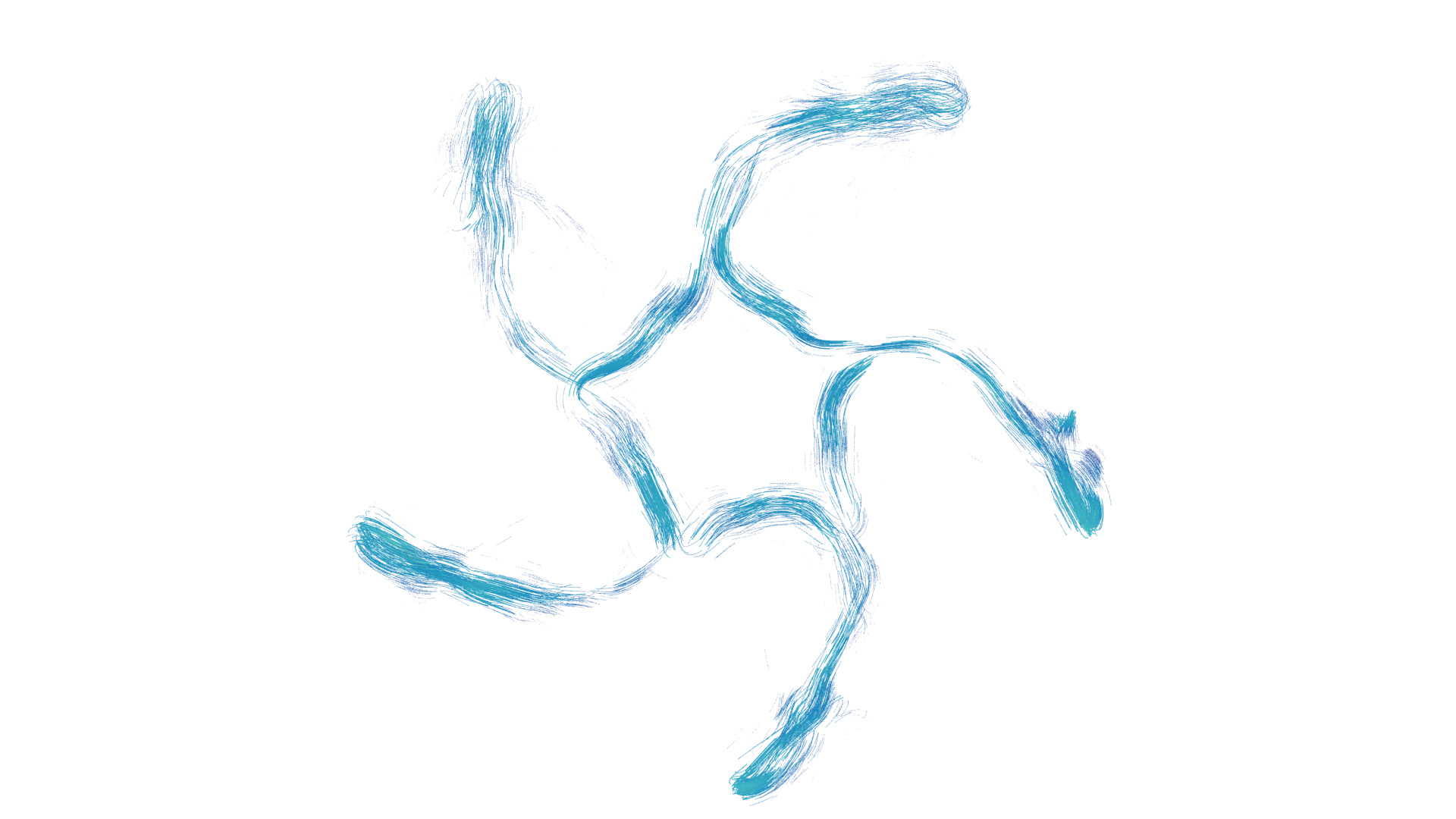}
    \includegraphics[trim={450px 0 400px 0},clip,width=0.32\columnwidth]{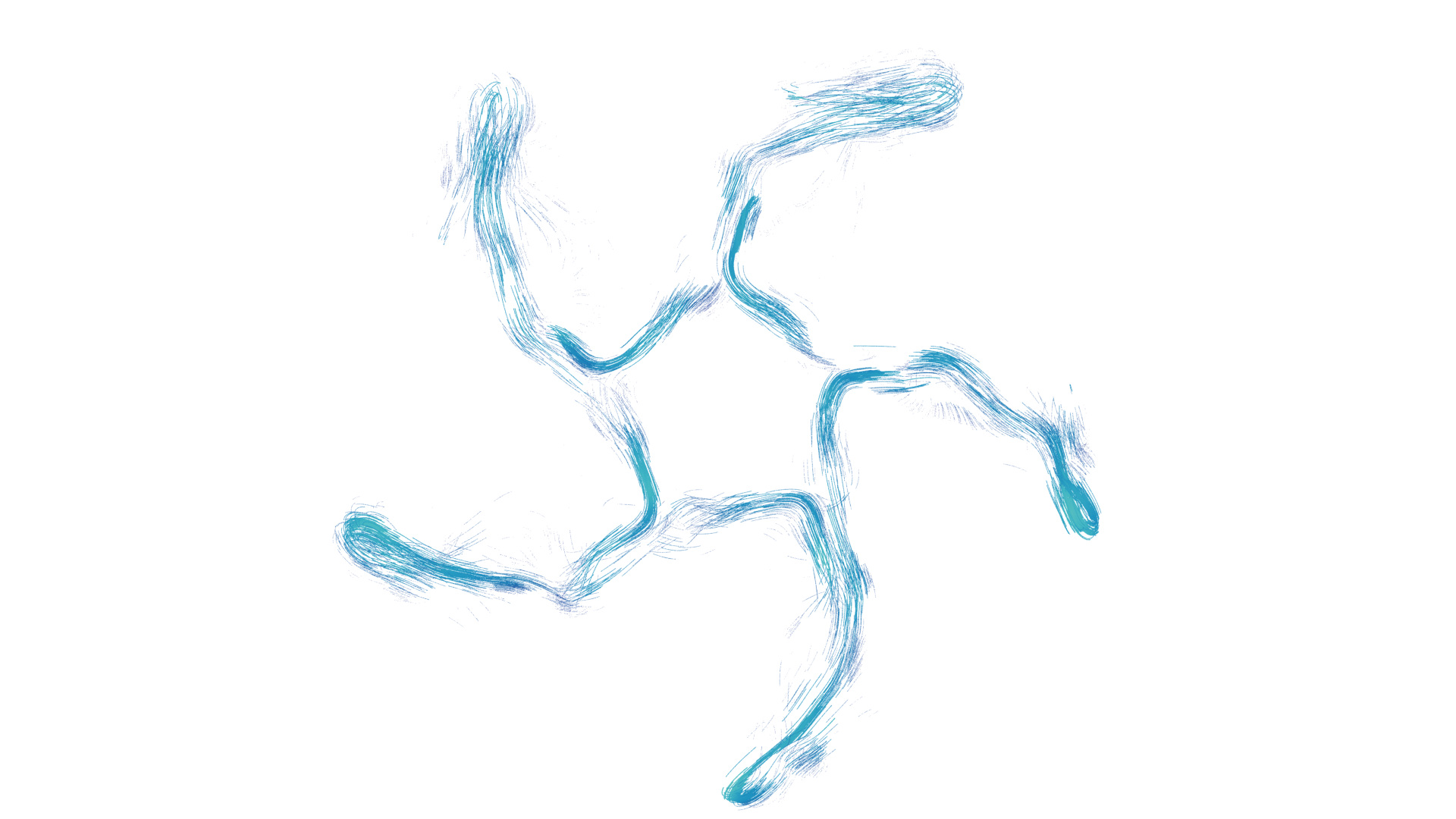}
    \includegraphics[trim={450px 0 400px 0},clip,width=0.32\columnwidth]{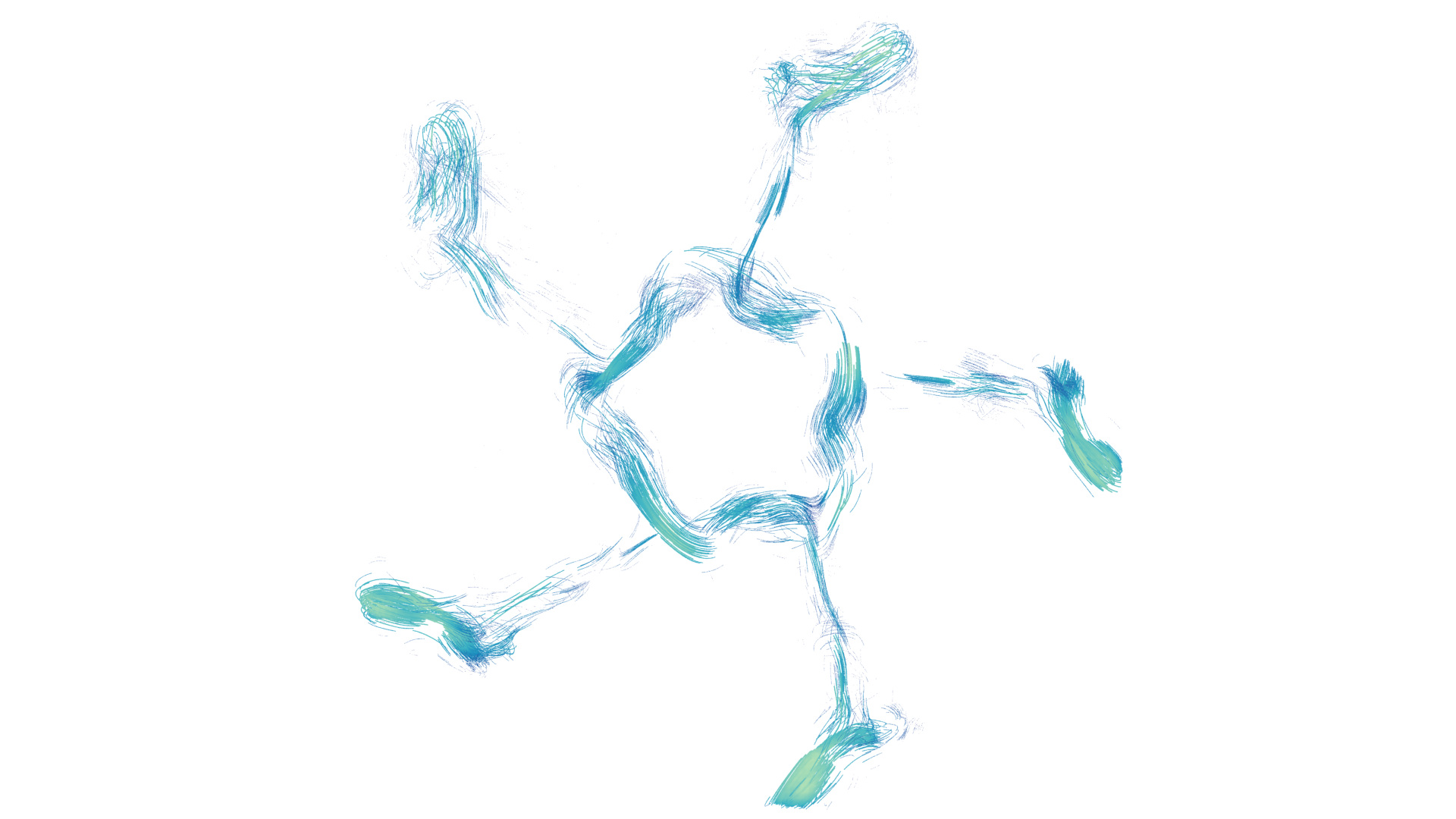}
    \includegraphics[trim={450px 0 400px 0},clip,width=0.32\columnwidth]{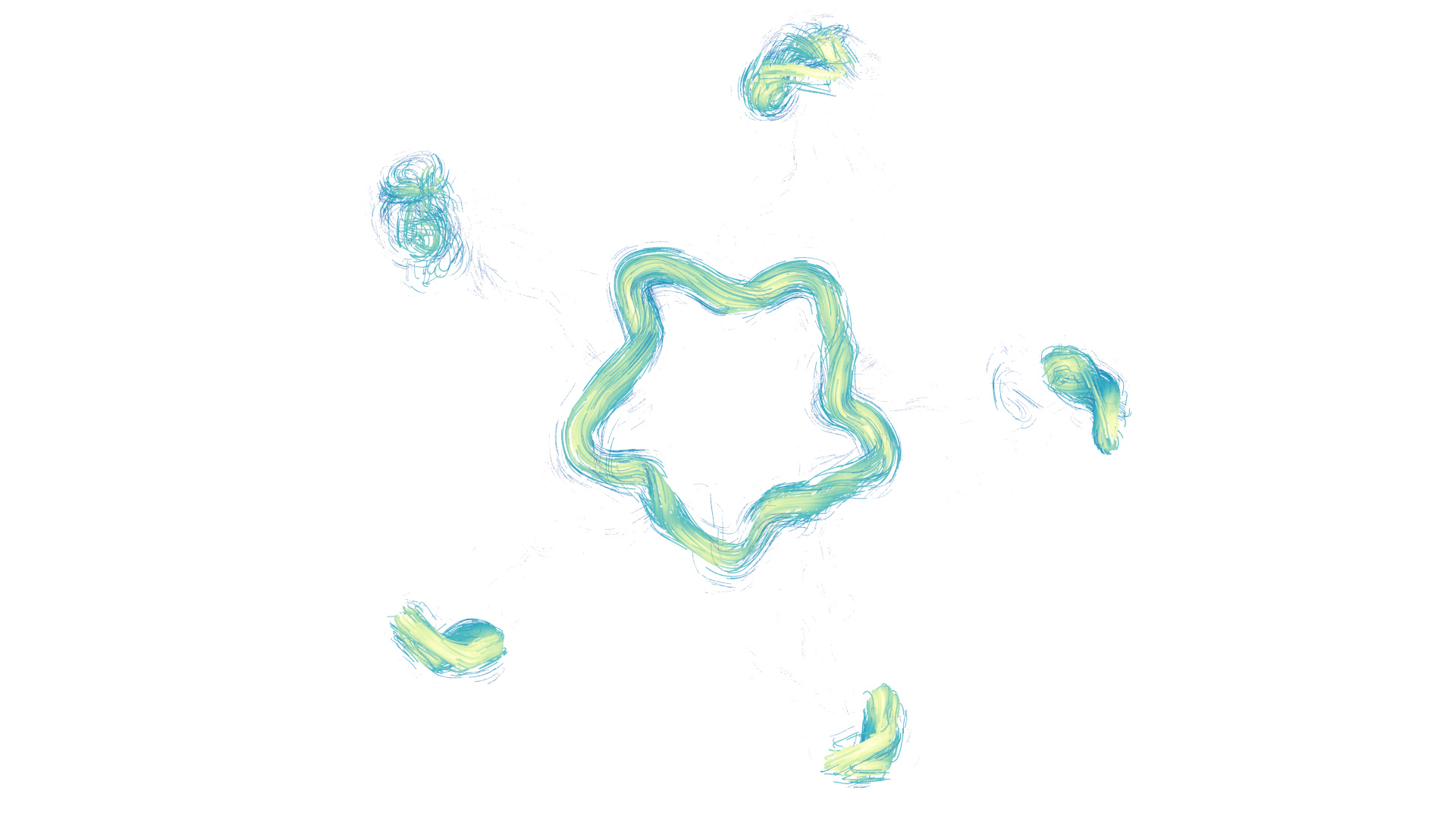}
    \\
    \begin{picture}(0,0)(0,0)
        \put(-130,55){\begin{tikzpicture}
\pgfplotscolorbardrawstandalone[ 
    colormap={myProteinColor}{
        rgb255=(8, 29, 88)
        rgb255=(37, 52, 148)
        rgb255=(34, 94, 168)
        rgb255=(29, 145, 192)
        rgb255=(65, 182, 196)
        rgb255=(127, 205, 187)
        rgb255=(199, 233, 180)
        rgb255=(237, 248, 177)
        rgb255=(255,255, 217)
    },
    point meta min=-1,
    point meta max=1,
    colorbar style={
        width=4pt,
        height=35pt,
        ytick={-1,-0.5,0,0.5,1},
        ytick style={draw=none},
        yticklabels={{0},{},{},{},{10}},
        yticklabel style={font=\scriptsize, xshift=-0.5ex},
        ylabel={\sffamily vorticity norm ($\nicefrac{1}{\text{s}}$)},
        ylabel style={font=\tiny, yshift=25pt}
        }]
\end{tikzpicture}}
        \put(-101,95){\sffamily \scriptsize PolyPIC}
        \put(-28,95){\sffamily \scriptsize CF+PolyFLIP}
        \put(65,95){\sffamily \scriptsize NFM}
        \put(-100,5){\sffamily \scriptsize PolyFLIP}
        \put(-20,5){\sffamily \scriptsize R+PolyFLIP}
        \put(60,5){\sffamily \scriptsize \textbf{CO-FLIP (Ours)}}
    \end{picture}
    \caption{Comparison of torus (1,5)-unknot experiment.
    Note that our method conserves energy and vortical structures throughout the simulation, while traditional methods lose both energy and helicity.}
    \label{fig:unknot}
\end{figure}

\paragraph{Glimpse of the Mathematics behind CO-FLIP}
Beyond improving the results of FLIP, we elucidate that the
CO-FLIP algorithm naturally arises when considering a structure-preserving discretization of the geometric mechanics formulation of the incompressible Euler equations.

In the continuous theory, the Euler equation is a Hamiltonian system. The phase space is given by the dual space of divergence-free vector fields, which is a Poisson space, and the Hamiltonian is given by the kinetic energy of the velocity field.

Towards the discrete theory, a mimetic interpolation (\ref{item:intro-mimetic}) gives us a finite-dimensional subspace of divergence-free vector fields to work with, which constitute grid velocities. We define a discrete Euler equation on the original continuous dual space of divergence-free vector fields, using the finite-dimensional Hamiltonian function defined over grid velocities.
This discrete Euler flow has energy conservation and circulation conservation in the same way the continuous Euler flows obey these laws.
Using a technique involving momentum maps, we can emulate this dynamical system on the Lagrangian coordinates, which leads to the particle-in-cell hybrid algorithm of CO-FLIP.

\paragraph{Overview}
After a literature survey in \secref{sec:relatedworks}, we first provide a high-level overview of the CO-FLIP algorithm in 
\secref{sec:Method}.
Next, we elucidate the theoretical foundation of CO-FLIP in \secref{sec:Theory1}.
In \secref{sec:Theory2} we focus on the mimetic interpolations, their constructions, and their implications. 
The remainder of the paper consists of implementation details (\secref{sec:Implementation}) and numerical examples (\secref{sec:Results}).
\section{Related Work}
\label{sec:relatedworks}

\begin{figure*}
    \centering
    \includegraphics[trim={150px 100px 150px 0},clip,width=0.33\linewidth]{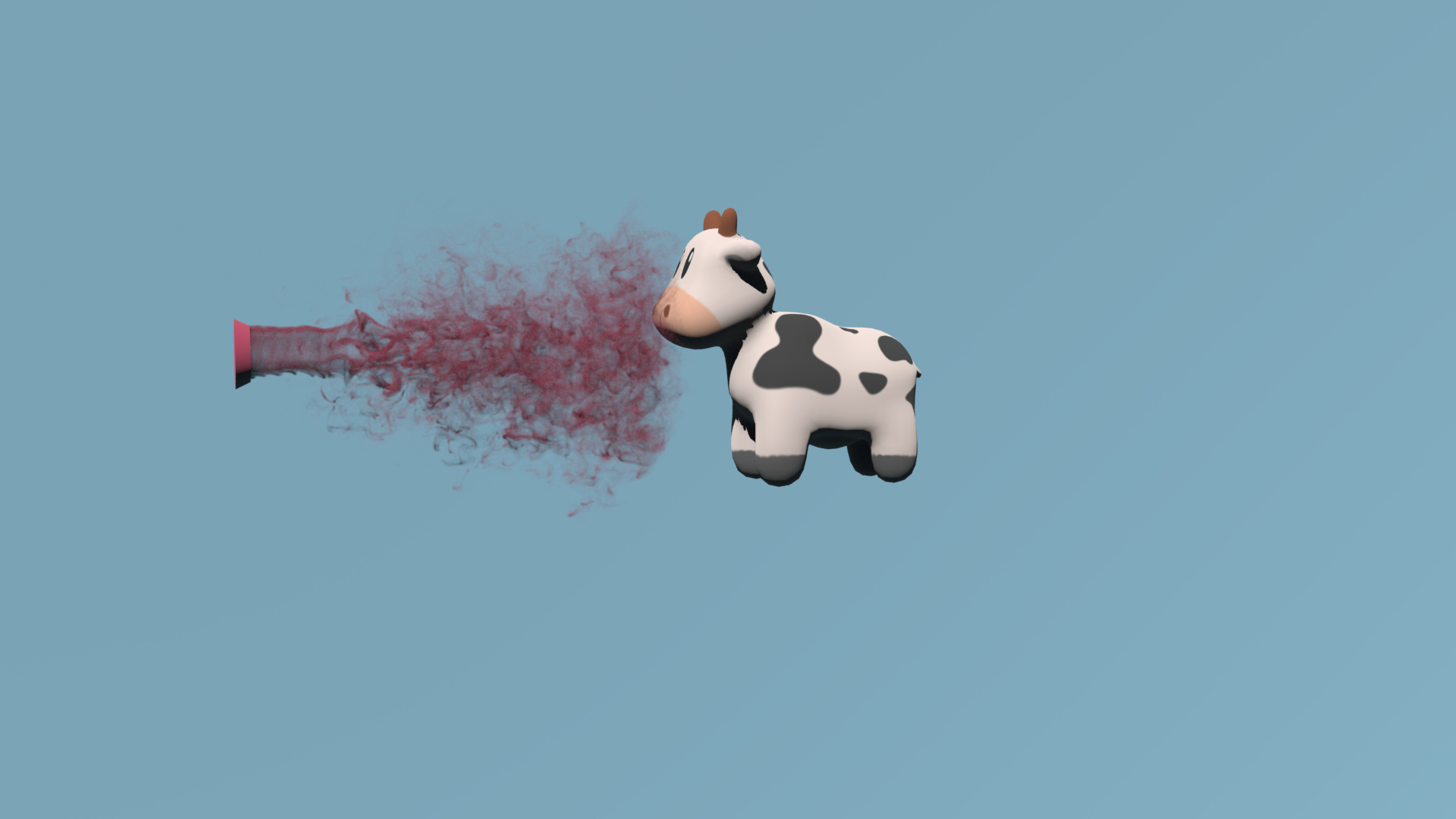}
    \includegraphics[trim={150px 100px 150px 0},clip,width=0.33\linewidth]{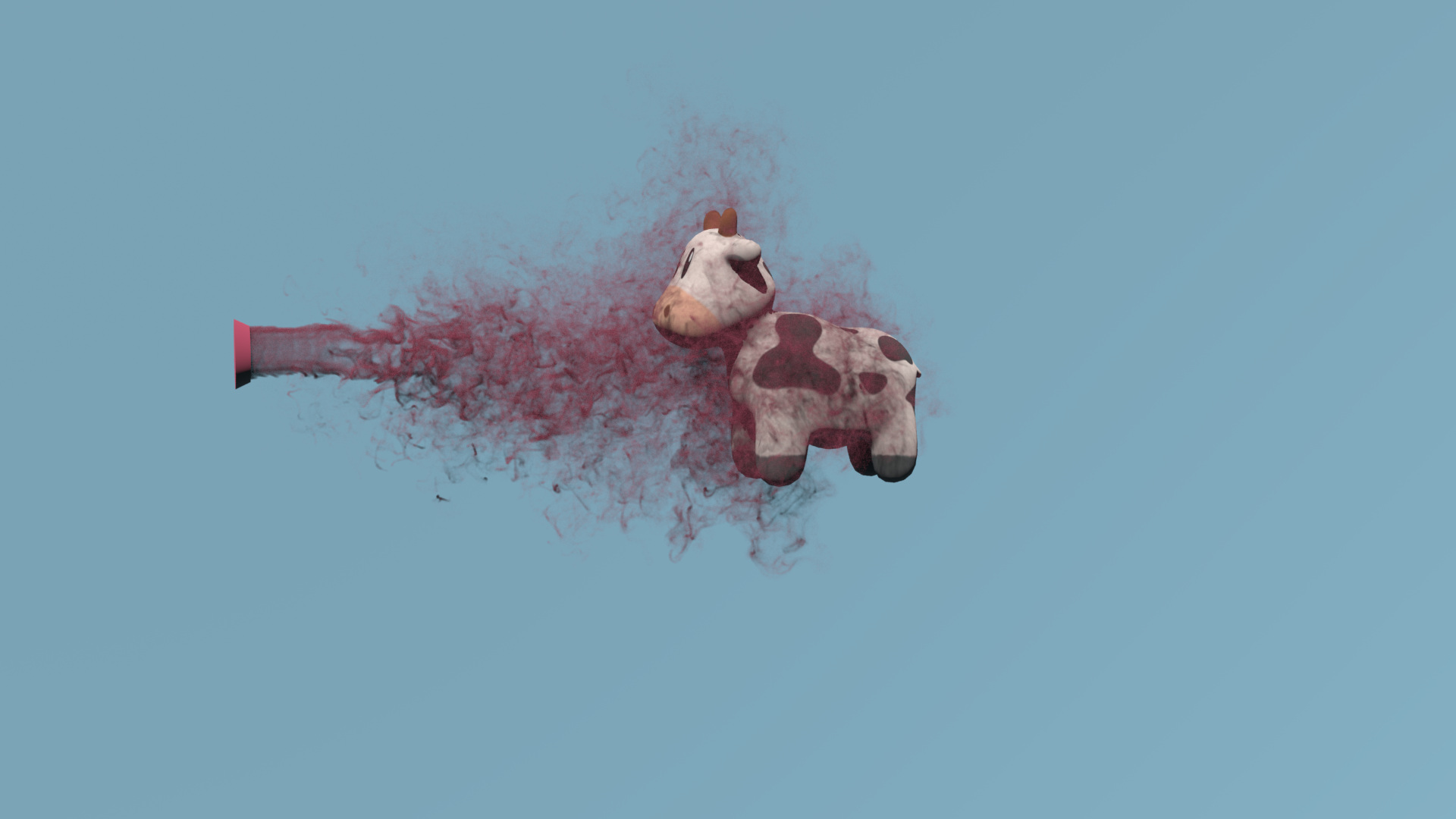}
    \includegraphics[trim={150px 100px 150px 0},clip,width=0.33\linewidth]{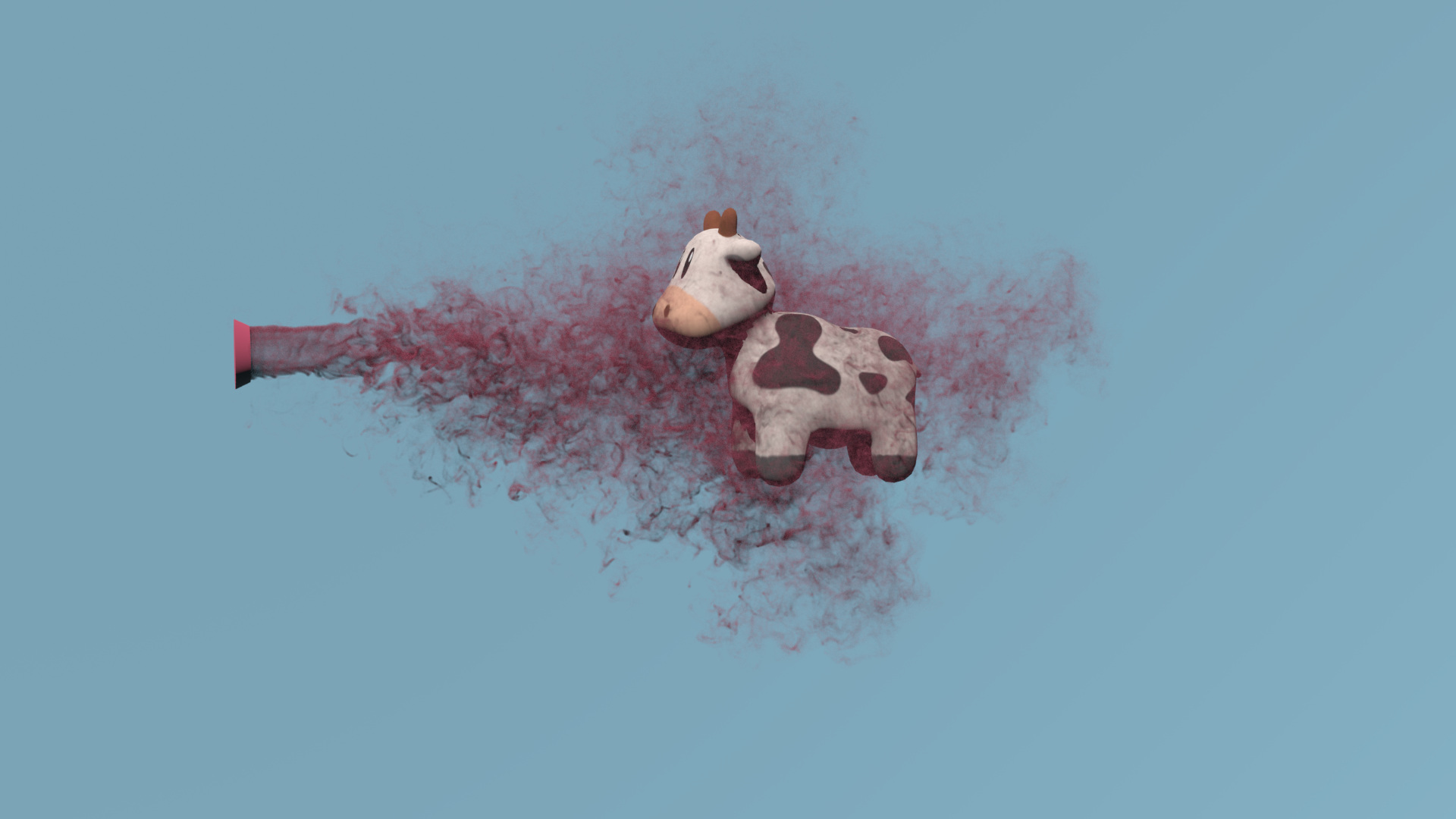}
    \caption{A nozzle shooting out smoke and hitting a Spot-shaped \cite{Crane:2013:RFC} obstacle.
    Note the intricate vortical structures our method (CO-FLIP) can create at a low resolution of $128\times64\times64$.}
    \label{fig:spot}
\end{figure*}

\subsection{Vorticity Conservation in Eulerian Fluid Simulation}
Since the early work of \citet{Stam:1999:SF} in the computer graphics field, the need for maintaining vorticity in the fluid simulations has received much attention. Indeed, the nonlinear advection operator present in the Navier-Stokes and Euler equations, when discretized on a grid, introduces much artificial viscosity. A variety of techniques have been developed to mitigate this dissipation. In particular, back-and-forth error compensation and correction (BFECC) and modified MacCormack methods \citep{Dupont:2003:BFECC, Kim:2005:Flowfixer, Selle:2008:MCM}, vorticity confinement method \cite{Fedkiw:2001:VSS}, and advection-reflection \citep{Zehnder:2018:ARS} are few methods which tackle energy loss and restore the simulation vorticity otherwise lost. 

Another class of approaches simulate the \textit{vorticity equation}, which describes the evolution of the vorticity, instead of velocity \citep{Selle:2005:VPM, Elcott:2007:SCP, Zhang:2015:RMV}. By removing the splitting error between the projection and advection steps, these methods generally do well at conserving vorticity. However, they run into instability problems in three dimensions due to the stretching term, and further complicate handling of harmonic components  \citep{Yin:2023:FC}. 

Finally, a recent line of work has observed that the Euler equations can be recast as the Lie advection of the velocity covector \citep{Nabizadeh:2022:CF} or impulse \citep{Feng:2022:IFS, Saye:2016:IGM, Yang:2021:CGF}. Such methods provide a velocity-based formulation which similarly removes the splitting error without the complication of handling harmonic components. 
 
In these semi-Lagrangian methods, Lagrangian features of the fluid are tracked over time, along the characteristics of the fluid flow. As such, an effective method of further removing dissipation is to track characteristics further back in time.  This idea of the long-term method of characteristic mappings (MCM) has been developed in multiple works \citep{Tessendorf:2011:CMF, Sato:2018:SALS,Qu:2019:ECF}. 
A recent work \cite{Deng:2023:FSN} uses a neural network to compress the large amount of velocity data needed to evaluate these long-term characteristics. 
In this work, we advocate an alternative that requires no compression but produces similarly accurate long-term characteristics.
We, instead, opt to use particles to track Lagrangian features of the fluid. In particular, we track velocity covectors and deformation gradients on Lagrangian particles, so that the Lie advection of the velocity covector becomes solving an ODE on particles. This both minimizes splitting error, and tracks the long term features of fluids along characteristics, preserving circulation as well as energy.

\subsection{Hybrid Eulerian-Lagrangian Methods}
Particle-in-cell  (PIC) algorithms for simulating fluids were first introduced by \cite{Harlow:1962:PIC}. Since then, they have been used in a variety of scientific computing and computer graphics applications. In particular, Fluid Implicit Particles (FLIP) was developed by \cite{Brackbill:1986:FLIP,Brackbill:1988:FLIP} for use in magnetohydrodynamic calculations, and \cite{Sulsky:1995:MPM} developed the Material Point Method (MPM), which also tracks deformation gradients on particles to handle more sophisticated materials. 

The material point method gained prominence in computer graphics after \citet{Stomakhin:2013:SMPM} used it for snow simulation, and subsequently has been extended to a wide variety of materials \cite{Stomakhin:2014:AMPM, Ram:2015:MPM-VE, Klar:2016:DPE, Jiang:MPM-cloth, Tampubolon:2017:MSS} as well as scenarios like contact \cite{Han:MPM-contact} and fracture \cite{Wolpher:CDMPM}. There has also been significant work in improving the properties of the P2G transfer \cite{Jiang:2015:APIC,Fu:2017:PolyPIC,Hu:2018:MLS, Fei:2021:ASFLIP} in order to reduce the dissipation involved in PIC, and the instabilities and noise in FLIP \citep{Hammerquist:2017:XPIC}.
The least-squares based P2G transfer proposed in this paper (\eqref{eq:PressureSolve}) uses a simple preconditioned conjugate gradient solver to alleviate both issues. First, the P2G transfer is directly defined as the pseudoinverse of G2P interpolation and, as such, matches its arbitrarily high order of accuracy. Unlike high-order PIC \citep{Edwards:2012:HOPIC}, which used moving least squares reconstruction and WENO interpolation \cite{Jiang:1996:WENO}, our transfers canonically follow from the choice of B-spline subspace. 
Second, the least-squares solve minimizes the instability problem involved with FLIP-based methods, which are due to divergent velocity modes on the particles that are in the kernel of the particle to P2G transfer, and therefore never get projected out \citep{Hammerquist:2017:XPIC, Ding:2020:APIC-MAC}.  

Particle-in-cell methods have also previously been used in graphics for fluid simulation. In graphics, FLIP for fluids was first suggested by \cite{Zhu:2005:ASF}, and subsequent work has focused on maintaining incompressibility and volume conservation in the flow \cite{Cornelius:2014:IISPH-FLIP, Kugelstadt:2019:IDP, Qu:2022:PPIC}. These approaches have all involved a more sophisticated pressure projection that directly involves particle positions in some capacity -- they thus also ensure particles remain distributed evenly. In contrast, our Galerkin Pressure projection (Eq. \eqref{eq:PressureSolve}), does not involve particles directly and improves the incompressibility of fluids by finding the best divergence-free representative in the B-spline in the $L^2$ sense (See Appendix \secref{app:DiscretePressureProjection}). Additionally, the particles remain evenly distributed throughout the domain as they are advected by a point-wise divergence free interpolated velocity field, as discussed in \secref{sec:DivFreeInterpolation}.

\subsection{Geometric Fluid Mechanics}
Our algorithm is derived from physical principles arising in geometric fluid mechanics: the study of fluid mechanics from Lagrangian and Hamiltonian perspectives. 

\begin{figure*}
    \centering
    \begin{subfigure}{0.19\linewidth}
        \centering
        \includegraphics[trim={0 120px 0 100px},clip,width=\linewidth]{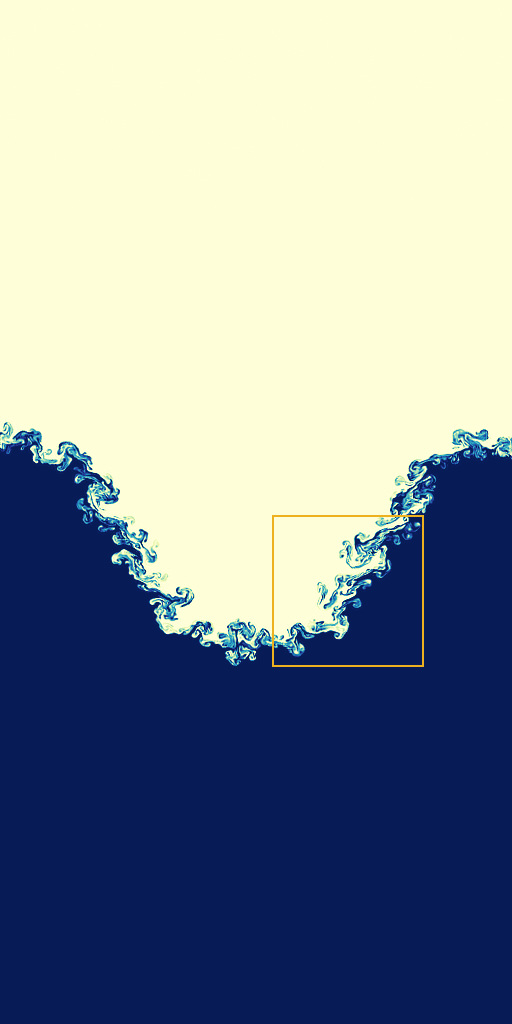}
    \end{subfigure}
    \begin{subfigure}{0.19\linewidth}
        \centering
        \includegraphics[width=0.38\columnwidth]{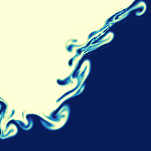}
        \includegraphics[width=0.38\columnwidth]{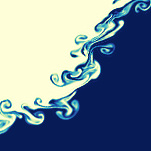}
        \includegraphics[width=0.38\columnwidth]{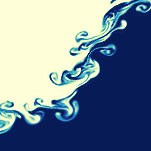}
        \includegraphics[width=0.38\columnwidth]{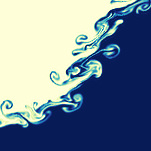}
        \includegraphics[width=.78\columnwidth]{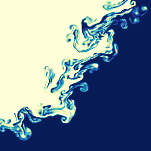}
     \end{subfigure}
     \begin{subfigure}{0.19\linewidth}
         \centering
         \includegraphics[trim={0 120px 0 100px},clip,width=\linewidth]{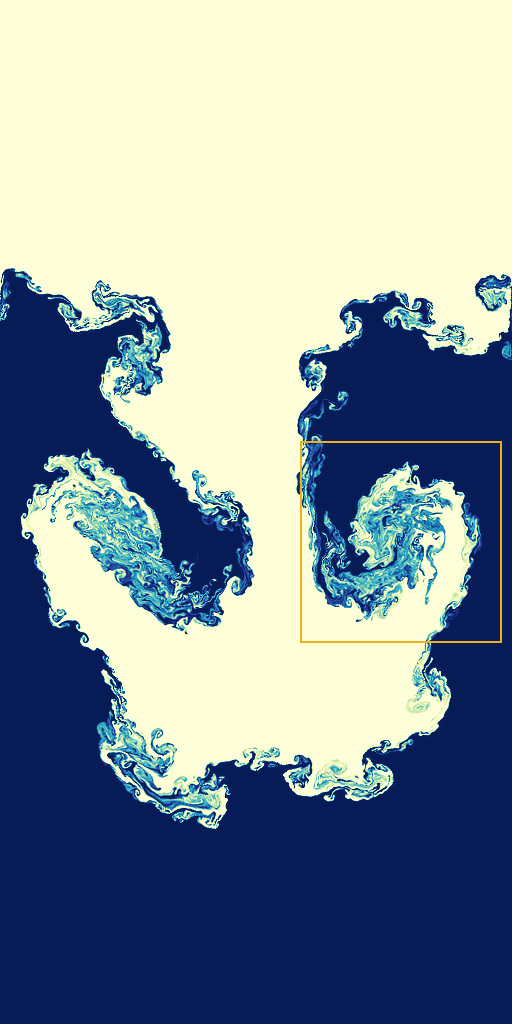}
     \end{subfigure}
     \begin{subfigure}{0.19\linewidth}
         \centering
         \includegraphics[width=0.38\columnwidth]{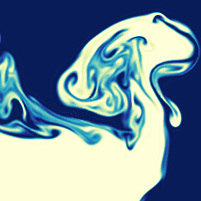}
         \includegraphics[width=0.38\columnwidth]{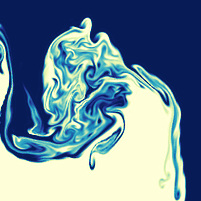}
         \includegraphics[width=0.38\columnwidth]{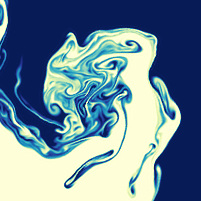}
         \includegraphics[width=0.38\columnwidth]{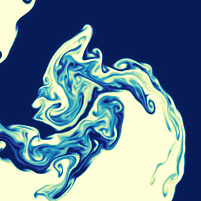}
         \includegraphics[width=.78\columnwidth]{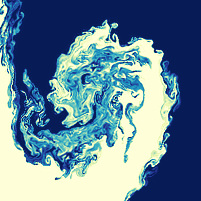}
     \end{subfigure}
     \begin{subfigure}{0.19\linewidth}
         \centering
         \includegraphics[trim={0 120px 0 100px},clip,width=\linewidth]{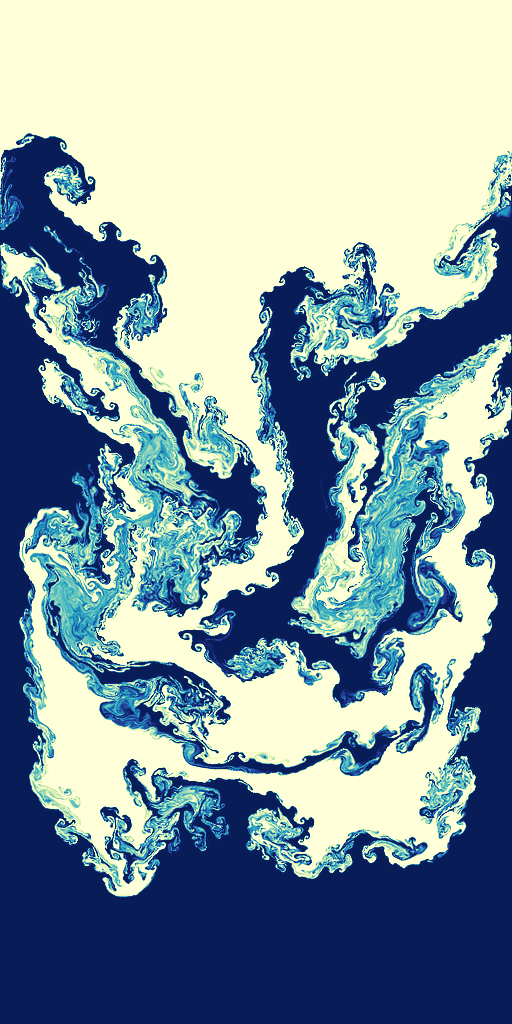}
     \end{subfigure}
     \\
    \begin{picture}(0,0)(0,0)
        \put(-178,14){\sffamily \scriptsize \textcolor{white}{Frame 150}}
        \put(20,14){\sffamily \scriptsize \textcolor{white}{Frame 300}}
        \put(220,14){\sffamily \scriptsize \textcolor{white}{Frame 400}}
        \put(-119,127){\sffamily \tiny \textcolor{white}{PolyPIC}}
        \put(-120,89){\sffamily \tiny \textcolor{white}{PolyFLIP}}
        \put(-89,127){\sffamily \tiny \textcolor{white}{CF+PolyFLIP}}
        \put(-86,89){\sffamily \tiny \textcolor{white}{R+PolyFLIP}}
        \put(-105,14){\sffamily \scriptsize \textcolor{white}{\textbf{CO-FLIP (Ours)}}}
        \put(63,157){\sffamily \tiny \textcolor{white}{PolyPIC}}
        \put(79,120){\sffamily \tiny \textcolor{white}{PolyFLIP}}
        \put(109.5,157.5){\sffamily \tiny \textcolor{white}{CF+PolyFLIP}}
        \put(112,120){\sffamily \tiny \textcolor{white}{R+PolyFLIP}}
        \put(93,81){\sffamily \scriptsize \textcolor{white}{\textbf{CO-FLIP (Ours)}}}
    \end{picture}
     
    \caption{Rayleigh-Taylor instability at frames 150 (left), 300 (middle), and 400 (right).
    The insets provide a closer look at how our method (CO-FLIP) produces and maintains an increasing amount of vortical structures compared with traditional methods.
    This behavior matches better the infinite fractal look results expected from buoyant fluids evolving with zero viscosity.}
    \label{fig:RT_instability}
\end{figure*}

The story of geometric fluid mechanics begins in the 19th century, with \citet{Thomson:1868:OVM} circulation over closed loops is conserved, \citet{Helmholtz:1858:VD} observing that vortex rings form stable structures under the Euler flow, and \citet{Clebsch:1859:IHG} showing that ideal fluids (at least with zero helicity) can be represented as a Lagrangian/Hamiltonian system. A truly general theory, however, only emerged after a breakthrough from \citet{Arnold:1966:GDG}, who observed that the Euler flow is geodesic motion on the Lie group of volume-preserving diffeomorphisms. From this insight, much work has occurred in geometric fluid mechanics. In particular, the term helicity was coined by \citet{Moffatt:1969:DKT} and shown to be a topological invariant related to the knottedness of vortex lines \cite{Moreau:1961:CTF, Moffatt:2014:HSS}. Significant advances were also made by \citet{Ebin:1970:GOD}, who addressed functional analytic issues in this representation and \citet{Marsden:1983:COV} who unified this picture with Clebsch representations using the notion of \textit{reduction} of a Hamiltonian system, and observed that Kelvin circulation conservation is a Noether current associated with particle relabling symmetry.

The Hamiltonian view also suggests the covector or impulse formulation of Euler equation \cite{Oseledets:1989:NWW,Cortez:1995:IBM}, which have recently become influential in graphics following work by \cite{Nabizadeh:2022:CF, Feng:2022:IFS, Saye:2016:IGM}. 
The geometric formulation also gives a close relation between Euler flows and Schr\"odinger flows \cite{Chern:2016:SS,Khesin:2019:GMT} and Clebsch-based formulations, which have been used in graphics for vortical flow simulations \cite{Yang:2021:CGF, Xiong:2022:CMF} and visualizations \cite{Chern:2017:IF}. Recent developments in Geometric Fluid Mechanics include the classification of Casimir invariants in the Euler flow \cite{Izosimov:2017:CC2,Khesin:2022:HUC}. These Casimirs are constant on coadjoint orbits and thus enable us to perform measurements that show coadjoint orbit conservation in \secref{sec:CasimirMeasurement}. 

For a broader discussion of geometric fluid mechanics, we refer the reader to the books by \citet{Arnold:1998:TMH, Holm:2011:GMD} or \citet{Morrison:1998:HD}.

\subsection{Structure-Preserving Discretizations}
The ideas of geometric fluid mechanics have inspired a wide array of both spatial and temporal discretizations, including ours, that aim to more faithfully reflect features of the underlying mathematical equations. 

In computer graphics, \citet{Mullen:2009:EPI} developed a reversible, energy preserving time integrator involving a Newton iteration. More generally, there are a wide variety of approaches to geometric numerical integration \citep{Hairer:2006:GNI} that can be used to conserve energy, symplectic or Poisson structures \citep{Marsden:1999:DEP}, or remain on a Lie Group \citep{Celledoni:2014:LGI}. Such a Lie group integrator was applied in computer graphics by \citet{Azencot:2014:FFS} for fluid simulation on surfaces. We adopt a method similar to \citet{Engo:2001:NILP} for the time integration of Lie-Poisson systems that preserves energy and remains on coadjoint orbits, see \secref{sec:time-integration}. 

On the spatial discretization side, \citet{Pavlov:2011:SPD} approached the problem by viewing discrete flow maps as stochastic, orthogonal matrices. The resulting discretization showed significantly better energy behavior than previous methods. Another line of structure-preserving spatial discretizations involves Nambu brackets, a generalization of Poisson brackets \citep{Nambu:1973:GHD, Nevir:1993:NRI}, which have been extensively used in numerical weather prediction \citep{Gassmann:2008:TCN, Zangl:2013:ICON}.

\subsubsection{Mimetic Interpolation \& Isogemetric Analysis}
Our method takes advantage of a \textit{mimetic interpolation} \citep{Pletzer:2015:CIE,Pletzer:2019:MIV} -- that is, an interpolation that is consistent with a discrete derivative operator. For example, if grid data is discretely divergence free (such as that obtained after a pressure projection), then a mimetic interpolant should be pointwise divergence-free everywhere in the domain. Such interpolants were introduced to computer graphics by \cite{Chang:2022:CurlFlow}, but they have a long history in scientific computing. The particular mimetic interpolants that we adopt are based on B-splines, and have been used extensively in isogeometric analysis \citet{Buffa:2010:IGA, Buffa:2011:IGD, Evans:2013:IGA} and finite element exterior calculus (FEEC) \citet{Arnold:2006:FEEC}. The idea is to observe that the mimetic interpolation properties of B-splines yield a discrete de Rham complex, which in turn yield natural finite element spaces for all of the involved quantities. Thus, FEEC is a generalization of discrete exterior calculus \citep{Hirani:2003:DEC} using higher order function spaces. FEEC has given rise to several conservative simulation algorithms. For example, \citet{Zhang:2022:MEH} developed a dual-field formulation of the Navier-Stokes equations that naturally discretizes to exactly conserve energy and helicity. We use a B-spline to represent grid data (see \secref{sec:Theory2}), as it guarantees that our particle advection action is volume preserving and that our pressure projection is weakly exact (Thm.~\ref{thm:PressureProjectionIsExact}). 

For a more thorough discussion of alternative approaches to mimetic interpolation, we refer the interested reader to \cite{Schroeder:2022:LDP, RoyChowdhury:2024:HOD, Chang:2022:CurlFlow}.

\section{Method}
\label{sec:Method}

In this section, we describe the CO-FLIP method for simulating incompressible and inviscid fluids.
After a brief background of the governing PDE in \secref{sec:EulerEquation}, we describe a particle-in-cell framework, involving a finite dimensional velocity space as the grid structure in \secref{sec:DivFreeInterpolation}, and the particle dynamical system in \secref{sec:ParticleEquationsOfMotion}. See \tabref{tab:notations} for notations used across the paper.

\begin{table}
    \centering
    \caption{Notations and their meanings.}
    \vspace*{-3ex}
    \label{tab:notations}
    \footnotesize
    \setlength{\tabcolsep}{1pt}
\begin{tabularx}{\columnwidth}{p{80pt}p{150pt}}
    \toprule
    \rowcolor{white}
    Notation & Meaning \\
    \midrule
    $M$ & Material space (Lagrangian viewpoint) \\
    $W$ & World space (Eulerian viewpoint) \\
    $(\Sigma,\sigma)$ &  Symplectic manifold, and its associated 2-form \\
    $\cP$ & Index set on the particles \\
    $\fX(W),\fX^*(W)$ & Space of vector fields on $W$, and its dual\\
    $\fX_{\div}(W),\fX_{\div}^*(W)$ & Space of div-free vector fields on $W$, and its dual\\
    $\fB,\fB^*$ & Space of discrete vector fields, and its dual\\
    $\fB_{\div},\fB_{\div}^*$ & Space of discrete div-free vector fields, and its dual\\
    $\cI $ & Interpolation operator (aka G2P map)\\
    $\cI^+$ &  Pseudoinverse of $\cI$ using continuous metric\\
    $\hat{\cI}^+ $ & $\cI^+$ using particle-based metric (aka P2G map)\\
    $\cI^\adjoint $ & Adjoint of $\cI$\\
    $\bP$ & Pressure projection operator\\
    $\bd$ & Exterior derivative operator\\
    $*$ & Continuous Hodge star\\
    $\star$ & Galerkin Hodge star\\
    $\flat, \sharp$ & Musical isomorphisms\\
    $\cW_p $ &$p$-th moment of the vorticity (2D Casimirs)\\
    $\cH$ & Helicity (3D Casimir) \\
    $H$ & Hamiltonian function \\
    SDiff($W$) & Lie group of vol-preserving diffeomorphisms on $W$ \\
    $\Adv $ & Advection action on $\Sigma$ by SDiff($W$)\\
    $\adv $ & Infinitesimal advection action induced by $\Adv$\\
    $J_{\adv}$ & Momentum map induced by $\adv$\\
    \bottomrule
\end{tabularx}
\end{table}

\subsection{Euler Equation}
\label{sec:EulerEquation}
Let \(W \subset \RR^n\), \(n=2\text{ or }3\), be a fluid domain in a Euclidean space.
Let 
\begin{align}
    &\fX(W)\coloneqq\left\{\text{$\continuity^1$ vector fields on $W$}\right\}\\
    &\fX_{\div}(W)\coloneqq\left\{\vec v\in \fX(W)\,\middle\vert\,
    \nabla\cdot\vec v= 0,
    \vec v\cdot\vec n|_{\partial W} = 0
    \right\}
\end{align}
respectively
denote the space of smooth (\(\continuity^1\)) vector fields and the space of divergence-free vector fields satisfying the no-through boundary condition.  Here \(\vec n\) denotes the normal vector of the domain boundary \(\partial W\).
The space \(\fX_{\div}\) models the space of velocity fields of a smooth incompressible flow.
The space \(\fX\) is equipped with an \(L^2\) inner product denoted by \(\llangle\vec v,\vec w\rrangle\coloneqq \int_W\langle\vec v,\vec w\rangle\, d\mu\), where \(d\mu\) is the Euclidean volume form on \(W\).

The incompressible and inviscid fluid with uniform density is governed by the \emph{Euler equation} that describes the time evolution of a divergence-free velocity field \(\vec v(t)\in\fX_{\div}(W)\):
\begin{align}
\label{eq:EulerEquation}
    \tfrac{\partial}{\partial t}\vec v + \vec v\cdot\nabla\vec v = -\nabla p,\quad \vec v(t)\in\fX_{\div}(W),
\end{align}
where the pressure force ``\(-\nabla p\),'' \(p\) being a scalar function, is a general element of the orthogonal complement \(\fX_{\div}(W)^\bot\) in \(\fX(W)\) with respect to \(\llangle\cdot,\cdot\rrangle\) that keeps \(\vec v\) on \(\fX_{\div}(W)\).%
\footnote{We acknowledge that the long-time existence of the Euler equation in the \(\continuity^1\) class is not guaranteed \cite{Elgindi:2019:SSS,Chen:2022:SSS}.}
The equivalent \emph{covector formulation} of the Euler equation \eqref{eq:EulerEquation} describes the time evolution of the velocity covector (1-form) \(\vec v^\flat\in\Omega^1(W)\) \cite{Nabizadeh:2022:CF}:
\begin{align}
\label{eq:EulerEquationCovector}
    \tfrac{\partial}{\partial t}\vec v^\flat + \LD_{\vec v}\vec v^\flat = -d(p-\tfrac{1}{2}|\vec v|^2),\quad \vec v(t)\in\fX_{\div}(W),
\end{align}
where \(\LD_{\vec v}\) is the Lie derivative operator for 1-forms and \(d\) is the exterior derivative.
One can further rewrite \eqref{eq:EulerEquationCovector} in a more general form for 1-forms \(\vec u(t)^\flat\in\Omega(W)\) with \(\vec u(t)\in\fX(W)\) not necessarily divergence-free:
\begin{align}
    \label{eq:ImpulseEquation}
    \tfrac{\partial}{\partial t}\vec u^\flat + \LD_{\vec v}\vec u^\flat = -dq,\quad \vec v = \bP_{\fX_{\div}}\vec u,
\end{align}
where \(\bP_{\fX_{\div}}\colon \fX(W)\to\fX_{\div}(W)\) is the orthogonal projection, and the scalar function \(q\) is arbitrary.
Note that \(\vec u\) and \(\vec v\) differ only by a gradient of an arbitrary scalar function, which is known as a \emph{gauge degree of freedom} in \(\vec u\).  The field
\(\vec u\) in \eqref{eq:ImpulseEquation} is referred to as an \emph{impulse} in the literature \cite{Cortez:1995:IBM,Feng:2022:IFS}.
The 1-form \(\vec u^\flat\) is also regarded as the \emph{circulation field} since the circulation \(\oint_C\vec u^\flat\) on any closed curve \(C\) is well-defined exactly with the gauge degree of freedom in \(\vec u\).
Finally, in terms of vectors \(\vec u\), \eqref{eq:ImpulseEquation} is equivalent to
\begin{align}
    \label{eq:ImpulseEquationVector}
    \tfrac{\partial}{\partial t}\vec u + \vec v\cdot\nabla\vec u + (\nabla\vec v)^\intercal\vec u = -\nabla q,\quad \vec v = \bP_{\fX_{\div}}\vec u.
\end{align}

The CO-FLIP method is a particle--grid hybrid method for simulating \eqref{eq:ImpulseEquation} (or \eqref{eq:ImpulseEquationVector}).
The advected quantity \(\vec u\) is stored on particles, and the velocity \(\vec v\) generating the advection is stored on the grid.  
We first describe the grid structure in \secref{sec:DivFreeInterpolation}, which allows us to write down the ODE system for the particles in \secref{sec:ParticleEquationsOfMotion}. 

\begin{figure}
    \centering
    \begin{picture}(400,180)(0,0)
        \put(0.01,10){\includegraphics[trim={0 0 0 0},clip,width=0.999\columnwidth]{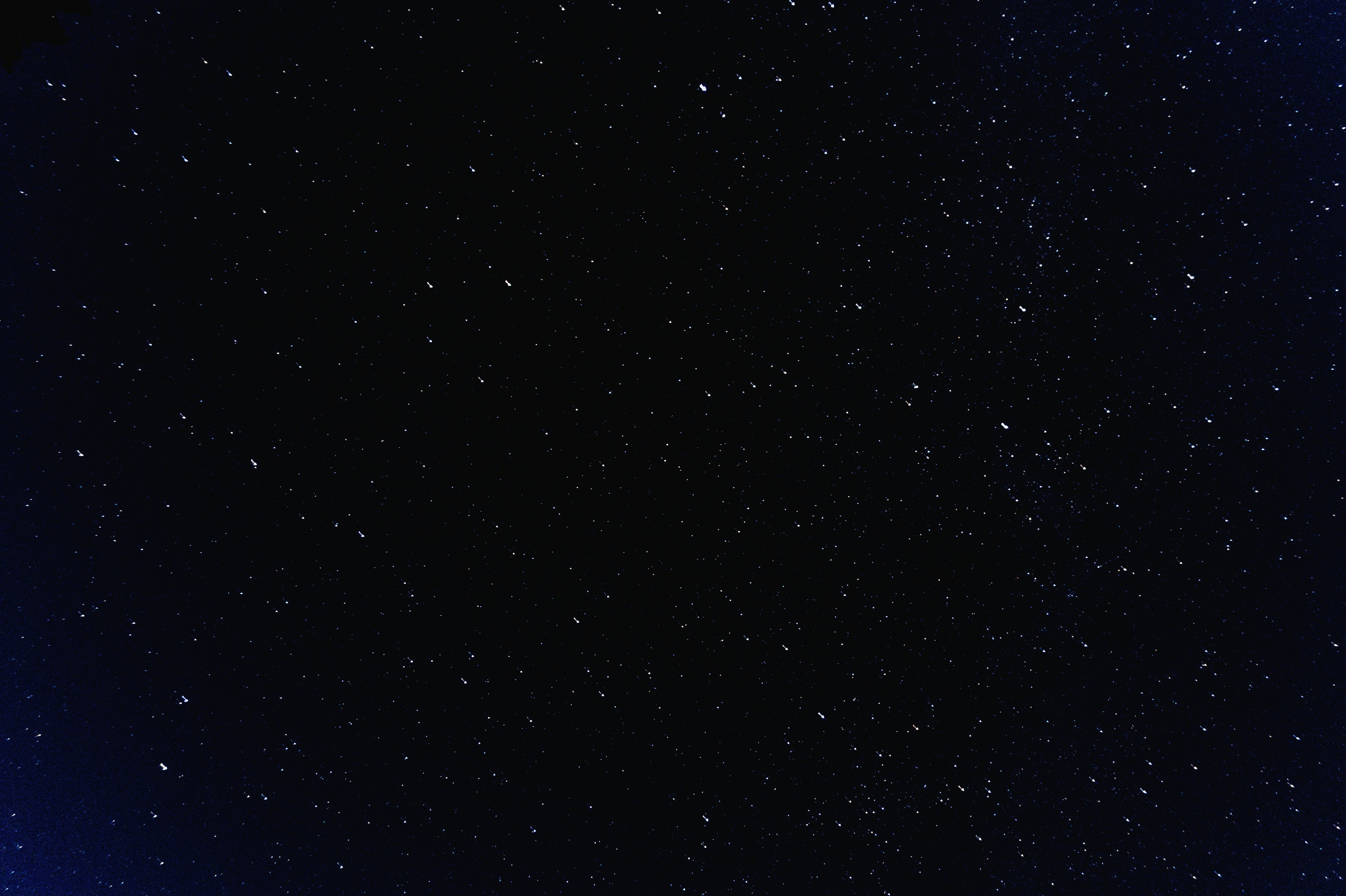}}
        \put(0,0){\includegraphics[trim={0 0 300px 0},clip,width=\columnwidth]{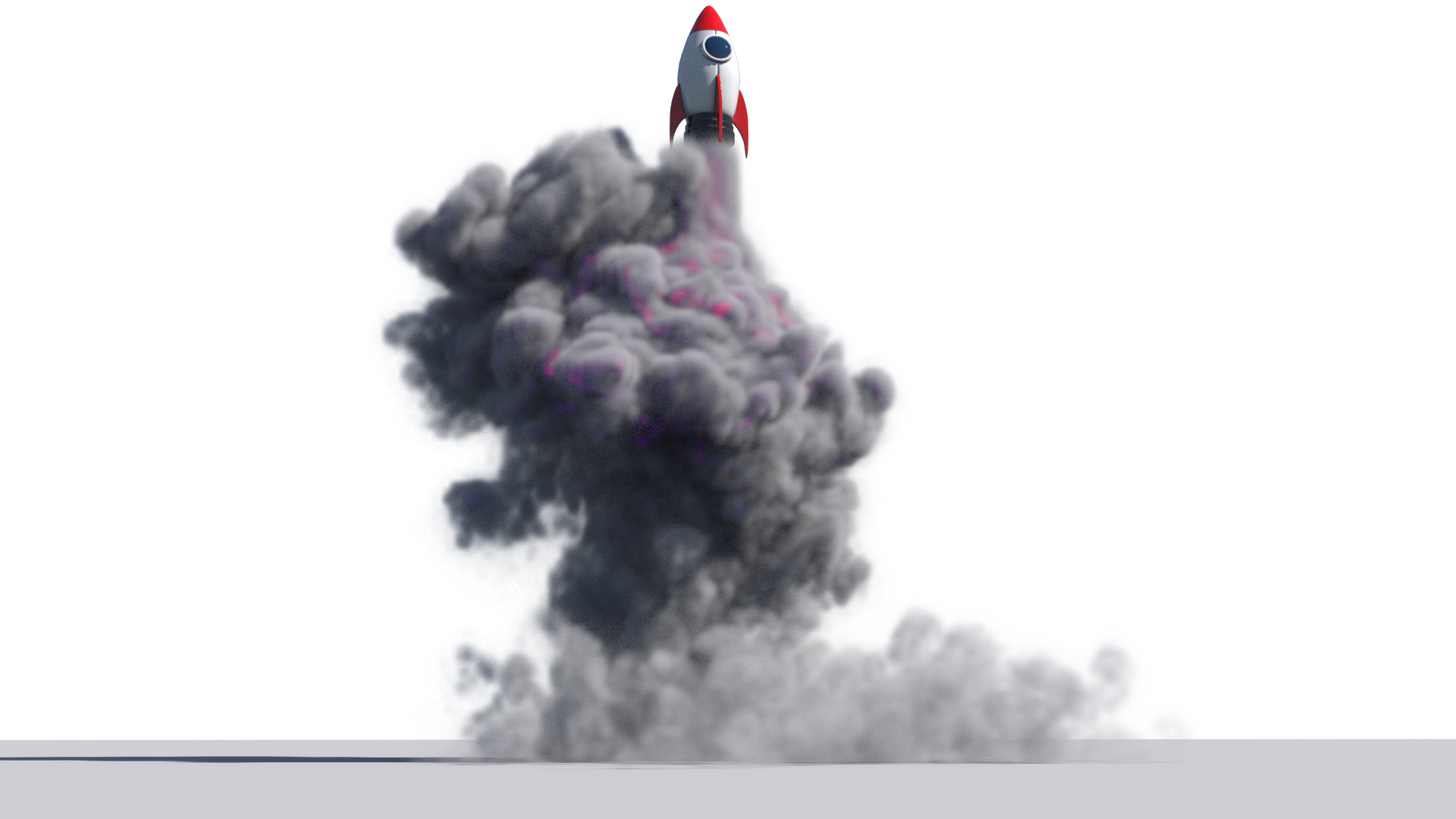}}
    \end{picture}
    \caption{Rocket launching out into space with hot fuel smoke in its wake. Smoke ejection is simulated using our method (CO-FLIP).
    Note the turbulent features of the fuel plume captured despite the low resolution of $64\times64\times64$. The colors represent an artistic visualization of rocket engine exhaust.}
    \label{fig:rocket}
\end{figure}

\subsection{Divergence-Consistent Grid}
\label{sec:DivFreeInterpolation}

As part of the setup, we suppose we have a \emph{divergence-free interpolation scheme} to represent divergence-free vector fields.
Precisely, there is a finite \(N\)-dimensional vector space \(\fB\), representing a space of discrete vector field data (\eg\@ over a grid covering \(W\)).
The space \(\fB\) is equipped with a discrete divergence operator 
\begin{align}
\label{eq:DiscreteDivergenceOperator}
    \bd\colon\fB\xrightarrow{\rm linear}\fD
\end{align}
where \(\fD\) denotes some space of discrete divergence data.
The discrete divergence operator \eqref{eq:DiscreteDivergenceOperator} 
defines the space
\begin{align}
    \fB_{\div}\coloneqq \ker(\bd)\subset\fB
\end{align}
of discrete divergence-free vector fields.
What is referred to as
an \emph{interpolation scheme} is an operator that realizes each element of \(\fB\) as a smooth vector field
\begin{align}
\label{eq:cIAsInterpolationScheme}
    \cI\colon\fB\xrightarrow{\substack{\text{linear,}\\\text{injective}}}\fX(W),\quad
    \bff = 
    \begin{bsmallmatrix}
        f_1\\
        \vert\\
        f_{N}
    \end{bsmallmatrix}
    \mapsto
    \cI(\bff)=
    \sum_{\sfi = 1}^{N}f_{\sfi}\vec b_{\sfi}
\end{align}
via a set of given basis vector fields \(\vec b_1,\ldots,\vec b_N\in\fX(W)\).  
A \emph{divergence-free interpolation scheme} is an interpolation \(\cI\)
so that discrete divergence free vector fields are realized as smooth divergence free vector fields
\begin{align}
\label{eq:DivergenceFreeInterpolation}
    \cI(\fB_{\div})\subset\fX_{\div}(W).
\end{align}
In other words, the discrete and continuous divergence operators are consistent through the interpolation.

In \secref{sec:Theory2} we construct a divergence-consistent interpolation method from a discrete grid using B-splines.
In general, such interpolation methods that are consistent with the exterior derivatives (\(\grad\), \(\curl\), \(\div\)) are known as a \textbf{mimetic interpolation}, also detailed in \secref{sec:Theory2}.
In this section, we assume a divergence-free interpolation scheme \(\cI\) is given, so that we can describe the algorithm in terms of elementary linear algebra.

Once \((\fB,\fB_{\div},\cI)\) is given, we have an induced inner product structure on \(\fB\) from the \(L^2\) inner product structure of \(\fX(W)\) through the map \(\cI\colon\fB\to\fX(W)\):
\begin{align}
\label{eq:InnerProductOnB}
    \langle \bff,\bg\rangle_{\fB} = \bff^\intercal\star\bg \coloneqq \llangle \cI(\bff),\cI(\bg)\rrangle = \int_W\langle\cI(\bff),\cI(\bg)\rangle\,d\mu
\end{align}
for all $\bff, \bg \in \cB$, where \(\star\) denotes the inner product matrix, which is an \(N\)-by-\(N\) symmetric positive definite matrix explicitly given by \(\star_{\sfi\sfj} = \llangle\vec b_\sfi,\vec b_\sfj\rrangle\).
Let \(\bP_{\fB_{\div}}\colon\fB\to\fB_{\div}\) denote the orthogonal projection to \(\fB_{\div}=\ker(\bd)\) with respect to \(\langle\cdot,\cdot\rangle_{\fB}\), given by
\begin{align}
\label{eq:DiscretePressureProjection}
    \bP_{\fB_{\div}} = \id {}-{} \star^{-1}\bd^\intercal (\bd\star^{-1}\bd^\intercal)^{-1} \bd,
\end{align}
derived in \appref{app:DiscretePressureProjection}.
The discrete projection \eqref{eq:DiscretePressureProjection} is analogous to the continuous pressure projection \((\id - \nabla\Delta^{-1}(\nabla\cdot)\)).
One may equivalently write \eqref{eq:DiscretePressureProjection} in terms of streamfunctions, which is detailed in \secref{sec:StreamformSolve}.
We refer \(\bP_{\fB_{\div}}\)  as the \textbf{discrete pressure projection operator}.
Note that the divergence-consistency \eqref{eq:DivergenceFreeInterpolation} of the interpolation \(\cI\) implies that the discrete pressure projection is \emph{exact} in the weak sense.  That is, it results in the closest-point approximation to the exact pressure projection:
\begin{theorem}
\label{thm:PressureProjectionIsExact}
The resulting vector field \(\cI(\bP_{\fB_{\div}}\bff)\) by the discrete pressure projection is, among \(\cI(\fB_{\div})\), the closest element to  the continuous pressure projection \(\bP_{\fX_{\div}}\cI(\bff)\) performed in \(\fX(W)\).  That is, 
    \begin{align}
    \left(\cI(\bP_{\fB_{\div}}\bff) - \bP_{\fX_{\div}}\cI(\bff)\right)\mathrel{\bot_{L^2}}\cI(\fB_{\div})\quad\text{for all \(\bff\in\fB\)}.
\end{align}
\end{theorem}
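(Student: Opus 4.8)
The plan is to exploit the fact that, by the very definition of the induced inner product in \eqref{eq:InnerProductOnB}, the interpolation operator \(\cI\) is an \emph{isometric} embedding of \((\fB,\langle\cdot,\cdot\rangle_{\fB})\) into \((\fX(W),\llangle\cdot,\cdot\rrangle)\), not merely an injection. First I would record the two orthogonality relations that characterize the two projections appearing in the statement. Since \(\bP_{\fB_{\div}}\bff\) is, by construction, the \(\langle\cdot,\cdot\rangle_{\fB}\)-orthogonal projection of \(\bff\) onto \(\fB_{\div}\), we have \(\bff-\bP_{\fB_{\div}}\bff\perp_{\fB}\fB_{\div}\). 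Pushing this through the isometry \(\cI\) converts it verbatim into an \(L^2\) statement:
\begin{align}
\cI(\bff)-\cI(\bP_{\fB_{\div}}\bff)\ \perp_{L^2}\ \cI(\fB_{\div}).
\end{align}
This is the first ingredient, and it is exactly where the isometry (rather than just injectivity) of \(\cI\) must be invoked.

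The second ingredient comes straight from the definition of the continuous projection, namely \(\cI(\bff)-\bP_{\fX_{\div}}\cI(\bff)\perp_{L^2}\fX_{\div}(W)\). This is the step where the divergence-free consistency \eqref{eq:DivergenceFreeInterpolation} does the real work: because \(\cI(\fB_{\div})\subset\fX_{\div}(W)\), orthogonality to the larger space \(\fX_{\div}(W)\) immediately descends to orthogonality to the subspace \(\cI(\fB_{\div})\), giving
\begin{align}
\cI(\bff)-\bP_{\fX_{\div}}\cI(\bff)\ \perp_{L^2}\ \cI(\fB_{\div}).
\end{align}

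Finally I would take the difference of these two relations. Since a subspace is closed under negation and subtraction, the difference is again orthogonal to \(\cI(\fB_{\div})\); subtracting the first relation from the second cancels the common term \(\cI(\bff)\) and leaves
\begin{align}
\cI(\bP_{\fB_{\div}}\bff)-\bP_{\fX_{\div}}\cI(\bff)\ \perp_{L^2}\ \cI(\fB_{\div}),
\end{align}
which is precisely the claim. Because \(\cI(\bP_{\fB_{\div}}\bff)\in\cI(\fB_{\div})\), this orthogonality is, by the standard best-approximation characterization in an inner product space, equivalent to \(\cI(\bP_{\fB_{\div}}\bff)\) being the \(L^2\)-closest element of \(\cI(\fB_{\div})\) to \(\bP_{\fX_{\div}}\cI(\bff)\).

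As for difficulties, the argument is structurally short and the only genuine content is the recognition that divergence-free consistency is exactly the hypothesis needed to transfer the continuous orthogonality from \(\fX_{\div}(W)\) down to the interpolated subspace \(\cI(\fB_{\div})\); without it the two ingredients would not be orthogonal to the same space and the difference step would fail. The one point deserving care is the isometry claim, which is immediate from \eqref{eq:InnerProductOnB} defining \(\langle\cdot,\cdot\rangle_{\fB}\) as the pullback of the \(L^2\) product, but which must be stated explicitly since it is what lets the \(\fB\)-orthogonality of \(\bP_{\fB_{\div}}\) translate into \(L^2\)-orthogonality. No functional-analytic subtleties about existence of projections arise, since \(\fB\) is finite dimensional and hence \(\cI(\fB_{\div})\) is closed.
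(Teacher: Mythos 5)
Your proof is correct and is essentially the paper's own argument: the paper verifies \(\llangle\cI\bP_{\fB_{\div}}\bff-\bP_{\fX_{\div}}\cI(\bff),\cI(\bg)\rrangle=0\) directly by showing both inner products equal \(\langle\bff,\bg\rangle_{\fB}\), using exactly your two ingredients — the isometry \(\langle\cdot,\cdot\rangle_{\fB}=\llangle\cI(\cdot),\cI(\cdot)\rrangle\) and the inclusion \(\cI(\fB_{\div})\subset\fX_{\div}(W)\). Your repackaging as two residual-orthogonality relations whose difference telescopes is the same computation in equivalent form, so there is nothing to add.
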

\begin{proof}
    \appref{app:PressureProjectionIsExact}.
\end{proof}
\subsection{Equations of Motion of CO-FLIP}
\label{sec:ParticleEquationsOfMotion}
CO-FLIP is a dynamical system for particles.  Now, we describe the equations of motion for this particle system.
Let \(\cP\) denote the index set for the particles.
The time-dependent state \((\vec x_{\sfp}(t),\vec u_{\sfp}(t))_{\sfp\in\cP}\) of the system consists of the \emph{position}  \(\vec x(t)\colon \cP\to W\) of each particle, and the \emph{impulse} \(\vec u(t)\colon\cP\to \RR^n\) at each particle.

The equations of motion for the particles are given by 
\begin{subequations}\label{eq:COFLIP-ODE}
\begin{numcases}{}
\label{eq:COFLIP-ODEa}
     \tfrac{d}{dt}\vec x_{\sfp}(t) = \vec v|_{\vec x_\sfp(t)}, &\text{$\sfp\in\cP$} \\
\label{eq:COFLIP-ODEb}
    \tfrac{d}{dt}\vec u_{\sfp}(t) = -(\nabla\vec v|_{\vec x_\sfp(t)})^\intercal\vec u_\sfp(t),&\text{$\sfp\in\cP$}\\
\label{eq:COFLIP-ODEc}
    \vec v = \cI\bP_{\fB_{\div}}\hat\cI^+(\vec x_{\sfp},\vec u_{\sfp})_{\sfp\in\cP}\in\fX_{\div}(W),
\end{numcases}
\end{subequations}
as a discretization of \eqref{eq:ImpulseEquationVector} (with \(q=0\)) along advected particles.
Here \(\hat\cI^+\) is the pseudoinverse of \(\cI\) that takes in a state \((\vec x_{\sfp},\vec u_{\sfp})_{\sfp\in\cP}\) and returns 
an element in \(\fB\) by
\begin{align}
\label{eq:ArgminProblemForIDagger}
    \hat\cI^+(\vec x_{\sfp},\vec u_{\sfp})_{\sfp\in\cP}\coloneqq\argmin_{\bff\in\fB}\sum_{\sfp\in\cP}|(\cI(\bff))_{\vec x_\sfp} - \vec u_{\sfp}|^2\mu_\sfp,
\end{align}
where \(\mu_\sfp\) is a particle volume preserved over time.
\begin{theorem}
\label{thm:IDaggerIsLeftInverse}
    \(\hat\cI^+\) is an \emph{exact left-inverse} of \(\cI\) in the sense that
\begin{align}
\label{eq:IDaggerIsLeftInverse}
    \bff = \hat\cI^+\left(\vec x_{\sfp},\cI(\bff)_{\vec x_{\sfp}}\right)_{\sfp\in\cP}\quad
    \parbox{10em}{
    for all \(\bff\in\fB\) and particle positions \((\vec x_\sfp)_{\sfp\in\cP}\),}
\end{align}
provided that \((\vec x_\sfp)_{\sfp\in\cP}\) is sufficiently dense so that the minimizer for \eqref{eq:ArgminProblemForIDagger} is unique.
\end{theorem}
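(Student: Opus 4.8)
The plan is to read \(\hat\cI^+\) as a \emph{nonnegative} weighted least-squares fit and to notice that \emph{consistent} data --- data sampled from a genuine interpolated field \(\cI(\bff)\) --- drives the fit's residual to exactly zero at \(\bff\). To make this precise I would introduce the linear evaluation map
\begin{align}
    A\colon \fB\to(\RR^n)^{\cP},\qquad A(\bg)\coloneqq\bigl((\cI(\bg))_{\vec x_\sfp}\bigr)_{\sfp\in\cP},
\end{align}
which is linear because \(\cI\) is linear and point evaluation is linear. Setting \(\vec u_\sfp=(\cI(\bff))_{\vec x_\sfp}\) means exactly that \((\vec u_\sfp)_{\sfp\in\cP}=A(\bff)\), so the objective defining \(\hat\cI^+\) in \eqref{eq:ArgminProblemForIDagger} is the weighted squared residual of the linear least-squares problem attached to \(A\).

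First I would record that the objective \(E(\bg)\coloneqq\sum_{\sfp\in\cP}|(\cI(\bg))_{\vec x_\sfp}-\vec u_\sfp|^2\mu_\sfp\) is a sum of nonnegative terms (each particle volume \(\mu_\sfp>0\)), hence \(E(\bg)\ge 0\) for all \(\bg\in\fB\). Next, substituting \(\bg=\bff\) and using \(\vec u_\sfp=(\cI(\bff))_{\vec x_\sfp}\) makes every residual vanish, so \(E(\bff)=0\); thus \(\bff\) attains the global lower bound and is a minimizer of \(E\). Finally, the stated hypothesis that the positions \((\vec x_\sfp)_{\sfp\in\cP}\) are dense enough for the minimizer of \eqref{eq:ArgminProblemForIDagger} to be \emph{unique} forces that minimizer to be \(\bff\) itself, which is precisely the claimed identity \eqref{eq:IDaggerIsLeftInverse}.

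The only point that needs care --- and hence the place I would dwell --- is the uniqueness hypothesis, since without it the argument yields only that \(\bff\) is \emph{a} minimizer. I would make explicit that uniqueness of the least-squares minimizer is equivalent to injectivity of \(A\): the convex quadratic \(E\) has a unique minimizer iff its Hessian, the \(\mu_\sfp\)-weighted Gram matrix \(A^\intercal\!\mathrm{diag}(\mu_\sfp)A\), is positive definite, which holds iff \(\ker A=\{0\}\). Thus ``sufficiently dense'' sampling means precisely that no nonzero \(\bg\in\fB\) has \(\cI(\bg)\) vanishing at every \(\vec x_\sfp\); since \(\cI\) is injective and \(\fB\) is finite dimensional, such sampling sets exist, and under this assumed condition the left-inverse identity follows immediately. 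I do not expect any genuine obstacle beyond stating this equivalence cleanly.
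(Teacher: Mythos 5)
Your proof is correct and takes essentially the same route as the paper's: since the data \(\vec u_\sfp = \cI(\bff)_{\vec x_\sfp}\) is consistent, the nonnegative objective in \eqref{eq:ArgminProblemForIDagger} attains its global minimum value of zero at \(\bff\), and the assumed uniqueness of the minimizer then forces \(\hat\cI^+\) to return \(\bff\). Your additional remark that uniqueness is equivalent to injectivity of the evaluation map (positive definiteness of the \(\mu_\sfp\)-weighted Gram matrix) simply makes explicit what the paper's phrase ``sufficiently dense'' leaves implicit, and is a correct and welcome clarification rather than a different argument.
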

\begin{proof}
    \appref{app:IDaggerIsLeftInverse}.
\end{proof}
In the language of PIC and FLIP, the interpolation \(\bff\in\fB\mapsto(\vec x_{\sfp},\cI(\bff)_{\vec x_{\sfp}})_{\sfp\in\cP}\) is our grid-to-particle (G2P) transfer, while 
the least squares solve \(\hat\cI^+\) is our particle-to-grid (P2G) transfer.
In particular, \eqref{eq:IDaggerIsLeftInverse} states that P2G perfectly reconstructs the grid data if the particle data comes from a G2P transfer.
We detail, in \secref{sec:pseudoinversep2gsolve}, an efficient manner of solving this problem using a preconditioned conjugate gradient (PCG) solver.

Also note that the \(\continuity^0\)-continuous gradient of velocity \(\nabla\vec v\) in \eqref{eq:COFLIP-ODEb} can also be evaluated exactly since \(\vec v\) is in the image of \(\cI\).  For example if \(\vec v = \cI(\bff) = \sum_{\sfi=1}^Nf_\sfi\vec b_\sfi\) then \(\nabla\vec v = \sum_{\sfi = 1}^Nf_\sfi\nabla\vec b_{\sfi}\).

This concludes the CO-FLIP dynamical system as an ODE \eqref{eq:COFLIP-ODE}.

\subsection{Time Integration}
\label{sec:time-integration}
The time integration of \eqref{eq:COFLIP-ODE} is dissected into two levels.  
At the first level, we describe the time integration of \eqref{eq:COFLIP-ODEa} and \eqref{eq:COFLIP-ODEb} for a time-independent velocity field \(\vec v\).  
Second, we describe the coupling with \eqref{eq:COFLIP-ODEc}.

\subsubsection{Advection action}
When a smooth divergence-free vector field \(\vec v\in\fX_{\div}(W)\) is given and fixed over time, the ODE system \eqref{eq:COFLIP-ODEa} and \eqref{eq:COFLIP-ODEb} for \((\vec x_{\sfp}(t),\vec u_\sfp(t))_{\sfp\in\cP}\) can be integrated accurately, efficiently, and in parallel over \(\sfp\in\cP\), using the 4th order Runge--Kutta (RK4) method.
Conceptually, this first level ODE solve is the evaluation of a Lie group action on the state \((\vec x_{\sfp},\vec u_\sfp)_{\sfp\in\cP}\) by an element of the Lie algebra (i.e\acdot\, generator) \(\vec v\in\fX_{\div}(W)\).  Let us denote this advection by a fixed \(\vec v\in\fX_{\div}(W)\) over a given time span \(\tau\geq 0\) by
\begin{align}
\label{eq:AdvDefinition}
\begin{aligned}
    &\Adv_{\vec v}^{\tau}\colon W\times\RR^n\to W\times\RR^n,\\
    &\Adv_{\vec v}^\tau (\vec x^0,\vec u^0)\coloneqq (\vec x(\tau),\vec u(\tau)),\\
    &\text{where }
    \begin{cases}
     {d\over dt}\vec x(t)=\vec v|_{\vec x(t)}\\
     {d\over dt}\vec u(t) = -(\nabla\vec v|_{\vec x(t)})^\intercal\vec u(t)\\
     \vec x(0) = \vec x^0, \vec u(0) = \vec u^0.
    \end{cases}
\end{aligned}
\end{align}
The operator \(\Adv_{\vec v}^\tau\) also acts on the particle state space \((W\times\RR^n)^{|\cP|}\) by mapping the operation in parallel to each \((\vec x_\sfp,\vec u_\sfp)\) for \(\sfp\in\cP\).
Note that we have group structure \(\Adv_{\vec v}^{\tau_1}\Adv_{\vec v}^{\tau_2} = \Adv_{\vec v}^{\tau_1 + \tau_2}\) and \(\Adv_{a\vec v}^{\tau} = \Adv_{\vec v}^{a\tau}\).

\subsubsection{Lie group integrator}
At the second level, we couple the above group action \eqref{eq:AdvDefinition} with \eqref{eq:COFLIP-ODEc} using a Lie group integrator.  That is, we construct a higher order time integration method only by compositions of group actions.
For simplicity in notation, we abbreviate each state by \(\by = (\vec x_{\sfp},\vec u_{\sfp})_{\sfp\in\cP}\in (W\times \RR^n)^{|\cP|}\), and we write \eqref{eq:COFLIP-ODEc} as \(\vec v =  \cI\cF(\by)\) where
\begin{align}
\label{eq:cFDefinition}
    \cF\colon(W\times\RR^n)^{|\cP|}\to\fB_{\div},\quad\cF\coloneqq\bP_{\fB_{\div}}\hat \cI^+.
\end{align}
We adopt a trapezoidal rule.  On a time axis \((\sfn\dt)_{\sfn=0,1,2,\ldots}\) discretized by a uniform size \(\dt>0\), a given state \(\by^{(\sfn)}\) at the \(\sfn\)-th time step determines the state \(\by^{(\sfn+1)}\) at the next time step by solving
\begin{subequations}
\label{eq:ImplicitMidpoint}
\begin{numcases}{}
\label{eq:ImplicitMidpointA}
    \by^{(\sfn+1)} = 
    \Adv^{\dt}_{\cI\bff^*}\by^{(\sfn)}\\
    \label{eq:ImplicitMidpointB}
    \bff^* = \tfrac{1}{2}\left(
    \cF(\by^{(\sfn)})
    +
    \cF(\by^{(\sfn+1)})
    \right).
\end{numcases}
\end{subequations}
The intermediate grid velocity data \(\bff^{*}\in\fB_{\div}\) represents the grid velocity at the midpoint, \ie\@ the \((\sfn + \nicefrac{1}{2})\)-th time step.

In \secref{sec:PropertiesOfCOFLIP} (\corref{cor:ImplicitMidpointOnMIsEnergyPreserving}) we show that \eqref{eq:ImplicitMidpoint} becomes an energy preserving integrator in the limit where the particle density approaches a continuum (the grid and temporal resolution stay fixed during this limit).

A simple method to evaluate the solution to \eqref{eq:ImplicitMidpoint} is by a fixed point iteration:
\begin{algorithm}[H]
\caption{CO-FLIP Trapezoidal Integrator}
\begin{algorithmic}[1]
\Require \(\by^{(\sfn)} = (\vec x_\sfp,\vec u_\sfp)_{\sfp\in\cP}^{(\sfn)}\)\Comment{Particle position and impulse.}
    \State \(\bff^*\gets\cF(\by^{(\sfn)})\);
    \Comment{Projected particle-to-grid}
    \Repeat 
    \State \(\by^{(\sfn+1)}\gets \) \eqref{eq:ImplicitMidpointA};
    \Comment{Advect the particles.}
    \State \(\bff^*\gets\) \eqref{eq:ImplicitMidpointB};
    \Comment{Averaged particle-to-grid velocity.}
    \Until {$\cF(\by^{(\sfn+1)})$ not converged.}
\Ensure \(\by^{(\sfn+1)}\).
\end{algorithmic}
\end{algorithm}
The stop criteria is determined either by a max number of iterations, or when $\nicefrac{\vert\vert\cF(\by^{(\sfn+1)}) - \cF(\by^{(\sfn)})\vert\vert}{\vert\vert\cF(\by^{(0)})\vert\vert}$ has converged to a desired tolerance.
This algorithm is implementation-wise attractive since it only requires calling the explicit advection procedure \eqref{eq:AdvDefinition} and evaluating \(\cF\) \eqref{eq:cFDefinition}, which are familiar subroutines in any PIC or FLIP implementation. It is in contrast to a full Newton's method for \eqref{eq:ImplicitMidpoint} that would involve computing the system's Hessian \cite{Mullen:2009:EPI,Azencot:2014:FFS}.
When only one iteration is performed, this is the forward Euler method or time-splitting method adopted in most particle-in-cell methods.
Performing two iterations corresponds to the higher order time integration, such as explicit trapezoidal or midpoint schemes, adopted in \cite{Narain:2019:SAR,Nabizadeh:2022:CF}.
We run the fixed point iteration until convergence, which typically takes up to 4--6 iterations (note that later iterations cost less, see \figref{fig:timings}).

\subsection{Energy-Based Correction}
\label{sec:EnergyBasedCorrection}
The \(\hat\cI^+\) \eqref{eq:ArgminProblemForIDagger} and the \(\cF\) \eqref{eq:cFDefinition} operators are projection operators, which take a vector field represented by the particles, and send to the image of \(\cI\colon \fB\to\fX(W)\) and \(\cI\colon\fB_{\div}\to\fX_{\div}(W)\).  However, the projection \(\hat\cI^+\) (which is a part of \(\cF\)) is orthogonal only with respect to the metric \(\sum_{\sfp\in\cP}|\cdot|^2\mu_\sfp\) instead of the continuous \(L^2\) metric \(\int_W|\cdot|^2\,d\mu\).
Increasing the particle density essentially improves the approximation of \(\sum_{\sfp\in\cP}|\cdot|^2\mu_\sfp\) to \(\int_W|\cdot|^2\,d\mu\).  In other words, the difference between the two norms reflects a discretization error due to the particles not being infinitely dense.

\begin{figure}
    \centering
%
%
\definecolor{mycolor1}{rgb}{0.00000,0.44700,0.74100}%
\definecolor{mycolor2}{rgb}{0.85000,0.32500,0.09800}%
\definecolor{mycolor3}{rgb}{0.92900,0.69400,0.12500}%
\begin{tikzpicture}

\begin{axis}[%
width=0.55\linewidth,
height=0.4\linewidth,
at={(0,0)},
scale only axis,
bar width=0.8,
xmin=0,
xmax=7,
xticklabel style = {font=\sffamily \scriptsize},
yticklabel style = {font=\sffamily \scriptsize},
xtick={1, 2, 3, 4, 5, 6},
xtick style={draw=none},
xlabel style={font=\color{white!15!black}},
xlabel={\sffamily \scriptsize Number of fixed point iterations},
ymin=0,
ymax=30000,
ylabel style={font=\color{white!15!black}},
ylabel={\sffamily \scriptsize Time taken (ms)},
axis background/.style={fill=white},
legend style={legend cell align=left, align=left, legend plot pos=left, draw=white!15!black}
]
\addplot[ybar stacked, fill=mycolor1, draw=black] table[row sep=crcr] {%
1	14118.4\\
2	9582.4\\
3	5780.4\\
4	2588.6\\
5	1020\\
6	28.2\\
};
\addlegendentry{\sffamily \tiny Streamfunction-Vorticity Solve}

\addplot [color=black, forget plot]
  table[row sep=crcr]{%
0	0\\
7	0\\
};
\addplot[ybar stacked, fill=mycolor2, draw=black] table[row sep=crcr] {%
1	10992.4\\
2	6124.2\\
3	4057.2\\
4	1758.4\\
5	1012.6\\
6	562.2\\
};
\addlegendentry{\sffamily \tiny Momentum-Map Least Squares}

\addplot[ybar stacked, fill=mycolor3, draw=black] table[row sep=crcr] {%
1	409.2\\
2	412.4\\
3	409.2\\
4	408\\
5	410\\
6	411.4\\
};
\addlegendentry{\sffamily \tiny Advection}

\end{axis}
\end{tikzpicture}%
\begin{tikzpicture}

\begin{axis}[%
width=0.3\linewidth,
height=0.4\linewidth,
at={(1.011in,0.642in)},
scale only axis,
bar shift auto,
log origin=infty,
xmin=0.514285714285714,
xmax=2.48571428571429,
xtick={1,2},
xtick style={draw=none},
xticklabels={{CO-FLIP},{CF+PolyFLIP}},
xticklabel style = {font=\sffamily \scriptsize},
yticklabel style = {font=\sffamily \scriptsize},
xlabel={\sffamily \scriptsize Method},
ymode=log,
ymin=1,
ymax=1000000,
yminorticks=true,,
yticklabel pos=right,
ylabel style={font=\color{white!15!black}, at=(current axis.below origin), anchor=south},
ylabel={\sffamily \scriptsize 
},
axis background/.style={fill=white},
legend style={legend cell align=left, align=left, legend plot pos=left, draw=white!15!black},
area legend/.style={
    legend image code/.code={
        \draw[#1](0cm,-0.1cm)rectangle
            (0.25cm,0.1cm) 
        ;
    }
}
]
\addplot[ybar, fill=mycolor1, draw=black, area legend] table[row sep=crcr] {%
1	60085.2\\
2	898.75\\
};
\addlegendentry{\sffamily \tiny Res=$64^3$}

\addplot[ybar, fill=mycolor2, draw=black, area legend] table[row sep=crcr] {%
1	146411\\
2	3203.75\\
};
\addlegendentry{\sffamily \tiny Res=$96^3$}

\end{axis}
\end{tikzpicture}%
    \caption{The average time taken per fixed point iteration during the trapezoidal implicit time integration; for the Trefoil knot experiment in \figref{fig:trefoilknot_evolution} at resolution $64^3$ (left), and a comparison of the total time taken for different resolutions of CO-FLIP and CF+PolyFLIP (right).
    Note that the exponential drop in time taken to run the later iterations.
    Compared to an explicit midpoint solver (first two iterations only), these additional steps only incur 50\% additional computational time.  In comparison to traditional methods, our method takes longer to run, and it scales linearly as number of grid cells increase.}
    \label{fig:timings}
\end{figure}

Here, we propose a simple method to include a correction term that compensates this discretization error.
We will utilize the facts that (i) we do have access to the exact \(L^2\) norm when the data is on \(\fB\) (\cf\@ \eqref{eq:InnerProductOnB}), and (ii) at the limit where particles become a continuum the method \eqref{eq:ImplicitMidpoint} is energy preserving (\corref{cor:ImplicitMidpointOnMIsEnergyPreserving} in \secref{sec:PropertiesOfCOFLIP}).

The general idea goes as follows.
Let \(\bff^{(\sfn)}\approx \cF(\by^{(\sfn)})\), \(\bff^{(\sfn+1)}\approx \cF(\by^{(\sfn+1)})\) represent modified grid velocities projected from the particle data \(\by^{(\sfn)},\by^{(\sfn+1)}\).  This modification is made minimally so that the kinetic energy is conserved:
\(|\bff^{({\sfn+1})}|_{\fB}^2 = |\bff^{({\sfn})}|_{\fB}^2\) for all \(\sfn\), where \(|\cdot|_{\fB}^2\) is given by \eqref{eq:InnerProductOnB} which yields the exact \(L^2\) norm of the interpolated grid velocities.
Note that by the difference of squares formula, the condition \(|\bff^{({\sfn+1})}|_{\fB}^2 - |\bff^{({\sfn})}|_{\fB}^2=0\) translates to 
\begin{align}
    \langle \underbrace{\bff^{(\sfn+1)}-\bff^{(\sfn)}}_{\Deltait \bff},\underbrace{\bff^{(\sfn+1)}+\bff^{(\sfn)}}_{2\tilde\bff^*}\rangle_{\fB} = 0.
\end{align}
To impose this energy conservation condition, we project an original difference \((\Deltait\bff)_{\rm original} = \cF(\by^{(n+1)}) - \bff^{(n)}\) to the orthogonal complement of \(\tilde\bff^{*}\), the intermediate grid velocity projected from the particle data.  This treatment corresponds to the following implicit scheme.

Initialize \(\bff^{(0)} = \cF(\by^{(0)})\).
Given \(\by^{(\sfn)}, \bff^{(\sfn)}\), determine \(\by^{(\sfn+1)}, \bff^{(\sfn+1)}\) by solving
\begin{subequations}
\label{eq:EnergyCorrection}
    \begin{numcases}{}
    \label{eq:EnergyCorrectionA}
        \by^{(\sfn+1)} = \Adv^{\Deltait t}_{\cI\tilde\bff^*}\by^{(\sfn)}\\
    \label{eq:EnergyCorrectionB}
        \tilde\bff^* = \textstyle{1\over 2}\left(\bff^{(\sfn)} + \bff^{(\sfn+1)}\right)\\
    \label{eq:EnergyCorrectionC}
        \bff^{(\sfn+1)} = \bff^{(\sfn)} + \bP_{\bff^{*\bot}}\left(\cF(\by^{(\sfn+1)})-\bff^{(\sfn)}\right)
    \end{numcases}
\end{subequations}
where \(\bP_{\tilde\bff^{*\bot}}\bff = \bff - \tilde\bff^*\langle \tilde\bff^*,\bff\rangle_{\fB}/|\tilde\bff^*|^2_{\fB}\) is the orthogonal projection to the orthogonal complement of \(\tilde\bff^*\).

Similar to \eqref{eq:ImplicitMidpoint}, the solution to \eqref{eq:EnergyCorrection} can be evaluated using a fixed point iteration with an implicit trapezoidal integrator:
\begin{algorithm}[H]
\caption{CO-FLIP Integrator with Energy-Based Correction}
\begin{algorithmic}[1]
\Require \(\by^{(\sfn)},\bff^{(\sfn)}\)
\Comment{Particle state and grid state}
    \State \(\tilde\bff^*\gets\cF(\by^{(\sfn)})\);
    \Comment{Projected particle-to-grid}
    \Repeat
    \State \(\by^{(\sfn+1)}\gets \) \eqref{eq:EnergyCorrectionA};
    \Comment{Advect the particles.}
    \State \(\bff^{(\sfn+1)}\gets\cF(\by^{(\sfn+1)})\);
    \Comment{Projected particle-to-grid.}
    \State \(\tilde\bff^*\gets\) \eqref{eq:EnergyCorrectionB};
    \Comment{Average.}
    \State \(\bff^{(\sfn+1)}\gets \bff^{(\sfn)} + \bP_{\tilde\bff^{*\bot}}(\bff^{(\sfn+1)}-\bff^{(\sfn)})\);
    \Comment{Correction \eqref{eq:EnergyCorrectionC}.}
    \Until {$\cF(\by^{(\sfn+1)})$ not converged.}
\Ensure \(\by^{(\sfn+1)}, \bff^{(\sfn+1)}\).
\end{algorithmic}
\end{algorithm}
We let the fixed point iteration run until convergence with the same criteria as Algorithm 1.
For our experiments, we keep the CFL number close to 0.5, which results in 4-6 iterations to exit for a tolerance of $10^{-9}$ in 2D ($10^{-7}$ in 3D).
As shown in \figref{fig:timings}, by reusing the previous iterations' results as guesses to the solvers for the next iteration, each step of the fixed-point iteration runs progressively faster than the one before.
In practice, we observe an increase in computation time of roughly 50\% compared to an explicit second-order method.
This is a relatively small price to pay for the conservation of energy.
Preserving energy informs us quantitatively that the algorithm is correctly simulating the $\cI$-discrete Euler flow (\secref{sec:DiscreteEulerFlow}). Note that energy conservation is only possible when $\tilde\bff^*$ has sufficiently converged using our implicit integration; otherwise, the explicit scheme cannot benefit from the energy correction (see \figref{fig:implicit_vs_explicit_with_eps_proj}).  We use Algorithm 2 for all of our experiments.

\begin{figure}
    \centering
    \input{figures/Results/2D/ablations/ablation_implicit_vs_explicit_with_eps_proj}
    \caption{Comparison of time integration method used for our method.
    This study compares energy change within a timestep depending on whether our method uses either an implicit or explicit trapezoidal integrator, with or without energy-based correction.
    Note that the energy-based correction only works as expected with an implicit integrator since $\tilde\bff^*$ is correctly evaluated with convergence of the fixed point iteration.}
    \label{fig:implicit_vs_explicit_with_eps_proj}
\end{figure}

\subsection{Discussion}
\label{sec:MethodDiscussion}

We have described the essence of the CO-FLIP method.  
One only needs to provide a divergence-consistent interpolation scheme \(\cI\colon\fB_{\div}\to\fX_{\div}(W)\). Given such an interpolation method, the rest of the template description canonically unfolds into a fluid simulation method.

In particular, the spatial accuracy of the method depends only on the choice of the grid interpolation \(\cI\) (assuming the typical setup that the particles are sufficiently dense compared to the grid).
A high-order interpolation scheme yields a globally high-order fluid solver.  This is in contrast to most fluid solvers, where achieving a certain global accuracy requires designing multiple high-order subroutines, such as particle-to-grid transfer, grid-to-particle transfer, and pressure solves.
We provide concrete constructions of divergence-consistent interpolations of arbitrary order in \secref{sec:Theory2}.
In addition to having a high order of accuracy, the method preserves energy and circulations.  This is a consequence of the associated Hamiltonian system governing the CO-FLIP dynamics.

Some practical implementation details are postponed to \secref{sec:Implementation}.
We discuss an efficient construction of our high-order pressure projection, an adaptive resetting strategy, and details regarding minimizing the instability of FLIP-based methods.
Additional details regarding the construction of the pseudoinverse and streamform-vorticity solves are presented in \secref{sec:performanceconsiderations}.

In the following \secref{sec:Theory1}, we detail the mathematical foundation of the CO-FLIP algorithm.  In particular, we use the framework of geometric fluid mechanics to show that CO-FLIP satisfies the conservation of circulation and energy in a precise manner (\secref{sec:PropertiesOfCOFLIP}).  
The theory relies on the background on Poisson structure and Lie algebras developed in \appref{app:Preliminaries}. 
Readers primarily interested in the implementation may skip to \secref{sec:Theory2}. 
\section{Theory Part 1: A Structure Preserving Fluid Simulator}
\label{sec:Theory1}

In this section, we show that the CO-FLIP method of \secref{sec:Method} naturally arises when considering discretizing the geometric theory of the incompressible Euler equation in a structure-preserving manner.  For further preliminary material regarding geometric mechanics, we refer the readers to \appref{app:Preliminaries}, and for a brief exposition on exterior calculus we refer the readers to \cite{Wang:2023:EC, Yin:2023:FC, Crane:2013:DEC}.

\subsection{Background}
In geometric mechanics, the incompressible Euler equation is formulated as a Hamiltonian system.

The phase space of the Hamiltonian system is given by the dual space \(\fX_{\div}^*(W)\) of the space \(\fX_{\div}(W)\) of divergence-free vector fields.
Note that the dual space of any Lie algebra is equipped with a natural Poisson bracket, called the \textbf{Lie--Poisson bracket} (\secref{sec:DualLieAlgArePoisson}).
Since \(\fX_{\div}(W)\) is a Lie algebra, with the standard Lie bracket for vector fields, our phase space \(\fX_{\div}^*(W)\) is equipped with a Poisson bracket.
The purpose of the Poisson bracket is that it can turn any smooth function defined over the phase space, called a Hamiltonian, into a vector field that describes a Hamiltonian dynamical system on the phase space.
The incompressible Euler equation is the Hamiltonian system whose phase space is \(\fX^*_{\div}(W)\) with the Lie--Poisson bracket, and whose Hamiltonian is the kinetic energy of each state in \(\fX_{\div}^*(W)\).

The fact that the incompressible Euler flow is a Hamiltonian system using the Lie--Poisson bracket structure has some significant implications.
A dual Lie algebra is foliated into lower dimensional \textbf{coadjoint orbits} (see \defref{def:CoadjointOrbits}), and any Hamiltonian vector field on a dual Lie algebra is tangential to each coadjoint orbit and therefore the flow preserves the coadjoint orbits.
In particular, a state in \(\fX_{\div}^*(W)\) following the incompressible Euler flow must stay on the same coadjoint orbit in \(\fX_{\div}^*(W)\).
This conservation of coadjoint orbit, besides energy conservation, is an important structure that we aim to preserve in a structure-preserving discretization.

Let us unpack the above Hamiltonian description with concrete expressions in terms of fluid mechanical variables. 

\subsubsection{Dual space of divergence free vector fields}
\label{sec:DualSpaceOfDivFree}
\begin{proposition}\label{prop:DualSpaceOfDivFree}
\(\fX_{\div}^*(W) = \Omega^1(W)/d\Omega^0(W)\).
\end{proposition}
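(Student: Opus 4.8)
The plan is to realize the pairing between $\Omega^1(W)$ and $\fX_{\div}(W)$ through the ambient $L^2$ structure and to show that the exact $1$-forms are exactly the directions this pairing cannot detect. Concretely, I would introduce the linear map
\begin{align}
\Phi\colon \Omega^1(W)\to \fX_{\div}^*(W),\qquad \Phi(\alpha)(\vec v)\coloneqq \int_W \alpha(\vec v)\,d\mu = \llangle \alpha^\sharp,\vec v\rrangle,
\end{align}
where $\alpha^\sharp$ is the vector field metrically dual to the $1$-form $\alpha$. The proposition follows once I establish that $\Phi$ is surjective with kernel exactly $d\Omega^0(W)$, since then $\Phi$ descends to an isomorphism $\Omega^1(W)/d\Omega^0(W)\xrightarrow{\sim}\fX_{\div}^*(W)$.

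The easy inclusion $d\Omega^0(W)\subseteq \ker\Phi$ is a direct integration by parts: for $f\in\Omega^0(W)$ and $\vec v\in\fX_{\div}(W)$,
\begin{align}
\Phi(df)(\vec v) = \int_W \langle \nabla f,\vec v\rangle\,d\mu = \int_{\partial W} f\,(\vec v\cdot\vec n)\,dA - \int_W f\,(\nabla\cdot\vec v)\,d\mu = 0,
\end{align}
since $\vec v$ is divergence-free and satisfies $\vec v\cdot\vec n|_{\partial W}=0$. The reverse inclusion is the substantive step: if $\Phi(\alpha)=0$ then $\alpha^\sharp$ is $L^2$-orthogonal to every divergence-free field, i.e.\ $\alpha^\sharp\in\fX_{\div}(W)^\perp$. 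I would then invoke the Helmholtz--Hodge decomposition, which asserts that $\fX_{\div}(W)^\perp$ is precisely the space of gradient fields $\{\nabla f\}$; hence $\alpha^\sharp=\nabla f$ for some $f$, and applying $\flat$ gives $\alpha = df\in d\Omega^0(W)$.

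Surjectivity of $\Phi$ follows from the non-degeneracy of the $L^2$ pairing restricted to $\fX_{\div}(W)$: given $\ell\in\fX_{\div}^*(W)$, represent it by a field $\vec v_0\in\fX_{\div}(W)$ with $\ell(\vec v)=\llangle \vec v_0,\vec v\rrangle$ via Riesz representation, and set $\alpha=\vec v_0^\flat$, so that $\Phi(\alpha)=\ell$. Combining this with the kernel computation yields the claimed isomorphism.

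The main obstacle is the Helmholtz--Hodge step, namely proving that the orthogonal complement of $\fX_{\div}(W)$ consists solely of gradients. This reduces to solving, for a prescribed $\vec w$, the Neumann problem $\Delta f = \nabla\cdot\vec w$ with $\partial f/\partial n = \vec w\cdot\vec n$ on $\partial W$, so that $\vec w - \nabla f$ lands in $\fX_{\div}(W)$; existence and regularity of $f$ is the analytic core and is also where the geometry and topology of $W$ enter, since on a simply connected domain there are no harmonic obstructions to the decomposition. I would also flag the standing convention that ``dual space'' here denotes the regular (smooth $L^2$) dual used throughout geometric mechanics rather than the full topological dual, so that the identification is exactly the statement that $1$-forms pair with divergence-free fields modulo the exact $1$-forms they cannot see.
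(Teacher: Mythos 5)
Your proof is correct at the paper's level of rigor, but it takes a genuinely different route from the paper's own argument. The paper packages divergence as a map into top-forms, \(\div\vec v = \LD_{\vec v}(d\mu) = d\ip_{\vec v}(d\mu)\in\Omega^n(W)\), observes that \(0\to\fX_{\div}(W)\hookrightarrow\fX(W)\xrightarrow{\div}\Omega^n(W)\) is exact, dualizes to get the exact sequence \(\Omega^0(W)\xrightarrow{d}\Omega^1(W)\to\fX_{\div}^*(W)\to 0\) (using that \(\pm d\) is the adjoint of \(\div\)), and concludes by the first isomorphism theorem. You instead work concretely with the pairing map \(\Phi\), computing \(\ker\Phi = d\Omega^0(W)\) via the Helmholtz--Hodge decomposition and surjectivity via Riesz representation. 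These are the same duality with a different division of labor: exactness of the paper's dual sequence at \(\Omega^1(W)\) is literally your Helmholtz step (functionals annihilating \(\fX_{\div}(W)\) come from gradients, i.e.\ solvability of the Neumann problem), and surjectivity onto \(\fX_{\div}^*(W)\) is your Riesz step, both of which the formal dualization hides. Your version buys transparency — it shows exactly where the boundary condition \(\vec v\cdot\vec n|_{\partial W}=0\) and the divergence theorem enter, and isolates the analytic core — at the cost of invoking the Helmholtz--Weyl theorem explicitly; the paper's version is shorter but pushes all the analysis into the unproved assertion that dualizing preserves exactness. Two caveats on your write-up. First, Riesz representation requires completeness, and \(\fX_{\div}(W)\) (the paper's \(\continuity^1\) fields with the \(L^2\) inner product) is not a Hilbert space; applied to its \(L^2\)-closure it yields only an \(L^2\) representative \(\vec v_0\), so \(\alpha = \vec v_0^\flat\) need not lie in \(\Omega^1(W)\). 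Your closing convention — reading \(\fX_{\div}^*(W)\) as the regular dual — repairs this (surjectivity becomes essentially definitional), and it is also the implicit convention under which the paper's formal dualization is valid, so the two proofs stand or fall together here. Second, your topology remark is slightly off: the decomposition of \(\fX_{\div}(W)^\perp\) into gradients holds regardless of simple-connectivity, since the Neumann compatibility condition is automatic by the divergence theorem; harmonic fields are no obstruction because they sit inside \(\fX_{\div}(W)\) itself, and topology only matters for the finer Hodge splitting of the divergence-free fields, not for this step.
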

\begin{proof}
    \appref{app:DualSpaceOfDivFree}.
\end{proof}
That is, each element of \(\fX_{\div}^*(W)\) is an equivalence class of 1-forms, where two 1-forms \(\eta,\eta'\in\Omega^1(W)\) are defined to be equivalent if their difference is exact \(\eta-\eta'\in\im(d)\).
Recall the usual interpretation that a 1-form is an object to be line-integrated along oriented curves in the domain \(W\).  Here, an element \([\eta]\in \fX_{\div}^*(W) = \Omega^1(W)/d\Omega^0(W)\) can be evaluated only along \emph{closed} oriented curves.  Therefore, we call \([\eta]\) a \emph{circulation field}.

The dual pairing between a circulation field \([\eta]\in\fX_{\div}^*(W)\) and a divergence-free vector field \(\vec v\in\fX_{\div}(W)\) is denoted and given by
\begin{align}
\label{eq:CirculationVectorDualPairing}
    \left\llangle[\eta]\,\middle|\,\vec v\right\rrangle = \int_W \langle\eta|\vec v\rangle\,d\mu
\end{align}
where \(\langle\cdot|\cdot\rangle\) is the pointwise dual pairing between a 1-form and a vector.  
One can verify that the right-hand side expression of \eqref{eq:CirculationVectorDualPairing} is independent of the representative \(\eta\in[\eta]\) for the equivalence class \([\eta]\).
The dual pairing expression is relevant when we perform calculus of variations.  For each functional \(F\colon\fX_{\div}^*(W)\to\RR\), the variation \((\nicefrac{\delta F}{\delta[\eta]})\) is a divergence-free vector field \((\nicefrac{\delta F}{\delta[\eta]})\in\fX_{\div}(W)\) given so that
\begin{align}
    \left.{d\over d\varepsilon}\right|_{\varepsilon=0}F([\eta]+\varepsilon[\mathring\eta]) = \left\llangle[\mathring\eta]\,\middle|\,{\delta F\over\delta[\eta]} \right\rrangle\quad\text{for all \([\mathring\eta]\in\fX_{\div}^*(W)\)}.
\end{align}

Note that the \(L^2\) inner product structure \(\llangle\cdot,\cdot\rrangle = \int\langle\cdot,\cdot\rangle\, d\mu\) on \(\fX_{\div}(W)\) induces an isomorphism between \(\fX_{\div}(W)\) and \(\fX_{\div}^*(W)\)
\begin{align}
    \flat_{\fX_{\div}}\colon \fX_{\div}(W)\xrightarrow{\rm linear\, \cong}\fX_{\div}^*(W),\quad \flat_{\fX_{\div}}\vec u = [\vec u^\flat]
\end{align}
where \((\cdot)^\flat\) is the pointwise flat operator that converts vectors to covectors using the metric on \(W\), so that we have \(\llangle\vec u,\vec v\rrangle = \llangle\flat_{\fX_{\div}}\vec u|\vec v\rrangle\) for all \(\vec u,\vec v\in \fX_{\div}(W)\).
The inverse map
\begin{align}
\label{eq:SharpXDiv}
    \sharp_{\fX_{\div}} \coloneqq \flat_{\fX_{\div}}^{-1}\colon \fX_{\div}^*(W)\xrightarrow{\rm linear} \fX_{\div}(W)
\end{align}
is given by that \(\sharp_{\fX_{\div}}[\eta] \) is the unique divergence free vector field \(\vec v\in\fX_{\div}(W)\) such that \([\vec v^\flat] = [\eta]\).  
Explicitly, this \(\vec v\) is the pressure projection of \(\eta^\sharp\) of any representative \(\eta\in[\eta]\).

\begin{figure}
    \centering
    \includegraphics[trim={0 0 0 0},clip,width=1.2\columnwidth,angle=90]{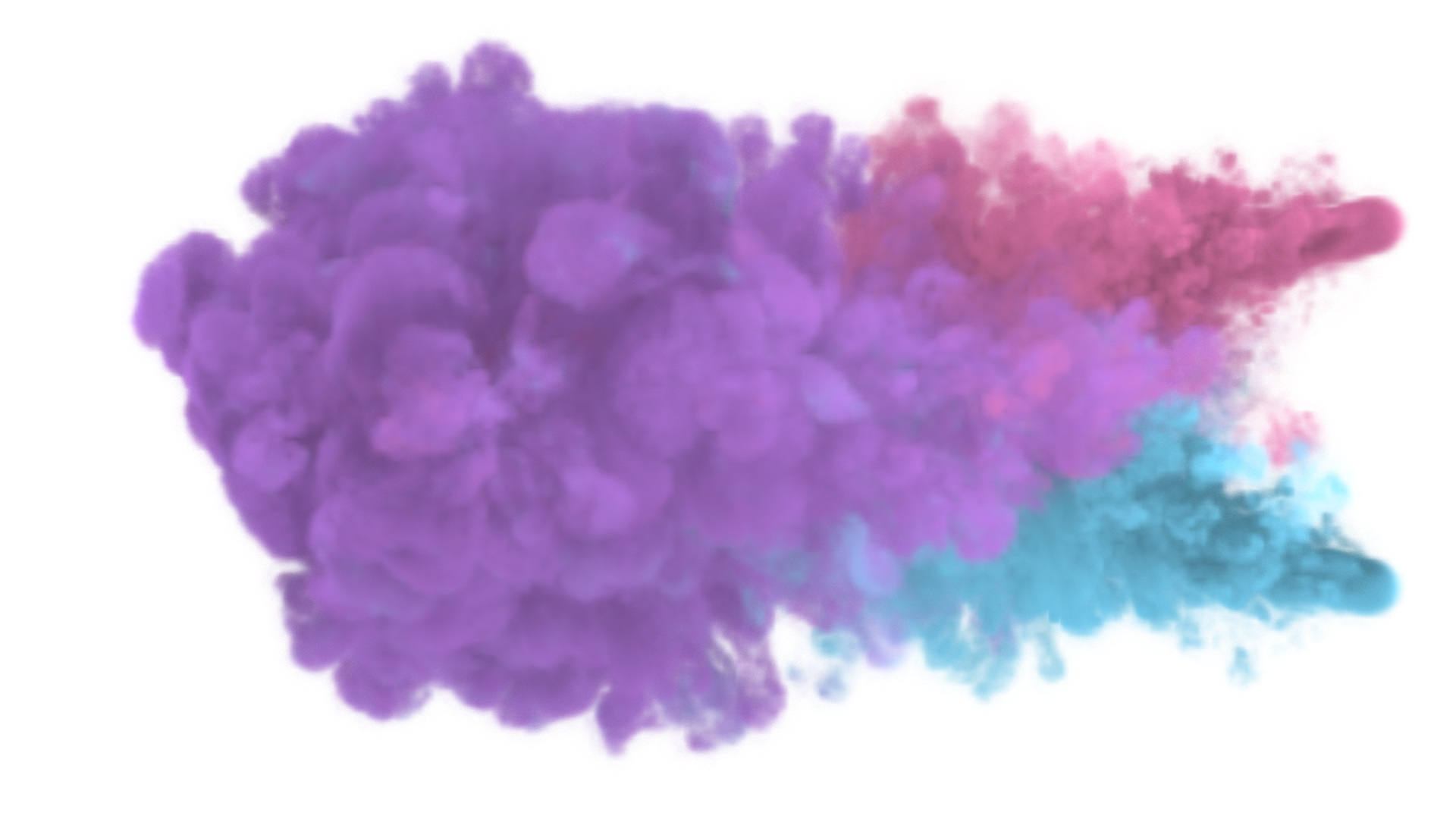}
    \caption{Two nozzles shooting out inks of different colors collide with one another.
    This highly turbulent behavior produces complex vortical structures that are captured using our method (CO-FLIP) at a low resolution of $64\times128\times64$. }
    \label{fig:inkjet}
\end{figure}

\subsubsection{Coadjoint orbits in \(\fX_{\div}^*(W)\)}

\begin{proposition}\label{prop:HamiltonianFlowOnXDivStar}
    The Hamiltonian flow on \(\fX_{\div}^*(W)\) for any Hamiltonian function \(H\colon\fX_{\div}^*(W)\to\RR\) is given by
    \begin{align}
    \label{eq:HamiltonianFlowAsLieAdvection}
        {\partial\over \partial t}[\eta] + \LD_{\frac{\delta H}{\delta[\eta]}}[\eta] = [0].
    \end{align}
\end{proposition}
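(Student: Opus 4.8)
The plan is to read off \eqref{eq:HamiltonianFlowAsLieAdvection} directly from the Lie--Poisson structure on $\fX_{\div}^*(W)$ recalled in \secref{sec:DualLieAlgArePoisson}. By Proposition~\ref{prop:DualSpaceOfDivFree} the phase space is $\Omega^1(W)/d\Omega^0(W)$, the dual of the Lie algebra $\fX_{\div}(W)$ equipped with the Jacobi--Lie bracket $[\cdot,\cdot]$ of vector fields. Fixing the sign convention so that the Lie--Poisson bracket of functionals $F,G\colon\fX_{\div}^*(W)\to\RR$ is $\{F,G\}([\eta]) = -\left\llangle[\eta]\,\middle|\,[\frac{\delta F}{\delta[\eta]},\frac{\delta G}{\delta[\eta]}]\right\rrangle$, the Hamiltonian flow generated by $H$ is characterized as the unique flow satisfying $\frac{d}{dt}F = \{F,H\}$ for every functional $F$. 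Expanding the left side by the chain rule gives $\frac{d}{dt}F = \left\llangle\frac{\partial}{\partial t}[\eta]\,\middle|\,\frac{\delta F}{\delta[\eta]}\right\rrangle$, so the task reduces to rewriting $\{F,H\}$ as a pairing against $\frac{\delta F}{\delta[\eta]}$ and then using that $\frac{\delta F}{\delta[\eta]}$ sweeps out all of $\fX_{\div}(W)$ as $F$ varies.

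The crux is the identity that the coadjoint action on $\fX_{\div}^*(W)$ equals, up to sign, the Lie derivative: for all $\vec a,\vec b\in\fX_{\div}(W)$ and any representative $\eta\in[\eta]$, $\left\llangle[\eta]\,\middle|\,[\vec a,\vec b]\right\rrangle = -\left\llangle[\LD_{\vec a}\eta]\,\middle|\,\vec b\right\rrangle$. I would prove this from the Leibniz rule for the Lie derivative of the pointwise pairing, $\LD_{\vec a}\langle\eta|\vec b\rangle = \langle\LD_{\vec a}\eta|\vec b\rangle + \langle\eta|[\vec a,\vec b]\rangle$, integrated against $d\mu$ over $W$. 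The only nonroutine point is that $\int_W\LD_{\vec a}\langle\eta|\vec b\rangle\,d\mu = 0$: with $f=\langle\eta|\vec b\rangle$, the fact that $\vec a$ is divergence-free gives $(\LD_{\vec a}f)\,d\mu = \LD_{\vec a}(f\,d\mu) = d\iota_{\vec a}(f\,d\mu)$, whence Stokes' theorem turns the integral into a boundary flux $\int_{\partial W}f\,(\vec a\cdot\vec n)\,dA$ that vanishes by the no-through condition $\vec a\cdot\vec n|_{\partial W}=0$. I would also note that $[\LD_{\vec a}\eta]$ is well-defined on the quotient, since $\LD_{\vec a}$ commutes with $d$ and hence preserves $d\Omega^0(W)$, so the identity descends to $\fX_{\div}^*(W)$.

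With this identity in hand, I would finish using antisymmetry of the bracket: $\{F,H\}([\eta]) = -\left\llangle[\eta]\,\middle|\,[\frac{\delta F}{\delta[\eta]},\frac{\delta H}{\delta[\eta]}]\right\rrangle = \left\llangle[\eta]\,\middle|\,[\frac{\delta H}{\delta[\eta]},\frac{\delta F}{\delta[\eta]}]\right\rrangle = -\left\llangle[\LD_{\frac{\delta H}{\delta[\eta]}}\eta]\,\middle|\,\frac{\delta F}{\delta[\eta]}\right\rrangle$, where the last step applies the identity with $\vec a=\frac{\delta H}{\delta[\eta]}$ and $\vec b=\frac{\delta F}{\delta[\eta]}$. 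Equating this with $\left\llangle\frac{\partial}{\partial t}[\eta]\,\middle|\,\frac{\delta F}{\delta[\eta]}\right\rrangle$ and letting $F$ vary so that $\frac{\delta F}{\delta[\eta]}$ exhausts $\fX_{\div}(W)$ yields $\frac{\partial}{\partial t}[\eta] = -[\LD_{\frac{\delta H}{\delta[\eta]}}\eta]$, which is precisely \eqref{eq:HamiltonianFlowAsLieAdvection}.

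The main obstacle is bookkeeping of conventions rather than analysis. The Lie algebra of a diffeomorphism group carries the Jacobi--Lie bracket or its negative depending on the left/right-invariant convention, and the sign in the Lie--Poisson bracket must be chosen compatibly; a mismatch flips the sign of the advection term and would produce the wrong (time-reversed) equation. I would therefore pin down all sign conventions once, in agreement with \secref{sec:DualLieAlgArePoisson}, and track them carefully through the antisymmetry step. The single genuinely analytic ingredient is the vanishing of the boundary flux, which is exactly where the divergence-free and no-through hypotheses built into the definition of $\fX_{\div}(W)$ are essential.
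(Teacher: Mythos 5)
Your proposal is correct and follows essentially the same route as the paper's proof in \appref{app:HamiltonianFlowOnXDivStar}: both expand the (minus) Lie--Poisson bracket, use the Leibniz rule for \(\LD\) on the pointwise pairing \(\langle\eta|\cdot\rangle\), kill the total-derivative term via \(\LD_{\vec a}(d\mu)=0\) and the no-through boundary condition, and thereby identify \(\sgrad H|_{[\eta]} = \LD_{\delta H/\delta[\eta]}[\eta]\). Your weak formulation via test functionals \(F\), the explicit Stokes computation of the boundary flux, and the remark that \(\LD_{\vec a}\) preserves \(d\Omega^0(W)\) so the identity descends to the quotient are just slightly more detailed bookkeeping of the same argument.
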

\begin{proof}
    \appref{app:HamiltonianFlowOnXDivStar}.
\end{proof}
Every Hamiltonian flow will lie on a coadjoint orbit (Definition \ref{def:CoadjointOrbits}).  The next proposition characterizes the coadjoint orbits in terms of circulation fields.
\begin{proposition}
\label{prop:CoadjointOrbitOnCirculations}
    Two states \([\eta_0],[\eta_1]\in\fX_{\div}^*(W)\) lie on the same coadjoint orbit if and only if there exists a volume-preserving map \(\varphi\colon W\to W\) (isotopic to the identity map) so that \([\varphi^*\eta_1] = [\eta_0]\).  Here \(\varphi^*\) denotes the pullback operator.
\end{proposition}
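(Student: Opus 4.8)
The plan is to identify the coadjoint action of $\mathrm{SDiff}(W)$ on $\fX_{\div}^*(W)$ explicitly as pullback of circulation fields, and then read off the orbit relation directly from the definition of a coadjoint orbit (\defref{def:CoadjointOrbits}). Throughout I use the identification $\fX_{\div}^*(W) = \Omega^1(W)/d\Omega^0(W)$ of Proposition~\ref{prop:DualSpaceOfDivFree} and the pairing \eqref{eq:CirculationVectorDualPairing}.

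First I would compute the adjoint and coadjoint actions. For a diffeomorphism group the adjoint action of $\varphi\in\mathrm{SDiff}(W)$ on the Lie algebra $\fX_{\div}(W)$ is the pushforward $\vec w\mapsto\varphi_*\vec w$, which is again divergence-free because $\varphi$ is volume-preserving, so it is well-defined on $\fX_{\div}(W)$. Dualizing through \eqref{eq:CirculationVectorDualPairing}, the pointwise identity $(\varphi^*\eta)(\vec w)=\bigl(\eta(\varphi_*\vec w)\bigr)\circ\varphi$, combined with change of variables under the volume-preserving $\varphi$ (so that $\int_W(f\circ\varphi)\,d\mu=\int_W f\,d\mu$), yields $\llangle[\varphi^*\eta]\,|\,\vec w\rrangle=\llangle[\eta]\,|\,\varphi_*\vec w\rrangle$ for every representative $\eta$ and every $\vec w\in\fX_{\div}(W)$. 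Hence the coadjoint action is pullback, $\mathrm{Ad}^*_\varphi[\eta]=[\varphi^*\eta]$. This descends to the quotient because $\varphi^*$ commutes with $d$, so $\varphi^*(d\Omega^0(W))\subseteq d\Omega^0(W)$ and the class $[\varphi^*\eta]$ is independent of the chosen representative.

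With the coadjoint action in hand the proposition becomes a statement about orbits of this action, and the isotopy hypothesis picks out the identity component of $\mathrm{SDiff}(W)$. For the reverse direction, if $\varphi$ is volume-preserving and isotopic to the identity with $[\varphi^*\eta_1]=[\eta_0]$, then $[\eta_0]=\mathrm{Ad}^*_\varphi[\eta_1]$ lies on the coadjoint orbit of $[\eta_1]$ (the orbit is closed under group inversion, so the direction of $\varphi$ is immaterial). For the forward direction I would invoke Proposition~\ref{prop:HamiltonianFlowOnXDivStar}: every Hamiltonian flow solves the Lie-advection equation $\partial_t[\eta]+\LD_{\vec w}[\eta]=[0]$ with $\vec w=\delta H/\delta[\eta]\in\fX_{\div}(W)$, whose solution is $[\eta(t)]=[\varphi_t^*\eta_0]$ for $\varphi_t$ the flow of $\vec w$; since $\vec w$ is divergence-free, each $\varphi_t$ is volume-preserving and $s\mapsto\varphi_s$ is an isotopy to the identity. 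As $H$ varies the derivatives $\delta H/\delta[\eta]$ sweep out all of $\fX_{\div}(W)$, so the Hamiltonian vector fields span the tangent distribution to the orbit; concatenating such flows connects any two points of the same orbit by a composition of volume-preserving pullbacks, i.e.\ by a single volume-preserving $\varphi$ isotopic to the identity.

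The main obstacle is the precise correspondence between the connected coadjoint orbit and the maps isotopic to the identity, namely that every such $\varphi$ arises as a time-$1$ flow (or finite composition of flows) of time-dependent divergence-free fields, and conversely. I would handle the nontrivial direction by taking the given isotopy $\varphi_s$, setting $\vec w_s\coloneqq(\partial_s\varphi_s)\circ\varphi_s^{-1}$, and noting that $\vec w_s$ is divergence-free exactly because each $\varphi_s$ preserves volume; this realizes $\varphi$ as the endpoint of a Lie-advection path and closes the equivalence. Making these manipulations fully rigorous in the infinite-dimensional setting—smoothness of the flow map and the Lie-group structure of $\mathrm{SDiff}(W)$—is delicate, and here I would defer to the standard functional-analytic treatment; at the formal level the identification $\mathrm{Ad}^*_\varphi[\eta]=[\varphi^*\eta]$ is the crux of the argument.
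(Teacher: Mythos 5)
Your proof is correct and takes essentially the same route as the paper's: the paper likewise differentiates a volume-preserving isotopy to obtain a time-dependent divergence-free generator in one direction, and in the other integrates the Lie-advection path \(\partial_t[\eta_t]+\LD_{\vec v(t)}[\eta_t]=[0]\) into a flow map, with your explicit identification \(\mathrm{Ad}^*_\varphi[\eta]=[\varphi^*\eta]\) left implicit there. One cosmetic slip: the solution of the advection equation is \([\eta_t]=[(\varphi_t^{-1})^*\eta_0]\) rather than \([\varphi_t^*\eta_0]\), which merely swaps the roles of \([\eta_0]\) and \([\eta_1]\) and is absorbed by replacing \(\varphi\) with \(\varphi^{-1}\), as you yourself observe.
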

\begin{proof}
    \appref{app:CoadjointOrbitOnCirculations}
\end{proof}
In other words, two circulation data share the same coadjoint orbit if and only if they are related by a volume-preserving transportation with Kelvin circulation preservation.

The significance about this proposition is that the Kelvin circulation theorem, known to be associated to the qualitative statement of ``vorticity conservation'' in Euler flows, is now concretely characterized as that the state should lie on a submanifold, being the coadjoint orbit, which is a structure solely determined by the Poisson structure equipped in every dual Lie algebra (\appref{sec:PoissionSymplectic}).

Furthermore, we can quantitatively detect whether a family of states lie in the same coadjoint orbit.  A Casimir function (\defref{def:Casimir}) is a function on \(\fX_{\div}^*(W)\) that is constant along each coadjoint orbit (using \propref{prop:CoadjointOrbitsCoincideWithSymplecticLeaves}).  
Elements on the same coadjoint orbit will  take the same value for each Casimir function.  
For 2D fluids (\(n=2\)) a basis for the Casimir functions are the \(p\)-th moment of the vorticity function
\begin{align}
    \cW_p([\eta]) \coloneqq \int_W w^p\, d\mu,\quad w = *(d\eta).
\end{align}
In 3D the regular Casimir is the helicity
\begin{align}
    \cH([\eta]) \coloneqq \int_W \eta\wedge\omega,\quad \omega = d\eta.
\end{align}
In particular, these Casimir functions are conserved for any Hamiltonian flow \eqref{eq:HamiltonianFlowAsLieAdvection} on \(\fX^*_{\div}(W)\).

One example for a Hamiltonian flow on \(\fX_{\div}^*(W)\) is the incompressible Euler equation:
\begin{proposition}\label{prop:EulerEquationHamiltonian}
    The incompressible Euler equation is the Hamiltonian flow \((\fX_{\div}^*(W),H_{\rm Euler})\) for the Hamiltonian function
    \begin{align}
    \label{eq:H_Euler}
        H_{\rm Euler}([\eta]) = {1\over 2}\llangle\sharp_{\fX_{\div}}[\eta],\sharp_{\fX_{\div}}[\eta]\rrangle.
    \end{align}
\end{proposition}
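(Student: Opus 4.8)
The plan is to invoke \propref{prop:HamiltonianFlowOnXDivStar}, which already expresses every Hamiltonian flow on $\fX_{\div}^*(W)$ as the Lie-advection equation $\tfrac{\partial}{\partial t}[\eta] + \LD_{\delta H/\delta[\eta]}[\eta] = [0]$. With that in hand, the proposition reduces to two tasks: (i) computing the variational derivative $\delta H_{\rm Euler}/\delta[\eta]$, and (ii) recognizing the resulting equation as the covector Euler equation \eqref{eq:EulerEquationCovector}, which is already stated to be equivalent to \eqref{eq:EulerEquation}.

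For (i), I would set $\vec v \coloneqq \sharp_{\fX_{\div}}[\eta]$ and exploit the linearity of $\sharp_{\fX_{\div}}$. Because $H_{\rm Euler}$ is the quadratic form built from the $L^2$ inner product, a one-line variation gives $\tfrac{d}{d\varepsilon}\big|_{\varepsilon=0} H_{\rm Euler}([\eta]+\varepsilon[\mathring\eta]) = \llangle \sharp_{\fX_{\div}}[\eta],\, \sharp_{\fX_{\div}}[\mathring\eta]\rrangle$. To convert the right-hand side into a dual pairing against $[\mathring\eta]$, I would apply the defining property $\llangle\vec u,\vec w\rrangle = \llangle\flat_{\fX_{\div}}\vec u\,|\,\vec w\rrangle$ with $\vec u = \sharp_{\fX_{\div}}[\mathring\eta]$ and $\vec w = \vec v$, using $\flat_{\fX_{\div}}\sharp_{\fX_{\div}} = \id$ and the symmetry of the inner product. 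This yields $\llangle[\mathring\eta]\,|\,\vec v\rrangle$ for every $[\mathring\eta]$, so by the definition of the variation we conclude $\delta H_{\rm Euler}/\delta[\eta] = \vec v = \sharp_{\fX_{\div}}[\eta]$.

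For (ii), substituting into \propref{prop:HamiltonianFlowOnXDivStar} produces $\tfrac{\partial}{\partial t}[\eta] + \LD_{\vec v}[\eta] = [0]$ with $\vec v = \sharp_{\fX_{\div}}[\eta]\in\fX_{\div}(W)$. The key move is to unpack this identity in the quotient $\Omega^1(W)/d\Omega^0(W)$ by selecting the canonical representative $\eta = \vec v^\flat$, the flat of the divergence-free $\vec v$, which satisfies $[\vec v^\flat]=[\eta]$ by the definition of $\sharp_{\fX_{\div}}$ in \eqref{eq:SharpXDiv}. Vanishing in the quotient means vanishing modulo $d\Omega^0(W)$, so I obtain $\tfrac{\partial}{\partial t}\vec v^\flat + \LD_{\vec v}\vec v^\flat = df$ for some time-dependent scalar $f$, together with the constraint $\vec v\in\fX_{\div}(W)$. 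Setting $f = \tfrac{1}{2}|\vec v|^2 - p$ matches this exactly with \eqref{eq:EulerEquationCovector}, with the incompressibility constraint supplying the pressure $p$ as the Lagrange multiplier; reversing the steps gives the converse, establishing the equivalence.

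The main obstacle is the bookkeeping in the quotient space. I must verify that differentiating the class commutes with choosing a representative, so that $\tfrac{\partial}{\partial t}[\eta] = [\tfrac{\partial}{\partial t}\vec v^\flat]$ for the smoothly varying representative $\vec v^\flat$, and—more importantly—that the gauge freedom $df$ emerging from the quotient is precisely the Bernoulli-pressure term $-d(p-\tfrac{1}{2}|\vec v|^2)$ rather than an unconstrained nuisance. Concretely, requiring $\vec v(t)$ to remain in $\fX_{\div}(W)$ forces $f$ (equivalently $p$) to be determined, up to a harmonic ambiguity, by the pressure-Poisson constraint, so the exact-form slack in the quotient equation is exactly consumed by incompressibility. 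This is the point where the Poisson-geometric statement and the PDE formulation must be carefully reconciled.
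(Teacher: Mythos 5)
Your proposal is correct and follows essentially the same route as the paper's proof: compute the variation \({\delta H_{\rm Euler}\over\delta[\eta]} = \sharp_{\fX_{\div}}[\eta]\), substitute it into the general Lie--Poisson flow of \propref{prop:HamiltonianFlowOnXDivStar}, and identify the resulting advection equation on the quotient with the covector Euler equation \eqref{eq:EulerEquationCovector}. The only difference is that you spell out the quotient-space bookkeeping (choice of representative \(\vec v^\flat\) and absorption of the exact-form gauge slack \(df\) into the Bernoulli pressure via the incompressibility constraint), which the paper leaves implicit in its two-line proof.
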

\begin{proof}
    \appref{app:EulerEquationHamiltonian}
\end{proof}

Next, as we consider a discretized Euler equation, we modify the Hamiltonian function to a lower dimensional function, but we keep the Poisson structure on \(\fX_{\div}^*(W)\) the same.  By doing so, the modified equation will still stay on a coadjoint orbit in the true phase space \(\fX_{\div}^*(W)\) and therefore obey Kelvin's circulation conservation.

\subsection{Discrete Euler Flow}
\label{sec:DiscreteEulerFlow}
Our goal is to develop a finite dimensional (discrete) analog of the Hamiltonian system \((\fX_{\div}^*(W),H_{\rm Euler})\).

Our discretization starts with a finite dimensional subspace in the space of divergence-free vector fields given by
a divergence-consistent interpolation
\begin{align}
    &\cI\colon\fB\hookrightarrow \fX(W),\\
    &\cI\colon\fB_{\div}\hookrightarrow \fX_{\div}(W).
    \label{eq:cIOnBdiv}
\end{align}
Let us go towards replacing \(\fX_{\div}(W)\) with this finite dimensional representation \(\fB_{\div}\).
Like the continuous theory, the dual space \(\fB_{\div}^*\) is the quotient space \(\fB_{\div}^* = \fB^*/\im(\bd^\intercal)\) where \(\bd\) is the discrete divergence operator defining \(\fB_{\div} = \ker(\bd)\).
As described in \eqref{eq:InnerProductOnB}
the embedding induces an inner product structure on \(\fB_{\div}\) denoted by \(\langle\bff,\bg\rangle_{\fB} = \llangle\cI(\bff),\cI(\bg)\rrangle= \bff^\intercal\star\bg\).
Similar to the continuous theory the inner product \(\star\) induces a map \(\sharp_{\fB_{\div}}\colon \fB_{\div}^*\to\fB_{\div}\) given by \(\sharp_{\fB_{\div}} = \bP_{\fB_{\div}}\star^{-1}\), the pressure projection of the metric dual of any representative in \(\fB^*/\im(\bd^\intercal)\).
Using the metric we define the \emph{discrete} Hamiltonian function analogous to the Hamiltonian for the Euler equation:
\begin{definition}
    The \(\cI\)-discrete Hamiltonian for Euler fluid is given by
    \begin{align}
\label{eq:DiscreteHamiltonian}
    H_{\rm D}\colon\fB_{\div}^* \to\RR,\quad H_{\rm D}([\bc])\coloneqq {1\over 2}\langle\sharp_{\fB_{\div}}\bc,\sharp_{\fB_{\div}}\bc\rangle_{\fB}.
\end{align}
\end{definition}
Here \([\bc]\) represents a discrete circulation data.

\subsubsection{Lack of Poisson Structure for \(\fB_{\div}^*\)}
\label{sec:LackOfPoissonStructure}
To construct a discrete analogue of the Hamiltonian system \((\fX_{\div}^*(W),H_{\rm Euler})\), it is tempting to consider a system \((\fB_{\div}^*,H_{\rm D})\) where the phase space is \(\fB_{\div}^*\) on which the discrete Hamiltonian \(H_{\rm D}\) is given.  Unfortunately, 
unlike the continuous theory, the space \(\fB_{\div}\) of divergence-free discrete vector fields is in general not a Lie algebra like \(\fX_{\div}\).
It is unlikely that there exists a Lie algebraically closed set of vector fields suitable for fluid simulation, given that, at least in 2D, all Lie algebras of vector fields have been completely classified \citep{Gonzalez-Lopez:1992:LIE}.
As a consequence, in general, there is no natural Poisson structure for \(\fB_{\div}^*\) to be a phase space.  
The core problem that the space of computational velocities not being Lie algebraically closed is referred to as that the velocity is \emph{non-holonomically constrained} \cite{Mullen:2009:EPI, Pavlov:2011:SPD, Liu:2015:MVFS}.  We discuss the non-holonomicity in the literature of structure-preserving fluid simulation in \appref{app:RelationToNonHolonomicity}.

\subsubsection{Our dynamical system}
The lack of Poisson structure for \(\fB_{\div}^*\) does not prevent us from considering a Hamiltonian system on the larger space \(\fX_{\div}^*(W)\), the original dual Lie algebra of the continuous theory.
The adjoint of the embedding \eqref{eq:cIOnBdiv} is a map of type
\begin{align}
\label{eq:AdjointOfI}
    \cI^\adjoint\colon \fX_{\div}^*(W)\xrightarrow{\rm linear} \fB_{\div}^*.
\end{align}
This allows us to pullback the discrete Hamiltonian function \eqref{eq:DiscreteHamiltonian} from \(\fB_{\div}^*\) to \(\fX_{\div}^*(W)\) by composition
\begin{align}
\label{eq:DiscreteHamiltonianIT}
    (H_{\rm D}\circ\cI^\adjoint)\colon \fX_{\rm div}^*(W)\to\RR.
\end{align}
\begin{definition}
    The \(\cI\)-discrete Euler flow is given by the Hamiltonian system \((\fX_{\div}^*(W),H_{\rm D}\circ\cI^\adjoint)\).
\end{definition}
\begin{theorem}\label{thm:IDiscreteEulerFlow}.
    The equation of motion for the \(\cI\)-discrete Euler flow on \(\fX_{\div}^*(W)\) is given by 
    \begin{subequations}
    \label{eq:IDiscreteEulerEquation}
    \begin{align}
    \label{eq:IDiscreteEulerEquationA}
        &{\partial\over\partial t}[\eta] + \LD_{\vec v}[\eta] = [0],\quad \text{where}\\
    \label{eq:IDiscreteEulerEquationB}
        &\vec v = \cI\sharp_{\fB_{\div}}\cI^\adjoint[\eta] = \cI\bP_{\fB_{\div}}\star^{-1}\cI^\adjoint\eta.
    \end{align}
    \end{subequations}
\end{theorem}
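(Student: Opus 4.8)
The plan is to apply Proposition~\ref{prop:HamiltonianFlowOnXDivStar} directly. By definition the $\cI$-discrete Euler flow is the Hamiltonian system $(\fX_{\div}^*(W), H_{\rm D}\circ\cI^\adjoint)$, so \eqref{eq:HamiltonianFlowAsLieAdvection} instantly gives the Lie-advection form $\tfrac{\partial}{\partial t}[\eta] + \LD_{\vec v}[\eta] = [0]$ of \eqref{eq:IDiscreteEulerEquationA} with $\vec v = \tfrac{\delta (H_{\rm D}\circ\cI^\adjoint)}{\delta[\eta]}$. The whole theorem therefore reduces to computing this single variational derivative and identifying it with $\cI\sharp_{\fB_{\div}}\cI^\adjoint[\eta]$, which is exactly \eqref{eq:IDiscreteEulerEquationB}.

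First I would differentiate through the composition. Since $\cI^\adjoint$ is linear, the chain rule gives, for any variation $[\mathring\eta]\in\fX_{\div}^*(W)$, that $\tfrac{d}{d\varepsilon}\big|_{\varepsilon=0} H_{\rm D}(\cI^\adjoint([\eta]+\varepsilon[\mathring\eta]))$ is the $H_{\rm D}$-variation evaluated at the point $[\bc]\coloneqq\cI^\adjoint[\eta]\in\fB_{\div}^*$ in the direction $\cI^\adjoint[\mathring\eta]$. So the first sub-step is to compute $\tfrac{\delta H_{\rm D}}{\delta[\bc]}$. Writing $H_{\rm D}([\bc]) = \tfrac12\langle\sharp_{\fB_{\div}}\bc,\sharp_{\fB_{\div}}\bc\rangle_{\fB}$ and differentiating produces $\langle\sharp_{\fB_{\div}}\bc,\sharp_{\fB_{\div}}\mathring\bc\rangle_{\fB}$; the symmetry of $\langle\cdot,\cdot\rangle_{\fB}$ together with the fact that $\sharp_{\fB_{\div}}$ is the inverse of the flat isomorphism $\fB_{\div}\to\fB_{\div}^*$ induced by this inner product collapses this to the dual pairing of $[\mathring\bc]$ with $\sharp_{\fB_{\div}}\bc$. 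Hence $\tfrac{\delta H_{\rm D}}{\delta[\bc]} = \sharp_{\fB_{\div}}\bc\in\fB_{\div}$, in exact analogy with how $\sharp_{\fX_{\div}}[\eta]$ arises as the variation of $H_{\rm Euler}$ in Proposition~\ref{prop:EulerEquationHamiltonian}.

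Next I would transport this back to $\fX_{\div}^*(W)$ using the defining property of the adjoint~\eqref{eq:AdjointOfI}: for every $[\mathring\eta]\in\fX_{\div}^*(W)$ and every $\bff\in\fB_{\div}$, the map $\cI^\adjoint$ satisfies $\langle\cI^\adjoint[\mathring\eta],\bff\rangle = \llangle[\mathring\eta]\mid\cI\bff\rrangle$, where the left side is the dual pairing between $\fB_{\div}^*$ and $\fB_{\div}$ and the right side is the continuous pairing~\eqref{eq:CirculationVectorDualPairing}. Applying this with $\bff = \sharp_{\fB_{\div}}\cI^\adjoint[\eta]$ turns $\langle\cI^\adjoint[\mathring\eta],\sharp_{\fB_{\div}}\cI^\adjoint[\eta]\rangle$ into $\llangle[\mathring\eta]\mid\cI\sharp_{\fB_{\div}}\cI^\adjoint[\eta]\rrangle$. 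Matching this against the definition of the variational derivative identifies $\vec v = \cI\sharp_{\fB_{\div}}\cI^\adjoint[\eta]$, which lies in $\fX_{\div}(W)$ by \eqref{eq:cIOnBdiv}; substituting $\sharp_{\fB_{\div}} = \bP_{\fB_{\div}}\star^{-1}$ and a representative $\eta$ then yields the second equality in \eqref{eq:IDiscreteEulerEquationB}.

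The algebra is short; the care required — and the step I expect to be the main obstacle — is well-definedness on the quotient spaces $\fX_{\div}^*(W)=\Omega^1(W)/d\Omega^0(W)$ and $\fB_{\div}^*=\fB^*/\im(\bd^\intercal)$. I would check that each pairing is insensitive to the choice of representative: on the continuous side, $\cI$ mapping into $\fX_{\div}(W)$ forces $\cI\bff$ to pair to zero against exact forms (integration by parts plus the no-through boundary condition), so $\llangle[\mathring\eta]\mid\cI\bff\rrangle$ descends to the quotient; on the discrete side, $\sharp_{\fB_{\div}}=\bP_{\fB_{\div}}\star^{-1}$ must annihilate $\im(\bd^\intercal)$ and thus be well-defined on $\fB^*/\im(\bd^\intercal)$. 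Once these compatibilities are confirmed, the chain of equalities is unambiguous and Proposition~\ref{prop:HamiltonianFlowOnXDivStar} closes the argument.
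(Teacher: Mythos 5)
Your proposal is correct and follows the same overall route as the paper's proof: both instantiate \propref{prop:HamiltonianFlowOnXDivStar} and reduce the theorem to computing \(\tfrac{\delta(H_{\rm D}\circ\cI^\adjoint)}{\delta[\eta]}\). The difference is in how that variational derivative is computed. The paper substitutes \(\sharp_{\fB_{\div}} = \bP_{\fB_{\div}}\star^{-1}\) at the outset, differentiates the resulting explicit quadratic form to get \(\cI\star^{-1}\bP_{\fB_{\div}}^\intercal\star\,\bP_{\fB_{\div}}\star^{-1}\cI^\adjoint[\eta]\), and then simplifies via the self-adjointness identity \(\star^{-1}\bP_{\fB_{\div}}^\intercal\star = \bP_{\fB_{\div}}\) together with idempotence \(\bP_{\fB_{\div}}^2 = \bP_{\fB_{\div}}\). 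You instead factor the chain rule through \(\fB_{\div}^*\), obtaining \(\tfrac{\delta H_{\rm D}}{\delta[\bc]} = \sharp_{\fB_{\div}}\bc\) in exact parallel with the continuous computation of \propref{prop:EulerEquationHamiltonian}, and then transport back with the defining property of the adjoint \(\cI^\adjoint\). The two are equivalent, but note that your step ``\(\sharp_{\fB_{\div}}\) is the inverse of the flat isomorphism induced by \(\langle\cdot,\cdot\rangle_{\fB}\)'' is precisely where the paper's projection identities are hiding: verifying \(\flat_{\fB_{\div}}\sharp_{\fB_{\div}} = \id\) on \(\fB^*/\im(\bd^\intercal)\) amounts to checking \(\star\bP_{\fB_{\div}}\star^{-1}\bc - \bc \in \im(\bd^\intercal)\), a one-line consequence of \eqref{eq:DiscretePressureProjection}, so you should carry out that verification rather than treat it as definitional. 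What each approach buys: yours is coordinate-free and makes the discrete derivation visibly mirror the continuous one; the paper's is a direct matrix computation with no appeal to the duality structure. Your closing well-definedness checks on the quotient spaces are sound, and they are consistent with how the paper handles these points in its setup (the representative-independence remark after \eqref{eq:CirculationVectorDualPairing}, and the description of \(\sharp_{\fB_{\div}}\) as the pressure projection of the metric dual of any representative), so they complete rather than patch the argument.
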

\begin{proof}
    \appref{app:IDiscreteEulerFlow}.
\end{proof}
\begin{corollary}
    The \(\cI\)-discrete Euler flow preserves the energy \(H_{\rm D}\circ\cI^\adjoint\) and the coadjoint orbits in \(\fX_{\div}^*(W)\).
\end{corollary}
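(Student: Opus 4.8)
The plan is to recognize that this Corollary is not special to the particular Hamiltonian $H \coloneqq H_{\rm D}\circ\cI^\adjoint$: both conservation laws are generic features of \emph{any} Hamiltonian flow on the dual Lie algebra $\fX_{\div}^*(W)$ with its Lie--Poisson structure, and the $\cI$-discrete Euler flow is, by its very definition, such a flow. So the proof reduces to invoking two standard facts from the Lie--Poisson background (\appref{app:Preliminaries}), specialized to this $H$. The only preliminary to check is that $H = H_{\rm D}\circ\cI^\adjoint$ is a well-defined smooth function on $\fX_{\div}^*(W)$, which is immediate since $\cI^\adjoint$ of \eqref{eq:AdjointOfI} is linear and $H_{\rm D}$ of \eqref{eq:DiscreteHamiltonian} is a quadratic form; in particular its variational derivative exists and, comparing \propref{prop:HamiltonianFlowOnXDivStar} with Theorem~\ref{thm:IDiscreteEulerFlow}, equals $\tfrac{\delta H}{\delta[\eta]} = \vec v = \cI\sharp_{\fB_{\div}}\cI^\adjoint[\eta]$.

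For energy conservation I would differentiate $H$ along a trajectory of \eqref{eq:IDiscreteEulerEquationA}. Using the definition of the variational derivative and then the flow equation $\partial_t[\eta] = -\LD_{\vec v}[\eta]$,
\[
\frac{d}{dt}H([\eta]) = \left\llangle \partial_t[\eta]\,\middle|\,\tfrac{\delta H}{\delta[\eta]}\right\rrangle = -\left\llangle \LD_{\vec v}[\eta]\,\middle|\,\vec v\right\rrangle.
\]
This is exactly the Lie--Poisson bracket $\{H,H\}$ evaluated at $[\eta]$. Invoking the defining pairing between the coadjoint action and the Lie bracket, $\llangle \LD_{\vec a}[\eta]\mid\vec b\rrangle = \llangle[\eta]\mid[\vec a,\vec b]\rrangle$ (up to the sign convention of the bracket, which is immaterial here), with $\vec a=\vec b=\vec v$ we get $-\llangle[\eta]\mid[\vec v,\vec v]\rrangle = 0$ since $[\vec v,\vec v]=0$. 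Hence $H$ is constant in time.

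For coadjoint-orbit preservation I would appeal to the structural statement already recalled in the Background: on a dual Lie algebra the coadjoint orbits (\defref{def:CoadjointOrbits}) coincide with the symplectic leaves of the Lie--Poisson structure (\propref{prop:CoadjointOrbitsCoincideWithSymplecticLeaves}), and every Hamiltonian vector field is tangent to its symplectic leaf. Since the $\cI$-discrete Euler flow is a Hamiltonian vector field on $\fX_{\div}^*(W)$ with the \emph{unchanged} Lie--Poisson bracket---only the Hamiltonian was replaced---its trajectory $[\eta(t)]$ stays within a single coadjoint orbit. By \propref{prop:CoadjointOrbitOnCirculations} this is precisely the statement that $[\eta(t)]$ and $[\eta(0)]$ remain related by a volume-preserving diffeomorphism (Kelvin's circulation theorem), and consequently every Casimir (\defref{def:Casimir})---the moments $\cW_p$ in 2D and the helicity $\cH$ in 3D---is conserved.

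The point to emphasize, and it is conceptual rather than computational, is that replacing the continuous Hamiltonian $H_{\rm Euler}$ by the lower-dimensional $H_{\rm D}\circ\cI^\adjoint$ does \emph{not} disturb the Poisson geometry of $\fX_{\div}^*(W)$: coadjoint orbits and Casimirs are determined solely by the Lie--Poisson bracket, never by the choice of Hamiltonian. Once this is made explicit there is no genuine obstacle---the energy identity is a one-line antisymmetry argument and the orbit statement is a direct citation of the general theory. The discretization enters only through the generator $\vec v = \cI\bP_{\fB_{\div}}\star^{-1}\cI^\adjoint\eta$ of \eqref{eq:IDiscreteEulerEquationB}, which is confined to the finite-dimensional image $\cI(\fB_{\div})$ yet still drives an exact Lie advection on the true infinite-dimensional phase space, thereby inheriting both invariants exactly.
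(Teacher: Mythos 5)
Your proposal is correct and follows essentially the same route as the paper, which states this corollary without separate proof precisely because it is an instance of the generic Lie--Poisson facts recalled in \appref{app:Preliminaries}: the flow is Hamiltonian on \(\fX_{\div}^*(W)\) with the unchanged Lie--Poisson bracket, so energy conservation follows from antisymmetry (\(\{H,H\}=0\), exactly your pairing computation with \([\vec v,\vec v]=0\)) and orbit preservation from Hamiltonian vector fields being tangent to symplectic leaves, which coincide with coadjoint orbits (\propref{prop:CoadjointOrbitsCoincideWithSymplecticLeaves}). Your emphasis that only the Hamiltonian, not the Poisson structure, was modified is exactly the paper's intended point.
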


\begin{figure}
\centering
\setlength{\unitlength}{1pt}
\begin{picture}(240,110)
\put(0,0){
\includegraphics[width=100pt]{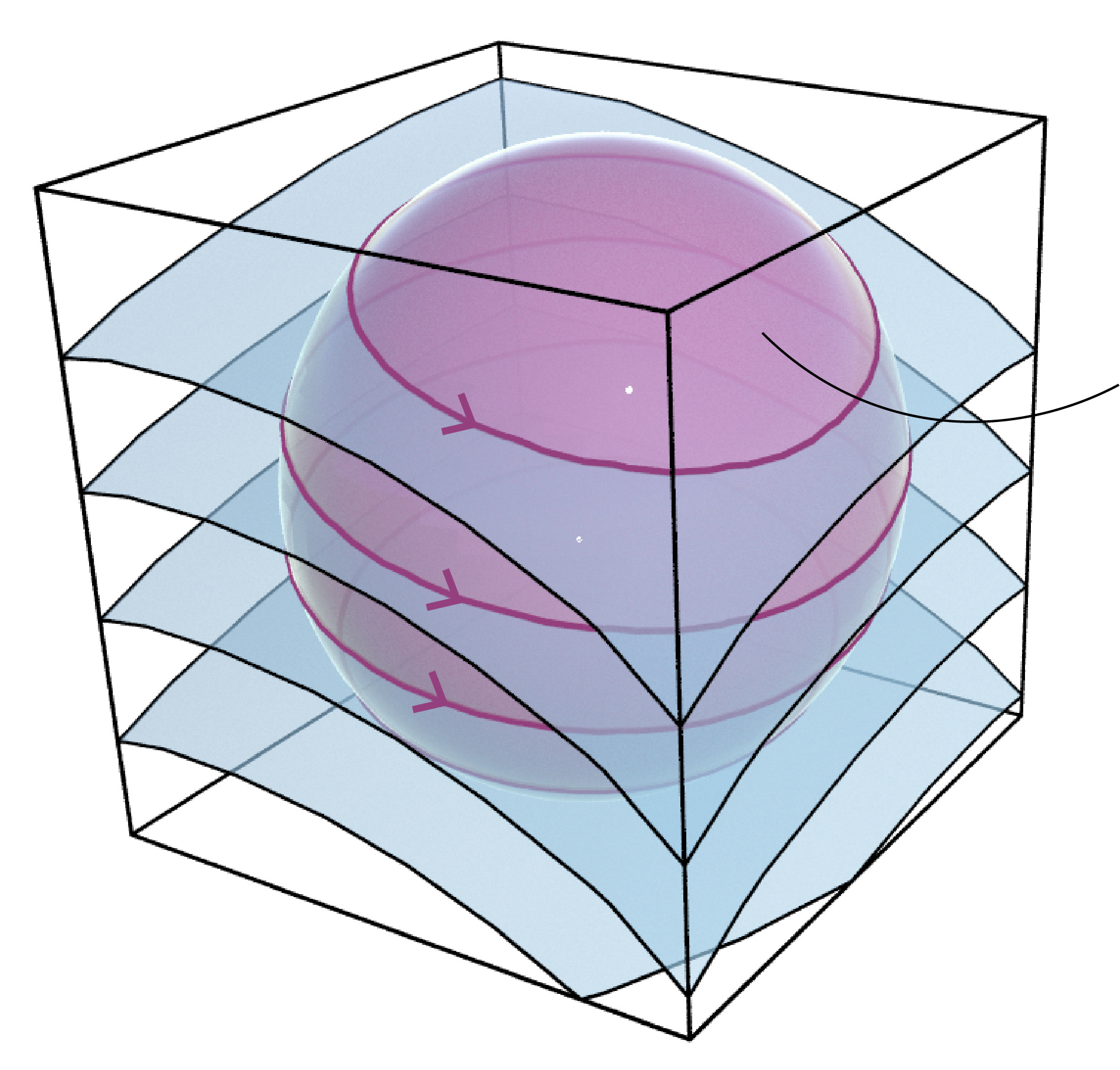}
}
\put(120, 0){
\includegraphics[width=100pt]{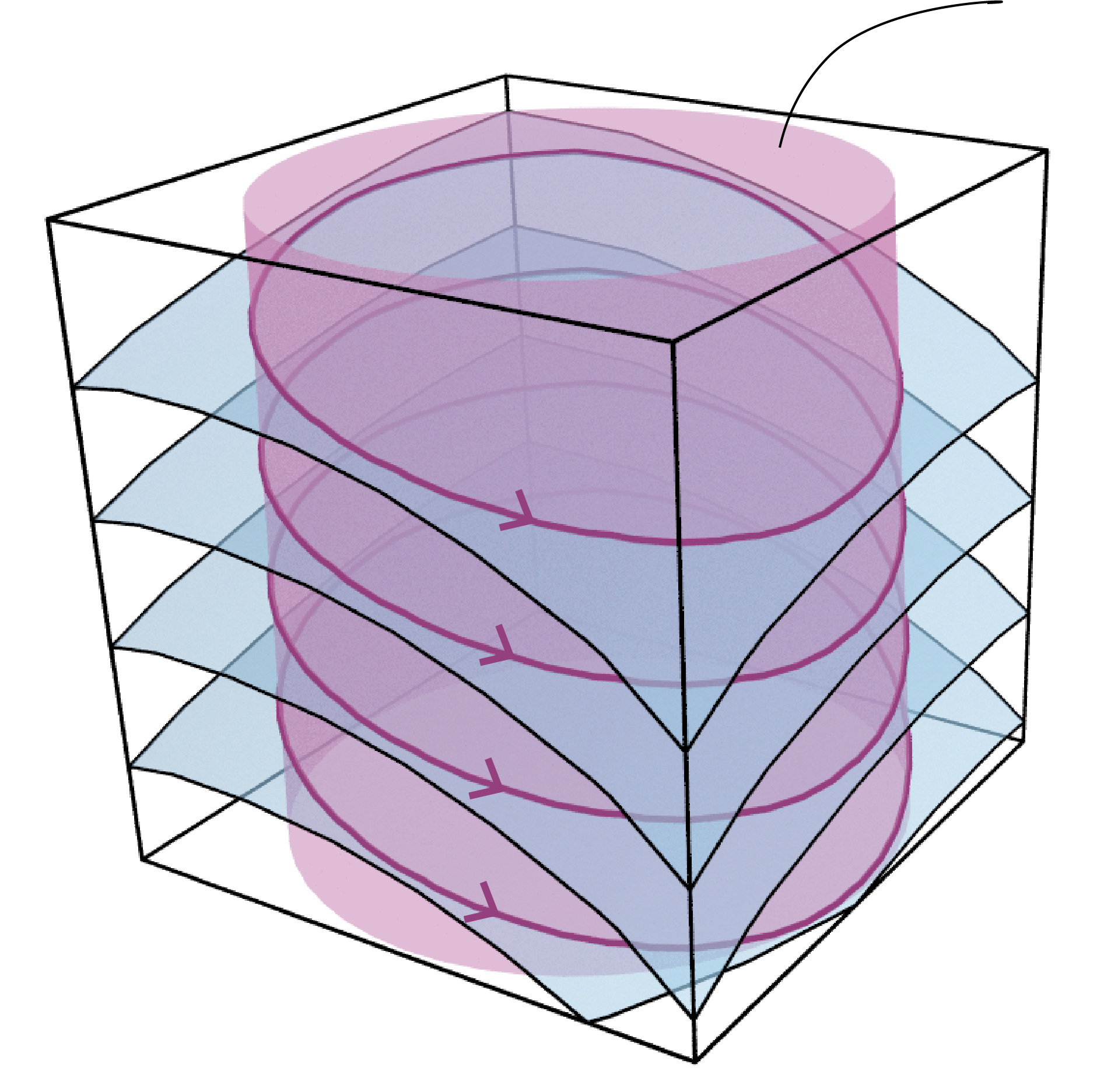}
}
\put(212,95){ $H_{\text{D}} \circ \cI^\adjoint$}
\put(100,63){ $H_{\text{Euler}}$}
\put(50,-5){ \(\fX_{\div}^*(W)\)}
\put(180,-5){ \(\fX_{\div}^*(W)\)}
\put(88,15){\parbox{20pt}{\linespread{0.6} \footnotesize \sffamily coadjoint orbits}}
\put(88,15){\line(-1,2){10pt}}
\put(218,40){\parbox{20pt}{\linespread{0.6} \footnotesize \sffamily coadjoint orbits}}
\put(218,47){\line(-1,2){10pt}}
\end{picture}
\caption{Left: The Hamiltonian for Euler fluid $H_{\text{Euler}}$ on $\fX_{\div}^*(W)$. Right: The pullback of \(\cI\)-discrete Hamiltonian for Euler fluid $H_{\text{D}}$ on $\fX_{\div}^*(W)$. 
Note that Euler flow and the \(\cI\)-discrete Euler flow stay on the same coadjoint orbits (illustrated in blue).
Their intersections with the coadjoint orbits of $\fX_{\div}^*(W)$ are marked with pink lines with arrows.
}
\label{fig:hamiltonian}
\end{figure}

The \(\cI\)-discrete Euler flow only modifies the continuous Euler flow by changing the Hamiltonian function from \(H_{\rm Euler}\) to \(H_{\rm D}\circ\cI^\adjoint\), a function of ``less complexity'' filtered through a projection \(\cI^\adjoint\) to a finite dimensional space \(\fB_{\div}^*\). However, the \(\cI\)-discrete Euler equation expressed for elements \([\eta](t)\in\fX_{\div}^*(W)\) is still a continuous PDE \eqref{eq:IDiscreteEulerEquation} (see \figref{fig:hamiltonian}). How can we compute solutions to such PDEs?
Next, we describe a technique using momentum maps to answer this question.
The procedure leads to an interpretation of the \(\cI\)-discrete Euler flow in a manner closer to a Lagrangian--Eulerian hybrid simulation algorithm.

\subsection{Method of Characteristics using an Auxiliary Symplectic Space}
\label{sec:AuxiliarySymplecticSpace}
We now introduce a technique that states that we can emulate the Hamiltonian system \((\fX_{\div}^*(W),H_{\rm D}\circ\cI^\adjoint)\) on an auxiliary symplectic space (See \ref{def:SymplecticManifold} on which the dynamical system may be easier to compute).
A special case of this technique is the method of characteristics for advective PDEs, leading to the hybrid formulation.
Readers less familiar with  Symplectic Manifolds could refer to the appendix \secref{sec:SymplecticManifoldsArePoisson} for a brief introduction.

Note that \(\fX_{\div}(W)\) is the Lie algebra (tangent space at the identity) of 
    the Lie group of volume-preserving diffeomorphisms on \(W\) 
    \begin{align*}
        \SDiff(W) = \{\varphi\in\cC^1(W;W)\,|\, \operatorname{vol} U = \operatorname{vol} \varphi(U) \text{ for all } U \subset W.\}
    \end{align*}

Consider any symplectic manifold \((\Sigma,\sigma)\) that is acted symplectomorphically by the Lie group \(\SDiff(W)\).
That is, there is a group action, which we call \textbf{advection}, on \(\Sigma\)
\begin{align}
    \Adv\colon\SDiff(W)\times\Sigma\to\Sigma
\end{align}
that preserves the symplectic 2-form \(\sigma\in\Omega^2(\Sigma)\)
\begin{align}
    (\Adv_\varphi)^*\sigma = \sigma,\quad\text{for all \(\varphi\in\SDiff(W)\).}
\end{align}
Here \((\Adv_\varphi)^*\) is the pullback operator by the map \(\Adv_{\varphi}\colon\Sigma\to\Sigma\).
By considering the infinitesimal actions of \(\Adv\), we obtain an induced Lie algebra (anti)homomorphism between spaces of vector fields
\begin{align}
\label{eq:advLieAlgHom}
    \adv\colon\fX_{\div}(W)\xrightarrow{\substack{\text{Lie alg.}\\\text{(anti)hom}}}\fX_{\sigma}(\Sigma)
\end{align}
where \(\fX_{\sigma}(\Sigma)\) is the Lie algebra of symplectic-form-preserving vector fields \(\fX_{\sigma}(\Sigma)\coloneqq\{\vec X\in\fX(\Sigma)\,|\,\LD_{\vec X}\sigma = 0\}\).
The symplectomorphic action of \(\adv\) induces a \textbf{momentum map} (\exref{ex:MomentumMapForSymplectomorphicGroupAction})
\begin{align}
    J_{\adv}\colon \Sigma \to \fX_{\div}^*(W),
\end{align}
which, by \propref{prop:MomentumMapsArePoisson}, is an (anti-)Poisson map (\defref{def:PoissonMap}).  In other words, \(\Sigma\) and \(\fX_{\div}^*(W)\) share the same Poisson structure through the change of variables of \(J_{\adv}\).  Therefore, the Hamiltonian system \((\fX_{\div}^*(W),H_{\rm D}\circ\cI^\adjoint)\) is equivalent to \((\Sigma,H_{\rm D}\circ\cI^\adjoint\circ J_{\adv})\) in the following sense.
\begin{definition}
    \(H_{\Sigma}\coloneqq (H_{\rm D}\circ\cI^\adjoint\circ J_{\adv})\colon\Sigma\to\RR\).
\end{definition}
\begin{theorem}\label{thm:SigmaEmulates}
    If a path \(s\colon[0,T]\to\Sigma\) is a solution of the Hamiltonian dynamical system \((\Sigma,H_\Sigma)\), then \((J_{\adv}\circ s)\colon[0,T]\to\fX_{\div}^*(W)\) is a solution to the \(\cI\)-discrete Euler flow \((\fX_{\div}^*(W), H_{\rm D}\circ\cI^\adjoint)\).
\end{theorem}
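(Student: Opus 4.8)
The plan is to derive the theorem from a single structural input---that the momentum map $J_{\adv}$ is an (anti-)Poisson map (\propref{prop:MomentumMapsArePoisson})---together with the general principle that such maps carry Hamiltonian trajectories to Hamiltonian trajectories. Writing $H\coloneqq H_{\rm D}\circ\cI^\adjoint$ for the Hamiltonian of the $\cI$-discrete Euler flow on $\fX_{\div}^*(W)$, the definition of $H_\Sigma$ is exactly $H_\Sigma = H\circ J_{\adv}$. Thus the assertion is precisely that the Hamiltonian vector fields $X_{H\circ J_{\adv}}$ on $\Sigma$ and $X_H$ on $\fX_{\div}^*(W)$ are $J_{\adv}$-related, i.e.\ that $J_{\adv}$ maps integral curves of the former to integral curves of the latter. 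The only facts I would use are \defref{def:PoissonMap} (unwinding what it means for $J_{\adv}$ to intertwine the Poisson bracket $\{\cdot,\cdot\}_\Sigma$ induced by $\sigma$ with the Lie--Poisson bracket $\{\cdot,\cdot\}_*$ on $\fX_{\div}^*(W)$) and the identification of $X_H$ with the $\cI$-discrete Euler flow supplied by Theorem~\ref{thm:IDiscreteEulerFlow}.

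The computation I would carry out is to test the image curve against an arbitrary observable. Let $s$ solve $(\Sigma,H_\Sigma)$ and set $m\coloneqq J_{\adv}\circ s$. For any smooth $g\colon\fX_{\div}^*(W)\to\RR$, I differentiate $g\circ m$ along the flow, using first the defining equation of the Hamiltonian system on $\Sigma$ and then the (anti-)Poisson identity of $J_{\adv}$:
\begin{align}
\label{eq:SigmaEmulatesProof}
  \tfrac{d}{dt}\,g(m(t))
  = \{\,g\circ J_{\adv},\,H\circ J_{\adv}\,\}_\Sigma\big(s(t)\big)
  = \{\,g,\,H\,\}_*\big(m(t)\big).
\end{align}
Since $g$ is arbitrary, \eqref{eq:SigmaEmulatesProof} says precisely that $m$ is an integral curve of $X_H$ on $\fX_{\div}^*(W)$, which by the definition of the $\cI$-discrete Euler flow (and its explicit equation of motion \eqref{eq:IDiscreteEulerEquation}) is exactly the statement that $J_{\adv}\circ s$ solves the $\cI$-discrete Euler flow.

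The main obstacle is bookkeeping the sign in the second equality of \eqref{eq:SigmaEmulatesProof}. Because $J_{\adv}$ arises as a momentum map (\exref{ex:MomentumMapForSymplectomorphicGroupAction}), it is Poisson or anti-Poisson depending on the orientation conventions chosen for the Lie--Poisson bracket on $\fX_{\div}^*(W)$ and for the sign in the Hamiltonian vector field; an incompatible pairing of conventions would replace the right-hand side by $-\{g,H\}_*$ and hence produce a time-reversed flow. The resolution I plan to adopt is to fix the ``$\pm$'' Lie--Poisson sign so that it matches the (anti-)homomorphism sign of $\adv$ recorded in \eqref{eq:advLieAlgHom}, so that the two choices cancel and \eqref{eq:SigmaEmulatesProof} holds with the plus sign as written; this is the convention under which \propref{prop:MomentumMapsArePoisson} is stated. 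Once the conventions are pinned down consistently, the remaining content is the routine chain-rule manipulation above, requiring nothing beyond smoothness of $H$ along the orbit.
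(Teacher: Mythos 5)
Your proposal is correct and takes essentially the same route as the paper: the paper's proof of Theorem~\ref{thm:SigmaEmulates} is a one-line invocation of \propref{prop:PoissonMapHamiltonianSystem} applied to the (anti-)Poisson map \(J_{\adv}\) (\propref{prop:MomentumMapsArePoisson}), and your observable/chain-rule computation is exactly the standard argument behind that cited proposition, merely spelled out. Your resolution of the sign bookkeeping also matches the paper's conventions, which use the \emph{minus} Lie--Poisson bracket (cf.\ the proof of \propref{prop:HamiltonianFlowOnXDivStar}) so that the anti-homomorphism sign of \(\adv\) cancels and \(J_{\adv}\circ s\) solves the flow forward in time rather than reversed.
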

\begin{proof}
    \appref{app:SigmaEmulates}.
\end{proof}
In other words, we can solve the \(\cI\)-discrete Euler equation  \eqref{eq:IDiscreteEulerEquation} by solving the following equation of motion for \(s(t)\in\Sigma\).
\begin{theorem}\label{thm:IDiscreteEulerEquationSigma}
    The equation of motion for the Hamiltonian system \((\Sigma,H_\Sigma)\) is given by
    \begin{subequations}
    \label{eq:IDiscreteEulerEquationSigma}
        \begin{numcases}{}
        \label{eq:IDiscreteEulerEquationSigmaA}
            \textstyle{d\over dt}s(t) = \adv_{\vec v(t)}s(t),\\
            \label{eq:IDiscreteEulerEquationSigmaB}
            \vec v(t) = \cI\bP_{\fB_{\div}}\star^{-1}\cI^\adjoint J_{\adv}s(t).
        \end{numcases}
    \end{subequations}
\end{theorem}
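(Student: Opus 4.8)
The plan is to invoke the \emph{collective Hamiltonian} principle: the Hamiltonian vector field of a function of the form $h\circ J_{\adv}$ on $\Sigma$ coincides with the infinitesimal generator $\adv_\xi$, where $\xi$ is the variational derivative of $h$ evaluated at the current momentum $J_{\adv}s$. Once this is established, identifying $\xi$ with the velocity field $\vec v$ is immediate from \thmref{thm:IDiscreteEulerFlow}. First I would write Hamilton's equation on $(\Sigma,\sigma)$ as $\tfrac{d}{dt}s = X_{H_\Sigma}(s)$, where the Hamiltonian vector field $X_{H_\Sigma}$ is determined by $\iota_{X_{H_\Sigma}}\sigma = dH_\Sigma$. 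Since $H_\Sigma = h\circ J_{\adv}$ with $h \coloneqq H_{\rm D}\circ\cI^\adjoint\colon\fX_{\div}^*(W)\to\RR$, the chain rule together with the defining relation of the variational derivative of $h$ on the dual Lie algebra gives, for every tangent vector $\delta s\in T_s\Sigma$,
\[
dH_\Sigma|_s(\delta s) = \left\llangle DJ_{\adv}|_s(\delta s)\,\middle|\,\xi\right\rrangle,\qquad \xi \coloneqq \tfrac{\delta h}{\delta[\eta]}\Big|_{[\eta]=J_{\adv}s}\in\fX_{\div}(W).
\]

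Second I would freeze $\xi$ at the base point $s$ and compare $dH_\Sigma|_s$ with the differential of the $\xi$-component of the momentum map, $\hat J_\xi \coloneqq \left\llangle J_{\adv}(\cdot)\,\middle|\,\xi\right\rrangle\colon\Sigma\to\RR$. Because $\xi$ is held constant, $d\hat J_\xi|_s(\delta s) = \left\llangle DJ_{\adv}|_s(\delta s)\,\middle|\,\xi\right\rrangle = dH_\Sigma|_s(\delta s)$, so $dH_\Sigma|_s = d\hat J_\xi|_s$ as covectors at $s$. The defining property of the momentum map $J_{\adv}$ (\exref{ex:MomentumMapForSymplectomorphicGroupAction}) is precisely $\iota_{\adv_\xi}\sigma = d\hat J_\xi$; combining the two identities yields $\iota_{X_{H_\Sigma}}\sigma|_s = \iota_{\adv_\xi}\sigma|_s$, and nondegeneracy of $\sigma$ forces $X_{H_\Sigma}(s) = \adv_\xi(s)$. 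Hence $\tfrac{d}{dt}s = \adv_\xi s$, which is the form of \eqref{eq:IDiscreteEulerEquationSigmaA} once $\xi$ is identified.

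Third I would carry out that identification. By \propref{prop:HamiltonianFlowOnXDivStar} and the computation already performed in \thmref{thm:IDiscreteEulerFlow}, the variational derivative of $h = H_{\rm D}\circ\cI^\adjoint$ at $[\eta]$ is exactly $\tfrac{\delta h}{\delta[\eta]} = \cI\sharp_{\fB_{\div}}\cI^\adjoint[\eta] = \cI\bP_{\fB_{\div}}\star^{-1}\cI^\adjoint\eta$. Evaluating this at $[\eta] = J_{\adv}s(t)$ gives $\xi = \vec v(t) = \cI\bP_{\fB_{\div}}\star^{-1}\cI^\adjoint J_{\adv}s(t)$, which is \eqref{eq:IDiscreteEulerEquationSigmaB}, and the equation of motion becomes $\tfrac{d}{dt}s(t) = \adv_{\vec v(t)}s(t)$, completing the proof.

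The main obstacle is the second step: one must justify carefully that the base-point dependence of $\xi = \tfrac{\delta h}{\delta[\eta]}|_{J_{\adv}s}$ does not contribute to $dH_\Sigma|_s$, i.e.\@ that to first order only the linearization $DJ_{\adv}$ is seen, so $\xi$ may legitimately be frozen when comparing with $d\hat J_\xi$. Equal care is needed with the sign convention arising from $\adv$ being a Lie algebra \emph{anti}homomorphism \eqref{eq:advLieAlgHom}, which propagates into the sign of the momentum-map identity $\iota_{\adv_\xi}\sigma = d\hat J_\xi$; here I would rely on the precise conventions fixed in \exref{ex:MomentumMapForSymplectomorphicGroupAction} and the (anti-)Poisson statement of \propref{prop:MomentumMapsArePoisson} to ensure the signs are consistent with those in \thmref{thm:IDiscreteEulerFlow} and \thmref{thm:SigmaEmulates}.
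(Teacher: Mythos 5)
Your proposal is correct and takes essentially the same route as the paper's proof: chain-rule the composition \(H_\Sigma = (H_{\rm D}\circ\cI^\adjoint)\circ J_{\adv}\), identify \(\xi=\vec v\) via \eqref{eq:VariationOfDiscreteHamiltonian}, pass through the momentum-map identity, and conclude by nondegeneracy of \(\sigma\); the ``freezing'' of \(\xi\) you flag as the main obstacle is automatic from the chain rule, which is exactly how the paper writes \(dH_\Sigma = \llangle dJ_{\adv}\,|\,\vec v\rrangle\). One caution on your sign bookkeeping: in the paper's conventions \emph{both} identities carry a minus --- \eqref{eq:MomentumMapForSymplectomorphicAction} reads \(-d\llangle J_{\adv}(\cdot)|\vec w\rrangle = \ip_{\adv_{\vec w}(s)}\sigma\) and Hamilton's equation (\defref{def:HamiltonianDynamicalSystem}) is \(\tfrac{d}{dt}s = -\sgrad H_\Sigma\) --- so your unsigned pair \(\ip_{\adv_\xi}\sigma = d\hat J_\xi\) together with \(\tfrac{d}{dt}s = X_{H_\Sigma}\) reaches the right answer only because the two sign flips cancel, and as written the former contradicts the very convention from \exref{ex:MomentumMapForSymplectomorphicGroupAction} you say you would rely on.
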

\begin{proof}
    \appref{app:IDiscreteEulerEquationSigma}.
\end{proof}
\begin{corollary}
\label{cor:JsOnCoadjointOrbit}
    Any solution \(s\) to \eqref{eq:IDiscreteEulerEquationSigma} has the property that \(J_{\adv}s(t)\) and \(J_{\adv}s(0)\) lie on the same coadjoint orbit in \(\fX_{\div}^*(W)\)  for all \(t\).
\end{corollary}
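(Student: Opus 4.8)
The plan is to derive the corollary as an immediate consequence of two results already in hand: \thmref{thm:SigmaEmulates}, which tells us that \((J_{\adv}\circ s)(t)\) solves the \(\cI\)-discrete Euler flow, and the general fact that a Hamiltonian flow on the dual Lie algebra \(\fX_{\div}^*(W)\) cannot leave a coadjoint orbit. First I would set \([\eta](t)\coloneqq J_{\adv}s(t)\) and invoke \thmref{thm:SigmaEmulates} to conclude that \([\eta](t)\) is a genuine solution of the \(\cI\)-discrete Euler equation \eqref{eq:IDiscreteEulerEquation}. The statement then reduces to showing that \([\eta](t)\) and \([\eta](0)\) lie on a common coadjoint orbit.

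For the reduced statement the cleanest route is the standard Poisson-geometric principle: the Hamiltonian vector field of any function takes values in the image of the Poisson bivector, which at each point coincides with the tangent space of the symplectic leaf through that point, so the flow is everywhere tangent to the leaves and an integral curve stays on the leaf where it starts. Since the symplectic leaves of the Lie--Poisson structure on \(\fX_{\div}^*(W)\) are exactly the coadjoint orbits (\propref{prop:CoadjointOrbitsCoincideWithSymplecticLeaves}), this is precisely the orbit-preservation already asserted in the corollary following \thmref{thm:IDiscreteEulerFlow}; applying it to the Hamiltonian \(H_{\rm D}\circ\cI^\adjoint\) along the curve \([\eta](t)\) finishes the argument.

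As a concrete cross-check I would also verify the claim directly through the method of characteristics on the advective form \eqref{eq:IDiscreteEulerEquationA}. Because that equation is the Lie-advection \(\partial_t[\eta]+\LD_{\vec v}[\eta]=[0]\), its solution is obtained from the initial data by pullback along the flow \(\varphi_t\) generated by the (time-dependent) velocity \(\vec v(t)\), giving \([\eta](t)=[(\varphi_t^{-1})^*\eta(0)]\). The velocity \(\vec v(t)=\cI\bP_{\fB_{\div}}\star^{-1}\cI^\adjoint J_{\adv}s(t)\) lands in \(\fX_{\div}(W)\) by the divergence-consistency \eqref{eq:DivergenceFreeInterpolation} of \(\cI\), hence is divergence-free, so \(\varphi_t\) is volume-preserving and isotopic to the identity; \propref{prop:CoadjointOrbitOnCirculations} then states exactly that \([\eta](t)\) and \([\eta](0)\) share a coadjoint orbit. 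I do not expect any real obstacle here: there is no analytic estimate to make, and the only point needing care is bookkeeping --- confirming that the abstract leaf through \([\eta](0)\) is the same orbit produced concretely by the volume-preserving transport of \propref{prop:CoadjointOrbitOnCirculations}. The corollary is, in effect, a restatement of the structure preservation already proven.
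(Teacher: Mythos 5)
Your proposal is correct and follows essentially the same route as the paper, which treats the corollary as immediate from \thmref{thm:SigmaEmulates} combined with the fact that the \(\cI\)-discrete Euler flow, being a Hamiltonian flow on the dual Lie algebra \(\fX_{\div}^*(W)\), stays on coadjoint orbits (the corollary after \thmref{thm:IDiscreteEulerFlow}, resting on \propref{prop:CoadjointOrbitsCoincideWithSymplecticLeaves}). Your cross-check via the method of characteristics and \propref{prop:CoadjointOrbitOnCirculations} is also sound and mirrors the pullback-along-the-flow construction the paper itself uses in its appendix proofs.
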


    In the work of Marsden and Weinstein \shortcite{Marsden:1983:COV}, the variable \(s\in\Sigma\) is called a \textbf{Clebsch variable}, and the Poisson map \(J_{\adv}\) is a Clebsch representation \cite{Chern:2016:SS,Chern:2017:IF,Yang:2021:CGF}.  
    We will see in the following concrete instance of \(\Sigma\) that \(s\) serves as a Lagrangian marker.  The change of variables from  \eqref{eq:IDiscreteEulerEquationA} to  \eqref{eq:IDiscreteEulerEquationSigma} is similar to changing an Eulerian coordinate to a Lagrangian coordinate in the method of characteristics.  
    We do emphasize that the construction of \eqref{eq:IDiscreteEulerEquationSigma} is more delicate than just devising a Lagrangian coordinate.  Properties such as \corref{cor:JsOnCoadjointOrbit} require the advection action on \(\Sigma\) to be symplectomorphic, and the transfer function \(J_{\adv}\) to be the derived momentum map associated to the symplectomorphic action.

\subsection{A Choice for the Symplectic Space \(\Sigma\)}
\label{sec:ChoiceForSymplecticSpace}
Here we describe a construction of \(\Sigma\) and the actions \(\Adv\), \(\adv\) on \(\Sigma\), followed by deriving the associated momentum map \(J_{\adv}\). 

\subsubsection{Flow Action on Positions}
Consider the space of material positions
\begin{align}
    Q = \{\vec x\colon M\to W\,|\,\vec x\text{ is volume preserving}\}
\end{align}
where \(M\) represents the material space (Lagrangian coordinate) equipped with a volume form.  One may identify \(M=W\) as the fluid domain at \(t=0\).  We distinguish the notation for the space \(M\) from \(W\) for clarity.  The Lie group \(\SDiff(W)\) of volume preserving maps on \(W\) acts on \(Q\) by flowing the particle
\begin{align}
    \Adv^Q\colon\SDiff(W)\times Q\to Q,\quad \Adv^Q_\varphi(\vec x) = \varphi\circ\vec x.
\end{align}

\subsubsection{Lifted Action on the Cotangent Bundle}
Now, neither \(Q\) is a symplectic space nor \(\Adv^Q\) is a symplectomorphic action.
But starting from such a group action on the position space \(Q\), we can always lift the action to an action on the cotangent bundle \(\Sigma \coloneqq T^*Q\).  It turns out that the cotangent bundle is symplectic, and the lifted action is symplectomorphic.
Precisely, the cotangent bundle is given by
\begin{align}
    \Sigma = T^*Q = \left\{(\vec x,\vec u)\,\middle|\, \vec x\colon M \to W, \vec u\colon M\to T^*W, \vec u_\sfp\in T_{\vec x_\sfp}^*W\right\}.
\end{align}
That is, each element of the cotangent bundle is an assignment of particle position \(\vec x_{\sfp}\in W\), and an assignment of a covector \(\vec u_{\sfp}\in T_{\vec x_{\sfp}}^*W\) based at \(\vec x_{\sfp}\) for each \(\sfp\in M\).
The cotangent bundle has a canonical symplectic 2-form \(\sigma\) given by the exterior derivative \(d\vartheta\) of the \textbf{Liouville 1-form} \(\vartheta\in \Omega^1(\Sigma)\) given by
\begin{align}
    \vartheta_{(\vec x,\vec u)}\llbracket(\mathring{\vec x},\mathring{\vec u})\rrbracket = \int_M \langle \vec u | \mathring{\vec x}\rangle\, d\mu,\quad \text{for \((\mathring{\vec x},\mathring{\vec u})\in T_{(\vec x,\vec u)}\Sigma\).}
\end{align}
The lifted action \(\Adv\colon\SDiff(W)\times \Sigma\to\Sigma\) from \(\Adv^Q\) is given by
\begin{align}
\label{eq:LiftedAdvAction}
    \Adv_{\varphi}(\vec x,\vec u) = \left(\varphi\circ\vec x, (d\varphi^{-1})^*\vec u\right)
\end{align}
where \((d\varphi^{-1})^*\) is the adjoint of the inverse of the flow map \(\varphi\).
One may check that \(\Adv\) preserves the symplectic structure by checking that it preserves the Liouville 1-form \(\Adv_\varphi^*\vartheta = \vartheta\) and that the exterior derivative and pullbacks commute.

By taking the direction derivative of \eqref{eq:LiftedAdvAction} with respect to \(\varphi\) at the identity along a Lie algebra element \(\vec v\in\fX_{\div}\), we obtain the induced infinitesimal action 
\begin{align}
\label{eq:LiftedAdvActionLieAlgebra}
    \adv_{\vec v}(\vec x,\vec u) = \left(\vec v|_{\vec x}, -(\nabla\vec v)_{\vec x}^\intercal \vec u\right)\in T_{(\vec x,\vec u)}\Sigma.
\end{align}

\subsubsection{The Momentum Map}
The momentum map \(J_{\adv}\colon \Sigma\to\fX_{\div}^*(W)\) associated to this symplectomorphic action can be expressed explicitly using the Liouville form.  At each \((\vec x,\vec u)\in\Sigma\), the covector field \(J_{\adv}(\vec x,\vec u)\in \fX_{\div}^*(W)\), when paired with an arbitrary \(\vec v\in\fX_{\div}(W)\), is given by
\begin{align}
    \llangle J_{\adv}(\vec x,\vec u)|\vec v\rrangle = \vartheta_{(\vec x,\vec u)}\llbracket\adv_{\vec v}(\vec x,\vec u)\rrbracket = \int_M\langle \vec u|\vec v\circ\vec x\rangle\, d\mu.
\end{align}
From this expression, we conclude that
    \begin{align}
         J_{\adv}(\vec x,\vec u) = [\eta] \in\fX_{\div}^*(W) = \Omega^1(W)/d\Omega^0(W)
    \end{align}
    where \(\eta\) is uniquely defined by that \(\vec u = \eta\circ\vec x\).
    See \figref{fig:TstarQ} for illustration.

\begin{figure}
\centering
\setlength{\unitlength}{1pt}
\begin{picture}(240,150)
\put(0,0){
\includegraphics[width=250pt]{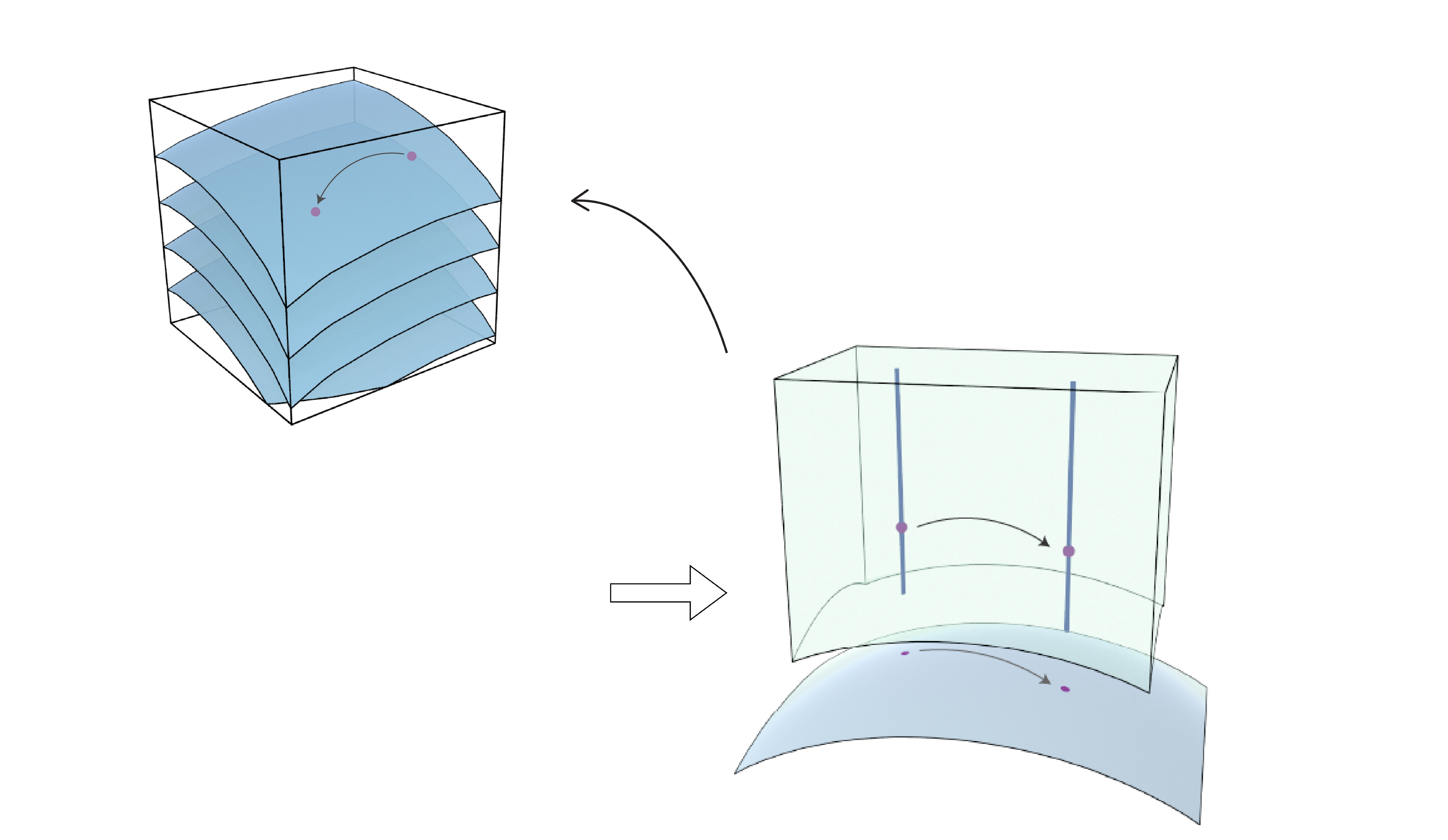}
}
\put(36,41){\LARGE $\varphi \in $ SDiff($W$)}
\put(106,48){\small \sffamily acts on}
\put(118,105){ $J_{\adv}$}
\put(50,140){ $\fX_{\div}^* (W)$}
\put(215,12){ $Q$}
\put(205,50){ $\Sigma = T^*Q$}
\put(160,22){\small $\Adv_{\varphi}^Q$}
\put(160,58){\small $\Adv_{\varphi}$}
\end{picture}
\caption{Mapping between $\Sigma = T^*Q$, SDiff($W$), and $\fX_{\div}^*(W)$.} 
\label{fig:TstarQ}
\end{figure}

\subsection{Arrival at the CO-FLIP Equations}
The CO-FLIP equation arises by substituting the setup of \secref{sec:ChoiceForSymplecticSpace} in \eqref{eq:IDiscreteEulerEquationSigma}.
To see it, let us unpack one final operator, \(\star^{-1}\cI^\adjoint\), in \eqref{eq:IDiscreteEulerEquationSigmaB}.
\subsubsection{The Map \(\star^{-1}\cI^\adjoint J_{\adv}\)}
In \eqref{eq:IDiscreteEulerEquationSigmaB} the momentum map \(J_{\adv}\) is used only in conjunction with subsequent operator
\begin{align}
    \star^{-1}\cI^\adjoint \colon\Omega^1(W)\to\fB.
\end{align}
Here \(\cI^\adjoint\) is the adjoint of \(\cI\colon\fB\to\fX(W)\).  The map \(\star^{-1}\cI^\adjoint \) becomes well-defined on \(\fX_{\div}^*(W)= \Omega^1(W)/d\Omega^0(W)\) if it is always followed by the pressure projection \(\bP_{\fB_{\div}}\), which is indeed the case in \eqref{eq:IDiscreteEulerEquationSigmaB}.
 
\begin{remark}
    In fact, \(\cI^\adjoint J_{\adv}\colon\Sigma\to\fB_{\div}^*\), which is the map from the Lagrangian data \(\Sigma\) to the grid, is the momentum map associated to the action \((\adv\circ \cI)\colon\fB_{\div}\to\fX_\sigma(\Sigma)\).
\end{remark}

The following lemmas characterizes \(\star^{-1}\cI^\adjoint\) and \(\star^{-1}\cI^\adjoint\J_{\adv}\) as the pseudoinverse of \(\cI\).
\begin{lemma}\label{lem:IDagger1}
    The map \(\star^{-1}\cI^\adjoint\colon\Omega^1(W)\to\fB\) is the pseudoinverse of \(\flat\circ\cI\colon \fB\to\Omega^1(W)\) with respect to the \(L^2\) structure on \(\fX(W)\) and its dual \(\Omega^1(W)\):
    \begin{align}
    \label{eq:IDaggerContinuum1}
        \star^{-1}\cI^\adjoint\eta = \argmin_{\bff\in\fB}\int_W|\cI(\bff) - \eta^\sharp|^2\, d\mu.
    \end{align}
\end{lemma}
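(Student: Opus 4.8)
The plan is to recognize \eqref{eq:IDaggerContinuum1} as a standard linear least-squares problem and solve it through the normal equations. First I would expand the objective functional \(J(\bff)\coloneqq\int_W|\cI(\bff)-\eta^\sharp|^2\,d\mu = \llangle\cI(\bff)-\eta^\sharp,\cI(\bff)-\eta^\sharp\rrangle\) into \(\llangle\cI(\bff),\cI(\bff)\rrangle - 2\llangle\cI(\bff),\eta^\sharp\rrangle + \llangle\eta^\sharp,\eta^\sharp\rrangle\), noting that the last term does not depend on \(\bff\). Since \(\cI\) is injective and the induced matrix \(\star\) is symmetric positive definite, \(J\) is a strictly convex quadratic in \(\bff\) and therefore has a unique minimizer, characterized by the vanishing of its first variation.

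Next I would rewrite the two \(\bff\)-dependent terms in the finite-dimensional coordinates of \(\fB\). By the definition \eqref{eq:InnerProductOnB} of the inner-product matrix we have \(\llangle\cI(\bff),\cI(\bff)\rrangle = \bff^\intercal\star\bff\). For the cross term, the pointwise musical identity \(\langle\cI(\bff),\eta^\sharp\rangle = \langle\eta|\cI(\bff)\rangle\) gives \(\llangle\cI(\bff),\eta^\sharp\rrangle = \llangle\eta|\cI(\bff)\rrangle\), the dual pairing in \(\fX^*(W)\times\fX(W)\); applying the definition of the adjoint \(\cI^\adjoint\) converts this into \(\langle\cI^\adjoint\eta,\bff\rangle = (\cI^\adjoint\eta)^\intercal\bff\). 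Hence \(J(\bff) = \bff^\intercal\star\bff - 2(\cI^\adjoint\eta)^\intercal\bff + \text{const}\).

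Finally, setting the derivative of \(J\) with respect to \(\bff\), namely \(2\star\bff - 2\cI^\adjoint\eta\), to zero yields the normal equation \(\star\bff = \cI^\adjoint\eta\); since \(\star\) is invertible, the unique solution is \(\bff = \star^{-1}\cI^\adjoint\eta\), which is exactly the asserted formula. I do not anticipate a genuine obstacle here: the only care required is bookkeeping the musical isomorphism \(\sharp\) and the definition of the adjoint so that the cross term is correctly identified as \((\cI^\adjoint\eta)^\intercal\bff\) rather than something spuriously involving \(\star\). One could equivalently phrase the argument geometrically, since the minimizer is the pullback through \(\cI\) of the \(L^2\)-orthogonal projection of \(\eta^\sharp\) onto the subspace \(\cI(\fB)\); this viewpoint makes transparent why \(\star^{-1}\cI^\adjoint\) deserves to be called the pseudoinverse of \(\flat\circ\cI\).
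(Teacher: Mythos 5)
Your proposal is correct and follows essentially the same route as the paper's proof in \appref{app:IDagger}: both reduce the least-squares problem to the normal equation \(\star\bff = \cI^\adjoint\eta\) via first-order optimality, using \(\llangle\cI(\bff),\eta^\sharp\rrangle = \llangle\eta|\cI(\bff)\rrangle = \langle\cI^\adjoint\eta|\bff\rangle\) to identify the cross term; the paper phrases optimality as orthogonality against test directions \(\mathring\bff\in\fB\) while you expand the quadratic and differentiate, which is the same computation. Your added remark on strict convexity (from injectivity of \(\cI\) and positive definiteness of \(\star\)) makes the uniqueness of the minimizer explicit, a point the paper leaves implicit.
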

\begin{lemma}\label{lem:IDagger2}
    The map \(\star^{-1}\cI^\adjoint J_{\adv}\colon \Sigma\to\fB\) is given by
    \begin{align}
    \label{eq:IDaggerContinuum}
        \star^{-1}\cI^\adjoint J_{\adv}(\vec x,\vec u) = \argmin_{\bff\in\fB}\int_M|\cI(\bff)\circ\vec x - \vec u|^2\, d\mu.
    \end{align}
    That is, \(\star^{-1}\cI^\adjoint J_{\adv}\) is the \(L^2\)-projection of a vector field, represented by the position--impulse field \((\vec x,\vec u)\), to the subspace \(\fB\) of interpolatable vector fields.
\end{lemma}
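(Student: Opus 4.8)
The plan is to reduce Lemma~\ref{lem:IDagger2} to the already-established Lemma~\ref{lem:IDagger1} by means of a measure-preserving change of variables induced by the material-to-world map $\vec x$. The conceptual content is entirely in Lemma~\ref{lem:IDagger1} (the coordinate form of the $L^2$-orthogonal projection onto the subspace $\cI(\fB)\subset\fX(W)$); what remains is to transport that world-space projection into material coordinates.

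First I would pin down the canonical representative of the circulation class produced by the momentum map. By the explicit computation of $J_{\adv}$ in \secref{sec:ChoiceForSymplecticSpace}, we have $J_{\adv}(\vec x,\vec u) = [\eta]$, where $\eta\in\Omega^1(W)$ is the unique $1$-form with $\vec u = \eta\circ\vec x$; since $\vec x\in Q$ is a volume-preserving diffeomorphism this means $\eta = \vec u\circ\vec x^{-1}$, and hence pointwise $\eta^\sharp = \vec u^\sharp\circ\vec x^{-1}$, the sharp being a pointwise metric operation that commutes with the relabeling of points by $\vec x^{-1}$. In particular $\eta^\sharp\circ\vec x = \vec u^\sharp$. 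I then apply the operator $\star^{-1}\cI^\adjoint$, viewed as a map $\Omega^1(W)\to\fB$, to this distinguished representative $\eta$.

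Next I would invoke Lemma~\ref{lem:IDagger1}, which gives
\begin{align*}
\star^{-1}\cI^\adjoint\eta = \argmin_{\bff\in\fB}\int_W\bigl|\cI(\bff) - \eta^\sharp\bigr|^2\,d\mu,
\end{align*}
and rewrite the world-space least-squares problem as the material-space one in the statement. Substituting $y = \vec x(\sfp)$ and using that $\vec x$ is volume preserving, so $\vec x^*(d\mu) = d\mu$, the change of variables is measure preserving and
\begin{align*}
\int_W\bigl|\cI(\bff)(y) - \eta^\sharp(y)\bigr|^2\,d\mu(y)
&= \int_M\bigl|\cI(\bff)(\vec x(\sfp)) - \eta^\sharp(\vec x(\sfp))\bigr|^2\,d\mu(\sfp).
\end{align*}
By the first step $\eta^\sharp(\vec x(\sfp)) = \vec u_\sfp^\sharp$, and writing $\cI(\bff)\circ\vec x$ for the first term, the integrand equals $|\cI(\bff)\circ\vec x - \vec u^\sharp|^2$; since the musical isomorphisms are pointwise isometries of the Euclidean metric, this coincides with $|\cI(\bff)\circ\vec x - \vec u|^2$ under the standard identification used throughout. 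As the minimizer is unchanged under a measure-preserving reparametrization, I conclude $\star^{-1}\cI^\adjoint J_{\adv}(\vec x,\vec u) = \argmin_{\bff\in\fB}\int_M|\cI(\bff)\circ\vec x - \vec u|^2\,d\mu$.

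The main obstacle I anticipate is bookkeeping rather than conceptual depth. One must carefully track the interplay between the covector $\vec u$ living \emph{along} the map $\vec x$ (with $\vec u_\sfp\in T^*_{\vec x_\sfp}W$) and the genuine $1$-form $\eta$ on $W$, and one must confirm that $\star^{-1}\cI^\adjoint$ is applied to the distinguished representative $\eta=\vec u\circ\vec x^{-1}$ rather than to an arbitrary element of the class $[\eta]$ — recall from the surrounding text that $\star^{-1}\cI^\adjoint$ descends to $\fX_{\div}^*(W)$ only once composed with $\bP_{\fB_{\div}}$, whereas here it is legitimately evaluated on the canonical representative singled out by the pair $(\vec x,\vec u)$. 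The volume-preservation hypothesis on $\vec x$ is exactly what makes the change of variables measure preserving, and hence what converts the continuous $L^2$ projection onto $\cI(\fB)$ into the stated material-space least-squares problem.
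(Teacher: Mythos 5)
Your proposal is correct and takes essentially the same route as the paper: the paper proves Lemma~\ref{lem:IDagger1} by the first-order optimality condition of the least-squares problem and then dispatches Lemma~\ref{lem:IDagger2} with the single remark that the proof is ``the same with the substitution \(\eta = J_{\adv}(\vec x,\vec u)\).'' Your write-up simply makes explicit the bookkeeping that substitution hides — the canonical representative \(\eta\) with \(\vec u=\eta\circ\vec x\) (rather than an arbitrary member of the class \([\eta]\)), the volume-preserving change of variables \(\vec x^*(d\mu)=d\mu\) converting the integral over \(W\) into the integral over \(M\), and the pointwise musical identification \(\eta^\sharp\circ\vec x=\vec u^\sharp\) — which is exactly the intended content of the paper's terse argument.
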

\begin{proof}
    \appref{app:IDagger}.
\end{proof}

\begin{definition}
    We define \(\cI^+ \coloneqq \star^{-1}\cI^\adjoint\) and with slight abuse of notation \(\cI^+\coloneqq\star^{-1}\cI^\adjoint J_{\adv}\).  In both cases, \(\cI^+\colon(\cdot)\to\fB\) is an operator that \(L^2\)-orthogonally projects a smooth vector field to the subspace \(\fB\) of interpolatable vector fields.  The input vector field is either represented as a vector or covector (through identification of \(\flat\)/\(\sharp\)) on either the world \(W\) or material \(M\) coordinate (through identification of \(J_{\adv}\))
\end{definition}

\subsubsection{CO-FLIP on Continuous Material Space}
Combining \eqref{eq:IDiscreteEulerEquationSigma}, \eqref{eq:LiftedAdvActionLieAlgebra}, and \eqref{eq:IDaggerContinuum}, we obtain an equation of motion for \(\vec x(t)\colon M\to W\) and \(\vec u(t)\colon M\to T^*W\) (with \(\vec u_{\sfp}\in T_{\vec x_{\sfp}}^*W\))
\begin{subequations}
\label{eq:COFLIPonM}
    \begin{numcases}{}
    \label{eq:COFLIPonMA}
        \textstyle{d\over dt}\vec x_{\sfp}(t) = \vec v|_{\vec x_{\sfp}(t)}&\text{$\sfp\in M$}\\
        \label{eq:COFLIPonMB}
        \textstyle{d\over dt}\vec u_{\sfp}(t) = -(\nabla\vec v|_{\vec x_{\sfp}(t)})^\intercal\vec u_{\sfp}(t)&\text{$\sfp\in M$}\\
        \label{eq:COFLIPonMC}
        \vec v = \cI\bP_{\fB_{\div}}\cI^+(\vec x,\vec u).
    \end{numcases}
\end{subequations}

\subsubsection{Time Discretization of CO-FLIP on Continuous Material Space}
Following the trapezoidal rule \eqref{eq:ImplicitMidpoint} we consider the time discretization of \eqref{eq:COFLIPonM} as the following Lie group integrator:
\begin{subequations}
\label{eq:ImplicitMidpointOnM}
    \begin{numcases}{}
    \label{eq:ImplicitMidpointOnMA}
        \textstyle(\vec x,\vec u)^{(\sfn + 1)} = \Adv_{\vec v}^{\Deltait t}(\vec x,\vec u)^{(\sfn)}\\
        \label{eq:ImplicitMidpointOnMB}
        \textstyle\vec v = \cI\left({1\over 2}(\bP_{\fB_{\div}}\cI^+(\vec x,\vec u)^{(\sfn)} + \bP_{\fB_{\div}}\cI^+(\vec x,\vec u)^{(\sfn+1)})\right)
    \end{numcases}
\end{subequations}
which is a map \((\vec x,\vec u)^{(\sfn)}\mapsto (\vec x,\vec u)^{(\sfn+1)}\) on \(\Sigma\).
Here,
\begin{align}
    \Adv^{\Deltait t}_{\vec v} \coloneqq \exp(\Deltait t\adv_{\vec v})=\Adv_{\varphi(\Deltait t)}
\end{align}
is the symplectomorphic group action \eqref{eq:LiftedAdvAction}
where \(\varphi(\Deltait t)\in\SDiff(W)\) is the flow map generated by \(\vec v\) over \(\Delta t\) time span.

One may also strip away the instantiation of the symplectic space \(\Sigma\) in \secref{sec:ChoiceForSymplecticSpace}, and observe that the time discretization \eqref{eq:ImplicitMidpointOnM} is a special case of the following Lie group integrator for \eqref{eq:IDiscreteEulerEquationSigma}:
\begin{subequations}
\label{eq:ImplicitMidpointSigma}
    \begin{numcases}{}
        s^{(\sfn+1)} = \exp(\Deltait t \adv_{\vec v})s^{(\sfn)} = \Adv_{\varphi(\Deltait t)}s^{(\sfn)}\\
        \textstyle \vec v = \cI\bP_{\fB_{\div}}\cI^+\left({1\over 2}(J_{\adv}s^{(\sfn)}+J_{\adv}s^{(\sfn+1)})\right)
    \end{numcases}
\end{subequations}
which is a map \(s^{(\sfn)}\mapsto s^{(\sfn+1)}\) generating a discrete sequence on \(\Sigma\).

\begin{figure}
\centering
\setlength{\unitlength}{1pt}
\begin{picture}(240,180)
\put(0,0){
\includegraphics[width=240pt]{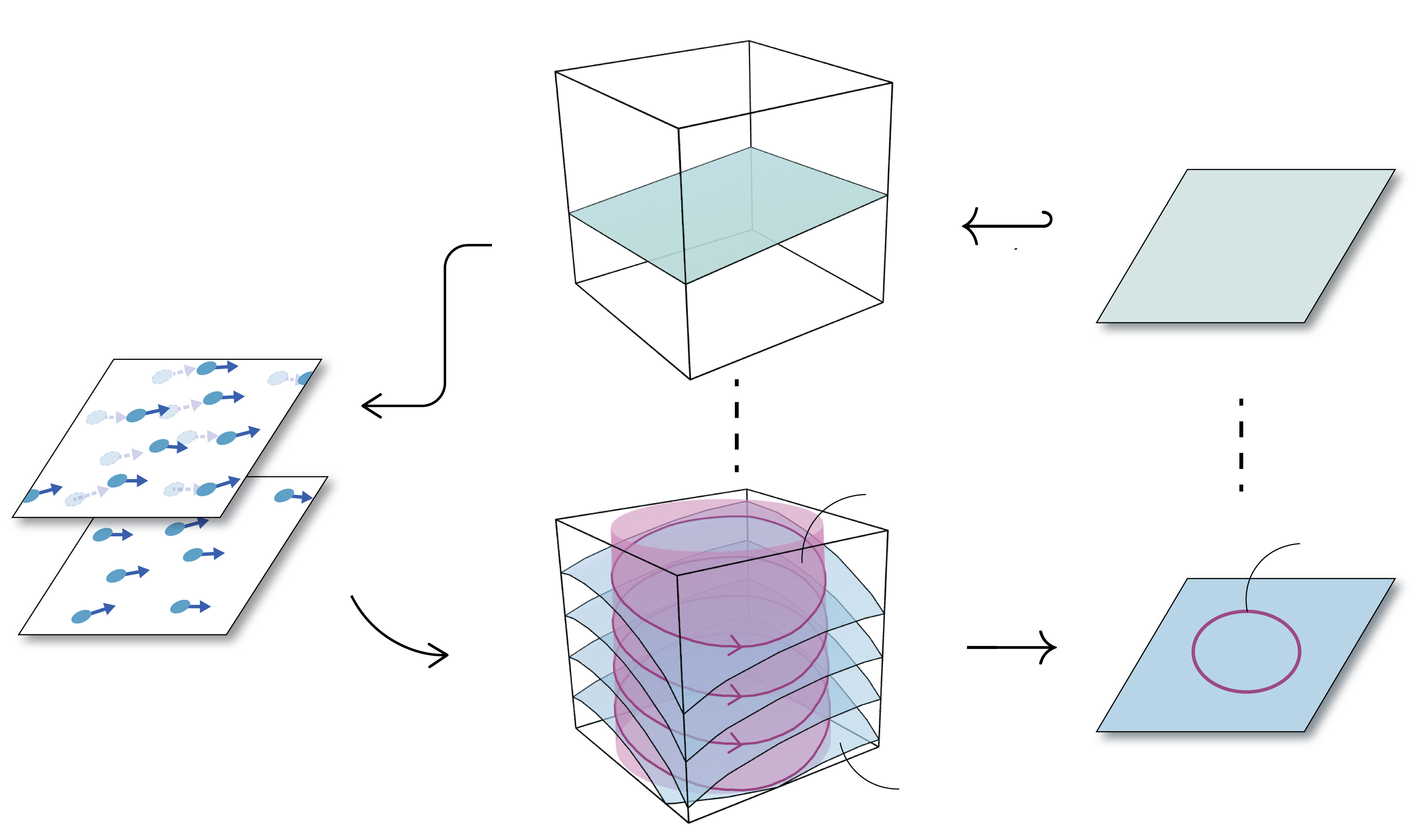}
}

\put(15,89){\parbox{10ex}{\linespread{0.6} \small \sffamily Particle state space}}
\put(106,-5){ \sffamily $\fX_{\div}^*(W)$}
\put(110,140){ \sffamily $\fX_{\div}(W)$}
\put(110,105){ \footnotesize \sffamily $\cI(\fB_{\div})$}
\put(170,108){ \sffamily $\cI$}
\put(191.5,90.5){ \sffamily \small $\fB_{\div}$}
\put(192,23){ \sffamily \small $\fB^*_{\div}$}
\put(169,37){ \sffamily $\cI^\adjoint$}
\put(60,90){ \sffamily $\adv$}
\put(62,41){ \sffamily $J_{\adv}$}
\put(12,20){ \sffamily $\Sigma = T^*Q$}
\put(221,48){ \sffamily \footnotesize $H_D$}
\put(148,57){ \sffamily \footnotesize $H_D \circ \cI^\adjoint$}
\put(157,7){\parbox{20pt}{\linespread{0.6} \footnotesize \sffamily coadjoint orbits}}
\end{picture}
\caption{Illustration of our CO-FLIP dynamical system solving the $\cI$-discrete Euler equation; with finite particles approximating the phase space.}
\label{fig:method}
\end{figure}

\subsubsection{CO-FLIP on Finite Particles}
In \eqref{eq:COFLIPonM}, the advection component \eqref{eq:COFLIPonMA} and \eqref{eq:COFLIPonMB} are ODEs that hold for \(\sfp\in \cP\) for any subset \(\cP\subset M\).
Let \(\cP\subset M\) be a finite set of points, representing computational particles.
The restrictions \((\vec x,\vec u)\colon\cP\to T^*W\) of \((\vec x,\vec u)\colon M \to T^*W\) are samples of the function \((\vec x,\vec u)\colon M \to T^*W\) on the fixed set of points \(\cP\).
The only modification on the system \eqref{eq:COFLIPonM} when restricting the material space \(M\) to the finite set \(\cP\) is replacing \(\cI^+ = \star^{-1}\cI^\adjoint J_{\div}\)  by the following approximation of \eqref{eq:IDaggerContinuum}
\begin{align}
\label{eq:IDaggerDiscrete}
    \hat\cI^+(\vec x,\vec u)\coloneqq\argmin_{\bff\in\fB}\sum_{\sfp\in\cP}|\cI(\bff)\circ\vec x_{\sfp} - \vec u_{\sfp}^\sharp|^2\mu_{\sfp}.
\end{align}
The weight \(\mu_{\sfp}\) represents the particle volume.
This modification results in the finite dimensional ODE system \eqref{eq:COFLIP-ODE} and the algorithm \eqref{eq:ImplicitMidpoint} in \secref{sec:Method}, as illustrated in \figref{fig:method}.

In the discretization from \eqref{eq:IDaggerContinuum} to \eqref{eq:IDaggerDiscrete}, the \(L^2\)-integral \(\int_M|\cdot|^2\, d\mu\) is approximated by a finite sum \(\sum_{\sfp\in\cP}|(\cdot)_{\sfp}|^2\mu_{\sfp}\).
Other linear algebraic properties about \(\cI^+\) remain true for \(\hat\cI^+\).
\begin{theorem}
Suppose \(\cP\subset M\) is sufficiently dense so that the minimization \eqref{eq:IDaggerDiscrete} is unique.  Then
    \(\hat\cI^+(\vec x,\vec u)\) is a projection on the subspace \(\fB\) from a smooth vector field represented by \((\vec x,\vec u)\).  In particular, \(\hat\cI^+\) is an exact left-inverse of \(\cI\).
\end{theorem}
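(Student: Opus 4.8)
The plan is to recast the least-squares problem \eqref{eq:IDaggerDiscrete} as a finite-dimensional normal-equation system and read off both claimed properties from the resulting explicit formula. First I would introduce the \emph{sampling operator}
\[
S\colon\fB\to(\RR^n)^{|\cP|},\qquad S\bff=\bigl(\cI(\bff)\circ\vec x_\sfp\bigr)_{\sfp\in\cP},
\]
which evaluates the interpolated field at the particle positions, together with the block-diagonal weight matrix \(D\) whose diagonal blocks are \(\mu_\sfp I_n\), encoding the particle volumes. With these, the objective in \eqref{eq:IDaggerDiscrete} is exactly \(\|S\bff-\vec u^\sharp\|_D^2\coloneqq(S\bff-\vec u^\sharp)^\intercal D(S\bff-\vec u^\sharp)\), where \(\vec u^\sharp=(\vec u_\sfp^\sharp)_{\sfp\in\cP}\) is the sharp of the stored impulse data. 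The key observation is that \(\hat\cI^+\) is then precisely the weighted least-squares pseudoinverse of \(S\).

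The next step is to translate the density hypothesis into a statement about \(S\). The objective is a convex quadratic in \(\bff\) with Hessian \(2S^\intercal D S\); since \(D\) is symmetric positive definite, uniqueness of the minimizer is equivalent to \(S^\intercal D S\) being invertible, which in turn is equivalent to \(S\) being injective. Thus, under the density assumption, the normal equations \(S^\intercal D S\,\bff=S^\intercal D\,\vec u^\sharp\) admit a unique solution, giving the explicit formula
\[
\hat\cI^+=(S^\intercal D S)^{-1}S^\intercal D.
\]

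From this formula both claims follow by elementary manipulation. For the \emph{left-inverse} property, I would substitute data obtained by sampling a genuine grid field, i.e.\ \(\vec u^\sharp=S\bg\) for some \(\bg\in\fB\); then \(\hat\cI^+(S\bg)=(S^\intercal D S)^{-1}S^\intercal D S\,\bg=\bg\), so \(\hat\cI^+\circ S=\id_{\fB}\). Equivalently, and more in the spirit of \eqref{eq:IDaggerDiscrete}, the objective \(\sum_{\sfp}\bigl|\cI(\bg)\circ\vec x_\sfp-\cI(\bg)\circ\vec x_\sfp\bigr|^2\mu_\sfp\) vanishes at \(\bff=\bg\), and since the minimizer is unique \(\hat\cI^+\) must return \(\bg\). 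For the \emph{projection} property, I would observe that \(S\hat\cI^+=S(S^\intercal D S)^{-1}S^\intercal D\) is idempotent, since \((S\hat\cI^+)^2=S(S^\intercal D S)^{-1}(S^\intercal D S)(S^\intercal D S)^{-1}S^\intercal D=S\hat\cI^+\), and is self-adjoint with respect to \(\langle\cdot,\cdot\rangle_D\), because \(D\,S\hat\cI^+=DS(S^\intercal D S)^{-1}S^\intercal D\) is a symmetric matrix. Hence \(S\hat\cI^+\) is the \(D\)-orthogonal projection of the particle data onto the interpolatable subspace \(\im(S)\); since \(S\) is injective this subspace is isomorphic to \(\fB\), and this is precisely the sense in which \(\hat\cI^+(\vec x,\vec u)\) projects the smooth field represented by \((\vec x,\vec u)\) onto \(\fB\).

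The argument is essentially linear algebra once the setup is fixed, so there is no deep obstacle; the only points requiring care are the bookkeeping of the \(\flat/\sharp\) identification (the impulse is stored as a covector \(\vec u_\sfp\in T^*_{\vec x_\sfp}W\) while \(\cI(\bff)\) is a vector field, so the comparison is made after sharping) and the observation that the relevant orthogonality is with respect to the particle-weighted metric \(\langle\cdot,\cdot\rangle_D\) rather than the continuous \(L^2\) metric. This is exactly the discrepancy highlighted in \secref{sec:EnergyBasedCorrection}, and it makes transparent why \(\hat\cI^+\) is only an approximation of the continuum pseudoinverse \(\cI^+\) of Lemma~\ref{lem:IDagger2}: the two coincide in the limit where the quadrature \(\sum_\sfp|\cdot|^2\mu_\sfp\) converges to \(\int_M|\cdot|^2\,d\mu\).
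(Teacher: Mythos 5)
Your proposal is correct, and it contains the paper's own proof inside it: the argument in \appref{app:IDaggerIsLeftInverse} (to which this theorem defers via \thmref{thm:IDaggerIsLeftInverse}) is precisely your parenthetical remark that for data sampled from a grid field the objective in \eqref{eq:IDaggerDiscrete} attains its minimum value zero at \(\bff'=\bff\), so uniqueness forces \(\hat\cI^+\) to return \(\bff\). The paper proves only the left-inverse half this way and leaves the projection half implicit in the least-squares structure, whereas your normal-equation route makes three things explicit that the paper does not. First, it makes the density hypothesis quantitative: uniqueness of the minimizer is equivalent to invertibility of \(S^\intercal DS\), i.e.\ injectivity of the sampling operator \(S\), which is the honest content of ``sufficiently dense.'' Second, the closed form \(\hat\cI^+=(S^\intercal DS)^{-1}S^\intercal D\) actually \emph{proves} the projection claim --- idempotence of \(S\hat\cI^+\) together with \(D\)-self-adjointness (symmetry of \(DS(S^\intercal DS)^{-1}S^\intercal D\)) identifies it as the orthogonal projection onto \(\im(S)\cong\fB\) in the particle-weighted metric, which is exactly the sense asserted in the theorem and in the remark following it that the orthogonality is with respect to \(\sum_{\sfp\in\cP}|(\cdot)_{\sfp}|^2\mu_\sfp\) rather than the continuous \(L^2\) metric. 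Third, \(S^\intercal DS\) is the very matrix the implementation assembles and inverts by PCG in \secref{sec:pseudoinversep2gsolve}, so your derivation doubles as a correctness argument for the solver. The paper's version buys brevity; yours buys an explicit certificate for the projection statement and a sharper reading of the hypothesis, at the modest cost of fixing the sampling-operator notation and the \(\flat/\sharp\) bookkeeping, both of which you handle correctly.
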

\begin{proof}
    See \thmref{thm:IDaggerIsLeftInverse}.
\end{proof}
Note that the projection \(\hat\cI^+\) is an orthogonal projection not with respect to \(\int_M|\cdot|^2\, d\mu\) but with respect to a perturbed metric \(\sum_{\sfp\in\cP}|(\cdot)_{\sfp}|^2\mu_\sfp\).
In the infinite limit of \(\cP\to M\), the projection \(\hat\cI^+\) to \(\fB\) is progressively closer to being \(L^2\)-orthogonal.

\begin{figure}
    \centering
    \includegraphics[trim={150px 140px 0 140px},clip,width=0.4\columnwidth,angle=90]{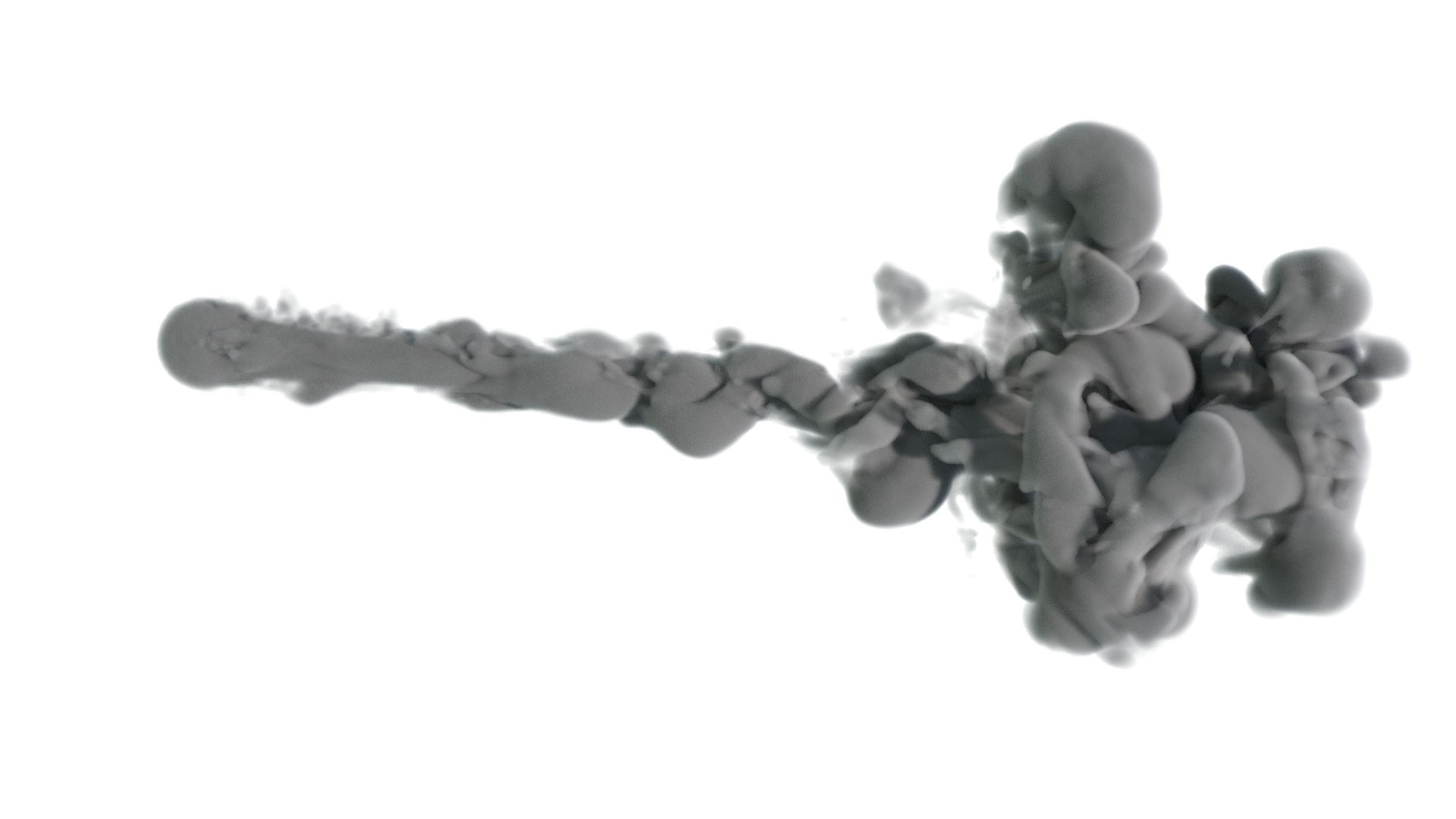}
    \includegraphics[trim={150px 140px 0 140px},clip,width=0.4\columnwidth,angle=90]{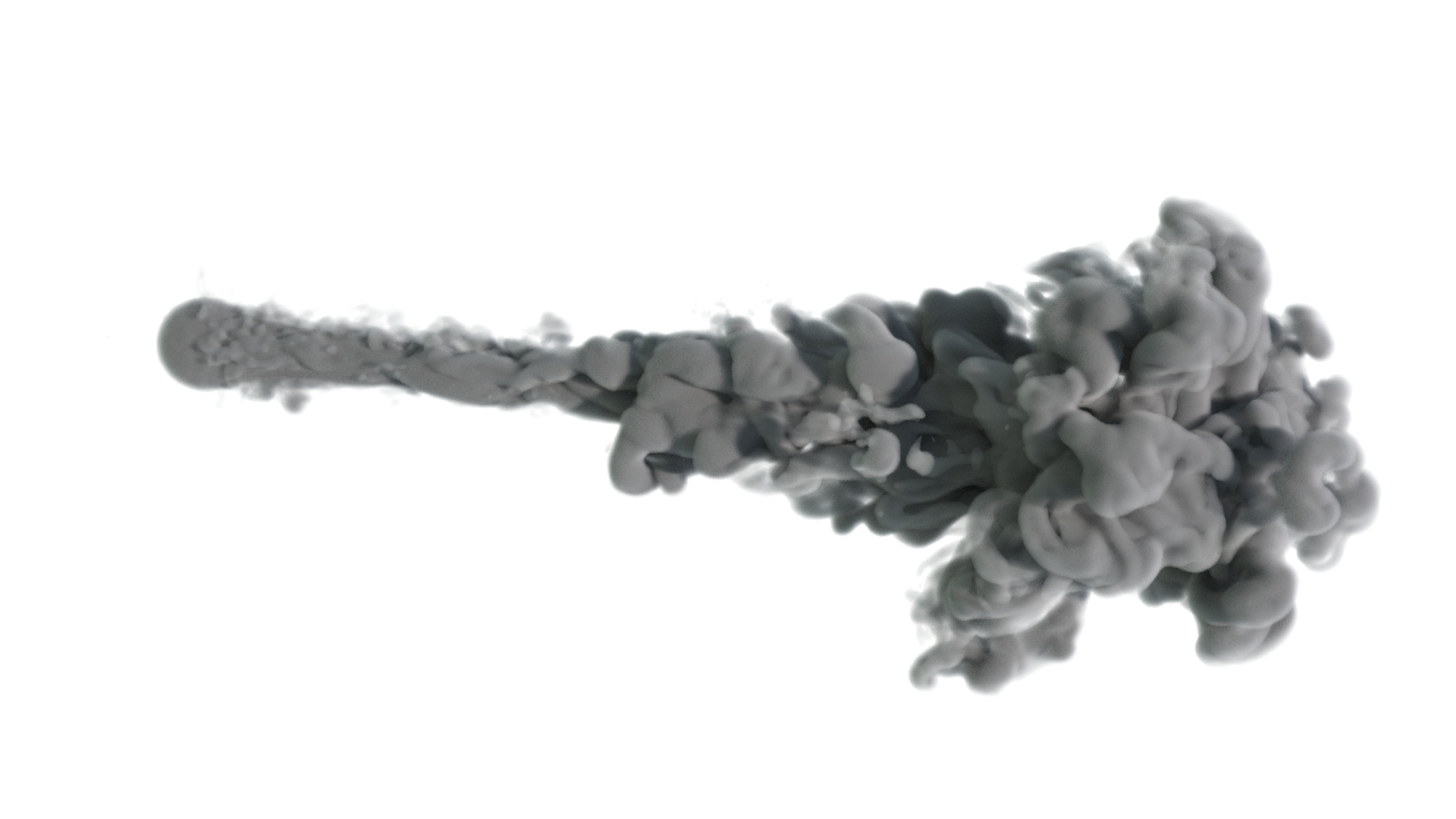}
    \includegraphics[trim={150px 140px 0 140px},clip,width=0.4\columnwidth,angle=90]{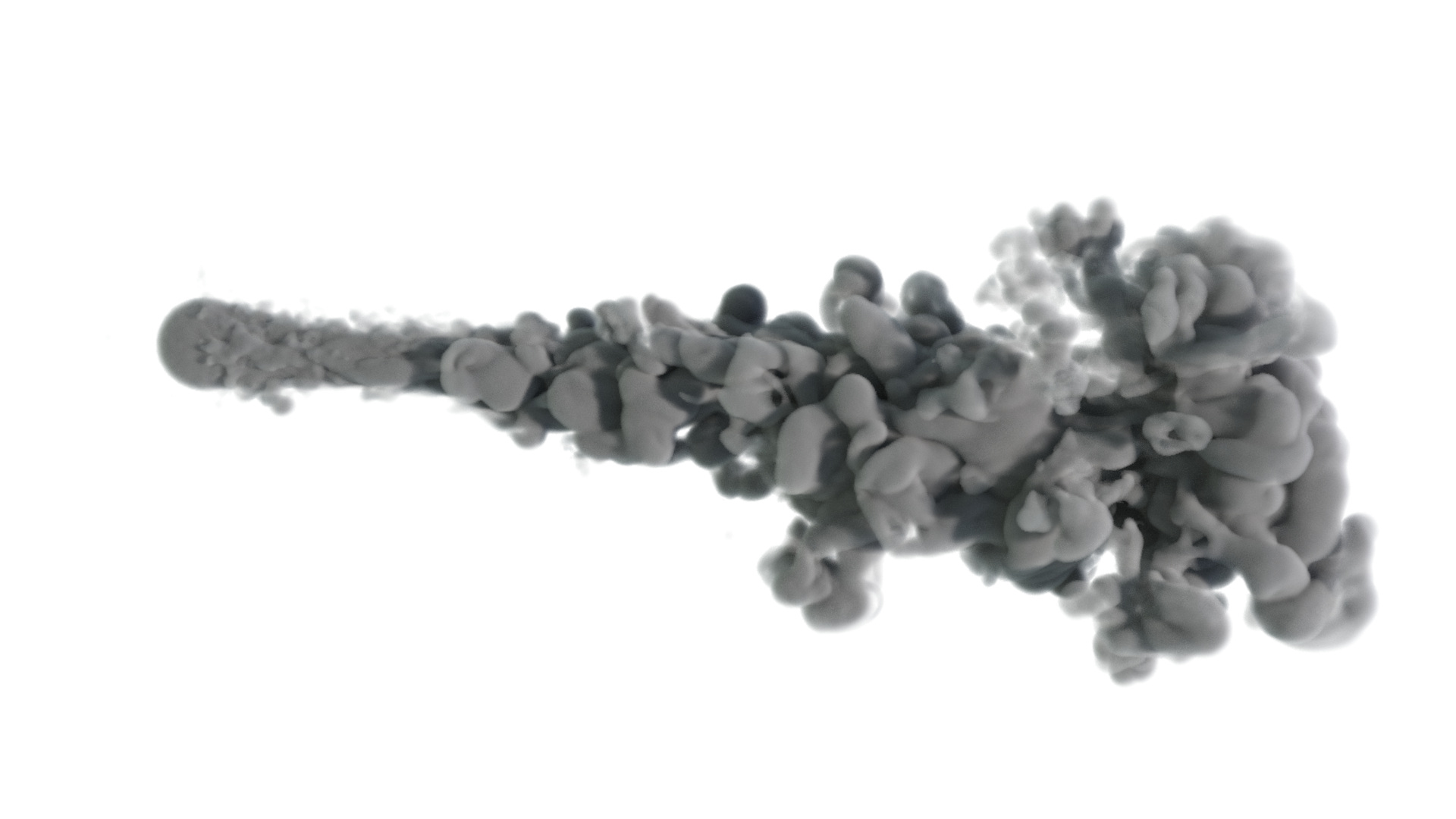}
    \includegraphics[trim={150px 140px 0 140px},clip,width=0.4\columnwidth,angle=90]{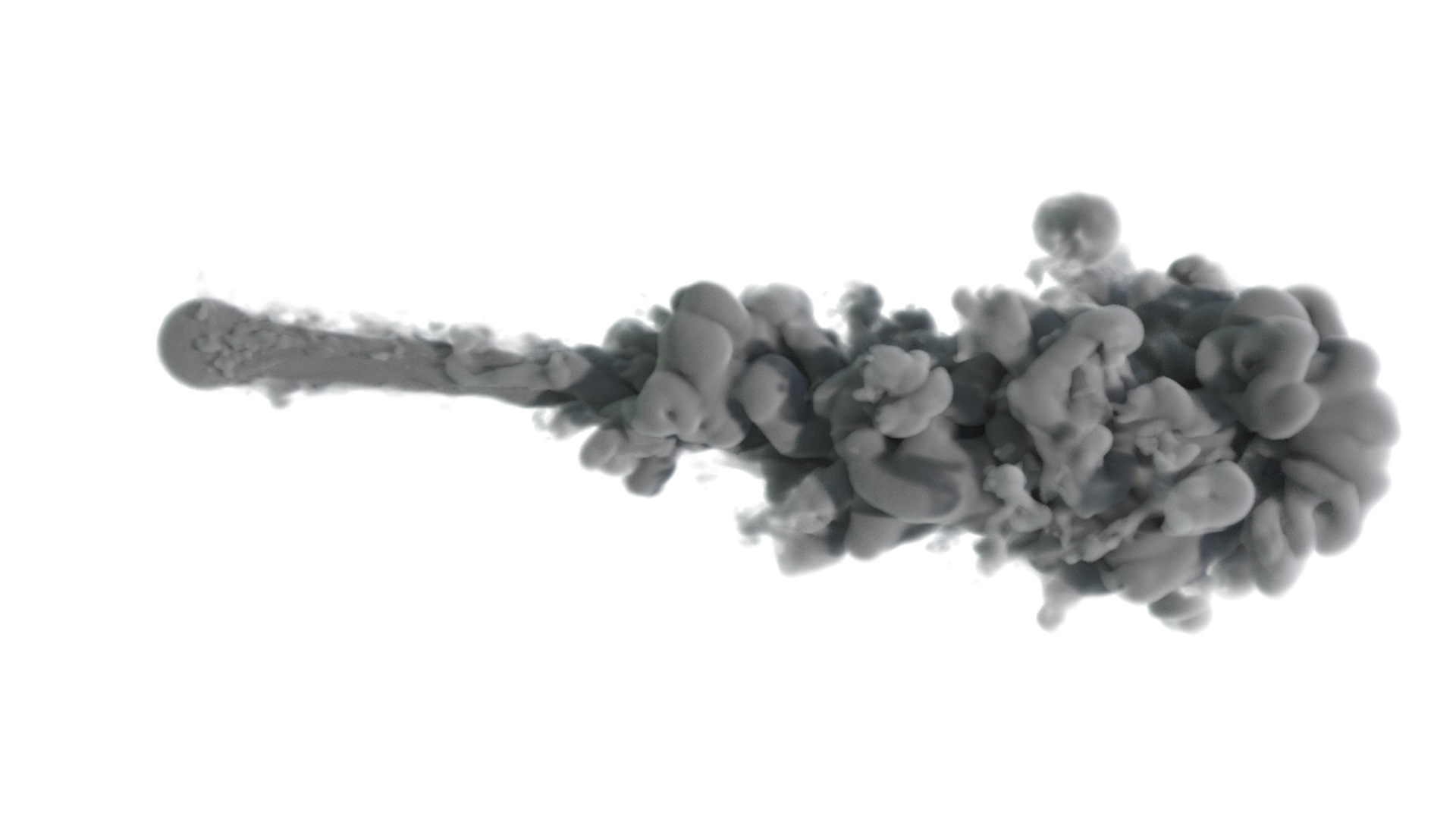}
    \includegraphics[trim={150px 140px 0 140px},clip,width=0.4\columnwidth,angle=90]{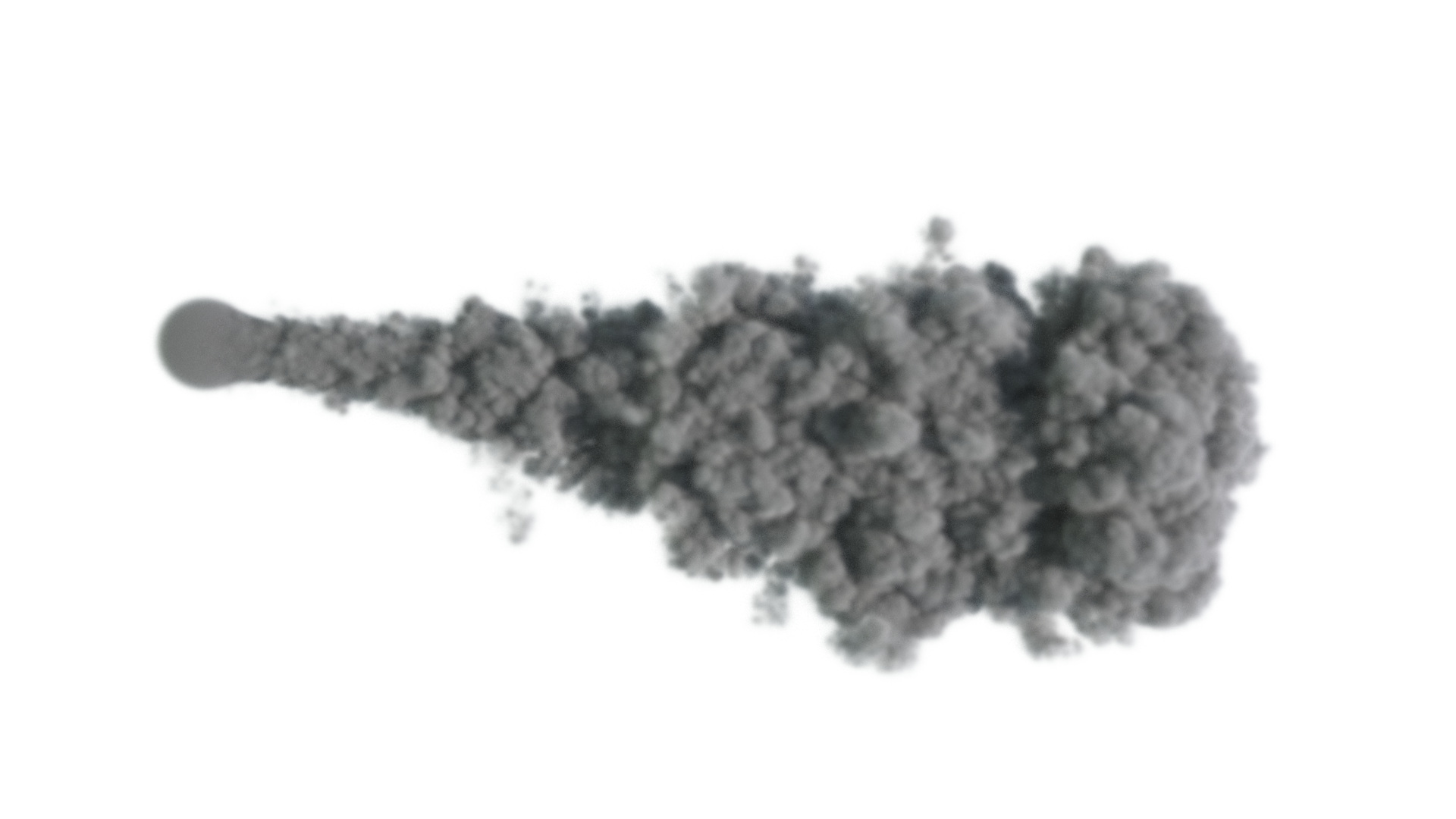}
    \\
    \includegraphics[trim={150px 140px 0 140px},clip,width=0.4\columnwidth,angle=90]{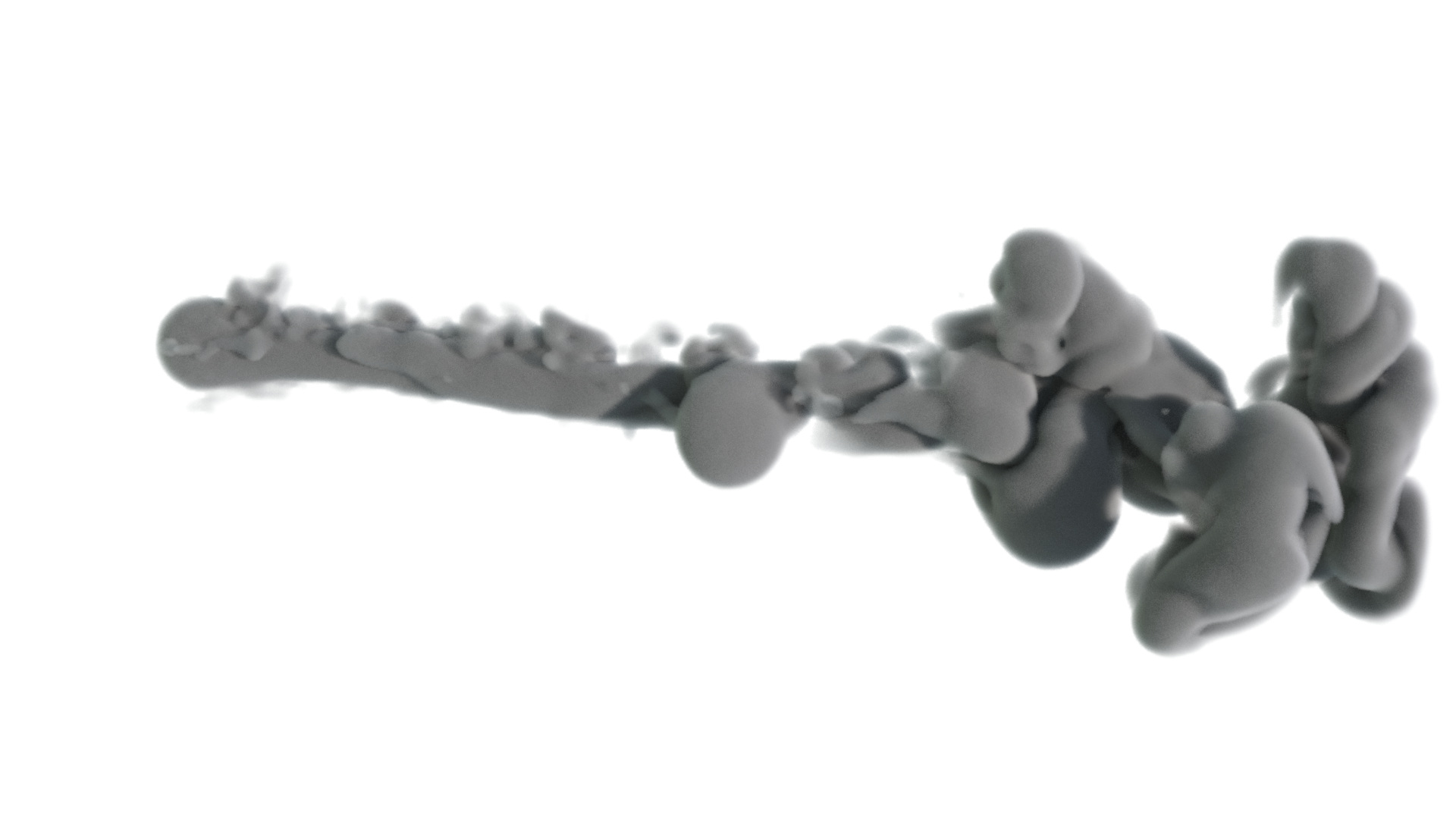}
    \includegraphics[trim={150px 140px 0 140px},clip,width=0.4\columnwidth,angle=90]{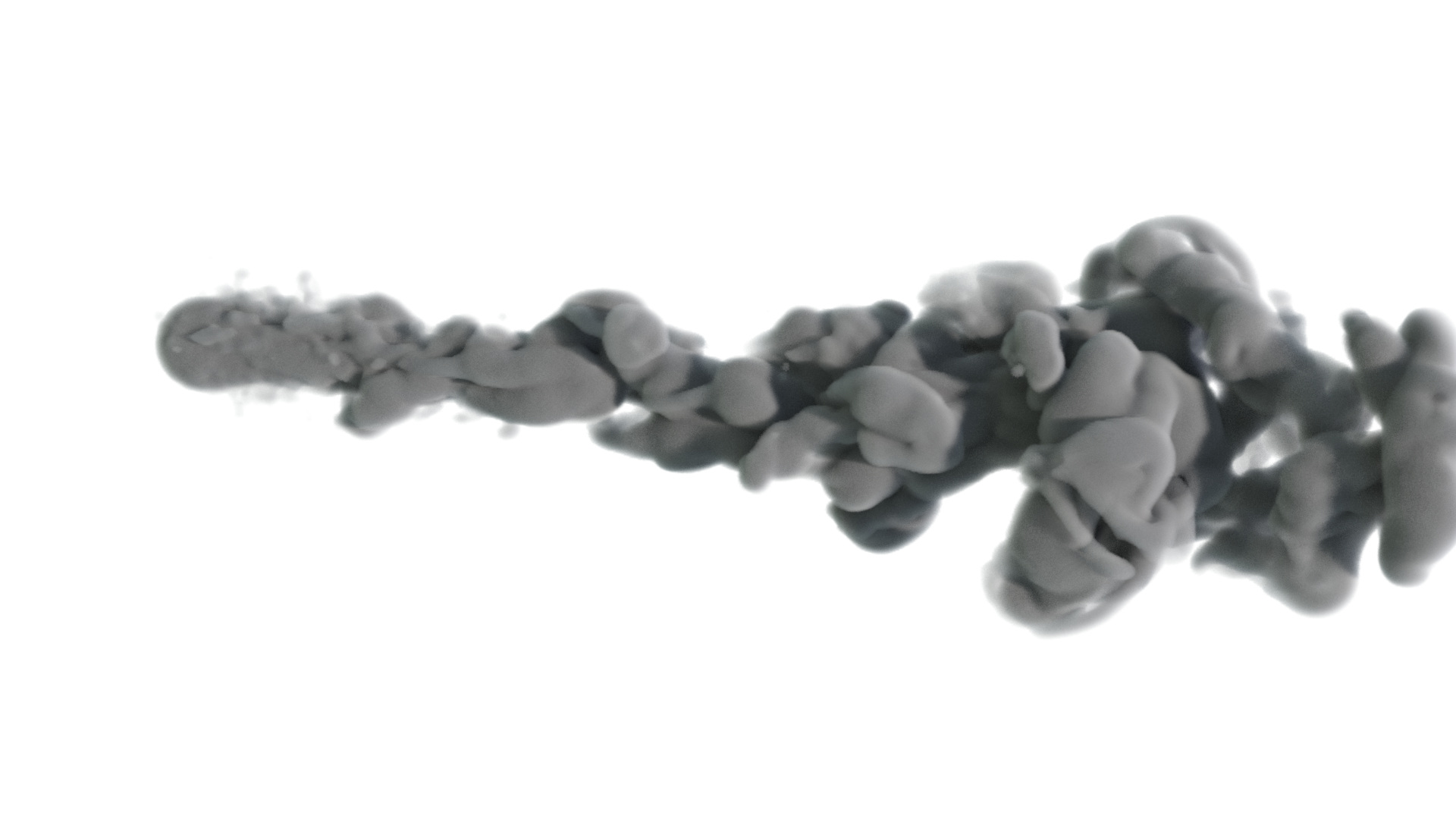}
    \includegraphics[trim={150px 140px 0 140px},clip,width=0.4\columnwidth,angle=90]{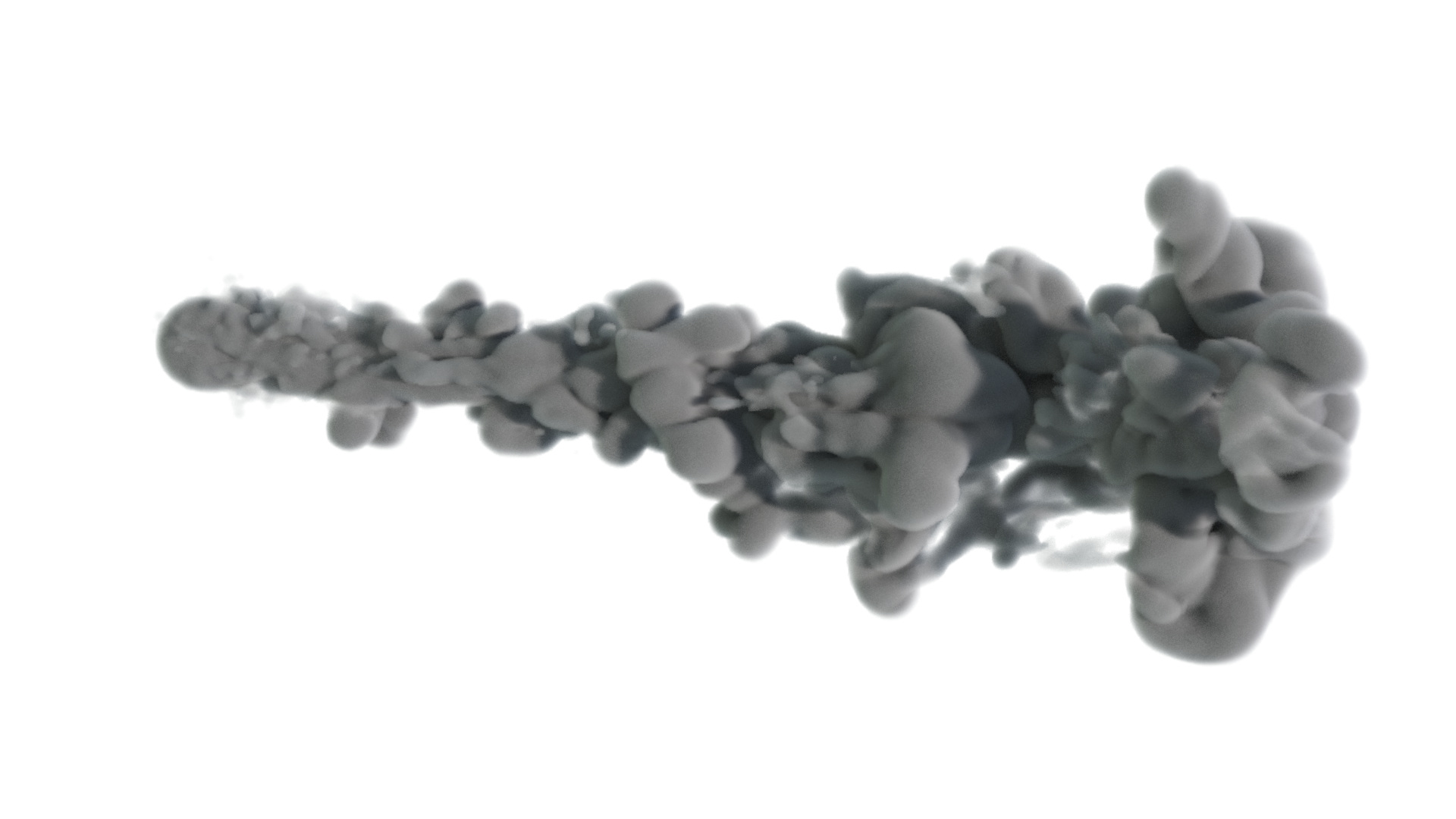}
    \includegraphics[trim={150px 140px 0 140px},clip,width=0.4\columnwidth,angle=90]{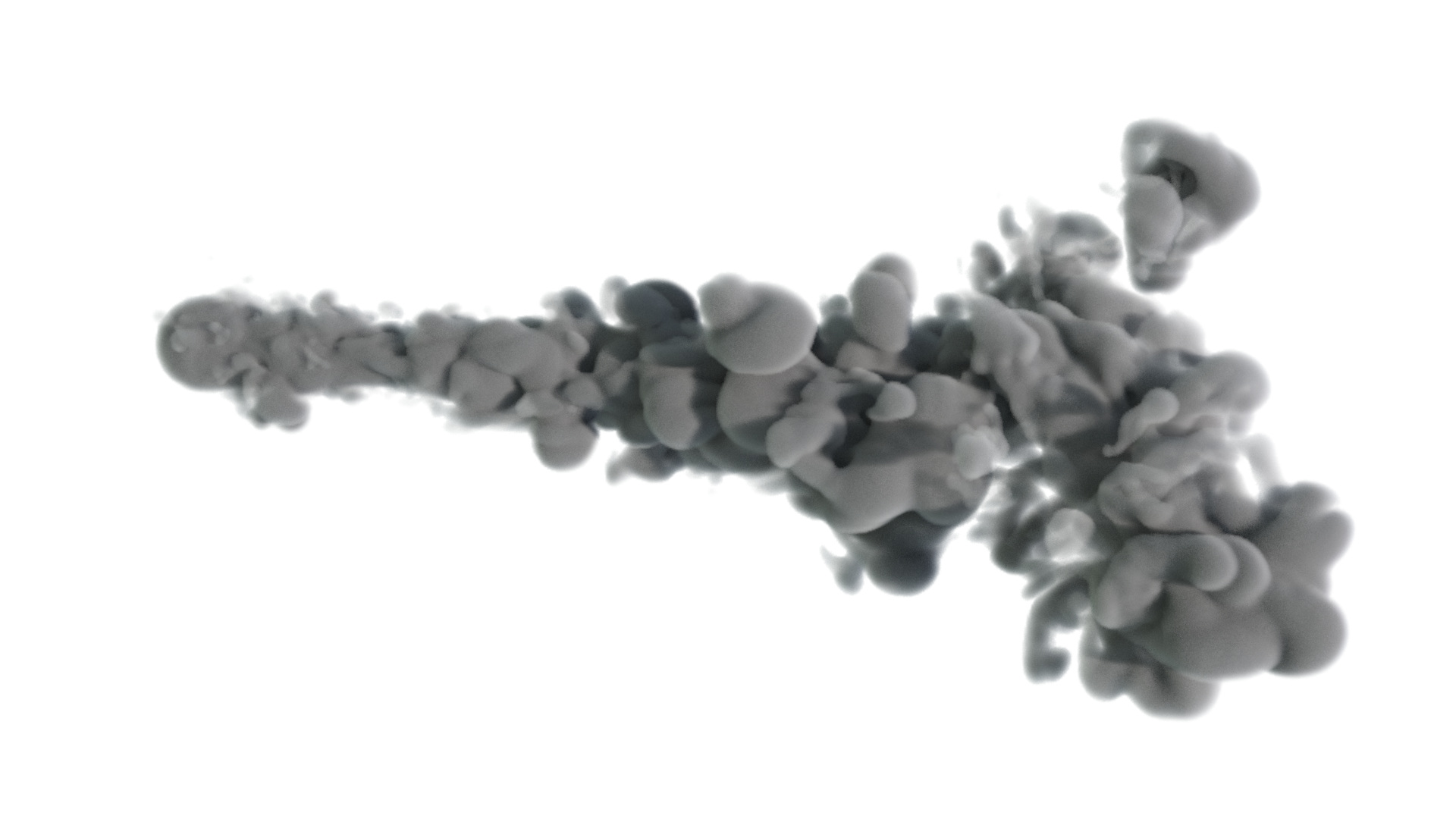}
    \includegraphics[trim={150px 140px 0 140px},clip,width=0.4\columnwidth,angle=90]{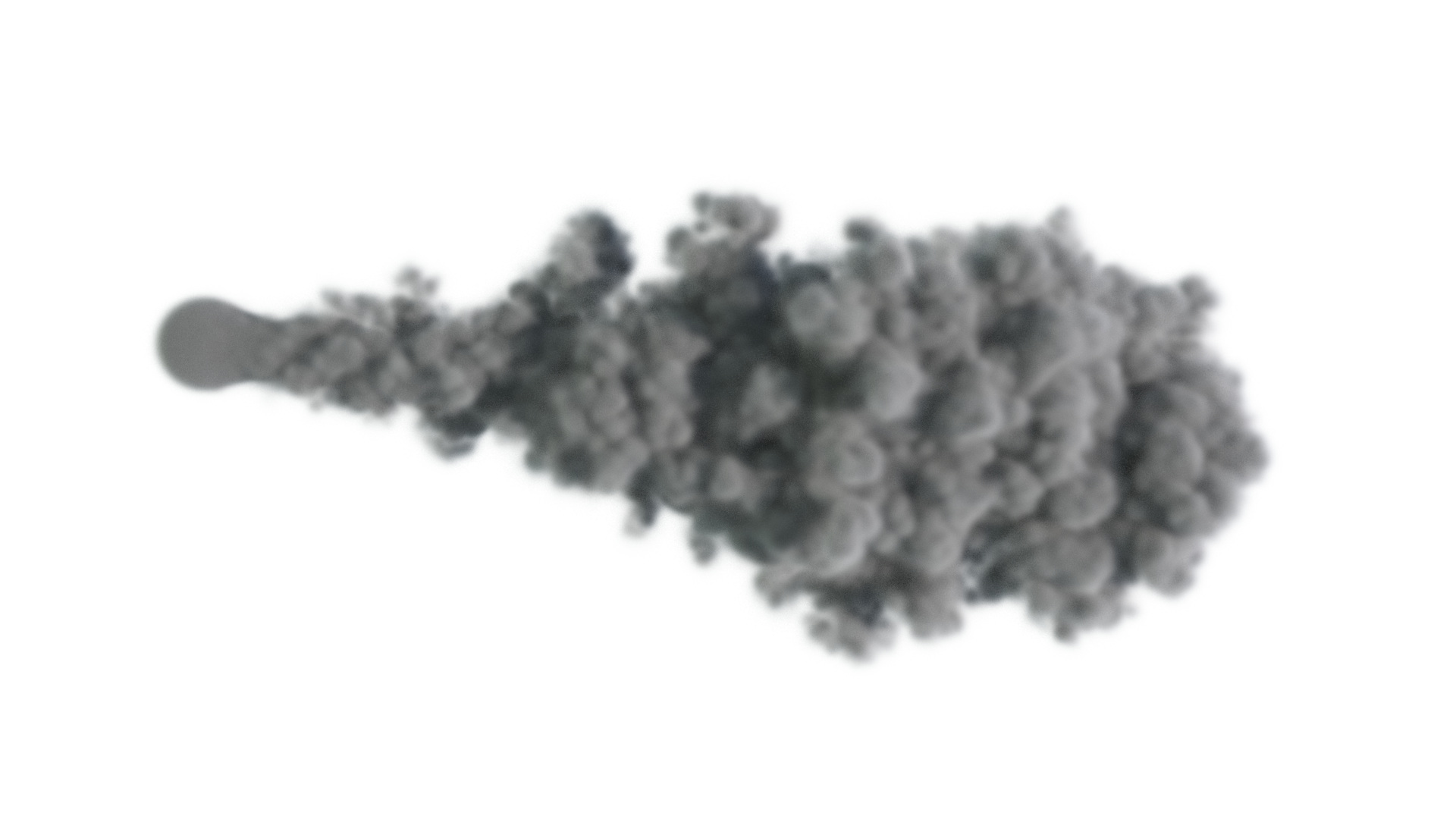}
    \\
    \includegraphics[trim={150px 140px 0 140px},clip,width=0.4\columnwidth,angle=90]{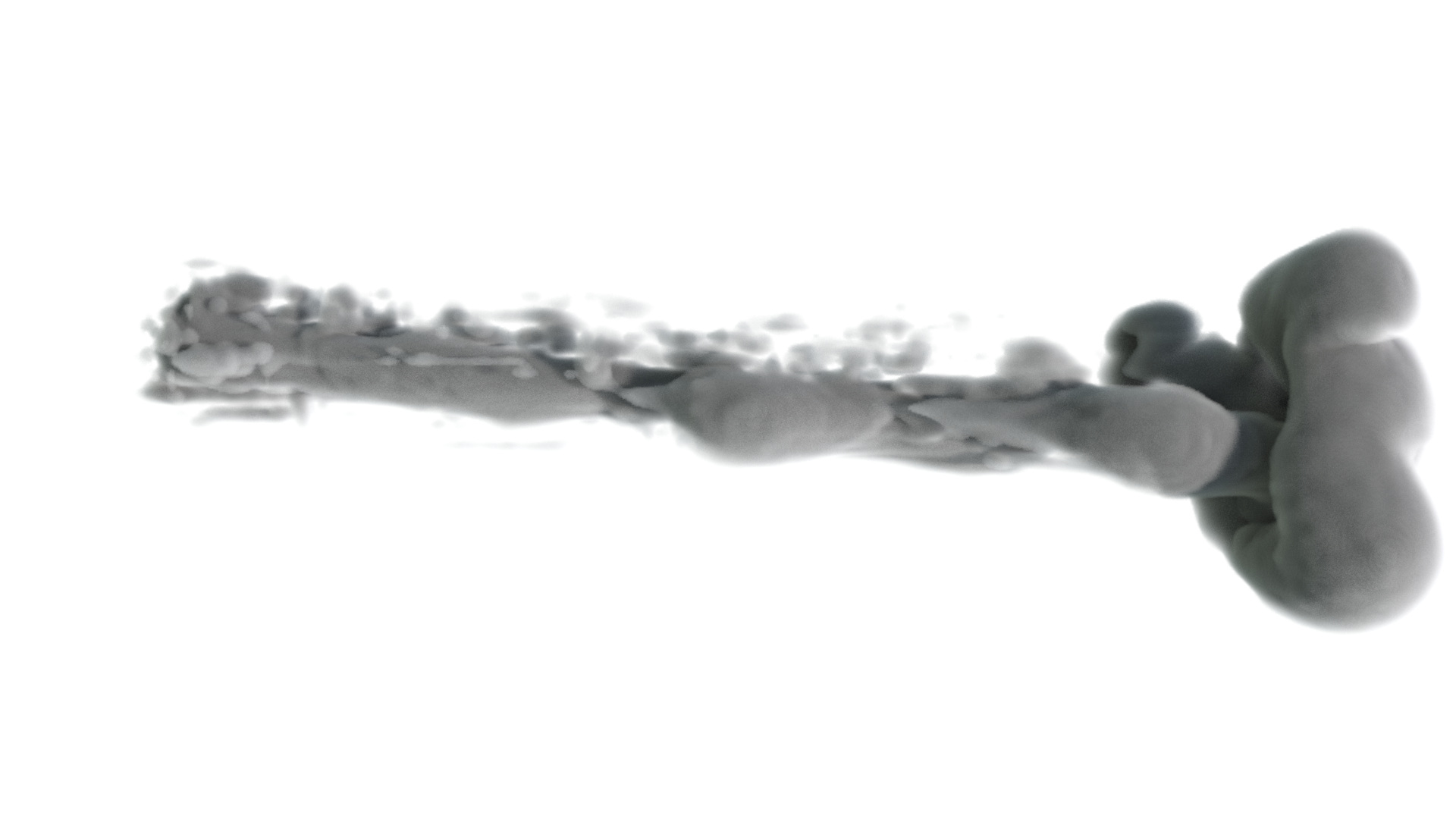}
    \includegraphics[trim={150px 140px 0 140px},clip,width=0.4\columnwidth,angle=90]{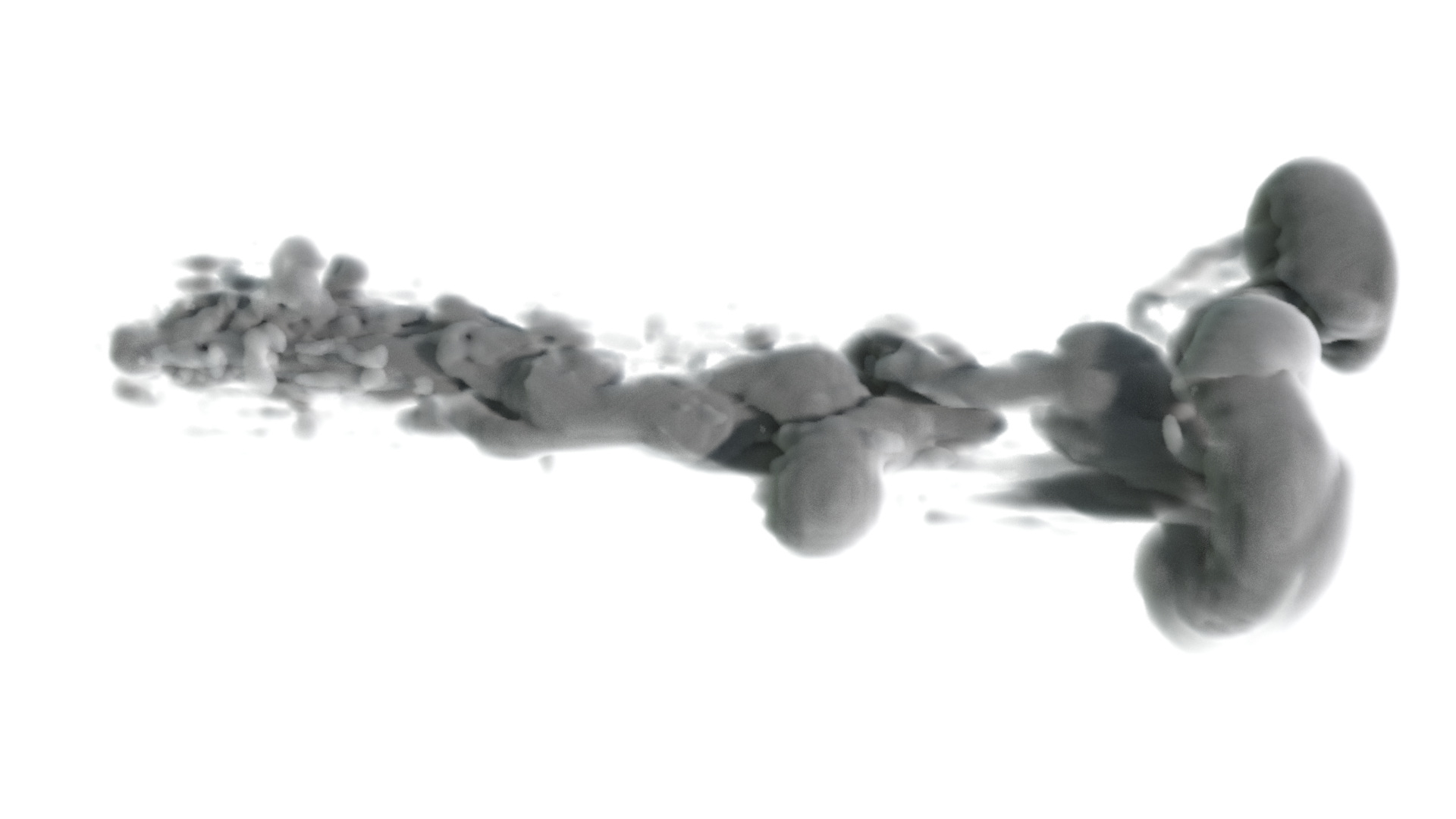}
    \includegraphics[trim={150px 140px 0 140px},clip,width=0.4\columnwidth,angle=90]{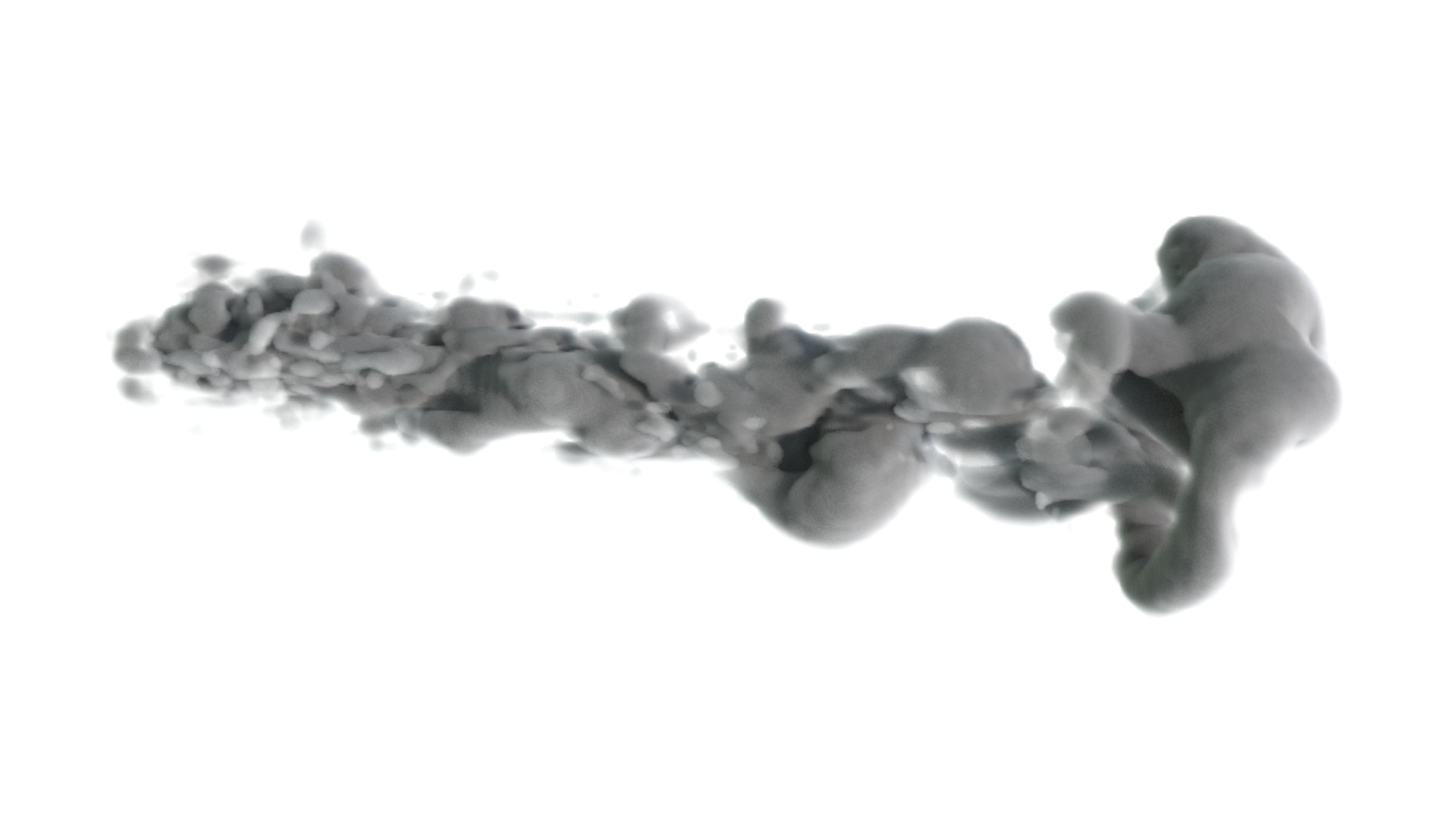}
    \includegraphics[trim={150px 140px 0 140px},clip,width=0.4\columnwidth,angle=90]{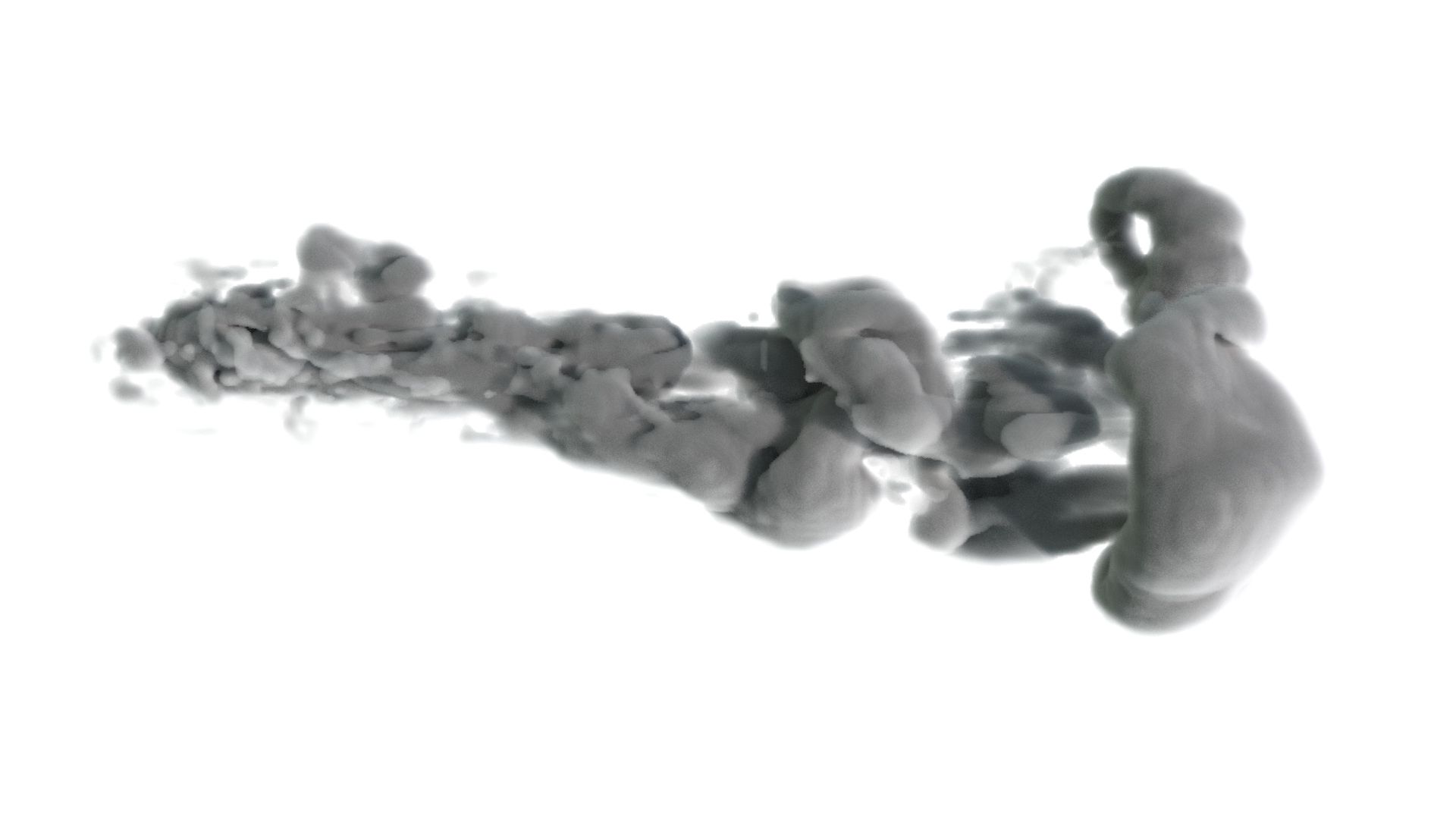}
    \includegraphics[trim={150px 140px 0 140px},clip,width=0.4\columnwidth,angle=90]{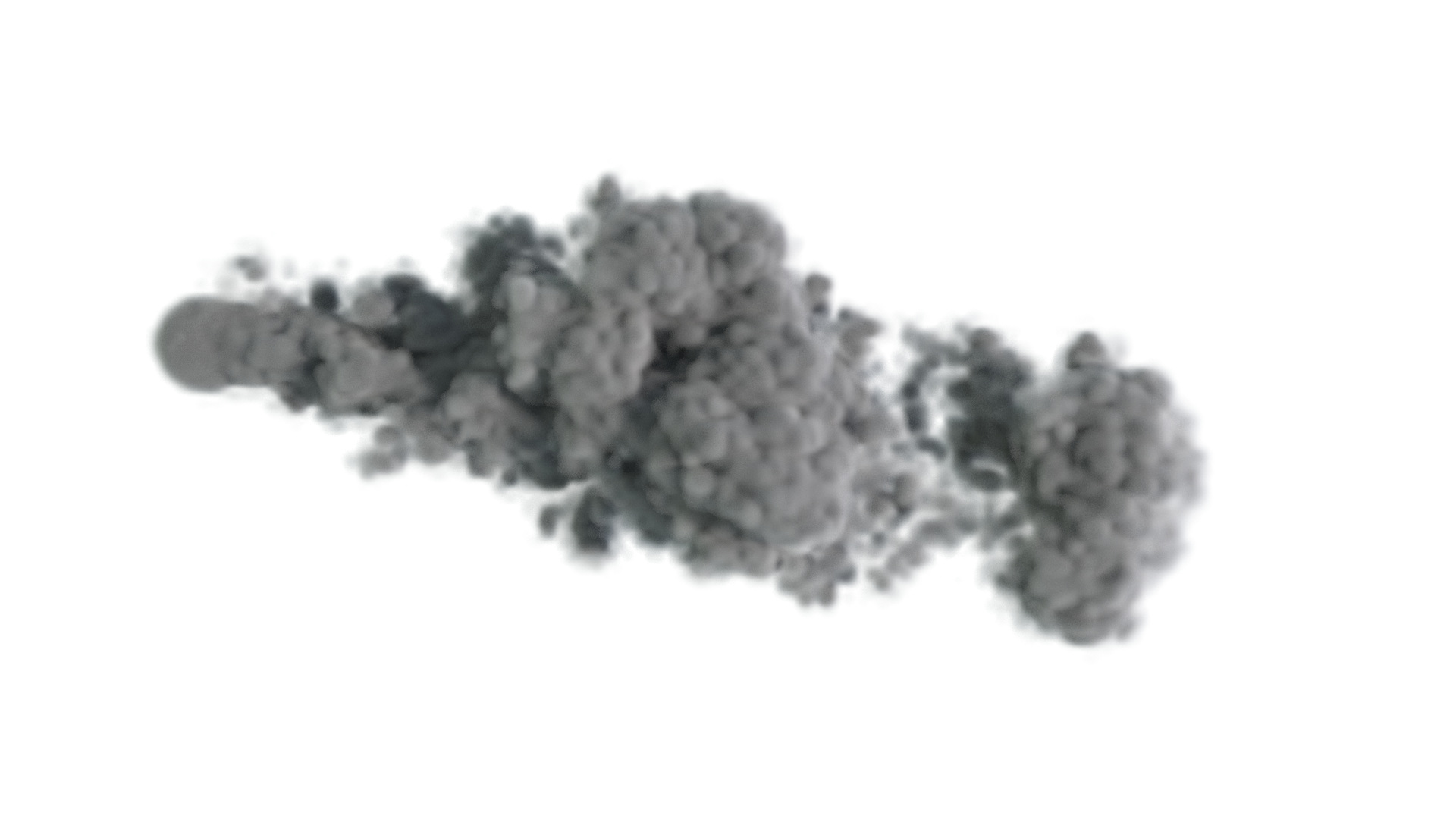}
    \\
    \begin{picture}(0,0)(0,0)
        \put(-113,225){\sffamily \footnotesize \rotatebox{90}{$96\times192\times96$}}
        \put(-113,140){\sffamily \footnotesize \rotatebox{90}{$64\times128\times64$}}
        \put(-113,5){\sffamily \footnotesize PolyPIC}
        \put(-67,5){\sffamily \footnotesize PolyFLIP}
        \put(-26,5){\sffamily \footnotesize CF+PolyFLIP}
        \put(24,5){\sffamily \footnotesize R+PolyFLIP}
        \put(68,5){\sffamily \footnotesize \textbf{CO-FLIP (Ours)}}
        \put(-113,40){\sffamily \footnotesize \rotatebox{90}{$32\times64\times32$}}
    \end{picture}

    \caption{Smoke plume 3D.
    At a fixed CFL number, with increasing spatiotemporal resolution, the vortical structures present in the smoke get denser and more abundant.
    Note that our method produces increased vortical features, reminiscing the fractal features seen in pyroclastic clouds.
    This behavior is present even at a very low resolution of $32\times64\times32$.
    }
    \label{fig:smokeplumes}
\end{figure}

\subsection{Properties of CO-FLIP}
\label{sec:PropertiesOfCOFLIP}
\begin{corollary}
    Solutions  \((\vec x(t),\vec u(t))\) to the CO-FLIP equation  \eqref{eq:COFLIPonM} on \(M\) are \(\cI\)-discrete Euler flow.  In particular it preserves the energy \((H_{\rm D}\circ\cI^\adjoint\circ J_{\adv})(\vec x(t),\vec u(t))\) and \(J_{\adv}(\vec x(t),\vec u(t))\) stays on a coadjoint orbit.
\end{corollary}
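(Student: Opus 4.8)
The plan is to recognize the CO-FLIP system \eqref{eq:COFLIPonM} as nothing but the concrete instantiation of the abstract Hamiltonian equation of motion \eqref{eq:IDiscreteEulerEquationSigma} under the choice $\Sigma = T^*Q$ made in \secref{sec:ChoiceForSymplecticSpace}, and then to harvest the conclusions already established for that abstract system. First I would regard a solution $(\vec x(t),\vec u(t))$ as a curve $s(t) = (\vec x(t),\vec u(t))\in\Sigma = T^*Q$ and check that \eqref{eq:COFLIPonM} matches \eqref{eq:IDiscreteEulerEquationSigma} term by term. For the advective part, the explicit formula \eqref{eq:LiftedAdvActionLieAlgebra} for the infinitesimal lifted action gives $\adv_{\vec v}(\vec x,\vec u) = (\vec v|_{\vec x}, -(\nabla\vec v)_{\vec x}^\intercal\vec u)$, whose two components are precisely the right-hand sides of \eqref{eq:COFLIPonMA} and \eqref{eq:COFLIPonMB}; hence $\tfrac{d}{dt}s = \adv_{\vec v}s$. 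For the velocity reconstruction, I would invoke the definition $\cI^+ = \star^{-1}\cI^\adjoint J_{\adv}$, so that \eqref{eq:COFLIPonMC} reads $\vec v = \cI\bP_{\fB_{\div}}\star^{-1}\cI^\adjoint J_{\adv}(\vec x,\vec u)$, which is exactly \eqref{eq:IDiscreteEulerEquationSigmaB}.

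Having identified $s(t)$ as a solution of \eqref{eq:IDiscreteEulerEquationSigma}, the remaining assertions follow by citing the results already proved for the Hamiltonian system $(\Sigma, H_\Sigma)$. By \thmref{thm:SigmaEmulates}, the pushed-forward curve $J_{\adv}\circ s$ solves the $\cI$-discrete Euler flow $(\fX_{\div}^*(W), H_{\rm D}\circ\cI^\adjoint)$, which is the first claim. For energy, I would note that $H_\Sigma = H_{\rm D}\circ\cI^\adjoint\circ J_{\adv}$ is the Hamiltonian of the symplectic system $(\Sigma, H_\Sigma)$ and is therefore a first integral of its own flow; equivalently, $(H_{\rm D}\circ\cI^\adjoint\circ J_{\adv})(s(t)) = (H_{\rm D}\circ\cI^\adjoint)(J_{\adv}s(t))$ is constant because $J_{\adv}s(t)$ traces the $\cI$-discrete Euler flow, which conserves $H_{\rm D}\circ\cI^\adjoint$ by the corollary following \thmref{thm:IDiscreteEulerFlow}. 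Finally, that $J_{\adv}(\vec x(t),\vec u(t))$ stays on a single coadjoint orbit is exactly the content of \corref{cor:JsOnCoadjointOrbit} applied to $s(t)$.

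Since every ingredient is already in place, there is no deep obstacle here; the corollary is essentially an assembly statement. The one point that warrants care is the clean matching in the first paragraph: one must confirm that the specific $\adv$ and $J_{\adv}$ of \secref{sec:ChoiceForSymplecticSpace} substitute into the abstract motion \eqref{eq:IDiscreteEulerEquationSigma} so as to reproduce \eqref{eq:COFLIPonM} verbatim, including the identification of the impulse $\vec u$ (a covector field along $\vec x$) with the representative $\eta$ via $\vec u = \eta\circ\vec x$ used in defining $J_{\adv}$. Once that identification is made explicit, the three prior results deliver the discrete-Euler-flow property, energy conservation, and coadjoint-orbit invariance immediately.
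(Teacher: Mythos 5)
Your proposal is correct and follows essentially the same route as the paper: the paper itself derives \eqref{eq:COFLIPonM} by substituting the cotangent-bundle setup of \secref{sec:ChoiceForSymplecticSpace} (via \eqref{eq:LiftedAdvActionLieAlgebra} and \lemref{lem:IDagger2}) into \eqref{eq:IDiscreteEulerEquationSigma}, so the corollary is indeed just an assembly of \thmref{thm:SigmaEmulates}, energy conservation of the \(\cI\)-discrete Euler flow, and \corref{cor:JsOnCoadjointOrbit}. Your explicit term-by-term verification of the identification, including \(\vec u = \eta\circ\vec x\), is exactly the care the paper's implicit argument relies on.
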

\begin{theorem}\label{thm:EnergyPreservingSigma}
    The integrator \(s^{(\sfn)}\to s^{(\sfn+1)}\) by \eqref{eq:ImplicitMidpointSigma} exactly preserves the energy \(\cE(s)\coloneqq H_{\rm D}\circ\cI^\adjoint\circ J_{\adv}(s)\) and exactly preserves the coadjoint orbit for \(J_{\div}(s)\).
\end{theorem}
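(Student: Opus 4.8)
The plan is to treat the two conservation statements separately, both exploiting that the update $s^{(\sfn)}\mapsto s^{(\sfn+1)}$ in \eqref{eq:ImplicitMidpointSigma} is an \emph{exact} application of the symplectomorphic advection $\Adv_{\varphi(\Deltait t)}$, where $\varphi(\Deltait t)\in\SDiff(W)$ is the flow map generated by the single fixed midpoint velocity $\vec v=\cI\bff^*$. The coadjoint-orbit claim is the easier one and follows immediately from the equivariance of the momentum map: since $s^{(\sfn+1)}=\Adv_{\varphi(\Deltait t)}s^{(\sfn)}$ with $\varphi(\Deltait t)$ volume-preserving and isotopic to the identity, $J_{\adv}s^{(\sfn+1)}$ is related to $J_{\adv}s^{(\sfn)}$ by pullback along $\varphi(\Deltait t)$, so by \propref{prop:CoadjointOrbitOnCirculations} the two lie on the same coadjoint orbit. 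This is the discrete-time analogue of \corref{cor:JsOnCoadjointOrbit}.

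For the energy I would first rewrite $\cE$ in grid coordinates. Setting $[\eta]^{(\sfn)}=J_{\adv}s^{(\sfn)}$ and $\bff^{(\sfn)}\coloneqq\sharp_{\fB_{\div}}\cI^\adjoint[\eta]^{(\sfn)}=\bP_{\fB_{\div}}\cI^+[\eta]^{(\sfn)}\in\fB_{\div}$ (and likewise for $\sfn+1$), the definition of $H_{\rm D}$ gives $\cE(s^{(\sfn)})=\tfrac12\langle\bff^{(\sfn)},\bff^{(\sfn)}\rangle_{\fB}$. By the difference-of-squares identity, energy preservation is equivalent to the orthogonality
\begin{align}
\langle\bff^{(\sfn+1)}-\bff^{(\sfn)},\bff^*\rangle_{\fB}=0,\quad \bff^*=\tfrac12(\bff^{(\sfn)}+\bff^{(\sfn+1)}),
\end{align}
where, crucially, the scheme defines the advecting velocity to be exactly $\vec v=\cI\bff^*$.

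The key step expresses the increment as a time integral along the advection path and collapses it using the Lie bracket. Let $s(t)=\Adv_{\varphi(t)}s^{(\sfn)}$ for $t\in[0,\Deltait t]$, $[\eta](t)=J_{\adv}s(t)$, and $\bff(t)=\bP_{\fB_{\div}}\cI^+[\eta](t)$, so $\bff(0)=\bff^{(\sfn)}$ and $\bff(\Deltait t)=\bff^{(\sfn+1)}$. Equivariance of $J_{\adv}$ along the fixed-$\vec v$ flow gives the Lie advection $\tfrac{\partial}{\partial t}[\eta](t)=-\LD_{\vec v}[\eta](t)$, hence $\dot{\bff}(t)=-\bP_{\fB_{\div}}\cI^+\LD_{\vec v}[\eta](t)$ by linearity. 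Using self-adjointness of $\bP_{\fB_{\div}}$, $\bff^*\in\fB_{\div}$, and the adjoint identity $\langle\cI^+\omega,\bg\rangle_{\fB}=\llangle\omega\,|\,\cI\bg\rrangle$, I get
\begin{align}
\langle\dot{\bff}(t),\bff^*\rangle_{\fB}=-\llangle\LD_{\vec v}[\eta](t)\,|\,\cI\bff^*\rrangle=-\llangle\LD_{\vec v}[\eta](t)\,|\,\vec v\rrangle .
\end{align}
Since the adjoint/coadjoint duality writes $\llangle\LD_{\vec v}[\eta]\,|\,\vec w\rrangle$ in terms of the Lie bracket $[\vec v,\vec w]$ paired against $[\eta]$, setting $\vec w=\vec v$ and using $[\vec v,\vec v]=0$ makes the integrand vanish for every $t$; integrating over $[0,\Deltait t]$ yields the orthogonality above and hence $\cE(s^{(\sfn+1)})=\cE(s^{(\sfn)})$.

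The main obstacle is conceptual rather than computational: one must recognize that the implicit trapezoidal choice $\vec v=\cI\bff^*$ is precisely what forces the advecting generator to coincide with the covector $\cI\bff^*$ we pair against, so that the antisymmetry $\llangle\LD_{\vec v}[\eta]\,|\,\vec v\rrangle=0$ can be invoked; for a generic explicit velocity the integrand would not cancel, which is consistent with the numerical observations in \figref{fig:implicit_vs_explicit_with_eps_proj}. A secondary technical point to check is that $\LD_{\vec v}$ descends to the quotient $\fX_{\div}^*(W)=\Omega^1(W)/d\Omega^0(W)$ and that $\llangle\LD_{\vec v}[\eta]\,|\,\vec v\rrangle$ is independent of the representative, which follows from $\LD_{\vec v}(d\Omega^0)\subset d\Omega^0$ together with the divergence-freeness and no-through boundary condition on $\vec v$.
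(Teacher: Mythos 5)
Your proof is correct and is essentially the paper's own argument (\appref{app:EnergyPreservingSigma}): coadjoint preservation by showing \(J_{\adv}(s_t)\) evolves by Lie advection under the frozen midpoint velocity, and energy conservation via difference of squares reducing to \(\llangle\LD_{\vec v}[\eta]\,|\,\vec v\rrangle = -\llangle[\eta]\,|\,[\vec v,\vec v]\rrangle = 0\). The execution differs only in bookkeeping: you write \(\bff^{(\sfn+1)}-\bff^{(\sfn)}\) as a time integral of \(\dot\bff(t)\) whose integrand vanishes pointwise, where the paper expands \(\exp(-\Deltait t\,\LD_{\vec v})\) as an operator series and factors \([\eta_1]-[\eta_0]=\LD_{\vec v}[\tilde\eta]\); and you collapse \(\langle\bP_{\fB_{\div}}\cI^+(\cdot),\bff^*\rangle_{\fB}\) directly via self-adjointness of \(\bP_{\fB_{\div}}\) and the identity \(\langle\cI^+\omega,\bg\rangle_{\fB}=\llangle\omega\,|\,\cI\bg\rrangle\), where the paper routes through \thmref{thm:PressureProjectionIsExact} and a double-projection argument --- your version is arguably cleaner and avoids the formal series. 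The one step you assert rather than prove is the infinitesimal equivariance \(\tfrac{\partial}{\partial t}J_{\adv}(s_t) = -\LD_{\vec v}J_{\adv}(s_t)\), on which both halves of your proof rest: for a general symplectomorphic action a momentum map is equivariant only up to a cocycle, and showing the cocycle vanishes here is precisely the content of the paper's lemma \eqref{eq:FormulaWithAdvSigmaJLieBracket}, proved via Cartan's formula together with \(\adv\) being a Lie algebra (anti)homomorphism into symplectomorphic vector fields (for the concrete cotangent-bundle \(\Sigma\) of \secref{sec:ChoiceForSymplecticSpace} it also follows from the explicit formula \(\vec u=\eta\circ\vec x\)). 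With that lemma supplied, your argument goes through verbatim; your closing check that \(\LD_{\vec v}\) descends to the quotient \(\Omega^1(W)/d\Omega^0(W)\) is correct and worth stating.
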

\begin{proof}
    \appref{app:EnergyPreservingSigma}
\end{proof}
\begin{corollary}
\label{cor:ImplicitMidpointOnMIsEnergyPreserving}
    The integrator \eqref{eq:ImplicitMidpointOnM} exactly preserves the energy and the coadjoint orbit.
\end{corollary}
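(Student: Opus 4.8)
The plan is to recognize Corollary~\ref{cor:ImplicitMidpointOnMIsEnergyPreserving} as nothing more than Theorem~\ref{thm:EnergyPreservingSigma} read off in the concrete symplectic space $\Sigma = T^*Q$ constructed in \secref{sec:ChoiceForSymplecticSpace}. Theorem~\ref{thm:EnergyPreservingSigma} already shows that the abstract Lie group integrator \eqref{eq:ImplicitMidpointSigma} exactly conserves the energy $\cE(s) = H_{\rm D}\circ\cI^\adjoint\circ J_{\adv}(s)$ and keeps $J_{\adv}(s)$ on a fixed coadjoint orbit. Therefore the entire content of the proof is to verify that the update \eqref{eq:ImplicitMidpointOnM}, written in the variables $s = (\vec x,\vec u) \in T^*Q$, is literally the same map as \eqref{eq:ImplicitMidpointSigma}. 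Once this identification is made, preservation of $\cE(\vec x,\vec u)$ and of the coadjoint orbit of $J_{\adv}(\vec x,\vec u)$ is immediate.

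First I would match the advection line. With $s = (\vec x,\vec u)$ and the explicit lifted action $\Adv_{\vec v}^{\Deltait t} = \exp(\Deltait t\,\adv_{\vec v}) = \Adv_{\varphi(\Deltait t)}$ from \eqref{eq:LiftedAdvAction}, equation \eqref{eq:ImplicitMidpointOnMA} is verbatim the first line of \eqref{eq:ImplicitMidpointSigma}; this step is just unwinding the notation for $\Adv$ on $T^*Q$ and requires no further argument.

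The one place that needs care is the velocity-generation line. Here I would use the definition $\cI^+ = \star^{-1}\cI^\adjoint J_{\adv}$ together with the linearity of the transfer operator $\bP_{\fB_{\div}}\star^{-1}\cI^\adjoint$ on $\fX_{\div}^*(W)$. The subtlety worth flagging is that the momentum map $J_{\adv}$ is \emph{not} linear in $(\vec x,\vec u)$, so one cannot form a midpoint by averaging the two states inside $\Sigma$; in \eqref{eq:ImplicitMidpointSigma} the averaging genuinely takes place in the linear dual space $\fX_{\div}^*(W)$, after the images $J_{\adv}s^{(\sfn)}$ and $J_{\adv}s^{(\sfn+1)}$ have been formed. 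I would then pull the $\tfrac12$-average through the linear map, writing
\begin{align*}
    \bP_{\fB_{\div}}\cI^+\!\Big(\tfrac12\big(J_{\adv}s^{(\sfn)} + J_{\adv}s^{(\sfn+1)}\big)\Big)
    = \tfrac12\Big(\bP_{\fB_{\div}}\star^{-1}\cI^\adjoint J_{\adv}s^{(\sfn)} + \bP_{\fB_{\div}}\star^{-1}\cI^\adjoint J_{\adv}s^{(\sfn+1)}\Big),
\end{align*}
and recognize each summand on the right as $\bP_{\fB_{\div}}\cI^+(\vec x,\vec u)^{(\sfn)}$ and $\bP_{\fB_{\div}}\cI^+(\vec x,\vec u)^{(\sfn+1)}$ in the sense of \eqref{eq:COFLIPonMC}. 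Applying the outer $\cI$ reproduces \eqref{eq:ImplicitMidpointOnMB} exactly. This commutation of averaging with the fixed linear operator $\bP_{\fB_{\div}}\star^{-1}\cI^\adjoint$ is the heart of the reduction.

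Having matched both lines, \eqref{eq:ImplicitMidpointOnM} is exactly the $\Sigma = T^*Q$ instance of \eqref{eq:ImplicitMidpointSigma}, so Theorem~\ref{thm:EnergyPreservingSigma} applies directly and yields both conclusions. The main obstacle, such as it is, is purely bookkeeping: keeping straight which objects live in $\Sigma$ (where the dependence on $(\vec x,\vec u)$ through $J_{\adv}$ is nonlinear) versus in the linear space $\fX_{\div}^*(W)$ or $\fB_{\div}$, so that the midpoint average is taken in the correct, linear, space. I expect no genuine difficulty beyond this tracking of linearity.
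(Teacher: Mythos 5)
Your proposal is correct and follows the paper's own route: the paper obtains this corollary precisely by observing (just before \eqref{eq:ImplicitMidpointSigma}) that \eqref{eq:ImplicitMidpointOnM} is the $\Sigma = T^*Q$ instance of the Lie group integrator \eqref{eq:ImplicitMidpointSigma}, to which Theorem~\ref{thm:EnergyPreservingSigma} applies. Your explicit check that the midpoint average commutes with the linear operator $\bP_{\fB_{\div}}\star^{-1}\cI^\adjoint$ (averaging in $\fX_{\div}^*(W)$ rather than in the nonlinear variables $(\vec x,\vec u)$) is exactly the bookkeeping the paper leaves implicit, correctly spelled out.
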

\begin{theorem}\label{thm:COFLIPCoadjointPreserving}
    The finite dimensional CO-FLIP ODE system \eqref{eq:COFLIP-ODE} and the integrator \eqref{eq:ImplicitMidpoint} preserves the coadjoint orbit.
\end{theorem}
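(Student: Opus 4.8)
The plan is to recognize the finite-particle system as yet another instance of the auxiliary-symplectic-space construction of \secref{sec:AuxiliarySymplecticSpace} and \secref{sec:ChoiceForSymplecticSpace}, with the continuous material space $M$ replaced by the finite index set $\cP$ carrying the weights $\mu_\sfp$. The conceptual point I want to isolate — and the reason the coadjoint-orbit statement survives the particle discretization whereas exact energy conservation does not (\cf\@ \corref{cor:ImplicitMidpointOnMIsEnergyPreserving}) — is that coadjoint-orbit preservation depends \emph{only} on the trajectory being generated by the advection action of a time-dependent $\vec v(t)\in\fX_{\div}(W)$, and not on how $\vec v(t)$ is computed. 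In particular it is insensitive to the replacement of the exact metric $\star$ by the particle metric hidden inside $\hat\cI^+$; only the Hamiltonian, hence the energy, feels that change.

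Concretely, I would first instantiate the finite-particle phase space $\Sigma_\cP = T^*(W^{|\cP|})$ with Liouville form $\vartheta_{(\vec x,\vec u)}\llbracket(\mathring{\vec x},\mathring{\vec u})\rrbracket = \sum_{\sfp\in\cP}\langle\vec u_\sfp|\mathring{\vec x}_\sfp\rangle\mu_\sfp$ and symplectic form $\sigma = d\vartheta$, and restrict the lifted action \eqref{eq:LiftedAdvAction} to $\cP$, so that $\Adv_\varphi(\vec x,\vec u) = (\varphi\circ\vec x,(d\varphi^{-1})^*\vec u)$. Exactly as in the continuum, this action preserves $\vartheta$ and is therefore symplectomorphic, with infinitesimal generator \eqref{eq:LiftedAdvActionLieAlgebra}. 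Its momentum map $J_{\adv}^\cP\colon\Sigma_\cP\to\fX_{\div}^*(W)$ is $J_{\adv}^\cP(\vec x,\vec u) = [\eta]$, characterized by $\llangle[\eta]|\vec v\rrangle = \sum_{\sfp\in\cP}\langle\vec u_\sfp|\vec v|_{\vec x_\sfp}\rangle\mu_\sfp$. The one computation I must carry out is the equivariance: using $\varphi\in\SDiff(W)$ and a change of variables that is valid precisely because $\varphi$ is volume preserving, one finds $J_{\adv}^\cP(\Adv_\varphi(\vec x,\vec u)) = [(\varphi^{-1})^*\eta]$, so $J_{\adv}^\cP$ intertwines the advection action on $\Sigma_\cP$ with the map $[\eta]\mapsto[(\varphi^{-1})^*\eta]$ on $\fX_{\div}^*(W)$, whose orbits are precisely the coadjoint orbits by \propref{prop:CoadjointOrbitOnCirculations}.

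With this in hand both claims follow directly. For the ODE \eqref{eq:COFLIP-ODE}, lines \eqref{eq:COFLIP-ODEa}--\eqref{eq:COFLIP-ODEb} say exactly that $\tfrac{d}{dt}(\vec x,\vec u) = \adv_{\vec v(t)}(\vec x,\vec u)$, and \eqref{eq:COFLIP-ODEc} together with $\cI(\fB_{\div})\subset\fX_{\div}(W)$ \eqref{eq:cIOnBdiv} guarantees $\vec v(t)\in\fX_{\div}(W)$ for all $t$. Hence the time-$t$ flow is $\Adv_{\varphi_t}$ for $\varphi_t\in\SDiff(W)$ the identity-isotopic flow of $\vec v$, and equivariance gives $J_{\adv}^\cP(\vec x(t),\vec u(t)) = [(\varphi_t^{-1})^*\eta_0]$, which lies on the coadjoint orbit of $J_{\adv}^\cP(\vec x(0),\vec u(0)) = [\eta_0]$. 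For the integrator \eqref{eq:ImplicitMidpoint}, each step is a single group action $\by^{(\sfn+1)} = \Adv^{\dt}_{\cI\bff^*}\by^{(\sfn)} = \Adv_\varphi\by^{(\sfn)}$ with the \emph{fixed} field $\cI\bff^*\in\cI(\fB_{\div})\subset\fX_{\div}(W)$, so $\varphi\in\SDiff(W)$ and equivariance sends $J_{\adv}^\cP(\by^{(\sfn)})$ and $J_{\adv}^\cP(\by^{(\sfn+1)})$ to the same orbit; an induction over $\sfn$ closes the argument. This is \corref{cor:JsOnCoadjointOrbit} and the coadjoint-orbit half of \thmref{thm:EnergyPreservingSigma} applied to $\Sigma_\cP$, and I would emphasize that the fact that $\bff^*$ solves the implicit relation \eqref{eq:ImplicitMidpointB} is never used — any $\bff^*\in\fB_{\div}$ preserves orbits.

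The main obstacle, and really the only nontrivial verification, is confirming that $\Sigma_\cP$ with the weighted Liouville form faithfully reproduces the symplectic and momentum-map structure of \secref{sec:ChoiceForSymplecticSpace}: that the weights $\mu_\sfp$ enter consistently in both $\vartheta$ and $J_{\adv}^\cP$, that the restricted action remains symplectomorphic on the finite-dimensional $\Sigma_\cP$, and that the equivariance computation is airtight. Once these are in place the argument is purely structural and, crucially, independent of $\star$ versus the particle metric, which is exactly what makes coadjoint-orbit conservation an \emph{exact} property of the discrete scheme rather than a continuum-limit one.
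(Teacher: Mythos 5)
Your proposal is correct in substance and reaches the theorem by a genuinely different instantiation than the paper. The paper's proof is two lines: it observes that \eqref{eq:COFLIP-ODEa}--\eqref{eq:COFLIP-ODEb} is advection by a genuine element \(\vec v(t)\in\fX_{\div}(W)\) --- hence a coadjoint action --- and that each integrator step \eqref{eq:ImplicitMidpointA} is a single group action \(\Adv_{\varphi(\Deltait t)}\) generated by the fixed field \(\cI\bff^*\in\fX_{\div}(W)\), which is handled verbatim by the computation in \secref{sec:CoadjointOrbitConservationProof}. Implicitly the paper stays on the continuum phase space \(\Sigma=T^*Q\): since the advection ODEs hold for every subset \(\cP\subset M\), the particle trajectories are restrictions of a continuum flow, and the coadjoint statement concerns \(J_{\adv}\) of that continuum state. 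You instead build a finite-dimensional symplectic space \(\Sigma_\cP=T^*(W^{|\cP|})\) with a weighted Liouville form and prove equivariance of a discrete momentum map directly. Both arguments hinge on the same structural fact, which you isolate correctly: orbit preservation uses only that the trajectory is generated by advection along some \(\vec v(t)\in\fX_{\div}(W)\), not the implicit relation \eqref{eq:ImplicitMidpointB}, and not whether \(\vec v\) was computed with \(\star\) or the particle metric --- which is exactly why coadjoint preservation is exact for the discrete scheme while energy conservation is only a continuum-limit statement. Your route buys a self-contained finite-dimensional argument; the paper's buys brevity by reusing the machinery of \thmref{thm:EnergyPreservingSigma}.

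One point needs more care than your write-up gives it: the functional \(\vec v\mapsto\sum_{\sfp\in\cP}\langle\vec u_\sfp|\vec v|_{\vec x_\sfp}\rangle\mu_\sfp\) is a weighted sum of point masses, so there is no smooth \(\eta\) with \(J_{\adv}^\cP(\vec x,\vec u)=[\eta]\) in \(\fX_{\div}^*(W)=\Omega^1(W)/d\Omega^0(W)\) as the paper defines it; your momentum map lands in a distributional enlargement of the dual, and as written the sentence ``\(J_{\adv}^\cP(\vec x,\vec u)=[\eta]\)'' is not literally meaningful in the paper's framework. This is fixable in either of two ways: carry out your equivariance computation in the space of 1-currents modulo exact ones, where coadjoint orbits still make sense (note the computation is the pointwise pullback identity \(\langle(d\varphi^{-1})^*\vec u_\sfp|\vec v|_{\varphi(\vec x_\sfp)}\rangle=\langle\vec u_\sfp|((\varphi^{-1})_*\vec v)|_{\vec x_\sfp}\rangle\), which needs no integral change of variables; volume preservation enters only to identify this dual action with the coadjoint action of \(\SDiff(W)\) restricted to div-free test fields), or follow the paper and read the theorem as a statement about the continuum field \((\vec x,\vec u)\colon M\to T^*W\) of which the particles are samples. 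Either patch closes the argument.
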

\begin{proof}
    \appref{app:COFLIPCoadjointPreserving}
\end{proof}
    The systems \eqref{eq:COFLIP-ODE} and \eqref{eq:ImplicitMidpoint} preserves the coadjoint orbit.  However the energy can drift depending on the approximation of the integral \(\int_M|\cdot|^2\,d\mu\) by the summation \(\sum_{\sfp\in\cP}|(\cdot)_{\sfp}|^2\mu_{\sfp}\), as described in \secref{sec:EnergyBasedCorrection}.
\begin{theorem}\label{thm:EnergyCorrection}
    The method \eqref{eq:EnergyCorrection} exactly preserves the coadjoint orbit and the energy.
\end{theorem}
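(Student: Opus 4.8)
The plan is to prove the two assertions separately, exploiting the fact that the correction step \eqref{eq:EnergyCorrectionC} modifies only the auxiliary grid velocity \(\bff^{(\sfn+1)}\) and never the particle state \(\by^{(\sfn+1)}\). Coadjoint-orbit preservation then reduces to the argument already used in \thmref{thm:COFLIPCoadjointPreserving}, while energy preservation follows directly from the orthogonal projection built into \eqref{eq:EnergyCorrectionC}. Throughout I assume a solution \((\by^{(\sfn+1)},\bff^{(\sfn+1)})\) of the implicit system \eqref{eq:EnergyCorrection} has been obtained, e.g.\ by the fixed point iteration of Algorithm~2.

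For the energy, write \(\Deltait\bff = \bff^{(\sfn+1)} - \bff^{(\sfn)}\) and, by \eqref{eq:EnergyCorrectionB}, \(2\tilde\bff^* = \bff^{(\sfn+1)} + \bff^{(\sfn)}\). The difference-of-squares identity in the inner product \(\langle\cdot,\cdot\rangle_{\fB}\) gives \(|\bff^{(\sfn+1)}|^2_{\fB} - |\bff^{(\sfn)}|^2_{\fB} = \langle \Deltait\bff,\, 2\tilde\bff^*\rangle_{\fB}\). Now \eqref{eq:EnergyCorrectionC} sets \(\Deltait\bff = \bP_{\tilde\bff^{*\bot}}\bigl(\cF(\by^{(\sfn+1)}) - \bff^{(\sfn)}\bigr)\), and by the very definition of the rank-one projection \(\bP_{\tilde\bff^{*\bot}}\) its image is \(\langle\cdot,\cdot\rangle_{\fB}\)-orthogonal to \(\tilde\bff^*\), so \(\langle \Deltait\bff,\tilde\bff^*\rangle_{\fB} = 0\). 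Hence \(|\bff^{(\sfn+1)}|^2_{\fB} = |\bff^{(\sfn)}|^2_{\fB}\). Since the preserved energy is \(\cE = \tfrac12 |\bff^{(\sfn)}|^2_{\fB}\), which by \eqref{eq:InnerProductOnB} equals the exact \(L^2\) kinetic energy \(\tfrac12\llangle\cI\bff^{(\sfn)},\cI\bff^{(\sfn)}\rrangle\) of the interpolated grid velocity, this is precisely the claimed energy conservation.

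For the coadjoint orbit I would first establish by induction that every \(\bff^{(\sfn)}\) lies in \(\fB_{\div}\). The base case is \(\bff^{(0)} = \cF(\by^{(0)})\in\fB_{\div}\) because \(\cF = \bP_{\fB_{\div}}\hat\cI^+\). For the step, both \(\bff^{(\sfn)}\) and \(\cF(\by^{(\sfn+1)})\) lie in the subspace \(\fB_{\div}\), so their difference does too, and \eqref{eq:EnergyCorrectionC} only subtracts a scalar multiple of \(\tilde\bff^*\in\fB_{\div}\); hence \(\bff^{(\sfn+1)}\in\fB_{\div}\). Consequently \(\tilde\bff^*\in\fB_{\div}\) and the advecting field \(\cI\tilde\bff^*\) is genuinely divergence-free, so its time-\(\Deltait t\) flow is a volume-preserving diffeomorphism \(\varphi\in\SDiff(W)\), and \eqref{eq:EnergyCorrectionA} reads \(\by^{(\sfn+1)} = \Adv_{\varphi}\by^{(\sfn)}\). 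Coadjoint equivariance of the momentum map \(J_{\adv}\)—the same property invoked in \corref{cor:JsOnCoadjointOrbit} and \thmref{thm:COFLIPCoadjointPreserving}—then shows that \(J_{\adv}\by^{(\sfn+1)}\) and \(J_{\adv}\by^{(\sfn)}\) differ by the coadjoint action of \(\varphi\) and therefore lie on the same coadjoint orbit in \(\fX_{\div}^*(W)\).

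The main obstacle is not either verification in isolation—both are short once decoupled—but confirming that the two modifications are mutually compatible: replacing \(\cF(\by^{(\sfn+1)})\) by the energy-corrected \(\bff^{(\sfn+1)}\) must keep the advecting velocity inside \(\cI(\fB_{\div})\), so that volume preservation (hence coadjoint-orbit preservation) survives, while simultaneously enforcing the orthogonality that yields exact energy conservation. This is exactly the \(\fB_{\div}\)-closedness established by the induction above, and it is what guarantees energy is preserved \emph{without} sacrificing the coadjoint structure. A secondary, standard caveat is that the statement presupposes existence of a fixed point of the implicit system \eqref{eq:EnergyCorrection}; granting such a solution, the two conservation identities hold exactly.
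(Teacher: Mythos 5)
Your proof is correct and takes essentially the same route as the paper: the energy claim is exactly the paper's difference-of-squares argument, using that \eqref{eq:EnergyCorrectionC} makes \(\bff^{(\sfn+1)}-\bff^{(\sfn)}\) orthogonal to \(\tilde\bff^*=\tfrac12(\bff^{(\sfn+1)}+\bff^{(\sfn)})\), and the coadjoint claim is the same advection/momentum-map argument the paper invokes through \thmref{thm:COFLIPCoadjointPreserving}, since the correction touches only the grid variable and the particle state is updated purely by \(\Adv\). Your explicit induction showing \(\bff^{(\sfn)}\in\fB_{\div}\) (hence \(\tilde\bff^*\in\fB_{\div}\), so the advecting field \(\cI\tilde\bff^*\) is genuinely divergence-free) fills in a detail the paper leaves implicit, but it does not change the approach; only note that \(\bP_{\tilde\bff^{*\bot}}\) is a co-rank-one projection onto \(\operatorname{span}(\tilde\bff^*)^\bot\), not a rank-one one, a harmless slip of wording.
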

\begin{proof}
    \appref{app:EnergyCorrection}.
\end{proof}

\section{Theory Part 2: Mimetic Interpolation}
\label{sec:Theory2}

The theory of \secref{sec:Theory1} and the resulting method of \secref{sec:Method} rely heavily on a divergence-free interpolation scheme \(\cI\colon\fB\hookrightarrow\fX(W)\), \(\cI\colon\fB_{\div}\hookrightarrow\fX_{\div}(W)\).
In this section, we describe the properties and construction of \(\cI\) using B-splines.

\paragraph{Highlights} A special quality of our divergence-free interpolation is that the evaluation is explicit and local.  This is in contrast to the recent divergence-free interpolation methods \cite{Chang:2022:CurlFlow,Lyu:2024:WPE} introduced in fluid simulations in computer graphics that require a global streamfunction integration.  
Moreover, our interpolation scheme is accompanied with a family of interpolation schemes that are curl-consistent and gradient-consistent.  These interpolations are useful for implementation details such as building various versions of pressure solvers.

\paragraph{Overview} We employ a family of \textbf{mimetic interpolation} schemes, widely studied in isogeometric analysis and finite element exterior calculus.  These interpolation methods \(\cI_k\) interpolate differential \(k\) forms and commute with the exterior derivative operators.  
Our divergence-free interpolation is the interpolation \(\cI_{n-1}\) for \((n-1)\)-forms (flux forms), followed by the canonical conversion between a flux form and a vector field \(\ip_{\vec v}(d\mu)\in\Omega^{n-1}(W)\leftrightarrow \vec v\in\fX(W)\).
Closed flux form translates to divergence-free vector fields.

\subsection{Mimetic interpolation}
Let 
\begin{align}
    C^0\xrightarrow{\bd_0}C^1\xrightarrow{\bd_1}C^2\xrightarrow{\bd_2}\cdots\xrightarrow{\bd_{n-1}}C^n
\end{align}
be a real-coefficient cochain complex, where each \(C^k\) is a finite dimensional real vector space.%
\footnote{Do not confuse the cochain space \(C^k\) with the space \(\continuity^k\) of \(k\)-time continuously differentiable functions.}
Each cochain space \(C^k\) represents the space of discrete \(k\)-forms, and the operators \(\bd_k\) that satisfy the cochain condition \(\bd_{k+1}\bd_k = \bzero\) are discrete exterior derivatives.
\begin{definition}[Mimetic interpolation]
    A family of interpolation schemes  $\cI_k:C^k\to\Omega^k(W)$ is \emph{mimetic} if it commutes with the exterior derivatives:
    \begin{align}
        d_k(\cI_k\ba) = \cI_{k+1}(\bd_k\ba)\quad\text{for all \(\ba\in C^k\), for all \(k=0,\ldots,n-1\).}
    \end{align}
    Here $ d_k:\Omega^k(W)\to\Omega^{k+1}(W)$ is the exterior derivative in the smooth theory.
\end{definition}

Diagrammatically, a mimetic interpolation relates a finite dimensional cochain complex to the de Rham complex in the following commutative diagram
\begin{align}
\begin{aligned}
    \xymatrix{
    C^0 
    \ar[r]^{\bd_0}
    \ar[d]^{\cI_0}
    &
    C^1
    \ar[r]^{\bd_1}
    \ar[d]^{\cI_1}
    &
    C^2
    \ar[r]^{\bd_2}
    \ar[d]^{\cI_2}
    &
    \cdots
    \ar[r]^{\bd_{n-1}}
    &
    C^n
    \ar[d]^{\cI_n}
    \\
    \Omega^0
    \ar[r]^{d_0}
    &
    \Omega^1
    \ar[r]^{d_1}
     &
    \Omega^2
    \ar[r]^{d_2}
    &
    \cdots
    \ar[r]^{d_{n-1}}
    &
    \Omega^n
    }
\end{aligned}
\end{align}

\subsubsection{Basis Representation}
Let \(
    \beta^k_{\sfi}\in\Omega^k(W), \sfi = 1,\ldots,\dim(C^k)\),
be a basis \(k\)-forms for \(\cI_k(C^k)\subset\Omega^k(W)\) so that \(\cI_k\) can be expressed as a map that takes in a list of coefficients \(\ba = (a_1,\ldots,a_{\dim(C^k)})\), representing an element of \(C^k\), to a \(k\)-form \(\cI_k(\ba)\in\Omega^k(W)\) by
\begin{align}
\textstyle
    \cI_k(\ba) = \sum_{\sfi = 1}^{\dim(C^k)} a_\sfi \beta^k_{\sfi}.
\end{align}
\subsubsection{Galerkin Hodge Stars}
The mimetic interpolations \(\cI_k\colon C^k\to\Omega^k(W)\) induces a metric on \(C^k\) from the \(L^2\) structure on \(\Omega^k(W)\).  The \(L^2\) inner product between \(\omega_1,\omega_2\in\Omega^k(W)\) is given by
\begin{align}
    \llangle\omega_1,\omega_2\rrangle_{\Omega^k(W)} = \int_W\omega_1\wedge \ast \omega_2,
\end{align}
where \(\ast\) is the Hodge star in the smooth theory.  The induced inner product structure on \(C^k\) is a positive definite self-adjoint operator 
\(\star_k\colon C^k\to C^{k*}\), where \(C^{k*}\) is the dual space of \(C^k\), so that
\begin{align}
    \ba^\intercal\star_k\bb=\langle \ba|\star_k\bb\rangle = \llangle \cI_k\ba,\cI_k\bb\rrangle_{\Omega^k(W)},\quad\forall\,\ba,\bb\in C^k.
\end{align}
In basis expression, \(\star_k\) is the inner product matrix with entries 
\begin{align}
    (\star_{k})_{\sfi\sfj} = \llangle\beta_\sfi^k,\beta_\sfj^k\rrangle_{\Omega^k(W)}.
\end{align}

\subsubsection{General Construction}
There are general constructions for mimetic interpolations starting with a scalar function interpolant \(\cI_0\colon C^0\to\Omega^0(W)\).
An interpolant $\cI_0:C^0\to\Omega^0(W)$ is said to satisfy \textbf{partition of unity} if there exists \(\ba\in C^0\) such that $\cI_0\ba =  1$.
Hiemstra \etal~\shortcite{Hiemstra:2014:HOG} showed that any interpolant $\cI_0$ which satisfies partition of unity would create a series of interpolants $\cI_k$ for all $k$-forms that satisfies the mimetic property.
As such, there are many families of mimetic interpolations, such as the ones based on B-splines \cite{Buffa:2011:IGD}, Lagrange interpolating polynomials \cite{Gerritsma:2010:EFS}, and radial basis functions \cite{Narcowich:1994:GHI,Drake:2021:RBF}.

We take B-splines as our choice of mimetic interpolation, similar to the works of  \cite{Buffa:2011:IGD}.
B-splines are piece-wise polynomial, local and compactly supported, non-negative, and are common in the MPM works in computer graphics \citep{Jiang:2016:MPM}.

\subsection{Univariate B-Spline Mimetic Interpolation}
\label{sec:UnivariateBSplineMimeticInterpolation}
Mimetic interpolation for B-splines in multidimensions are built from mimetic univariate B-splines.  We first give an overview of basic properties of univariate B-splines.  See also \cite{Hughes:2005:IGA} for detailed theory of B-splines.

Basis splines (B-splines) are piecewise polynomial functions of a given degree with maximal smoothness and minimal support \cite{Curry:1947:SDL, Curry:1966:PFF}.
On a 1-dimensional parameter domain, consider a knot sequence \(\xi_1\leq\xi_2\leq\cdots\) partitioning the domain into pieces separated by the knots.  
\begin{definition}[B-splines]
    The \(\sfi\)-th degree-\(p\) B-spline \(B_\sfi^p\) is a piecewise degree-\(p\) polynomial supported over interval \([\xi_{\sfi},\xi_{\sfi+p+1}]\) defined recursively by the Cox--de~Boor formula \cite{Cox:1972:NEB, DeBoor:1972:CBS}
    \begin{subequations}
    \begin{align}
        &B^p_\sfi(t) \coloneqq \chi^p_\sfi(t) B^{p-1}_\sfi(t) + \big(1 - \chi^p_{\sfi+1}(t)\big) B^{p-1}_{\sfi+1}(t),\quad p\geq 1, \\
        &B^0_\sfi(t) \coloneqq \begin{cases} 1,\quad t \in [\xi_\sfi, \xi_{\sfi+1}),\\0,\quad \text{otherwise},\end{cases}
        \quad
        \chi^p_\sfi(t) \coloneqq \frac{t-\xi_\sfi}{\xi_{\sfi+p}-\xi_\sfi}.
    \end{align}
    \end{subequations}
\end{definition}
When we use B-splines with a maximal polynomial degree \(p\) over a compact parameter domain \([0,1]\), we set the knot set \(\Xi = (\xi_1\leq\xi_2\leq\cdots\leq\xi_{N+p+1})\) as
\begin{align}
\label{eq:OpenUniformKnotVector}
\begin{aligned}
    &0=\xi_1=\cdots=\xi_{p+1}<\xi_{p+2}<\cdots\\
    &\cdots<\xi_N<\xi_{N+1}=\cdots=\xi_{N+p+1} = 1.
\end{aligned}
\end{align}
The number \(N+p+1\) of knots correspond to the count that there are \(N\) B-splines of degree \(p\), \(\{B_1^p,\ldots,B_N^p\}\), using which we represent an interpolated function from \(N\) coefficients \(\bff = (f_1,\ldots,f_N)\in C^0 = \RR^N\):
\begin{align}
    (\cI_0\bff)(t) \coloneqq \sum_{\sfi=1}^N f_{\sfi}B_{\sfi}^p(t).
\end{align}
In \eqref{eq:OpenUniformKnotVector}, the multiplicity of coinciding knots at the end points ensures that the boundary coefficients \(f_1, f_N\) are interpolated: \((\cI_0\bff)(0) = f_1\) and \((\cI_0\bff)(1) = f_{N}\).

The (exterior) derivative of a B-spline is remarkably a linear combination of lower order B-splines:
\begin{equation}
\label{eq:DerivativeOfBSpline}
    d B^p_\sfi(t) = \big( D^{p-1}_\sfi(t) - D^{p-1}_{\sfi+1}(t) \big)\, dt,
\end{equation}
where \(D_{\sfi}^{p-1}(t)\) is known as the Curry--Schoenberg basis, or M-splines \cite{Curry:1966:PFF}, defined as follows.
\begin{definition}[Curry--Schoenberg M-splines]
    The \(\sfi\)-th degree-\((p-1)\) Curry--Schoenberg M-spline $D_{\sfi}^{p-1}$ is a rescale of the corresponding B-spline, given by
    \begin{align}
    \label{eq:CurrySchoenberg}
        D_{\sfi}^{p-1}(t)\coloneqq 
        \begin{cases}
            {p\over \xi_{\sfi + p} - \xi_{\sfi}}B_{\sfi}^{p-1}(t)& \xi_{\sfi+p}>\xi_{\sfi}\\
            0 & \xi_{\sfi + p} = \xi_{\sfi}.
        \end{cases}
    \end{align}
\end{definition}

In the setup of \eqref{eq:OpenUniformKnotVector}, the formula \eqref{eq:DerivativeOfBSpline} holds for all \(\sfi = 1,\ldots,N\), with \(D_1^{p-1}(t) = D_{N+1}^{p-1}(t) = 0\).

In this one-dimensional setting, we obtain the mimetic interpolation property as we now show.  Define the space \(C^1\) of 1-cochains as \(\RR^{N-1}\). Let \(\bd_0\colon C^0=\RR^N\to C^1=\RR^{N-1}\) be the discrete differential operator given by
\begin{align}
\label{eq:DiscreteD0}
    (\bd_0\bff)_{\sfi} = \bff_{\sfi}-\bff_{\sfi-1}.
\end{align}
Define the interpolation for 1-forms \(\cI_1\colon C^1\to\Omega^1([0,1])\) as 
\begin{align}
    \cI_1\ba = \sum_{\sfi=2}^N a_{\sfi}D_{\sfi}^{p-1}(t)\, dt.
\end{align}
Then we have the following mimetic interpolation property.
\begin{proposition}[Mimetic interpolation in 1D]
\label{prop:MimeticInterpolationIn1D}
    \(d(\cI_0\bff) = \cI_1(\bd_0\bff)\) for all \(\bff\in C^0 = \RR^N\).
\end{proposition}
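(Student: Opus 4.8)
The plan is to expand both sides in their respective bases, reduce the left-hand side to the single-basis derivative identity \eqref{eq:DerivativeOfBSpline}, and then match the result to the right-hand side by a discrete summation by parts. First I would use linearity of the exterior derivative $d$ together with the definition of $\cI_0$ to write
\begin{align*}
    d(\cI_0\bff) = \sum_{\sfi=1}^N f_\sfi\, d B_\sfi^p = \sum_{\sfi=1}^N f_\sfi\bigl(D_\sfi^{p-1} - D_{\sfi+1}^{p-1}\bigr)\, dt,
\end{align*}
where the second equality inserts \eqref{eq:DerivativeOfBSpline}, which is valid for every $\sfi = 1,\ldots,N$ under the stated endpoint conventions $D_1^{p-1} = D_{N+1}^{p-1} = 0$. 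The target is to show this equals $\cI_1(\bd_0\bff) = \sum_{\sfi=2}^N (f_\sfi - f_{\sfi-1})\,D_\sfi^{p-1}\,dt$.

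The key step is to split the expanded sum into $\sum_\sfi f_\sfi D_\sfi^{p-1}$ and $\sum_\sfi f_\sfi D_{\sfi+1}^{p-1}$, then re-index the second by $\sfj = \sfi+1$ so that both sums run over the common family $\{D_\sfi^{p-1}\}$. Collecting the coefficient multiplying each fixed $D_\sfi^{p-1}$ then yields exactly the difference $f_\sfi - f_{\sfi-1}$, i.e.\ the components of $\bd_0\bff$ as defined in \eqref{eq:DiscreteD0}, which is precisely the right-hand side. Everything here is linear and mechanical once the derivative formula is in hand.

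The only genuinely delicate point --- and the thing I would verify most carefully --- is the endpoint bookkeeping. The re-indexing leaves a stray term $f_1 D_1^{p-1}$ at the lower end of the first sum and $f_N D_{N+1}^{p-1}$ at the upper end of the shifted second sum, so that the two summation ranges do not a priori align. The identity holds precisely because both of these boundary M-splines vanish: $D_1^{p-1} = D_{N+1}^{p-1} = 0$ by the open-uniform knot convention \eqref{eq:OpenUniformKnotVector}. Once these terms are confirmed to drop, the surviving contributions from each sum both range over $\sfi = 2,\ldots,N$, the coefficients assemble into $(\bd_0\bff)_\sfi$, and the equality $d(\cI_0\bff) = \cI_1(\bd_0\bff)$ follows immediately.
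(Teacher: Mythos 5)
Your proof is correct and follows essentially the same route as the paper's: expand \(d(\cI_0\bff)\) by linearity, apply the derivative formula \eqref{eq:DerivativeOfBSpline}, and perform a summation by parts in which the boundary terms vanish because \(D_1^{p-1} = D_{N+1}^{p-1} = 0\) under the open knot convention \eqref{eq:OpenUniformKnotVector}. The paper compresses this into a one-line chain of equalities, while you make the re-indexing and endpoint bookkeeping explicit, but the argument is identical.
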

\begin{proof}
    \(d(\cI_0\bff) = d(\sum_{\sfi=1}^Nf_\sfi B_\sfi^p) =  \sum_{\sfi=1}^N f_\sfi dB_{\sfi}^p 
     \overset{\eqref{eq:DerivativeOfBSpline}}{=} \sum_{\sfi=1}^N f_\sfi(D_{\sfi}^{p-1}-D_{\sfi+1}^{p-1})dt 
     = \allowbreak \sum_{i=2}^N(\bd_0\bff)_{\sfi}D_{\sfi}^{p-1}dt 
     = \cI_1(\bd_0\bff)\).
\end{proof}

\begin{remark}
\label{rmk:GridOfCoefficients}
    The discrete 0-forms, 1-forms, and exterior derivative \eqref{eq:DiscreteD0} are related to each other in a similar fashion as in Discrete Exterior Calculus (DEC).  However, we only think of the data \(\bff\in C^0\) and \(\ba\in C^1\) as abstract coefficients, instead of the evaluation or integration of differential forms on a particular grid.
\end{remark}
\begin{remark}
\label{rmk:GrevilleGrid}
    The parameter domain \([0,1]\) is mapped to a physical domain \([x_{\min},x_{\max}]\) linearly by an affine function \(x\colon [0,1]\to [x_{\min},x_{\max}]\).  This function \(x(t)\) can be represented using B-spline \(x(t) = (\cI_0\bx)(t)\) using a particular list of coefficients \(\bx=(x_1,\ldots,x_N)\in C^0\).  These coefficients \(x_\sfi\) can be regarded as the effective grid points in the physical space, called the \textbf{Greville grid}.  Note that \(\bx\) is in general not uniformly spaced near the boundary, contrary to common computation grids adopted in fluid simulations.
\end{remark}

\begin{figure}
    \centering
    \includegraphics[trim={0 200px 0 0},clip,width=0.32\columnwidth]{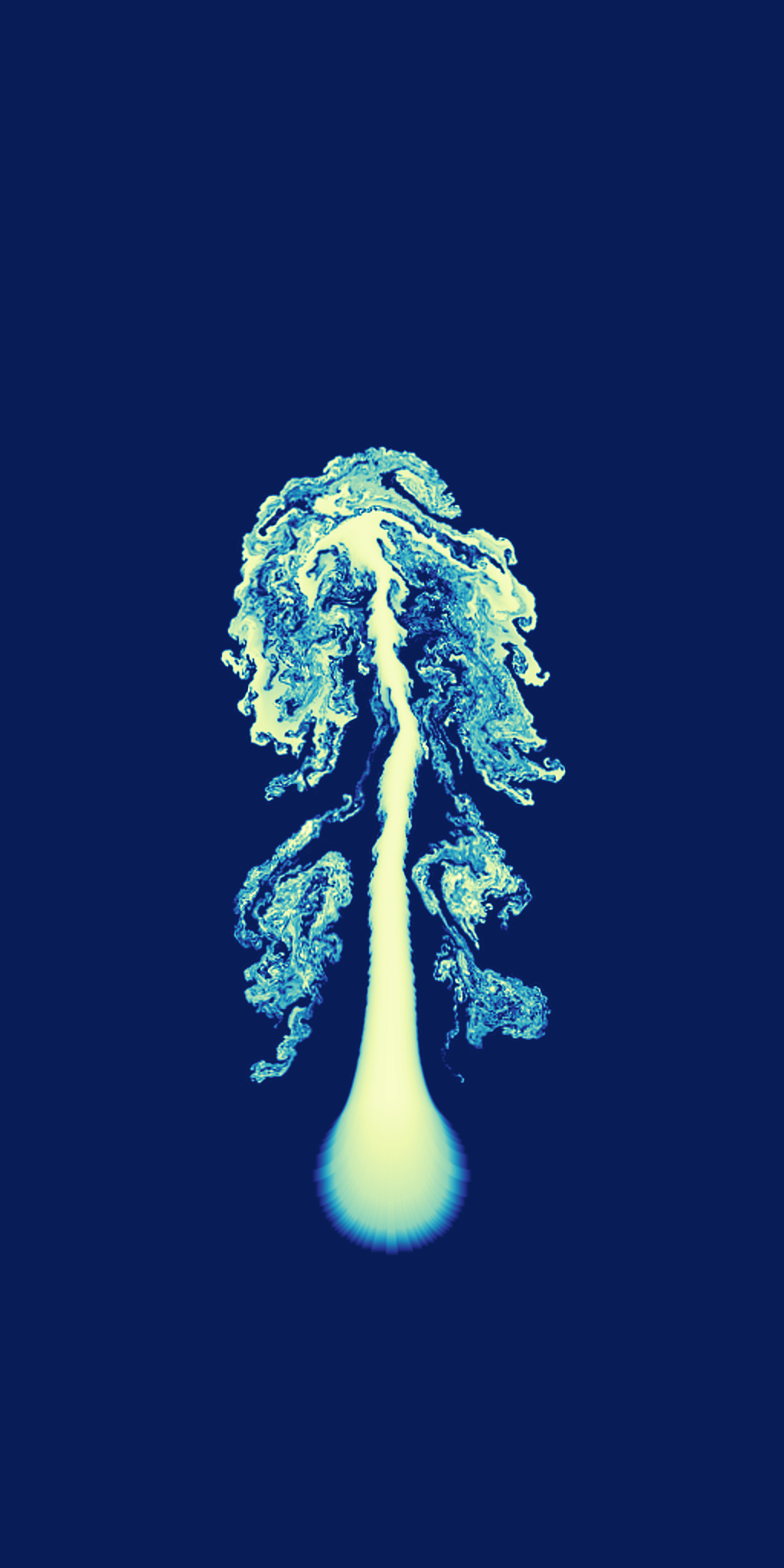}
    \includegraphics[trim={0 200px 0 0},clip,width=0.32\columnwidth]{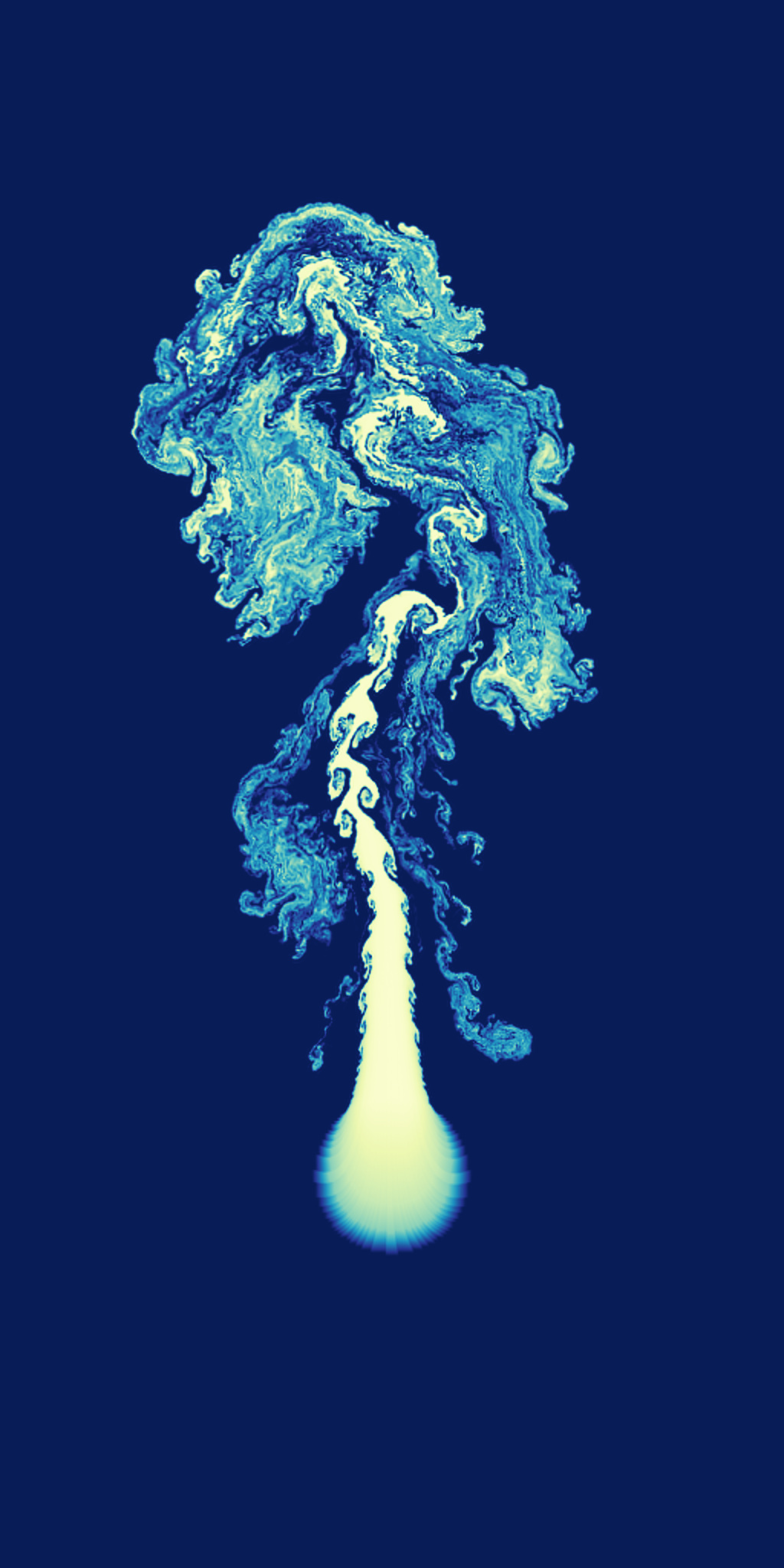}
    \includegraphics[trim={0 200px 0 0},clip,width=0.32\columnwidth]{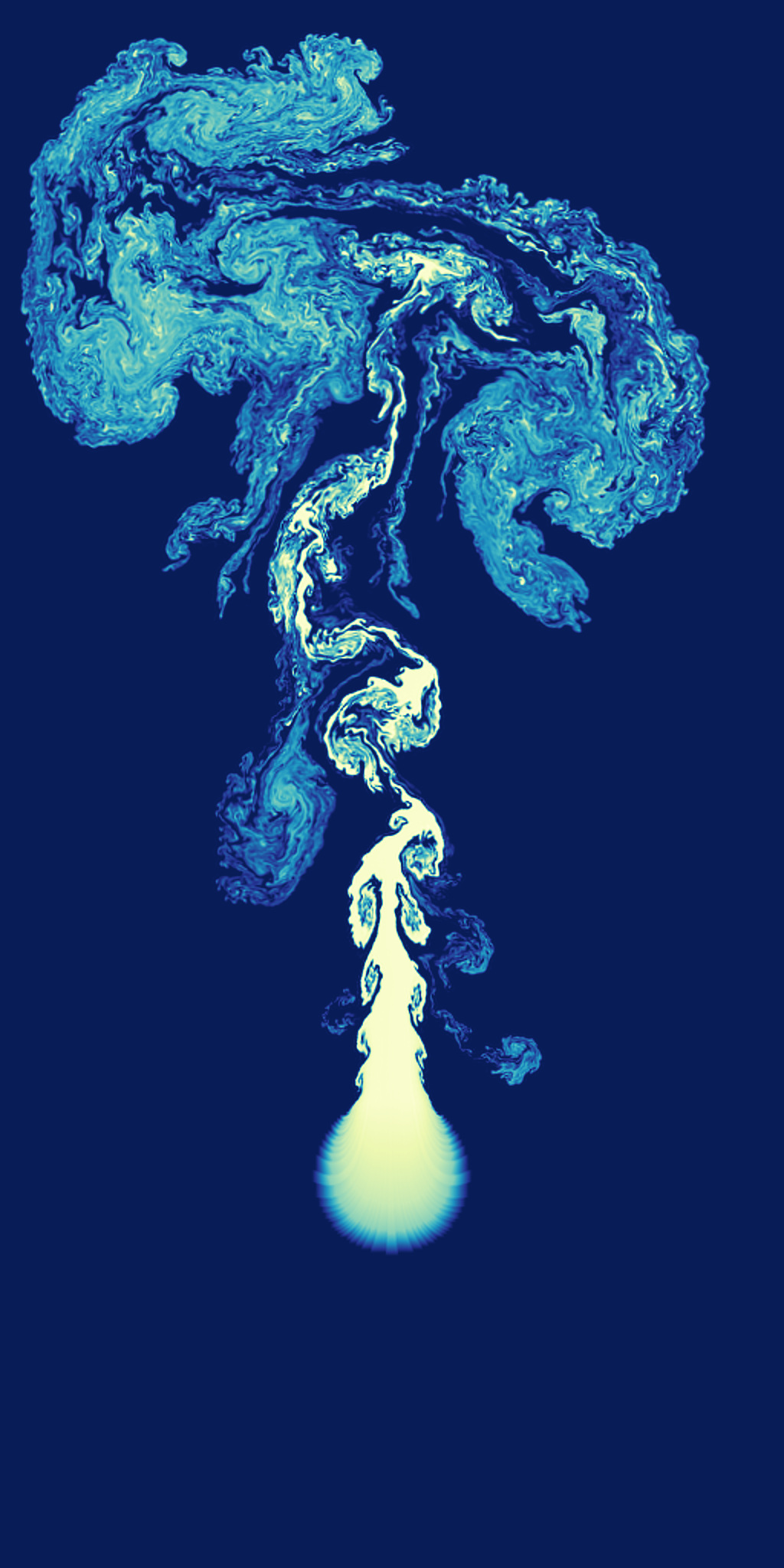}
    \caption{Smoke plume in two dimensions.
    We demonstrate an abundance of vortical structures using our method (CO-FLIP), in the evolution of a buoyant material traveling upwards in a closed box.}
    \label{fig:2dsmokeplume}
\end{figure}

\subsection{Multivariate B-Spline  Mimetic Interpolation}
We now consider B-spines in \(n\) dimension.  We will demonstrate the construction for \(n=3\), whose pattern is easy to generalize to other dimensions.  For general constructions, see \cite{Evans:2013:IGA,Kapidani:2022:HOG}.

Consider three knot sets \(\Xi_x,\Xi_y,\Xi_z\), each of which is a list like \eqref{eq:OpenUniformKnotVector} with size \(N_x+p+1,\ldots,N_z+p+1\).  By the construction depicted in \secref{sec:UnivariateBSplineMimeticInterpolation}, we obtain three families \(\{B_{x,1}^p,\ldots,B_{x,N_x}^p\}\),
\(\{B_{y,1}^p,\ldots,B_{y,N_y}^p\}\),
\(\{B_{z,1}^p,\ldots,B_{z,N_z}^p\}\)
of degree-\(p\) B-splines, and three families 
\(\{D_{x,2}^{p-1},\ldots,D_{x,N_x}^{p-1}\}\),
\(\{D_{y,2}^{p-1},\ldots,D_{y,N_y}^{p-1}\}\),
\(\{D_{z,2}^{p-1},\ldots,D_{z,N_z}^{p-1}\}\)
of degree-\((p-1)\) Curry--Schoenberg M-splines,
respectively over \(\Xi_x,\Xi_y,\Xi_z\).

Now consider an abstract grid with \(N_x\times N_y\times N_z\) vertices.
Similar to the discussion in \remref{rmk:GridOfCoefficients}, this grid is a \textbf{grid of coefficients} that admits a discrete topology identical to a DEC grid.
Define \(C^0\coloneqq\RR^{N_x N_y N_z} = \{(g_{\sfi\sfj\sfk})_{{1\leq\sfi\leq N_x, 1\leq\sfj\leq N_y, 1\leq\sfk\leq N_z}}\}\) where elements are numbers assigned to the vertices of the grid.
Likewise, define \(C^1\coloneqq \RR^{(N_x-1) N_y N_z}\oplus \RR^{N_x (N_y-1) N_z}\oplus \RR^{N_x N_y (N_z-1)}\) where elements \(((h^x_{\sfi\sfj\sfk}),(h^y_{\sfi\sfj\sfk}),(h^z_{\sfi\sfj\sfk}))\) are values assigned to edges, \(C^2\coloneqq \RR^{N_x (N_y-1) (N_z-1)}\oplus \RR^{(N_x-1)N_y(N_z-1)}\oplus\RR^{(N_x-1)(N_y-1)N_z}\) where elements \(((f^x_{\sfi\sfj\sfk}),(f^y_{\sfi\sfj\sfk}),(f^z_{\sfi\sfj\sfk}))\) are values assigned to faces, and 
\(C^3\coloneqq\RR^{(N_x-1)(N_y-1)(N_z-1)}\) where elements \((r_{\sfi\sfj\sfk})_{2\leq\sfi\leq N_x,\ldots,2\leq\sfk\leq N_z}\) are values assigned to cells.
By the boundary relationship between the vertices, edges, faces, and cells, construct the standard discrete de Rham complex in the DEC framework
\begin{subequations}
\begin{align}
&C^0\xrightarrow{\bd_0}C^1\xrightarrow{\bd_1}C^2\xrightarrow{\bd_2}C^3\\
    &\bd_0\bg = (\bpartial_x\bg,\bpartial_y\bg,\bpartial_z\bg)\\
    &\bd_1(\bh^x,\bh^y,\bh^z) = (\bpartial_y\bh^z - \bpartial_z\bh^y,\bpartial_z\bh^x-\bpartial_x\bh^z,\bpartial_x\bh^y-\bpartial_y\bh^x)\\
    &\bd_2(\bff^x,\bff^y,\bff^z) = (\bpartial_x\bff^x + \bpartial_y\bff^y + \bpartial_z\bff^z),
\end{align}
\end{subequations}
where \(\bpartial_x,\bpartial_y,\bpartial_z\) are the coordinate difference operators
\begin{subequations}
\label{eq:CoordinateDifferenceOperators}
\begin{align}
    &(\bpartial_x\cdot)_{\sfi\sfj\sfk}\coloneqq (\cdot)_{\sfi\sfj\sfk} - (\cdot)_{(\sfi-1)\sfj\sfk},\quad
    (\bpartial_y\cdot)_{\sfi\sfj\sfk}\coloneqq (\cdot)_{\sfi\sfj\sfk} - (\cdot)_{\sfi(\sfj-1)\sfk},\\
    &\hspace{7em}(\bpartial_z\cdot)_{\sfi\sfj\sfk}\coloneqq (\cdot)_{\sfi\sfj\sfk} - (\cdot)_{\sfi\sfj(\sfk-1)}.
\end{align}
\end{subequations}
Note that \(C^2\) is the MAC grid for velocity flux data on faces, and \(\bd_2\) is the discrete divergence operator. 

Define \(\cI_0\colon C^0\to\Omega^0([0,1]^3)\) by
\begin{align}
\label{eq:3DMimeticInterpolate0Form}
\textstyle
    (\cI_0\bg)_{(t_1,t_2,t_3)} \coloneqq \sum_{\sfi=1}^{N_x}\sum_{\sfj=1}^{N_y}\sum_{\sfk=1}^{N_z}g_{\sfi\sfj\sfk}B_{x,\sfi}^p(t_1)B_{y,\sfj}^p(t_2)B_{z,\sfk}^p(t_3),
\end{align}
which is the tensor product B-spline interpolation for scalar functions.
Now, instead of employing the common treatment of interpolating \(k\)-forms (\(k\geq 1\)) by the same tensor product B-spline basis such as in tri-quadratic or tri-cubic interpolation schemes, one replaces the B-spline functions by the Curry--Schoenberg spline functions for those directions where the edge/face/cell spans a width.
That is,
define \(\cI_1\colon C^1\to\Omega^1([0,1]^3)\) by
\begin{align}
\label{eq:3DMimeticInterpolate1Form}
\textstyle
\nonumber
    &(\cI_1(\bh^x,\bh^y,\bh^z))_{(t_1,t_2,t_3)}\\ \nonumber
    &\coloneqq 
    \textstyle
    \sum_{\sfi=2}^{N_x}\sum_{\sfj=1}^{N_y}\sum_{\sfk=1}^{N_z}
    h_{\sfi\sfj\sfk}^x D_{x,\sfi}^{p-1}(t_1)B_{y,\sfj}^p(t_2)B_{z,\sfk}^p(t_3)\, dt_1\\ \nonumber
    &\quad{}+
    \textstyle
    \sum_{\sfi=1}^{N_x}\sum_{\sfj=2}^{N_y}\sum_{\sfk=1}^{N_z}
    h_{\sfi\sfj\sfk}^y B_{x,\sfi}^p(t_1)D_{y,\sfj}^{p-1}(t_2)B_{z,\sfk}^p(t_3)\, dt_2\\ 
    &\quad{}+
    \textstyle
    \sum_{\sfi=1}^{N_x}\sum_{\sfj=1}^{N_y}\sum_{\sfk=2}^{N_z}
    h_{\sfi\sfj\sfk}^z B_{x,\sfi}^p(t_1)B_{y,\sfj}^p(t_2)D_{z,\sfk}^{p-1}(t_3)\, dt_3,
\end{align}
\(\cI_2\colon C^2\to\Omega^2([0,1]^3)\) by
\begin{align}
\label{eq:3DMimeticInterpolate2Form}
\textstyle
\nonumber
    &(\cI_2(\bff^x,\bff^y,\bff^z))_{(t_1,t_2,t_3)}\\ \nonumber
    &\coloneqq 
    \textstyle
    \sum_{\sfi=1}^{N_x}\sum_{\sfj=2}^{N_y}\sum_{\sfk=2}^{N_z}
    f_{\sfi\sfj\sfk}^x B_{x,\sfi}^{p}(t_1)D_{y,\sfj}^{p-1}(t_2)D_{z,\sfk}^{p-1}(t_3)\, dt_2\wedge dt_3\\ \nonumber
    &\quad{}+
    \textstyle
    \sum_{\sfi=2}^{N_x}\sum_{\sfj=1}^{N_y}\sum_{\sfk=2}^{N_z}
    f_{\sfi\sfj\sfk}^y D_{x,\sfi}^{p-1}(t_1)B_{y,\sfj}^{p}(t_2)D_{z,\sfk}^{p-1}(t_3)\, dt_3\wedge dt_1\\ 
    &\quad{}+
    \textstyle
    \sum_{\sfi=2}^{N_x}\sum_{\sfj=2}^{N_y}\sum_{\sfk=1}^{N_z}
    f_{\sfi\sfj\sfk}^z D_{x,\sfi}^{p-1}(t_1)D_{y,\sfj}^{p-1}(t_2)B_{z,\sfk}^{p}(t_3)\, dt_1\wedge dt_2,
\end{align}
and \(\cI_3\colon C^3\to\Omega^3([0,1]^3)\) by
\begin{align}
\label{eq:3DMimeticInterpolate3Form}
\textstyle
    &(\cI_3\br)_{(t_1,t_2,t_3)} \\ \nonumber
    &\textstyle\coloneqq \sum_{\sfi=2}^{N_x}\sum_{\sfj=2}^{N_y}\sum_{\sfk=2}^{N_z}r_{\sfi\sfj\sfk}D_{x,\sfi}^{p-1}(t_1)D_{y,\sfj}^{p-1}(t_2)D_{z,\sfk}^{p-1}(t_3)\, dt_1\wedge dt_2\wedge dt_3.
\end{align}
This construction ensures the mimetic property.
\begin{theorem}[Mimetic interpolation in 3D]
    \(d(\cI_k\ba)=\cI_{k+1}(\bd_k\ba)\) for all \(\ba\in C^k\), \(k=0,1,2\).
\end{theorem}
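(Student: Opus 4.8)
The plan is to reduce the multivariate identity to the univariate mimetic property of \propref{prop:MimeticInterpolationIn1D} by exploiting the tensor-product structure of the interpolants \eqref{eq:3DMimeticInterpolate0Form}--\eqref{eq:3DMimeticInterpolate3Form}. The only analytic input I need is the univariate formula \eqref{eq:DerivativeOfBSpline}, $dB^p_\sfi = (D^{p-1}_\sfi - D^{p-1}_{\sfi+1})\,dt$, together with the boundary convention $D^{p-1}_1 = D^{p-1}_{N+1} = 0$. Each basis form in $\cI_k$ is a product of Curry--Schoenberg $M$-splines in the directions the cochain spans and univariate $B$-splines in the remaining directions, wedged with the associated coordinate 1-forms. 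The smooth exterior derivative acts on such a product by the graded Leibniz rule, so differentiating reduces to applying $\partial_{t_\ell}$ to each univariate factor in turn, where the only nontrivial effect is $\partial_{t_\ell} B^p_\sfi = D^{p-1}_\sfi - D^{p-1}_{\sfi+1}$ and $\partial_{t_\ell} D^{p-1}_\sfi = 0$ contributes nothing new since $M$-spline factors are already paired with their $dt_\ell$.

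First I would verify $k=0$ to fix the pattern. Applying $d$ to \eqref{eq:3DMimeticInterpolate0Form} and using $\partial_{t_1} B^p_{x,\sfi} = D^{p-1}_{x,\sfi} - D^{p-1}_{x,\sfi+1}$ produces, in the $dt_1$ slot, the sum $\sum_{\sfi\sfj\sfk} g_{\sfi\sfj\sfk}(D^{p-1}_{x,\sfi} - D^{p-1}_{x,\sfi+1})B^p_{y,\sfj}B^p_{z,\sfk}$. A shift of the summation index (an Abel/summation-by-parts step), using the boundary vanishing to discard the end terms, turns the coefficient of $D^{p-1}_{x,\sfi}$ into $g_{\sfi\sfj\sfk} - g_{(\sfi-1)\sfj\sfk} = (\bpartial_x\bg)_{\sfi\sfj\sfk}$. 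This is exactly the $\bh^x$-block of $\cI_1$ evaluated on $\bpartial_x\bg$, and similarly the $dt_2,dt_3$ slots reproduce the $\bh^y,\bh^z$-blocks on $\bpartial_y\bg,\bpartial_z\bg$; since $\bd_0\bg = (\bpartial_x\bg,\bpartial_y\bg,\bpartial_z\bg)$, this establishes $d(\cI_0\bg) = \cI_1(\bd_0\bg)$.

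The cases $k=1$ and $k=2$ follow the same mechanism, with the extra bookkeeping being the wedge-product signs. For $k=1$, differentiating the $\bh^x\,dt_1$ term and wedging with $dt_1$ kills the $\partial_{t_1}$ contribution and yields $dt_2\wedge dt_1$ and $dt_3\wedge dt_1$ pieces; reordering to the canonical basis $dt_1\wedge dt_2$ and $dt_3\wedge dt_1$ introduces a sign. Collecting all six contributions (two from each of the three blocks of $\cI_1$), applying the index-shift in each relevant direction, and grouping by the canonical 2-form basis, each group pairs a $+\bpartial$ term from one block with a $-\bpartial$ term from another, which is precisely the structure of the discrete curl $\bd_1(\bh^x,\bh^y,\bh^z) = (\bpartial_y\bh^z - \bpartial_z\bh^y,\,\bpartial_z\bh^x - \bpartial_x\bh^z,\,\bpartial_x\bh^y - \bpartial_y\bh^x)$. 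The $k=2$ case is the simplest: exactly one $\partial_{t_\ell}$ survives for each of the three blocks after wedging with its face form, and summing the three divergence-like differences gives $\bpartial_x\bff^x + \bpartial_y\bff^y + \bpartial_z\bff^z = \bd_2(\bff^x,\bff^y,\bff^z)$.

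The main obstacle is organizational rather than conceptual: keeping the wedge-product orientation signs consistent with the conventions built into $\cI_2$ (which uses $dt_2\wedge dt_3$, $dt_3\wedge dt_1$, $dt_1\wedge dt_2$) and into the discrete curl $\bd_1$, and ensuring each index-shift is justified by $D^{p-1}_1 = D^{p-1}_{N+1} = 0$ so that no spurious boundary terms survive. Conceptually, all three cases are the single statement that a tensor product of univariate maps of cochain complexes is again a map of complexes, so in principle one could replace the direct computation by invoking this tensor-product (Künneth-type) structure once \propref{prop:MimeticInterpolationIn1D} is established in each coordinate.
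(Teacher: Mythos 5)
Your proof is correct and takes essentially the same route as the paper's: expand the exterior derivative componentwise, apply the univariate formula \eqref{eq:DerivativeOfBSpline} to the $B$-spline factors, and perform the summation by parts (justified by $D^{p-1}_1 = D^{p-1}_{N+1} = 0$) to recover the coordinate differences \eqref{eq:CoordinateDifferenceOperators}, hence $\bd_0,\bd_1,\bd_2$. One phrasing nit: $\partial_{t_\ell}D^{p-1}_{\sfi}$ is not zero as written --- those terms vanish because they arrive wedged with a repeated $dt_\ell$, which is what your parenthetical clause actually argues.
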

\begin{proof}
    Express the continuous exterior derivative of a differential form \(\omega = \sum_I\omega_I dt_I\), where \(I\) is some multi-index, as \(d\omega = \sum_I{\partial\omega_I\over\partial t_1}dt_1\wedge dt_I + {\partial\omega_I\over\partial t_2}dt_2\wedge dt_I+ {\partial\omega_I\over\partial t_3}dt_3\wedge dt_I\).  The partial derivatives boil down to differentiating univariate \(B\)-splines where \eqref{eq:DerivativeOfBSpline} applies.  Identical to the proof of \propref{prop:MimeticInterpolationIn1D}, one performs a summation by parts to arrive at an expression in terms of the coordinate differences \eqref{eq:CoordinateDifferenceOperators} of the coefficients.
\end{proof}

\subsubsection{Map to Physical Space}
The formulas \eqref{eq:3DMimeticInterpolate0Form}--\eqref{eq:3DMimeticInterpolate3Form} interpolate a cochain \(\in C^k\) to a differential form \(\in \Omega^k([0,1]^3)\) on the canonical parameter domain \([0,1]^3\), which is yet to be a physical domain \(W = [x_{\min},x_{\max}]\times [y_{\min},y_{\max}]\times [z_{\min},z_{\max}]\).
To interpolate cochains into the physical domain, it suffices to apply the linear change of variable as described in \remref{rmk:GrevilleGrid}.  The explicit change of variables are
\begin{align}
    &
    \textstyle
    t_1 = {x-x_{\min}\over x_{\max}-x_{\min}},\quad 
    t_2 = {y-y_{\min}\over y_{\max}-y_{\min}},\quad 
    t_3 = {z-z_{\min}\over z_{\max}-z_{\min}}, \\
    &\textstyle
    dt_1 = {dx\over x_{\max}-x_{\min}},\quad 
    dt_2 = {dy\over y_{\max}-y_{\min}},\quad 
    dt_3 = {dz\over z_{\max}-z_{\min}}.
\end{align}

\subsubsection{Divergence-Free Interpolation}
In 3D, we let \(\fB = C^2\), discrete divergence operator \(\bd = \bd_2\), and \(\cI = \sharp\circ *\circ\cI_2\).  The map \(\cI_2\colon \fB\to \Omega^2(W)\) takes values in a continuous 2-form, the Hodge star \(*\) converts it into a 1-form, and \(\sharp\) makes the final conversion into a vector field \(\fX(W)\).  The divergence-free subspace \(\fB_{\div} = \ker(\bd_2)\) maps to closed 2-forms thanks to the mimetic property; closed 2-form corresponds to divergence-free vector fields \(\in \fX_{\div}(W)\).

\subsubsection{Choice of Knots}
\label{sec:ChoiceOfKnots}
While the theory works for arbitrary knot sequence \eqref{eq:OpenUniformKnotVector}, we let the interior knots \(\xi_{p+2},\ldots,\xi_N\) be evenly spaced in \([0,1]\) for each dimension.  This choice is for the simplicity of index query.  This uniform knot grid is called the \textbf{B\'ezier grid}, which is not to be confused with the Greville grid (\cf\@ \remref{rmk:GrevilleGrid}).
B\'ezier grid is used for deciding which B-splines have support at a given particle location, whereas the concept of Greville grids do not appear in implementation.

\subsubsection{Choice of Degree}
We choose the degree of the mimetic interpolation to be \(p=3\).  That is, for the velocity interpolation \(\cI_2\) from \(C^2\) to \(\Omega^2(W)\), the interpolation basis is a mixture of cubic B-splines and quadratic B-splines.  This is the minimal degree for the corresponding interpolated velocity to have a \(\continuity^1\) continuity, which is required for computing \eqref{eq:COFLIP-ODEb}.  

\subsubsection{Inclusion of Arbitrary Obstacles}
\label{sec:Obstacle}
While the mimetic interpolation detailed in this section works for boxed domain \(W = [x_{\min},x_{\max}]\times [y_{\min},y_{\max}]\times [z_{\min},z_{\max}]\), one can include arbitrary obstacles in it.  
This can be achieved by modifying the Galerkin Hodge star \(\star\) induced by \(\cI\colon\fB\to\fX(W)\).  Set the kinetic energy \({1\over 2}\bff^{\intercal}\star\bff = \|\cI(\bff)\|^2\) to infinity if the support of the vector field \(\cI(\bff)\) overlaps with the interior of the obstacle.
This amounts to modifying the Galerkin Hodge star by surrounding it with a diagonal masking matrix
\(
    \widetilde\star = \bLambda\star\bLambda,
\)
where \(\bLambda = \operatorname{diag}(\lambda_1,\ldots,\lambda_{\dim(\fB)})\), \(\lambda_{\sfi} = \infty\) if the basis vector field \(\vec b_{\sfi}\in\fX(W)\) (\cf\@ \eqref{eq:cIAsInterpolationScheme}) overlaps with the obstacle, and otherwise \(\lambda_{\sfi} = 1\).
In the CO-FLIP dynamics, the Galerkin Hodge star appears in the pressure projection \eqref{eq:DiscretePressureProjection} in the inverse form
\(
    \widetilde\star^{-1} = \bV\star^{-1}\bV,
\)
where \(\bV = \bLambda^{-1} = \operatorname{diag}(V_1,\ldots,V_{\dim(\fB)})\), \(V_{\sfi} = 0\) in the obstacle and otherwise \(V_{\sfi} = 1\).  According to how Hodge star appears in \eqref{eq:DiscretePressureProjection}, this is effectively removing the columns of \(\bd\) that is incident to an obstacle.

In practice, to include obstacles, we keep the Galerkin Hodge star unchanged and, instead, remove the entries from $\bd$ associated with \(\fB = C^2\) grid elements incident to the obstacles.

\section{Implementation Details}
\label{sec:Implementation}
This section provides practical notes on improving CO-FLIP in terms of computational robustness and efficiency. Additionally, notes on Casimir measurements are included.

\begin{figure}
    \centering
    \includegraphics[width=0.3\columnwidth]{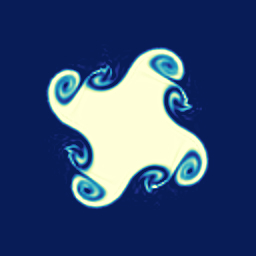}
    \includegraphics[width=0.3\columnwidth]{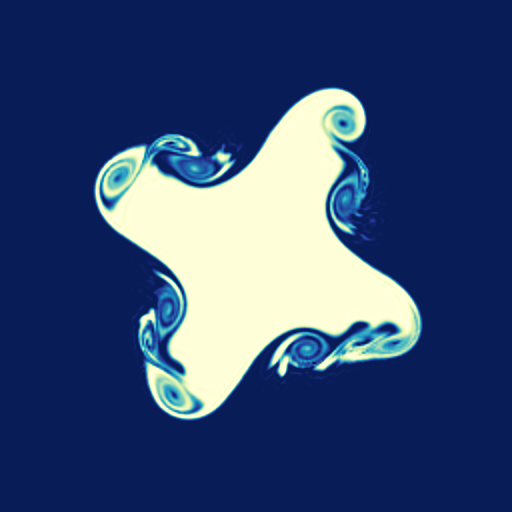}
    \includegraphics[width=0.3\columnwidth]{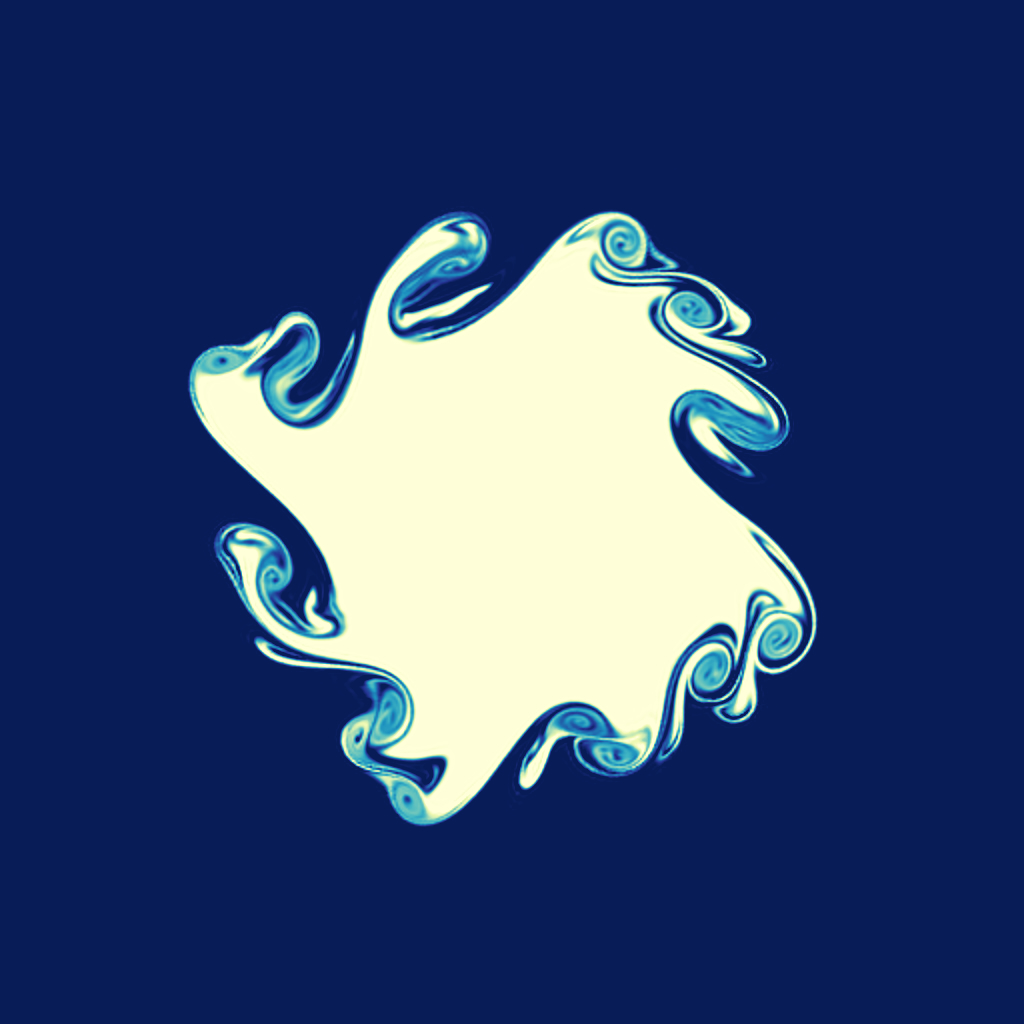}
    \includegraphics[width=0.3\columnwidth]{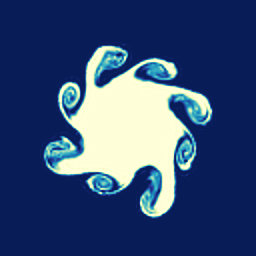}
    \includegraphics[width=0.3\columnwidth]{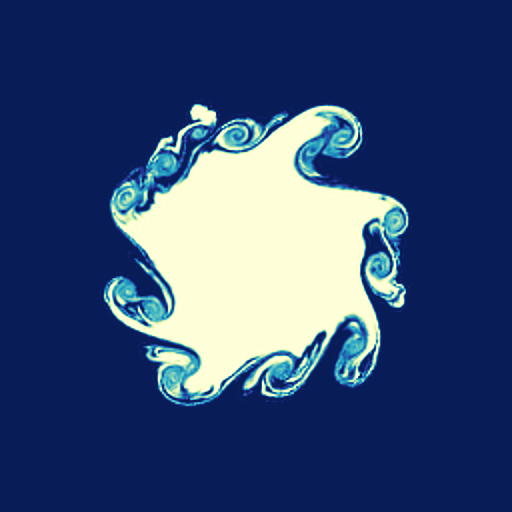}
    \includegraphics[width=0.3\columnwidth]{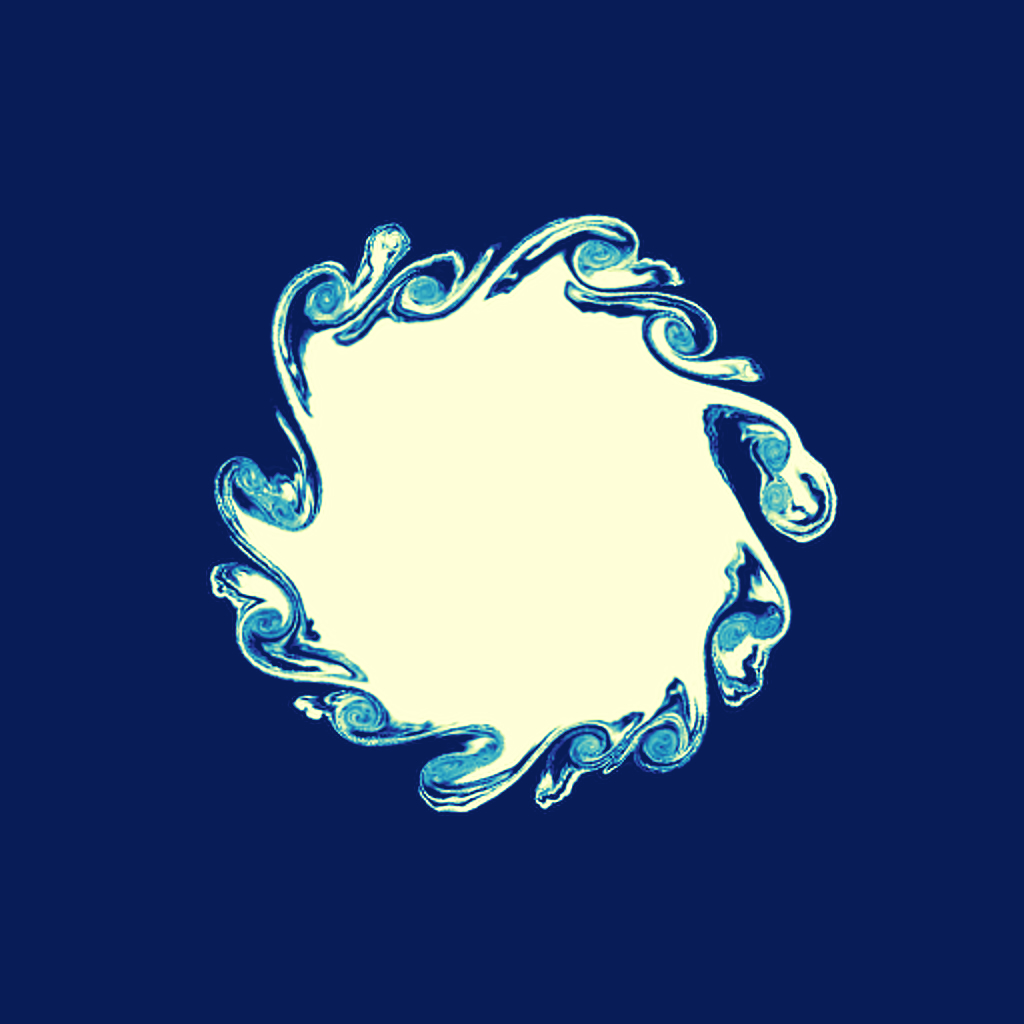}
    \\
    \begin{picture}(0,0)(0,0)
        \put(-106,13){\sffamily \scriptsize \textcolor{white}{$64\times64$ and $\Delta t=1/18s$}}
        \put(-34,13){\sffamily \scriptsize \textcolor{white}{$128\times128$ and $\Delta t=1/36s$}}
        \put(41,13){\sffamily \scriptsize \textcolor{white}{$256\times256$ and $\Delta t=1/72s$}}
        \put(-109,107){\sffamily \scriptsize \rotatebox{90}{\textcolor{white}{CF+PolyFLIP}}}
        \put(-109,25){\sffamily \scriptsize \rotatebox{90}{\textcolor{white}{\textbf{CO-FLIP (Ours)}}}}
    \end{picture}
    \caption{2D vortex sheet under spatiotemporal refinement. We compare CO-FLIP method (ours) against state-of-the-art method CF+PolyFLIP.  Note that CO-FLIP produces more turbulent vortical structures at every resolution. Further, the lower resolution CO-FLIP results show similar qualitative behavior to higher resolutions of the previous method.}
    \label{fig:vortex_sheet}
\end{figure}

\subsection{Pressure Projection}
For the clarity of exposition of this discussion, we will use the cochain notation in \secref{sec:Theory2} for \(n\)-dimension. 
 In particular \(\fB = C^{n-1}\) and \(\fB_{\div} = \ker(\bd_{n-1})\).
 
The pressure projection operator \(\bP_{\fB_{\div}}\colon C^{n-1}\to \ker(\bd_{n-1})\) (\ie\@ \(\fB\to\fB_{\div}\)) acts on the space of MAC-grid flux data \(\bff\in C^{n-1}\).
However, unlike the classical central difference Poisson solve employed for most fluid simulators on MAC-grids, the Laplacian \((\bd_{n-1}\star_{n-1}^{-1}\bd_{n-1}^\intercal)\) in \eqref{eq:DiscretePressureProjection}, or
\begin{subequations}
\label{eq:PressureSolve}
\begin{align}
\label{eq:PressureSolveA}
    &\textsc{Solve}\quad(\bd_{n-1}\star_{n-1}^{-1}\bd_{n-1}^\intercal)\bp = \bd_{n-1}\bff\quad\textsc{for}\quad \bp\in C^{n*}\\
    \label{eq:PressureSolveB}
    &\textsc{Update}\quad\bff_{\rm new}\gets\bff - \star_{n-1}^{-1}\bd_{n-1}^\intercal\bp,
\end{align}
\end{subequations}
involves a \emph{dense} inverse matrix \(\star_{n-1}^{-1}\) of a \emph{non-diagonal} Galerkin Hodge star.
Explicitly, building and storing this inverse Hodge star or the Laplace matrix would be memory infeasible.  

\secref{sec:MatrixFreePressureProjection}--\secref{sec:StreamformSolve} are three ways to avoid explicitly inverting the Galerkin Hodge star.

\subsubsection{Matrix-Free Pressure Projection}
\label{sec:MatrixFreePressureProjection}

When solving this Poisson problem \eqref{eq:PressureSolveA} using a conjugate gradient (CG) method, one could use a matrix-free operation which only involves evaluating the the Laplace operator \((\bd_{n-1}\star_{n-1}^{-1}\bd_{n-1}^\intercal)\) acting on a vector.  This at most involves a linear solve for evaluating \(\star_{n-1}^{-1}\) acting on a vector, which can once again be computed by a CG that only involves actions by \(\star_{n-1}\).

Here \(\star_{n-1}\) can also be implemented matrix-free since most entries of this inner product matrix are identical due to the regularity of the grid (\secref{sec:ChoiceOfKnots}).

\subsubsection{Fourier Domain}
\label{sec:FourierDomain}
If the fluid domain \(W\) is a periodic domain, then the Fourier spectral method applies. The Fast Fourier Transform (FFT) diagonalizes all Galerkin Hodge stars \(\star_{0},\ldots,\star_{n}\) and discrete exterior derivatives \(\bd_0,\ldots,\bd_{n-1}\).  Hence \eqref{eq:PressureSolve} only amounts to an elementwise arithmetic operation.  FFT and its inverse are the only costs for the pressure solve.

\subsubsection{Streamform Solve}
\label{sec:StreamformSolve}
The projection \eqref{eq:PressureSolve} can also be computed using a streamform (streamfunction).  
Let us assume that the fluid domain \(W\) is simply connected.  We will discuss the case of non-simply-connected domains in \remref{rmk:Cohomology}.

On simply connected domains, the projected flux \(\bff_{\rm new}\in \ker(\bd_{n-1})\) is in the image of \(\bd_{n-2}\); \ie\@ \(\bff = \bd_{n-2}\bpsi\) for some \textbf{streamform} \(\bpsi\in C^{n-1}\) that vanishes at the boundary.  For \(n=2\), the boundary condition is a zero Dirichlet boundary condition for the streamfunction \(\bpsi\in C^0\), and for \(n=3\), the stream 1-form vanishes on the edges that are contained in the boundary.  This latter condition in 3D corresponds to the fact that the streamform as a vector field is normal to the boundary.  These boundary values are easy to impose thanks to the boundary-interpolating property of the knot sequence \eqref{eq:OpenUniformKnotVector}.
We let the boundary condition be denoted by \(\bpsi|_{\partial W} = \bzero\).

Now, the equation satisfied by \(\bpsi\) is the Poisson system 
\begin{align}
\label{eq:StreamformPoissonWithoutGauge}
    (\bd_{n-2}^\intercal\star_{n-1}\bd_{n-2})\bpsi = \bd_{n-2}^\intercal\star_{n-1}\bff,\quad \bpsi|_{\partial W} = \bzero.
\end{align}
For \(n=2\) the equation \eqref{eq:StreamformPoissonWithoutGauge} becomes \((\bd_0^\intercal\star_1\bd_0)\bpsi = \bd_0^\intercal\star_1\bff\) which is a scalar Poisson equation with Dirichlet boundary, which has a unique solution.
Note that this is a sparse linear solve as there is no involvement of the inverse of any Galerkin Hodge star.

In 3D, however, \((\bd_1^\intercal\star_{2}\bd_1)\) has a kernel, being the image of \(\bd_0\).
A standard treatment in Discrete Exterior Calculus is to fix this gauge degree of freedom by replacing \((\bd_1^\intercal\star_2\bd_1)\) (discrete \(\delta d\)) by the full Hodge Laplacian \((\bd_1^\intercal\star_2\bd_1 + \star_1\bd_0\star_0^{-1}\bd_0^\intercal\star_1)\) (discrete \(\delta d + d\delta\)).  
Notice that this treatment would introduce the inverse of a Galerkin Hodge star.
However, by the following theorem, we point out that the gauge-fixing term \(\star_1\bd_0\star_0^{-1}\bd_0^\intercal\star_1\) does not require that the stars are Galerkin.  In fact, all the \(\star\)'s in the gauge-fixing term can be replaced by \emph{any} symmetric positive definite matrix (\eg\@ the identity matrix) without changing the equality \eqref{eq:StreamformPoissonWithoutGauge}.

\begin{theorem}
In 3D, the solution \(\bpsi\in C^1\) to
\begin{align}
\label{eq:StreamformPoissonWithGauge}
    (\bd_1^\intercal\star_2\bd_1 + \bA\bd_0\bB\bd_0^\intercal\bA)\bpsi = \bd_1^\intercal\star_2\bff,\quad \bpsi|_{\partial W} = \bzero
\end{align}
satisfies \((\bd_1^\intercal\star_2\bd_1)\bpsi = \bd_1^\intercal\star_2\bff\), where \(\bA,\bB\) are arbitrary symmetric positive definite matrices.
\end{theorem}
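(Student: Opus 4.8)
The plan is to lean on exactly two ingredients: the cochain identity $\bd_1\bd_0=\bzero$ of the discrete de Rham complex, and the positive definiteness of $\bA$ and $\bB$; the topology of $W$ enters only when I argue that \eqref{eq:StreamformPoissonWithGauge} is well posed. The first move is to transpose the cochain identity into $\bd_0^\intercal\bd_1^\intercal=\bzero$ and apply $\bd_0^\intercal$ to the left of both sides of \eqref{eq:StreamformPoissonWithGauge}. The discrete $\delta d$ term is annihilated, $\bd_0^\intercal\bd_1^\intercal\star_2\bd_1\bpsi=\bzero$, and so is the right-hand side, $\bd_0^\intercal\bd_1^\intercal\star_2\bff=\bzero$, leaving the single identity
\[
\bd_0^\intercal\bA\bd_0\bB\,\bd_0^\intercal\bA\bpsi=\bzero.
\]

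Next I would show that this forces the entire gauge term to vanish. Writing $\bg\coloneqq\bd_0^\intercal\bA\bpsi\in C^0$, the displayed identity reads $\bd_0^\intercal\bA\bd_0(\bB\bg)=\bzero$; pairing with $\bB\bg$ and using the symmetry of $\bA$ gives $\|\bd_0\bB\bg\|_{\bA}^2=(\bB\bg)^\intercal\bd_0^\intercal\bA\bd_0(\bB\bg)=0$, and since $\bA$ is positive definite this means $\bd_0\bB\bg=\bzero$. Hence $\bA\bd_0\bB\bd_0^\intercal\bA\bpsi=\bA\bd_0\bB\bg=\bzero$, so substituting back into \eqref{eq:StreamformPoissonWithGauge} cancels the gauge term and leaves precisely $(\bd_1^\intercal\star_2\bd_1)\bpsi=\bd_1^\intercal\star_2\bff$. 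Observe that nothing in this computation sees the entries of $\bA$, $\bB$, or $\star_2$ beyond the symmetric positive definiteness of $\bA,\bB$, which is exactly the source of the stated freedom to pick $\bA,\bB$ arbitrarily (e.g.\ the identity).

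Finally, to make ``the solution'' meaningful I would check that the operator $\bd_1^\intercal\star_2\bd_1+\bA\bd_0\bB\bd_0^\intercal\bA$ is symmetric positive definite on the space of $1$-cochains with $\bpsi|_{\partial W}=\bzero$. Symmetry and the nonnegative quadratic form $\bpsi^\intercal(\bd_1^\intercal\star_2\bd_1+\bA\bd_0\bB\bd_0^\intercal\bA)\bpsi=\|\bd_1\bpsi\|_{\star_2}^2+\|\bd_0^\intercal\bA\bpsi\|_{\bB}^2$ are immediate. Definiteness is where the only nontrivial input appears: vanishing of this form forces $\bd_1\bpsi=\bzero$, and on a simply connected $W$ the relative discrete complex is exact at $C^1$, so $\ker(\bd_1)=\im(\bd_0)$ and $\bpsi=\bd_0\bg$ for some $\bg\in C^0$; then $\bd_0^\intercal\bA\bpsi=\bzero$ paired with $\bg$ yields $\|\bd_0\bg\|_{\bA}^2=0$, i.e.\ $\bpsi=\bzero$. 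I therefore expect the main obstacle to be not the algebra---a single application of $\bd_0^\intercal$ together with $\bd_1\bd_0=\bzero$---but establishing this discrete exactness under the vanishing boundary condition, for which I would invoke the discrete Poincar\'e lemma for the B-spline (FEEC) complex on a contractible box and defer the non-simply-connected case to \remref{rmk:Cohomology}.
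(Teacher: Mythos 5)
Your proof is correct and takes essentially the same route as the paper's: the paper left-multiplies \eqref{eq:StreamformPoissonWithGauge} by \(\bpsi^\intercal\bA\bd_0\bB\bd_0^\intercal\) and uses \(\bd_0^\intercal\bd_1^\intercal=\bzero\) to obtain the vanishing quadratic form \(\bpsi^\intercal\bA\bd_0\bB\bd_0^\intercal\bA\bd_0\bB\bd_0^\intercal\bA\bpsi=\bzero\), hence \(\bd_0\bB\bd_0^\intercal\bA\bpsi=\bzero\) by positive definiteness --- exactly your two-step version (apply \(\bd_0^\intercal\), then pair with \(\bB\bg\)). Your final well-posedness paragraph is sound but goes beyond the paper's proof, which only establishes the stated implication for any solution of \eqref{eq:StreamformPoissonWithGauge}; the cohomological issue you defer is indeed handled separately in \remref{rmk:Cohomology}.
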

\begin{proof}
    Suppose \(\bpsi\) is a solution to \eqref{eq:StreamformPoissonWithGauge}.
    Multiplying \(\bpsi^\intercal\bA\bd_0\bB\bd_0^\intercal\) from the left on \(\eqref{eq:StreamformPoissonWithGauge}\) and noting that \(\bd_0^\intercal\bd_1^\intercal = \bzero\), one obtains 
    \begin{align}
        \bpsi^\intercal\bA\bd_0\bB\bd_0^\intercal\bA\bd_0\bB\bd_0^\intercal\bA\bpsi = \bzero.
    \end{align}
    This implies \(\bd_0\bB\bd_0^\intercal\bA\bpsi = \bzero\).  Therefore \eqref{eq:StreamformPoissonWithGauge} implies \((\bd_1^\intercal\star_2\bd_1)\bpsi = \bd_1^\intercal\star_2\bff\).
\end{proof}

In practice we solve the following system in 3D.
\begin{subequations}
\label{eq:3DStreamformSolve}
\begin{align}
\label{eq:3DStreamformSolveA}
    &\textsc{Solve}\quad(\bd_1^\intercal\star_2\bd_1 + \bd_0\bd_0^\intercal)\bpsi = \bd_1^\intercal\star_2\bff,\quad\bpsi|_{\partial W} = \bzero,\\
    \label{eq:3DStreamformSolveB}
    &\textsc{Update}\quad\bff_{\rm new}\gets\bd_1\bpsi.
\end{align}
\end{subequations}
Importantly this is a sparse linear solve as there is no more involvement of any inverse Galerkin Hodge star.  

\begin{remark}[Non-simply-connected domains]\label{rmk:Cohomology}
If the domain is non-simply-connected, then we have \(\bff_{\rm new} = \bd_{n-2}\bpsi + \bh\) for some cohomology component \(\bh\).  Here \(\bh\in \cH^{n-1} \coloneqq \{\bd_{n-2}\bpsi\,|\,\bpsi|_{\partial W} = \bzero\}^\bot \cap \ker(\bd_{n-1})\) is some divergence-free form that is orthogonal to streamform fields.  The space \(\cH^{n-1}\) is finite dimensional.  Elements of \(\cH^{n-1}\) are close to harmonic: after interpolation \(\cI_{n-1}\) they are exactly divergence-free but not exactly curl-free \cite{RoyChowdhury:2024:HOD}.
Nevertheless this space serves as a representative for the cohomology space.
To obtain this cohomology component \(\bh\in \cH^{n-1}\) in \(\bff_{\rm new} = \bd_{n-2}\bpsi + \bh\), simply orthogonally project \(\bff\) to \(\cH^{n-1}\).

Note that we do not need to evolve the cohomology component \(\bh\) as described in \cite{Yin:2023:FC}.  In contrast to a vorticity method that requires evolving the cohomology component, the field \(\bff\) comes from a particle covector field, whose advection encodes the correct dynamics in its cohomology.
\end{remark}

\subsubsection{Acceleration for Linear Solves}
\label{sec:AccelerationForLinearSolves}
To speed up the involved CG solve for \eqref{eq:3DStreamformSolveA}, we employ a preconditioner.
We use a geometric multi-grid (GMG) for the Poisson problem on 1-cochains \(C^1\) defined over edges.
Similar speed-ups have previously been proposed for solvers involving the face elements \cite{Aanjaneya:2019:EGMVL}.
For this, we require two operators:
the \emph{prolongator}, which involves the operation of assigning values to a finer resolution grid, and the \emph{restrictor}, which entails the inverse operation, where coarser edges are used to reconstruct values on a finer mesh.
These prolongator and restrictor operators can be constructed by the B-spline interpolation and its pseudoinverse, which can inherit any order of accuracy from the choice of mimetic B-spline interpolations.
This method is reminiscent of the works of
 \cite{Bell:2008:AMG} where they perform such operations for Algebraic multigrid methods, and to the work of \cite{DeGoes:2016:SEC} of extending DEC machinery to subdivision surfaces.
For simplicity and to increase the sparsity of coarser-level Laplacians, we choose the lowest accuracy for a mimetic prolongation (\ie\@ B-spline degree $p=1$).
Additionally, we perform a row-wise lumping of the restrictor (aka pseudoinverse interpolation) operation to lower the cost of restriction operator.
Note that such approximations do not lower the accuracy of the CG solve as these approximations are used only for the preconditioner.
With the preconditioner added, the number of iterations for the PCG solver to converge is roughly reduced by three times.

\subsection{Adding Pressure Force Back to Particles}

In the classical FLIP method, one adds the interpolated pressure force inferred from the pressure projection step to the particle's momentum/impulse.
We can do so 
in the continuous theory, since 
the particle impulse \(\vec u\) has a gauge degree of freedom by addition of the gradient of a scalar field (\cf\@ \eqref{eq:ImpulseEquationVector} and \eqref{eq:HamiltonianFlowAsLieAdvection}).  Adding the interpolated pressure force to particles can keep \((\vec x_{\sfp},\vec u_{\sfp})_{\sfp\in\cP}\) close to being divergence-free.  Close to being divergence-free helps maintaining the stability of the overall method.

Here, we describe two methods (Sections~\ref{sec:DivConsistentInterpolationOfPressureForce} and \ref{sec:CurlConsistentInterpolationOfPressureForce}) of adding pressure force back to particles.  Let us denote the pressure force on the grid by \(\btau\coloneqq \bff_{\rm new} - \bff\in C^{n-1}\) during a pressure solve (by \eqref{eq:PressureSolve} or \eqref{eq:3DStreamformSolve}).

\subsubsection{Divergence-Consistent Interpolation of Pressure Force}
\label{sec:DivConsistentInterpolationOfPressureForce}
We can add the pressure force on the particle impulse data \((\vec u_{\sfp})_{\sfp\in\cP}\) by interpolating the pressure force as a flux form
\begin{align}
\label{eq:AddingPressureForceFlux}
    \vec u_{\sfp}\gets\vec u_{\sfp} + (\cI_{n-1}\btau)_{\vec x_{\sfp}}.
\end{align}
Remarkably, this modification \eqref{eq:AddingPressureForceFlux} \emph{does not} modify the vector field \(\vec v\in\fX_{\div}(W)\) in \eqref{eq:COFLIP-ODEc}, using the fact \eqref{eq:IDaggerIsLeftInverse} that \(\hat\cI^+\) is a left-inverse of \(\cI\).  
Spelling it out, observe that
the modification \eqref{eq:AddingPressureForceFlux} modifies \(\hat\cI^+(\vec x_\sfp,\vec u_\sfp)_{\sfp\in\cP}\) by \(\btau = -\star_{n-1}^{-1}\bd_{n-1}^\intercal\bp\in\fB_{\div}^\bot\) (\cf\@ \eqref{eq:PressureSolveB}) which is exactly removed by \(\bP_{\fB_{\div}}\).  
As a consequence, the addition of \eqref{eq:AddingPressureForceFlux} does not affect energy conservation.

The caveat of
\eqref{eq:AddingPressureForceFlux}
is that \(\cI_{n-1}\btau\) as a continuous vector field is not an exact gradient of a continuous function (or a curl-free field).  Note that the mimetic guarantee for \(\cI_{n-1}\) is that it commutes with the exterior derivative for flux form, which is the divergence operator.  In particular \eqref{eq:AddingPressureForceFlux} may modify the state of \(J_{\adv}(\vec x,\vec u)\in \fX_{\div}^*(W) = \Omega^1(W)/d\Omega_0(W)\) in such a way that changes the coadjoint orbit.

In practice, on a simulation grid with a decent resolution, the high-order interpolation \(\cI_{n-1}\btau\) of a discrete gradient field \(\btau = -\star_{n-1}^{-1}\bd_{n-1}^\intercal\bp\) is still close to a continuous gradient field.  We observe that its effects on the coadjoint orbit preservation and Casimir conservation are minimal (see Figures \ref{fig:leapfrog_plots} and \ref{fig:twistedtorus}).

\begin{figure}
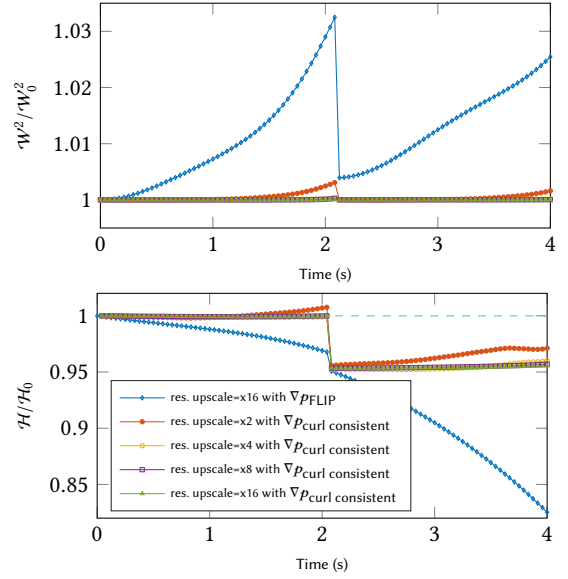

    \centering
    \input{figures/Results/2D/casimir_under_refinement/with_and_without_exact_grad_p_comparison}
    \input{figures/Results/3D/helicity_under_refinement/helicity_under_refinement_exact_grad_p}
    \caption{
        Enstrophy for two dimensions, and helicity for three dimensions under refinement.
        Note that with increasing the resolution at which the Casimir measurement takes place, we better see the preservation of them.
        This is only expected while particle information is only purturbed with the curl consistent pressure force.
        With the traditional FLIP-based pressure force, we see the Casimirs changing; though, this error is very small.
        Finally, we emphasize that the jump in the middle is due to global resetting of the particle data.
    }
    \label{fig:casimir_refinement}
\end{figure}

\subsubsection{Curl-Consistent Interpolation of Pressure Force}
\label{sec:CurlConsistentInterpolationOfPressureForce}

Another option for adding pressure force \(\btau\) back to the particles is to treat the force as a discretely exact 1-form.  In that case, its interpolation by \(\cI_1\), which commutes with curl, will be exact (and curl free). Therefore, the addition of pressure does not affect the coadjoint orbit.
To do so for a given \(\btau\in C^{n-1}\), first interpolate it to the particles \(\vec\tau_{\sfp} = (\cI_{n-1}\btau)_{\vec x_{\sfp}}\) and then inverse-interpolate back to a discrete 1-form \(\tilde\btau \coloneqq\hat\cI_1^+(\vec x_{\sfp},\vec \tau_{\sfp})_{\sfp\in\cP}\in C^1\).
This \(\tilde\btau\in C^1\) is the discrete 1-form that best shares the same interpolated vector field with the original \(\btau\in C^{n-1}\).  Note that \(\tilde\btau\) is not necessarily exact (\(\bd_0\) of a scalar field).
So, perform a Poisson reconstruction \(\tilde\bp = (\bd_0^{\intercal}\star_1\bd_0)^{-1}\bd_0^{\intercal}\star_1\tilde\btau\) to find the closest exact 1-form \(\bd_0\tilde\bp\) to \(\tilde\btau\).
In three dimensions, the reconstruction problem additionally has a Dirichlet boundary condition of $\tilde{\bp}=0$, which ensures that helicity remains unchanged.
Finally, add the pressure force 
\begin{align}
\label{eq:AddingPressureForce1Form}
    \vec u_{\sfp}\gets\vec u_{\sfp} + (\cI_1\bd_0\tilde\bp)_{\vec x_{\sfp}}.
\end{align}
The modification \eqref{eq:AddingPressureForce1Form} modifies \((\vec x,\vec u)\) by an exact form, and therefore, it does not modify the coadjoint in the limit when the particles represent a continuum.
Note, however, that the curl-consistent interpolated pressure force is less computationally efficient as it involves two additional global solves, for the pseudo-inverse and Poisson solves. As such, we use the div-consistent pressure force for all of our simulations, despite its minimal effects on Casimir conservation.

\subsection{Stability Discussions}
    The CO-FLIP algorithm is conditionally stable.  It substantially improves on the well-known instability problems of the traditional FLIP method. Here, we discuss the various factors that affects stability.

    \subsubsection{Stability Given by Pressure Force}
    \label{sec:StabilityGivenByPressureForce}
    A known problem in FLIP methods is the frequency gap between the high resolution particles and low resolution grids.  The particle data can develop high frequency patterns, whereas an interpolated pressure force feeding back to the particles can only regulate them up to the resolution of the grid.
    While this problem with FLIP is generally present, we observe that utilizing mimetic interpolation can significantly improve the regularity of particle data.
    We observe that the divergence-consistent mimetic interpolation of pressure force \eqref{eq:AddingPressureForceFlux} of \secref{sec:DivConsistentInterpolationOfPressureForce} is better at maintaining the stability of FLIP-based methods.
    In our experiments, as seen in \figref{fig:ablation_gradp_adaptivity}, we show that the addition of the pressure force allows for a far more infrequent global resetting frequency of the particle data.
    Empirically, we found the reset frequency can be set to once every 200 frames.
    In contrast, the state-of-the-art frequency in \cite{Deng:2023:FSN} is every 20 frames.
    Additionally, as discussed in \secref{sec:AdaptiveReset}, the adaptive resetting of the particles further lowers the error at very infrequent global reset frequencies.
    
    \subsubsection{Particle Distribution Over Time}
    The stability of the CO-FLIP algorithm is further improved by using the mimetic interpolation scheme to get point-wise divergence-free velocity queries which we can integrate to volume preserving flow maps for the particles. Concretely, a uniform independent and identical distribution (i.i.d\acdot) of particles would remain uniformly i.i.d\acdot\ through the dynamics of the incompressible fluid.
    However, to reduce the variance on the Monte Carlo integration needed for $\hat\cI^+$ (see \eqref{eq:IDaggerDiscrete}), one could use a stratified sampling of particles per grid cell.  Such a stratified sampling would not remain stratified throughout the advection of particles, thus becoming an ill-suited distribution for the Monte Carlo integration of $\hat\cI^+$. This would increase the error gap between the discrete operator $\hat\cI^+$ and the continuous operator $\cI^+$, which would worsen the errors in the simulation and create room for instabilities to grow.  As such, on global resets of the particle information, we additionally redistribute the particles in a per-grid-cell-stratified fashion.

    \subsubsection{Particle-per-Cell Count}
    Other than the particle distribution, the number of particles per cell (PPC) also affects the variance of the Monte Carlo integration.  In principal, one needs a dense enough sample for the least-square system to have a unique solution. However, we empirically choose higher PPC counts to ensure a lower variance in the integration error of the discrete $\hat\cI^+$ operation.  We select minimum and maximum PPC counts, and adaptively distribute particles in the domain based on the magnitude of local vorticity compared to the global maximum; this adaptive nature also helps with the performance (see \secref{sec:pseudoinversep2gsolve}).  For 3D experiments, the PPC count is in the range $[8-27]$ ($[4-16]$ for 2D).  In our experiments, this provides stable results throughout the simulated frames; however, below the mentioned range, we see unstable noisy behavior in the velocity field.

    \subsubsection{Choosing the CFL Number}
    The CFL number \cite{Courant:1928:CFL} we choose for the time integration dictates how good the fixed-point iteration algorithm's initial guess is, which relates to how fast the fixed-point iteration converges.  At times, the fixed-point iteration fails to converge due to this.  In these cases, we observe an unstable behavior from the solver. Additionally, so long as the fixed-point iteration converges in Algorithm 2, energy can be preserved by using \eqref{eq:EnergyCorrection}; otherwise, one would observe an energy drift. Note, however, than one may not want to increase CFL number too high as it increase the error in the time integration; even if the fixed-point iteration converges.  As such, an optimal CFL number is preferred where there is balance between performance and accuracy.  We empirically find CFL numbers below one to be stable, but we choose roughly 0.5 to be optimal for accuracy.

\begin{figure}
    \centering
%
%
\definecolor{mycolor1}{rgb}{0.00000,0.44700,0.74100}%
\definecolor{mycolor2}{rgb}{0.85000,0.32500,0.09800}%
\definecolor{mycolor3}{rgb}{0.92900,0.69400,0.12500}%
\begin{tikzpicture}

\begin{axis}[%
width=0.7\linewidth,
height=0.5\linewidth,
at={(0,0)},
scale only axis,
xmode=log,
xmin=1,
xmax=256,
xtick={1,2,4,8,16,32,64,128,256},
xticklabels={{1},{$\nicefrac{1}{2}$},{$\nicefrac{1}{4}$},{$\nicefrac{1}{8}$},{$\nicefrac{1}{16}$},{$\nicefrac{1}{32}$},{$\nicefrac{1}{64}$},{$\nicefrac{1}{128}$},{$\nicefrac{1}{256}$}},
xlabel style={font=\color{white!15!black}},
xlabel={\sffamily \scriptsize Reset Frequency (1 / \# frames)},
ymode=log,
ymin=1e-05,
ymax=0.01,
ytick={ 1e-05, 0.0001,  0.001,   0.01},
yminorticks=true,
ylabel style={font=\color{white!15!black}},
ylabel={\sffamily \scriptsize $L^2 \text{ Velocity Error}$},
axis background/.style={fill=white},
legend pos = south east,
legend style={legend cell align=left, align=left, legend plot pos=left, draw=white!15!black}
]
\addplot [color=mycolor1, mark=*]
  table[row sep=crcr]{%
1	0.000277578\\
2	0.000272687\\
4	0.000119266\\
8	0.00010086\\
16	0.000100664\\
32	0.000147595\\
64	0.000163341\\
128	0.000221951\\
256	0.000239441\\
};
\addlegendentry{\sffamily \tiny With $\nabla p$  \& adaptive resetting ($\alpha=1$)}

\addplot [color=mycolor2, mark=o]
  table[row sep=crcr]{%
1	0.000277578\\
2	0.000272687\\
4	0.000119266\\
8	0.00010086\\
16	0.000129486\\
32	0.000169696\\
64	0.00035325\\
128	0.000542823\\
256	0.00211288\\
};
\addlegendentry{\sffamily \tiny With $\nabla p$ \&  not adaptive resetting}

\addplot [color=mycolor3, mark=square*]
  table[row sep=crcr]{%
1	0.000277659\\
2	0.000277884\\
4	6.79e-05\\
8	6.23e-05\\
16	0.000593042\\
32	1.53574e+91\\
64	1.53574e+91\\
128	1.53574e+91\\
256	1.53574e+91\\
};
\addlegendentry{\sffamily \tiny Without $\nabla p$ \& not adaptive resetting}

\end{axis}
\end{tikzpicture}%
    \caption{
        Comparison of adding or not adding pressure force back to the particles, and whether adaptive resetting is used.
        Once pressure force is added to the particles, we can avoid blow-ups in lower global reset frequencies.
        Additionally, the error in simulation can be further minimized in highly infrequent global resets by using adaptive resets on the particles.
    }
    \label{fig:ablation_gradp_adaptivity}
\end{figure}
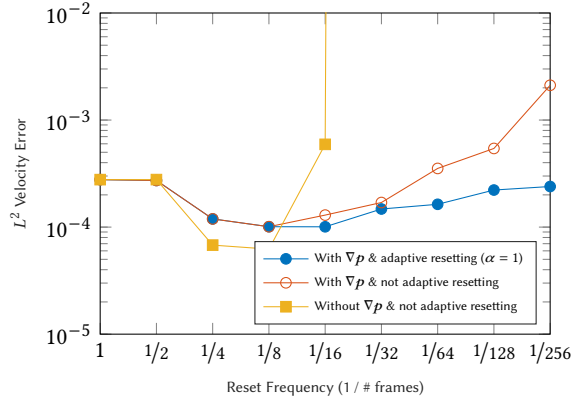

\subsection{Initialization}
To initialize the simulation the user can provide a velocity field or a vorticity field.  For a given velocity field \(\vec u_{\rm given}\in\fX_{\div}(W)\), simply assign it to particles $\vec u_{\sfp}, \sfp\in\cP$.  One may also filter this velocity through \(\cI_{n-1}\hat\cI_{n-1}^+\) so its level of detail is conformed to the resolution of the grid.  

For a given vorticity field (scalar field in 2D and vector field in 3D) \(w\), assign the vorticity on each particle, transfer it to the grid by the pseudoinverse of the \(C^{n-2}\) mimetic interpolation
\begin{align}
    \bw = \star_{n-2}\hat\cI_{n-2}^+(\vec x_\sfp,w_\sfp)_{\sfp\in\cP} \in C^{(n-2)*},
\end{align}
and solve the streamform \(\bpsi\in C^{n-2}\), \eqref{eq:StreamformPoissonWithoutGauge} for 2D and \eqref{eq:3DStreamformSolveA} for 3D, with the right-hand side set as \(\bw\in C^{(n-2)*}\).  Finally set \(\bff = \bd_{n-2}\bpsi\) followed by interpolating \(\bff\) to particle velocity using \(\cI_{n-1}\).

Note that one should not simply assign values at the grid faces since these faces may not represent the expected physical locations in the domain (\cf\@ \remref{rmk:GridOfCoefficients} and \remref{rmk:GrevilleGrid}).

\subsection{Particle Advection}
\label{sec:ParticleAdvection}
Here we detail how we execute the particle advection action \eqref{eq:AdvDefinition}.
In \eqref{eq:AdvDefinition} we assume a given vector field \(\vec v\in\fX_{\div}(W)\), and we transport the data \((\vec x,\vec u)\) by the flow generated by \(\vec v\).  While solving the ODE system \eqref{eq:AdvDefinition} is feasible, we implement the computational setup differently.  
Instead of updating \((\vec x_{\sfp},\vec u_{\sfp})\) at each time \(t\), we store a rest-frame \((X_{\sfp},\vec U_{\sfp})\) and update a flow map and its inverse Jacobian per particle.  To access \((\vec x_{\sfp},\vec u_{\sfp})\) at the current frame, we use the flow map information to transform the rest-frame \((\vec X_{\sfp},\vec U_{\sfp})\). 
The main reason for doing so is that we can access the Jacobian at each particle, which is a cumulative record of the quality of the map.  Based on the quality of the particle-wise Jacobian, we can reset the rest frame for a particle.  This procedure is detailed in \secref{sec:AdaptiveReset}.

\subsubsection{Data Structure}
Each particle \(\sfp\in\cP\) is assigned with a rest-frame position and impulse 
\((\vec X_{\sfp}\in W,\vec U_{\sfp}\in T_{\vec X}^*W=\RR^n)\), and the timestamp \(T_{\sfp}\in\RR\) of the rest-frame.
Each particle is also equipped with a position vector \(\vec x_{\sfp}\in W\) that represents the particle's current location, and an \(n\times n\) matrix \(\Psi_{\sfp}\).
The pair of positions \(\vec X_{\sfp}\) and \(\vec x_{\sfp}\) represents the flow map at this particle from time \(T_{\sfp}\) to the current time \(t\).  The matrix \(\Psi_{\sfp}\) is the inverse Jacobian of the flow map.

To access the current impulse \(\vec u_\sfp\) from this data structure, evaluate \(\vec u_\sfp = \Psi_\sfp^\intercal \vec U_\sfp\).

\subsubsection{Reset}
\label{sec:Reset}
When a reset is called for a particle \(\sfp\), such as at the initial frame, we set \(\vec X_\sfp\gets \vec x_\sfp\), \(\Psi_{\sfp} \gets \id\), \(T_\sfp\gets t\), and \(\vec U_\sfp \gets \vec u_\sfp^{\rm given}\), where \(\vec u_\sfp^{\rm given}\) is the given impulse covector to be assigned to this particle for the reset.

\subsubsection{Advection}
The advection step \eqref{eq:AdvDefinition} for \((\vec x_\sfp,\vec u_\sfp)\) is equivalent to marching \((\vec x_\sfp,\Psi_\sfp)\) (while fixing \(\vec X_\sfp,\vec U_\sfp, T_\sfp\)) by the following ODE
\begin{align}
\label{eq:AdvWithPsi}
    \begin{cases}
        {d\over dt}\vec x_\sfp(t) = \vec v|_{\vec x_\sfp(t)}\\
        {d\over dt}\Psi_\sfp(t) = -\Psi_\sfp(t)(\nabla\vec v|_{\vec x_\sfp(t)})
    \end{cases}
\end{align}
which we solve by RK4.  It is easy to verify that \(\vec u_\sfp(t) = \Psi_\sfp^\intercal(t)\vec U_\sfp\) satisfies \eqref{eq:AdvDefinition}.

\subsubsection{Trace-Free and Unimodularity}
Since the velocity field \(\vec v\) from our mimetic interpolation is pointwise divergence free, the Jacobian \(\nabla\vec v\) is pointwise trace-free \(\nabla\vec v\in\slla(n)\).%
\footnote{\(\slla(n)\) is the space of trace-free \(n\times n\) matrices.  It is the Lie algebra of \(\SL(n)\).}
Therefore, the solution \(\Psi_\sfp(t)\) to \({d\over dt}\Psi_\sfp(t) = -\Psi_\sfp(t)(\nabla\vec v|_{\vec x_\sfp(t)})\) with \(\Psi_\sfp(t=T_{\sfp}) = \id\) is a unimodular (\(\det=1\), or volume-preserving) matrix \(\Psi_\sfp(t)\in\SL(n)\).%
\footnote{The special linear group \(\SL(n)\) is the Lie group of \(n\times n\) matrices with \(\det=1\).}
The unimodularity of \(\Psi_\sfp(t)\) remains true over different frames even when the flow velocity \(\vec v\) has changed.  This is because the algebraic property that \(\det=1\) is closed under matrix compositions (\(\SL(n)\) is a group).

\paragraph{Projection to \(\SL(n)\)}
\begin{wrapfigure}{r}{70pt}
\centering
\setlength{\unitlength}{1pt}
\begin{picture}(70,70)
    \put(-10,0){\includegraphics[width=80pt]{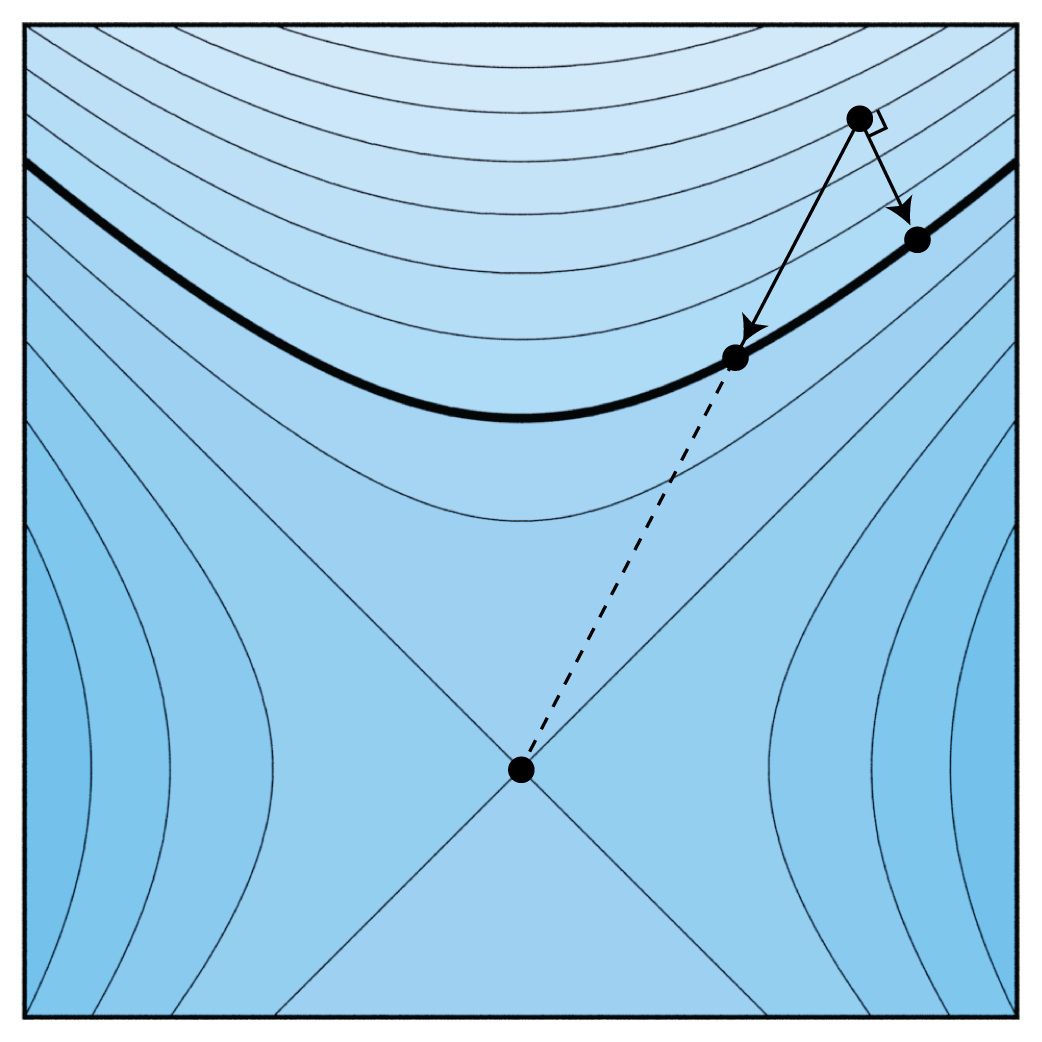}}
    \put(54,73){\tiny\(\Psi\)}
    \put(60,55){\tiny\(\Psi^{\rm new}\)}
    \put(30,15){\tiny\(\bzero\)}
    \put(25,55){\tiny\sffamily rescaled \(\Psi\)}
    \put(-8,70){\tiny\rotatebox{-40}{\(\det=1\)}}
    \put(-5,5){\tiny\(\RR^{n\times n}\)}
\end{picture}
\end{wrapfigure}
Although RK4 gives us a highly accurate integration of \eqref{eq:AdvWithPsi}, over a long time \(\det(\Psi_\sfp)\) may drift away from 1.  To enforce \(\det(\Psi_\sfp)=1\) we employ the following projection to snap \(\Psi_{\sfp}\) back to \(\SL(n)\)  after each RK4 step \cite[Example~4.5]{Hairer:2006:GNI}: Update \(\Psi_\sfp\) by \(\Psi_{\sfp}^{\rm new}\), where \(\Psi_{\sfp}^{\rm new}\) solves
\begin{subequations}
\label{eq:DeterminantProjection}
    \begin{numcases}{}
    \Psi_\sfp^{\rm new}= \Psi_\sfp + \lambda\Psi_\sfp^{-\intercal},\quad\lambda\in\RR\\
    \det(\Psi_{\sfp}^{\rm new})=1.
\end{numcases}
\end{subequations}
One can see that \eqref{eq:DeterminantProjection} is indeed an orthogonal projection to the level-set of determinant in \(\RR^{n\times n}\) by noting that
the cofactor matrix \(\cof(\Psi_\sfp) = \det(\Psi_\sfp)\Psi_\sfp^{-\intercal}\) is the gradient (with respect to the Frobenius inner product) of the determinant function at \(\Psi_\sfp\in\RR^{n\times n}\).
Compared to other alternatives, such as normalizing the matrix by its determinant, our projection is less aggressive in modifying the matrix (see inset).
Solving \eqref{eq:DeterminantProjection} amounts to finding \(\lambda\in\RR\) so that \(\det(\Psi_{\sfp} + \lambda\Psi_{\sfp}^{-\intercal}) = 1\), which is a 1D root finding problem.  We use the following fixed point iteration (quasi-Newton iteration) with initial guess \(\lambda_0 = 0\):
\begin{align}
    \lambda_{\sfk+1} = \lambda_{\sfk} - {\det(\Psi_{\sfp}+\lambda_\sfk\Psi_\sfp^{-\intercal})-1\over \tr((\Psi_\sfp^\intercal\Psi_\sfp)^{-1})\det(\Psi_{\sfp}+\lambda_\sfk\Psi_\sfp^{-\intercal})}.
\end{align}
The iteration involves \(2\times 2\) or \(3\times 3\) matrix inversion, which is computationally cheap.
In practice, we observe that 1--2 iterations are sufficient.

\subsection{Particle-wise Adaptive Reset}
\label{sec:AdaptiveReset}

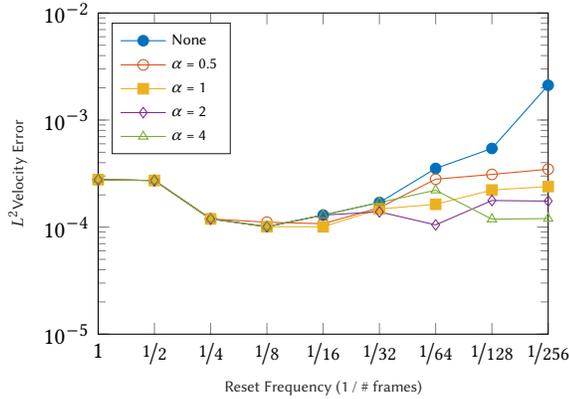
\begin{figure}
    \centering
%
%
\definecolor{mycolor1}{rgb}{0.00000,0.44700,0.74100}%
\definecolor{mycolor2}{rgb}{0.85000,0.32500,0.09800}%
\definecolor{mycolor3}{rgb}{0.92900,0.69400,0.12500}%
\definecolor{mycolor4}{rgb}{0.49400,0.18400,0.55600}%
\definecolor{mycolor5}{rgb}{0.46600,0.67400,0.18800}%
\begin{tikzpicture}

\begin{axis}[%
width=0.7\linewidth,
height=0.5\linewidth,
at={(0,0)},
scale only axis,
xmode=log,
xmin=1,
xmax=256,
xtick={1,2,4,8,16,32,64,128,256},
xticklabels={{1},{$\nicefrac{1}{2}$},{$\nicefrac{1}{4}$},{$\nicefrac{1}{8}$},{$\nicefrac{1}{16}$},{$\nicefrac{1}{32}$},{$\nicefrac{1}{64}$},{$\nicefrac{1}{128}$},{$\nicefrac{1}{256}$}},
xlabel style={font=\color{white!15!black}},
xlabel={\sffamily \scriptsize Reset Frequency (1 / \# frames)},
ymode=log,
ymin=1e-05,
ymax=0.01,
ytick={ 1e-05, 0.0001,  0.001,   0.01},
yminorticks=true,
ylabel style={font=\color{white!15!black}},
ylabel={\sffamily \scriptsize $L^2 \text{Velocity Error}$},
axis background/.style={fill=white},
legend pos = north west,
legend style={legend cell align=left, align=left, legend plot pos=left, draw=white!15!black}
]
\addplot [color=mycolor1, mark=*]
  table[row sep=crcr]{%
1	0.000277578\\
2	0.000272687\\
4	0.000119266\\
8	0.00010086\\
16	0.000129486\\
32	0.000169696\\
64	0.00035325\\
128	0.000542823\\
256	0.00211288\\
};
\addlegendentry{\sffamily \scriptsize None}

\addplot [color=mycolor2, mark=o]
  table[row sep=crcr]{%
1	0.000277578\\
2	0.000272687\\
4	0.000119266\\
8	0.000111134\\
16	0.000107475\\
32	0.000151707\\
64	0.00028026\\
128	0.000310621\\
256	0.000347334\\
};
\addlegendentry{\sffamily \scriptsize $\alpha\text{ = 0.5}$}

\addplot [color=mycolor3, mark=square*]
  table[row sep=crcr]{%
1	0.000277578\\
2	0.000272687\\
4	0.000119266\\
8	0.00010086\\
16	0.000100664\\
32	0.000147595\\
64	0.000163341\\
128	0.000221951\\
256	0.000239441\\
};
\addlegendentry{\sffamily \scriptsize $\alpha\text{ = 1}$}

\addplot [color=mycolor4, mark=diamond]
  table[row sep=crcr]{%
1	0.000277578\\
2	0.000272687\\
4	0.000119266\\
8	0.00010086\\
16	0.000129486\\
32	0.000138853\\
64	0.000104863\\
128	0.000177228\\
256	0.00017492\\
};
\addlegendentry{\sffamily \scriptsize $\alpha\text{ = 2}$}

\addplot [color=mycolor5, mark=triangle]
  table[row sep=crcr]{%
1	0.000277578\\
2	0.000272687\\
4	0.000119266\\
8	0.00010086\\
16	0.000129486\\
32	0.000169696\\
64	0.000221227\\
128	0.000118627\\
256	0.000120221\\
};
\addlegendentry{\sffamily \scriptsize $\alpha\text{ = 4}$}

\end{axis}
\end{tikzpicture}%
    \caption{ Comparison of different choices of adaptive resetting cutoff ($\alpha$) for particles.
    Note that there is little difference between the cutoff values, but all results show a lower error than those without an adaptive reset.}
    \label{fig:ablation-adaptivity-cutoffs}
\end{figure}

The impulse/covector variable \(\vec u_\sfp\) is stretched by the deformation of the flow \eqref{eq:AdvDefinition}.  However, in a fluid flow, the amount of shearing deformation in a flow map is generally unbounded.  This can lead to numerical instability.
In the literature, fluid methods that are based on using a long-time flow map to transport momentum or impulse \cite{Qu:2019:ECF, Nabizadeh:2022:CF, Deng:2023:FSN} reset the flow map globally every few frames to avoid instability. 
We call the number of frames before a global reset the \textit{reset period}.
\cite{Nabizadeh:2022:CF} has a reset period of 5.  \cite{Deng:2023:FSN} resets every 20 frames. 
Classical FLIP is also blended with APIC or PIC, with the typical blending factor empirically chosen around roughly 0.99.
In extreme cases, semi-Lagrangian, PIC, or APIC methods correspond to resetting the flow map at every frame.
Although resetting the flow map stabilizes the fluid simulator, it removes the Lagrangian nature of the method of characteristics.

We point out that the amount of stretching throughout the domain is not uniform.  
Some parts of the domain may deform heavily and require flow-map resets, whereas the flow map in the other parts is still regular and does not require resets.
Indeed, the flow map does not need to be reset globally: in \secref{sec:Reset}, the particle-wise reset can be called \emph{asynchronously}.
This motivates an adaptive strategy to the reset.

Per particle \(\sfp\in\cP\), the reset function \secref{sec:Reset} is called if the deformation gradient \(F_\sfp = \Psi_\sfp^{-1}\) of the particle has its largest singular value exceeding a threshold.
At the reset, we set the rest-frame particle impulse covector \(\vec U_{\sfp}\) (\cf\@ \secref{sec:Reset}) as the current interpolated grid velocity field at the particle's location.

Concretely, for the thresholding, we inspect the following \emph{cumulative Finite Time Lyapunov Exponent}
\begin{align}
    S_\sfp\coloneqq\ln\left(\sigma_{\max}(F_\sfp)\right),
\end{align}
where \(\sigma_{\max}\) evaluates the largest singular value.  It measures the maximal stretching of the deformation since the previous reset call.  
Note that \(S_{\sfp}\) is closely related to the Finite Time Lyapunov Exponent (FTLE), which is defined by \({\rm FTLE}_\sfp\coloneqq S_{\sfp}/T_{\sfp}\) where \(T_\sfp\) is the duration of the deformation as defined in \secref{sec:ParticleAdvection}.
Empirically, as seen in \figref{fig:ablation-adaptivity-cutoffs}, we found that limiting \(S_\sfp < 1\) (\ie\@ reset the particle's flow map if \(S_\sfp\geq1\)) leads to stable results without sacrificing accuracy.

We also include an optional global reset like previous methods.
As discussed in \secref{sec:StabilityGivenByPressureForce}, adding the pressure force back to the particles alone can improve stability so that
the reset period can be as much as 256 frames.
Our adaptive flow-map resetting further enhances the stability of CO-FLIP.
See \figref{fig:ablation_gradp_adaptivity} for comparison.

\subsection{Casimir Measurements}
\label{sec:CasimirMeasurement}

The coadjoint orbit preservation of the CO-FLIP method states that the image \(J_{\adv}(\vec x,\vec u)\in \fX_{\div}^*(W)\) of the fluid state in the infinite dimensional space \(\fX_{\div}^*(W)\) of continuous 1-forms stays on a coadjoint orbit, as previously explained in \secref{sec:DualSpaceOfDivFree}.
A quantitative measurement of this quality is measuring the Casimirs (\cf\@ \ref{def:Casimir}).  A Casimir is a function defined over \(\fX_{\div}^*(W)\) that is constant on each coadjoint orbit.  Therefore, a necessary condition for the coadjoint orbit conservation is the preservation of each Casimir function.  
In 2D a basis for the Casimir functions are the \(p\)-th moment of the vorticity function \cite{Khesin:1989:IEE}
\begin{align}
    \cW^p([\eta])\coloneqq \int_W w^p\, d\mu,\quad w = *(d\eta),\quad[\eta]\in \fX_{\div}^*(W).
\end{align}
In 3D the only regular Casimir is the helicity \citep{Khesin:2022:HUC}
\begin{align}
    \cH\coloneqq \int_W \eta\wedge \omega,\quad \omega = d\eta,\quad [\eta]\in \fX_{\div}^*(W).
\end{align}

However, the Casimirs are defined on the continuous space \(\fX_{\div}^*(W)\), whose evaluation can only be approximated.
In our validations, we measure the Casimirs using a sequence of grids with progressively higher resolutions to obtain a sequence of progressively more accurate approximations of the Casimirs.  
The conservation of the Casimirs is observed in the continuous limit of the approximated Casimirs (see \figref{fig:casimir_refinement}).

Note that the procedures of adding divergence-consistent pressure force back to the particles (\secref{sec:DivConsistentInterpolationOfPressureForce}) and the adaptive resetting (\secref{sec:AdaptiveReset}) can affect the exactness of Casimir preservation.
For validating the exact Casimir preservation in \figref{fig:casimir_refinement} we employed the curl-consistent pressure feedback (\secref{sec:CurlConsistentInterpolationOfPressureForce}) and disabled adaptive resets.  To prevent instability we included a global reset every \(\unit[2]{s}\times \unit[24]{fps} = \unit[48]{frames}\).

In a more realistic simulation setup we include the more stable divergence-consistent pressure force (\secref{sec:DivConsistentInterpolationOfPressureForce}) and the adaptive resetting (\secref{sec:AdaptiveReset}).  As shown in \figref{fig:leapfrog_plots} the 2D Casimirs drift only mildly, compared to other methods, over the course of \(\unit[500]{s}\times \unit[48]{fps} = \unit[24,000]{frames}\).  In 3D, the helicity is conserved better than competing methods over the course of \(\unit[10]{s}\times \unit[48]{fps} = \unit[480]{frames}\), as shown in \figref{fig:twistedtorus}.

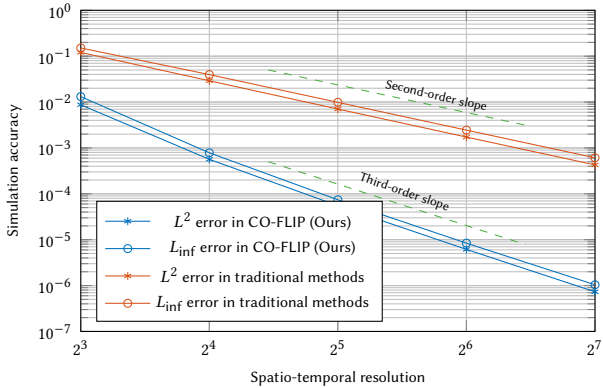
\begin{figure}
    \centering
%
%
\definecolor{mycolor1}{rgb}{0.00000,0.44700,0.74100}%
\definecolor{mycolor2}{rgb}{0.85000,0.32500,0.09800}%
\definecolor{mycolor3}{rgb}{0.2900,0.69400,0.2500}%
\begin{tikzpicture}

\begin{loglogaxis}[%
width=0.8\linewidth,
height=0.5\linewidth,
at={(0,0)},
scale only axis,
xmin=8,
xmax=128,
log basis x=2,
log basis y=10,
ymin=1e-07,
ymax=1,
xminorticks=true,
yminorticks=true,
ymajorticks=true,
xminorgrids=true,
xmajorgrids=true,
yminorgrids=true,
ymajorgrids=true,
max space between ticks=1,
xlabel={\scriptsize \sffamily Spatio-temporal resolution},
ylabel={\scriptsize \sffamily Simulation accuracy},
xticklabel style={font=\scriptsize},
yticklabel style={font=\scriptsize},
legend pos = south west,
]
\addplot [color=mycolor1, mark=asterisk, mark options={solid, mycolor1}, mark size=1.5pt]
  table[row sep=crcr]{%
8	0.00870778\\
16	0.000564607\\
32	5.50643e-05\\
64	6.14164e-06\\
128	7.31478e-07\\
};
\addlegendentry{\scriptsize \sffamily $L^2$ error in CO-FLIP (Ours)}
\addplot [color=mycolor1, mark=o, mark options={solid, mycolor1}, mark size=1.5pt]
  table[row sep=crcr]{%
8	0.0131315\\
16	0.00078327\\
32	7.41315e-05\\
64	8.40112e-06\\
128	1.04205e-06\\
};
\addlegendentry{\scriptsize \sffamily $L_\text{inf}$ error in CO-FLIP (Ours)}
\addplot [color=mycolor2, mark=asterisk, mark options={solid, mycolor2}, mark size=1.5pt]
  table[row sep=crcr]{%
8	0.120823\\
16	0.0291132\\
32	0.00704491\\
64	0.00172607\\
128	0.000427043\\
};
\addlegendentry{\scriptsize \sffamily $L^2$ error in traditional methods}
\addplot [color=mycolor2, mark=o, mark options={solid, mycolor2}, mark size=1.5pt]
  table[row sep=crcr]{%
8	0.151\\
16	0.0396795\\
32	0.00984635\\
64	0.00244569\\
128	0.000610298\\
};
\addlegendentry{\scriptsize \sffamily $L_\text{inf}$ error in traditional methods}
\addplot [color=mycolor3, dashed]
  table[row sep=crcr]{%
22	0.0005\\
44	6.25e-05\\
88	7.8125e-06\\
};
\addplot [color=mycolor3, dashed]
  table[row sep=crcr]{%
22	0.05\\
44	0.0125\\
88	0.003125\\
};
\end{loglogaxis}
\end{tikzpicture}%
    \\
    \begin{picture}(0,0)(0,0)
    \put(20,92){\sffamily \tiny \rotatebox{-19}{Third-order slope}}
    \put(30,127){\sffamily \tiny \rotatebox{-13}{Second-order slope}}
    \end{picture}
    \caption{
        The 2D convergence plot for the Taylor-Green experiment ran for one second at fixed $CFL=0.4$.
        Note that our method achieves third-order accuracy using quadratic B-splines compared to second order accuracy of previous methods.
    }
    \label{fig:convergenceplot_2d}
\end{figure}

\begin{figure*}
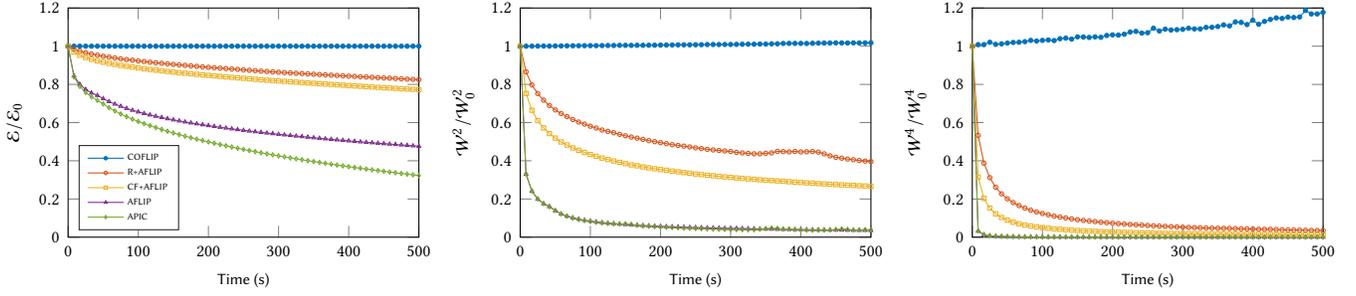

    \input{figures/Results/2D/leapfrog/leapfrog_energy2d}
    \input{figures/Results/2D/leapfrog/leapfrog_casimir2d}
    \input{figures/Results/2D/leapfrog/leapfrog_casimir2d_vort4}
    \caption{
        Energy, enstrophy (second moment of vorticity), and fourth moment of vorticity plots over time for the 2D leapfrogging vortices experiment.
        Note that our method exactly conserves energy over time, minimal relative error in the measured values for Casimirs, compared to other methods. 
        For clarity purposes, the plots show data samples at x400 times lower frequency than the simulation frame rate.
    }
    \label{fig:leapfrog_plots}
\end{figure*}

\begin{figure}
    \centering
    \includegraphics[width=0.3\columnwidth]{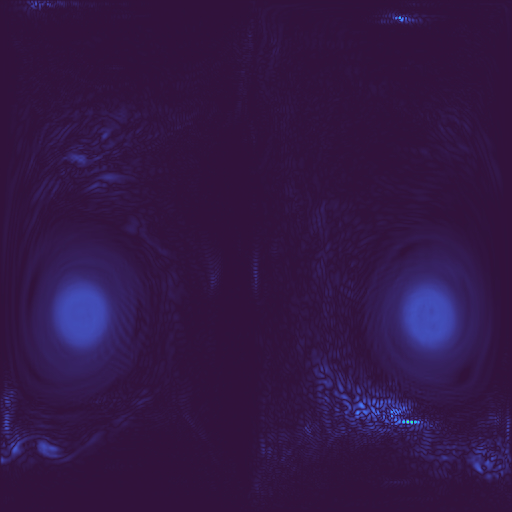}
    \includegraphics[width=0.3\columnwidth]{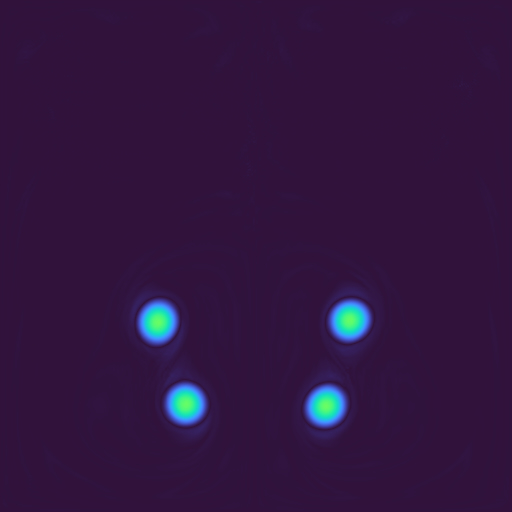}
    \includegraphics[width=0.3\columnwidth]{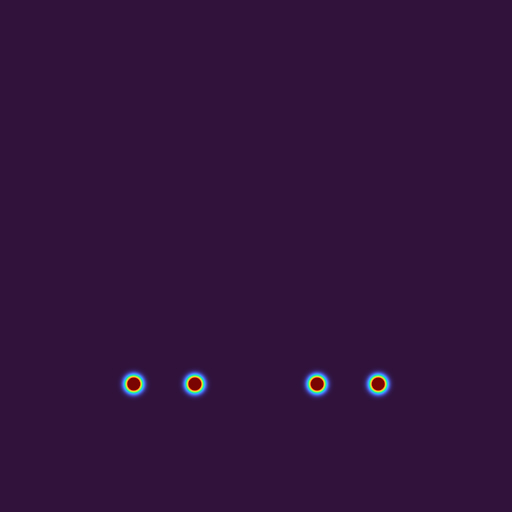}
    \includegraphics[width=0.3\columnwidth]{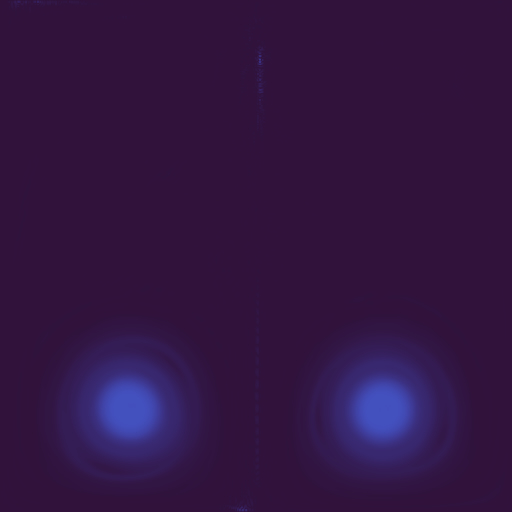}
    \includegraphics[width=0.3\columnwidth]{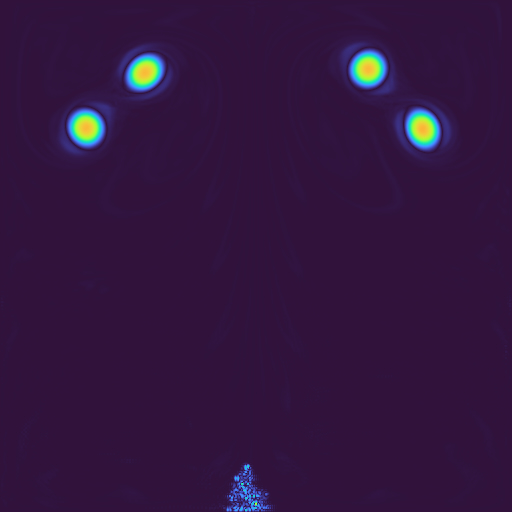}
    \includegraphics[width=0.3\columnwidth]{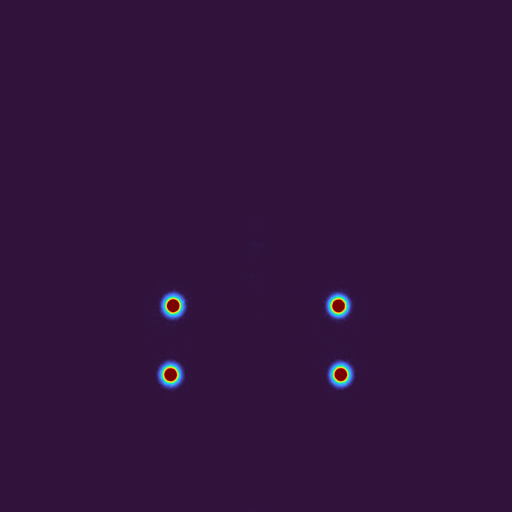}
    \\
    \begin{picture}(0,0)(0,0)
        \put(-110,140){\begin{tikzpicture}
\pgfplotscolorbardrawstandalone[ 
    colormap={myProteinColor}{
        rgb255=(48,18,59)
        rgb255=(70,105,224)
        rgb255=(42,185,238)
        rgb255=(47,241,155)
        rgb255=(161,253,61)
        rgb255=(236,209,58)
        rgb255=(251,129,34)
        rgb255=(210,49,5)
        rgb255=(122,4,3)
    },
    colorbar horizontal,
    point meta min=-1,
    point meta max=1,
    colorbar style={
        width=30pt,
        height=4pt,
        xtick={-1,-0.5,0,0.5,1},
        xtick style={draw=none},
        xticklabels={{\textcolor{white}{0}},{},{},{},{\textcolor{white}{10}}},
        xticklabel style={font=\tiny, xshift=0.0ex,yshift=0.5ex, scale=0.95},
        xlabel={\textcolor{white}{\sffamily abs. vorticity ($\nicefrac{1}{\text{s}}$)}},
        xlabel style={font=\tiny, yshift=18.5pt, scale=0.8},
        axis line style={white},
        }]
\end{tikzpicture}}
        \put(-108,88){\sffamily \scriptsize \textcolor{white}{PolyPIC}}
        \put(-33,88){\sffamily \scriptsize \textcolor{white}{CF+PolyFLIP}}
        \put(42,88){\sffamily \scriptsize \textcolor{white}{Frame 0}}
        \put(-108,14){\sffamily \scriptsize \textcolor{white}{PolyFLIP}}
        \put(-33,14){\sffamily \scriptsize \textcolor{white}{R+PolyFLIP}}
        \put(42,14){\sffamily \scriptsize \textcolor{white}{\textbf{CO-FLIP (Ours)}}}
    \end{picture}
    \caption{2D leapfrogging vortices.
    Our method (CO-FLIP) retains the core vortex structure with minimal dissipation after 500 seconds of simulation.
    Other methods lose both energy and vorticity as time progresses.}
    \label{fig:leapfrog_marathon}
\end{figure}

\section{Results}
\label{sec:Results}
In this section, we detail the experiments conducted to demonstrate our method.
We implement our method (CO-FLIP) as detailed in \secref{sec:Method}.
In addition to adding our method, we implement established methods such as PolyPIC \cite{Fu:2017:PolyPIC} and PolyFLIP \cite{Fei:2021:ASFLIP}.
We chose the PolyFLIP method over the basic FLIP algorithm because it provides better stability by providing a smaller kernel of P2G transfer.
We run these methods with second-order explicit midpoint time integration to make comparisons fair.
Additionally, we combine the second-order advection-reflection \cite{Narain:2019:SAR} and covector fluids \cite{Nabizadeh:2022:CF} with the PolyFLIP method to build Eulerian-Lagrangian methods referred to as R+PolyFLIP and CF+PolyFLIP, respectively.
These methods lower the separation error (i.e. how much advection and pressure projection operators commute) when solving for the Euler equations \cite{Zehnder:2018:ARS, Nabizadeh:2022:CF}.
We build all the above-mentioned methods on top of the codebase open-sourced by \cite{Nabizadeh:2022:CF}.
While the codebase offers many implementations of semi-Lagrangian methods, we instead focus on extending the backend toolset with various performance improvements and implementing the hybrid Eulerian-Lagrangian methods we desire.
Readers should refer to the supplementary material for all code and data.
Lastly, for the NFM experiments, we use the codebase shared by \cite{Deng:2023:FSN} with no additional modifications.

\begin{figure*}
    \centering
    \begin{subfigure}{0.35\linewidth}
        \includegraphics[trim={600px 0px 600px 150px},clip,width=0.32\linewidth]{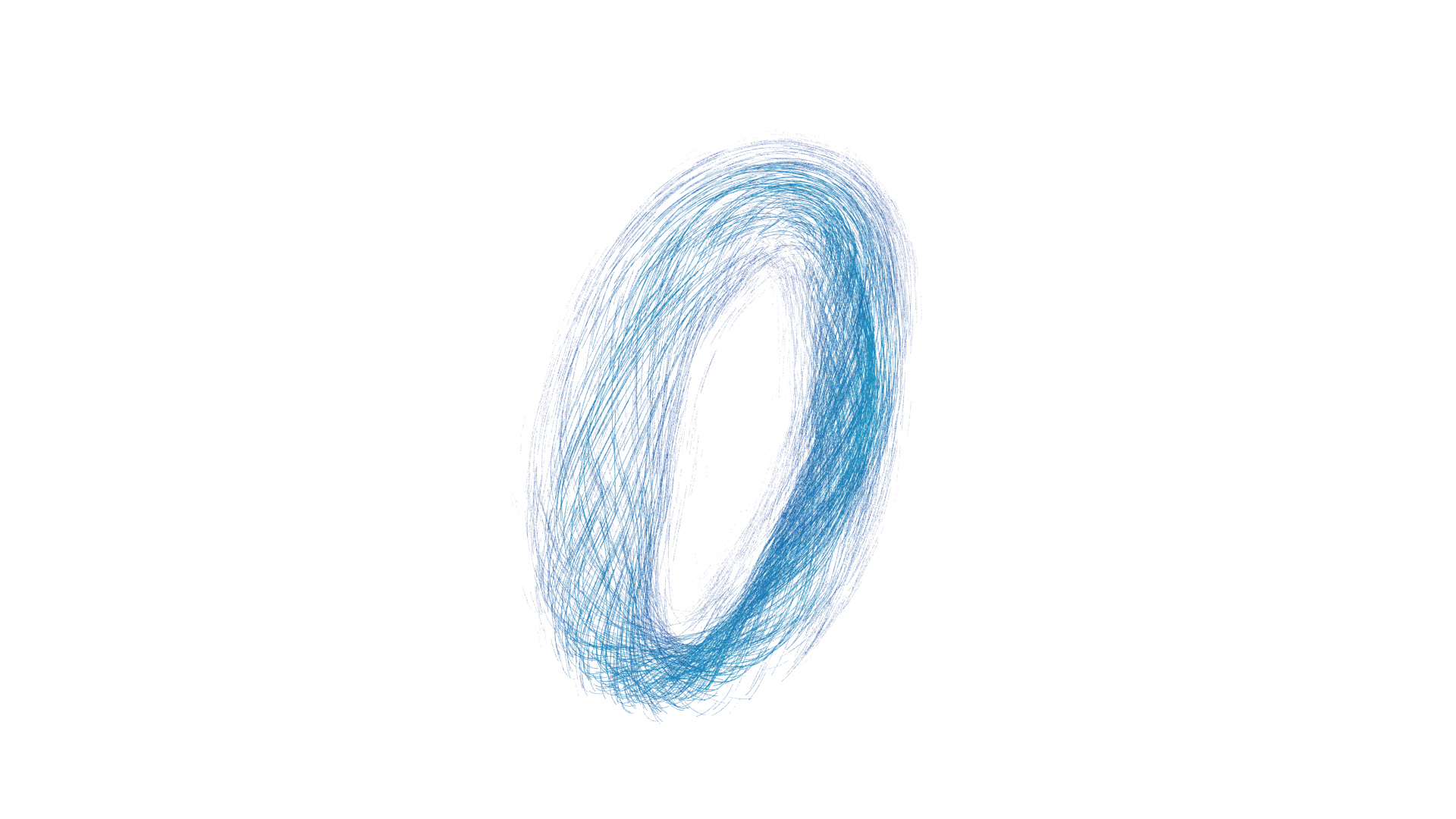}
        \hfill
        \includegraphics[trim={600px 0px 600px 150px},clip,width=0.32\linewidth]{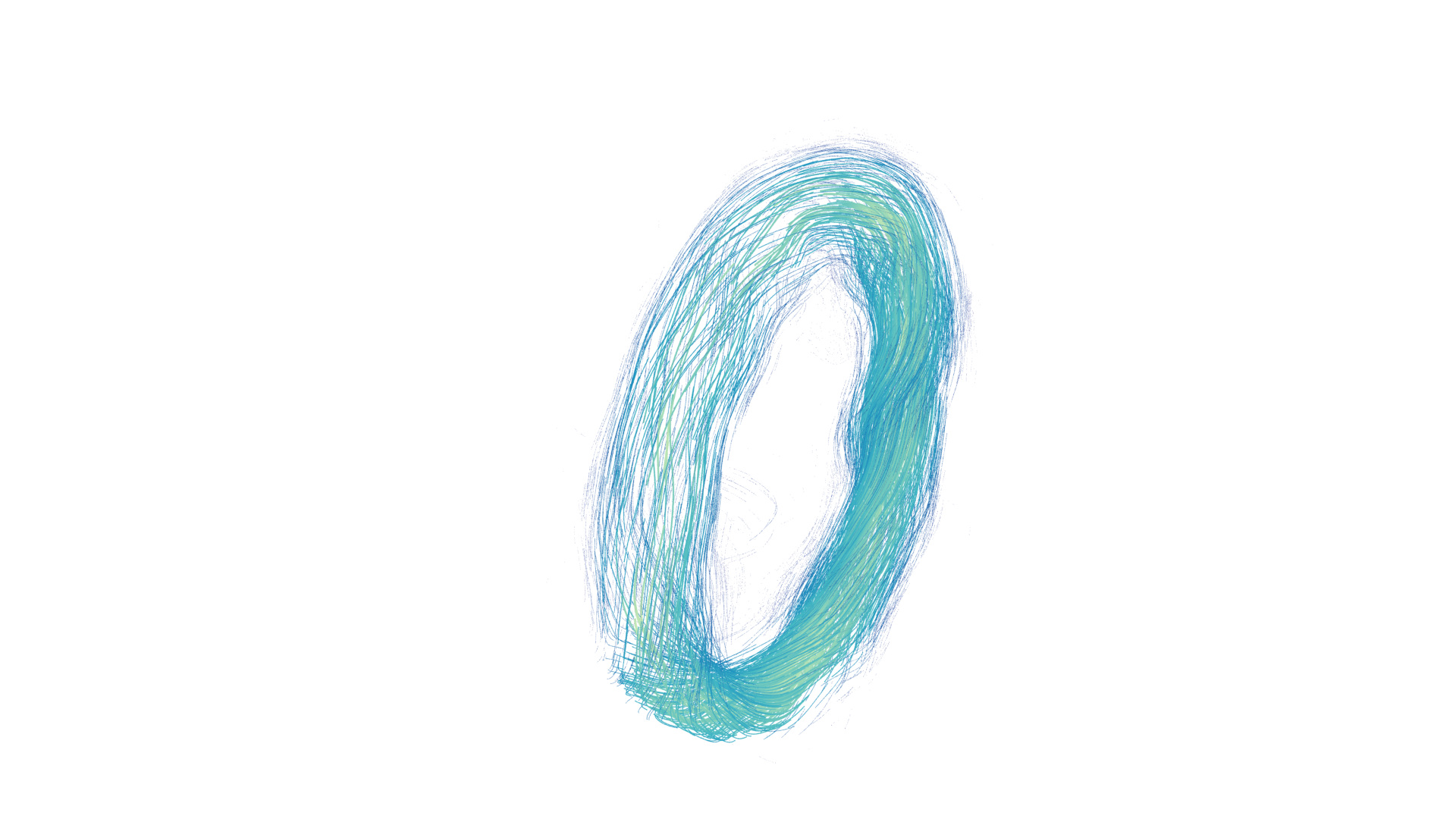}
        \hfill
        \includegraphics[trim={600px 0px 600px 150px0},clip,width=0.32\linewidth]{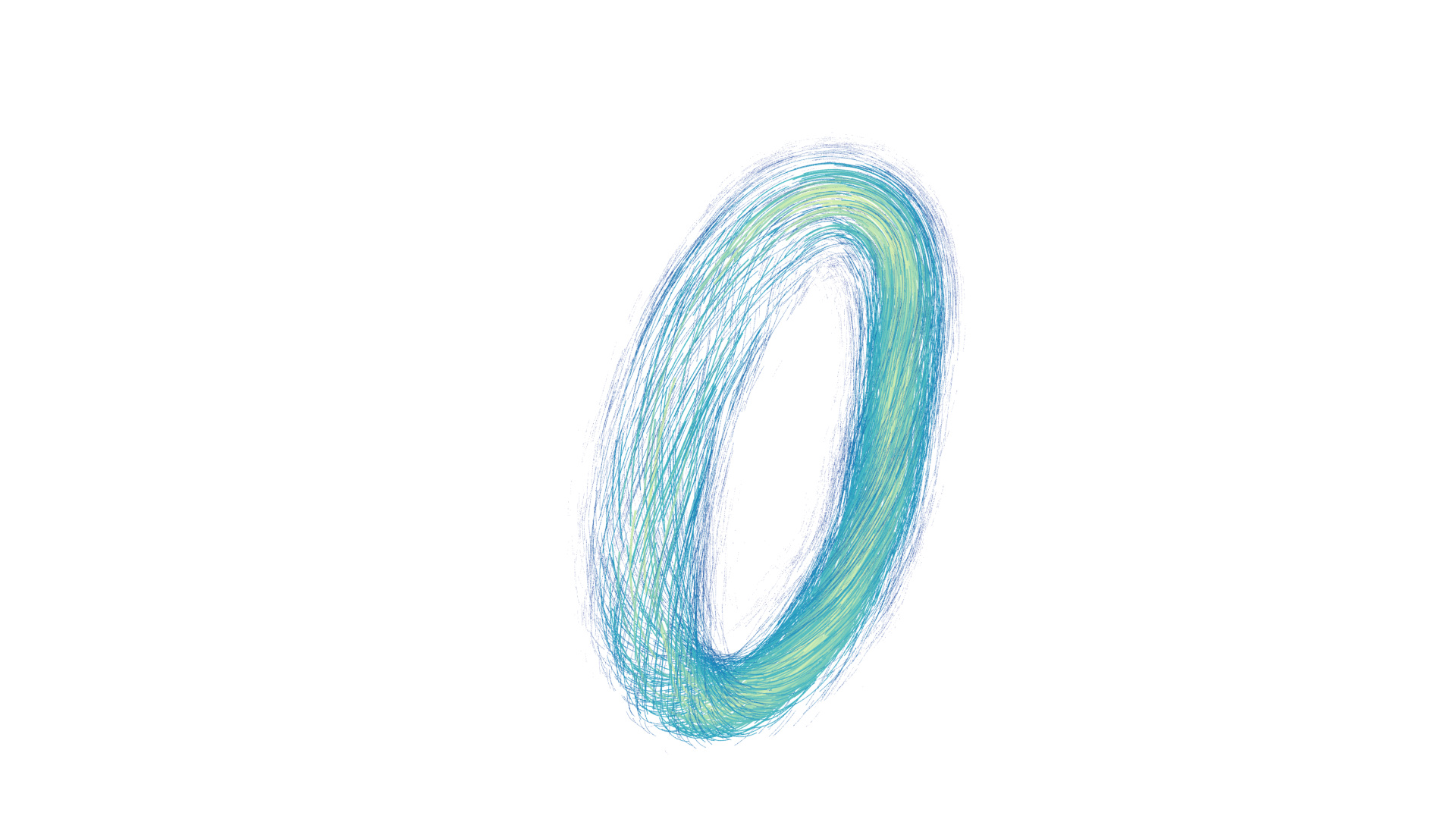}
        \includegraphics[trim={600px 0px 600px 150px},clip,width=0.32\linewidth]{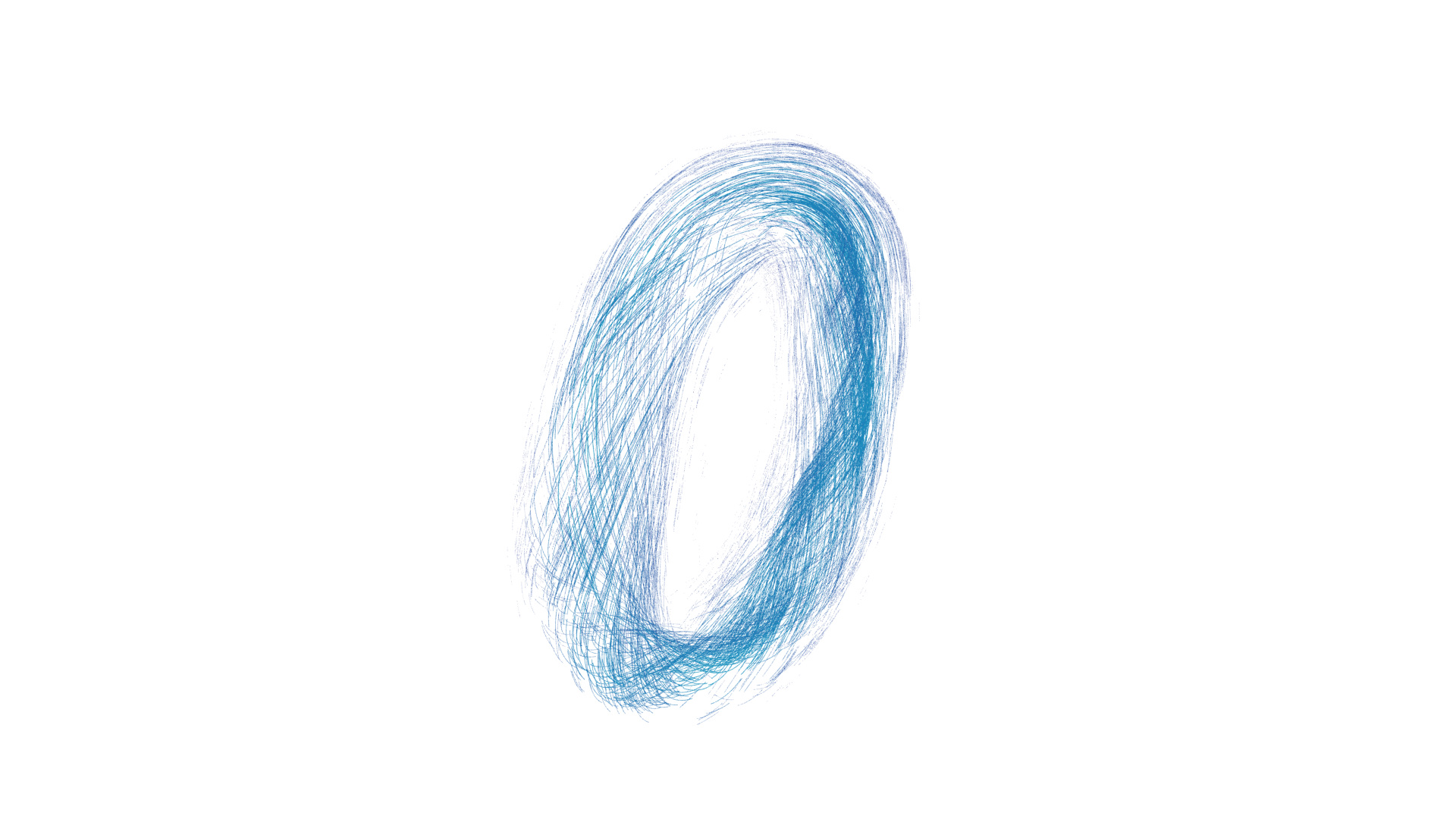}
        \hfill
        \includegraphics[trim={600px 0px 600px 150px},clip,width=0.32\linewidth]{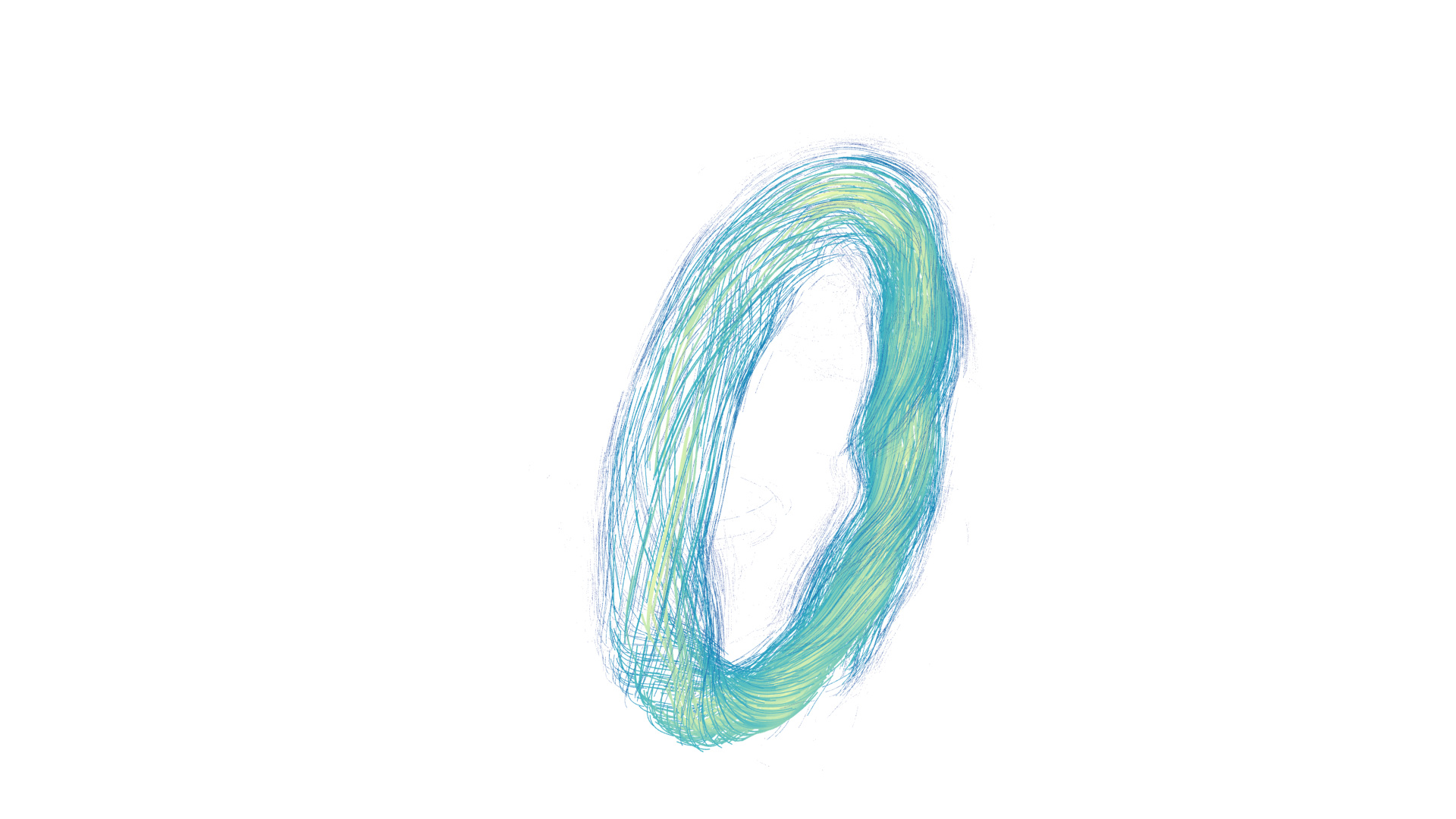}
        \hfill
        \includegraphics[trim={600px 0px 600px 150px},clip,width=0.32\linewidth]{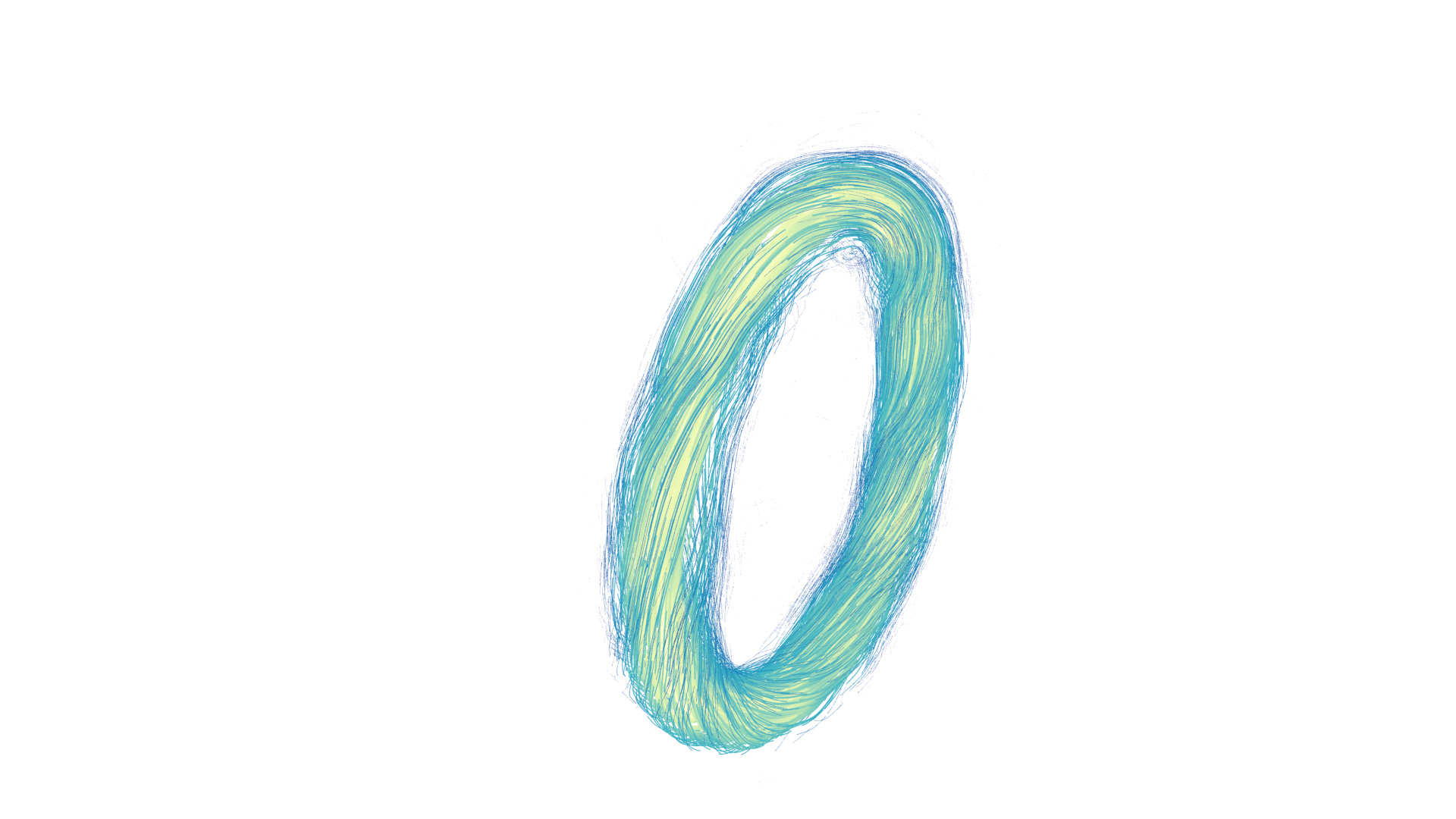}
    \end{subfigure}
    \hfill
    \begin{subfigure}{0.3\linewidth}
%
%
\definecolor{mycolor1}{rgb}{0.00000,0.44700,0.74100}%
\definecolor{mycolor2}{rgb}{0.85000,0.32500,0.09800}%
\definecolor{mycolor3}{rgb}{0.92900,0.69400,0.12500}%
\definecolor{mycolor4}{rgb}{0.49400,0.18400,0.55600}%
\definecolor{mycolor5}{rgb}{0.46600,0.67400,0.18800}%
\definecolor{mycolor6}{rgb}{0.30100,0.74500,0.93300}%
\definecolor{mycolor7}{rgb}{0.63500,0.07800,0.18400}%
\begin{tikzpicture}

\begin{axis}[%
width=0.84\linewidth,
height=0.7\linewidth,
at={(0,0)},
scale only axis,
xmin=0,
xmax=10,
ymin=0.7,
ymax=1.1,
axis background/.style={fill=white},
xlabel={\scriptsize \sffamily Time (s)},
ylabel={\scriptsize \sffamily $\mathcal{E}/\mathcal{E}_0$},
xticklabel style={font=\scriptsize},
yticklabel style={font=\scriptsize},
]
\addplot [color=mycolor1,mark=*, mark size=2pt]
  table[row sep=crcr]{%
0	1\\
0.833333	1\\
1.66667	1\\
2.5	1\\
3.33333	1\\
4.16667	1\\
5	1\\
5.83333	1\\
6.66667	1\\
7.5	1\\
8.33333	0.999998420884747\\
9.16667	0.999998420884747\\
10	0.999998420884747\\
10.8333	0.999998420884747\\
11.6667	0.999998420884747\\
12.5	0.999998420884747\\
13.3333	0.999998420884747\\
14.1667	0.999998420884747\\
};

\addplot [color=mycolor2,mark=o, mark size=2pt]
  table[row sep=crcr]{%
0	1\\
0.833333	0.996344446321787\\
1.66667	0.993072607349321\\
2.5	0.990373971826182\\
3.33333	0.987882194848117\\
4.16667	0.985496215751883\\
5	0.983288671889187\\
5.83333	0.981321147101691\\
6.66667	0.979383624698594\\
7.5	0.977545583885877\\
8.33333	0.975758073404781\\
9.16667	0.973994249016632\\
10	0.972184631515452\\
10.8333	0.970436597855935\\
11.6667	0.968492759161386\\
12.5	0.966451017949321\\
13.3333	0.964200839119319\\
14.1667	0.961825913533128\\
};

\addplot [color=mycolor3,mark=square, mark size=2pt]
  table[row sep=crcr]{%
0	1\\
0.833333	0.990656625868687\\
1.66667	0.983162346060134\\
2.5	0.977292932227772\\
3.33333	0.972126205819515\\
4.16667	0.967483731601827\\
5	0.963319716461677\\
5.83333	0.959463620529842\\
6.66667	0.955779643540092\\
7.5	0.952367267082805\\
8.33333	0.949106481620381\\
9.16667	0.94594201960261\\
10	0.942828087916461\\
10.8333	0.939799426164922\\
11.6667	0.936743920174709\\
12.5	0.933707363058854\\
13.3333	0.930547638259672\\
14.1667	0.927321592400238\\
};

\addplot [color=mycolor4,mark=triangle, mark size=2pt]
  table[row sep=crcr]{%
0	1\\
0.833333	0.936470740569382\\
1.66667	0.893990522404675\\
2.5	0.862765935608567\\
3.33333	0.838004494041368\\
4.16667	0.817152836883352\\
5	0.799609337373654\\
5.83333	0.784082313910211\\
6.66667	0.769889607016137\\
7.5	0.757099116824548\\
8.33333	0.745625573400833\\
9.16667	0.734908405878572\\
10	0.724814972137259\\
10.8333	0.71540843509142\\
11.6667	0.706541940964782\\
12.5	0.698078110418249\\
13.3333	0.689909566497127\\
14.1667	0.681932090392447\\
};

\addplot [color=mycolor5,mark=diamond, mark size=2pt]
  table[row sep=crcr]{%
0	1\\
0.833333	0.938953043110268\\
1.66667	0.898230333042258\\
2.5	0.867800019896318\\
3.33333	0.843171220449625\\
4.16667	0.822087439580725\\
5	0.804171278875321\\
5.83333	0.788566880841583\\
6.66667	0.774453127590666\\
7.5	0.761463674218319\\
8.33333	0.749658525493342\\
9.16667	0.738857667109333\\
10	0.728814763699641\\
10.8333	0.719289796189065\\
11.6667	0.710278027359016\\
12.5	0.701673659327662\\
13.3333	0.693383526796077\\
14.1667	0.685311306319607\\
};

\addplot [color=mycolor6,mark=x, mark size=2pt]
  table[row sep=crcr]{%
0	1\\
0.833333333333333	0.990275147166508\\
1.66666666666667	0.982050481373454\\
2.5	0.974976693806848\\
3.33333333333334	0.968895265990594\\
4.16666666666667	0.963691428275333\\
5	0.958412478938065\\
5.83333333333334	0.954469743765963\\
6.66666666666667	0.950320195958381\\
7.5	0.946382006118927\\
8.33333333333334	0.942429100765025\\
9.16666666666667	0.938414833420376\\
10	0.93435806721547\\
10.8333333333333	0.930563907604296\\
11.6666666666667	0.926657648726693\\
12.5	0.92362966174998\\
13.3333333333333	0.920284637821077\\
14.1666666666667	0.917444145816089\\
};

\end{axis}
\end{tikzpicture}%
        \caption{Energy plot.}
    \end{subfigure}
    \hfill
    \begin{subfigure}{0.3\linewidth}
%
%
\definecolor{mycolor1}{rgb}{0.00000,0.44700,0.74100}%
\definecolor{mycolor2}{rgb}{0.85000,0.32500,0.09800}%
\definecolor{mycolor3}{rgb}{0.92900,0.69400,0.12500}%
\definecolor{mycolor4}{rgb}{0.49400,0.18400,0.55600}%
\definecolor{mycolor5}{rgb}{0.46600,0.67400,0.18800}%
\definecolor{mycolor6}{rgb}{0.30100,0.74500,0.93300}%
\definecolor{mycolor7}{rgb}{0.63500,0.07800,0.18400}%
\begin{tikzpicture}

\begin{axis}[%
width=0.84\linewidth,
height=0.7\linewidth,
at={(0,0)},
scale only axis,
xmin=0,
xmax=10,
ymin=0.5,
ymax=1.1,
ytick={0.5, 0.6, 0.7, 0.8, 0.9, 1, 1.1},
yticklabels={{0.5},{0.6},{0.7},{0.8},{0.9},{1},{1.1}},
axis background/.style={fill=white},
xlabel={\scriptsize \sffamily Time (s)},
ylabel={\scriptsize \sffamily $\mathcal{H}/\mathcal{H}_0$},
xticklabel style={font=\scriptsize},
yticklabel style={font=\scriptsize},
legend pos = south west,
legend image post style={scale=0.8},
legend style={legend cell align=left, align=left, draw=white!15!black, nodes={scale=0.6, transform shape}}
]
\addplot [color=mycolor1, mark=*, mark size=2pt]
  table[row sep=crcr]{%
0	1\\
0.833333	0.995423551508048\\
1.66667	0.988150838782201\\
2.5	0.97921596316439\\
3.33333	0.970021062064082\\
4.16667	0.963954467537819\\
5	0.956951781274565\\
5.83333	0.953639456666753\\
6.66667	0.951609257707255\\
7.5	0.949703070900948\\
8.33333	0.947960900168217\\
9.16667	0.945262635738302\\
10	0.943664479118954\\
10.8333	0.943992511266104\\
11.6667	0.943334446775784\\
12.5	0.943856497936798\\
13.3333	0.944116523419295\\
14.1667	0.945280637502475\\
};
\addlegendentry{\scriptsize \sffamily CO-FLIP}

\addplot [color=mycolor2, mark=o, mark size=2pt]
  table[row sep=crcr]{%
0	1\\
0.833333	0.986705987553768\\
1.66667	0.967865065571818\\
2.5	0.954288459481118\\
3.33333	0.937484230265376\\
4.16667	0.926594282323162\\
5	0.917151617545836\\
5.83333	0.910660037827178\\
6.66667	0.90493549800065\\
7.5	0.899172606179516\\
8.33333	0.89345008487902\\
9.16667	0.888960882984028\\
10	0.884413143834109\\
10.8333	0.880307461885182\\
11.6667	0.875999927333063\\
12.5	0.871787263504444\\
13.3333	0.866983171548472\\
14.1667	0.86540468419151\\
};
\addlegendentry{\scriptsize \sffamily R+PolyFLIP}

\addplot [color=mycolor3, mark=square, mark size=2pt]
  table[row sep=crcr]{%
0	1\\
0.833333	0.973294900597686\\
1.66667	0.943808272323384\\
2.5	0.923340418359705\\
3.33333	0.899951757227837\\
4.16667	0.883789419290246\\
5	0.869330717323596\\
5.83333	0.858162212788977\\
6.66667	0.847589173433775\\
7.5	0.838229268223758\\
8.33333	0.828463639281324\\
9.16667	0.820056478358373\\
10	0.812273041979286\\
10.8333	0.804003140826506\\
11.6667	0.796643594914122\\
12.5	0.789106418710929\\
13.3333	0.78213039274461\\
14.1667	0.7758568138374\\
};
\addlegendentry{\scriptsize \sffamily CF+PolyFLIP}

\addplot [color=mycolor4, mark=triangle, mark size=2pt]
  table[row sep=crcr]{%
0	1\\
0.833333	0.887002912733064\\
1.66667	0.804332160569709\\
2.5	0.745988684143065\\
3.33333	0.700343754983236\\
4.16667	0.660576773628361\\
5	0.629321916550097\\
5.83333	0.603400005248168\\
6.66667	0.579284674744808\\
7.5	0.558225392653776\\
8.33333	0.539759916513763\\
9.16667	0.522739705012606\\
10	0.506763071469951\\
10.8333	0.492689907975398\\
11.6667	0.47955737761172\\
12.5	0.467240331764939\\
13.3333	0.455617658873138\\
14.1667	0.444437043182327\\
};
\addlegendentry{\scriptsize \sffamily PolyFLIP}

\addplot [color=mycolor5, mark=diamond, mark size=2pt]
  table[row sep=crcr]{%
0	1\\
0.833333	0.885555629568177\\
1.66667	0.801835243868222\\
2.5	0.744636271701678\\
3.33333	0.698983268437721\\
4.16667	0.65585745976573\\
5	0.623439931693079\\
5.83333	0.59869482106776\\
6.66667	0.574032470009749\\
7.5	0.551996221319268\\
8.33333	0.534275581285034\\
9.16667	0.517360333137536\\
10	0.501422051589488\\
10.8333	0.486692867134561\\
11.6667	0.473794485790586\\
12.5	0.462309072668956\\
13.3333	0.450892289432411\\
14.1667	0.439713692267632\\
};
\addlegendentry{\scriptsize \sffamily PolyPIC}

\addplot [color=mycolor6, mark=x, mark size=2pt]
  table[row sep=crcr]{%
0	1\\
0.833333333333333	0.973811654282471\\
1.66666666666667	0.944992271139067\\
2.5	0.921678961931503\\
3.33333333333334	0.893410505172641\\
4.16666666666667	0.872929371689889\\
5	0.851291171003049\\
5.83333333333334	0.835568396649365\\
6.66666666666667	0.817525572055186\\
7.5	0.802930865727542\\
8.33333333333334	0.788653410407735\\
9.16666666666667	0.776184923814519\\
10	0.763682218191208\\
10.8333333333333	0.752008367074844\\
11.6666666666667	0.740422499464675\\
12.5	0.729605168071116\\
13.3333333333333	0.719124053147838\\
14.1666666666667	0.711637973333284\\
};
\addlegendentry{\scriptsize \sffamily NFM}

\end{axis}
\end{tikzpicture}%
        \caption{Helicity plot.}
    \end{subfigure}
    \\
    \begin{picture}(0,0)(0,0)
        \put(-235,88){\sffamily \scriptsize PolyPIC}
        \put(-170,88){\sffamily \scriptsize CF+PolyFLIP}
        \put(-110,88){\sffamily \scriptsize NFM}
        \put(-235,10){\sffamily \scriptsize PolyFLIP}
        \put(-170,10){\sffamily \scriptsize R+PolyFLIP}
        \put(-110,10){\sffamily \scriptsize \textbf{CO-FLIP (Ours)}}
        \put(-267,50){\begin{tikzpicture}
\pgfplotscolorbardrawstandalone[ 
    colormap={myProteinColor}{
        rgb255=(8, 29, 88)
        rgb255=(37, 52, 148)
        rgb255=(34, 94, 168)
        rgb255=(29, 145, 192)
        rgb255=(65, 182, 196)
        rgb255=(127, 205, 187)
        rgb255=(199, 233, 180)
        rgb255=(237, 248, 177)
        rgb255=(255,255, 217)
    },
    point meta min=-1,
    point meta max=1,
    colorbar style={
        width=4pt,
        height=35pt,
        ytick={-1,-0.5,0,0.5,1},
        ytick style={draw=none},
        yticklabels={{0},{},{},{},{10}},
        yticklabel style={font=\scriptsize, xshift=-0.5ex},
        ylabel={\sffamily vorticity norm ($\nicefrac{1}{\text{s}}$)},
        ylabel style={font=\tiny, yshift=25pt}
        }]
\end{tikzpicture}}
    \end{picture}
    \caption{Twisted torus comparison, and its associated energy and helicity plots.
    Note that our method exactly preserves energy over time at a low resolution of $64\times64\times64$.
    The measured helicity shows minimal relative error compared with other methods.
    For clarity purposes, the plots show data samples at x20 times lower frequency than the simulation frame rate.
    }
    \label{fig:twistedtorus}
\end{figure*}

\begin{table}
    \centering
    \caption{Experiment statistics.}
    \label{tab:Statistics}
    \scriptsize
    \setlength{\tabcolsep}{3.6pt}
\begin{tabularx}{\columnwidth}{m{86pt}m{33pt}m{37pt}m{32pt}p{16pt}}
    \toprule
    \rowcolor{white}
    Figure name (figure number) & Domain size & Grid resolution & Avg. particles & $\Deltait t$ \\
    \midrule
    Pyroclastic plume (Fig.~\ref{fig:pyroclastic}) & $5\times10\times5\text{ m}^3$ & $96\times192\times96$ & $3\times10^7$ & $\nicefrac{1}{288}$ s \\
    Unknot (Fig.~\ref{fig:unknot}) & $5\times5\times5\text{ m}^3$ & $64\times64\times64$ & $2\times10^6$ & $\nicefrac{1}{48}$ s \\
    Spot obstacle (Fig.~\ref{fig:spot}) & $10\times5\times5\text{ m}^3$ & $128\times64\times64$ & $6\times10^6$ & $\nicefrac{1}{96}$ s \\
    Rayleigh-Taylor Instability (Fig.~\ref{fig:RT_instability}) & $0.2\times0.4\text{ m}^2$ & $256\times512$ & $3\times10^6$ & $\nicefrac{1}{288}$ s \\
    Rocket (Fig.~\ref{fig:rocket}) & $5\times5\times5\text{ m}^3$ & $64\times64\times64$ & $1\times10^6$ & $\nicefrac{1}{48}$ s \\
    Ink jet (Fig.~\ref{fig:inkjet}) & $5\times10\times5\text{ m}^3$ & $64\times128\times64$ & $1\times10^7$ & $\nicefrac{1}{48}$ s \\
    3D smoke plume (Fig.~\ref{fig:smokeplumes}) & $5\times10\times5\text{ m}^3$ & $64\times128\times64$ & $5\times10^6$ & $\nicefrac{1}{192}$ s \\
    2D smoke plume (Fig.~\ref{fig:2dsmokeplume}) & \makebox[0pt][l]{$0.5\times1.0\text{ m}^2$} & $256\times512$ & $4\times10^6$ & $\nicefrac{1}{288}$ s \\
    Leapfrogging vortices (Fig.~\ref{fig:leapfrog_marathon}) & $2\pi\times2\pi\text{ m}^2$ & $256\times256$ & $1\times10^6$ & $\nicefrac{1}{96}$ s \\
    Twisted torus (Fig.~\ref{fig:twistedtorus}) & $5\times5\times5\text{ m}^3$ & $64\times64\times64$ & $4\times10^6$ & $\nicefrac{1}{48}$ s \\
    Trefoil knot (Fig.~\ref{fig:trefoilknot}) & $5\times5\times5\text{ m}^3$ & $64\times64\times64$ & $9\times10^5$ & $\nicefrac{1}{72}$ s \\
    Leapfrogging rings (Fig.~\ref{fig:leapfrog_rings}) & $10\times5\times5\text{ m}^3$ & $128\times64\times64$ & $7\times10^6$ & $\nicefrac{1}{72}$ s \\
    \bottomrule
\end{tabularx}
\end{table}

\subsection{Performance Considerations}
\label{sec:performanceconsiderations}
All the methods are parallelized to run on the CPU.
We expect all parallelization to be easily transferred to the GPU for additional speed-ups.
The experiments ran on a desktop machine with an Intel i9-13900K processor with 64GB of memory.
See \tabref{tab:Statistics} for a summary of experiment statistics, and \figref{fig:timings} for how long our method takes per iteration for the Trefoil knot experiment.
Our method's additional cost compared with traditional methods is due to the use of high-order methods.
The majority of the time is spent solving the global problems of streamform-vorticity solve, and pseudoinverse P2G map.
We conduct a performance study under the same conditions such as same particle-per-cell count, and CFL number.
Compared to traditional methods, CO-FLIP is slower in wall clock time by an order of magnitude, given our modest optimizations.
However, our results provide higher accuracy, and preservation of energy and Casimirs.
Note that increasing the resolution of traditional methods, for achieving lower error, is not always practical, as they would hit system memory constraints.
This is because the PolyPIC methods require 8$\times$ additional data per particle compared to our method, and that they are only second order accurate compared to the third order accuracy of CO-FLIP.
Finally, our method's performance scales linearly, similar to traditional methods, with increasing spatiotemporal resolution.
Below, we discuss optimizations for the global solves that impact our method.

\subsubsection{Pesudoinverse P2G Solve}
\label{sec:pseudoinversep2gsolve}
We use a preconditioned conjugate gradient (PCG) solver to tackle the pseudoinverse P2G problem.
This solver has three major components: (1) a preconditioner for the solver, (2) the interpolation operator (i.e.\ G2P transfer), and (3) the interpolation transpose operator (i.e.\ weighted sum of the particles values back to the grid).
We use the \emph{incomplete Cholesky (IC) with limited memory} implementation from the Eigen library \cite{Lin:1999:IC,Gael:2010:Eigen} to speed up the solver.
The IC preconditioner requires the construction of matrix $\hat\cI^\intercal\hat\cI$, which we build explicitly, without dealing with the much larger matrix $\hat\cI$.
The sparsity pattern of $\hat\cI^\intercal\hat\cI$ is exactly the same as the one for $\star_2$, and for the IC preconditioner, we only require its lower/upper triangular part to be constructed.
For this, we run through the particles and subsequently go through the neighboring grid elements using a double for-loop to deposit their contribution to said elements.
For the purpose of this paper, we only parallelize the for-loop over particles in the three axes of the domain (in $\mathbb{R}^3$), but more advanced binning strategies could further allow for parallelization and performance boost.
Additionally, evaluating the IC preconditioner itself incurs an overhead, but in turn, it drastically reduces the number of iterations during the least-squares pseudoinverse solve.
As discussed in \cite{Gould:2017:SPS}, if the IC preconditioner is evaluated at every step, the combined time taken to evaluate the preconditioner and run the solver is no different than using a simple diagonal preconditioner.
To alleviate this problem and continue to benefit from the IC preconditioner's performance improvement, we note that the preconditioner for one step could be reused for some future steps with a minimal drop in performance.
We find empirically that re-evaluating the preconditioner every five timesteps shows an improvement in total runtime.

The second major component of the pseudoinverse solve is the interpolation operator, which is embarrassingly parallelizable because there is no data race between particles looking up their values from the grid.
The third component, the transpose of this operator, is more complicated.
This is because many particles could be in the neighborhood of a grid element, causing a data race.
One approach to this problem is through \emph{scattering} \cite{Gao:2018:GPUMPM}.
Here, each particle's information is transferred to its neighboring grid element.
A trivial method of introducing parallelization for the scattering technique is through atomic writes.
For this paper, we employ this method.
However, as the number of particles in the system grows, it is slower to resolve write conflicts.
To resolve this problem, there are a few works that optimize and alleviate this problem on both CPU \cite{Fang:2018:TAM,Hu:2018:MLS} and GPU \cite{Gao:2018:GPUMPM} by employing a method of binning using a form of data structure (e.g.\ SPGrids).
This optimization level is beyond the scope of the current paper, but we expect our method to similarly improve if such speed-up methods are used.
Finally, to further lower write conflicts and improve the performance, we opted to use an adaptive set of particles throughout the domain.
Concretely, the number of particles per cell is selected based on the relative magnitude of local vorticity compared to its maximum in the entire domain.
This would additionally lower the memory footprint of the particles.
Overall, the use of the IC preconditioner, as detailed above, improves the runtime of the P2G solve by roughly x50.

\subsubsection{Streamform-Vorticity Solve}
We employ a PCG solver based on streamform and vorticity to find divergence-free velocity fields.
Since the Laplacian matrix comes from a high-order Galerkin Hodge star, it has a more dense sparsity pattern than the simple finite-difference Laplacian matrix.
This more accurate Laplacian matrix allows for a more accurate pressure projection operation but incurs an overhead to the solver.
We detail various ways of improving this overhead.

As mentioned in \secref{sec:AccelerationForLinearSolves}, we employ a geometric multigrid (GMG) preconditioner for the solution based on the edges (nodes in 2D) of the systems.
The preconditioner takes an input matrix and applies the prolongation and restrictor operators on either side to construct the progressively smaller matrices for the coarser levels.
As a result of these GMG preconditioners, the PCG solver can converge in much fewer iterations \cite{Saad:2003:IMS}.
However, there is a tradeoff between the complexity of the GMG preconditioner and the total time the solver takes to converge.
As such, we opt to build the GMG preconditioner using a smaller Laplacian matrix from a lower-order Hodge star.
Note that the solver still uses the high-order Laplacian matrix, and only the preconditioner is approximated further.
This introduces no error in the solver as the preconditioner is an approximation anyway.
Our experiments show that a Laplacian matrix built from the second-order Galerkin Hodge star produces the best total runtime.
While the number of iterations is larger than using the GMG preconditioner built from the Laplacian matrix based on the third-order Hodge star, each iteration takes less time, reducing the total time.

Another avenue for optimization is in the inner problems that must be solved at each level of the GMG grids.
Traditionally, without the presence of obstacles or multiphase fluids, the Laplacian decouples, which results in simplified solves during the inner iterative solves of the preconditioners, may use red-black Gauss-Seidel parallelization for these solves for acceleration.
However, in the presence of Laplacian matrices made from high-order Hodge stars, they do not decouple, making the inner solves more complicated.
Inspired by the results in \cite{Adams:2003:PMS,Saad:2003:IMS}, we note that instead of using multicolor Gauss-Seidel methods, one may use Chebyshev acceleration (see \cite[algorithm 12.1]{Saad:2003:IMS}) to make fast a weighted Jacobi iteration solver by choosing an optimal weight.
For this, one needs only an approximation of the largest and smallest eigenvalues of the matrix \cite{Adams:2003:PMS} at each grid level, which we get by using the Spectra \cite{Qiu:2022:Spectra} library.
Overall, we see a x3 reduction in the number of iterations needed for the PCG solver to converge.
We observe that in some experiments, despite the reduction in the number of iterations, the solver takes a longer time to complete compared to a CG solver with fast parallelized matrix-vector multiplication and no preconditioner.
We believe the slowdown is due to an unoptimized implementation of the matrix-vector multiplication borrowed from the Eigen library, which would be resolved if one uses explicit matrix-free multiplications instead.

\subsubsection{Miscellaneous improvements}
For all the Hodge stars that must be inverted, we evaluate an IC preconditioner on the first simulation frame to speed up the solver.
This lowers the number of iterations needed to converge by roughly five times.
Additionally, we use parallelized matrix-free operations for all matrix-vector multiplications involving the Hodge star operators, which further enhance the performance over sparse matrix-vector multiplication solutions offered by the Eigen library.

\subsection{Validation}
We validate our method through a collection of experiments.
We showcase how our energy conserves energy and the Casimirs throughout the simulation.

\subsubsection{2D Taylor-Green Vortices Convergence}
Here, we validate the order of accuracy of our method using the Taylor-Green vortices experiment \cite{Taylor:1937:MPS}, an analytical solution to the Euler equations.
At a fixed $CFL=0.4$, we run the experiment using our method and traditional methods for one second.
We measure the relative $L^2$ and $L_\text{inf}$ errors under progressively higher spatio-temporal resolutions.
As seen in \figref{fig:convergenceplot_2d}, our method shows third-order convergence when using quadratic B-splines as the interpolation operator.
The traditional methods remain only at second-order, as expected.
Studying the plot, our method requires a lower spatio-temporal resolution to achieve a fixed error rate.
For instance, if the user is looking for an error of $10^{-5}$, our method can achieve this error at a $64^2$ resolution, while traditional methods must run at $1024^2$ to match this accuracy.

\begin{figure*}
    \centering
    \includegraphics[trim={300px 160px 800px 40px},clip,width=0.16\linewidth]{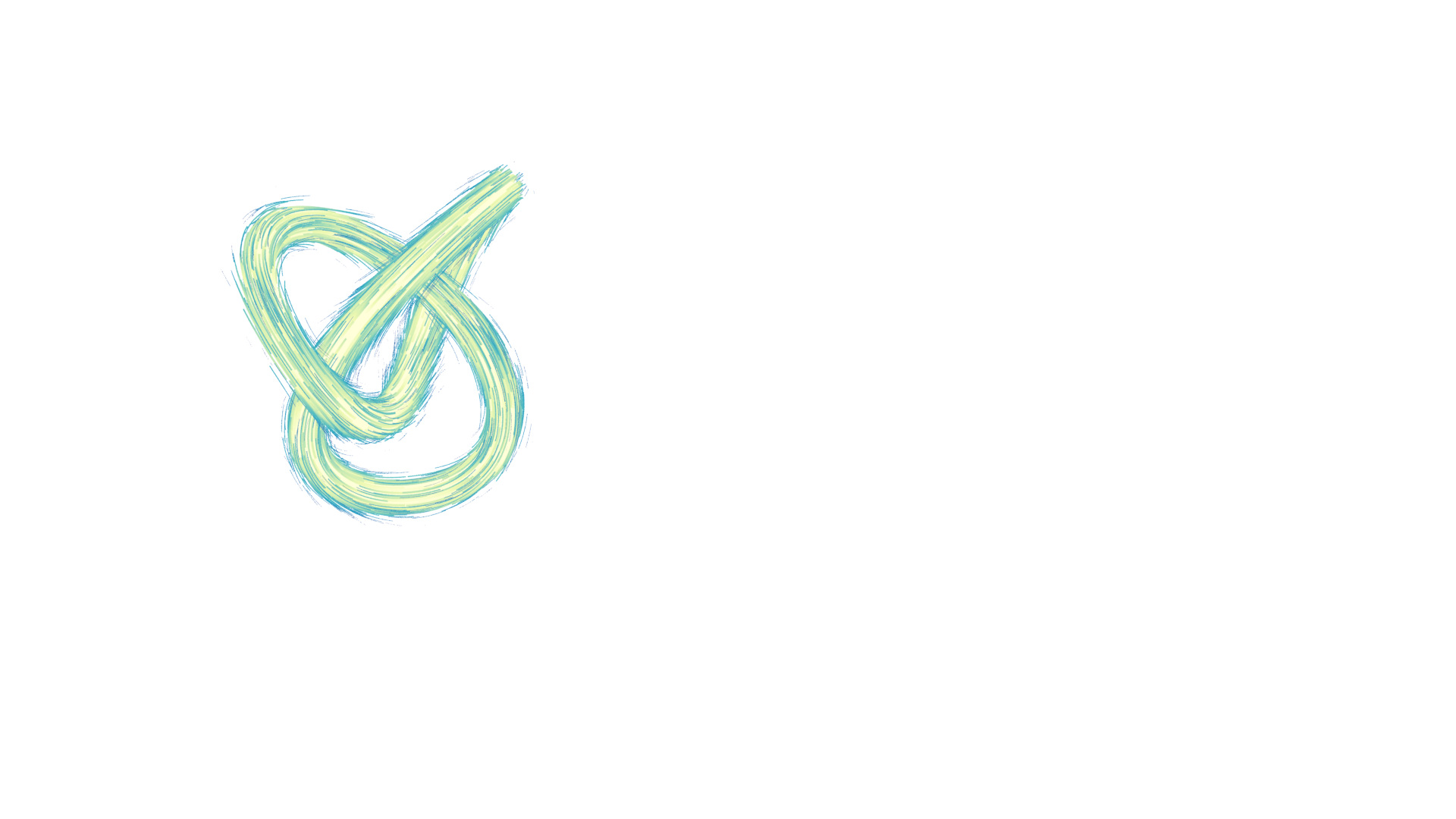}
    \includegraphics[trim={300px 160px 800px 40px},clip,width=0.16\linewidth]{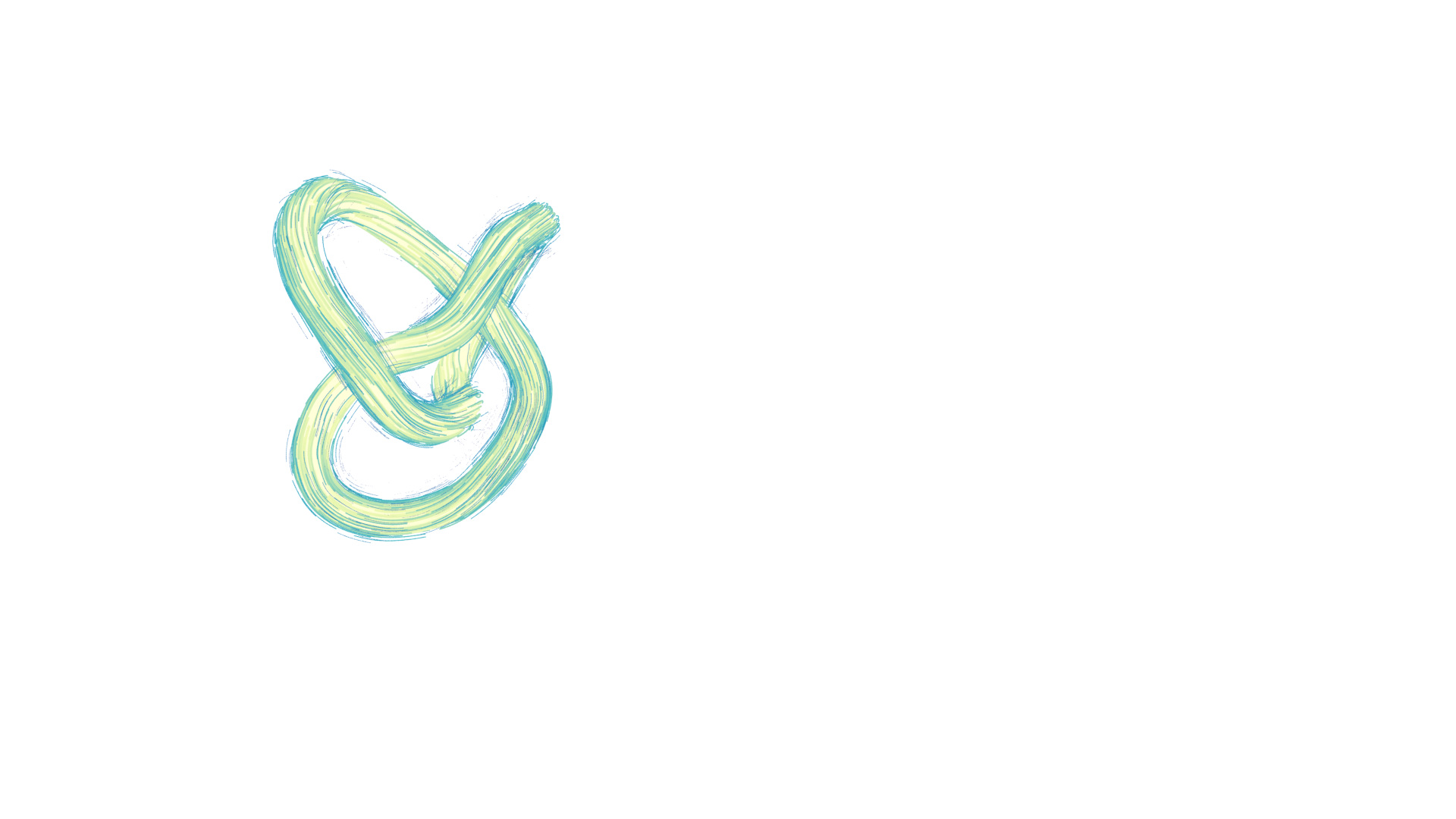}
    \includegraphics[trim={300px 160px 800px 40px},clip,width=0.16\linewidth]{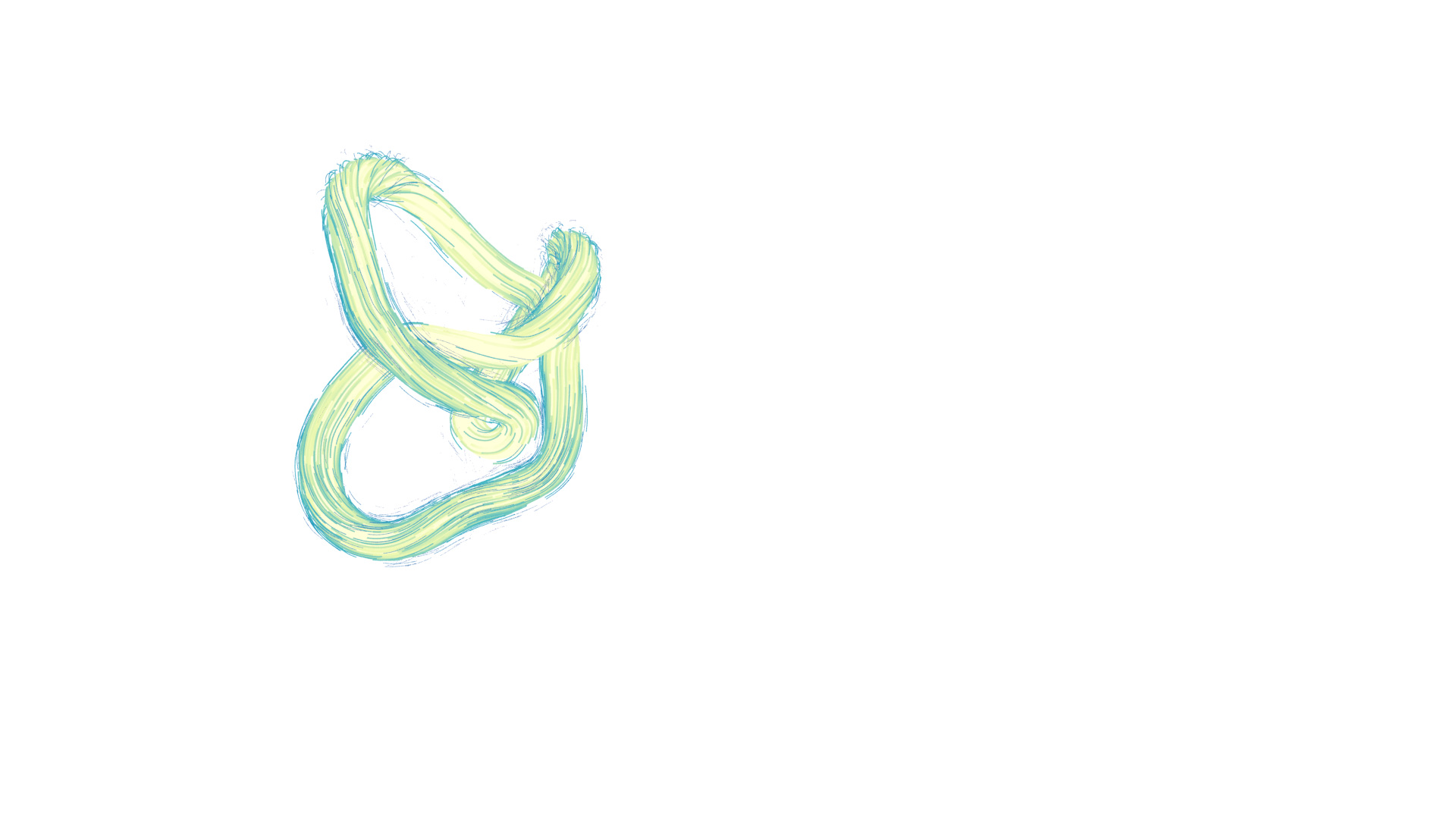}
    \includegraphics[trim={300px 160px 800px 40px},clip,width=0.16\linewidth]{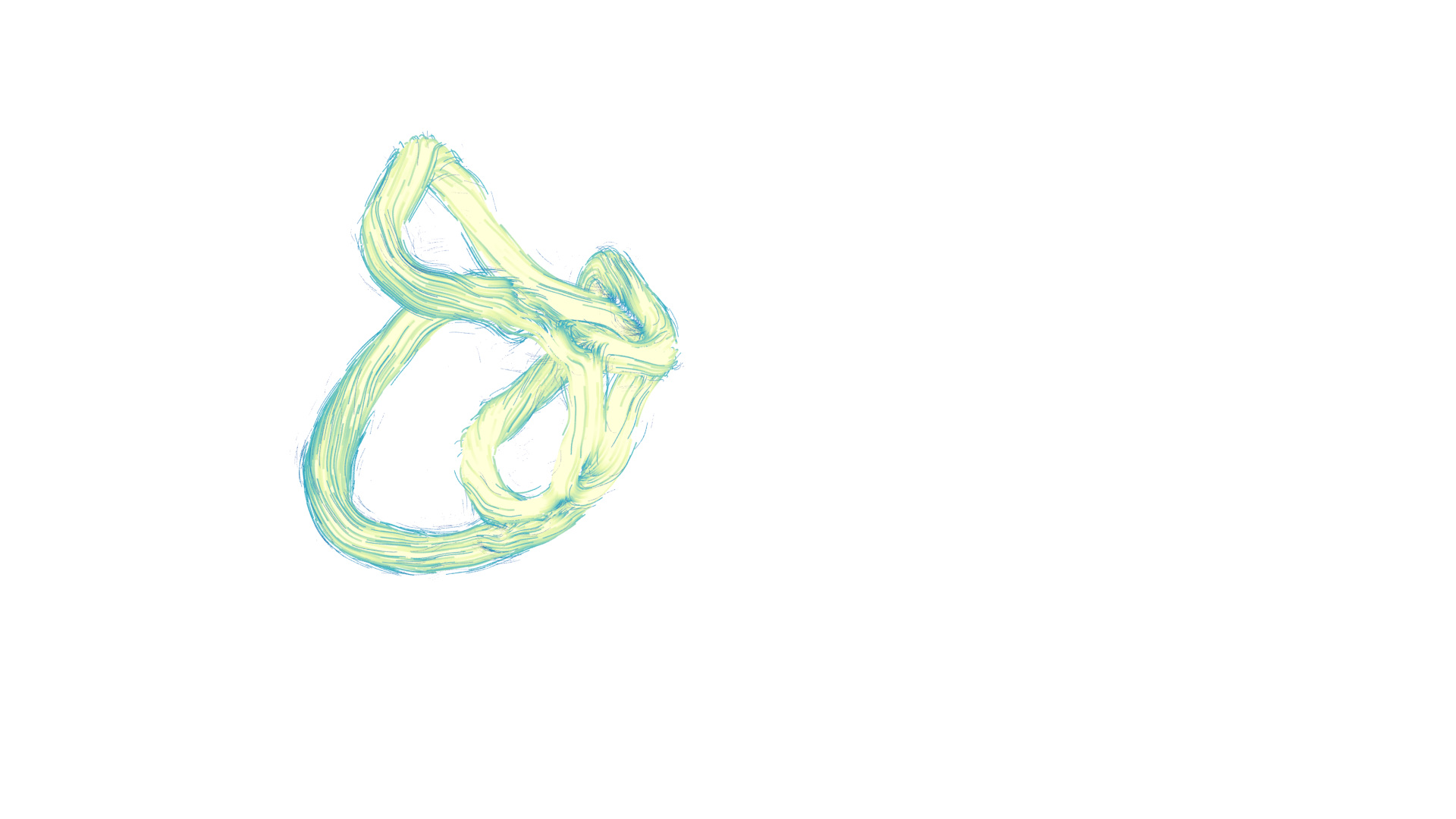}
    \includegraphics[trim={300px 160px 800px 40px},clip,width=0.16\linewidth]{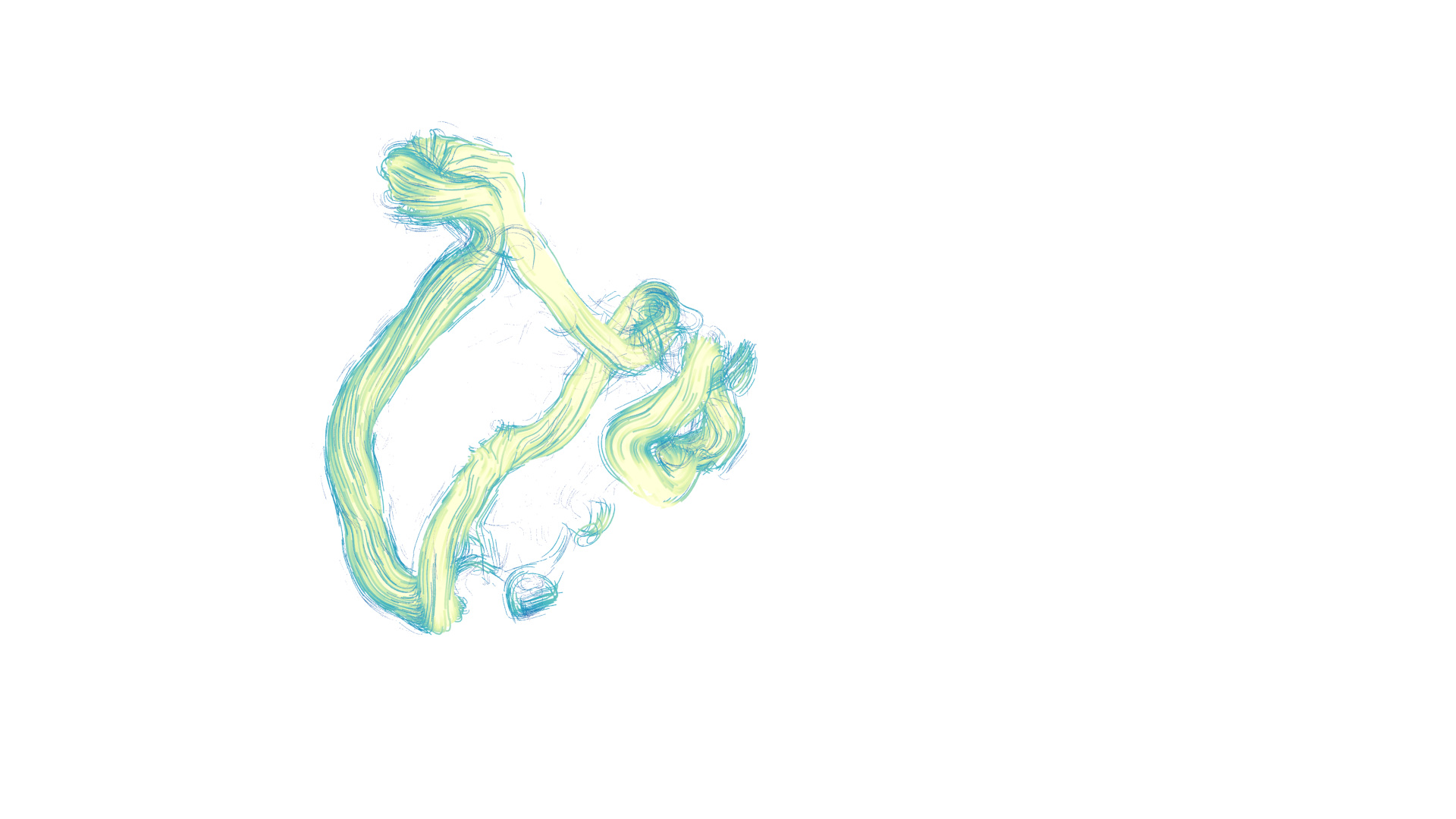}
    \includegraphics[trim={300px 160px 800px 40px},clip,width=0.16\linewidth]{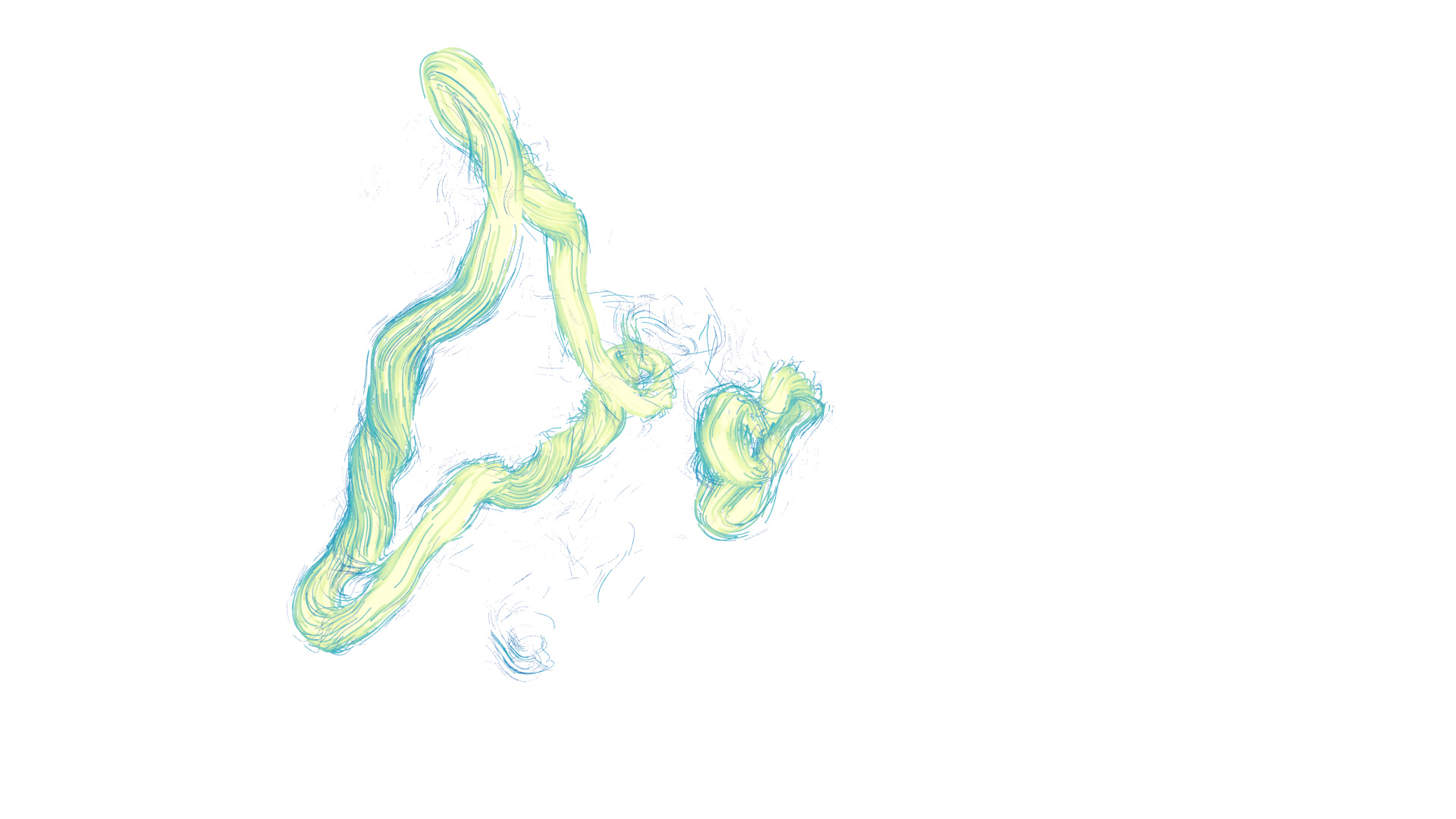}
    \vspace{10pt}
    \includegraphics[trim={550px 210px 550px 90px},clip,width=0.16\linewidth]{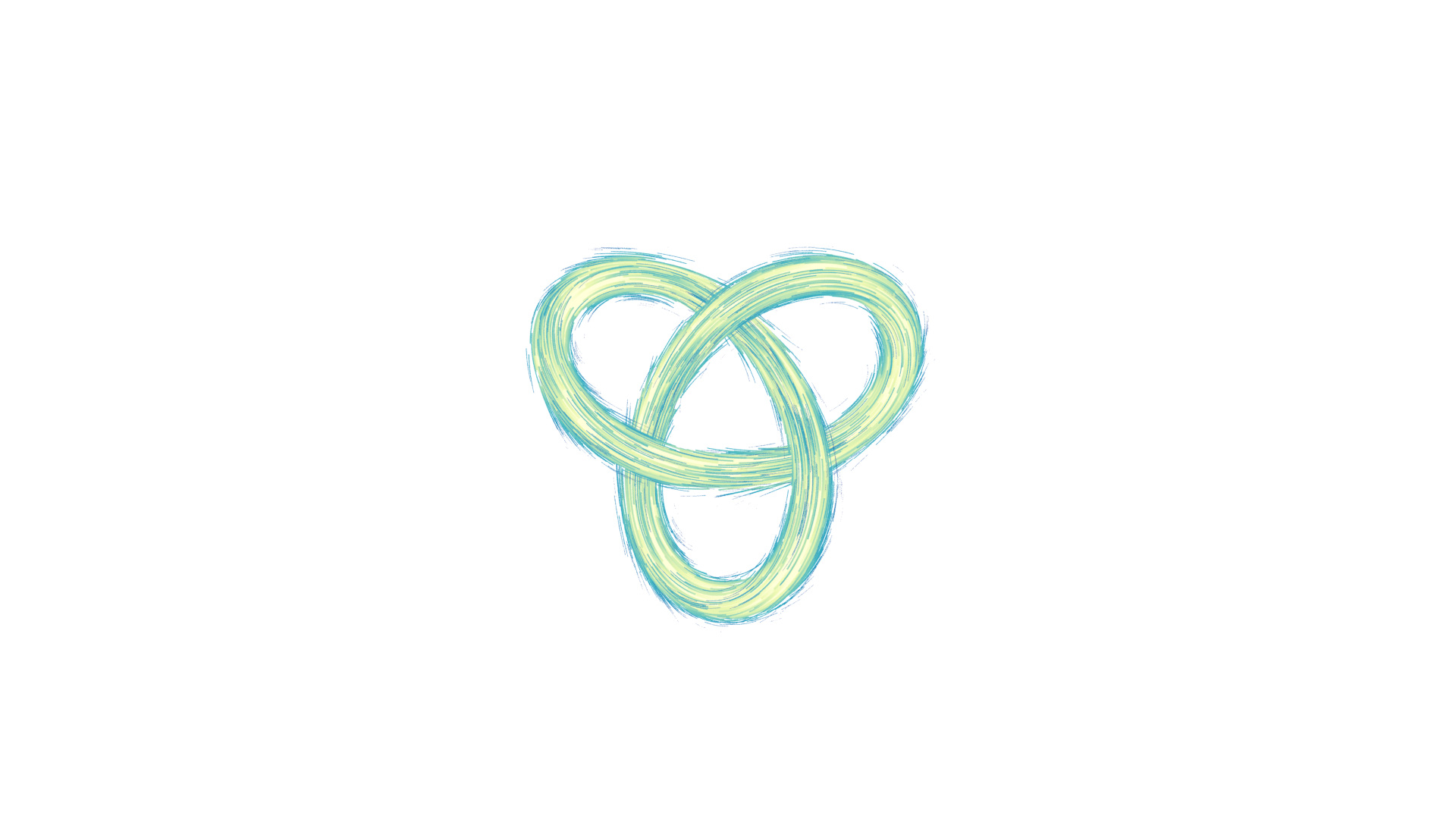}
    \includegraphics[trim={550px 210px 550px 90px},clip,width=0.16\linewidth]{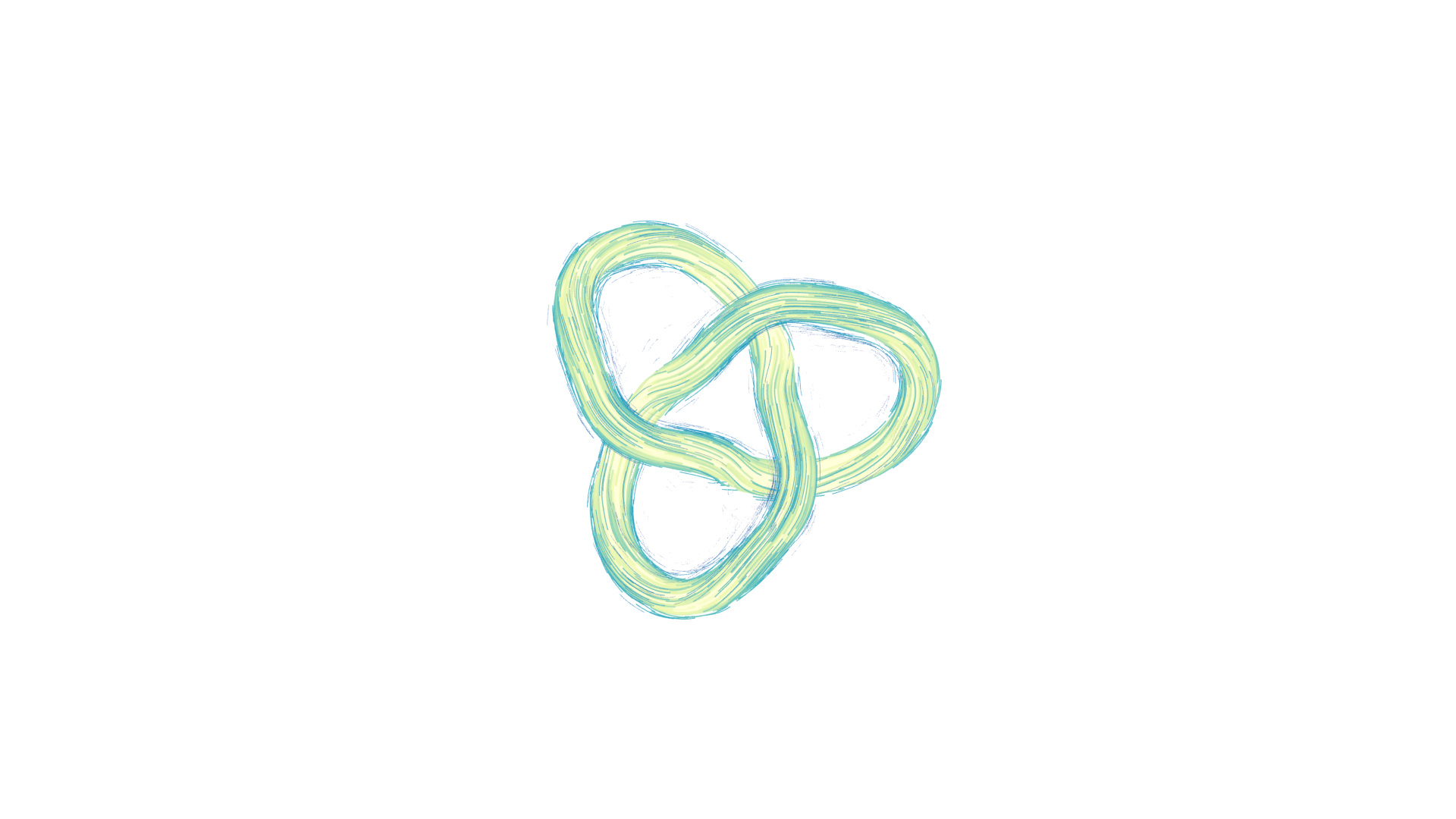}
    \includegraphics[trim={550px 210px 550px 90px},clip,width=0.16\linewidth]{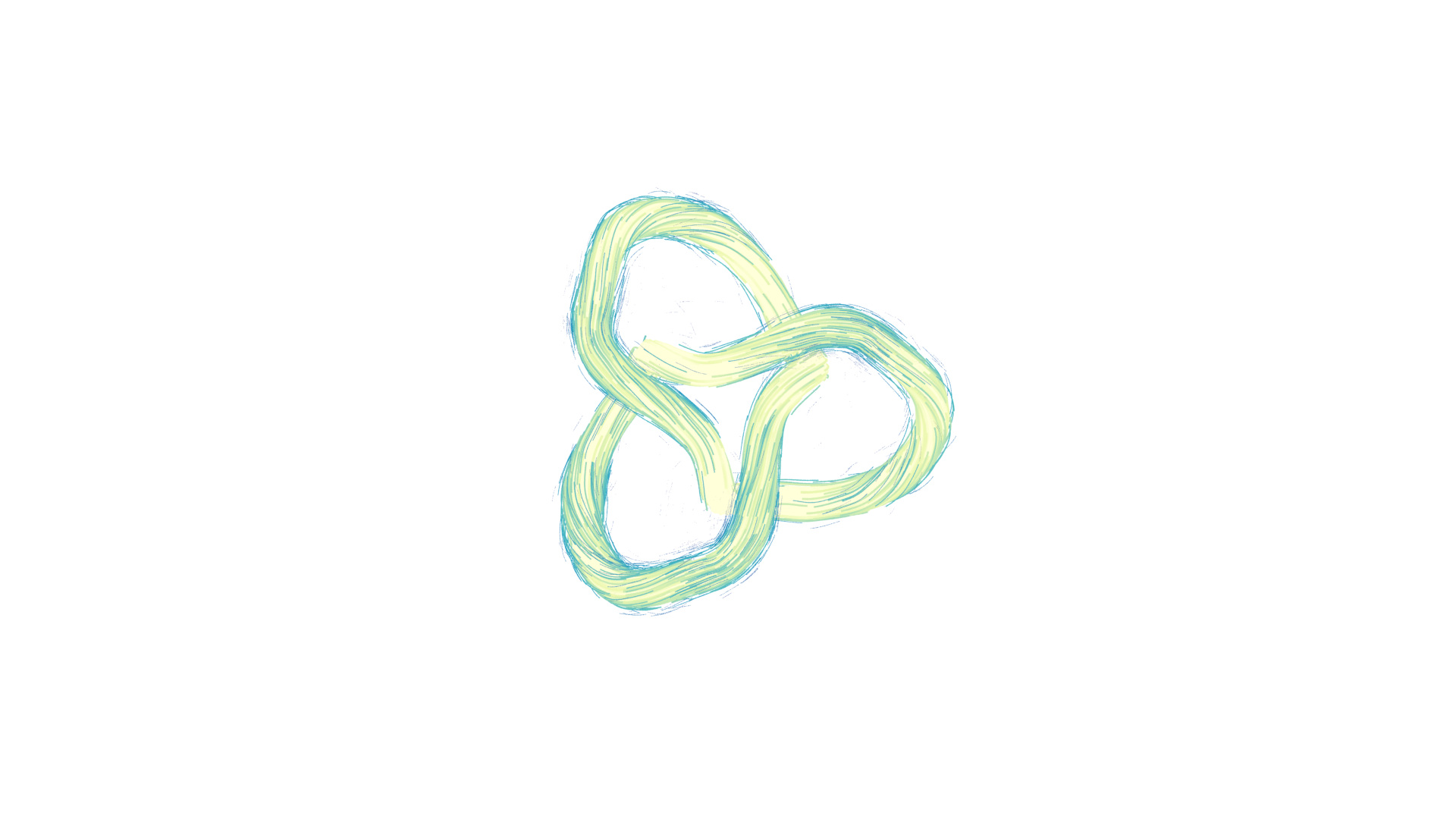}
    \includegraphics[trim={550px 210px 550px 90px},clip,width=0.16\linewidth]{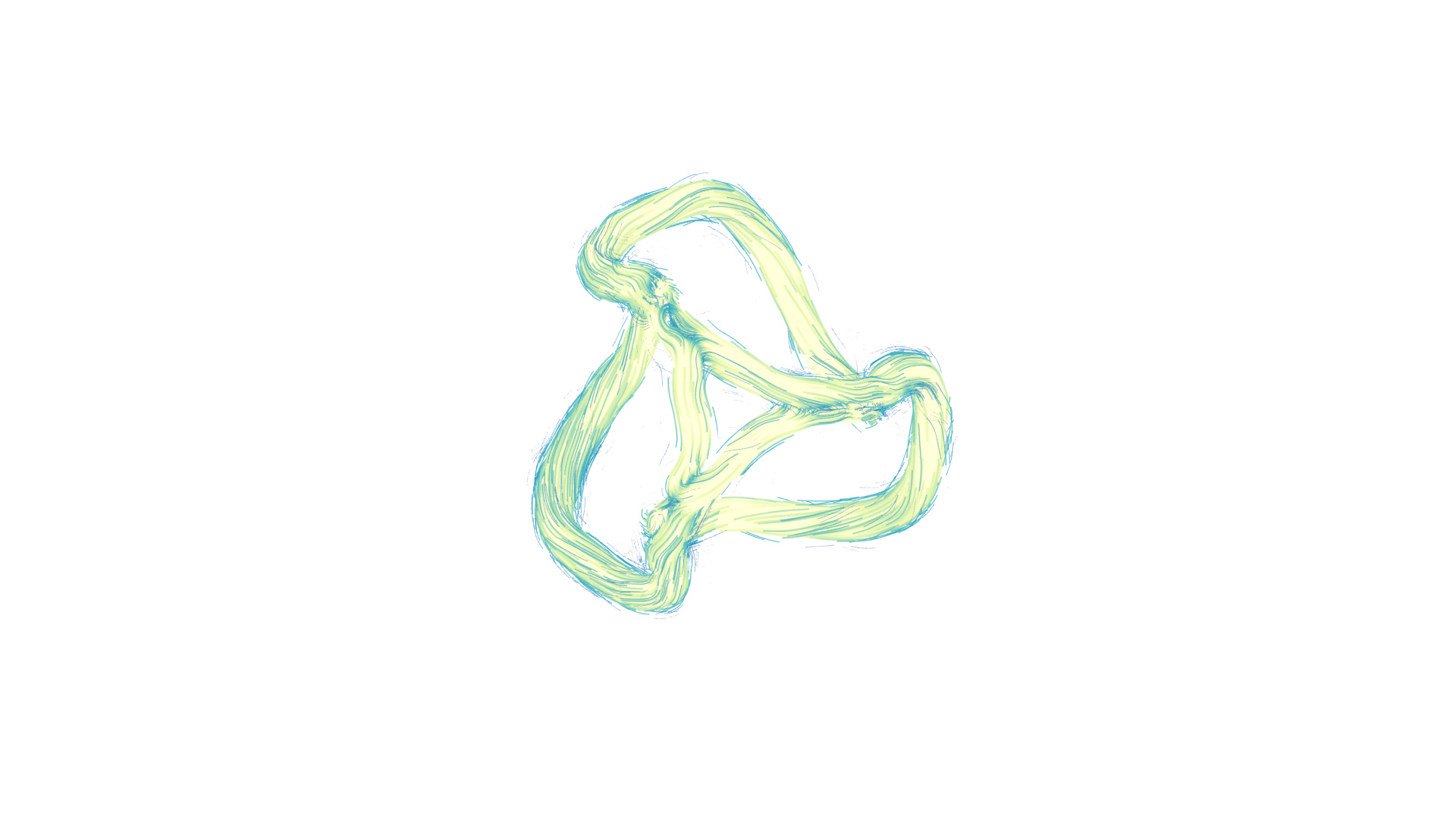}
    \includegraphics[trim={550px 210px 550px 90px},clip,width=0.16\linewidth]{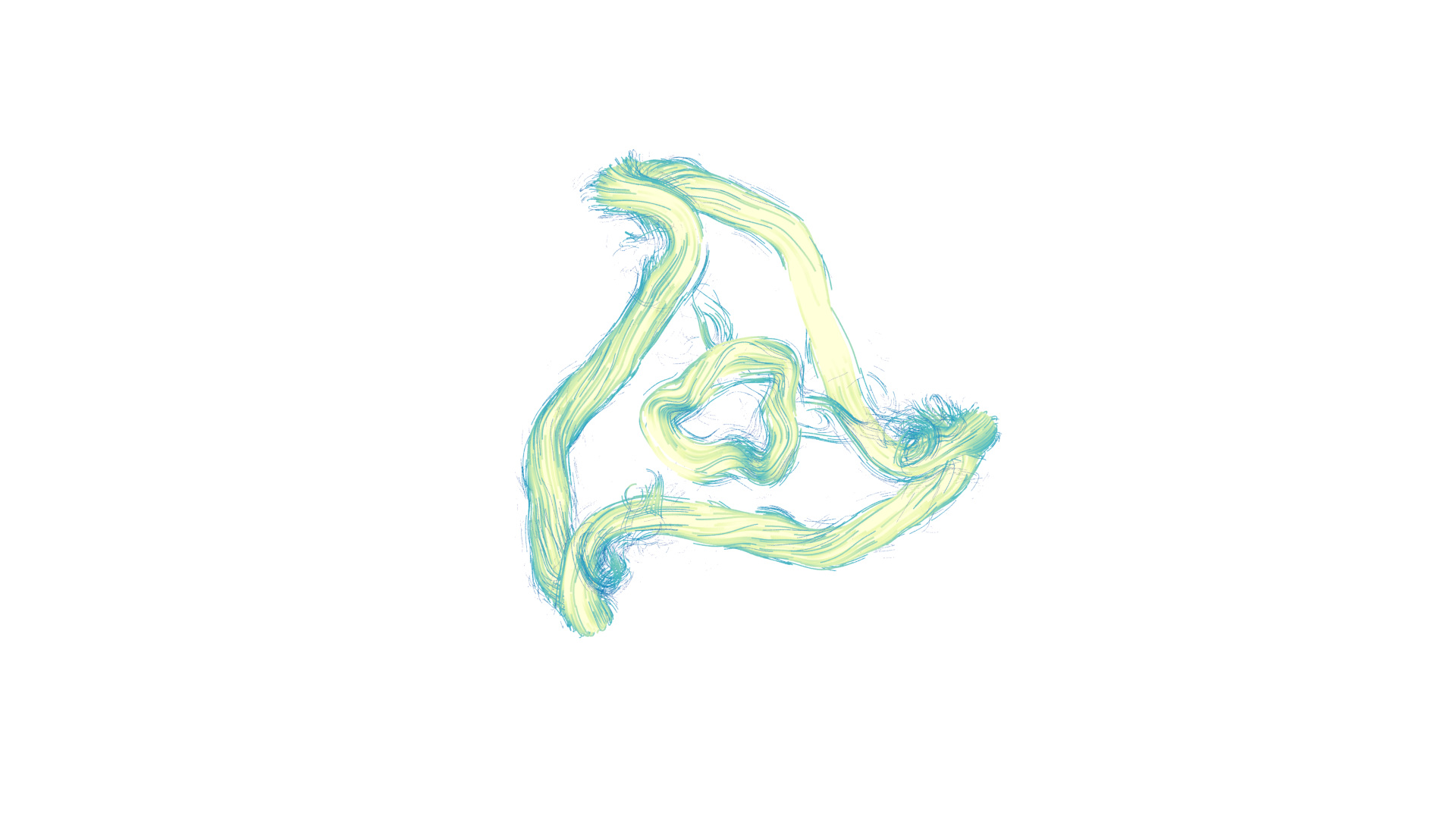}
    \includegraphics[trim={550px 210px 550px 90px},clip,width=0.16\linewidth]{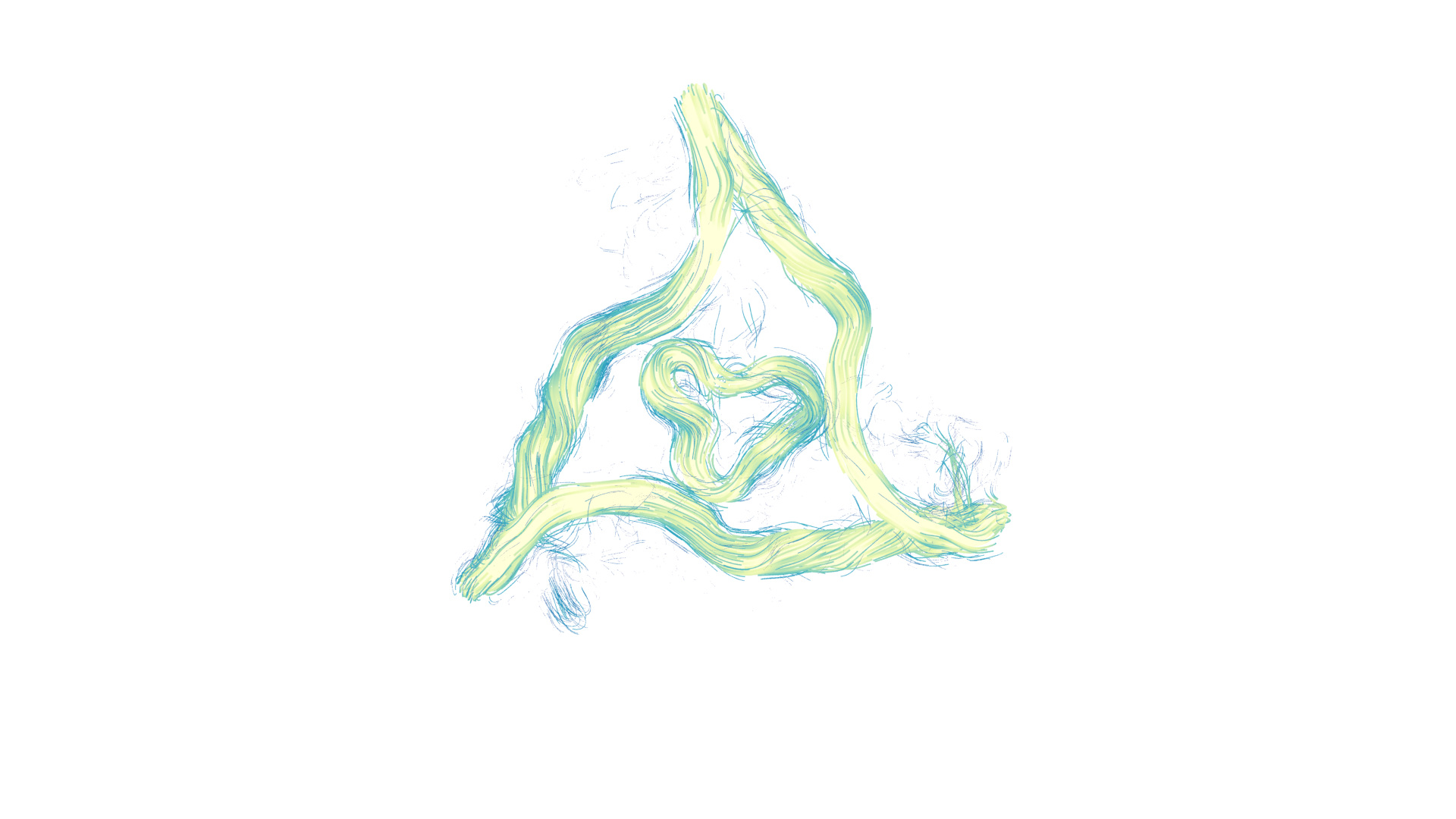}
    \\
    \begin{picture}(0,0)(0,0)
        \put(-265,67){\begin{tikzpicture}
\pgfplotscolorbardrawstandalone[ 
    colormap={myProteinColor}{
        rgb255=(8, 29, 88)
        rgb255=(37, 52, 148)
        rgb255=(34, 94, 168)
        rgb255=(29, 145, 192)
        rgb255=(65, 182, 196)
        rgb255=(127, 205, 187)
        rgb255=(199, 233, 180)
        rgb255=(237, 248, 177)
        rgb255=(255,255, 217)
    },
    point meta min=-1,
    point meta max=1,
    colorbar style={
        width=4pt,
        height=35pt,
        ytick={-1,-0.5,0,0.5,1},
        ytick style={draw=none},
        yticklabels={{0},{},{},{},{10}},
        yticklabel style={font=\scriptsize, xshift=-0.5ex},
        ylabel={\sffamily vorticity norm ($\nicefrac{1}{\text{s}}$)},
        ylabel style={font=\tiny, yshift=25pt}
        }]
\end{tikzpicture}}
        \put(-28,98){\sffamily\scriptsize (a) Side view.}
        \put(-29,0){\sffamily\scriptsize (b) Head-on view.}
        \put(-220,10){\sffamily \scriptsize Frame 0}
        \put(-140,10){\sffamily \scriptsize Frame 78}
        \put(-50,10){\sffamily \scriptsize Frame 156}
        \put(25,10){\sffamily \scriptsize Frame 234}
        \put(115,10){\sffamily \scriptsize Frame 312}
        \put(210,10){\sffamily \scriptsize Frame 390}
    \end{picture}
    \caption{Trefoil knot, also known as the (2,3)-torus knot.
    As time passes, the knot moves forward and stretches until the vortex reconnection even when it breaks into one big and one smaller part \cite{Kleckner:2013:CDK}.
    Our captures this behavior at a low resolution of $64\times64\times64$.}
    \label{fig:trefoilknot_evolution}
\end{figure*}

\subsubsection{2D and 3D Casimirs under refinement}
As discussed in \secref{sec:CasimirMeasurement}, we use a set of auxiliary measurement grids.
We progressively increase the resolution of the measurement grid while running the experiment based on a fixed-resolution simulation grid.
The Hamiltonian is measured on the simulation grid.

In 2D, we initialize the velocity field analytically with the Taylor-Green velocity field in an $8\times8$ simulation grid.
We measure enstrophy, which is the Casimir related to the second moment of vorticity.
As seen in \figref{fig:casimir_refinement}, enstrophy is preserved progressively better under finer measurement grids.
Additionally, \figref{fig:casimir_refinement} demonstrates that even without the curl consistent interpolation, i.e. using the traditional FLIP pressure force, the Casimirs are only changed by a small amount.

In 3D, we use an ABC flow velocity field pressure projected to a closed box.
Note that, after pressure projection the velocity field no longer exactly follows the ABC analytical equation, but provides a close-to-ABC flow; which is smooth velocity field.
Similar to the 2D experiment, \figref{fig:casimir_refinement} shows that helicity is preserved under fine enough measurements.
Note that the coarse measurement shows a larger error, and should only be taken as an approximate measurement for the helicity Casimir.

\subsubsection{2D Vortex leapfrogging vortices}
For the 2D vortex leapfrogging experiment, the vortices are expected to run indefinitely as they leap through one another.
Inspecting the final frame in \figref{fig:leapfrog_marathon}, we see that our method (CO-FLIP) maintains the core vortical structure of the vortices despite 500 seconds of simulation, while other methods have drastically lost their energy and vorticity.
As seen in \figref{fig:leapfrog_plots}, our method conserves the energy close to floating point accuracy, while all previous methods drastically lose their energy over time.
From this plot, we also see that the Casimirs for our method only change slightly.
For the second moment of vorticity $\mathcal{W}^2$, our method has a relative error of $2\%$, which is drastically lower than other methods (the best one being at $60\%$ error).
Similarly, for the fourth moment of vorticity $\mathcal{W}^4$, our method has a relative error of 17\% compared to 97\% (at best) for other methods.

\begin{figure}
    \centering
    \includegraphics[trim={550px 210px 550px 90px},clip,width=0.32\columnwidth]{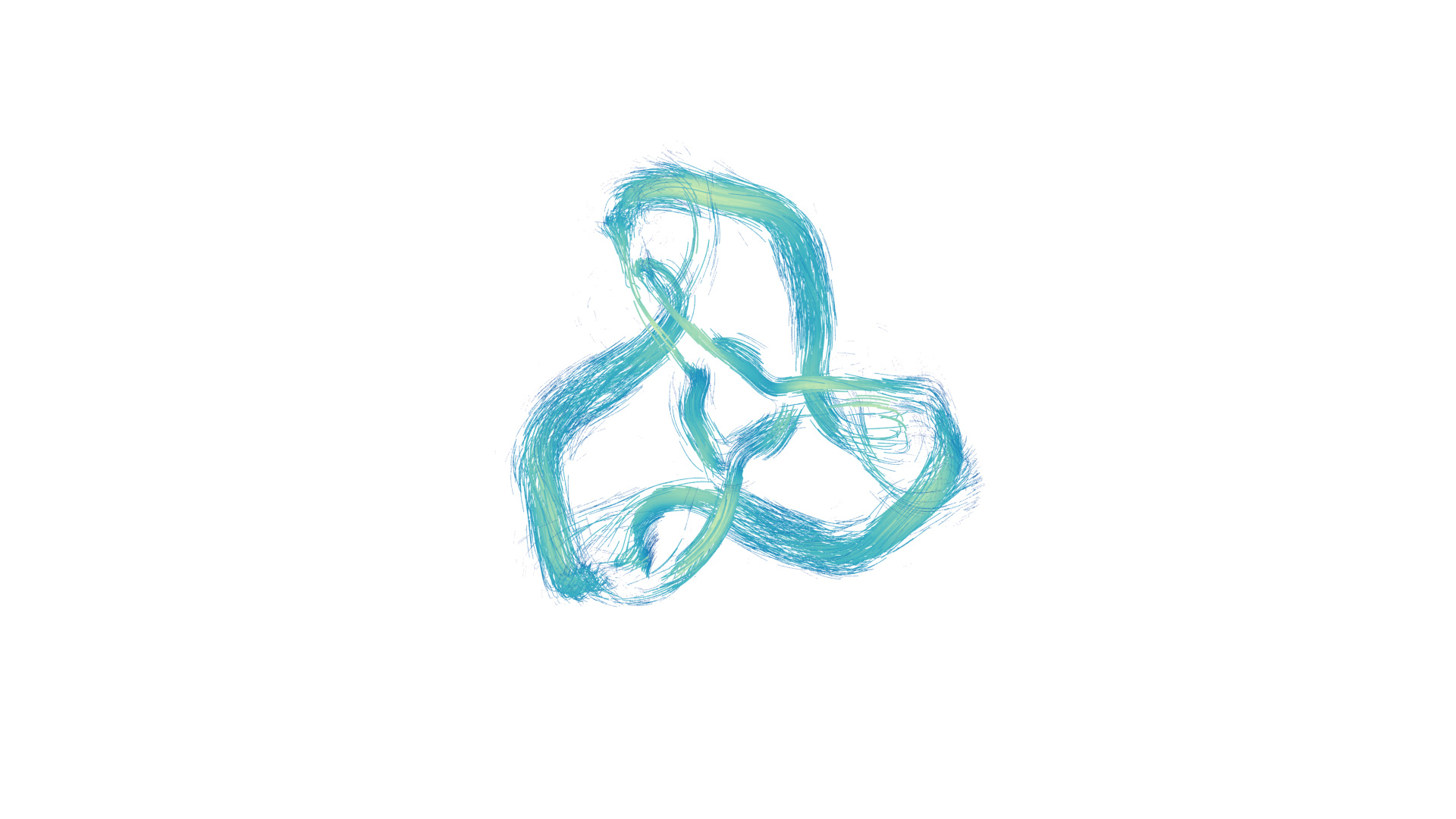}
    \includegraphics[trim={550px 210px 550px 90px},clip,width=0.32\columnwidth]{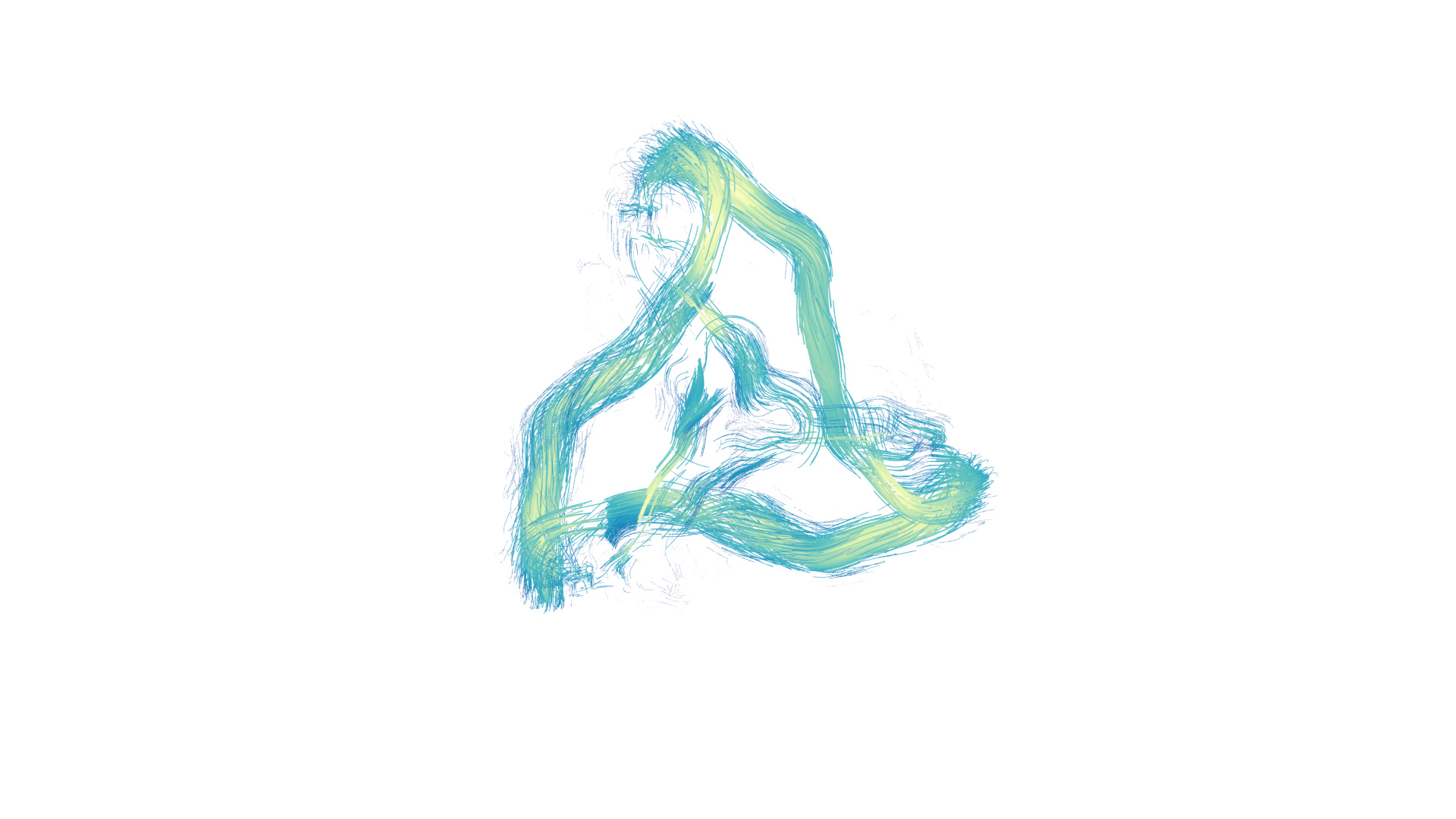}
    \includegraphics[trim={550px 210px 550px 90px},clip,width=0.32\columnwidth]{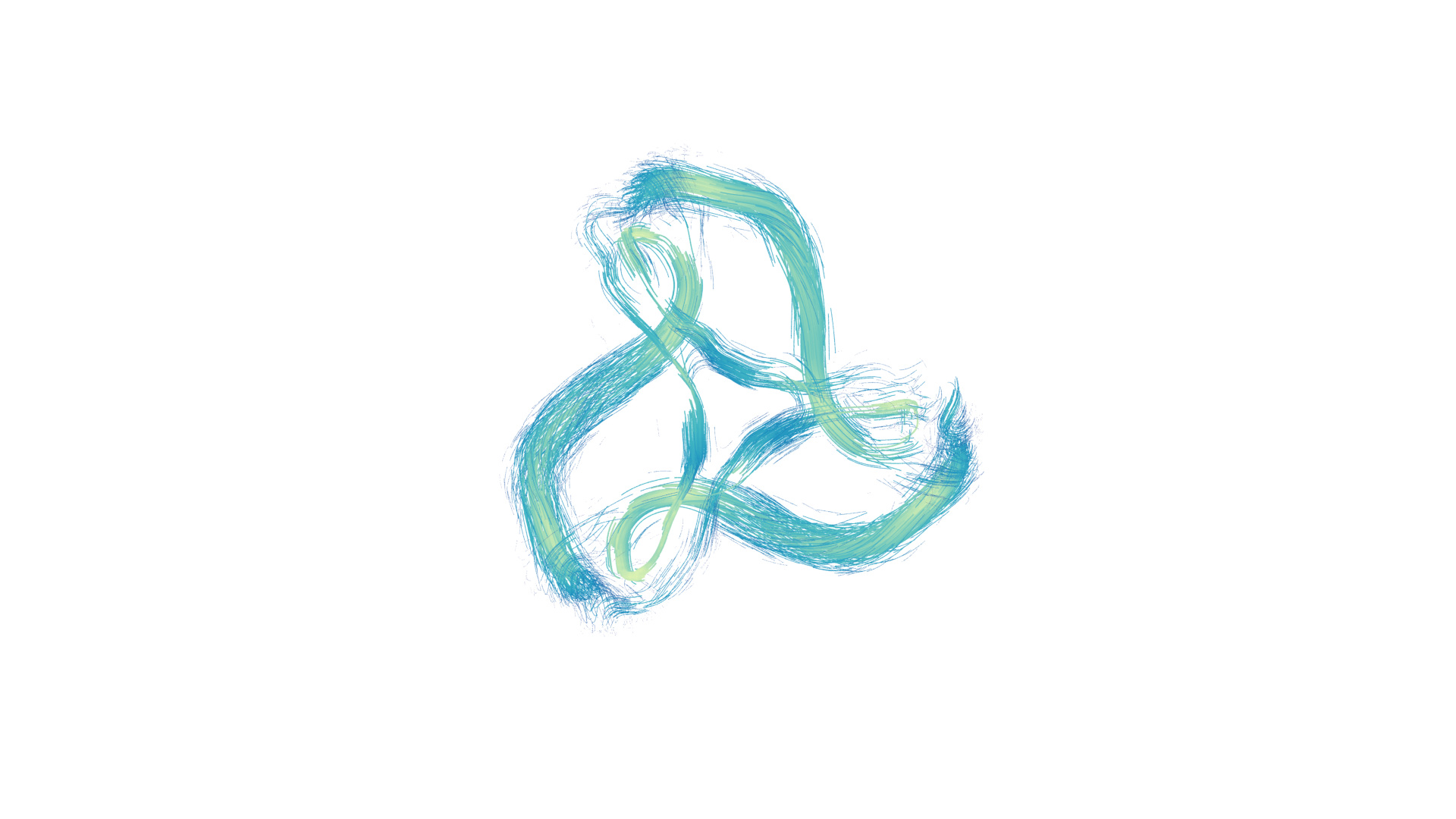}
    \includegraphics[trim={550px 210px 550px 90px},clip,width=0.32\columnwidth]{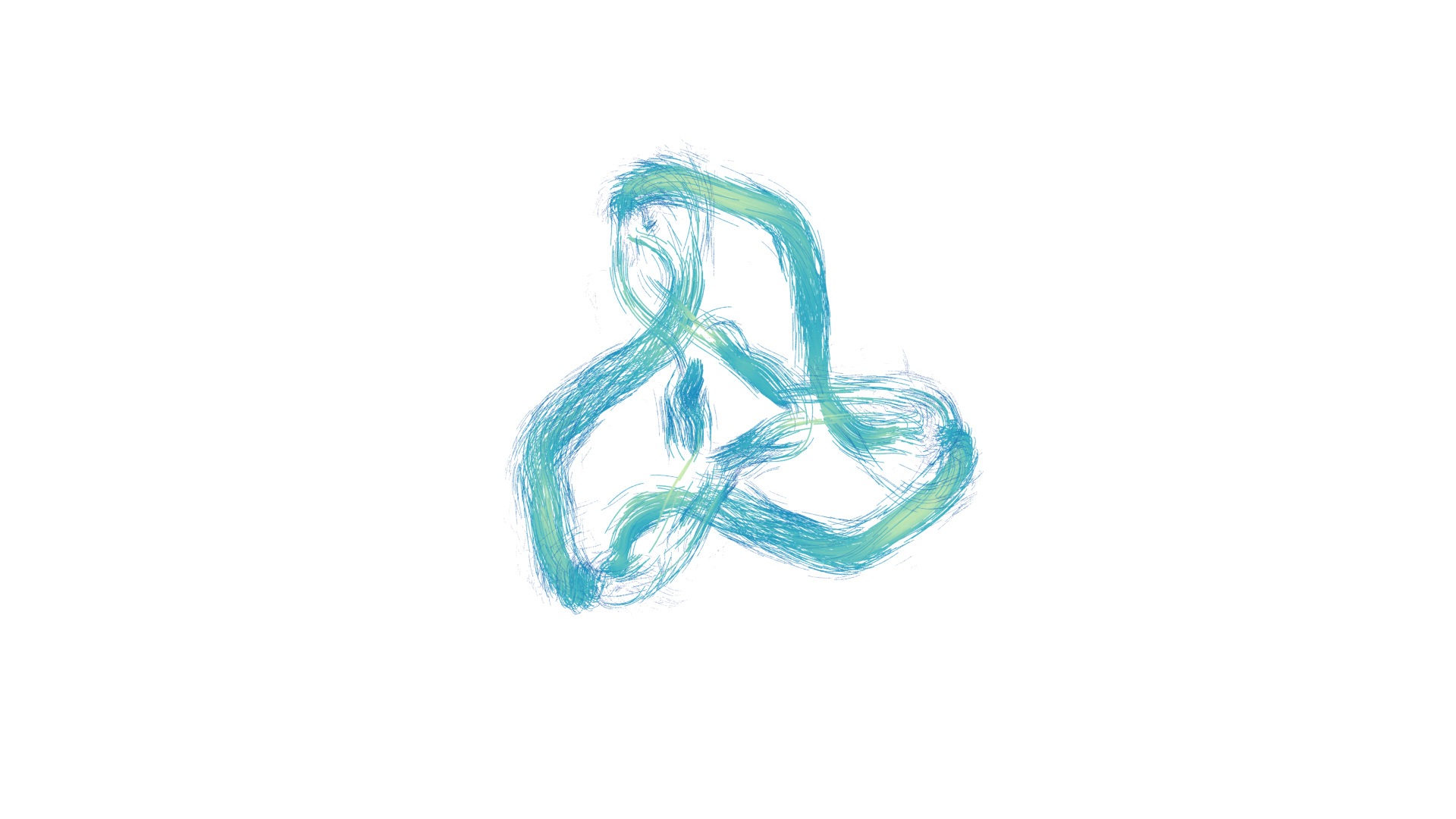}
    \includegraphics[trim={550px 210px 550px 90px},clip,width=0.32\columnwidth]{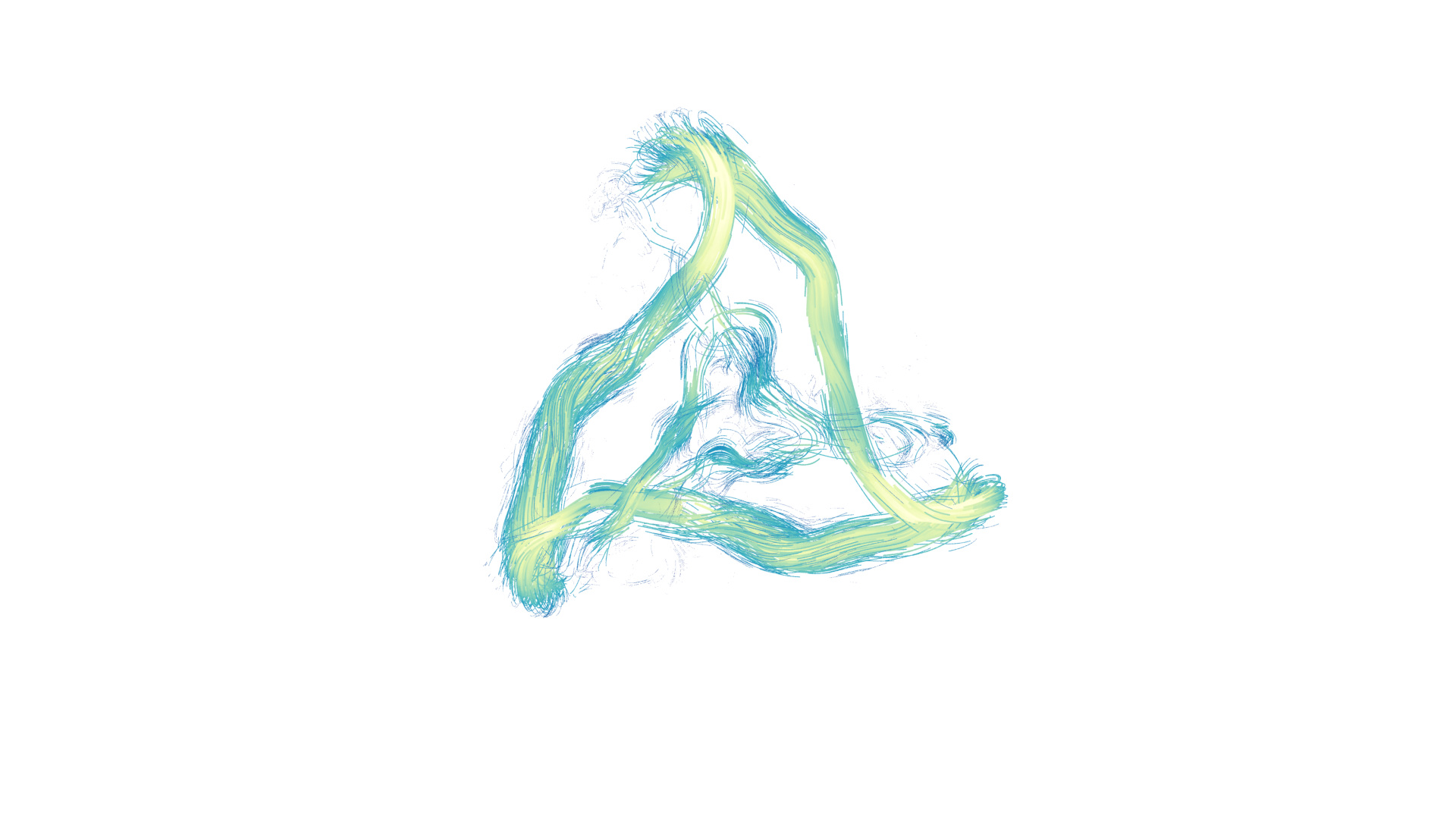}
    \includegraphics[trim={550px 210px 550px 90px},clip,width=0.32\columnwidth]{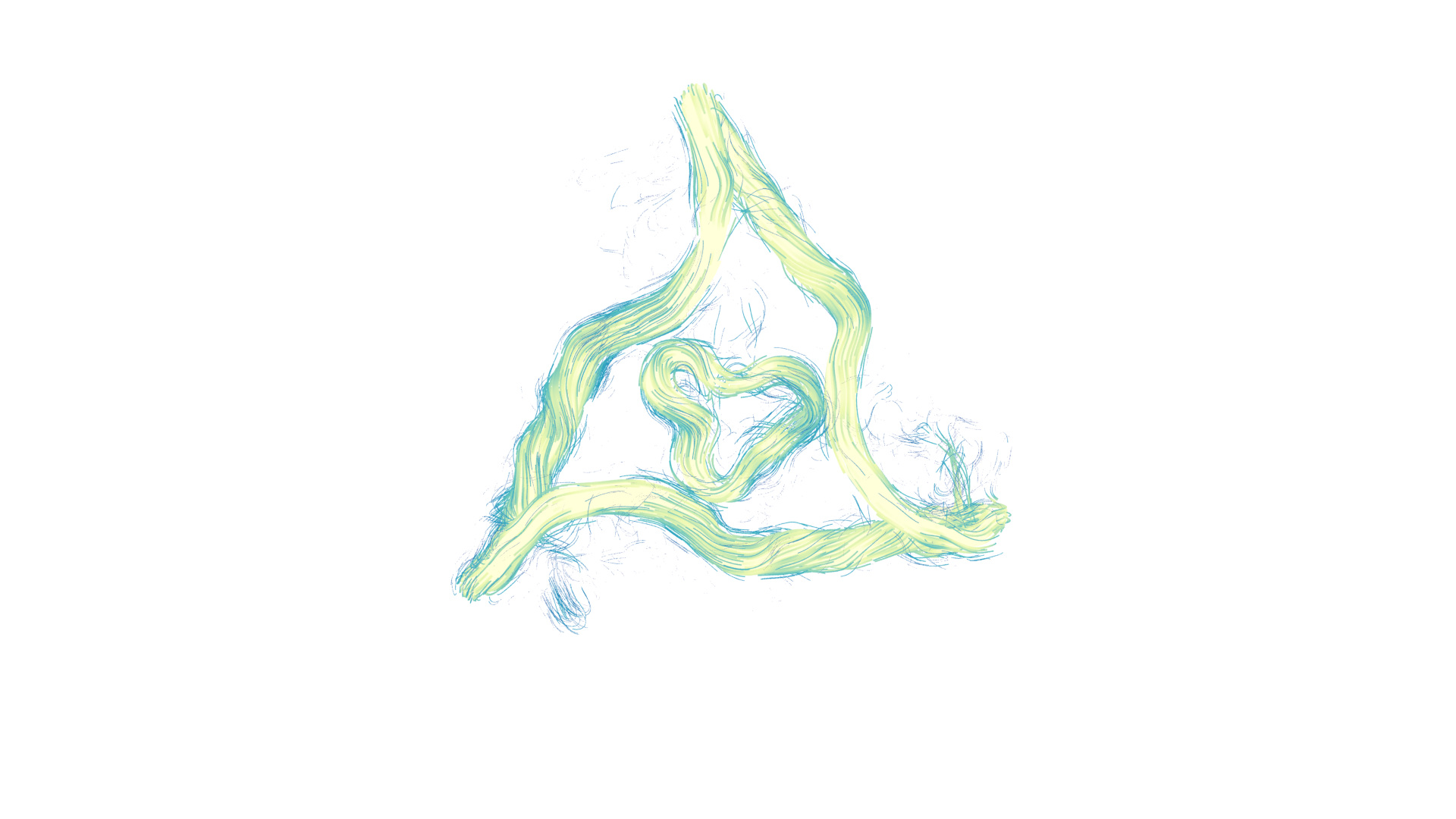}
    \\
    \begin{picture}(0,0)(0,0)
        \put(-130,55){\begin{tikzpicture}
\pgfplotscolorbardrawstandalone[ 
    colormap={myProteinColor}{
        rgb255=(8, 29, 88)
        rgb255=(37, 52, 148)
        rgb255=(34, 94, 168)
        rgb255=(29, 145, 192)
        rgb255=(65, 182, 196)
        rgb255=(127, 205, 187)
        rgb255=(199, 233, 180)
        rgb255=(237, 248, 177)
        rgb255=(255,255, 217)
    },
    point meta min=-1,
    point meta max=1,
    colorbar style={
        width=4pt,
        height=35pt,
        ytick={-1,-0.5,0,0.5,1},
        ytick style={draw=none},
        yticklabels={{0},{},{},{},{10}},
        yticklabel style={font=\scriptsize, xshift=-0.5ex},
        ylabel={\sffamily vorticity norm ($\nicefrac{1}{\text{s}}$)},
        ylabel style={font=\tiny, yshift=25pt}
        }]
\end{tikzpicture}}
        \put(-85,85){\sffamily \scriptsize PolyPIC}
        \put(-10,85){\sffamily \scriptsize CF+PolyFLIP}
        \put(73,85){\sffamily \scriptsize NFM}
        \put(-85,10){\sffamily \scriptsize PolyFLIP}
        \put(-10,10){\sffamily \scriptsize R+PolyFLIP}
        \put(66,10){\sffamily \scriptsize \textbf{CO-FLIP (Ours)}}
    \end{picture}
    \caption{Comparisons of trefoil knot between different methods.
    Note that our method shows a superior preservation of energy and vortical structures, while successfully showcasing the vortex reconnection event.}
    \label{fig:trefoilknot}
\end{figure}

\begin{figure*}
    \centering
    \scalebox{-1}[1]{\includegraphics[trim={50px 0 0 0},clip,width=0.18\linewidth]{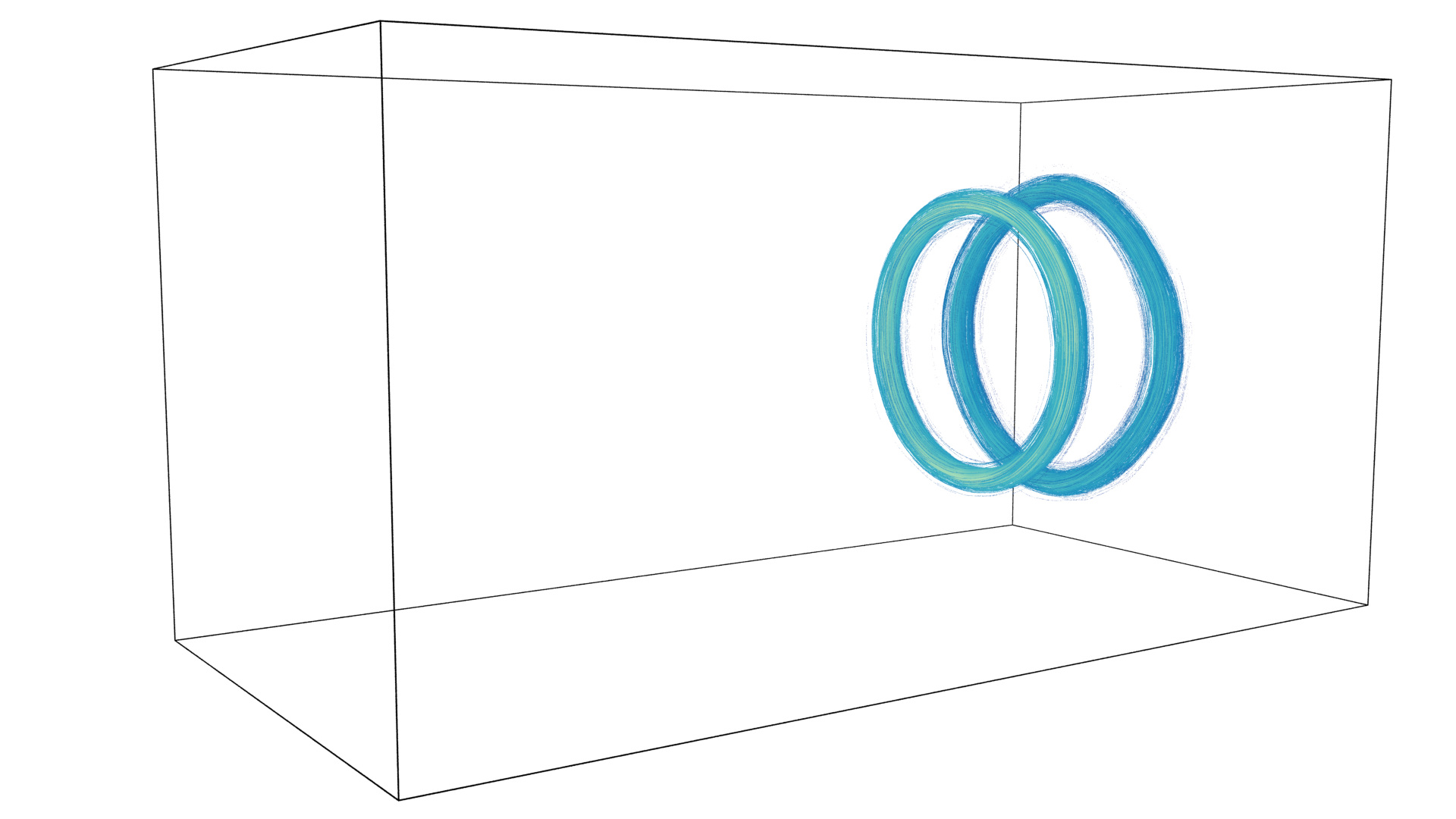}}
    \scalebox{-1}[1]{\includegraphics[trim={50px 0 0 0},clip,width=0.18\linewidth]{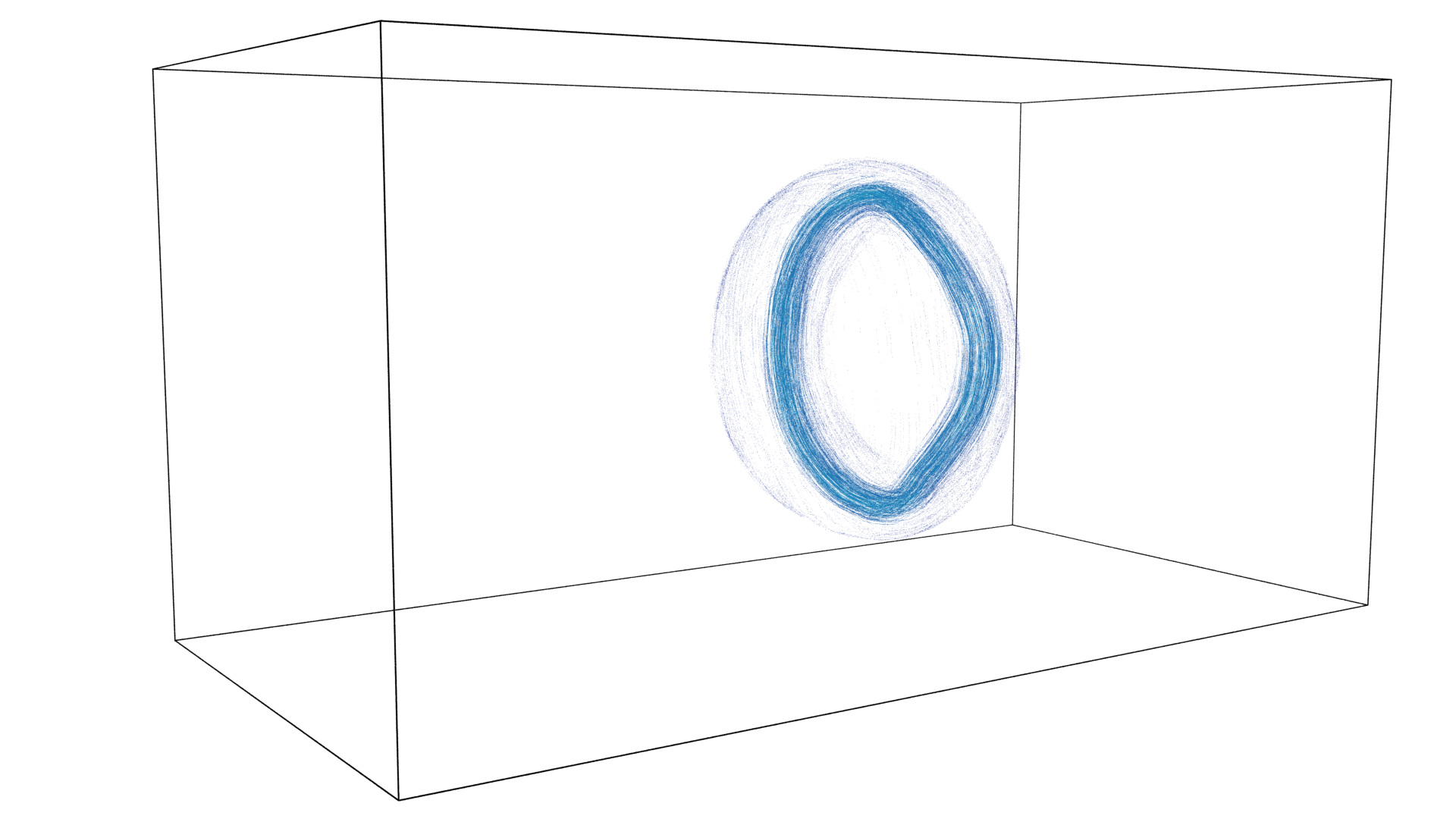}}
    \scalebox{-1}[1]{\includegraphics[trim={50px 0 0 0},clip,width=0.18\linewidth]{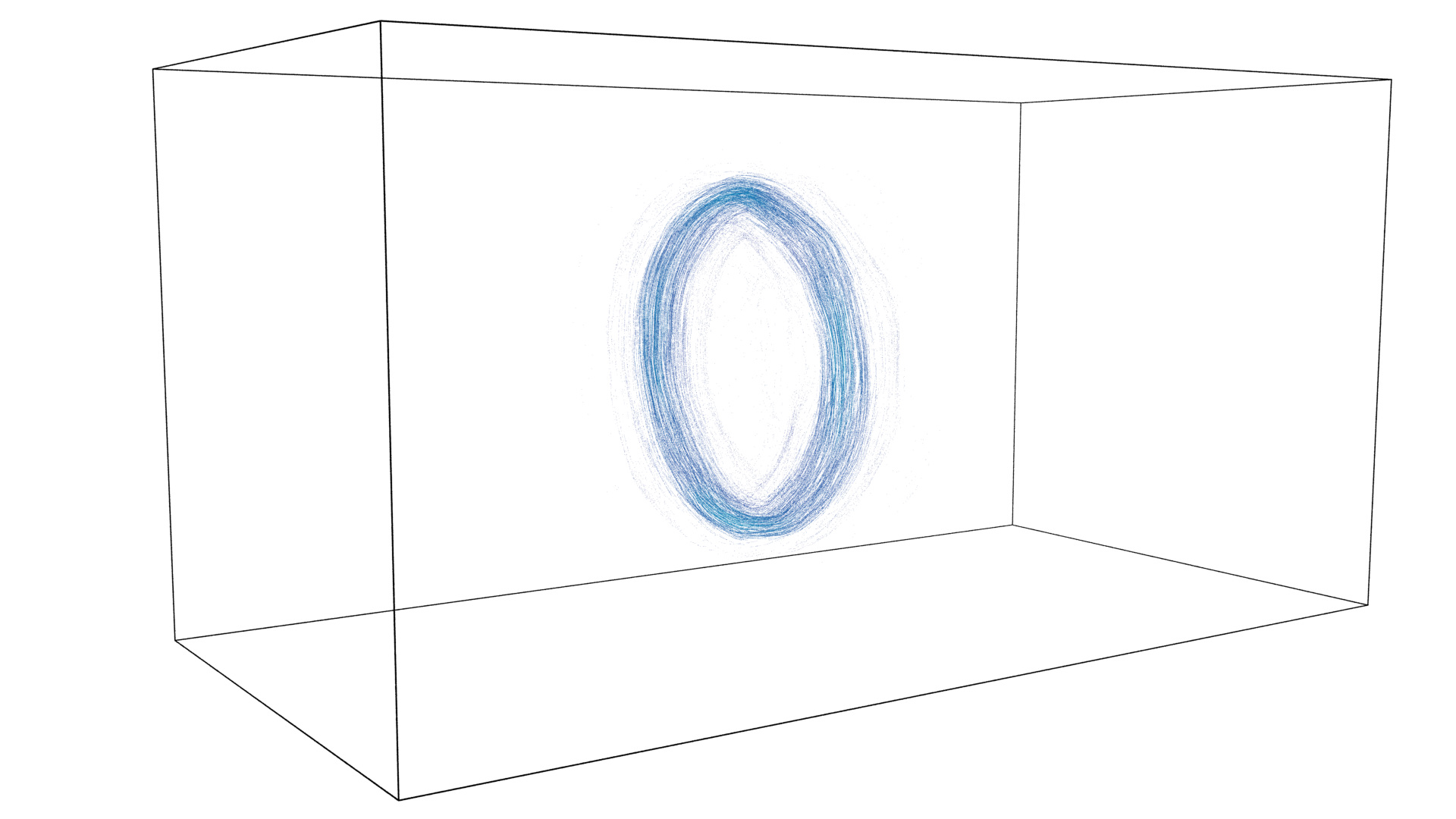}}
    \scalebox{-1}[1]{\includegraphics[trim={50px 0 0 0},clip,width=0.18\linewidth]{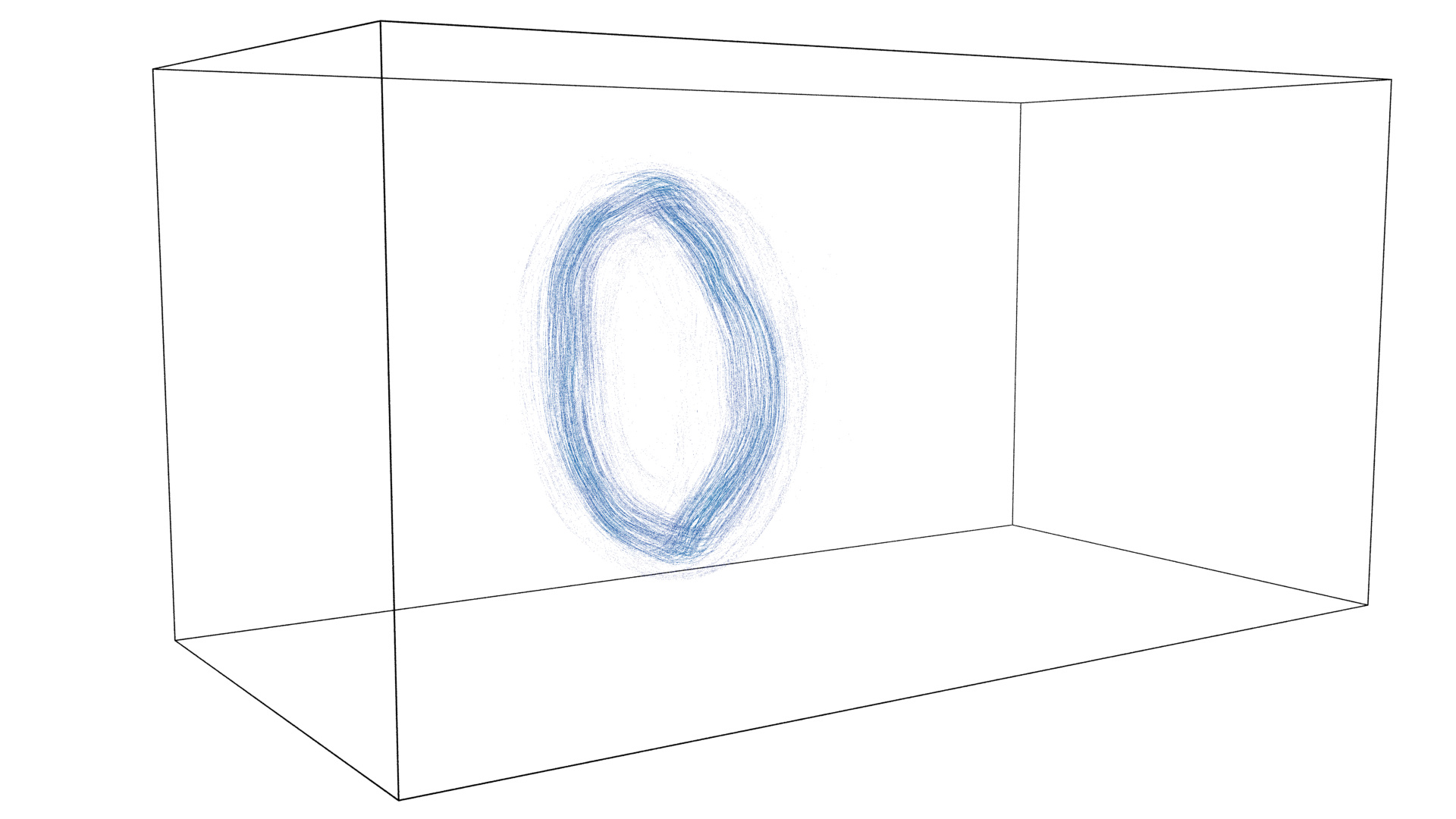}}
    \scalebox{-1}[1]{\includegraphics[trim={50px 0 0 0},clip,width=0.18\linewidth]{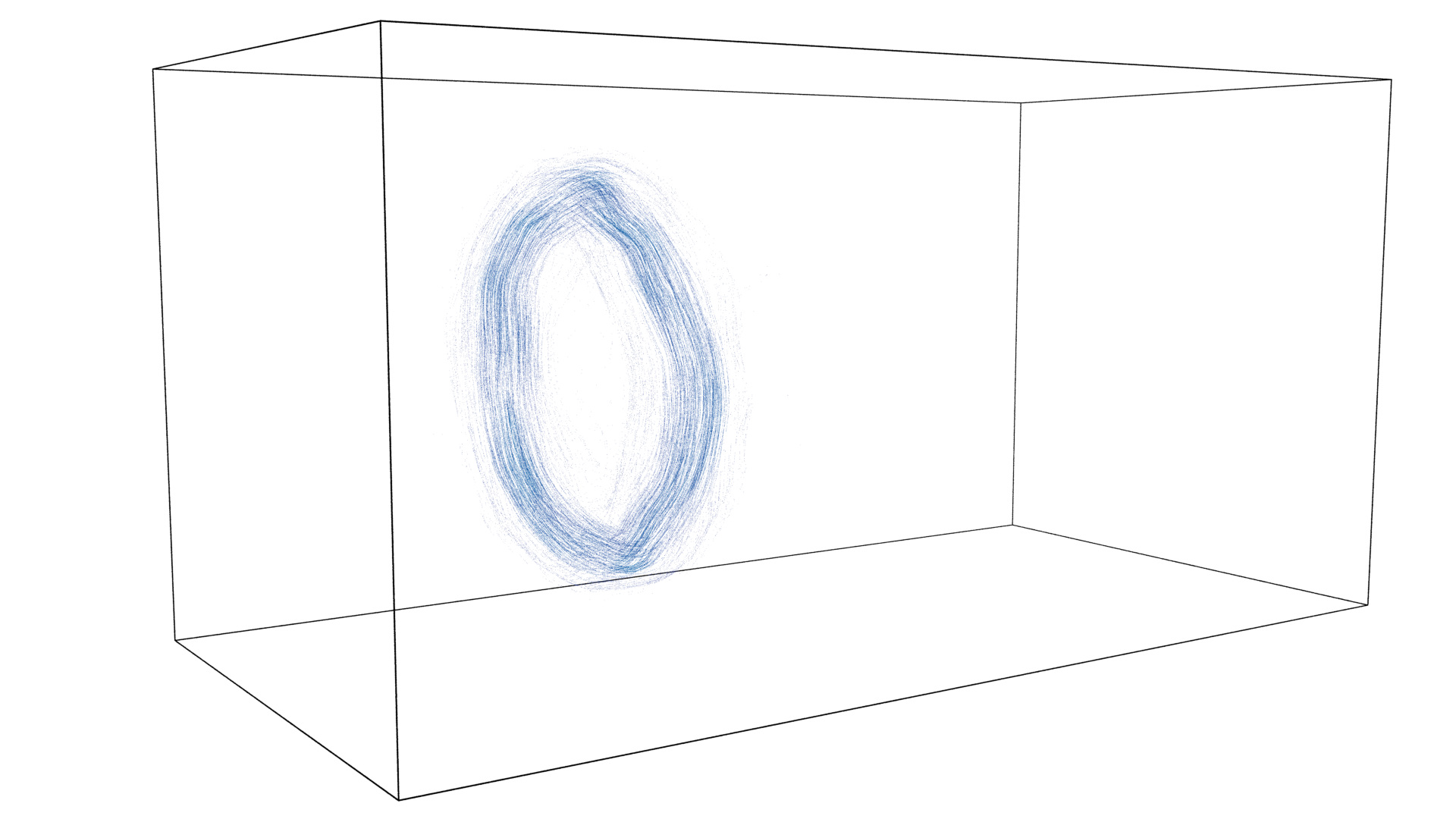}}
    \scalebox{-1}[1]{\includegraphics[trim={50px 0 0 0},clip,width=0.18\linewidth]{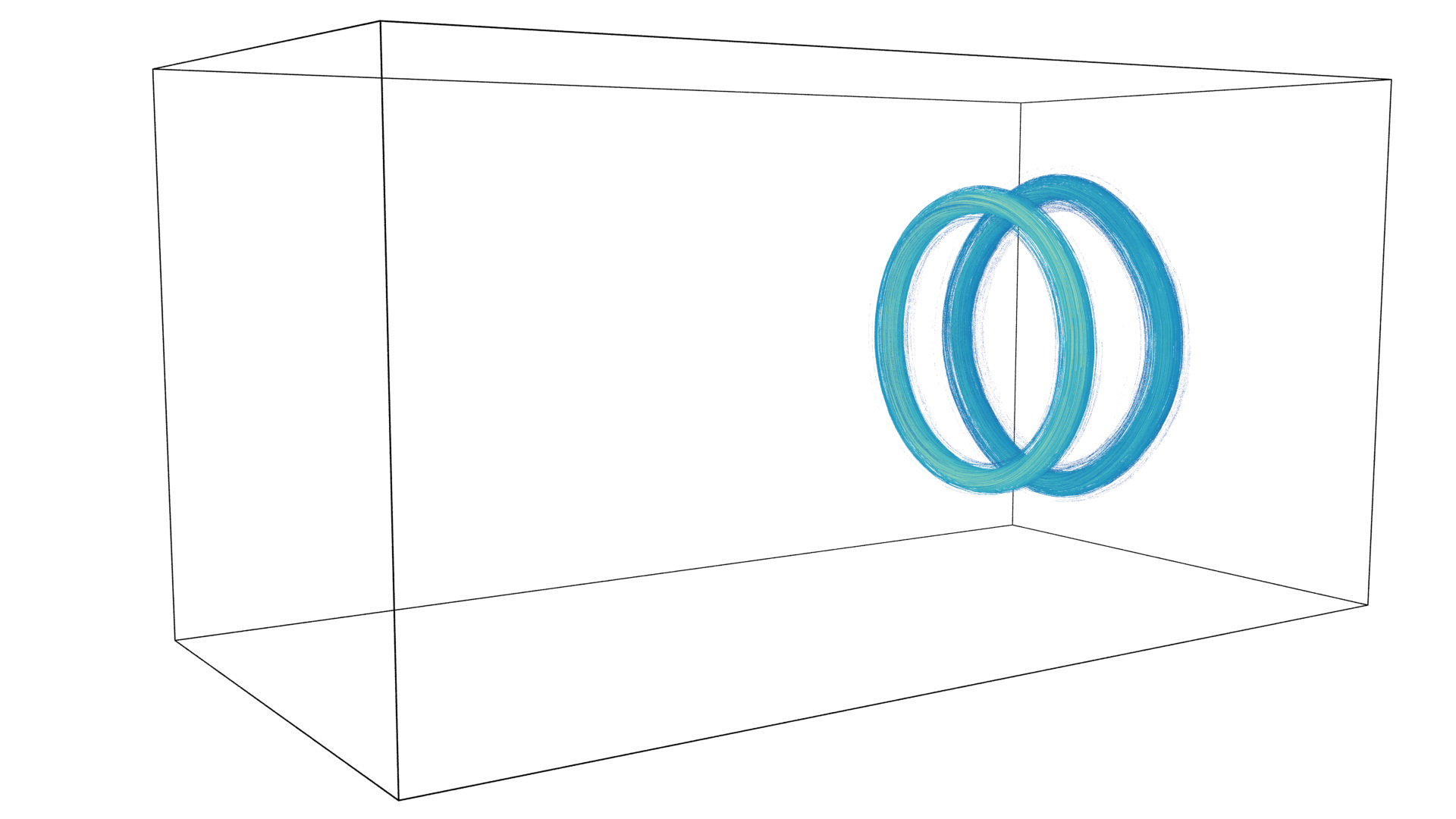}}
    \scalebox{-1}[1]{\includegraphics[trim={50px 0 0 0},clip,width=0.18\linewidth]{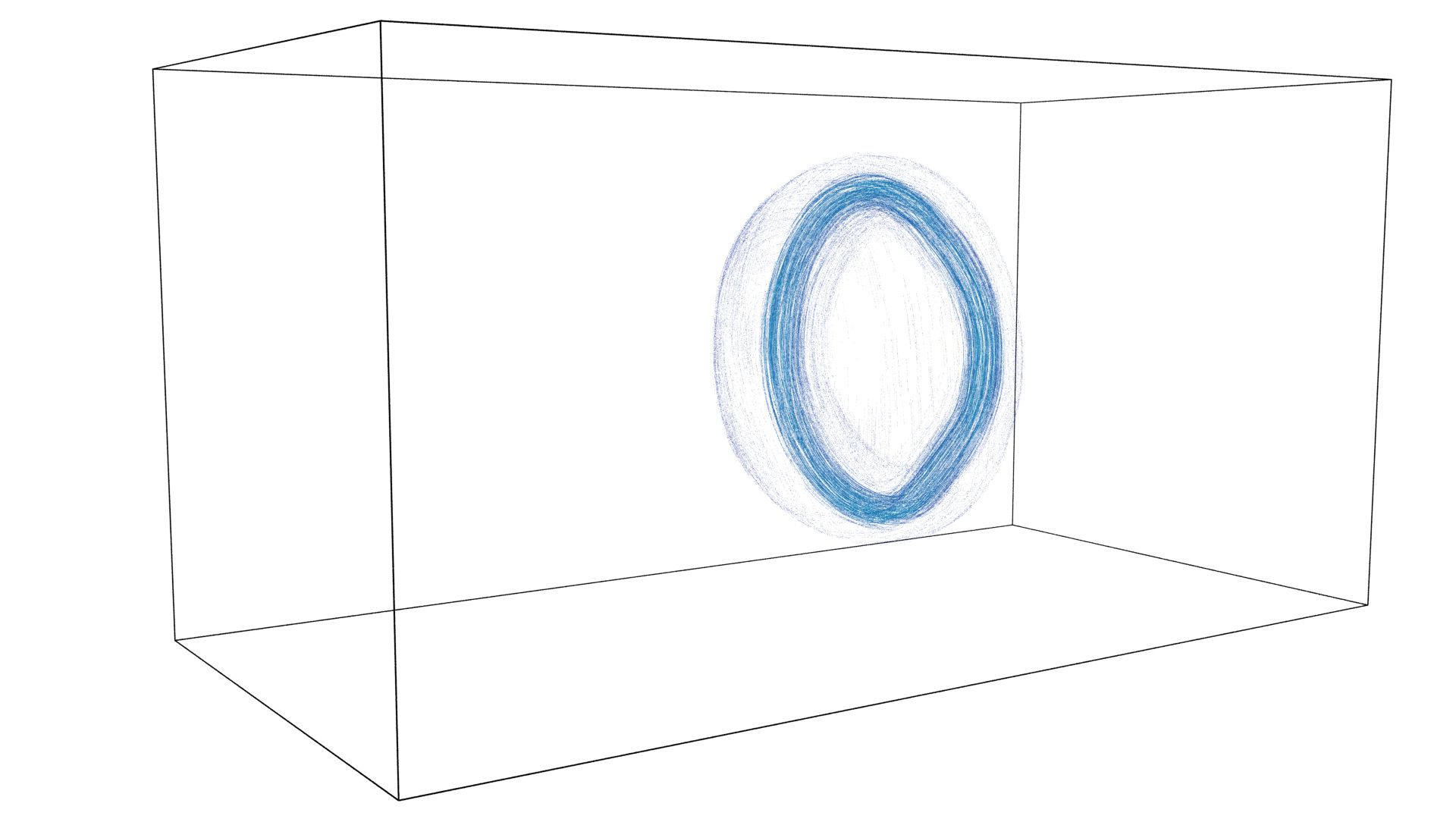}}
    \scalebox{-1}[1]{\includegraphics[trim={50px 0 0 0},clip,width=0.18\linewidth]{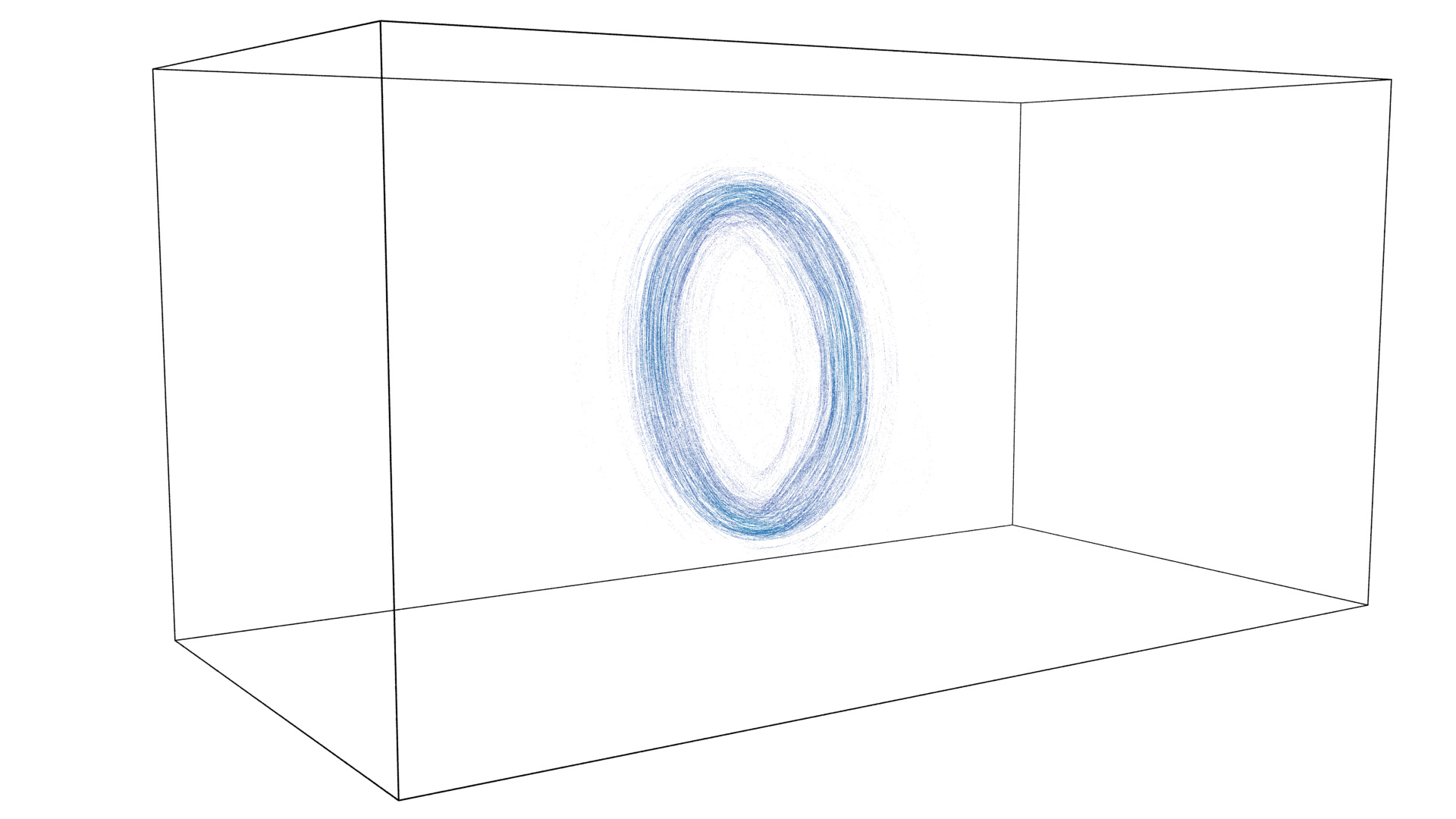}}
    \scalebox{-1}[1]{\includegraphics[trim={50px 0 0 0},clip,width=0.18\linewidth]{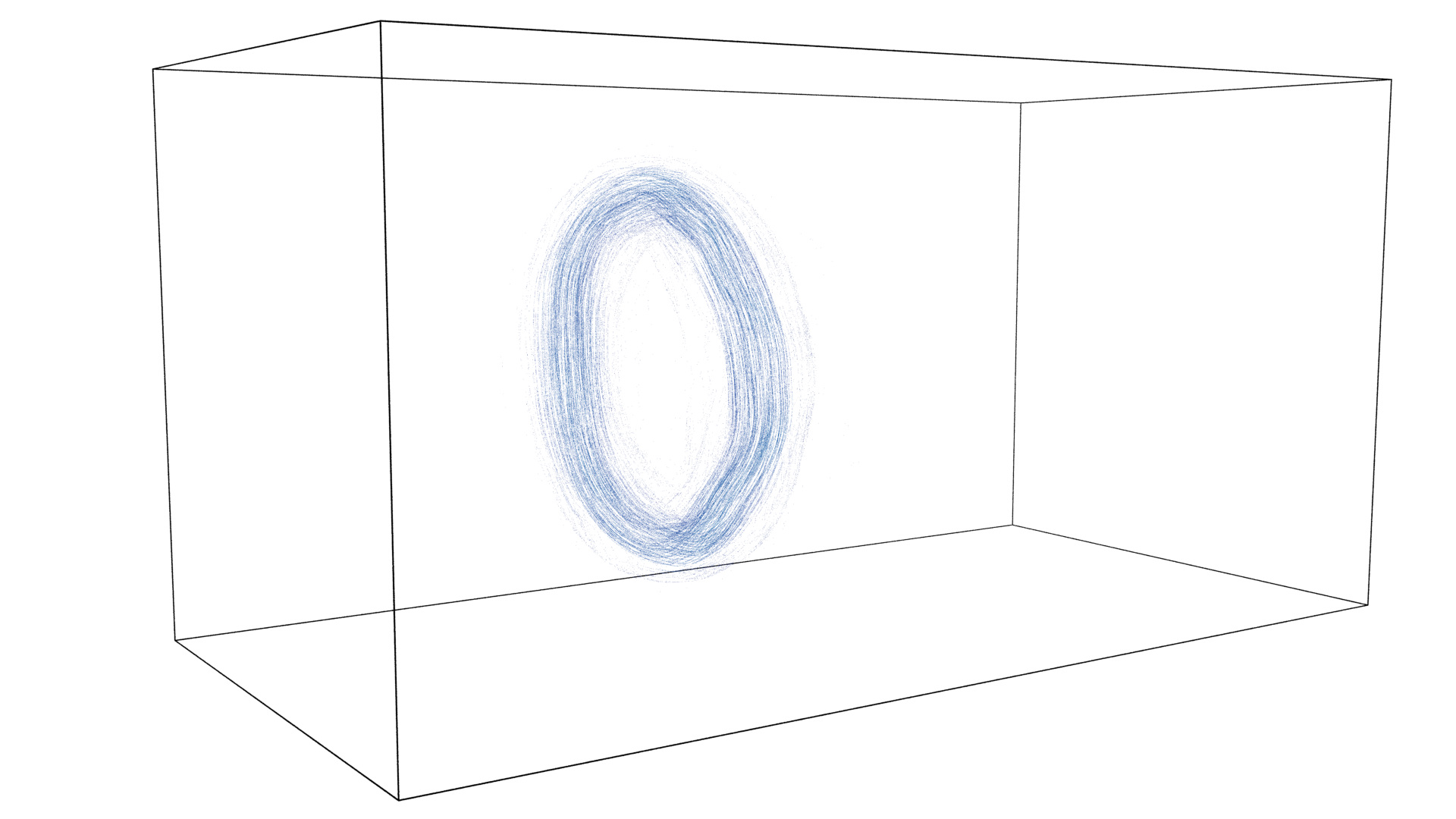}}
    \scalebox{-1}[1]{\includegraphics[trim={50px 0 0 0},clip,width=0.18\linewidth]{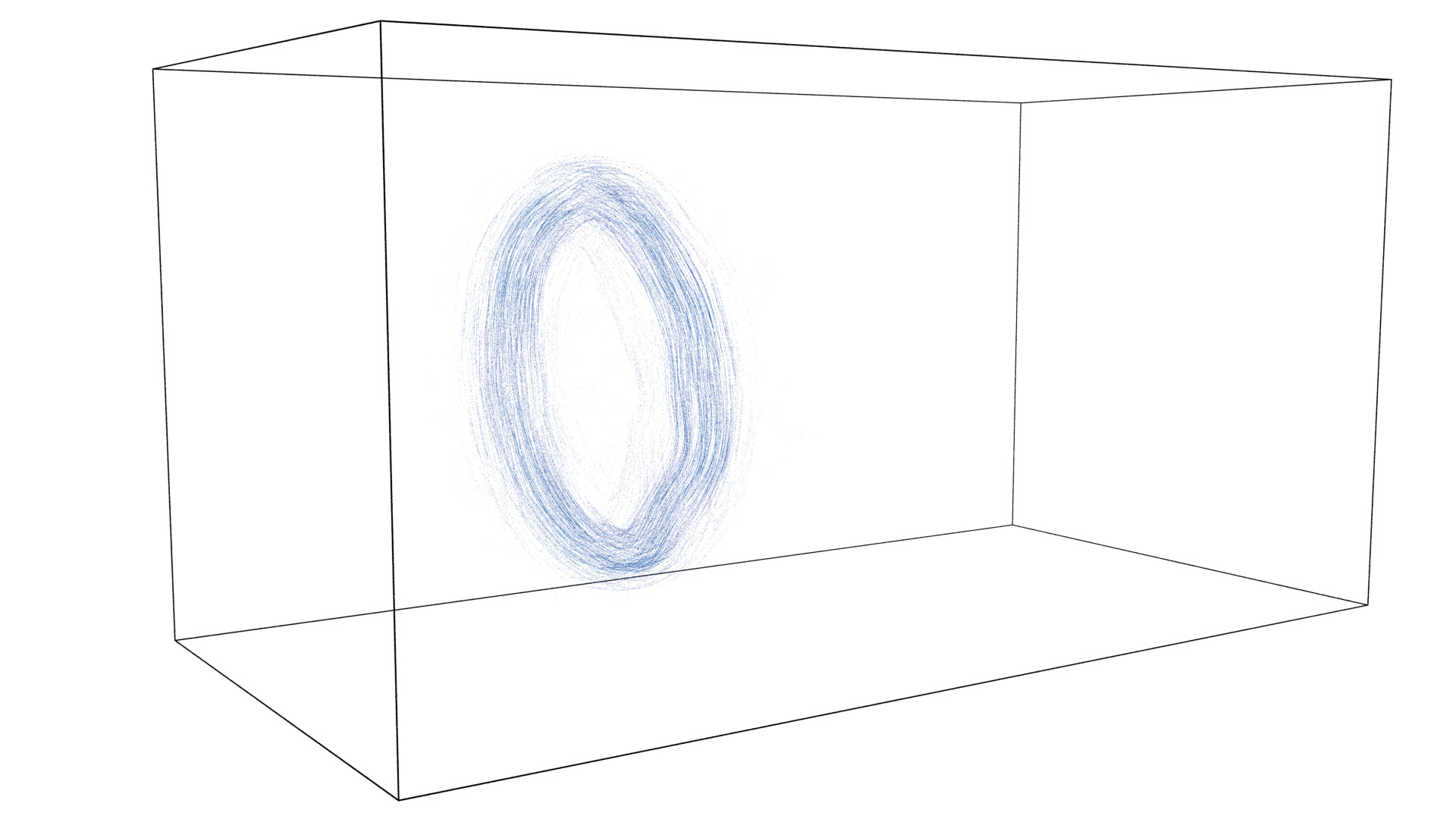}}
    \scalebox{-1}[1]{\includegraphics[trim={50px 0 0 0},clip,width=0.18\linewidth]{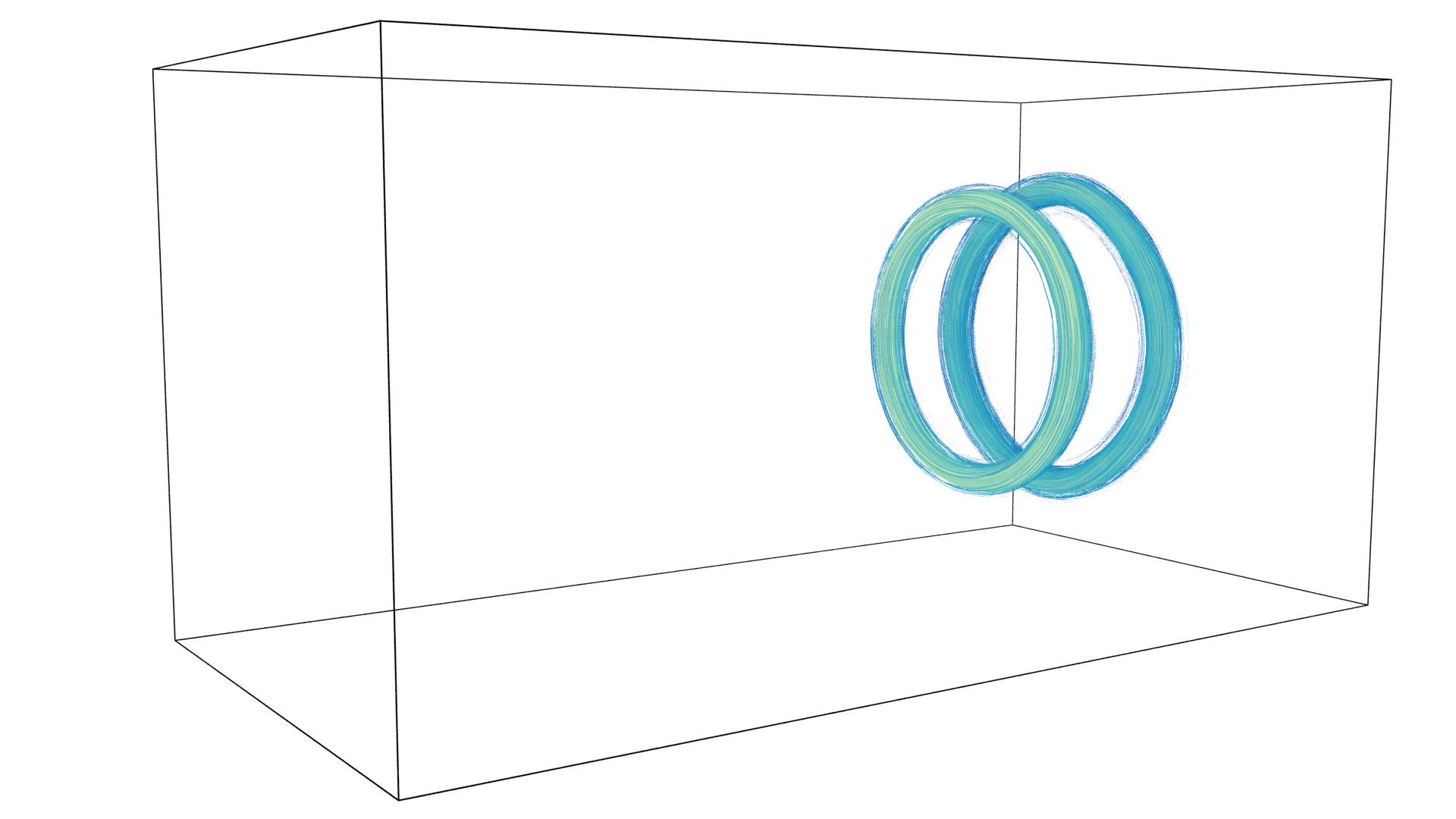}}
    \scalebox{-1}[1]{\includegraphics[trim={50px 0 0 0},clip,width=0.18\linewidth]{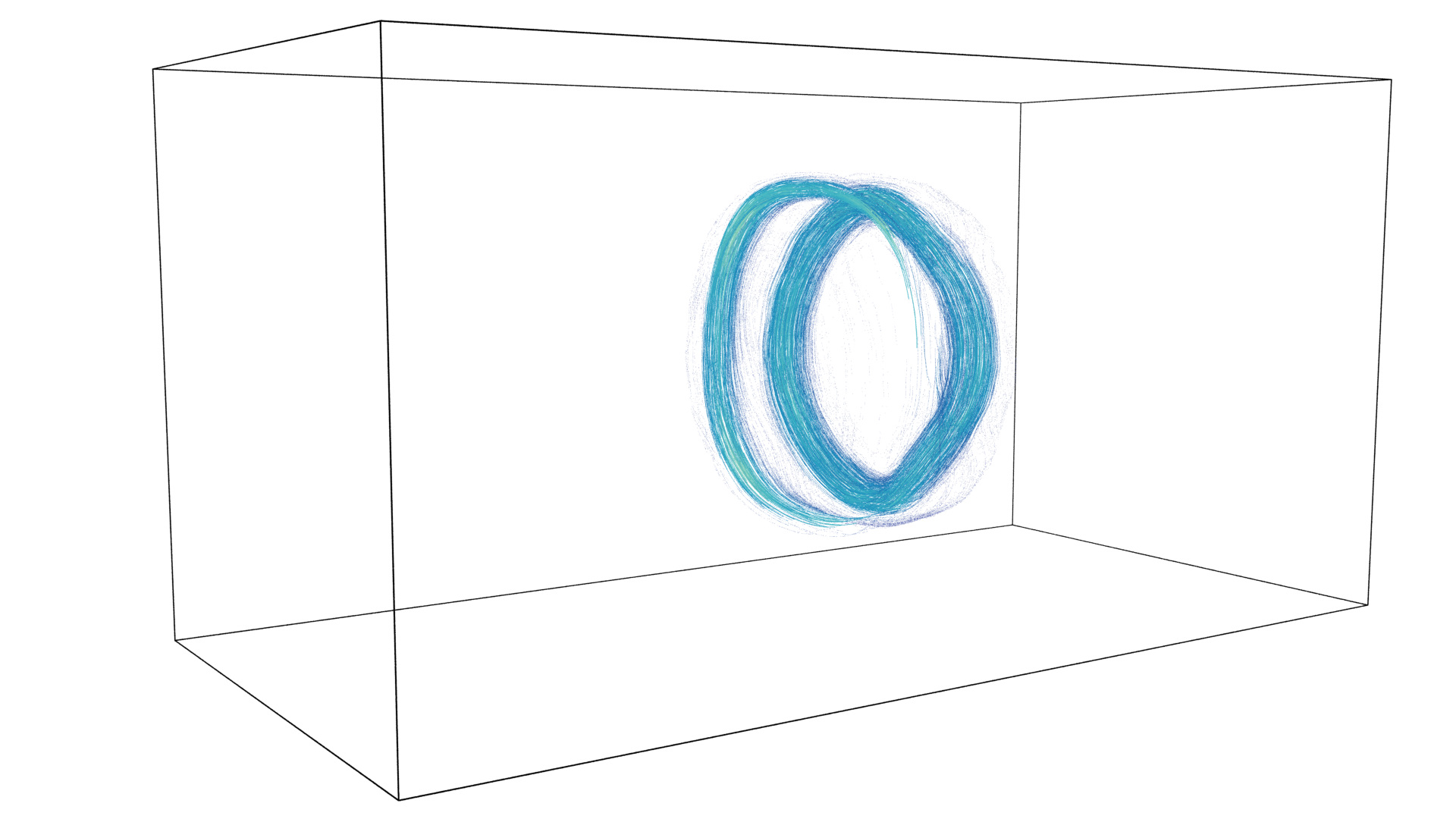}}
    \scalebox{-1}[1]{\includegraphics[trim={50px 0 0 0},clip,width=0.18\linewidth]{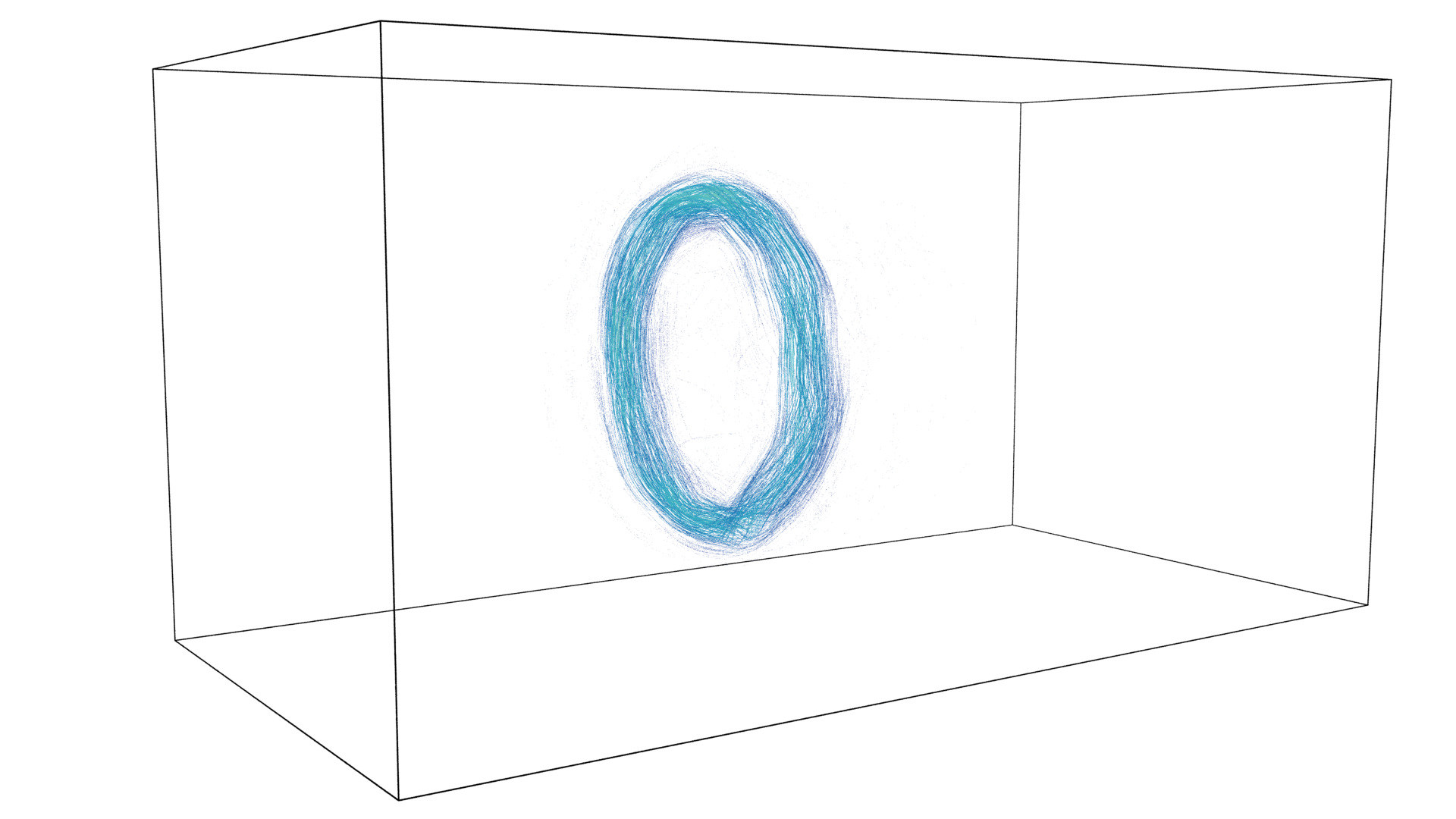}}
    \scalebox{-1}[1]{\includegraphics[trim={50px 0 0 0},clip,width=0.18\linewidth]{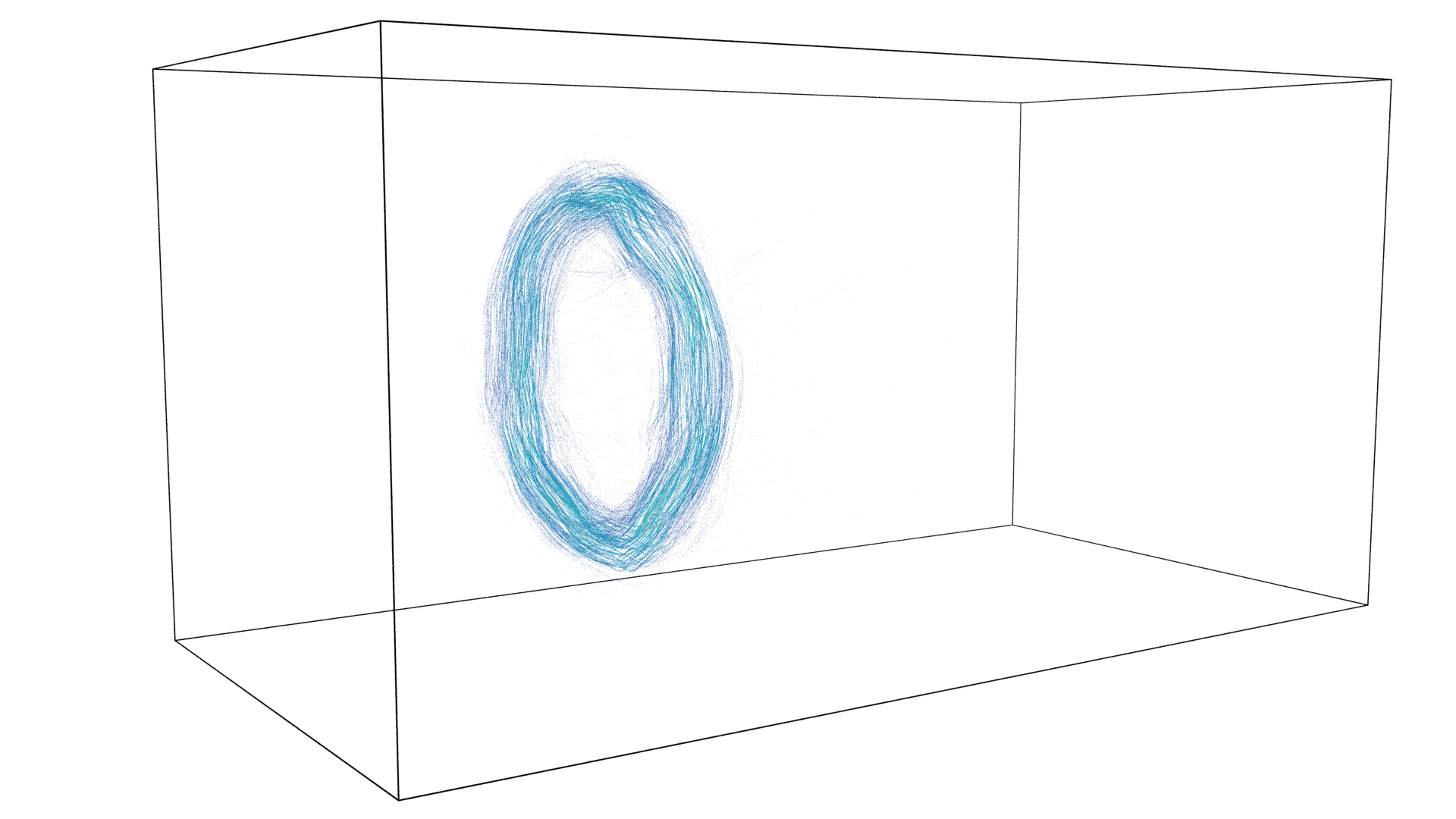}}
    \scalebox{-1}[1]{\includegraphics[trim={50px 0 0 0},clip,width=0.18\linewidth]{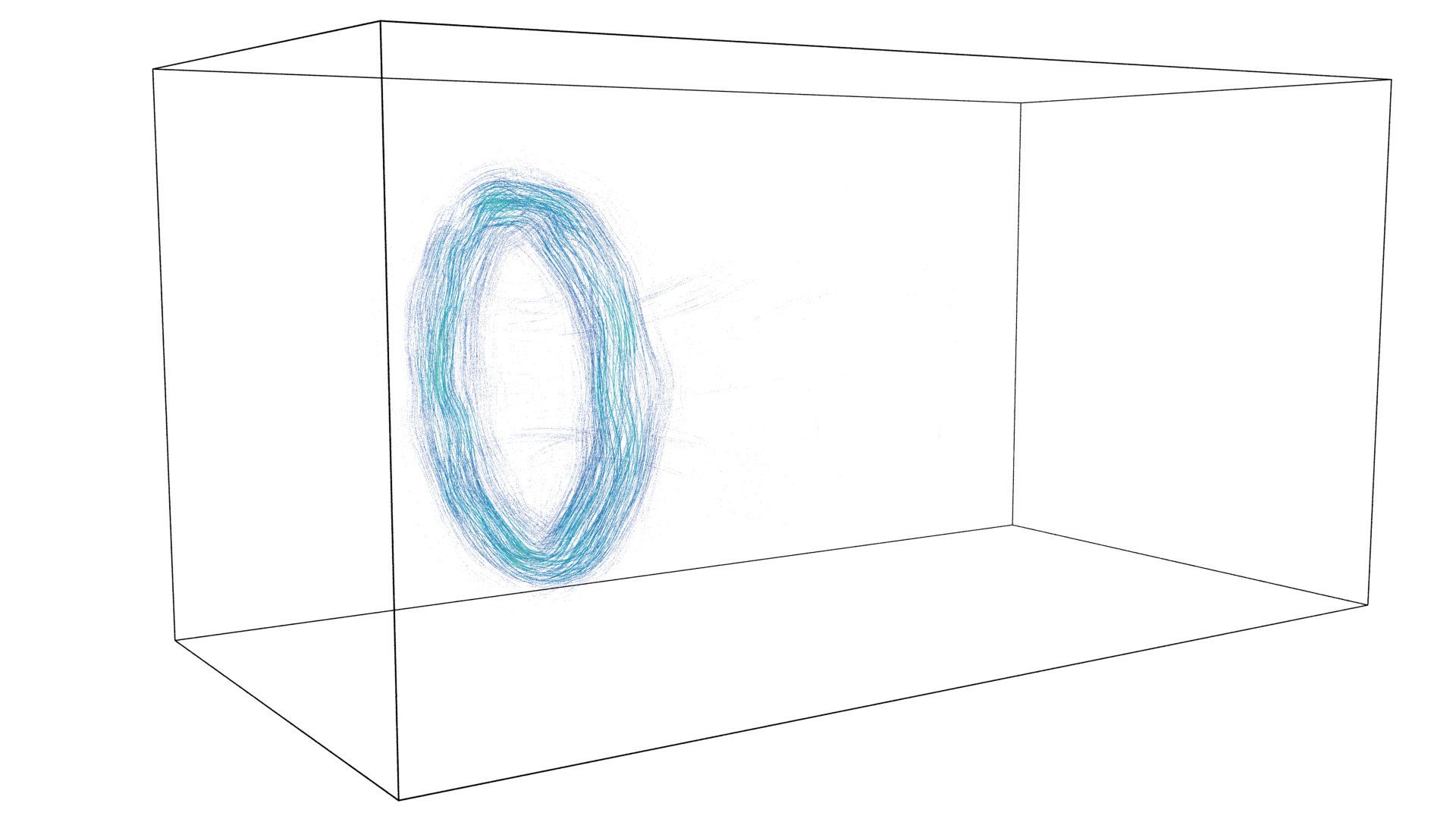}}
    \scalebox{-1}[1]{\includegraphics[trim={50px 0 0 0},clip,width=0.18\linewidth]{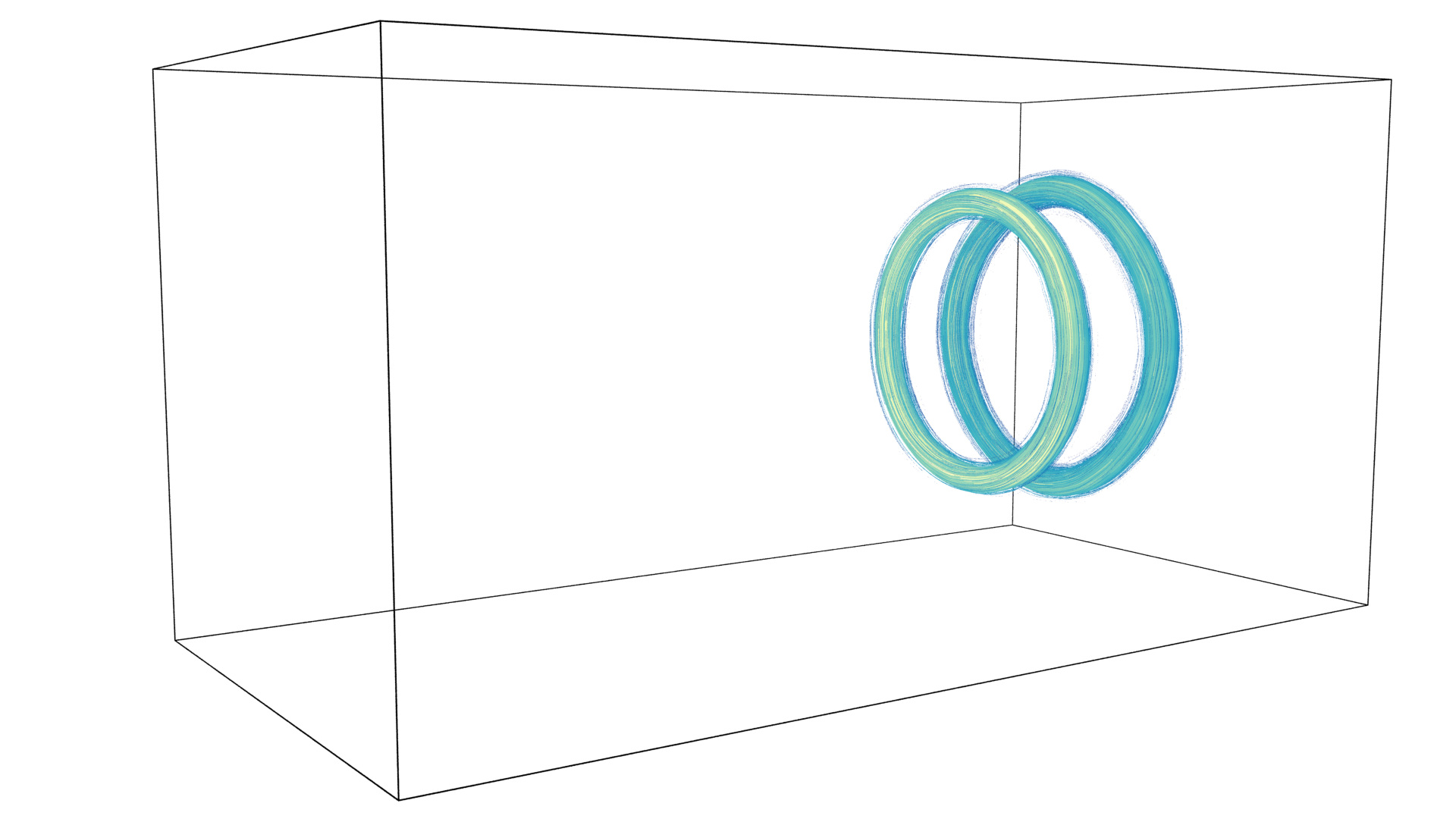}}
    \scalebox{-1}[1]{\includegraphics[trim={50px 0 0 0},clip,width=0.18\linewidth]{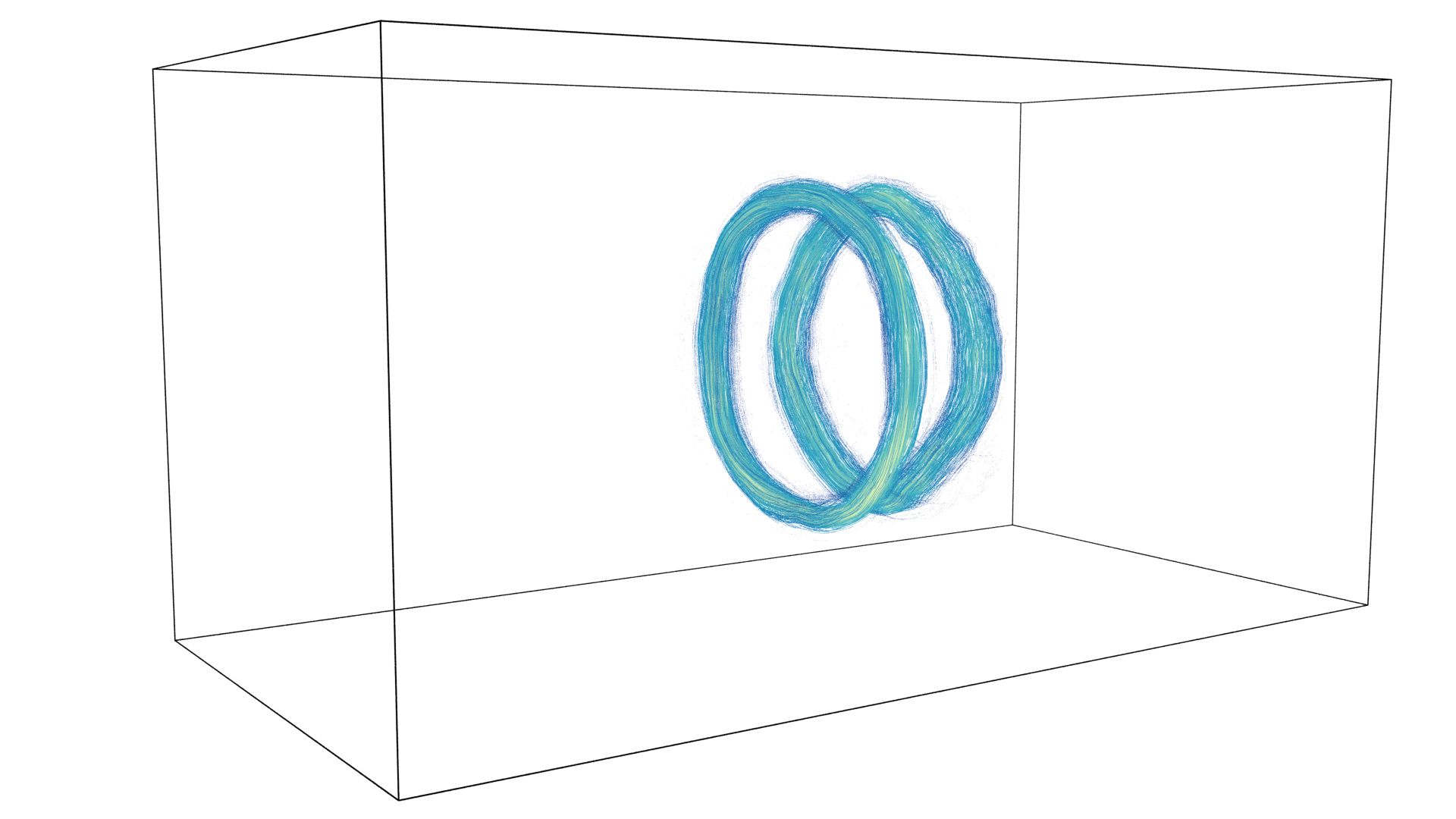}}
    \scalebox{-1}[1]{\includegraphics[trim={50px 0 0 0},clip,width=0.18\linewidth]{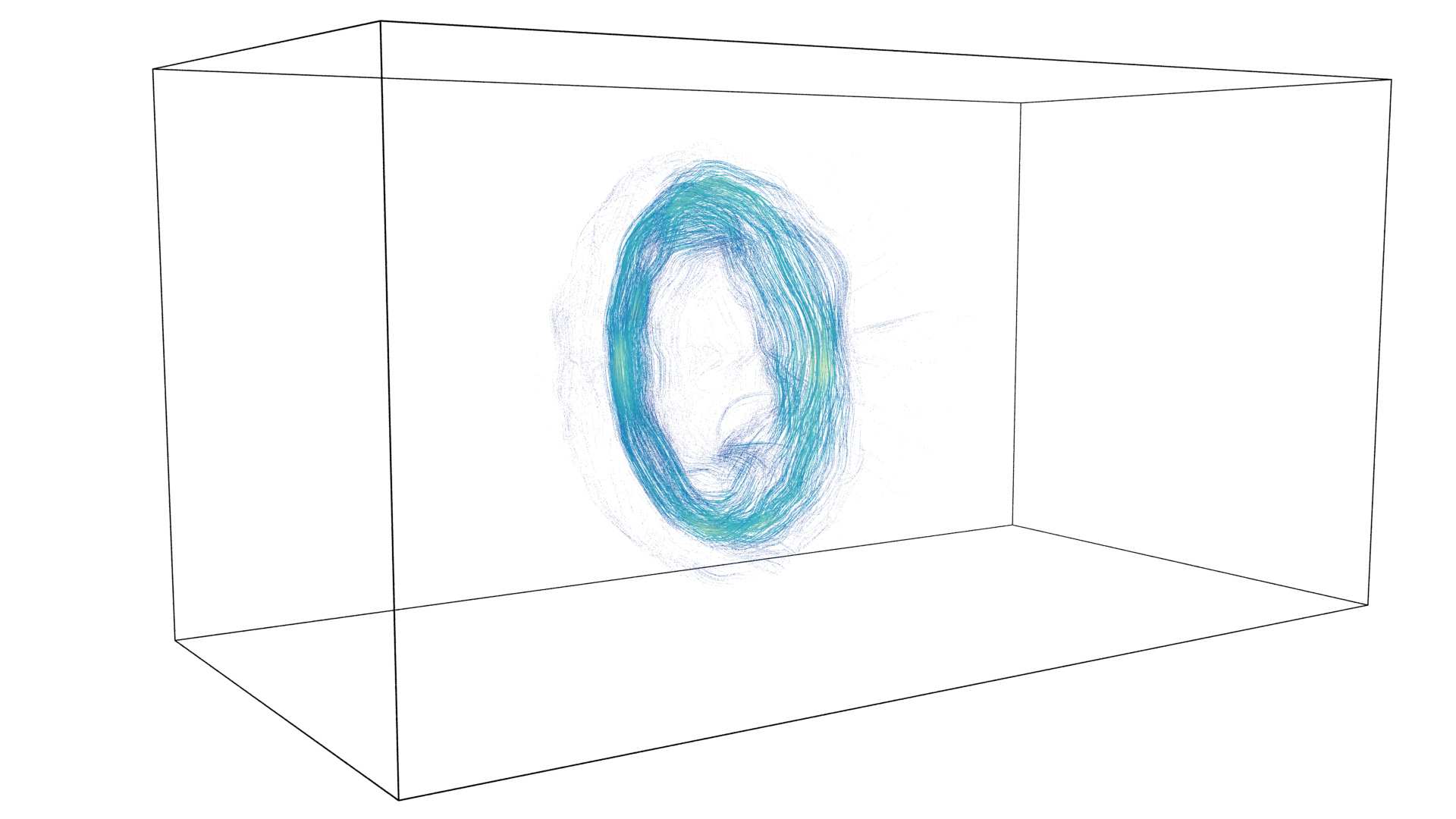}}
    \scalebox{-1}[1]{\includegraphics[trim={50px 0 0 0},clip,width=0.18\linewidth]{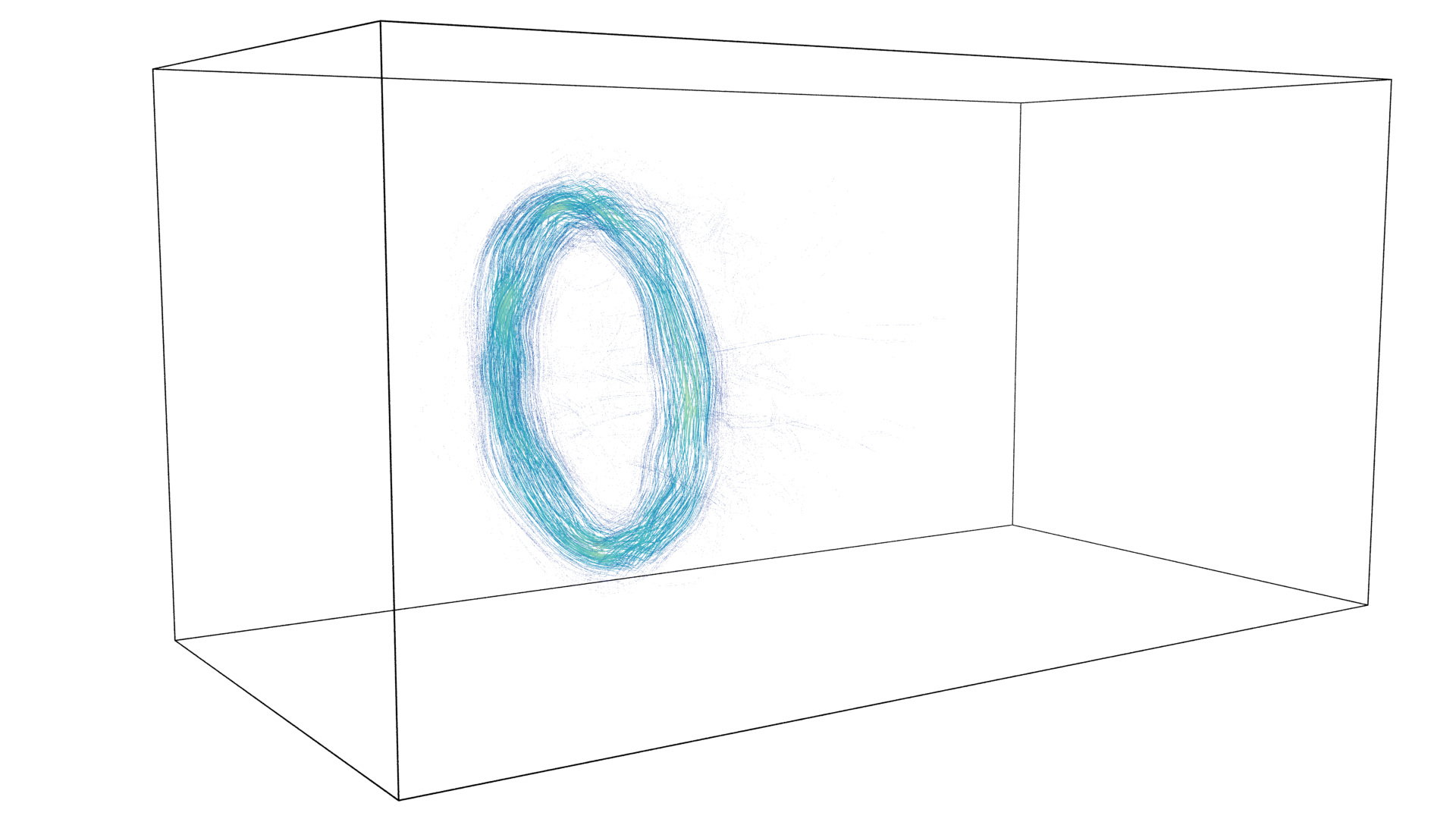}}
    \scalebox{-1}[1]{\includegraphics[trim={50px 0 0 0},clip,width=0.18\linewidth]{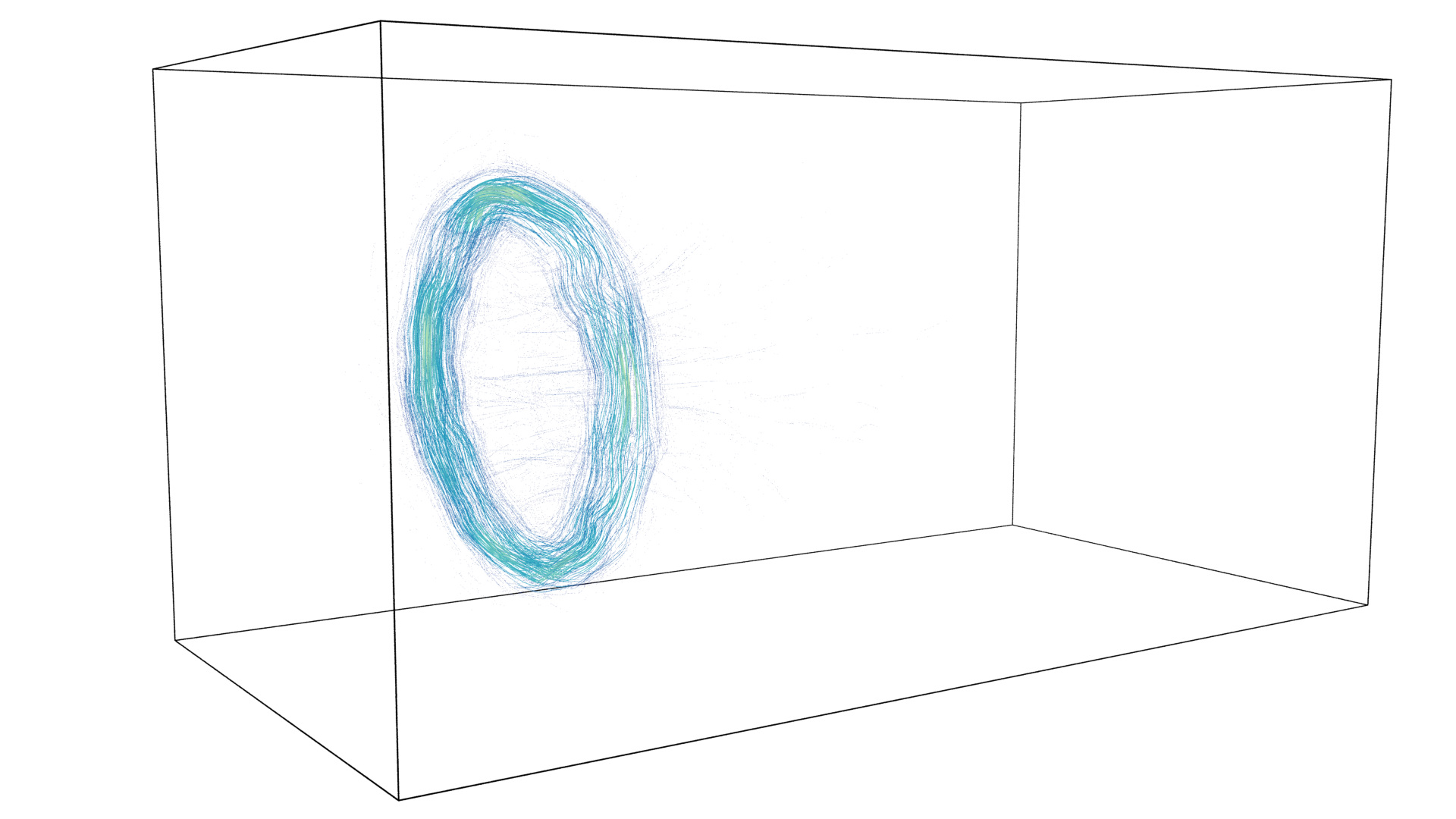}}
    \scalebox{-1}[1]{\includegraphics[trim={50px 0 0 0},clip,width=0.18\linewidth]{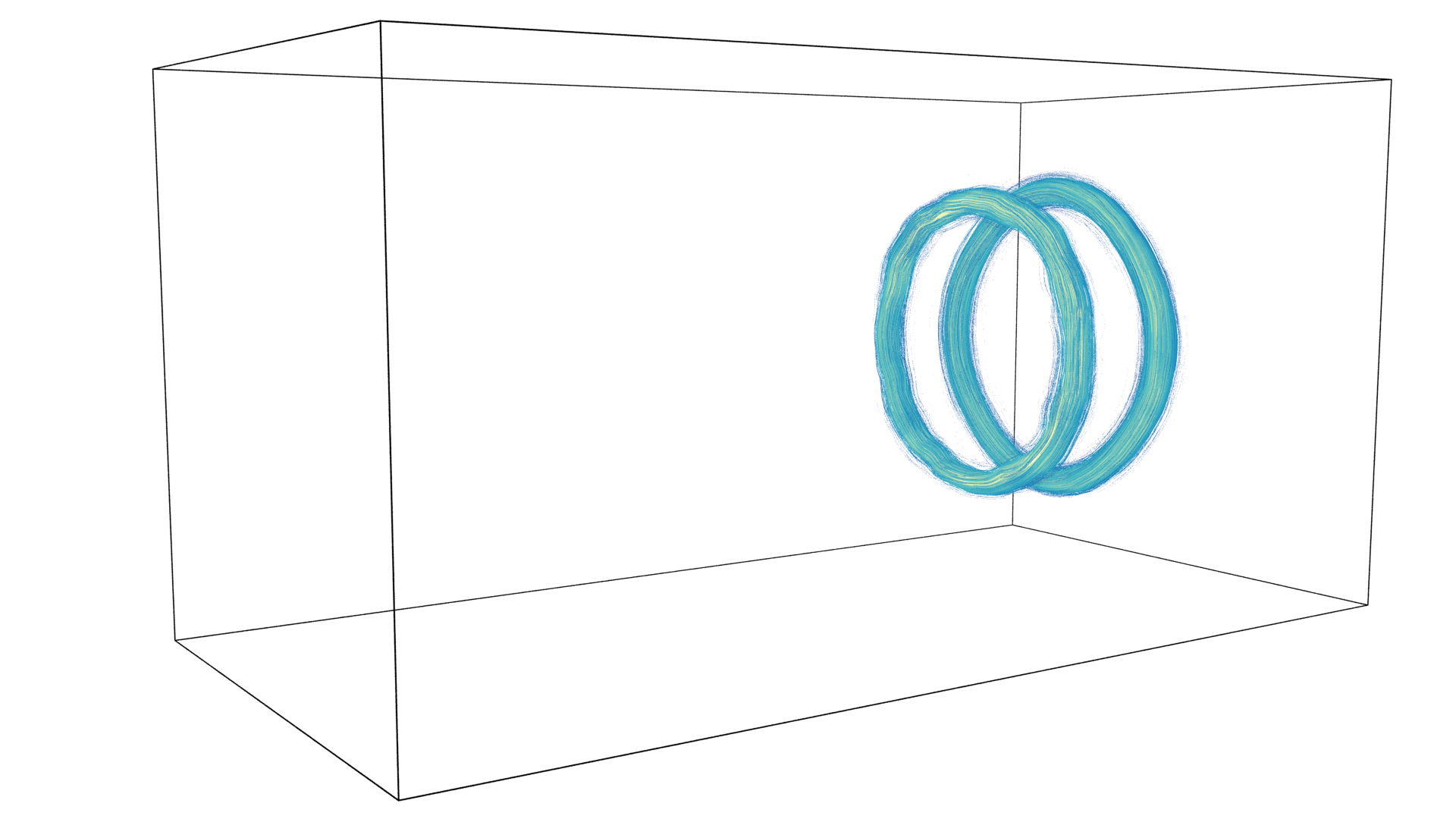}}
    \scalebox{-1}[1]{\includegraphics[trim={50px 0 0 0},clip,width=0.18\linewidth]{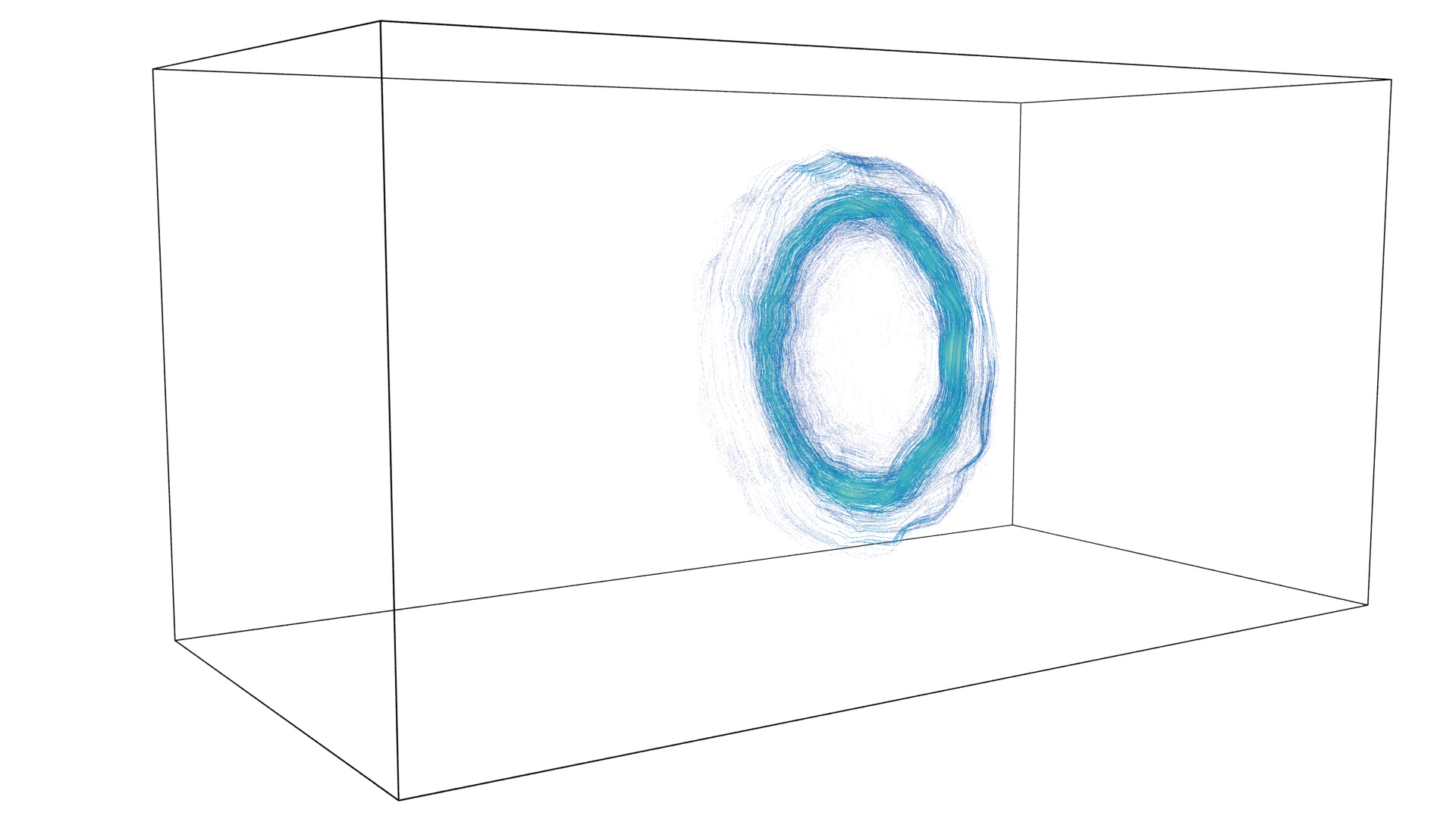}}
    \scalebox{-1}[1]{\includegraphics[trim={50px 0 0 0},clip,width=0.18\linewidth]{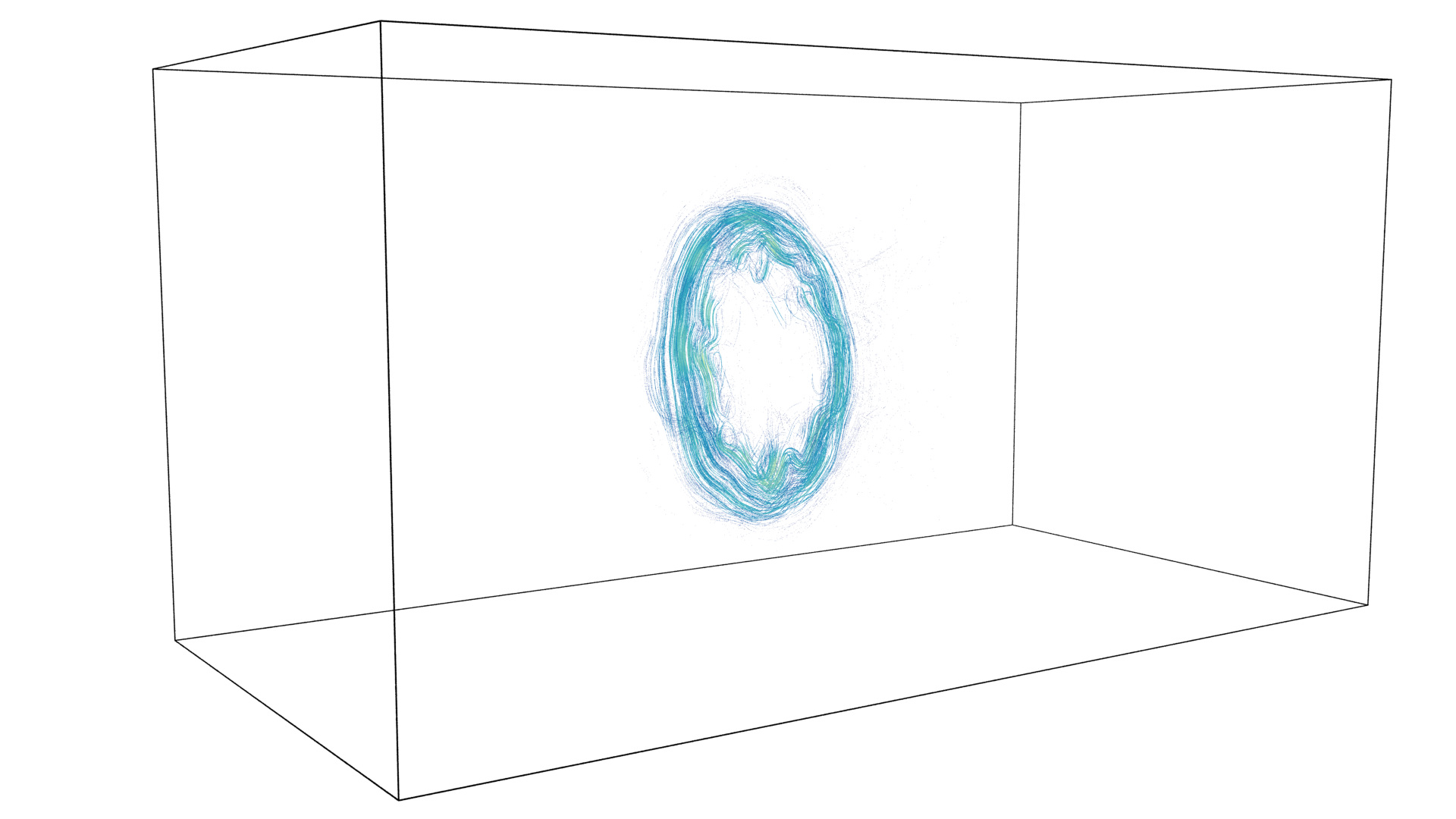}}
    \scalebox{-1}[1]{\includegraphics[trim={50px 0 0 0},clip,width=0.18\linewidth]{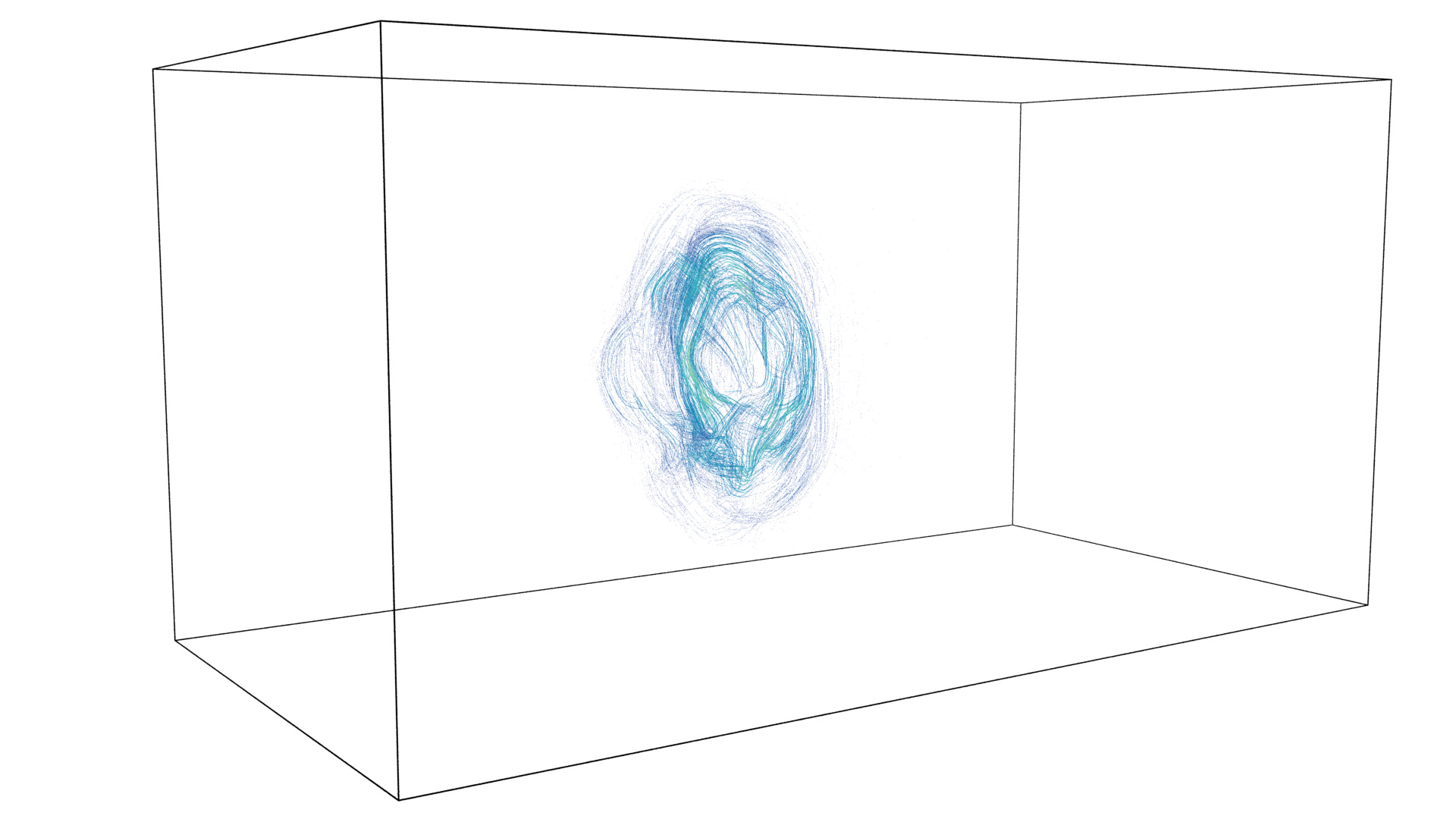}}
    \scalebox{-1}[1]{\includegraphics[trim={50px 0 0 0},clip,width=0.18\linewidth]{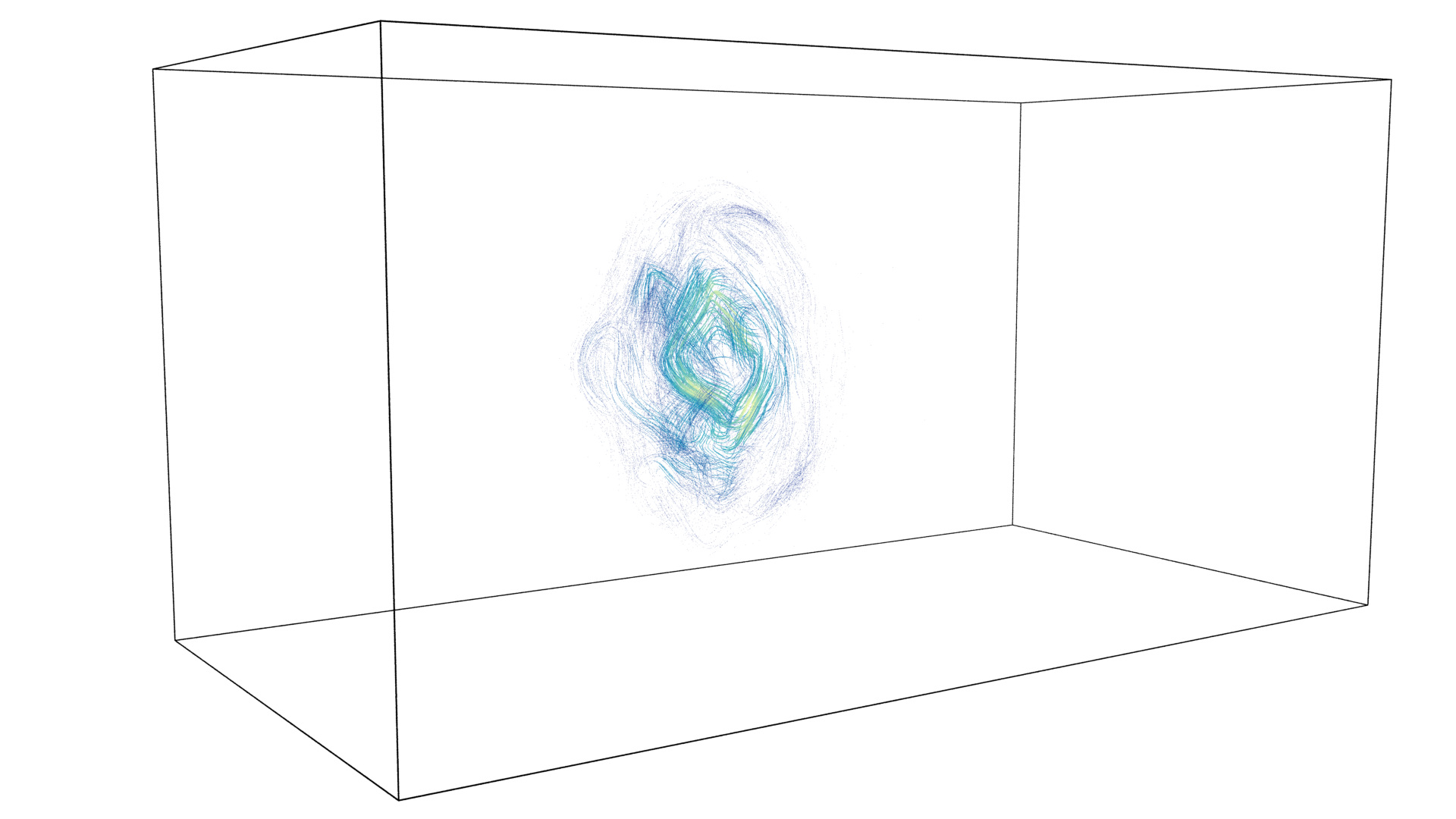}}
    \scalebox{-1}[1]{\includegraphics[trim={50px 0 0 0},clip,width=0.18\linewidth]{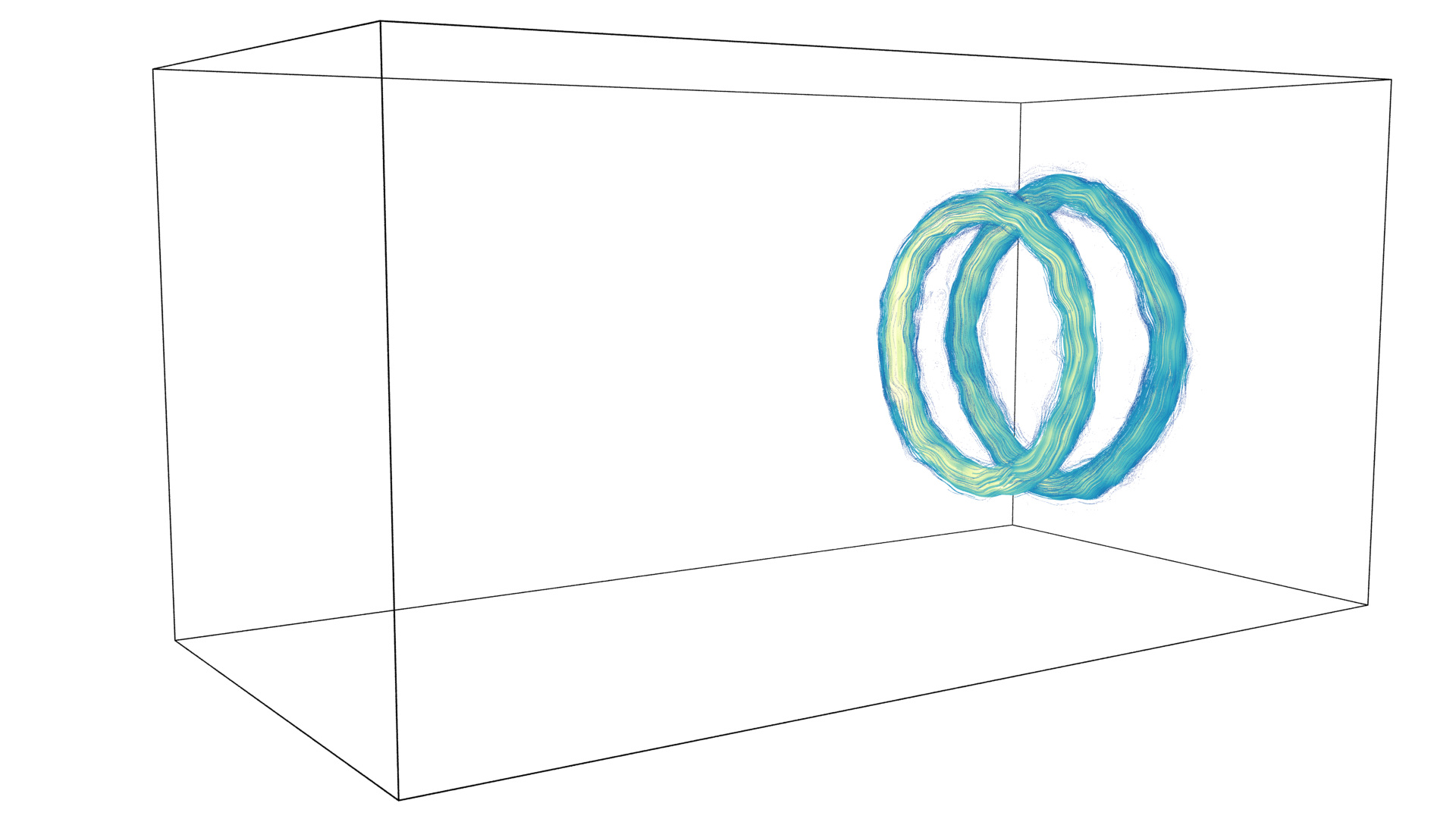}}
    \scalebox{-1}[1]{\includegraphics[trim={50px 0 0 0},clip,width=0.18\linewidth]{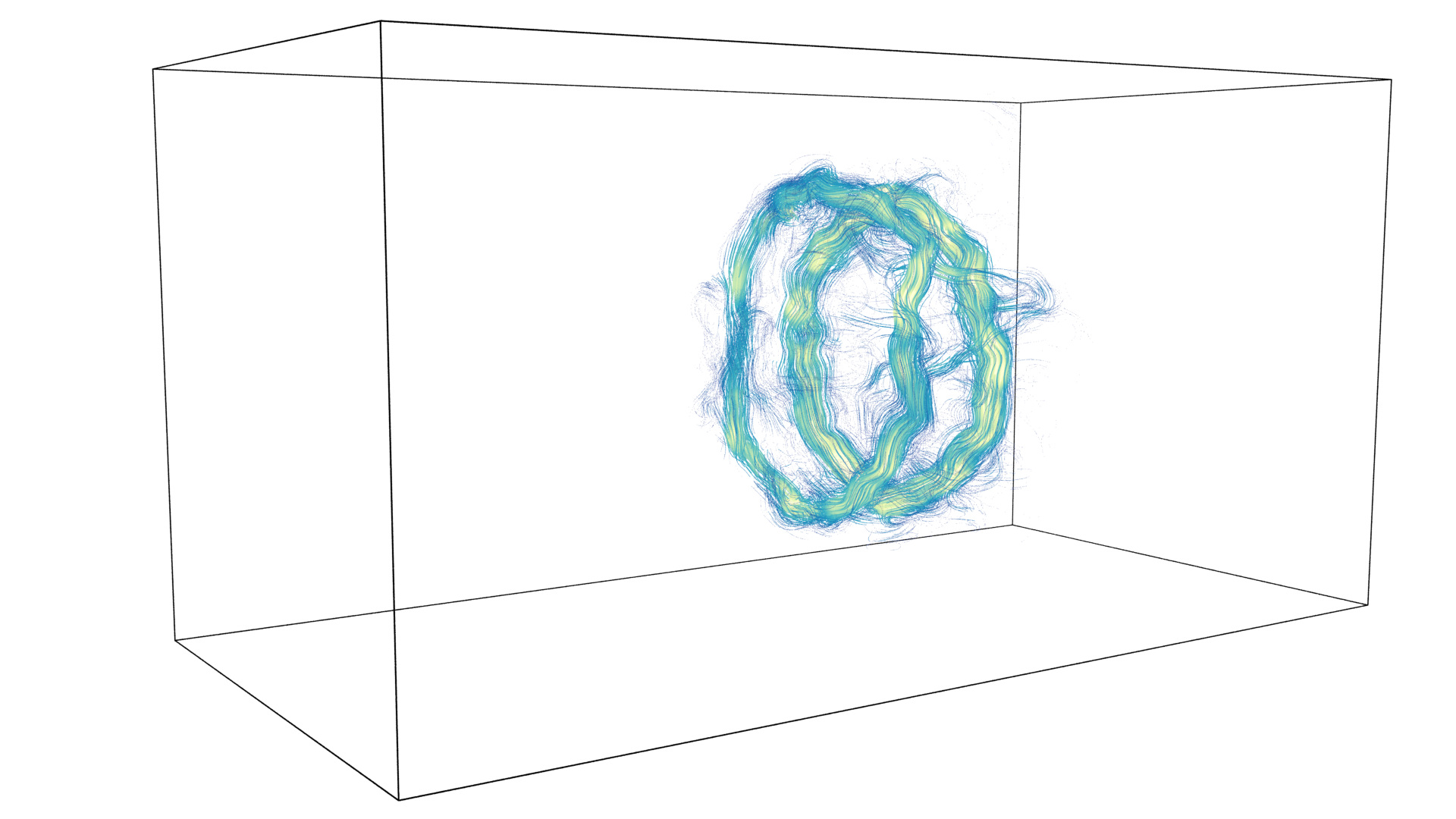}}
    \scalebox{-1}[1]{\includegraphics[trim={50px 0 0 0},clip,width=0.18\linewidth]{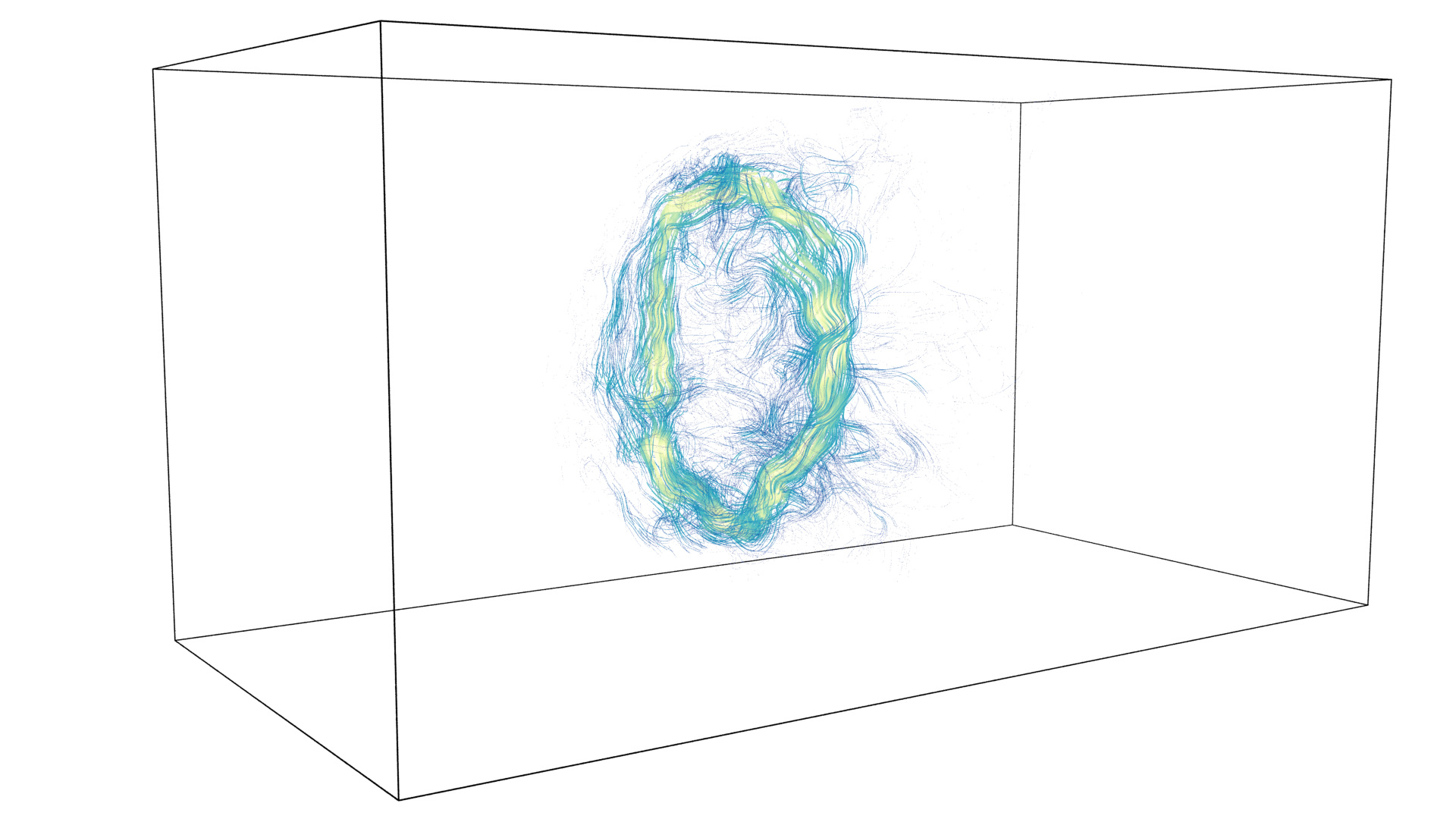}}
    \scalebox{-1}[1]{\includegraphics[trim={50px 0 0 0},clip,width=0.18\linewidth]{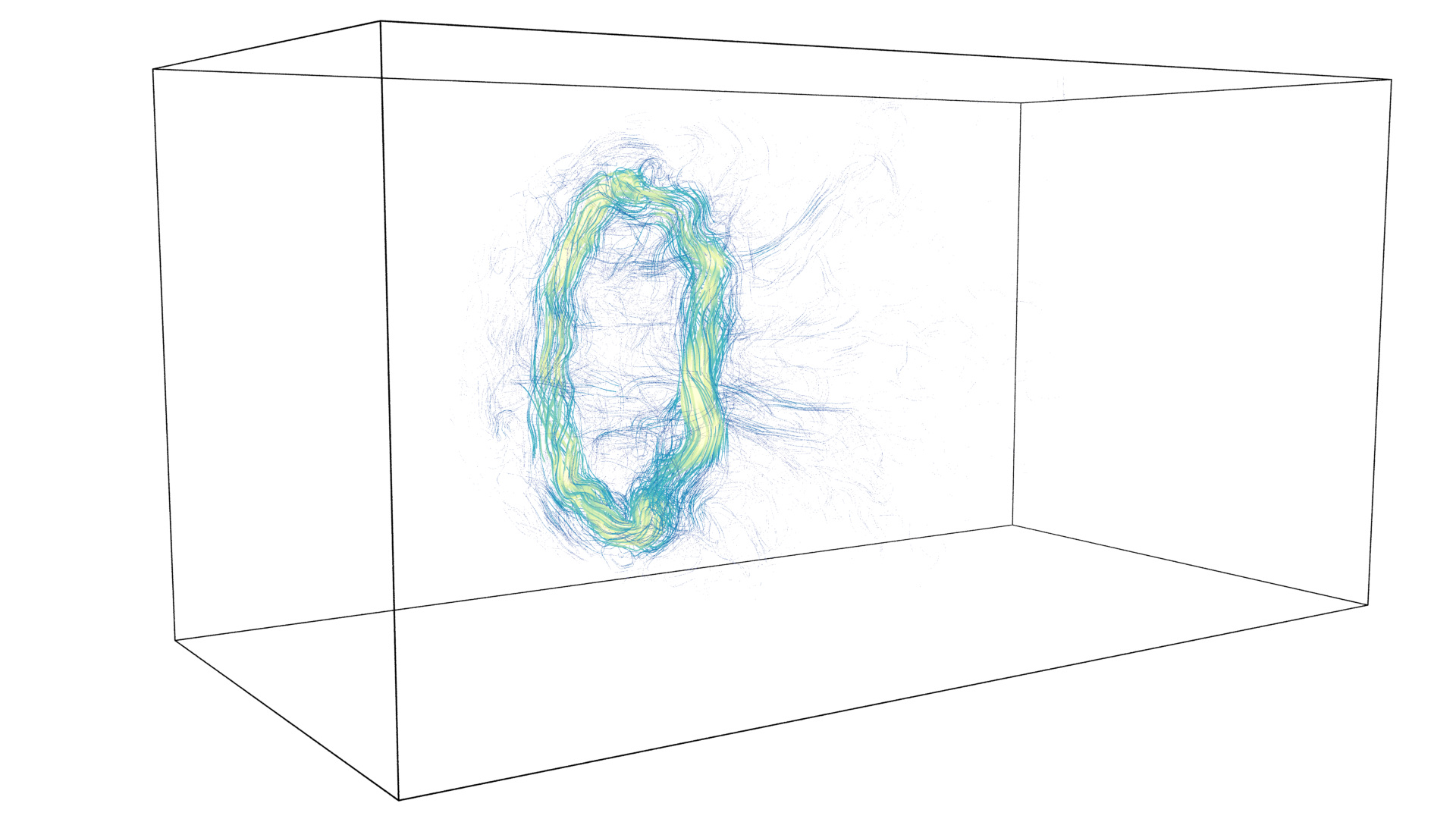}}
    \scalebox{-1}[1]{\includegraphics[trim={50px 0 0 0},clip,width=0.18\linewidth]{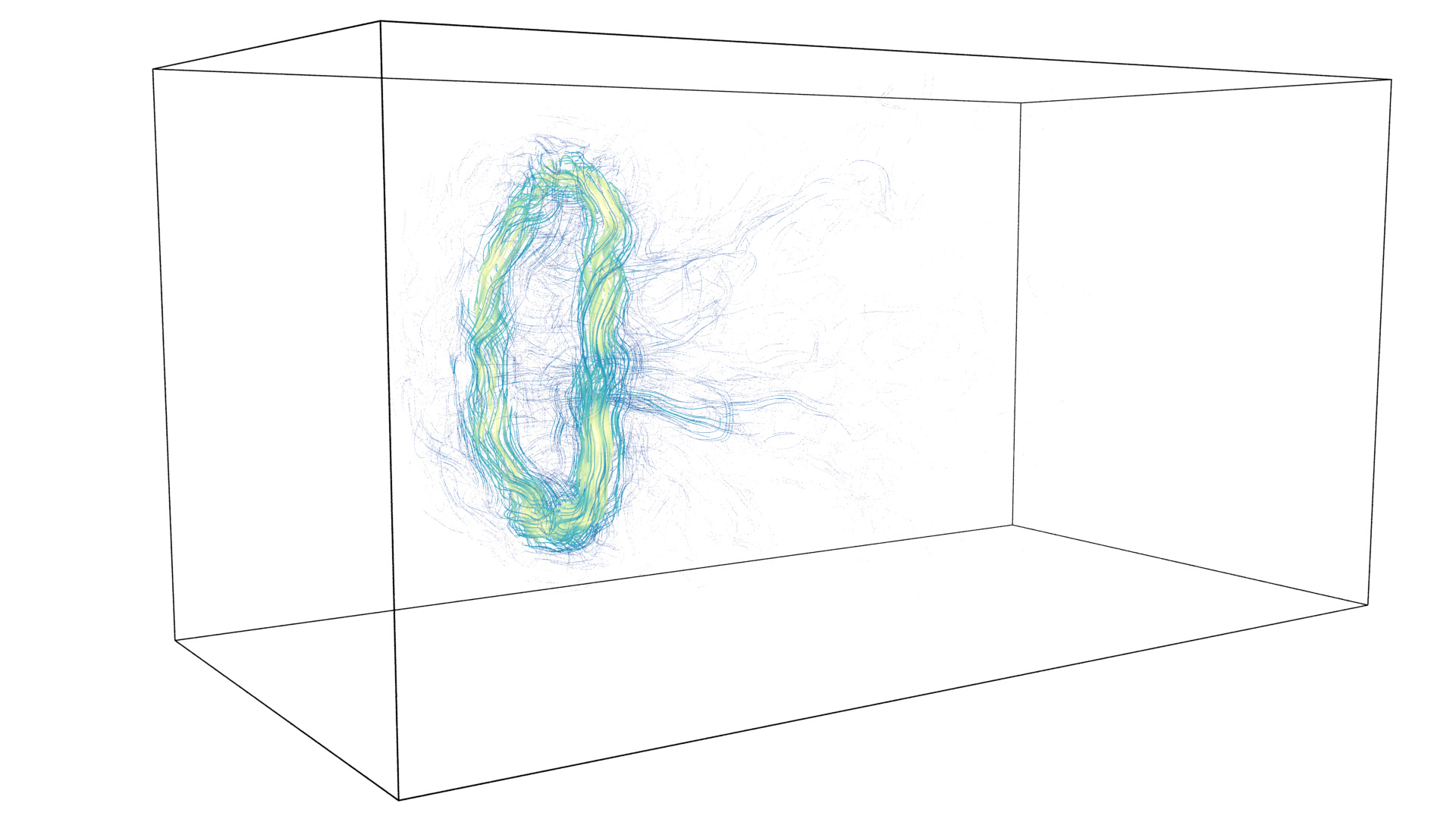}}
    \scalebox{-1}[1]{\includegraphics[trim={50px 0 0 0},clip,width=0.18\linewidth]{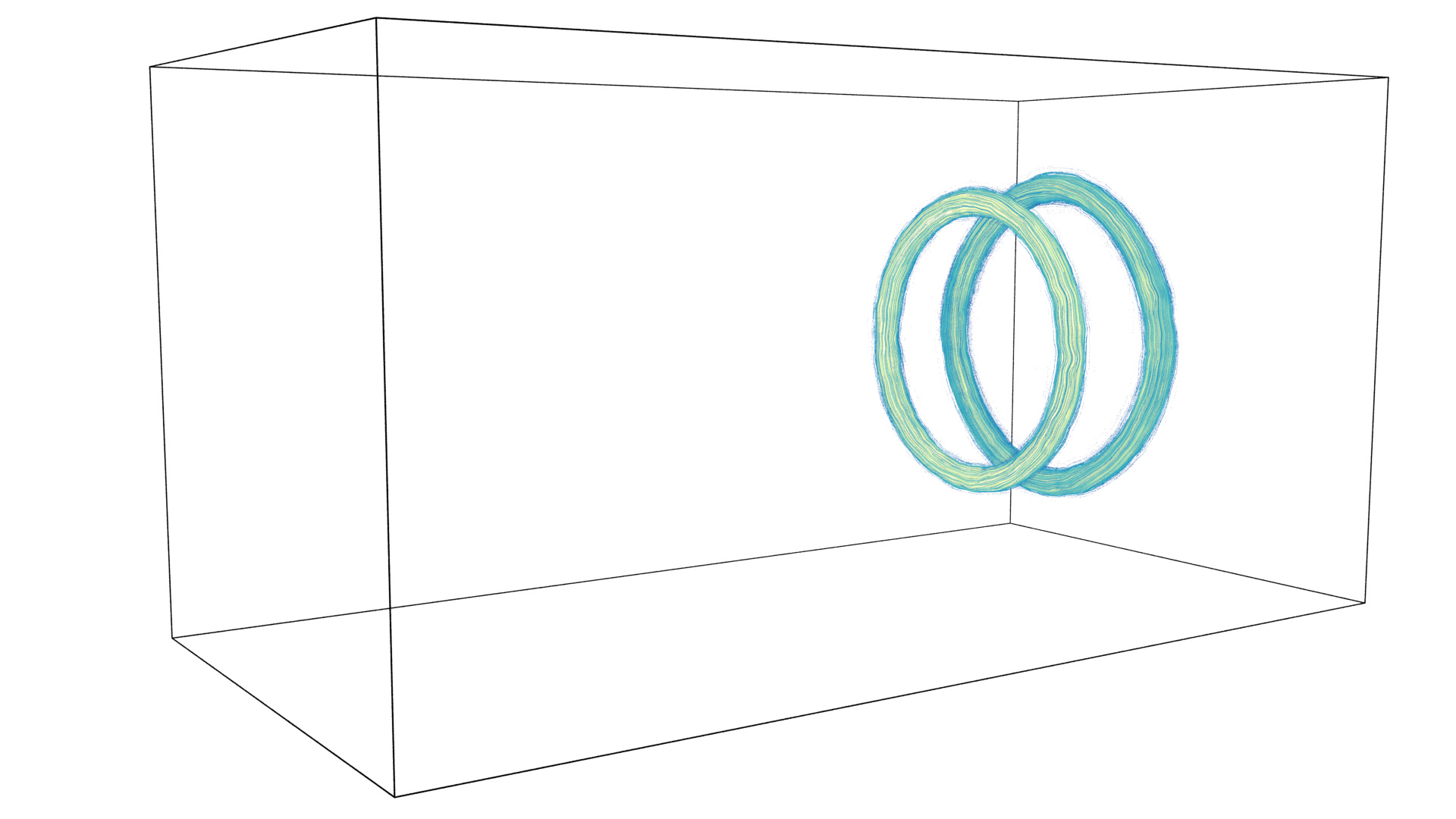}}
    \scalebox{-1}[1]{\includegraphics[trim={50px 0 0 0},clip,width=0.18\linewidth]{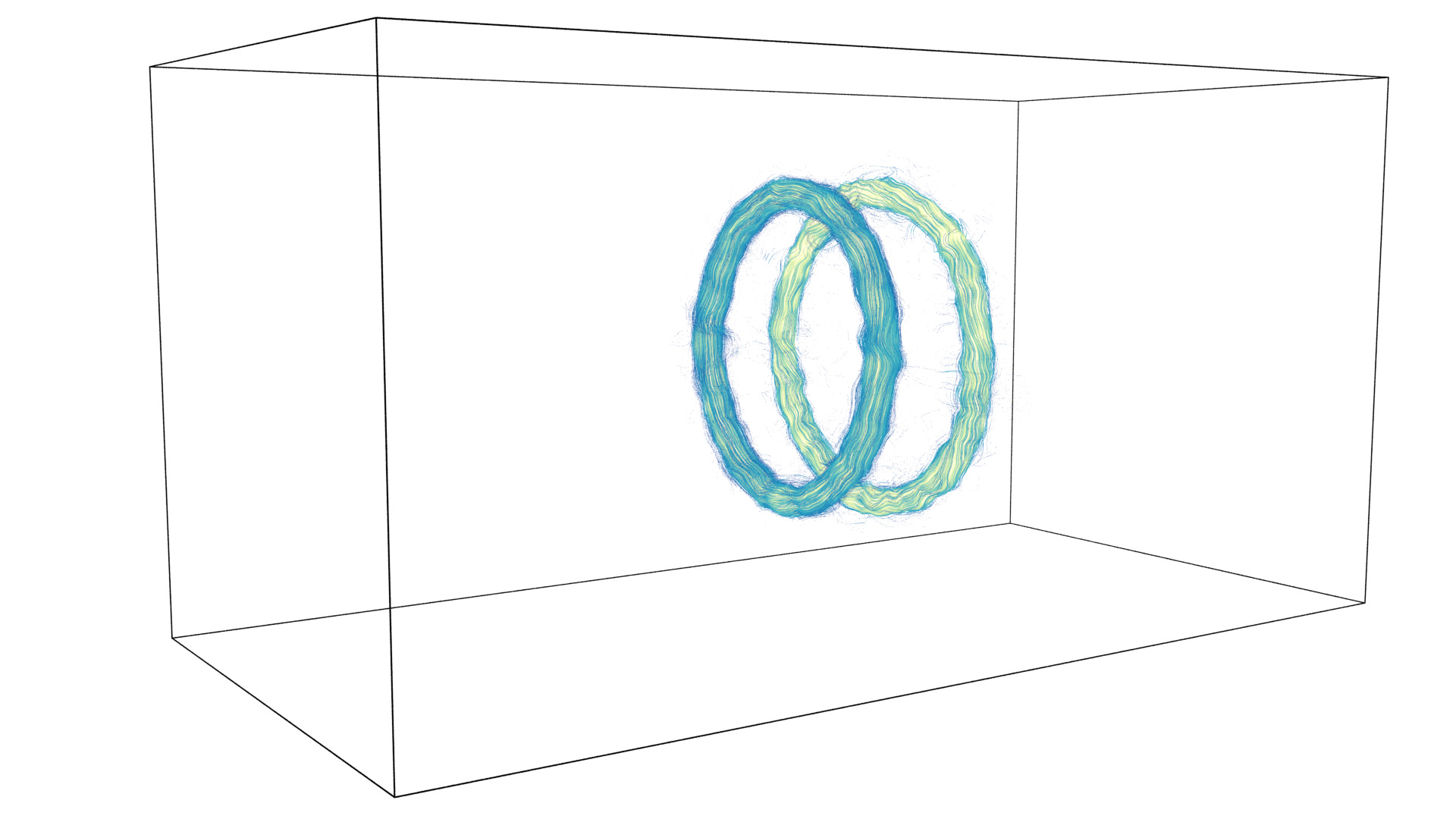}}
    \scalebox{-1}[1]{\includegraphics[trim={50px 0 0 0},clip,width=0.18\linewidth]{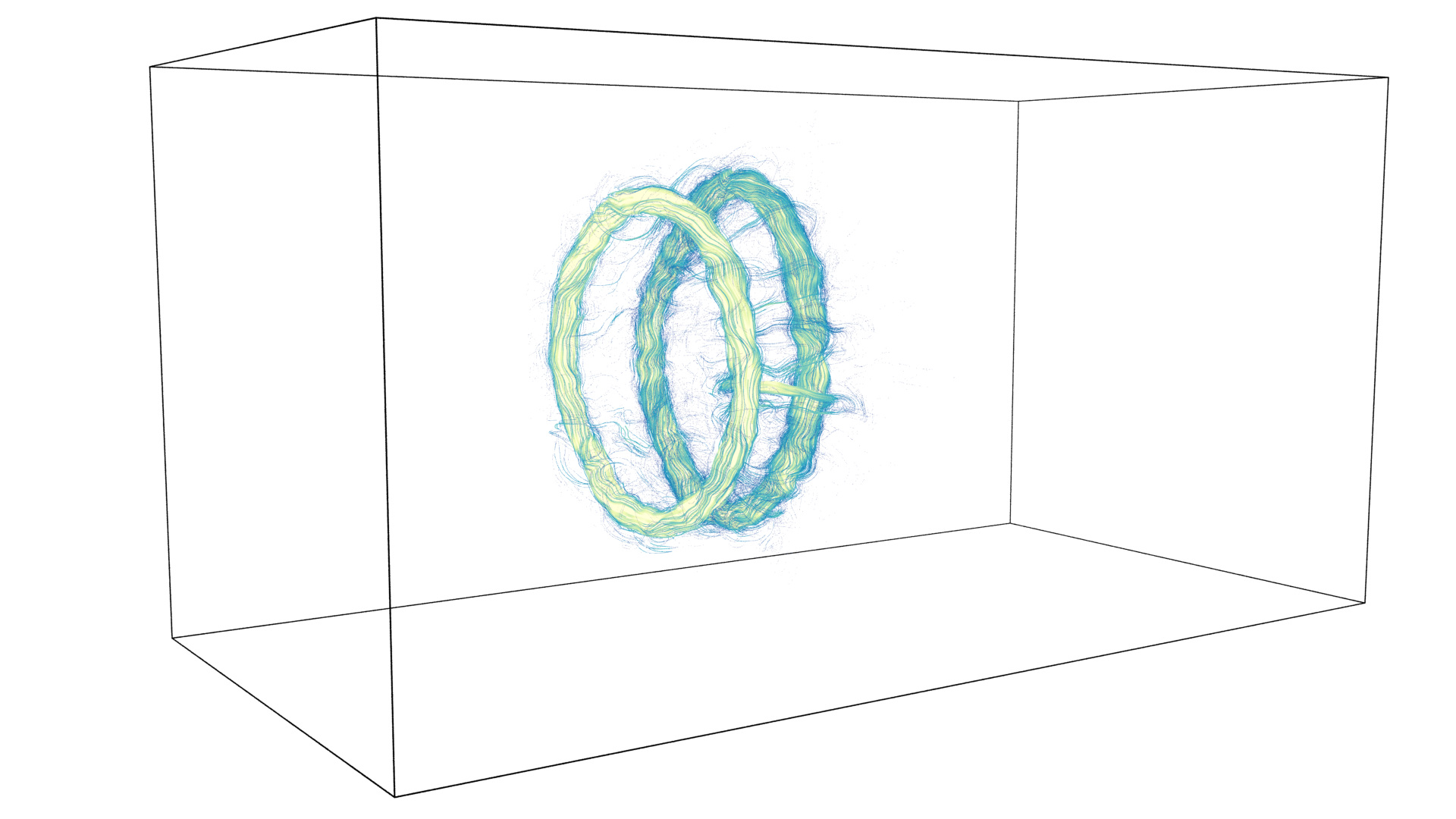}}
    \scalebox{-1}[1]{\includegraphics[trim={50px 0 0 0},clip,width=0.18\linewidth]{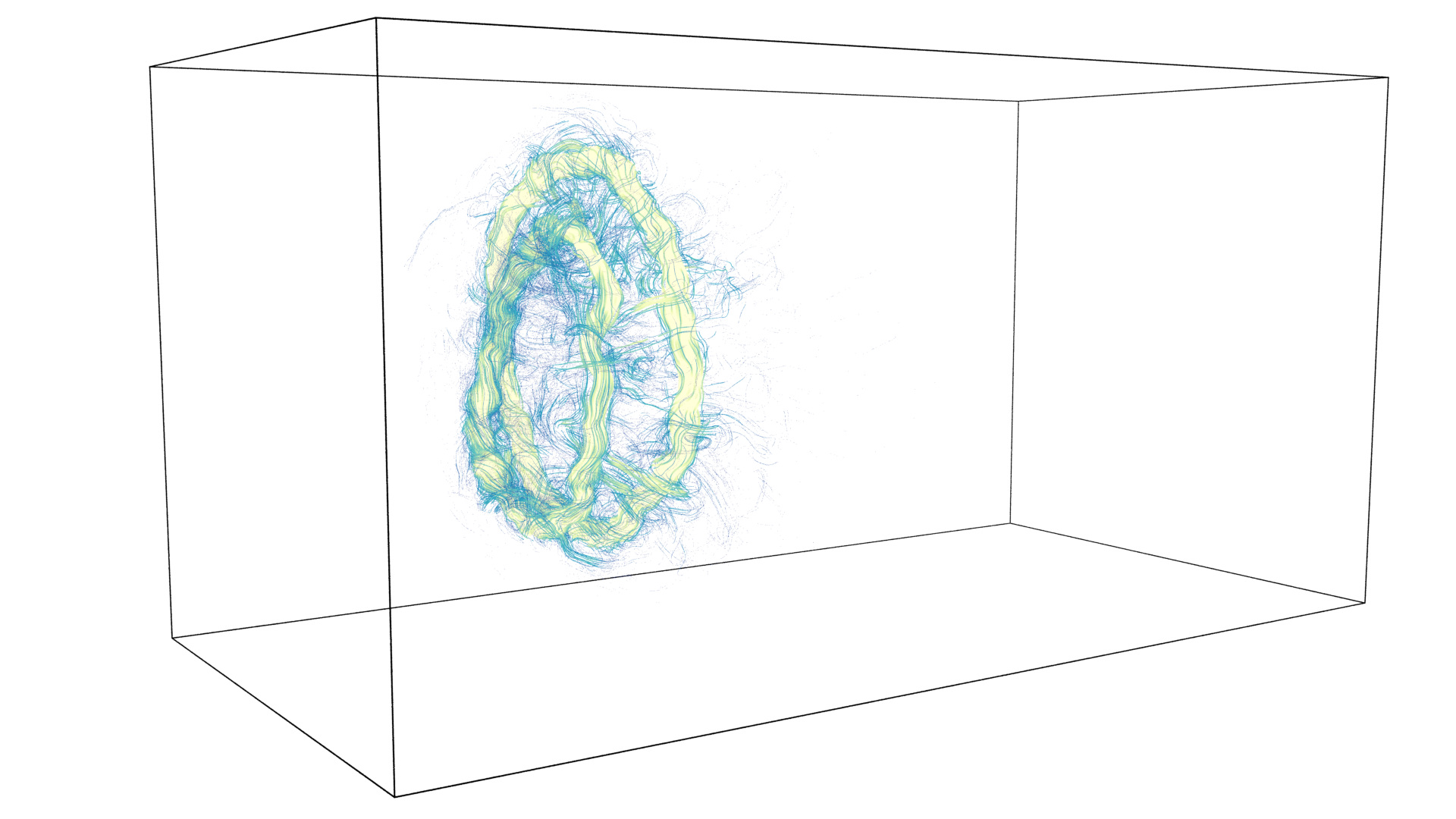}}
    \scalebox{-1}[1]{\includegraphics[trim={50px 0 0 0},clip,width=0.18\linewidth]{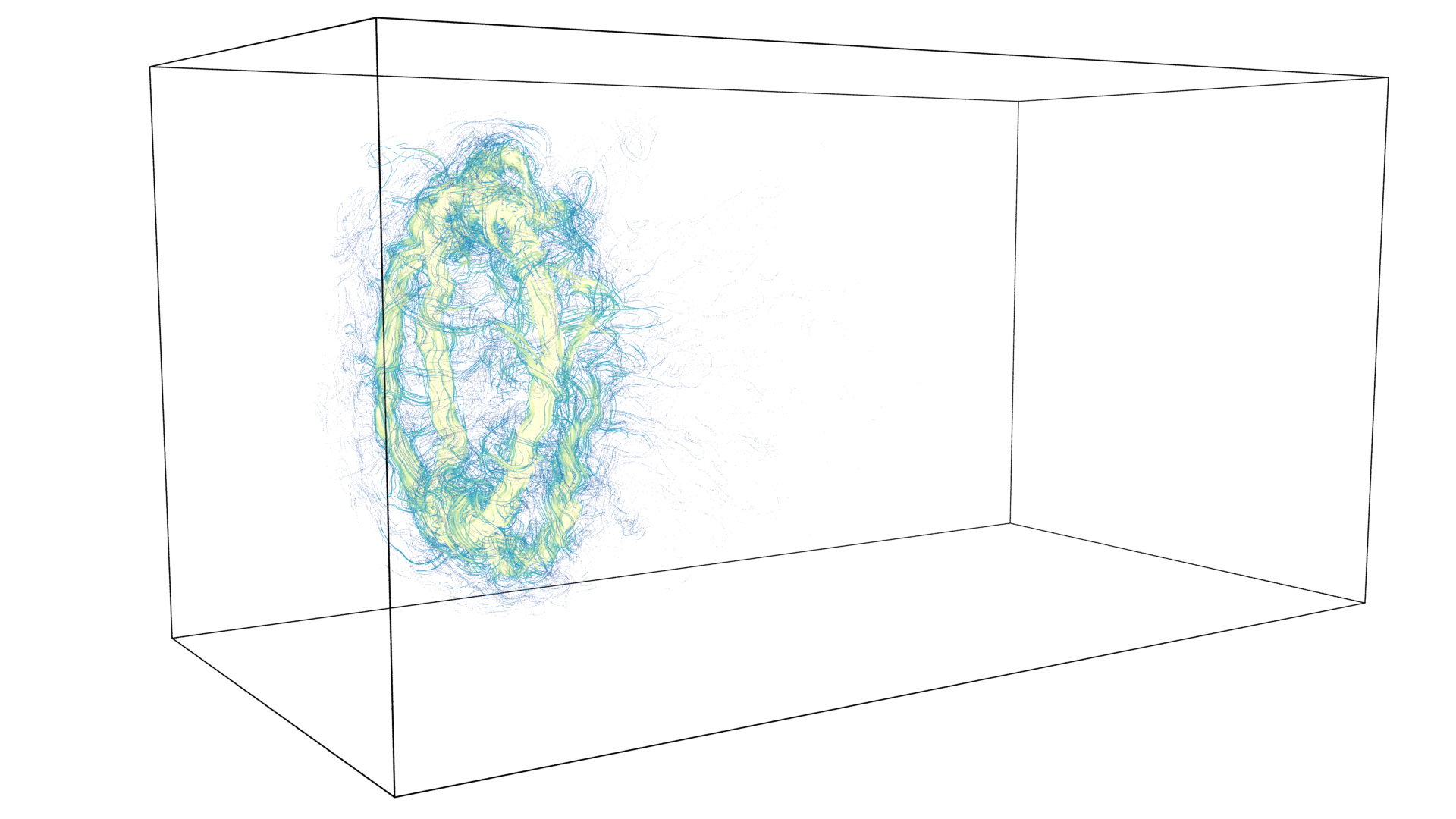}}
    \\
    \begin{picture}(0,0)(0,0)
        \put(-266,25){\begin{tikzpicture}
\pgfplotscolorbardrawstandalone[ 
    colormap={myProteinColor}{
        rgb255=(8, 29, 88)
        rgb255=(37, 52, 148)
        rgb255=(34, 94, 168)
        rgb255=(29, 145, 192)
        rgb255=(65, 182, 196)
        rgb255=(127, 205, 187)
        rgb255=(199, 233, 180)
        rgb255=(237, 248, 177)
        rgb255=(255,255, 217)
    },
    point meta min=-1,
    point meta max=1,
    colorbar style={
        width=4pt,
        height=35pt,
        ytick={-1,-0.5,0,0.5,1},
        ytick style={draw=none},
        yticklabels={{0},{},{},{},{15}},
        yticklabel style={font=\tiny, xshift=-0.5ex, scale=0.9},
        ylabel={\sffamily vorticity norm ($\nicefrac{1}{\text{s}}$)},
        ylabel style={font=\tiny, yshift=22pt, scale=0.9}
        }]
\end{tikzpicture}}
        \put(-225,10){\sffamily \scriptsize First leap}
        \put(-130,10){\sffamily \scriptsize Second leap}
        \put(-30,10){\sffamily \scriptsize Third leap}
        \put(60,10){\sffamily \scriptsize Fourth leap}
        \put(155,10){\sffamily \scriptsize Fifth leap}
        \put(-242,357){\sffamily \scriptsize \rotatebox{90}{PolyPIC}}
        \put(-242,305){\sffamily \scriptsize \rotatebox{90}{PolyFLIP}}
        \put(-242,245){\sffamily \scriptsize \rotatebox{90}{CF+PolyFLIP}}
        \put(-242,192){\sffamily \scriptsize \rotatebox{90}{R+PolyFLIP}}
        \put(-242,140){\sffamily \scriptsize \rotatebox{90}{NFM}}
        \put(-235,82){\sffamily \tiny \rotatebox{90}{$64\times32\times32$}}
        \put(-242,72){\sffamily \scriptsize \rotatebox{90}{\textbf{CO-FLIP (Ours)}}}
        \put(-235,24){\sffamily \tiny \rotatebox{90}{$128\times64\times64$}}
        \put(-242,18){\sffamily \scriptsize \rotatebox{90}{\textbf{CO-FLIP (Ours)}}}
    \end{picture}
    \caption{Leapfrogging vortex rings in 3D.
    At a low resolution of $128\times64\times64$, our method (CO-FLIP) leaps five times before the vortex rings mix.
    At an even lower resolution of $64\times32\times32$, CO-FLIP manages to leap almost three times, which shows better behavior than some traditional works at higher resolutions.
    Previous methods at best (i.e. R+PolyFLIP method) leap twice.}
    \label{fig:leapfrog_rings}
\end{figure*}

\subsubsection{2D Vortex sheet under refinement}
To study cascading vortical structures, we initialize the velocity field with a rigidly rotating disk with angular velocity $0.25\,\text{rad}/\text{s}$.  Due to the sudden drop in velocity at the border of the disk, a vortex sheet is created.  Together with the finite resolution of the grid, this would create an instability that produces cascading vortices.  See \figref{fig:vortex_sheet}. Note that with our method the energy remains constant regardless of the spatiotemporal resolution.  With increasing the resolution, our method consistently produces richer vortical structures, which qualitatively mimics the results of previous methods at a higher resolution. Additionally, the structures get progressively more turbulent.

\subsubsection{3D Twisted Torus}
We further validate our method in three dimensions.
When filaments in a closed vortex tube are twisted, they generate a velocity field with non-zero helicity \cite{Kleckner:2013:CDK}.
We use the codebase open-sourced by \cite{Chern:2017:IF} to generate such a vortex tube.
Concretely, we restrict the initial vorticity field to the inside of a torus with a major radius $1.5\,m$ and an aspect ratio of 6.
Given two variables $\hat\omega_1$ (for vorticity per unit length through the torus tube) and $\omega_2$ (for vorticity responsible for creating twists, which is perpendicular to the former component).
We set the total vorticity field as $\hat\omega_1(\vec n(\vec x) \times \vec d(\vec x)) + \omega_2\vec n(\vec x)$, where $\vec x$ is a location in the domain, $\vec d = \vec x - \vec r_{\text{core}}$, and $\vec n$ is the tangential unit vector to the torus at minor radius $\vec r_{\text{core}}$ from the core.
We refer to this tube as a \emph{twisted torus}.
As seen in \figref{fig:twistedtorus}, our method preserves the twisted vortical structures at the end of the simulation, while all other methods have lost their structure.
Solely inspecting the energy plot, one may conclude that such methods provide a high-quality result.
But when inspected visually, as seen in \figref{fig:twistedtorus}, we see a large loss of vorticity in other methods as time progresses.
This can be explained by complementing the analysis with the helicity plot.
Our method conserves energy exactly, and only shows a small error (5\%) in maintaining helicity, while other methods lose their helicity more significantly (around 13\% at best for R+PolyFLIP).

\subsubsection{3D Unknot}
Here, we focus on a (1,5)-torus knot, which we refer to as an ``unknot'' following the convention from \cite{Maggioni:2010:VEH}.
We set up our experiment using the following construction for a (p,q)-torus knot formula in $\mathbb{R}^3$:
\begin{equation}
    x = r \cos(p\theta),\\
    y = r \sin(p\theta),\\
    z = - \sin(q\theta),
\end{equation}
where $r=cos(q\theta)+2$ and $\theta \in [0,2\pi]$.
As the vortex unknot travels forward in time, the structure stretches to a point where a reconnection event occurs, and five other smaller rings are created from the bigger unknot (see \figref{fig:unknot_evolution}).
As seen in \figref{fig:unknot}, our method (CO-FLIP) captures this even on a low resolution of $64^3$, while all other methods fail to achieve a clean separation.
Additionally, note that the vortex strength is kept at the initial maximum for our results, while all other methods have lost most of their vorticity.

\subsubsection{3D Trefoil Knot}
We further validate our experiments by running the trefoil knot experiment, which sets up a (2,3)-torus knot in space.
We expect that the knot will evolve and break into two separate knots.
While other methods fail to fully undergo the vortex reconnection event, our method achieves this and maintains its vorticity strength, as seen from the color of the vortex filaments in \figref{fig:trefoilknot}.
This evolution is captured by our method in \figref{fig:trefoilknot_evolution} at a low resolution of $64^3$.

\subsubsection{3D Vortex Leapfrogging}
Similar to the two-dimensional case, the vortex leapfrogging is expected to continue indefinitely under no viscous forces.
However, the 3D experiment has proven to be more difficult to maintain this ever leaping behavior.
As shown in \figref{fig:leapfrog_rings}, at low resolutions such as $128\times64\times64$, previous methods fail to recover the correct behavior and, at best, leap twice (see R+PolyFLIP).
Our method (CO-FLIP) achieves five leaps before merging into one another.
Note that at even lower resolutions, previously not seen before, our method manages to leap once before merging together.
Based on this trend, we expect that higher resolutions would allow for higher leap counts.

\subsection{Demonstrations}

We also demonstrate that our method can create visually stunning simulations of phenomena seen in nature that are often of interest to the graphics community using relatively low resolutions (See \tabref{tab:Statistics}).

\subsubsection{2D Rayleigh Taylor Instability}

\figref{fig:RT_instability} shows that our method in simulating 2D Rayleigh Taylor instability.
We set up an experiment with one Rayleigh Taylor finger.
Heavy fluid sits above light fluid and travels downward, creating fractal-shaped vortices.
We set a small viscosity of $10^{-7}\,m^2/s$ and buoyancy acceleration $0.15\,m/s^2$, and run the simulation with a small sinusoidal perturbation at the interface given by $\frac{L}{2} + 0.05W(\cos (\frac{2\pi}{W})-1)$, where $L, W$ are the length and height of the domain.
Note that our method can create fractal-shaped vortical structures that better match the expected output at very high Reynolds numbers.
While other methods produce a modest amount of vorticities, our method produces abundantly more features at the same resolution.

\subsubsection{2D Smoke plume}
To additionally showcase the intricate vortical structures produced by our CO-FLIP algorithm, we set up an experiment where buoyant hot travels upward with acceleration $0.1\,m/s^2$ in two dimensions.
We apply a small viscosity of $10^{-6}\,m^2/s$. 
\figref{fig:2dsmokeplume} shows how over time, our method produces an increasing amount of turbulent fractal-shaped vortices.

\subsubsection{3D Smoke plume}
\figref{fig:smokeplumes} shows an experiment with a smoke plume under buoyancy $1\,m/s^2$ and a small viscosity of $10^{-4}\,m^2/s$.
Our method captures more detail and vortices than other methods at a low resolution of $64\times128\times64$.
Our results at this resolution are comparable to traditional methods running at double the resolution.
Even at a lower resolution of $32\times64\times32$, previously considered as too coarse, our method can create highly energetic results.
As the resolution drops, CO-FLIP provides a consistent qualitative visual result with a rich level of vortical structures.  This is because CO-FLIP does not have any numerical dissipation. Regardless of the resolution, the difference in the behavior is due to how far the $\cI$-discrete Euler equations are from the continuous Euler equations.

\subsubsection{3D Pyroclastic}
We further demonstrate the effectiveness of our high-order structure-preserving solver through the pyroclastic cloud \figref{fig:pyroclastic}.
By injecting hot, buoyant smoke in the shape of a flat slab under buoyancy acceleration of $0.1\,m/s^2$, our method produces results that are comparable in terms of the amount of vortical details to real footage of volcanic clouds emerging after their eruption. 
Note that our method runs at only $96\times192\times96$ simulation, which was previously considered too low for such effects.
As shown in \figref{fig:pyro_comparison}, even at a lower resolution of $64\times128\times64$, we see energetic dense vortical structures.

\begin{figure}
    \centering
    \includegraphics[{trim=95px 0 95px 0px},clip,width=0.49\columnwidth]{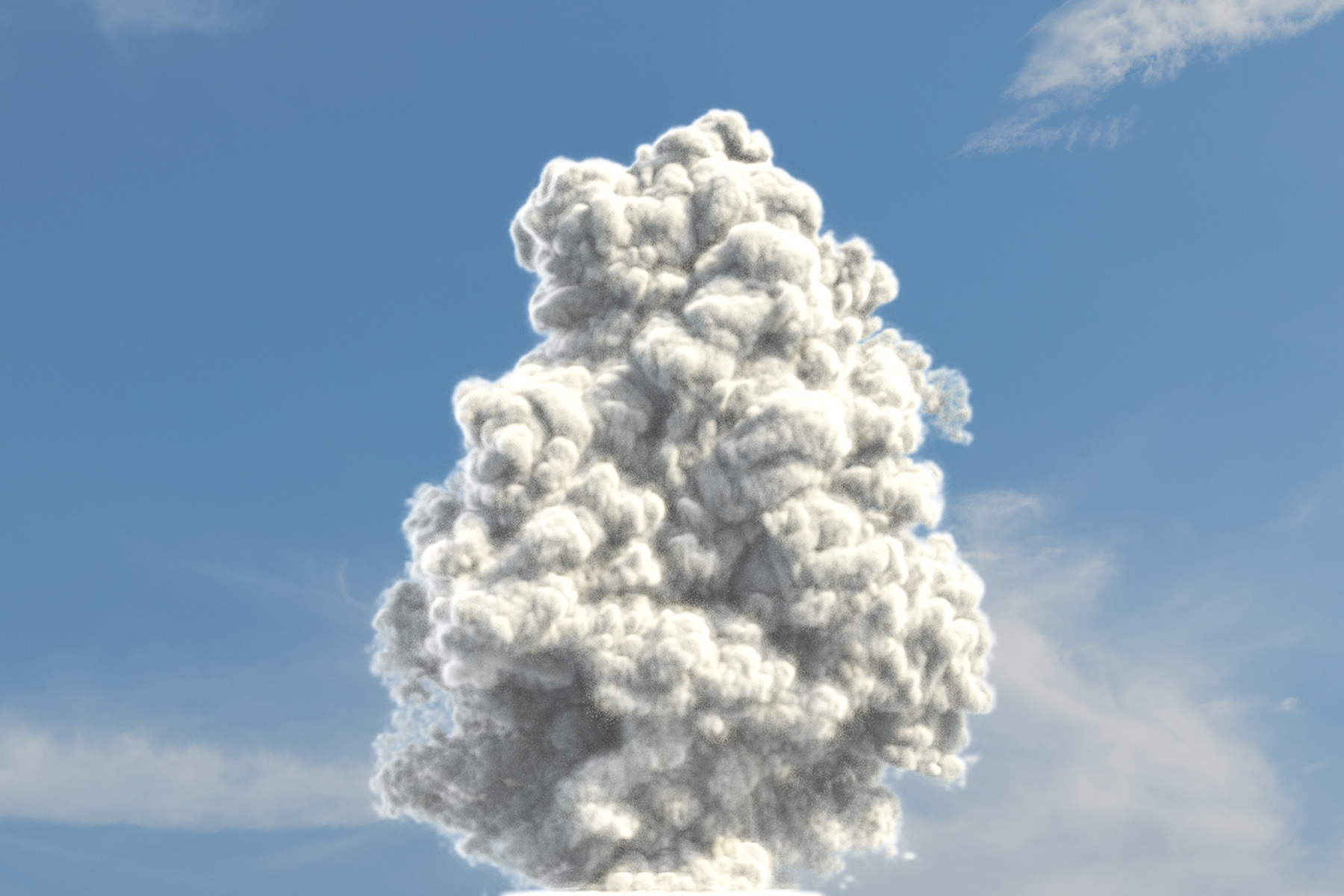}
    \includegraphics[{trim=95px 0 95px 0px},clip,width=0.49\columnwidth]{figures/Results/3D/pyroclastic/newcam/recolor_pyroclastic_res96.jpg}
    \begin{picture}(0,0)(0,0)
        \put(-241,2){\sffamily \scriptsize \rotatebox{90}{\textcolor{white}{$64\times128\times64$}}}
        \put(-120,2){\sffamily \scriptsize \rotatebox{90}{\textcolor{white}{$96\times192\times96$}}}
    \end{picture}
    \caption{A comparison of the pyroclastic experiment for different resolutions.  Note as the spatiotemporal resolution increases, at a constant CFL number, we see more dense and energertic vortical structures emerge.}
    \label{fig:pyro_comparison}
\end{figure}

\subsubsection{3D Inkjet}
We simulate the effect of injecting ink with a high speed of $2\,m/s$ into a medium in \figref{fig:inkjet}.
The ink evolves under a buoyancy acceleration of $0.1\,m/s^2$, and a small viscosity of $1\times10^{-4}\,m^2/s$.
Note the interplay of two smoke jets as they reach one another, mix, and produce many vortical structures while running at a low resolution of $64\times128\times64$.

\subsubsection{3D Rocket}
To further showcase our method, we run an experiment inspired by rocket launch footage.
We stage a nozzle that steadily emits hot smoke into the scene at a speed of  $0.9\,m/s$, where the effective buoyancy acceleration is $0.55\,m/s^2$ point upwards.
We additionally use a small amount of viscosity of $1\times10^{-4}\,m^2/s$.
\figref{fig:rocket} demonstrates the abundance of turbulent fuel and smoke left in the wake of the rocket flying upwards.
Note that the rocket model and the smoke are not coupled; this is purely for the final presentation of the simulation.
The simulation runs at a low resolution of $64^3$.

\subsubsection{3D Spot Obstacle}
We implement arbitrary boundary handling as described in \secref{sec:Obstacle}.
\figref{fig:spot} shows how our method produces intricate vortical structures at a low resolution of $128\times64\times64$, as the smoke jet coming out of the nozzle interacts with the obstacle shaped by the Spot model \cite{Crane:2013:RFC}.
The nozzle jet speed for this experiment is set to $2\,m/s$, and we use a small amount of viscosity of $1\times10^{-4}\,m^2/s$.

\section{Conclusion and future work}
This paper details the construction of a high-order energy- and Casimirs-preserving discretization of the Euler equations for the dynamics of inviscid incompressible fluids in a FLIP-based hybrid Eulerian--Lagrangian framework.
By a few key novel modifications to the FLIP algorithm, we establish a discrete Hamiltonian dynamical system.  The Hamiltonian flow is based on an energy defined on the discrete grid velocity field, which sets motion on a symplectic space represented by a particle system.
We employ a mimetic B-spline interpolation scheme which gives us pointwise divergence-free velocities as well as volume-preserving actions on the particles. 
This action is symplectomorphic on the particle system, and as such induces a momentum map back to the grid, which perfectly matches the P2G transfer operator in the FLIP algorithm.
Since momentum maps are Poisson maps, the actions preserve the coadjoint orbits, and thereby Casimirs, in the infinite-dimensional dual space of divergence-free velocity fields.
Finally, the pressure projection uses an inner product structure given by the induced metric through the interpolation operator.  This choice for the pressure projection is crucial for discrete energy conservation. 
As each component of the solver is canonically derived from the interpolation, the entire system can be made arbitrarily high-order by only specifying the order of accuracy of the interpolation operator.

We demonstrate that our high-order solver is capable of producing and maintaining high levels of detail in fluid flows.
Even using a low resolution grid, the state on the coadjoint orbit can still develop turbulent flow with high resolution vortical structures and maintain concentrated vortex filaments.
In contrast, competing numerical methods cannot capture the same level of detail at low resolutions.
The mathematical reason is that, in our method, the grid resolution only affects the Hamiltonian, whereas the Poisson structure that describes the advective nature for the momentum is still the same as the one in the continuous theory.  
We also elucidate that the Casimir preservation is expected to be at the infinite-dimensional dual space of continuous velocities instead of on the dual space of the finite dimensional grid velocities suggested by previous structure-preserving discretizations of Euler equations.  

The implicit time integration in the method facilitates long-term stability.
Previous stability issues in FLIP-based algorithms are minimized when the mimetic interpolation is consistently applied to pressure force and the P2G transfer.

While our method provides higher orders of accuracy and structure preservation in simulating fluids, it comes with a higher computational cost.
Currently, the performance bottleneck lies in the pseudoinverse solve in the P2G transfer.  
For future work, we aim to construct more efficient solvers that can tackle the bottlenecks of the algorithm.
Another avenue is to extend the CO-FLIP machinery to free-surface flows, requiring careful handing of the Galerkin Hodge star near the free-surface boundary.
We are also excited to explore other possible fluid solvers by following a similar derivation of CO-FLIP (\secref{sec:Theory1}) but with a different mimetic velocity representation and a different auxiliary symplectic space.

\begin{acks}
This work was funded in part by the Ronald L. Graham Chair, NSF Career Award 2239062, and the UC San Diego Center for Visual Computing.  Additional support was provided by SideFX Software and Activision Blizzard.
The work used compute resources provided by the National Research Platform (funded by NSF grants 2100237 and 2120019), which Khai Vu helped the authors to set up.
Moreover, the authors would like to thank Mukund Varma for sharing his desktop machine as an additional computational resource. 
Finally, the authors would like to thank Professors Melvin Leok and Craig Schroeder, and Dr. Alexey Stomakhin for insightful discussions.
\end{acks}

\bibliographystyle{ACM-Reference-Format}
\bibliography{CoadjointFLIP}
\appendix
\section{Preliminaries}
\label{app:Preliminaries}
In this appendix, we review the essentials of Lie algebra, symplectic structure, Poisson structure, and Hamiltonian systems in a self-contained manner.  A more complete introduction to these topics can be found in textbooks such as \cite{Marsden:1997:IMS, Holm:2011:GMD}.
Our introduction uses notations in exterior calculus [\citealt[Appendix A]{Yin:2023:FC}; \citealt{Wang:2023:EC}; \citealt{Crane:2013:DEC}] and terminologies in group theory.

Most of the technical terminologies in the group theoretic Hamiltonian fluid dynamics are borrowed from geometric Hamiltonian mechanics.
In this paradigm, a Hamiltonian dynamical system is described by an energy function defined on a \emph{phase space}.  
This energy function is called the \emph{Hamiltonian}. 

In the most general setting, the type of a phase space is a \emph{Poisson space}, which is a manifold or a vector space equipped with a \emph{Poisson structure}.
The Poisson structure allows turning the Hamiltonian into a vector field (analogous to how a metric turns a loss function into a gradient field in numerical optimization).  The vector field generated by the Hamiltonian and the Poisson structure is called the \emph{symplectic gradient field} of the Hamiltonian.  The Hamiltonian dynamical system is the negative symplectic gradient flow on the phase space (analogous to gradient descents).

Two most important examples of Poisson spaces are \emph{symplectic manifolds} and \emph{dual spaces of Lie algebras}.  Let us briefly review the definitions and notations of these concepts.

\subsection{Lie Algebra}

\begin{definition}[Lie algebra]
    A \textbf{Lie algebra} \((V,[\cdot,\cdot])\) is a vector space \(V\) equipped with a skew-symmetric bilinear function \([\cdot,\cdot]\colon V\times V\xrightarrow{\rm bilinear} V\) satisfying the Jacobi identity: \([\vec a,[\vec b,\vec c]] + [\vec b,[\vec c,\vec a]] + [\vec c, [\vec a,\vec b]] = 0\) for all \(\vec a,\vec b,\vec c\in V\).
\end{definition}

Elementary examples of Lie algebra products include the 3D cross product \((\RR^3,\times)\) and matrix commutator \((\RR^{n\times n}=\gl(\RR^n),[\bA,\bB]\coloneqq \bA\bB-\bB\bA)\).
\begin{definition}[Adjoint representation]
\label{def:ad}
    The currying\footnote{As the currying operation in functional programming.} of the Lie bracket \([\cdot,\cdot]\) of a Lie algebra \(V\) is called the \textbf{adjoint representation}
    \begin{align}
        \ad\colon V\to\underbrace{V\to V}_{\gl(V)},\quad \ad_{\vec u}(\vec v)\coloneqq[\vec u,\vec v].
    \end{align}
\end{definition}
As a direct consequence of the Jacobi identity for \([\cdot,\cdot]\):
\begin{proposition}
\label{prop:adIsLieAlgHom}
    \(\ad\colon (V,[\cdot,\cdot])\rightarrow(\gl(V),[\cdot,\cdot]_{\mathrm{commutator}})\) is a Lie algebra homomorphism.
\end{proposition}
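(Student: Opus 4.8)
The plan is to verify directly that $\ad$ respects the bracket structure, i.e.\ that for all $\vec u,\vec v\in V$ we have the homomorphism identity $\ad_{[\vec u,\vec v]} = [\ad_{\vec u},\ad_{\vec v}]_{\mathrm{commutator}}$, where the right-hand bracket is the commutator of linear maps on $V$. Since $\ad$ is manifestly linear in its subscript (by bilinearity of the Lie bracket), the only content is this single operator identity, which I would establish by evaluating both sides on an arbitrary $\vec w\in V$.

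First I would expand the left-hand side: by definition of $\ad$, we have $\ad_{[\vec u,\vec v]}(\vec w) = [[\vec u,\vec v],\vec w]$. Next I would expand the right-hand side as a composition of operators:
\begin{align*}
    [\ad_{\vec u},\ad_{\vec v}]_{\mathrm{commutator}}(\vec w)
    &= \ad_{\vec u}(\ad_{\vec v}(\vec w)) - \ad_{\vec v}(\ad_{\vec u}(\vec w))\\
    &= [\vec u,[\vec v,\vec w]] - [\vec v,[\vec u,\vec w]].
\end{align*}
The claim then reduces to the purely algebraic identity $[[\vec u,\vec v],\vec w] = [\vec u,[\vec v,\vec w]] - [\vec v,[\vec u,\vec w]]$.

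The key step is to recognize this as a rearrangement of the Jacobi identity. Starting from the Jacobi identity in the form stated in the excerpt, $[\vec a,[\vec b,\vec c]] + [\vec b,[\vec c,\vec a]] + [\vec c,[\vec a,\vec b]] = 0$, I would substitute $\vec a = \vec u$, $\vec b = \vec v$, $\vec c = \vec w$ and then use skew-symmetry to move the cyclic terms into the desired shape (in particular rewriting $[\vec w,[\vec u,\vec v]] = -[[\vec u,\vec v],\vec w]$ and $[\vec v,[\vec w,\vec u]] = -[\vec v,[\vec u,\vec w]]$). Solving for $[[\vec u,\vec v],\vec w]$ yields exactly the identity above.

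There is no serious obstacle here; the entire proposition is a direct consequence of the defining axioms, and the ``main difficulty'' is only the bookkeeping of signs when reconciling the cyclic form of the Jacobi identity with the Leibniz/derivation form needed for the homomorphism property. I would therefore keep the proof short, presenting the two operator expansions and the one-line reduction to Jacobi, and noting that linearity of $\ad$ in its subscript handles the additive and scalar structure automatically so that the bracket-compatibility is the only nontrivial homomorphism condition to check.
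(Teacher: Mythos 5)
Your proof is correct and is essentially the same argument as the paper's: both reduce the homomorphism identity $\ad_{[\vec u,\vec v]} = [\ad_{\vec u},\ad_{\vec v}]_{\mathrm{commutator}}$ to the Jacobi identity by currying (evaluating on a test vector) and using skew-symmetry to rearrange the cyclic terms. Your sign bookkeeping checks out, so no changes are needed.
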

\begin{proof}
    By currying the Jacobi identity \([\vec a,[\vec b,\vec c]] + [\vec b,[\vec c,\vec a]] + [\vec c, [\vec a,\vec b]] = 0\) we obtain \(\ad_{\vec a}\ad_{\vec b}\vec c - \ad_{\vec b}\ad_{\vec a}\vec c - \ad_{[\vec a,\vec b]}\vec c = 0\).  Therefore \(\ad_{[\vec a,\vec b]} = [\ad_{\vec a},\ad_{\vec b}]_{\mathrm{commutator}}\).
\end{proof}

Here, a \textbf{Lie algebra homomorphism} between two Lie algebras \((U,[\cdot,\cdot]), (V,[\cdot,\cdot])\) is referred to as a linear map \(A\colon U\to V\) so that \(A([\vec u_1,\vec u_2]) = [A\vec u_1, A\vec u_2]\).  A \textbf{Lie algebra anti-homomorphism} is a linear map satisfying  \(A([\vec u_1,\vec u_2]) = -[A\vec u_1, A\vec u_2]\) instead.

The classes of Lie algebras that will be relevant in our context are the following two examples.
\begin{example}
\label{ex:TeGIsLieAlgebra}
    The tangent space \(\frak g = T_eG\) to a Lie group \(G\) at the identity \(e\in G\) is a Lie algebra.
    The Lie bracket is given by the mixed 2nd derivative of the conjugation expression \((g\in G,h\in G)\mapsto ghg^{-1}\in G\) at the identity
    \begin{align}
        [\vec u,\vec v]\coloneqq {\partial\over\partial g}\left({\partial\over\partial h}\left(ghg^{-1}\right)_{h=e}\llbracket \vec v\rrbracket\right)_{g=e}\llbracket\vec u\rrbracket.
    \end{align}
\end{example}

The next example says that the space of vector fields on a domain is naturally a Lie algebra.  To concisely describe it, the space \(\Gamma(TM)\) of tangent vector fields on a manifold \(M\) is identified with the space \(\Der(\continuity^\infty(M))\) of \textbf{derivations} for scalar functions on \(M\).  Here, a derivation is any linear operator \(D\colon \continuity^\infty(M)\xrightarrow{\rm linear}\continuity^\infty(M)\) satisfying the Leibniz rule \(D(fg) = D(f) g + f D(g)\).  
Every tangent vector field \(\vec u\) can be taken as a derivation, being the directional derivative operator: \((\vec u\rhd{})\in\Der(\continuity^\infty(M))\), \(\vec u\rhd f\coloneqq df\llbracket \vec u\rrbracket\).
Conversely,
each derivation can be regarded as a directional derivative operator along a vector field \cite{nlab:derivations_of_smooth_functions_are_vector_fields}. In graphics, this representation of vector fields has been used for tangent vector field processing and solving PDEs on surfaces \cite{Azencot:2013:OAT,Azencot:2014:FFS,Azencot:2015:DDV}.

\begin{example}
\label{ex:DerIsLieAlgebra}
    The space \(\fX(M)=\Der(\continuity^\infty(M))\) of tangent vector fields on a manifold \(M\) is a Lie algebra, with \([\vec u,\vec v]\rhd\coloneqq(\vec u\rhd{})(\vec v\rhd{}) - (\vec v\rhd{})(\vec u\rhd{})\).
\end{example}

The following proposition combines \exref{ex:TeGIsLieAlgebra} and \exref{ex:DerIsLieAlgebra} in the context of group actions.
A \textbf{group action} of a Lie group \(G\) on a manifold \(M\) is a (anti-)homomorphism \(\Phi\) from \(G\) to smooth maps from \(M\) to \(M\) (with composition ``\(\circ\)'' of smooth maps as the group operation).  Spelling it out, we have type \(\Phi\colon G\xrightarrow{\text{hom}}(M\to M)\).  That is, \(\Phi(g_1g_2)(x) = \Phi(g_1)\Phi(g_2)(x)\), \(g_1,g_2\in G,x\in M\).  We call such \(\Phi\) a \textbf{left action}.
If \(\Phi\colon G\xrightarrow{\text{anti-hom}}(M\to M)\) is an anti-homomorphism, \ie\@ \(\Phi(g_1g_2)(x) = \Phi(g_2)\Phi(g_1)(x)\), we call \(\Phi\) a \textbf{right action}.

Note that if \(\Phi\colon G\xrightarrow{\text{(anti-)hom}}(M\to M)\) is a group action, then its differential at the identity is a linear map from \(\frak g=T_eG\) to the space of tangent vector fields \(\fX(M)\) on \(M\), representing an infinitesimal motion.
\begin{proposition}\label{prop:DifferentialOfAction}
    The differential \(d\Phi|_e\) of a left action \(\Phi\colon G\xrightarrow{\textnormal{hom}}(M\to M)\) at the identity is a Lie algebra anti-homomorphism \(d\Phi|_e\colon\frak g\xrightarrow{\substack{\textnormal{Lie alg}\\\textnormal{anti-hom}}}\fX(M)\).  Likewise, the differential of a right action \(\Phi\colon G\xrightarrow{\textnormal{anti-hom}}(M\to M)\) at the identity is a Lie algebra homomorphism \(d\Phi|_e\colon\frak g\xrightarrow{\substack{\textnormal{Lie alg}\\\textnormal{hom}}}\fX(M)\).
\end{proposition}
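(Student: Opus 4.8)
The plan is to realize the differential $d\Phi|_e$ concretely as the \emph{infinitesimal generator} map and then read off its bracket from the adjoint representation. First I would fix $\xi\in\frak g$ and define the vector field $X_\xi\coloneqq d\Phi|_e(\xi)\in\fX(M)$ by $X_\xi(x)=\frac{d}{dt}\big|_{t=0}\Phi(\exp(t\xi))(x)$. Linearity of the differential makes $\xi\mapsto X_\xi$ linear, which is precisely what will later let me pull an $s$-derivative through $X_{(\cdot)}$. The first genuine step is to identify the flow of $X_\xi$: using that $\{\exp(t\xi)\}_t$ is a one-parameter subgroup, so $\exp((t+s)\xi)=\exp(s\xi)\exp(t\xi)$, together with the homomorphism property of a left action, I would verify that $t\mapsto\Phi(\exp(t\xi))$ satisfies the defining ODE of the flow, hence equals the flow $\phi^s_\xi$ of $X_\xi$. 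The same computation works for a right action, since the factorization places the new factor $\exp(s\xi)$ on the outside of the composition in both cases.

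The heart of the argument is an equivariance identity extracted from group conjugation. Starting from $\Phi(g\,h\,g^{-1})$, I would use the action property to rewrite it as $\Phi(g)\circ\Phi(h)\circ\Phi(g)^{-1}$ in the left-action case (and in reversed order for a right action), set $h=\exp(t\eta)$, and differentiate at $t=0$. The left-hand side produces the pushforward $\Phi(g)_*X_\eta$, while the right-hand side, using $g\exp(t\eta)g^{-1}=\exp(t\,\operatorname{Ad}_g\eta)$, produces $X_{\operatorname{Ad}_g\eta}$. This yields the key relation $\Phi(g)_*X_\eta=X_{\operatorname{Ad}_g\eta}$ for a left action, and the analogous pullback relation $\Phi(g)^*X_\eta=X_{\operatorname{Ad}_g\eta}$ for a right action.

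To finish, I would differentiate this equivariance relation in $g$ along $g=\exp(s\xi)$ at $s=0$. On the algebra side, $\frac{d}{ds}\big|_0\operatorname{Ad}_{\exp(s\xi)}\eta=\ad_\xi\eta=[\xi,\eta]$ (the definition of $\ad$ in \defref{def:ad}) together with linearity of $\xi\mapsto X_\xi$ gives $X_{[\xi,\eta]}$. On the geometry side, because $\Phi(\exp(s\xi))=\phi^s_\xi$ is the flow of $X_\xi$, the standard Lie-derivative formulas $\frac{d}{ds}\big|_0(\phi^s_\xi)_*X_\eta=-[X_\xi,X_\eta]$ and $\frac{d}{ds}\big|_0(\phi^s_\xi)^*X_\eta=[X_\xi,X_\eta]$ produce the bracket with opposite signs in the two cases. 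Comparing the two sides gives $X_{[\xi,\eta]}=-[X_\xi,X_\eta]$ for a left action (anti-homomorphism) and $X_{[\xi,\eta]}=[X_\xi,X_\eta]$ for a right action (homomorphism), which is exactly the claim.

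The main obstacle is bookkeeping rather than depth: I must keep the pushforward/pullback conventions and their sign relationship to the Lie bracket perfectly consistent, since it is precisely this sign — $\frac{d}{ds}(\phi^s_\xi)_*$ versus $\frac{d}{ds}(\phi^s_\xi)^*$ — that separates the anti-homomorphism from the homomorphism. A secondary care point is justifying the interchange of the $s$-derivative with the generator map, which rests on its linearity and smooth parameter dependence; alternatively, one can bypass flows entirely and compute $[X_\xi,X_\eta]$ directly as a commutator of derivations acting on test functions (\exref{ex:DerIsLieAlgebra}), but the flow and $\operatorname{Ad}$ route keeps the role of the underlying group structure most transparent.
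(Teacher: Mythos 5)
Your proof is correct. Be aware, though, that the paper does not actually prove this proposition itself: its ``proof'' consists of the citation \cite[Theorem~20.15]{Lee:2013:SM}, so your argument supplies in full what the paper outsources to a textbook. Your route --- realize $d\Phi|_e(\xi)$ as the infinitesimal generator $X_\xi$, check that $t\mapsto\Phi(\exp(t\xi))$ is its flow (correct in the right-action case too, since $\exp(t\xi)$ and $\exp(s\xi)$ commute, so the anti-homomorphism still places the $s$-factor on the outside), extract the equivariance $\Phi(g)_*X_\eta=X_{\operatorname{Ad}_g\eta}$ (pullback version for right actions) from conjugation, and differentiate along $g=\exp(s\xi)$ using $\tfrac{d}{ds}\big|_0(\phi^s_\xi)_*X_\eta=-[X_\xi,X_\eta]$ versus $\tfrac{d}{ds}\big|_0(\phi^s_\xi)^*X_\eta=[X_\xi,X_\eta]$ --- is the classical conjugation/$\operatorname{Ad}$ argument, and it differs from the cited proof in Lee, which instead shows that each left-invariant vector field on $G$ is related, under the orbit maps $g\mapsto\Phi(g)(x)$, to its generator on $M$, and then invokes naturality of the bracket under related vector fields. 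Your version makes the sign mechanism completely explicit (the anti-/homomorphism dichotomy is literally the pushforward-versus-pullback sign in the Lie-derivative formula), and it meshes exactly with the paper's conventions: the bracket on $\frak g=T_eG$ in \exref{ex:TeGIsLieAlgebra} is defined as the mixed second derivative of $(g,h)\mapsto ghg^{-1}$, which is precisely your identity $\tfrac{d}{ds}\big|_0\operatorname{Ad}_{\exp(s\xi)}\eta=[\xi,\eta]$, while $\fX(M)$ carries the commutator-of-derivations bracket of \exref{ex:DerIsLieAlgebra}. One caveat worth a sentence in a final write-up: in the paper's actual application $G=\SDiff(W)$ is infinite-dimensional, where the exponential map (time-one flow) is not locally surjective; your proof is unaffected because it only uses the one-parameter subgroups $t\mapsto\exp(t\xi)$, which exist for every $\xi$ as flows, but this is exactly the kind of point your acknowledged ``interchange of derivative and generator map'' remark should absorb.
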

\begin{proof}
    \cite[Theorem~20.15]{Lee:2013:SM}.
\end{proof}

\subsection{Poisson Structure}

A Hamiltonian dynamical system is a 1st order ODE on a phase space.  This phase space has a structure that turns a Hamiltonian function into a vector field describing the ODE.  In the most general setting, this structure is the Poisson structure.
There are two special examples of spaces that are naturally endowed with Poisson structures.  One example is any symplectic space (\secref{sec:SymplecticManifoldsArePoisson}).  The second example is the dual vector space of any Lie algebra (\secref{sec:DualLieAlgArePoisson}).
\begin{definition}
    A \textbf{Poisson manifold} (or \textbf{Poisson space}) is a manifold (or vector space) \(P\) equipped with a Lie algebra structure on the scalar functions \(\{\cdot,\cdot\}\colon \continuity^\infty(P)\times \continuity^\infty(P)\xrightarrow{\textnormal{Lie bracket}} \continuity^\infty(P)\), satisfying that \(\{f,\cdot\}\colon \continuity^\infty(P)\to \continuity^\infty(P)\) is a derivation for every \(f\in \continuity^\infty(P)\).
\end{definition}
\begin{definition}\label{def:sgradPoisson}
    The vector field \(\vec X^f\in\Gamma(TP)\) associated with the derivation \(\{f,\cdot\}=\vec X^f\rhd(\cdot)\) is called the \textbf{symplectic gradient field} (or the \textbf{Hamiltonian vector field}) of \(f\in \continuity^\infty(P)\), denoted by \(\sgrad f\coloneqq\vec X^f\). 
\end{definition}
\begin{definition}[Hamiltonian system]\label{def:HamiltonianDynamicalSystem}
    A Hamiltonian dynamical system \((P,H)\) for a Hamiltonian function \(H\in \continuity^\infty(P)\) on a Poisson manifold \(P\) is a time-dependent position \(x(t)\in P\) governed by the ODE \({d\over dt}x(t) = -\sgrad H|_{x(t)}\).
\end{definition}

\subsubsection{Properties of \(\sgrad\)}
The operator \(\sgrad\) is derived directly from the Poisson structure \(\{\cdot,\cdot\}\).  It turns each Hamiltonian function into a Hamiltonian dynamical system.  The following is an important property of \(\sgrad\).  
\begin{proposition}\label{prop:sgradIsLieAlgHom}
    The symplectic gradient operator
    \begin{align}
        \sgrad\colon (\continuity^\infty(P),\{\cdot,\cdot\})\to (\fX(P),[\cdot,\cdot])
    \end{align}
    is a Lie algebra homomorphism. 
\end{proposition}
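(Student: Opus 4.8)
The plan is to show that $\sgrad$ intertwines the Poisson bracket on functions with the Lie bracket on vector fields, i.e.\ that
\begin{align}
    \sgrad\{f,g\} = [\sgrad f, \sgrad g]\quad\text{for all }f,g\in\continuity^\infty(P).
\end{align}
Since both sides are vector fields (derivations) on $P$, I would prove this equality by letting both sides act on an arbitrary test function $h\in\continuity^\infty(P)$ and checking they agree as derivations. The natural tool is the defining relation from \defref{def:sgradPoisson}, namely $\{f,\cdot\} = \sgrad f\rhd(\cdot)$, which converts every occurrence of $\sgrad$ back into the Poisson bracket.

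First I would expand the right-hand side using the Lie bracket of vector fields from \exref{ex:DerIsLieAlgebra}:
\begin{align}
    [\sgrad f,\sgrad g]\rhd h = (\sgrad f\rhd)(\sgrad g\rhd h) - (\sgrad g\rhd)(\sgrad f\rhd h).
\end{align}
Applying the defining relation $\sgrad f\rhd(\cdot)=\{f,\cdot\}$ repeatedly, the right side becomes $\{f,\{g,h\}\} - \{g,\{f,h\}\}$. Meanwhile the left-hand side acting on $h$ is $\sgrad\{f,g\}\rhd h = \{\{f,g\},h\}$. So the desired identity reduces to
\begin{align}
    \{\{f,g\},h\} = \{f,\{g,h\}\} - \{g,\{f,h\}\},
\end{align}
which is exactly the Jacobi identity for the Poisson bracket (after using the skew-symmetry of $\{\cdot,\cdot\}$ to rearrange terms). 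Since $\{\cdot,\cdot\}$ is by definition a Lie bracket on $\continuity^\infty(P)$, the Jacobi identity holds, and the equality follows.

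The main conceptual step—and the only place where anything substantive is used—is recognizing that the homomorphism property is precisely a restatement of the Jacobi identity, translated through the correspondence between derivations and vector fields. The calculation itself is routine bookkeeping: one must carefully track which slot of the bracket each operator acts on and invoke skew-symmetry to match signs. I expect the only subtlety to be the sign convention, since reversing the order of arguments in one Poisson bracket (to align the two forms of the Jacobi identity) must be done consistently; a careless sign there would make $\sgrad$ appear to be an anti-homomorphism instead. Once the signs are handled, the proof is complete, and no further analytic input about $P$ is needed beyond the Poisson axioms already assumed.
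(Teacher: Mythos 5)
Your proof is correct and follows essentially the same route as the paper, which simply observes that $\sgrad$ is the currying of the Poisson bracket and replicates the proof of \propref{prop:adIsLieAlgHom}: in both cases the homomorphism identity $\sgrad\{f,g\} = [\sgrad f,\sgrad g]$ reduces, after currying, to the Jacobi identity for $\{\cdot,\cdot\}$. You merely make explicit (by testing against an arbitrary $h$) what the paper leaves as an analogy, and your sign bookkeeping is consistent with the conventions of \defref{def:sgradPoisson}.
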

\begin{proof}
    Observe that \(\sgrad\) is the currying of the Poisson bracket (\(\{f,\cdot\} = \sgrad(f)\rhd\)) analogous to that \(\ad\) is the currying of a Lie bracket.  The proof follows by replicating the proof of \propref{prop:adIsLieAlgHom}.
\end{proof}
\begin{definition}
\label{def:hamP}
    The space \(\ham(P)\) of Hamiltonian vector fields on \(P\) is the Lie subalgebra \(\ham(P)\coloneqq\im(\sgrad)\subset\fX(P)\).
\end{definition}
Note that the ``integration'' operator (defined up to a Casimir function which we will introduce later in \defref{def:Casimir})
\begin{align}
\label{eq:sgradInverse}
    \sgrad^{-1}\colon (\ham(P),[\cdot,\cdot])\to (\continuity^\infty(P),\{\cdot,\cdot\})
\end{align}
is also a Lie algebra homomorphism.  This will be convenient when we discuss momentum maps.

\subsubsection{Poisson Maps}
A Poisson map is a Poisson-structure preserving map.  As such Hamiltonian systems map to each other.
\begin{definition}
\label{def:PoissonMap}
    A map \(\phi\colon P_1\to P_2\) between two Poisson manifolds \((P_1,\{\cdot,\cdot\}_1)\) and \((P_2,\{\cdot,\cdot\}_2)\) is called a \textbf{Poisson map} if \(\{f\circ \phi,g\circ \phi\}_1 = \{f,g\}_2\circ\phi\) for all \(f,g\in\continuity^\infty(P_2)\).
    The map is called an \textbf{anti-Poisson map} if \(\{f\circ \phi,g\circ \phi\}_1 = -\{f,g\}_2\circ\phi\).
\end{definition}
\begin{proposition}
\label{prop:PoissonMapHamiltonianSystem}
    Let \(\phi\colon (P_1,\{\cdot,\cdot\}_1)\to (P_2,\{\cdot,\cdot\}_2)\) be a Poisson map. 
    Let \(H\in\continuity^\infty(P_2)\) be a Hamiltonian function defining a Hamiltonian system \((P_2,H)\).
    If \(x(t)\in P_1\) is a solution to the dynamical system \((P_1,H\circ \phi)\) then \((\phi\circ x)(t)\) is a solution to the system \((P_2, H)\).
    If \(\phi\) is an anti-Poisson map instead, then \(\phi\circ x\) is a solution to the system \((P_2,H)\) but reversed in time.
\end{proposition}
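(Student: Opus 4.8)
The statement to prove is Proposition~\ref{prop:PoissonMapHamiltonianSystem}: given a Poisson map \(\phi\colon (P_1,\{\cdot,\cdot\}_1)\to (P_2,\{\cdot,\cdot\}_2)\) and a Hamiltonian \(H\in\continuity^\infty(P_2)\), a solution \(x(t)\) of \((P_1,H\circ\phi)\) pushes forward to a solution \(\phi\circ x\) of \((P_2,H)\); and in the anti-Poisson case the image solves \((P_2,H)\) with time reversed. The plan is to reduce everything to the chain rule together with the defining property of a Poisson map. The essential observation is that for any \(f\in\continuity^\infty(P_2)\), the time-derivative of the observable \(f\) along the image curve \(\phi\circ x\) can be computed in two ways, and comparing them yields the claim. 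I would phrase the whole argument in terms of observables rather than in coordinates, since the Poisson-map condition is stated on brackets of functions.

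First I would fix an arbitrary test function \(f\in\continuity^\infty(P_2)\) and differentiate \(f(\phi(x(t)))\) in time. By the definition of the Hamiltonian system \((P_1,H\circ\phi)\) (Definition~\ref{def:HamiltonianDynamicalSystem}), we have \(\tfrac{d}{dt}x(t) = -\sgrad(H\circ\phi)|_{x(t)}\), and applying this vector field to the scalar \(f\circ\phi\) gives, using Definition~\ref{def:sgradPoisson},
\begin{align}
\label{eq:poissonmap-chain}
    \tfrac{d}{dt}\big(f\circ\phi\circ x\big)(t)
    = -\big(\sgrad(H\circ\phi)\rhd(f\circ\phi)\big)\big|_{x(t)}
    = -\{H\circ\phi,\, f\circ\phi\}_1\big|_{x(t)}.
\end{align}
Next I would invoke the Poisson-map hypothesis, \(\{H\circ\phi, f\circ\phi\}_1 = \{H,f\}_2\circ\phi\), to rewrite the right-hand side of \eqref{eq:poissonmap-chain} as \(-\{H,f\}_2|_{\phi(x(t))}\). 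Recognizing this as \(-\big(\sgrad H\rhd f\big)|_{\phi(x(t))}\), we conclude that the curve \(y(t)\coloneqq\phi(x(t))\) satisfies \(\tfrac{d}{dt}(f\circ y)(t) = -(\sgrad H\rhd f)|_{y(t)}\) for every \(f\). Since a tangent vector is determined by its action on all scalar functions, this is exactly the statement that \(\tfrac{d}{dt}y(t) = -\sgrad H|_{y(t)}\), i.e.\ \(y\) solves \((P_2,H)\). For the anti-Poisson case, the bracket identity carries an extra sign, \(\{H\circ\phi,f\circ\phi\}_1 = -\{H,f\}_2\circ\phi\), so the computation in \eqref{eq:poissonmap-chain} produces \(+(\sgrad H\rhd f)|_{y(t)}\); this is the equation of the time-reversed flow, and substituting \(t\mapsto -t\) recovers a genuine solution of \((P_2,H)\).

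The argument is essentially a one-line chain-rule manipulation once the bracket hypothesis is applied, so there is no serious analytic obstacle. The only point requiring a little care — and the step I would flag as the most error-prone rather than the most difficult — is the final passage from ``agreement of directional derivatives against all test functions'' to ``equality of the tangent vectors \(\tfrac{d}{dt}y\) and \(-\sgrad H|_y\).'' This relies on the identification of vector fields with derivations (\exref{ex:DerIsLieAlgebra} and the surrounding discussion), which holds on smooth manifolds; I would state it explicitly so that the nondegeneracy of the pairing between tangent vectors and differentials of functions is not silently assumed. Tracking the sign convention of Definition~\ref{def:HamiltonianDynamicalSystem} (the \emph{negative} symplectic gradient flow) consistently through both the Poisson and anti-Poisson cases is the other place where a slip would invalidate the conclusion, so I would keep the minus signs fully explicit throughout.
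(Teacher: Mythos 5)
Your proof is correct, and in fact the paper states Proposition~\ref{prop:PoissonMapHamiltonianSystem} without any proof (it is asserted as a standard fact in the preliminaries), so your chain-rule argument on observables is precisely the canonical argument that fills this gap. You apply Definition~\ref{def:PoissonMap} with the right substitution, keep the minus sign of Definition~\ref{def:HamiltonianDynamicalSystem} consistent, and correctly note that differentials of smooth functions separate tangent vectors, which justifies passing from \(\tfrac{d}{dt}(f\circ y) = -(\sgrad H\rhd f)|_{y}\) for all test functions \(f\) to \(\tfrac{d}{dt}y = -\sgrad H|_{y}\); the sign flip and the reparametrization \(t\mapsto -t\) in the anti-Poisson case are likewise handled correctly. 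The only caveat worth registering is that the paper applies this proposition to the infinite-dimensional space \(\fX_{\div}^*(W)\), where the derivation--vector-field identification is taken formally, but that is a convention the paper itself adopts throughout.
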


\subsection{Symplectic Manifolds are Poisson}
\label{sec:SymplecticManifoldsArePoisson}
In elementary classical mechanics, the phase space of a mechanical system is introduced as the space of positions \(q\) and the corresponding momenta \(p\).  These position--momentum (\(qp\)) phase spaces are \emph{symplectic spaces} or, more generally, symplectic manifolds.  
Indeed, symplectic manifolds are special cases of Poisson manifolds.   

\begin{definition}\label{def:SymplecticManifold}
    A \textbf{symplectic manifold} is a manifold \(\Sigma\) equipped with a non-degenerate 2-form \(\sigma\in\Omega^2(\Sigma)\).
\end{definition}
By currying, the symplectic structure \(\sigma\) gives rise to a ``flat'' operator (analogous to the \(\flat,\sharp\) for an inner product structure)
\begin{align}
    \hat\flat_x\colon T_x\Sigma\xrightarrow{\rm linear} T_x^*\Sigma,\quad \hat\flat_x(\vec v)(\cdot) \coloneqq \sigma_x(\vec v,\cdot),\quad x\in\Sigma, \vec v\in T_x\Sigma.
\end{align}
The non-degeneracy of \(\sigma\) means that \(\hat\flat\) is invertible, defining \(\hat\sharp_x\coloneqq \hat\flat_x^{-1}\colon T_x^*\Sigma\to T_x\Sigma\).
Using a symplectic structure, one defines the symplectic gradient of a function \(f\in \continuity^\infty(\Sigma)\) by
\begin{align}
\label{eq:sgradForSymplectic}
    \sgrad f \coloneqq \hat\sharp(df),\quad\text{equivalently}\quad \ip_{\sgrad f}\sigma = df,
\end{align}
where \(\ip_{(\cdot)}(\cdot)\) is the interior product.
Every symplectic structure gives rise to a natural Poisson structure:
\begin{subequations}
\label{eq:PoissonFromSymplectic}
    \begin{align}
    \{f,g\}&\coloneqq -\sigma(\sgrad f,\sgrad g)\\& = -df\llbracket\sgrad g\rrbracket = dg\llbracket \sgrad f\rrbracket.
\end{align}
\end{subequations}

\begin{definition}\label{def:SymplectomorphicVectorField}
    A vector field \(\vec X\in\fX(\Sigma)\) is \textbf{symplectomorphic} if \(\LD_{\vec X}\sigma = 0\).  That is, the flow \(\varphi^t\colon\Sigma\to\Sigma\) generated by \(\vec X\) (\({\partial\over\partial t}\varphi^t = \vec X\circ\varphi^t, \varphi^0 = \id_\Sigma\)) preserves the symplectic form \(\varphi^{t*}\sigma = \sigma\).
\end{definition}
\begin{proposition}\label{prop:LiouvilleTheorem}
    All Hamiltonian vector fields are symplectomorphic (Liouville theorem).  On simply connected symplectic manifolds, all symplectomorphic vector fields are also Hamiltonian.
\end{proposition}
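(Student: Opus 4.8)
The statement is Proposition~\ref{prop:LiouvilleTheorem}: all Hamiltonian vector fields are symplectomorphic (Liouville), and on simply connected symplectic manifolds the converse holds. My plan is to argue purely through Cartan's magic formula together with the defining relation $\ip_{\sgrad f}\sigma = df$ from \eqref{eq:sgradForSymplectic}, and to treat the two directions separately.

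For the forward direction, I would start from the definition of symplectomorphic in \defref{def:SymplectomorphicVectorField}, namely $\LD_{\vec X}\sigma = 0$, and specialize to $\vec X = \sgrad f$. The key tool is Cartan's magic formula $\LD_{\vec X} = d\,\ip_{\vec X} + \ip_{\vec X}\,d$. Applying it to $\sigma$ gives
\begin{align}
\LD_{\sgrad f}\sigma = d(\ip_{\sgrad f}\sigma) + \ip_{\sgrad f}(d\sigma).
\end{align}
The first term is $d(df) = 0$ because $\ip_{\sgrad f}\sigma = df$ by \eqref{eq:sgradForSymplectic} and $d\circ d = 0$. The second term requires that $\sigma$ be closed, $d\sigma = 0$; this is where I must be slightly careful, since \defref{def:SymplecticManifold} as stated in the excerpt only demands that $\sigma$ be non-degenerate. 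I would note that the symplectic 2-form in our setting is $\sigma = d\vartheta$, the exterior derivative of the Liouville 1-form, so $d\sigma = d(d\vartheta) = 0$ automatically; more generally the standing convention for symplectic forms includes closedness, so I would invoke it explicitly. With $d\sigma = 0$ both terms vanish and $\LD_{\sgrad f}\sigma = 0$, establishing the forward claim.

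For the converse, suppose $\vec X$ is symplectomorphic, so $\LD_{\vec X}\sigma = 0$. Running Cartan's formula backwards and using $d\sigma = 0$ yields $d(\ip_{\vec X}\sigma) = 0$, i.e.\ the 1-form $\alpha \coloneqq \ip_{\vec X}\sigma$ is closed. On a simply connected manifold the first de Rham cohomology vanishes, so every closed 1-form is exact: $\alpha = df$ for some $f \in \continuity^\infty(\Sigma)$. Then $\ip_{\vec X}\sigma = df = \ip_{\sgrad f}\sigma$ by the defining relation \eqref{eq:sgradForSymplectic}, and non-degeneracy of $\sigma$ (equivalently, invertibility of $\hat\flat$) forces $\vec X = \sgrad f$. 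Hence $\vec X$ is Hamiltonian.

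I expect the only genuine obstacle to be bookkeeping rather than depth: making sure that closedness of $\sigma$ is available, and invoking the Poincar\'e lemma in the correct form (vanishing of $H^1_{\mathrm{dR}}$) for the simply connected case. The algebra via Cartan's formula is routine once $d\sigma = 0$ is in hand, and the final inversion step is immediate from non-degeneracy. I would therefore present the proof compactly, flagging the closedness assumption and citing the Poincar\'e lemma, rather than developing any of these standard facts from scratch.
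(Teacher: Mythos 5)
Your proof is correct and takes essentially the same route as the paper's: Cartan's formula with \(d\sigma = 0\) and \(\ip_{\sgrad f}\sigma = df\) for the forward direction, and exactness of the closed 1-form \(\ip_{\vec X}\sigma\) on a simply connected manifold, plus non-degeneracy, for the converse. Your flag about closedness is fair but harmless, since the paper's proof likewise invokes \(d\sigma = 0\) even though \defref{def:SymplecticManifold} states only non-degeneracy.
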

\begin{proof}
    By Cartan's formula \(\LD_{\sgrad f}\sigma \overset{d\sigma=0}{=} d\ip_{\sgrad f}\sigma \overset{\eqref{eq:sgradForSymplectic}}{=} ddf = 0\).
    Conversely, if \(\LD_{\vec X}\sigma = 0\), then \(d\ip_{\vec X}\sigma=0\).  On simply-connected \(\Sigma\) this implies that there exists \(f\) so that \(df = \ip_{\vec X}\sigma\); that is, \(\vec X = \sgrad f\).
\end{proof}
\begin{example}[\(\bq\bp\)-space]
\label{ex:qpSpace}
    The space of positions \(\bq\in\RR^m\) and momenta \(\bp\in\RR^m\) of a generic finite dimensional mechanical system form a symplectic space \(\Sigma = \RR^{2m}\ni(\bq,\bp)\) with symplectic form \(\sigma = \sum_{\sfi=1}^m dp_{\sfi}\wedge dq_{\sfi}\).  Note that \(\sigma = d\vartheta\) where \(\vartheta = \sum_{\sfi=1}^m p_{\sfi}dq_{\sfi}\) is called the Liouville 1-form.
    Now, given any function \(H\in\continuity^\infty(P)\), the Hamiltonian dynamical system (\defref{def:HamiltonianDynamicalSystem}) is given by \({d\over dt}q_{\sfi}(t) = {\partial H\over\partial p_{\sfi}}\), \({d\over dt}p_{\sfi}(t) = -{\partial H\over\partial q_{\sfi}}\).  One checks that this \(\sgrad H\) satisfies \(\ip_{\sgrad H}\sigma = dH = \sum_{\sfi=1}^m {\partial H\over\partial q_{\sfi}}dq_{\sfi} + {\partial H\over\partial p_{\sfi}}dp_{\sfi}\).
\end{example}
\begin{example}[Cotangent bundles]
\label{ex:CotangentBundles}
The coordinate-free version of \exref{ex:qpSpace} is the cotangent bundle \(\Sigma = T^*Q\) over a space \(Q\) of positions.  Each element \((q,p)\in T^*Q\) takes the form of \(q\in Q\) and \(p\in T_q^*Q\).  Let \(\pi\colon T^*Q\to Q\), \((q,p)\mapsto q\), denote the projection from the bundle to the base.  Then we can define the Liouville 1-form \(\vartheta\in\Omega^1(\Sigma)\) by \(\vartheta_{(q,p)}\llbracket \vec X\rrbracket \coloneqq \langle p| d\pi(\vec X) \rangle \) for \(\vec X\in T_{(q,p)}(\Sigma)\).
The symplectic form \(\sigma\) for \(\Sigma\) is given by \(\sigma = d\vartheta\).
\end{example}

\subsubsection{Lifted Map and Lifted Vector Field}
\label{sec:LiftedMapAndLiftedVectorField}

There is a general way for constructing a symplectomorphic map on a cotangent bundle \(T^*Q\).  Suppose \(\varphi\colon Q\to Q\) is a map on \(Q\).  Then its \textbf{lifted map}, given by \(\tilde\varphi \colon T^*Q\to T^*Q\),
\(\tilde\varphi(q,p) \coloneqq (\varphi(q),(\varphi^{-1})^*p)\), is a symplectomorphism.  That is, \(\tilde\varphi^*\sigma = \sigma\).

Similarly, each vector field \(\vec X\in \fX(Q)\) can be lifted into a symplectomorphic vector field \(\tilde{\vec X}\), whose flow map is the lifted map of the flow map generated by \(\vec X\).  Concretely, \(\tilde{\vec X} = -\sgrad H\) where \(H\colon T^*Q\to\RR, H(q,p)\coloneqq \langle p|\vec X\rangle\).

\subsection{Poisson Manifolds have Symplectic Foliations}
\label{sec:PoissionSymplectic}
We have seen that every symplectic manifold is Poisson.  What about the converse?  It turns out that every Poisson manifold is foliated into (\ie\@ decomposed into a disjoint union of) symplectic submanifolds, as discussed below.

If one is given a Poisson manifold \((P,\{\cdot,\cdot\})\), one can reconstruct the \(\hat\sharp\) operator so that the symplectic gradient of \defref{def:sgradPoisson} can be represented by \(\sgrad f = \hat\sharp df\).  If the \(\hat\sharp\) operator from a Poisson structure is invertible, then the Poisson manifold is symplectic with symplectic structure \(\hat\flat\coloneqq\hat\sharp^{-1}\).  A more general Poisson manifold may have a non-invertible \(\hat\sharp\).  Remarkably, even when \(\hat\sharp\) is not invertible, the subspace field \(\im(\hat\sharp_x)\subset T_xP\) is integrable; \ie\@ \(P\) is foliated into submanifolds with \(\im(\hat\sharp_x)\) being the tangent spaces of these submanifolds.
These submanifolds are symplectic since \(\hat\sharp\) is invertible on their tangent spaces.  Each of these submanifolds is called a \textbf{symplectic leaf}.

Note that since \(\sgrad f = \hat\sharp df\), we have \(\sgrad f\in \im(\hat\sharp)\) tangential to the symplectic leaves.  Therefore, every Hamiltonian dynamical system on a Poisson manifold stays in a symplectic leaf.

\begin{definition}[Casimir]
\label{def:Casimir}
    A function \(C\colon P\to\RR\) on a Poisson manifold \(P\) is called a \textbf{Casimir} if it is constant on each symplectic leaf.
\end{definition}
\begin{proposition}
    Each Casimir function is conserved under any Hamiltonian dynamical system on a Poisson manifold.
\end{proposition}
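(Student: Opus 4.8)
The plan is to differentiate the Casimir along a Hamiltonian trajectory and show the derivative vanishes, using that the generating vector field $-\sgrad H$ is tangent to the symplectic leaf on which $C$ is locally constant. Concretely, let $x(t)\in P$ solve the Hamiltonian ODE $\tfrac{d}{dt}x(t) = -\sgrad H|_{x(t)}$ from \defref{def:HamiltonianDynamicalSystem}. By the chain rule,
\begin{align}
    \tfrac{d}{dt}C(x(t)) = dC\llbracket \tfrac{d}{dt}x(t)\rrbracket = -\,dC\llbracket \sgrad H|_{x(t)}\rrbracket.
\end{align}
So it suffices to show that $dC$ annihilates $\sgrad H$ at every point.

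The key step is to locate $\sgrad H$ inside the tangent space of a symplectic leaf. Since $\sgrad f = \hat\sharp\, df$ (\defref{def:sgradPoisson}), we have $\sgrad H|_x\in\im(\hat\sharp_x)$, and by the symplectic foliation discussed in \secref{sec:PoissionSymplectic} the subspace $\im(\hat\sharp_x)$ is exactly the tangent space to the symplectic leaf through $x$. Because $C$ is a Casimir, it is constant on that leaf (\defref{def:Casimir}), so its differential $dC_x$ vanishes on $T_x(\text{leaf}) = \im(\hat\sharp_x)$; in particular $dC\llbracket \sgrad H|_x\rrbracket = 0$. Substituting into the displayed identity gives $\tfrac{d}{dt}C(x(t)) = 0$, i.e.\ $C$ is conserved.

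Equivalently, I can phrase the same computation through the Poisson bracket: since $\{f,\cdot\} = \sgrad(f)\rhd(\cdot)$, one has $dC\llbracket\sgrad H\rrbracket = \{H,C\}$, whence $\tfrac{d}{dt}C(x(t)) = -\{H,C\}$, and the Casimir property is precisely $\{C,g\}=0$ for all $g$ (equivalently $\hat\sharp\, dC = 0$). A still shorter route simply invokes the already-established fact that every Hamiltonian flow stays within a single symplectic leaf: the trajectory lies in one leaf and $C$ is constant on leaves, so $C\circ x$ is constant.

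There is essentially no serious obstacle, because all the heavy machinery---integrability of $\im(\hat\sharp)$ into symplectic leaves and the tangency of Hamiltonian vector fields to those leaves---is already supplied in \secref{sec:PoissionSymplectic}. The only point demanding care is the equivalence between ``$C$ constant on the leaf'' and ``$dC$ annihilates $\im(\hat\sharp)$''; this is immediate once one records that $\im(\hat\sharp_x)$ is the full tangent space of the leaf at $x$, so I would state that tangency explicitly rather than treat it as folklore.
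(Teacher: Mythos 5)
Your proof is correct and is essentially the paper's argument: the paper's one-line proof---every Hamiltonian flow stays in a symplectic leaf, on which any Casimir is constant---is exactly the ``still shorter route'' you mention at the end, and your longer computation merely unpacks it infinitesimally via $\sgrad H|_x \in \im(\hat\sharp_x) = T_x(\text{leaf})$ and $dC|_{T_x(\text{leaf})} = 0$, using the same foliation facts from \secref{sec:PoissionSymplectic}. No gap; the extra care you take in stating the tangency $\im(\hat\sharp_x) = T_x(\text{leaf})$ explicitly is a reasonable refinement but not a different proof.
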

\begin{proof}
    Every Hamiltonian flow stays in a symplectic leaf, on which any Casimir function is constant.
\end{proof}

\subsection{Dual Lie Algebras are Poisson}
\label{sec:DualLieAlgArePoisson}
Lie algebra: \((V,[\cdot,\cdot])\).  Dual Lie algebra: its dual vector space \(V^*\).  The Lie--Poisson structure:
\begin{subequations}
\label{eq:LiePoissonBracket}
    \begin{align}
    \label{eq:LiePoissonBracketA}
    &\{\cdot,\cdot\}\colon \continuity^\infty(V^*)\times \continuity^\infty(V^*)\to \continuity^\infty(V^*)\\
    \label{eq:LiePoissonBracketB}
    &\{f,g\}_\alpha\coloneqq \left\langle\alpha\,\Big|\,\left[df_{\alpha},dg_\alpha\right] \right\rangle,\quad\alpha\in V^*, f,g\in \continuity^\infty(V^*).
\end{align}
\end{subequations}
Here \(df_\alpha\in T^*_{\alpha}V^*\) (and \(dg_\alpha\)) is the differential of \(f\) at \(\alpha\), which can be viewed as an element of \(V\) (by \(T_\alpha^*V^* = V^{**} = V\)).  Thus it makes sense to take \([\cdot,\cdot]\) between \(df_\alpha, dg_\alpha\), returning another element in \(V\), which is finally paired with the covector \(\alpha\in V^*\).

\begin{example}[Dual space of functions on a Poisson space]
\label{ex:DualSpaceOfFunctionsOnPoissonSpace}
Here is an example of a dual Lie algebra. Recall that \(\continuity^\infty(P)\) of a Poisson space \((P,\{\cdot,\cdot\})\) is a Lie algebra with Lie bracket being \(\{\cdot,\cdot\}\). Its dual space, \(\continuity^\infty(P)^*\), is the space of distributions.  A subset of distributions is the space of point measures, identified with \(P\) itself, \(P\subset \continuity^\infty(P)^*\) with the dual pairing being the point evaluation.  The Lie--Poisson structure \eqref{eq:LiePoissonBracket} on \(P\subset \continuity^\infty(P)^*\) as in the dual Lie algebra is just the original Poisson structure on \(P\).
\end{example}

\begin{proposition}
    The symplectic gradient \(\sgrad f\in\fX(V^*)\) of a function \(f\colon V^*\to\RR\) on the dual Lie algebra \(V^*\) with the Lie--Poisson bracket \eqref{eq:LiePoissonBracket} is given by
    \begin{align}
    \label{eq:coadjointsgrad}
        \sgrad f|_{\alpha} = \ad_{df_\alpha}^\adjoint \alpha,\quad\alpha\in V^*
    \end{align}
\end{proposition}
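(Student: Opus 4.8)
The plan is to verify the characterizing property of $\sgrad f$ directly against the Lie--Poisson bracket \eqref{eq:LiePoissonBracket}, reducing the whole statement to the definition of $\ad^\adjoint$ as the transpose of $\ad$.

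First I would recall from \defref{def:sgradPoisson} that the symplectic gradient is the vector field obtained by currying the bracket, so that $\{f,g\} = \sgrad f\rhd g = dg\llbracket\sgrad f\rrbracket$ for every $g\in\continuity^\infty(V^*)$. Evaluated at a point $\alpha\in V^*$ this reads $dg_\alpha\llbracket\sgrad f|_\alpha\rrbracket = \{f,g\}_\alpha$. Next I would fix the linear-algebraic identifications. Since $V^*$ is a vector space, its tangent space at any point is canonically $T_\alpha V^* = V^*$, so $\sgrad f|_\alpha$ is itself an element of $V^*$; dually $T_\alpha^* V^* = V^{**} = V$, so the differentials $df_\alpha$ and $dg_\alpha$ are elements of $V$ (precisely the convention already invoked in \eqref{eq:LiePoissonBracketB}). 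With these identifications the directional derivative becomes a dual pairing, $dg_\alpha\llbracket\sgrad f|_\alpha\rrbracket = \langle \sgrad f|_\alpha \mid dg_\alpha\rangle$, between the vector $\sgrad f|_\alpha\in V^*$ and the covector $dg_\alpha\in V$.

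Then I would substitute the Lie--Poisson bracket and apply the definition of the coadjoint operator. By \eqref{eq:LiePoissonBracketB} we have $\{f,g\}_\alpha = \langle\alpha\mid[df_\alpha,dg_\alpha]\rangle$, and since $\ad^\adjoint$ is the transpose of the linear map $\ad$ of \defref{def:ad}, it holds that $\langle \ad_{df_\alpha}^\adjoint \alpha \mid dg_\alpha\rangle = \langle\alpha\mid \ad_{df_\alpha}dg_\alpha\rangle = \langle\alpha\mid[df_\alpha,dg_\alpha]\rangle$. Chaining the three equalities yields $\langle\sgrad f|_\alpha\mid dg_\alpha\rangle = \langle\ad_{df_\alpha}^\adjoint\alpha\mid dg_\alpha\rangle$ for all $g$. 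Finally, as $g$ ranges over smooth functions its differential $dg_\alpha$ ranges over all of $V$ (take $g$ linear, i.e.\ $g(\beta)=\langle\beta\mid v\rangle$, so $dg_\alpha = v$ is arbitrary), so non-degeneracy of the pairing forces $\sgrad f|_\alpha = \ad_{df_\alpha}^\adjoint\alpha$, as claimed.

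The only real obstacle is bookkeeping the dual identifications consistently: one must keep straight that $\sgrad f|_\alpha$ lives in $V^*$ while $df_\alpha,dg_\alpha$ live in $V$, and that $\ad^\adjoint\colon V^*\to V^*$ is the transpose of $\ad\colon V\to V$, so that the single expression $\langle\alpha\mid[\cdot,\cdot]\rangle$ can be read as a pairing against either slot. Once the identifications are fixed, nothing beyond the definition of a transpose is required.
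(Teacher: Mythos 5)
Your proposal is correct and follows essentially the same route as the paper's proof: rewrite the Lie--Poisson bracket \(\{f,g\}_\alpha = \langle\alpha\mid\ad_{df_\alpha}(dg_\alpha)\rangle\) using \defref{def:ad}, pass to the transpose to get \(\langle\ad_{df_\alpha}^\adjoint\alpha\mid dg_\alpha\rangle\), and read off \(\sgrad f\) from \defref{def:sgradPoisson}. Your extra care in fixing the identifications \(T_\alpha V^* = V^*\), \(T_\alpha^* V^* = V\) and in noting that linear functions \(g\) make \(dg_\alpha\) range over all of \(V\) only makes explicit what the paper leaves implicit.
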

\begin{proof}
    Using the adjoint representation of \defref{def:ad}, rewrite \eqref{eq:LiePoissonBracketB} as \(\{f,g\}_\alpha = \langle \alpha | \ad_{df_\alpha}(dg_\alpha)\rangle = \langle \ad_{df_\alpha}^\intercal \alpha | dg_\alpha\rangle = \ad_{df_\alpha}^\intercal \rhd g\).  Eq.~\eqref{eq:coadjointsgrad} follows by \defref{def:sgradPoisson}.
\end{proof}

\begin{definition}[Coadjoint action]
    Let \(\vec v\in V\) be an element of a Lie algebra \(V\).  The operator \(\ad_{\vec v}^\intercal\colon V^*\to V^*\) is an \textbf{infinitesimal coadjoint action} by \(\vec v\).
\end{definition}
\begin{definition}[Coadjoint orbits]
\label{def:CoadjointOrbits}
    Let \(\alpha_0\in V^*\) be a point in a dual Lie algebra \(V^*\). Consider the set \(\cO_{\alpha_0}\) of all possible endpoints \(\gamma(1)\) of arbitrary paths \(\gamma\colon [0,1]\to V^*\) starting at \(\gamma(0) = \alpha_0\) so that \(\gamma'(t) = \ad_{\vec v(t)}^\intercal (\gamma(t))\) for some \(\vec v(t)\in V\).  That is, \(\cO_{\alpha_0}\) is the set of points that can be reached by infinitesimal coadjoint actions.  The set \(\cO_{\alpha_0}\subset V^*\) is called the \textbf{coadjoint orbit} of \(\alpha_0\).
\end{definition}

The following proposition is an immediate consequence of \eqref{eq:coadjointsgrad}.
\begin{proposition}
    Any path of a Hamiltonian dynamical system on any dual Lie algebra stays on a coadjoint orbit.
\end{proposition}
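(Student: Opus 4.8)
The plan is to read off the claim directly from the explicit form \eqref{eq:coadjointsgrad} of the symplectic gradient on a dual Lie algebra together with \defref{def:CoadjointOrbits} of a coadjoint orbit; no estimates or approximations are needed, since the Hamiltonian equation of motion turns out to be literally an instance of the defining equation for a coadjoint orbit.

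First I would fix a solution path $\alpha\colon[0,T]\to V^*$ of the Hamiltonian system $(V^*,H)$. By \defref{def:HamiltonianDynamicalSystem} it satisfies
\begin{align}
    \tfrac{d}{dt}\alpha(t) = -\sgrad H|_{\alpha(t)}.
\end{align}
The only point requiring care here is the standing identification $T^*_{\alpha}V^* \cong V^{**}\cong V$, under which the differential $dH_{\alpha}$ of the Hamiltonian is a genuine Lie algebra element of $V$; this is precisely the identification that makes \eqref{eq:LiePoissonBracket} and \eqref{eq:coadjointsgrad} meaningful in the first place.

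Second I would substitute the Lie--Poisson symplectic gradient \eqref{eq:coadjointsgrad}, namely $\sgrad H|_{\alpha} = \ad_{dH_{\alpha}}^\intercal\alpha$, to rewrite the equation of motion as an infinitesimal coadjoint action:
\begin{align}
    \tfrac{d}{dt}\alpha(t) = -\ad_{dH_{\alpha(t)}}^\intercal\alpha(t) = \ad_{\vec v(t)}^\intercal\alpha(t),\quad \vec v(t)\coloneqq -dH_{\alpha(t)}\in V.
\end{align}
Since $V$ is closed under negation, $\vec v(t)$ is a well-defined $V$-valued curve.

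Finally I would match this against \defref{def:CoadjointOrbits}: the coadjoint orbit $\cO_{\alpha(0)}$ is exactly the set of endpoints of paths $\gamma$ with $\gamma(0)=\alpha(0)$ obeying $\gamma'(t)=\ad_{\vec v(t)}^\intercal\gamma(t)$ for some curve $\vec v(t)\in V$. The Hamiltonian trajectory $\alpha$ is itself such a path, with the distinguished generator $\vec v(t)=-dH_{\alpha(t)}$, so every $\alpha(t)$ is reachable from $\alpha(0)$ by infinitesimal coadjoint actions and hence lies in $\cO_{\alpha(0)}$. The main ``obstacle'' is therefore purely bookkeeping: correctly invoking the identification $T^*_\alpha V^*\cong V$ so that $\ad^\intercal_{dH_\alpha}$ is defined, and tracking the sign so that the generator $-dH_{\alpha(t)}$ is recognized as an element of $V$. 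Equivalently, one may regard this as the special case of the general Poisson principle of \secref{sec:PoissionSymplectic}, that Hamiltonian flows remain on symplectic leaves, once one knows that the coadjoint orbits are precisely the symplectic leaves of the Lie--Poisson structure.
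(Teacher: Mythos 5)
Your proposal is correct and follows essentially the same route as the paper: the appendix proof likewise observes that, by \eqref{eq:coadjointsgrad}, the Hamiltonian vector field on a dual Lie algebra is an infinitesimal coadjoint action, so Hamiltonian trajectories are by definition paths generated by coadjoint actions and hence stay on a coadjoint orbit. Your version merely spells out the bookkeeping (the identification \(T_\alpha^* V^*\cong V\), the sign of the generator) that the paper leaves implicit, which is fine but not a different argument.
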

\begin{proof}
    Hamiltonian dynamical systems (\defref{def:HamiltonianDynamicalSystem}) is in the form of coadjoint actions by \eqref{eq:coadjointsgrad}.  Therefore they generate paths that lie in coadjoint orbits.
\end{proof}

In fact, coadjoint orbits are exactly the symplectic leaves.
\begin{proposition}
\label{prop:CoadjointOrbitsCoincideWithSymplecticLeaves}
    Coadjoint orbits coincide with the symplectic leaves in any dual Lie algebra.
\end{proposition}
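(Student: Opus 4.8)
The plan is to show that at every point $\alpha \in V^*$ the tangent space to the coadjoint orbit through $\alpha$ equals the characteristic distribution whose integral leaves are the symplectic leaves; since both families of submanifolds are then the maximal integral submanifolds of one and the same (possibly singular) distribution, they must coincide. First I would recall the two characterizations side by side. By the discussion of symplectic foliations in \secref{sec:PoissionSymplectic}, the symplectic leaf through $\alpha$ is the maximal integral submanifold of the characteristic distribution $D^{\mathrm{symp}}_\alpha \coloneqq \im(\hat\sharp_\alpha) \subset T_\alpha V^*$, where $\hat\sharp$ is reconstructed from the Lie--Poisson bracket via $\sgrad f = \hat\sharp\, df$. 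On the other hand, \defref{def:CoadjointOrbits} presents the coadjoint orbit $\cO_\alpha$ as the accessible set of the distribution $D^{\mathrm{coadj}}_\alpha \coloneqq \{\ad_{\vec v}^\intercal \alpha : \vec v \in V\}$, i.e.\ the set of endpoints of paths whose velocity is everywhere an infinitesimal coadjoint action, so that $T_\alpha\cO_\alpha = D^{\mathrm{coadj}}_\alpha$.

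The heart of the argument is then a pointwise linear-algebra identity obtained from \eqref{eq:coadjointsgrad}. Since $\sgrad f|_\alpha = \hat\sharp_\alpha(df_\alpha) = \ad_{df_\alpha}^\intercal \alpha$, and since the differential $df_\alpha$ attains every value in $T_\alpha^* V^* \cong V$ as $f$ ranges over $\continuity^\infty(V^*)$ (for any target $\vec v \in V$ take $f$ linear with $df_\alpha = \vec v$), I obtain
\[
D^{\mathrm{symp}}_\alpha = \im(\hat\sharp_\alpha) = \{\ad_{\vec v}^\intercal \alpha : \vec v \in V\} = D^{\mathrm{coadj}}_\alpha.
\]
Thus the two generalized distributions agree at every $\alpha$, and in particular $T_\alpha\cO_\alpha = D^{\mathrm{symp}}_\alpha$. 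Modulo the integrability bookkeeping, this already forces the symplectic leaves and the coadjoint orbits to be the same submanifolds.

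The step I expect to be the main obstacle is not this linear identity but the upgrade from equality of distributions to equality of foliations, because $D^{\mathrm{coadj}}$ need not have constant rank. I would handle this by invoking the same Stefan--Sussmann integrability framework that already underlies the symplectic-foliation claim in \secref{sec:PoissionSymplectic}: it guarantees that $D^{\mathrm{coadj}}$ is integrable and that its maximal integral submanifolds are precisely the accessible sets of \defref{def:CoadjointOrbits}. Given that, the pointwise coincidence of the characteristic distributions immediately yields coincidence of the leaves. A complementary, more self-contained route would be to exhibit the Kirillov--Kostant--Souriau symplectic form directly on each coadjoint orbit and check it is the one induced by the ambient Lie--Poisson structure; but the distribution-matching argument is the shortest path and avoids re-proving the foliation theorem, so I would present that as the main line and mention the KKS form only as a remark.
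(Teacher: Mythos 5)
Your proposal is correct and follows essentially the same route as the paper's proof: identify the span of symplectic gradients $\{\sgrad f|_\alpha\} = \im(\hat\sharp_\alpha)$ with the span of infinitesimal coadjoint actions $\{\ad_{\vec v}^\intercal\alpha \mid \vec v \in V\}$ via \eqref{eq:coadjointsgrad} and the fact that $df_\alpha$ ranges over all of $V$, then let each distribution integrate into its respective family of submanifolds. Your explicit appeal to Stefan--Sussmann to handle the non-constant rank of these distributions is a sound way to make precise the integrability step that the paper's proof leaves implicit.
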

\begin{proof}
    At each point \(\alpha\in V^*\), the subspace \(\{\sgrad f|_{\alpha}|f\in \continuity^\infty(V^*)\}\) coincides with \(\im(\hat\sharp_\alpha)\) and, by \eqref{eq:coadjointsgrad}, \(\{\ad_{\vec v}^\intercal\alpha|\vec v\in V\}\).  \(\im(\hat\sharp)\) integrates into symplectic leaves, and \(\{\ad_{\vec v}^\intercal\alpha|\vec v\in V\}\) integrates into coadjoint orbits.  
\end{proof}

Note that these Poisson structures and coadjoint orbit structures in a dual Lie algebra are canonically derived directly from the Lie algebra structure.  An immediate consequence is the following proposition.

\begin{proposition}
\label{prop:AdjointOfLAHomIsPoisson}
Let \(U,V\) be two Lie algebras.  If \(A\colon U\to V\) is a Lie algebra homomorphism, then \(A^\intercal\colon V^*\to U^*\) is a Poisson map.  Likewise if \(A\colon U\to V\) is a Lie algebra anti-homomorphism, then \(A^\intercal\colon V^*\to U^*\) is an anti-Poisson map.
\end{proposition}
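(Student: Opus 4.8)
The plan is to verify the Poisson-map condition of \defref{def:PoissonMap} directly from the explicit formula \eqref{eq:LiePoissonBracket} for the Lie--Poisson bracket. Writing $\phi = A^\intercal\colon V^*\to U^*$, the goal is to show $\{f\circ A^\intercal, g\circ A^\intercal\}_{V^*} = \{f,g\}_{U^*}\circ A^\intercal$ for all $f,g\in\continuity^\infty(U^*)$ (with a global sign flip in the anti-homomorphism case). Since everything in \eqref{eq:LiePoissonBracketB} is built canonically from the Lie bracket and the dual pairing, I expect the whole statement to reduce to a short dualization computation.

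The key preliminary step is to identify the differential $d(f\circ A^\intercal)_\beta$ at a point $\beta\in V^*$. First I would use that $A^\intercal$ is linear, so its derivative is $A^\intercal$ itself, together with the chain rule, to write, for any tangent vector $\xi\in T_\beta V^* = V^*$,
\begin{align}
\langle d(f\circ A^\intercal)_\beta, \xi\rangle = \langle df_{A^\intercal\beta}, A^\intercal\xi\rangle.
\end{align}
Here $df_{A^\intercal\beta}$ is viewed as an element of $U$ via $T^*_{A^\intercal\beta}U^* = U^{**} = U$. Applying the defining adjoint relation $\langle A^\intercal\xi, u\rangle = \langle \xi, Au\rangle$ with $u = df_{A^\intercal\beta}$, the right-hand side equals $\langle\xi, A\,df_{A^\intercal\beta}\rangle$, so that, under $T^*_\beta V^* = V$,
\begin{align}
d(f\circ A^\intercal)_\beta = A\,df_{A^\intercal\beta}\in V,
\end{align}
and likewise for $g$.

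With this in hand I would substitute into \eqref{eq:LiePoissonBracketB} on $V^*$, obtaining $\{f\circ A^\intercal, g\circ A^\intercal\}_\beta = \langle \beta \mid [A\,df_{A^\intercal\beta}, A\,dg_{A^\intercal\beta}]_V\rangle$, then invoke the homomorphism hypothesis $[A\,\cdot, A\,\cdot]_V = A[\cdot,\cdot]_U$ to pull $A$ out of the bracket. Finally the adjoint relation $\langle\beta \mid A(\cdot)\rangle = \langle A^\intercal\beta \mid \cdot\rangle$ converts this into $\langle A^\intercal\beta \mid [df_{A^\intercal\beta}, dg_{A^\intercal\beta}]_U\rangle = \{f,g\}_{A^\intercal\beta}$, which is exactly the required identity. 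In the anti-homomorphism case the step $[A\,\cdot, A\,\cdot]_V = -A[\cdot,\cdot]_U$ introduces a single overall minus sign, yielding the anti-Poisson conclusion.

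I do not anticipate a genuine obstacle; the argument is a clean two-line dualization once the differential is identified. The only place demanding care is the bookkeeping of the canonical identifications $T^*_\alpha V^* \cong V$ (and the analogue on $U^*$) together with the consistent use of the transpose relation, since it is easy to conflate which pairing---$U$ with $U^*$ or $V$ with $V^*$---is in play at each stage. I would guard against this by annotating every pairing with the space in which it lives.
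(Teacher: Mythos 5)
Your proof is correct. For comparison: the paper gives no written proof of Proposition~\ref{prop:AdjointOfLAHomIsPoisson} at all — it is stated as ``an immediate consequence'' of the fact that the Lie--Poisson structure \eqref{eq:LiePoissonBracket} is canonically derived from the Lie bracket and the dual pairing, so any bracket-(anti-)preserving map dualizes to a bracket-(anti-)preserving map on functions. Your argument is precisely the explicit computation that substantiates this appeal to naturality. The only substantive step is the identification \(d(f\circ A^\intercal)_\beta = A\,df_{A^\intercal\beta}\in V\) under the canonical isomorphisms \(T^*_\beta V^*\cong V\) and \(T^*_{A^\intercal\beta}U^*\cong U\), which you obtain correctly from the linearity of \(A^\intercal\), the chain rule, and the defining relation \(\langle A^\intercal\xi, u\rangle = \langle \xi, Au\rangle\); substituting into \eqref{eq:LiePoissonBracketB} and pulling \(A\) out of the bracket via the (anti-)homomorphism hypothesis then yields the identity pointwise at every \(\beta\in V^*\). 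Two small virtues of your route over the paper's one-liner: it is insensitive to the overall sign convention chosen for the Lie--Poisson bracket, since the same formula is applied on both \(U^*\) and \(V^*\) and the conjugating steps are sign-neutral, and it makes the single minus sign in the anti-homomorphism case fully explicit, which is exactly the bookkeeping a bare appeal to canonicity leaves implicit. Your closing caution about tracking which dual pairing is in play at each stage is well placed and is indeed the only spot where this argument can go wrong.
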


\subsection{Momentum Maps}
Let \(P\) be a Poisson space.  Let \(V\) be a linear subspace of Hamiltonian vector fields (\defref{def:hamP}) on \(P\) with the inclusion map
\begin{align}
    I\colon V\xhookrightarrow{\rm linear}\ham(P).
\end{align}
\begin{definition}[Momentum map]
    A \textbf{momentum map} associated to \(I\colon V\hookrightarrow\ham(P)\) is a function \(J\colon P\to V^*\)
    so that \(\langle J(\cdot)|\vec v\rangle\colon P\to\RR\) is the Hamiltonian function that generates the Hamiltonian vector field \(I(\vec v)\) for all \(\vec v\).  That is,
    \begin{align}
    \label{eq:MomentumMapAssociatedToI}
        -\sgrad \langle J|\vec v\rangle = I(\vec v),\quad \vec v\in V. 
    \end{align}
\end{definition}
An explicit formula for a momentum map is
\begin{align}
\label{eq:MomentumMapFormula}
    J = -I^\intercal\circ \sgrad^{-\adjoint}.
\end{align}
To understand, recall \(\sgrad^{-1}\) is a map of type \(\sgrad^{-1}\colon \ham(P)\to \continuity^\infty(P)\), which is a Lie algebra homomorphism (\cf\@ \eqref{eq:sgradInverse}).   In particular, \(\sgrad^{-\adjoint}\) is a Poisson map from distributions \(\continuity^\infty(P)^*\) to \(\ham(P)^*\) (\propref{prop:AdjointOfLAHomIsPoisson}). Here we restrict \(\sgrad^{-\intercal}\) to \(P\subset \continuity^\infty(P)^*\); see \exref{ex:DualSpaceOfFunctionsOnPoissonSpace}.

\begin{proposition}[Momentum maps for Lie subalgebras are Poisson]
\label{prop:MomentumMapsArePoisson}
    If \(V\) is an (anti)-Lie subalgebra of \(\ham(P)\), \ie\@ \(I\) is a Lie algebra (anti-)homomorphism, then the momentum map \(J\) \eqref{eq:MomentumMapFormula} is an (anti-)Poisson map.
\end{proposition}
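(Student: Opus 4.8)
The plan is to read the momentum map off its closed form $J = -I^\intercal\circ\sgrad^{-\adjoint}$ (\eqref{eq:MomentumMapFormula}) and exhibit it as a composition of maps each of which is already known to be Poisson or anti-Poisson, and then propagate signs. The three factors are the inclusion $\iota\colon P\hookrightarrow\continuity^\infty(P)^*$ of $P$ as point measures, the transpose $\sgrad^{-\adjoint}\colon\continuity^\infty(P)^*\to\ham(P)^*$, and the transpose $I^\intercal\colon\ham(P)^*\to V^*$ together with the overall negation. Most of this assembly is already laid out in the text preceding the statement; the task of the proof is to package it and to make the parenthetical ``(anti-)'' match correctly.

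First I would record that $\sgrad\colon(\continuity^\infty(P),\{\cdot,\cdot\})\to(\fX(P),[\cdot,\cdot])$ is a Lie algebra homomorphism (\propref{prop:sgradIsLieAlgHom}) restricting to a linear isomorphism onto $\ham(P)=\im(\sgrad)$ modulo its kernel, so that $\sgrad^{-1}\colon\ham(P)\to\continuity^\infty(P)$ is itself a Lie algebra homomorphism (\eqref{eq:sgradInverse}). Applying \propref{prop:AdjointOfLAHomIsPoisson} to $A=\sgrad^{-1}$ then shows $\sgrad^{-\adjoint}$ is a Poisson map. Next, by \exref{ex:DualSpaceOfFunctionsOnPoissonSpace} the Lie--Poisson structure on $\continuity^\infty(P)^*$ restricts along $\iota$ to the original Poisson structure on $P$, so $\iota$ is a Poisson embedding and composes transparently. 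Finally, \propref{prop:AdjointOfLAHomIsPoisson} applied to $A=I$ gives that $I^\intercal$ is Poisson when $I$ is a homomorphism (i.e. $V$ a Lie subalgebra) and anti-Poisson when $I$ is an anti-homomorphism. Composing these and appending the sign yields the asserted (anti-)Poisson conclusion.

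As an independent check I would run a self-contained verification on the linear coordinate functions $\ell_{\vec v}=\langle\cdot|\vec v\rangle$ on $V^*$; these separate points, generate $\continuity^\infty(V^*)$ as an algebra, and both sides of the Poisson identity are derivations in each slot, so agreement on the $\ell_{\vec v}$ suffices. On these functions $\ell_{\vec v}\circ J=J_{\vec v}:=\langle J|\vec v\rangle$, and the defining relation $\sgrad J_{\vec v}=-I(\vec v)$ (\eqref{eq:MomentumMapAssociatedToI}) combined with the homomorphism property of $\sgrad$ gives $\sgrad\{J_{\vec v},J_{\vec w}\}_P=[\sgrad J_{\vec v},\sgrad J_{\vec w}]=[I\vec v,I\vec w]$. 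When $I$ is a homomorphism this equals $I[\vec v,\vec w]=-\sgrad J_{[\vec v,\vec w]}$, while the Lie--Poisson bracket (\eqref{eq:LiePoissonBracketB}) gives $\{\ell_{\vec v},\ell_{\vec w}\}_{V^*}\circ J=J_{[\vec v,\vec w]}$; comparing the two sides determines the sign and confirms the compositional count.

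The hard part will be the sign and Casimir bookkeeping rather than any analysis. Because $\sgrad$ has a kernel, $\sgrad^{-1}$ and hence $J$ are defined only modulo Casimirs, so I must argue that the Poisson-map identity is insensitive to this ambiguity, which follows because the ambiguous term has vanishing symplectic gradient and is differentiated away. More delicately, the negative sign in the Hamiltonian convention $\dot x=-\sgrad H$ (\defref{def:HamiltonianDynamicalSystem}) and the explicit $-$ in \eqref{eq:MomentumMapFormula} interact with the anti-Poisson character of the negation map on $V^*$; carefully propagating all of these is exactly what decides whether a Lie-subalgebra $V$ produces a Poisson or an anti-Poisson $J$. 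Getting this matching precisely right, so that ``(anti-)Poisson'' lines up with ``(anti-)homomorphism'' as stated, is the step that demands the most care.
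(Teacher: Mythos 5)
Your first paragraph is exactly the paper's proof: the paper disposes of the proposition in one line by writing \(J\) as the composition of the Poisson map \(\sgrad^{-\adjoint}\) (Poisson because \(\sgrad^{-1}\) is a Lie algebra homomorphism, \eqref{eq:sgradInverse}) with the (anti-)Poisson map \(I^\adjoint\) from \propref{prop:AdjointOfLAHomIsPoisson}, the restriction to \(P\subset\continuity^\infty(P)^*\) being handled via \exref{ex:DualSpaceOfFunctionsOnPoissonSpace}, so your approach coincides with the paper's. Your supplementary linear-functional check and your sign/Casimir caveats are sound and in fact more careful than the paper, whose one-line proof silently ignores the minus sign in \eqref{eq:MomentumMapFormula} (negation is anti-Poisson for the Lie--Poisson bracket \eqref{eq:LiePoissonBracket}), a convention wrinkle the paper absorbs elsewhere by working with the ``(minus)'' Lie--Poisson bracket (cf.\ \appref{app:HamiltonianFlowOnXDivStar}).
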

\begin{proof}
    \(J\) is a composition of a Poisson map \(\sgrad^{-\adjoint}\) and an (anti-)Poisson map \(I^\adjoint\) (\propref{prop:AdjointOfLAHomIsPoisson}).
\end{proof}

\begin{example}[Momentum maps for lifted vector fields on cotangent bundles]
\label{ex:MomentumMapsForLiftedVectorFields}
    Let \(P\) be a cotangent bundle \(T^*Q\).  Let \(V\subset\fX(Q)\) be a subspace of vector fields on \(Q\) and let \(I\colon V\to \ham(T^*Q)\), \(\vec v\mapsto\tilde{\vec v}\), be lifted vector fields (\secref{sec:LiftedMapAndLiftedVectorField}). Then we can write down the associated momentum map \(J\)  explicitly as \(\langle J(q,p)|\vec v\rangle =  \vartheta_{(q,p)}\llbracket I(\vec v)\rrbracket = \langle p|\vec v\rangle\), where \(\vartheta\) is the Liouville form (\exref{ex:CotangentBundles}).
\end{example}

\begin{example}[Momentum map associated to symplectomorphic group actions]
\label{ex:MomentumMapForSymplectomorphicGroupAction}
    Let \((\Sigma,\sigma)\) be a symplectic manifold.
    Let \(\Phi\colon G\xrightarrow{\rm hom}(\Sigma\to \Sigma)\) be a left action on \(\Sigma\) by a Lie group \(G\) so that \(\Phi_g^*\sigma = \sigma\) for all \(g\in G\).  That is, the group action produces symplectomorphic maps \(\Phi_g\colon \Sigma\to \Sigma\).
    Then \(d\Phi|_e\colon \frak g \to \fX(\Sigma)\) is a Lie algebra anti-homomorphism (\propref{prop:DifferentialOfAction}).
    Moreover, the image of \(d\Phi|_e\) is always a symplectomorphic vector field (\defref{def:SymplectomorphicVectorField}).
    Assuming \(\Sigma\) being simply-connected, by \propref{prop:LiouvilleTheorem}, these symplectomorphic vector fields are Hamiltonian vector fields.  
    That is, we have a Lie algebra anti-homomorphism \(I\coloneqq d\Phi|_e\colon \frak g \to \ham(\Sigma)\).
    Therefore the associated momentum map \(J\colon \Sigma\to\frak g^*\) is an anti-Poisson map (\propref{prop:MomentumMapsArePoisson}).

    Combining \eqref{eq:MomentumMapAssociatedToI} and \eqref{eq:sgradForSymplectic}, we have that the momentum maps for symplectomorphic actions satisfy
    \begin{align}
    \label{eq:MomentumMapForSymplectomorphicAction}
        -d\langle J|\vec v\rangle = \ip_{I(\vec v)}\sigma,\quad \vec v\in \frak g.
    \end{align}
    
\end{example}

\begin{example}[Momentum map associated to group actions on cotangent bundles]
\label{ex:MomentumMapForCotangentBundleAction}
    Let \(\Phi\colon G\xrightarrow{\rm hom}(Q\to Q)\) be a left action on \(Q\) by a Lie group \(G\).  Then \(d\Phi|_e\colon \frak g \to \fX(Q)\) is a Lie algebra anti-homomorphism (\propref{prop:DifferentialOfAction}).
    The vector fields can be lifted to the cotangent bundle using \secref{sec:LiftedMapAndLiftedVectorField} and \exref{ex:MomentumMapsForLiftedVectorFields}, giving us a Lie algebra anti-homomorphism \(I\coloneqq\widetilde{d\Phi|_e}\colon \frak g\to \ham(T^*Q)\).
    Therefore, the associated momentum map \(J\colon T^*Q\to \frak g^*\) is an anti-Poisson map (\propref{prop:MomentumMapsArePoisson}).
\end{example}

\section{Relation to Non-Holonomicity}
\label{app:RelationToNonHolonomicity}
Here we expand the discussion on the relation between the lack of Poisson structure for the dual space \(\fB_{\div}^*\) in \secref{sec:LackOfPoissonStructure} and the \emph{non-holonomicity} problem in the literature.

The problem here, that the space of discrete vector fields is not Lie algebraically closed, has been observed in previous work in computer graphics that searches for structure preserving discretizations of the Euler equation.
In the setup of \cite{Pavlov:2011:SPD,Mullen:2009:EPI}, the discretization is constructed by replacing the function space \(C^\infty(W)\) by a finite dimensional function space \(\RR^N\) (\eg\@ the piecewise linear functions over a simplicial mesh with \(N\) vertices).  This leads to that the Lie algebra \(\fX_{\div}(W)\) is replaced by a finite dimensional Lie algebra \(\so(N)\) of stochastic matrices as the generators of volume-preserving diffeomorphisms.   Although \(\so(N)\) is a Lie algebra, the only vector fields that are computationally meaningful (representing discrete directional derivative operators) are restricted to a subspace \(S\subset \so(N)\) characterized by an additional sparsity condition on the stochastic matrices. 
This sparsity constraint is referred to as a \emph{non-holonomic constraint} \cite[\S~2.4, \S~4.1]{Pavlov:2011:SPD}, which is equivalent to that the subspace \(S\) is not Lie algebraically closed.
The same non-holonomic constraint is present in the spectral variational integrator of \cite[\S~3.2]{Liu:2015:MVFS}.
In these works, the resulting equations of motion are no longer pure coadjoint actions on the circulation variables like \eqref{eq:HamiltonianFlowAsLieAdvection}.  Instead, these equations are in their \emph{weak form}, where the circulation variable is restricted to stay on \(S^*\) and the coadjoint equation is only tested against vectors in \(S\).
In particular, the coadjoint orbit conservation is not guaranteed, as remarked at the end of \S~4 of \cite{Pavlov:2011:SPD}.
This is consistent with our observation that the dual space \(S^*\) of a non-holonomic velocity space \(S\) is not a Poisson space, since otherwise \(S^*\) would have been foliated into coadjoint orbits conserved by the methods of \cite{Mullen:2009:EPI,Pavlov:2011:SPD,Liu:2015:MVFS}.

One structure-preserving method in computer graphics that is able to circumvent the non-holonomic projection is the 2D vorticity method of Azencot \etal\@  \shortcite{Azencot:2014:FFS}.  This method follows the setup of \cite{Pavlov:2011:SPD} that the space of velocity is a non-holonomic \(S\subset\so(N)\).  However, instead of formulating a dynamical system for elements in \(S^*\), the phase space is the space of vorticity function \(\RR^N\), which is acted by unitary matrices generated by elements in \(S\subset\so(N)\).  
In particular, we obtain exact conservation on the \(\ell^2\)-norm of the \(\RR^N\) vorticity, and indeed the \(\ell^2\)-norm on \(\RR^N\) is one of the Casimirs.  
Our method uses a similar technique (\secref{sec:AuxiliarySymplecticSpace}) of maintaining an advected object to obtain coadjoint orbit preservation, but this time on the original fluid's Lie algebra \(\fX_{\div}\) rather than on \(\so(N)\) and the method works in both 2D and 3D.
\section{Proofs}
\label{app:Proofs}

\subsection{Derivation of \eqref{eq:DiscretePressureProjection}}
\label{app:DiscretePressureProjection}
The orthogonal projection \(\bg  = \bP_{\fB_{\div}}\bff\) of \(\bff\in\fB\) to \(\fB_{\div} = \ker(\bd)\) is the solution to 
\begin{align}
    \textstyle{\rm minimize}_{\bg}\,\frac{1}{2}(\bg - \bff)^\intercal\star(\bg-\bff)\quad\text{subject to}\quad \bd\bg = \bzero
\end{align}
The minimizing condition is given by \(\star(\bg-\bff) + \bd^\intercal\bp = \bzero\) for some Lagrange multiplier \(\bp\).  Solving the system
\begin{align}
    \begin{cases}
        \star(\bg - \bff) + \bd^\intercal\bp = \bzero\\
        \bd\bg = \bzero
    \end{cases}
\end{align}
yields \(\bp = (\bd\star^{-1}\bd^\intercal)^{-1}\bd\bff\) and \(\bg =\bff - \star^{-1}\bd^{\intercal}\bp =  \bff - \star^{-1}\bd^{\intercal}(\bd\star^{-1}\bd^\intercal)^{-1}\bd\bff\).  Therefore, \(\bP_{\fB_{\div}} = \id {}-{} \star^{-1}\bd^\intercal (\bd\star^{-1}\bd^\intercal)^{-1} \bd\).
\qed

\subsection{Proof of \thmref{thm:PressureProjectionIsExact}}
\label{app:PressureProjectionIsExact}
In this proof we will repeatedly using the linear algebraic fact that \(\langle \bP\ba,\bb\rangle = \langle\ba,\bb\rangle\) if \(\bP\) is an orthogonal projection to a subspace \(B\) respecting the inner product structure \(\langle\cdot,\cdot\rangle\), and \(\bb\) lies in the subspace \(B\).

Let \(\cI\bg\in\cI(\fB_{\div})\)  be an arbitrary interpolated divergence-free vector field. Then
\begin{align}
    &\llangle\cI\bP_{\fB_{\div}}(\bff) - \bP_{\fX_{\div}}\cI(\bff),\cI(\bg)\rrangle \\
    &=
    \llangle \cI\bP_{\fB_{\div}}(\bff),\cI(\bg)\rrangle - \llangle \bP_{\fX_{\div}}\cI(\bff),\cI(\bg)\rrangle\\
    &=
    \langle \bP_{\fB_{\div}}(\bff),\bg\rangle_{\fB} - \llangle\cI(\bff),\cI(\bg)\rrangle\\
    &=\langle\bff,\bg\rangle_{\fB}- \langle\bff,\bg\rangle_{\fB} = 0,
\end{align}
where we have used that \(\bP_{\fX_{\div}}\) is an orthogonal projection respecting \(\llangle\cdot,\cdot\rrangle\), \(\bP_{\fB_{\div}}\) is an orthogonal projection with respect to \(\langle\cdot,\cdot\rangle_{\fB}\), and \(\langle\cdot,\cdot\rangle_{\fB} = \llangle\cI(\cdot),\cI(\cdot)\rrangle\).
Therefore, \(\cI\bP_{\fB_{\div}}(\bff) - \bP_{\fX_{\div}}\cI(\bff)\) is orthogonal to \(\cI(\fB_{\div})\).
\qed

\subsection{Proof of \thmref{thm:IDaggerIsLeftInverse}}
\label{app:IDaggerIsLeftInverse}
In \eqref{eq:ArgminProblemForIDagger} if \(\vec u_{\sfp}\) is a sampling of a vector field \(\vec u = \cI(\bff)\) in \(\cI(\fB)\) at the particles, then the minimization problem \(\min_{\bff'\in\fB}\sum_{\sfp\in\cP}|\cI(\bff')_{\vec x_\sfp}-\cI(\bff)_{\vec x_\sfp}|^2\)
 can attain zero by \(\bff' = \bff\) uniquely, assuming that the minimization is unique guaranteed by sufficiently many samples.  Therefore \(\hat\cI^+(\vec x,\cI(\bff))\) recovers \(\bff\).
 \qed

\subsection{Proof of \propref{prop:DualSpaceOfDivFree}}
\label{app:DualSpaceOfDivFree}
Represent \(\div\) as a map to the space \(\Omega^n(W)\) of measures \(\fX(W)\xrightarrow{\div}\Omega^n(W)\), by \(\div\vec v = \LD_{\vec v} (d\mu) = d\ip_{\vec v}(d\mu)\in\Omega^n(W)\) where \(d\mu\) is the volume form.
Then 
\(0\to \fX_{\div}(W)\hookrightarrow \fX(W)\xrightarrow{\div}\Omega^n(W)\) is an exact sequence.  Take the dual of this sequence yields another exact sequence \(C^0(W)\xrightarrow{d}\Omega^1(W)\to\fX_{\div}^*(W)\to 0\).  One can check that \(\pm d\) is the adjoint of \(\div\). By the first-isomorphism theorem we conclude \(\fX_{\div}^*(W) = \Omega^1(W)/d\Omega^0(W)\).
\qed

\subsection{Proof of \propref{prop:HamiltonianFlowOnXDivStar}}
\label{app:HamiltonianFlowOnXDivStar}
The (minus) Lie--Poisson bracket \eqref{eq:LiePoissonBracket} on \(\fX_{\div}^*(W)\) is given by that for each functionals \(H,G\colon \fX_{\div}^*(W)\to\RR\) and \([\eta]\in\fX_{\div}^*(W)\)
\begin{align}
    \{H,G\}_{[\eta]} &= 
    \textstyle
    -\left\llangle [\eta]\,\middle|\,\left[ {\delta H\over\delta [\eta]}, {\delta G\over\delta[\eta]} \right]
 \right\rrangle \\
 &=
 \textstyle
 -\int_W \left\langle \eta\, \middle|\,
 \left[ {\delta H\over\delta [\eta]}, {\delta G\over\delta[\eta]} \right]
 \right\rangle \, d\mu\\
 &=
 \textstyle
 \underbrace{-\int_W \LD_{\delta H\over\delta [\eta]}\left\langle \eta \,\middle|\, {\delta G\over\delta[\eta]} \right\rangle\, d\mu}_{=\int_W\langle \eta|{\delta G\over\delta[\eta]}\rangle\, \LD_{\delta H\over\delta [\eta]}(d\mu) = 0}
 +\int_W \left\langle \LD_{\delta H\over\delta [\eta]}\eta\,\middle|\,{\delta G\over\delta[\eta]}\right\rangle\, d\mu.
\end{align}
Therefore the symplectic gradient (\defref{def:sgradPoisson}) \(\sgrad H\) of \(H\) is
\begin{align}
    \sgrad H|_{[\eta]} = [\LD_{\delta H\over\delta[\eta]}] = \LD_{\delta H\over\delta[\eta]}[\eta].
\end{align}
Hence the Hamiltonian dynamical system (\defref{def:HamiltonianDynamicalSystem}) \({\partial \over\partial t}[\eta] = -\sgrad H|_{[\eta]}\) is given by
\({\partial\over \partial t}[\eta] + \LD_{\frac{\delta H}{\delta[\eta]}}[\eta] = [0].\)
\qed

\subsection{Proof of \propref{prop:CoadjointOrbitOnCirculations}}
\label{app:CoadjointOrbitOnCirculations}
Based on the calculation in the proof of \propref{prop:HamiltonianFlowOnXDivStar} ( \appref{app:HamiltonianFlowOnXDivStar}), we have that the coadjoint action \(\ad_{\vec v}^\adjoint\colon \fX_{\div}^*(W)\to\fX_{\div}^*(W)\) for each \(\vec v\in\fX_{\div}(W)\) is given by
\begin{align}
    \ad_{\vec v}^\adjoint[\eta] = -\LD_{\vec v}[\eta].
\end{align}
We want to show, by \defref{def:CoadjointOrbits}, that (i) the resulting \([\eta_1]\) that solves \({\partial\over\partial t}[\eta_t] + \LD_{\vec v(t)}[\eta_t] = [0]\) with given \([\eta_0]\) and \(\vec v\colon[0,1]\in\fX_{\div}(W)\) can be expressed by \([\eta_0]=[\varphi^*\eta_1]\) for some \(\varphi\colon W\to W\) isotopic to the identity map, and (ii) the converse statement.

For (i), consider \(\varphi_t\colon W\to W\), \({\partial\over\partial t}\varphi_t = \vec v(t)\circ\varphi_t\), \(\varphi_0 = \id_W\), be the flow map integrated by the given time-dependent \(\vec v(t)\) from \(t=0\) to \(t=1\).  Then \([\eta_t]\coloneqq [(\varphi_t^{-1})^*\eta_0]\) is the solution to \({\partial\over\partial t}[\eta_t] + \LD_{\vec v(t)}[\eta_t] = [0]\).  Therefore, at \(t=1\) we have \([\eta_1]=[(\varphi^{-1})^*\eta_0]\), which implies \([\eta_0]=[\varphi^*\eta_1]\).

Conversely (ii), suppose \(\varphi\colon W\to W\) is isotopic to \(\id_W\).  Then there exists a smooth family of maps \(\varphi_t \colon W\to W\) so that \(\varphi_0 = \id_W\) and \(\varphi_1 = \id_W\).  Let \(\vec v(t)\) be the family of vector fields so that \({\partial\over\partial t}\varphi_t = \vec v(t)\circ\varphi_t\).  Then \([\eta_t]\coloneqq [(\varphi_t^{-1})^*\eta_0]\), with \([\eta_0]=[\varphi^*\eta_1]\) as a result, is the solution to \({\partial\over\partial t}[\eta_t] + \LD_{\vec v(t)}[\eta_t] = [0]\).  Therefore \([\eta_0]\) can reach \([\eta_1]\) by a continuous sequence of coadjoint action, and thus they lie on the same coadjoint orbit.
\qed

\subsection{Proof of \propref{prop:EulerEquationHamiltonian}}
\label{app:EulerEquationHamiltonian}
The variation of \eqref{eq:H_Euler} is given by
\begin{align}
\label{eq:DerivativeOfH_Euler}
    {\delta H_{\rm Euler}\over\delta[\eta]} = \sharp_{\fX_{\div}}[\eta]\eqqcolon\vec v\in \fX_{\div}(W)
\end{align}
which is the pressure projected \(\eta^\sharp\) (\cf\@ \eqref{eq:SharpXDiv}).
Substituting \eqref{eq:DerivativeOfH_Euler} to \eqref{eq:HamiltonianFlowAsLieAdvection} yields the Euler equation \eqref{eq:EulerEquationCovector} or \eqref{eq:ImpulseEquation}.
\qed

\subsection{Proof of \thmref{thm:IDiscreteEulerFlow}}
\label{app:IDiscreteEulerFlow}
The Hamiltonian \eqref{eq:DiscreteHamiltonianIT} (with \eqref{eq:DiscreteHamiltonian}) is given by 
\begin{align}
    (H_{\rm D}\circ \cI^\adjoint)([\eta]) &= {1\over 2}\langle \sharp_{\fB_{\div}}\cI^\adjoint[\eta],\sharp_{\fB_{\div}}\cI^\adjoint[\eta]\rangle_{\fB} \\
    &=
    {1\over 2}(\bP_{\fB_{\div}}\star^{-1}\cI^\adjoint[\eta])^\intercal
    \star 
    (\bP_{\fB_{\div}}\star^{-1}\cI^\adjoint[\eta]).
\end{align}
Its variation is given by
\begin{align}
    {\delta (H_{\rm D}\circ\cI^\adjoint)\over \delta [\eta]} = 
    \cI\star^{-1}\bP_{\fB_{\div}}^\intercal\star \bP_{\fB_{\div}}\star^{-1}\cI^\adjoint[\eta].
\end{align}
Note that since \(\bP_{\fB_{\div}}\) is an orthogonal projection with respect to the inner product structure \(\star\), we have the self-adjoint property \(\star^{-1}\bP_{\fB_{\div}}^\intercal\star  = \bP_{\fB_{\div}}\).  Then by idempotence of projection \(\bP_{\fB_{\div}}^2 = \bP_{\fB_{\div}}\), we obtain
\begin{align}
\label{eq:VariationOfDiscreteHamiltonian}
    {\delta (H_{\rm D}\circ\cI^\adjoint)\over \delta [\eta]} = 
    \cI\bP_{\fB_{\div}}\star^{-1}\cI^\adjoint[\eta].
\end{align}
Substituting \eqref{eq:VariationOfDiscreteHamiltonian} to \eqref{eq:HamiltonianFlowAsLieAdvection} we obtain \eqref{eq:IDiscreteEulerEquation}.
\qed

\subsection{\thmref{thm:SigmaEmulates}}
\label{app:SigmaEmulates}
This is a direct application of \propref{prop:PoissonMapHamiltonianSystem} on \(J_{\adv}\colon\Sigma\to\fX_{\div}^*(W)\) being an (anti-)Poisson map.

\subsection{Proof of \thmref{thm:IDiscreteEulerEquationSigma}}
\label{app:IDiscreteEulerEquationSigma}
By \eqref{eq:MomentumMapForSymplectomorphicAction}, with \eqref{eq:advLieAlgHom} \(\adv\colon \fX_{\div}(W) \to \ham(\Sigma)\subset\fX(\Sigma)\) being the infinitesimal group action, we have
\begin{align}
\label{eq:JadvSigma}
    -d\llangle J_{\adv}(\cdot) | \vec w\rrangle = \ip_{\adv_{\vec w}(s)}\sigma\quad \forall\,\vec w\in \fX_{\div}(W).
\end{align}
Note that we may express the left-hand side as \(-\llangle dJ_{\adv}|\vec w\rrangle\).
Now let us take the derivative of the Hamiltonian \( H_\Sigma = H_{\rm D}\circ\cI^\adjoint \circ J_{\adv}\):
\begin{align}
    dH_\Sigma &= \textstyle{\delta (H_{\rm D}\circ \cI^\adjoint)\over\delta[\eta]}\big\vert_{[\eta] = J_{\adv}(s)}dJ_{\adv}|_s\\
    &=\llangle  dJ_{\adv} | \vec v\rrangle,\quad\vec v = \cI\bP_{\fB_{\div}}\star^{-1}\cI^\adjoint J_{\adv}(s)
\end{align}
using \eqref{eq:VariationOfDiscreteHamiltonian}. 
Now apply \eqref{eq:JadvSigma}, which yields
\begin{align}
    dH_\Sigma = -\ip_{\adv_{\vec v} s}\sigma.
\end{align}
Therefore by \eqref{eq:sgradForSymplectic} we have 
\begin{align}
    \sgrad H_\Sigma|_s = -\adv_{\vec v}s.
\end{align}
Hence the Hamiltonian flow (\defref{def:HamiltonianDynamicalSystem}) for \(s\) is given by \eqref{eq:IDiscreteEulerEquationSigma}. 
\qed

\subsection{Proofs of \lemref{lem:IDagger1} and \lemref{lem:IDagger2}}
\label{app:IDagger}

The minimizer \(\bff\in\fB\) for \(\int_W|\cI(\bff) - \eta^\sharp|^2\, d\mu\) satisfies the optimality condition \(\llangle \cI(\bff) - \eta^\sharp,\cI(\mathring\bff)\rrangle = 0\) for all \(\mathring\bff\in\fB\).
Since \(\cI\colon (\fB,\langle\cdot,\cdot\rangle_{\fB})\to (\fX(W),\llangle\cdot,\cdot\rrangle)\) is an isometry, and \(\sharp\colon \Omega^1(W)\to\fX(W)\) respects the \(L^2\) product \(\llangle \cdot,\cdot \rrangle\), the condition becomes that for all \(\mathring\bff\in\fB\)
\begin{align}
    0&=\langle\bff,\mathring\bff\rangle_{\fB} - \llangle \eta|\cI(\mathring\bff)\rrangle\\
    &=\mathring\bff^\intercal \star\bff - \langle \cI^\adjoint\eta|\mathring\bff\rangle \\
    &= \mathring\bff^\intercal(\star\bff - \cI^\adjoint\eta).
\end{align}
Therefore \(\bff = \star^{-1}\cI^\adjoint\eta\), showing \eqref{eq:IDaggerContinuum1}.

The proof for \eqref{eq:IDaggerContinuum} is the same as the one for \eqref{eq:IDaggerContinuum1} with the substitution \(\eta = J_{\adv}(\vec x,\vec u)\).
\qed

\subsection{Proof of \thmref{thm:EnergyPreservingSigma}}
\label{app:EnergyPreservingSigma}
The proof has two parts.  First we show that states \(J_{\adv}(s^{(\sfn)})\) stay on the same coadjoint orbit.  
Second, we show that \(\cE^{(\sfn)} = \cE(s^{(\sfn)})\coloneqq H_{\rm D}\circ\cI^\adjoint\circ J_{\adv}(s^{(\sfn)})\) is \(\sfn\)-invariant. 

\subsubsection{Coadjoint Orbit Conservation}
\label{sec:CoadjointOrbitConservationProof}
To show the coadjoint orbit conservation we need the following lemma.
\begin{lemma}
    For each \(\vec v_1,\vec v_2\in \fX_{\div}(W)\) and \(s\in\Sigma\), we have
    \begin{align}\label{eq:FormulaWithAdvSigmaJLieBracket}
        \ip_{\adv_{\vec v_2}s} \ip_{\adv_{\vec v_1}s} \sigma = \llangle J_{\adv}(s)|[\vec v_1,\vec v_2]\rrangle.
    \end{align}
\end{lemma}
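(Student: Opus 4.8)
The plan is to collapse the double interior product into a single directional derivative of one component of the momentum map, and then to evaluate that derivative by hand using the explicit cotangent-bundle formulas for $\adv$ and $J_{\adv}$ from Section~\ref{sec:ChoiceForSymplecticSpace}. Throughout I abbreviate $J_{\vec v}\coloneqq\llangle J_{\adv}(\cdot)\,|\,\vec v\rrangle\colon\Sigma\to\RR$ for the component of the momentum map along a fixed $\vec v\in\fX_{\div}(W)$.

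First I would invoke the defining relation of the momentum map for a symplectomorphic action, namely \eqref{eq:MomentumMapForSymplectomorphicAction} (used already in the form \eqref{eq:JadvSigma}), which reads $\ip_{\adv_{\vec v_1}}\sigma=-dJ_{\vec v_1}$. Substituting this and using $\ip_X(df)=df\llbracket X\rrbracket=X\rhd f$ gives
\[
\ip_{\adv_{\vec v_2}}\ip_{\adv_{\vec v_1}}\sigma=\ip_{\adv_{\vec v_2}}(-dJ_{\vec v_1})=-\,\adv_{\vec v_2}\rhd J_{\vec v_1}.
\]
This reduces the claim to identifying the directional derivative $\adv_{\vec v_2}\rhd J_{\vec v_1}$ with $-\llangle J_{\adv}(s)\,|\,[\vec v_1,\vec v_2]\rrangle$.

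To evaluate that derivative I would use the closed forms $J_{\vec v_1}(\vec x,\vec u)=\int_M\langle\vec u\,|\,\vec v_1\circ\vec x\rangle\,d\mu$ together with the infinitesimal action $\adv_{\vec v_2}(\vec x,\vec u)=(\vec v_2|_{\vec x},\,-(\nabla\vec v_2)_{\vec x}^\intercal\vec u)$ from \eqref{eq:LiftedAdvActionLieAlgebra}. Differentiating $J_{\vec v_1}$ along the tangent vector $(\mathring{\vec x},\mathring{\vec u})=\adv_{\vec v_2}(\vec x,\vec u)$ by the product rule, and then moving the transpose across the pairing via $\langle(\nabla\vec v_2)^\intercal\vec u\,|\,\vec v_1\rangle=\langle\vec u\,|\,(\nabla\vec v_2)\vec v_1\rangle$, yields
\[
\adv_{\vec v_2}\rhd J_{\vec v_1}=\int_M\big\langle\vec u\,\big|\,(\nabla\vec v_1)\vec v_2-(\nabla\vec v_2)\vec v_1\big\rangle\,d\mu.
\]
I would then recognize the integrand's vector field as the coordinate expression of the Lie bracket of Example~\ref{ex:DerIsLieAlgebra}, $(\nabla\vec v_1)\vec v_2-(\nabla\vec v_2)\vec v_1=[\vec v_2,\vec v_1]$, so that $\adv_{\vec v_2}\rhd J_{\vec v_1}=\llangle J_{\adv}(\vec x,\vec u)\,|\,[\vec v_2,\vec v_1]\rrangle$. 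Combining with the first display and using antisymmetry of the bracket gives $\ip_{\adv_{\vec v_2}}\ip_{\adv_{\vec v_1}}\sigma=-\llangle J_{\adv}(s)\,|\,[\vec v_2,\vec v_1]\rrangle=\llangle J_{\adv}(s)\,|\,[\vec v_1,\vec v_2]\rrangle$, which is \eqref{eq:FormulaWithAdvSigmaJLieBracket}.

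The main obstacle is sign bookkeeping: the sign in the momentum-map relation \eqref{eq:JadvSigma}, the transpose move, and the orientation convention for the vector-field Lie bracket all contribute, and it is precisely their interplay that turns the naive $[\vec v_2,\vec v_1]$ into the $[\vec v_1,\vec v_2]$ in the claim. There is a more abstract route that avoids the explicit integral: one relates $\sigma(\adv_{\vec v_1},\adv_{\vec v_2})$ to the Poisson bracket $\{J_{\vec v_1},J_{\vec v_2}\}$ via \eqref{eq:PoissonFromSymplectic}, and then to $J_{[\vec v_1,\vec v_2]}$ (up to sign) by combining the homomorphism property of $\adv$ (Proposition~\ref{prop:DifferentialOfAction}) with $\sgrad J_{\vec v}=-\adv_{\vec v}$, the integration constant vanishing because each $J_{\vec v}$ is linear in $\vec u$. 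I would keep the explicit computation as the authoritative way to fix the overall sign, since the abstract argument is the one most prone to a convention slip.
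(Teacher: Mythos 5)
Your proof is correct, but it follows a genuinely different route from the paper's. The paper never opens up \(J_{\adv}\): it argues at the level of differentials, computing \(d\llangle J_{\adv}|[\vec v_1,\vec v_2]\rrangle = -\ip_{\adv([\vec v_1,\vec v_2])}\sigma = -\ip_{[\adv(\vec v_1),\adv(\vec v_2)]}\sigma\) and then unpacking the bracketed interior product with Cartan's formula, \(d\sigma = 0\), and \(\LD_{\adv(\vec v_i)}\sigma = 0\) to arrive at \(d\,\ip_{\adv(\vec v_2)}\ip_{\adv(\vec v_1)}\sigma\). You instead collapse \(\ip_{\adv_{\vec v_2}}\ip_{\adv_{\vec v_1}}\sigma\) to \(-\adv_{\vec v_2}\rhd J_{\vec v_1}\) via \eqref{eq:JadvSigma} and evaluate that directional derivative by hand on \(\Sigma = T^*Q\) using \eqref{eq:LiftedAdvActionLieAlgebra}; your product rule, the transpose move, and the identification \((\nabla\vec v_1)\vec v_2-(\nabla\vec v_2)\vec v_1=[\vec v_2,\vec v_1]\) under the convention of Example~\ref{ex:DerIsLieAlgebra} are all correct, and the final sign matches the lemma. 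The trade-off cuts both ways. Your computation proves the pointwise identity outright, whereas the paper's differential argument strictly yields equality only up to a locally constant function --- a step the paper leaves implicit, and which on \(T^*Q\) is closed by evaluating on the zero section, exactly the linearity-in-\(\vec u\) observation you make; your route also avoids the homomorphism-versus-antihomomorphism ambiguity (the paper's proof invokes \(\adv\) as a homomorphism while \eqref{eq:advLieAlgHom} advertises an (anti)homomorphism), which is precisely the convention slip you flag. On the other hand, the lemma is stated for an arbitrary symplectic \((\Sigma,\sigma)\) with symplectomorphic \(\SDiff(W)\)-action --- it feeds Theorem~\ref{thm:EnergyPreservingSigma}, formulated before the cotangent-bundle instantiation --- so your explicit computation covers only the \(\Sigma = T^*Q\) case of \secref{sec:ChoiceForSymplecticSpace} (the one CO-FLIP actually uses); for the general statement you would need the abstract argument you sketch in your closing paragraph, with equivariance of the momentum map replacing linearity in \(\vec u\) as the mechanism that kills the integration constant.
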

\begin{proof}
    The formula \eqref{eq:FormulaWithAdvSigmaJLieBracket} we want to show is an equality between functions defined over \(s\in\Sigma\).  That is we want to show the functional relation \(\ip_{\adv(\vec v_2)}\ip_{\adv(\vec v_1)}\sigma = \llangle J_{\adv}|[\vec v_1,\vec v_2]\rrangle\), where \(\adv\colon \fX_{\div}(W)\to \fX(\Sigma)\) is a Lie algebra homomorphism that results in a vector field.
    To show this functional relation, it suffices to show that their differentials are equal: We claim that \(d\ip_{\adv(\vec v_2)}\ip_{\adv(\vec v_1)}\sigma = d\llangle J_{\adv}|[\vec v_1,\vec v_2]\rrangle\).
    Using \eqref{eq:MomentumMapForSymplectomorphicAction} and that \(\adv\) is a Lie algebra homomorphism, we have
    \begin{align}
        &d\llangle J_{\adv}|[\vec v_1,\vec v_2]\rrangle = -\ip_{\adv([\vec v_1,\vec v_2])}\sigma\\
        &=-\ip_{[\adv(\vec v_1),\adv(\vec v_2)]}\sigma \\
        &=-\LD_{\adv(\vec v_1)}\ip_{\adv(\vec v_2)}\sigma + \ip_{\adv(\vec v_2)}\LD_{\adv(\vec v_1)}\sigma\\
        &=-d\ip_{\adv(\vec v_1)}\ip_{\adv(\vec v_2)}\sigma,
    \end{align}
    where the last equality uses the Cartan formula, \(d\sigma = 0\), and that \(\adv(\vec v_1), \adv(\vec v_2)\) are symplectomorphic \(\LD_{\adv(\vec v_1)}\sigma =\LD_{\adv(\vec v_2)}\sigma = 0\).
\end{proof}

Now we are ready to show that \eqref{eq:IDiscreteEulerEquationSigma} preserves the coadjoint orbit.
The update \eqref{eq:IDiscreteEulerEquationA} \(s^{(\sfn+1)} = \Adv_{\varphi(\Deltait t)}s^{(\sfn)}\) is the solution to the following ODE for a fixed \(\vec v\)
\begin{align}\label{eq:s_t}
\textstyle
    {\partial\over\partial t}s_t = \adv_{\vec v}s_t,\quad
    s_0 = s^{(\sfn)},\quad s^{(\sfn+1)} = s_1.
\end{align}
We show that during this update, \([\eta_t]\coloneqq J_{\adv}(s_t)\) evolves tangential to the coadjoint orbit.
Taking the time derivative, we get
\begin{align}
    \textstyle{\partial\over\partial t}[\eta_t] &=
    \textstyle{\partial\over\partial t}J_{\adv}(s_t)
    = dJ_{\adv}|_{s_t}\llbracket {\partial s_t\over\partial t}\rrbracket = dJ_{\adv}|_{s_t}\llbracket\adv_{\vec v}s_t\rrbracket.
\end{align}
Take an arbitrary test vector field \(\vec\xi\in\fX_{\div}(W)\) to pair with this formula
\begin{align}
    \llangle \textstyle{\partial\over\partial t}[\eta_t]|\vec\xi\rrangle &= \left\llangle 
    dJ_{\adv}|_{s_t}\llbracket \adv_{\vec v}s_t \rrbracket\middle|\vec \xi\right\rrangle\\
    &=\ip_{\adv_{\vec v}s_t}d\llangle J_{\adv}|\vec \xi\rrangle \\
    &\overset{\eqref{eq:MomentumMapForSymplectomorphicAction}}{=}
    -\ip_{\adv_{\vec v}s_t}\ip_{\adv_{\vec \xi}s_t}\sigma\\
    &\overset{\eqref{eq:FormulaWithAdvSigmaJLieBracket}}{=}\llangle J_{\adv}(s_t)|[\vec v,\vec \xi]\rrangle = \llangle[\eta_t]|[\vec v,\vec\xi]\rrangle\\
    &=\llangle-\LD_{\vec v}[\eta_t]|\vec\xi\rrangle.
\end{align}
For this to hold for all \(\vec\xi\in\fX_{\div}(W)\) we conclude that \({\partial\over\partial t}[\eta_t] + \LD_{\vec v}[\eta_t] = [0]\).  Therefore \([\eta_t] = J_{\adv}(s_t)\) evolves by coadjoint action and therefore stays on the same coadjoint orbit.\qed

\subsubsection{Energy Conservation}
Our approach is similar to \citet{Engo:2001:NILP}. Continuing the setup of \secref{sec:CoadjointOrbitConservationProof}, we let \(s_t\) be evolved by \eqref{eq:s_t} that connects \(s^{(\sfn)}\) and \(s^{(\sfn+1)}\).  Let \([\eta_t] = J_{\adv}(s_t)\).  By \secref{sec:CoadjointOrbitConservationProof} we know \({\partial\over\partial t}[\eta_t] + \LD_{\vec v}[\eta_t] = [0]\), where \(\vec v\) is fixed.
Then \eqref{eq:IDiscreteEulerEquation} simplifies to 
\begin{subequations}\label{eq:IDiscreteEulerEta}
    \begin{numcases}{}
    \label{eq:IDiscreteEulerEtaA}
        \textstyle
        [\eta_1] = \exp(-\Deltait t\LD_{\vec v})[\eta_0]\\
    \label{eq:IDiscreteEulerEtaB}
        \textstyle
        \vec v = {1\over 2}\cI\bP_{\fB_{\div}}\cI^+([\eta_1]+[\eta_0]).
    \end{numcases}
\end{subequations}
The energies at frame \(\sfn\) and \(\sfn+1\) are respectively 
\begin{align}
    \cE^{(\sfn)} &= \textstyle {1\over 2}|\bP_{\fB_{\div}}\cI^+[\eta_0]|^2_{\fB} = {1\over 2}\|\cI\bP_{\fB_{\div}}\cI^+[\eta_0]\|^2\\
    \cE^{(\sfn+1)} &= \textstyle {1\over 2}|\bP_{\fB_{\div}}\cI^+[\eta_1]|^2_{\fB} = {1\over 2}\|\cI\bP_{\fB_{\div}}\cI^+[\eta_1]\|^2.
\end{align}
Our goal is to show that their difference is zero.  Their difference, by the difference-of-squares formula, is given by
\begin{align}
\nonumber
    \cE^{(\sfn+1)}-\cE^{(\sfn)} &= \textstyle {1\over 2}\llangle \cI\bP_{\fB_{\div}}\cI^+([\eta_1]+[\eta_0]), \cI\bP_{\fB_{\div}}\cI^+([\eta_1]-[\eta_0])\rrangle\\
    &\overset{\eqref{eq:IDiscreteEulerEtaB}}{=}\textstyle\llangle\vec v,\cI\bP_{\fB_{\div}}\cI^+([\eta_1]-[\eta_0])\rrangle.
\end{align}
Now, by \thmref{thm:PressureProjectionIsExact} and \(\vec v\in\cI(\fB_{\div})\), we can replace \(\cI\bP_{\fB_{\div}}\) by \(\bP_{\fX_{\div}}\cI\), giving us
\begin{align}
    \cE^{(\sfn+1)}-\cE^{(\sfn)} &= \llangle\vec v,\bP_{\fX_{\div}}\cI\cI^+([\eta_1]-[\eta_0])\rrangle\\
    &=\llangle\vec v,\sharp([\eta_1]-[\eta_0])\rrangle = \llangle\vec v|[\eta_1]-[\eta_0]\rrangle
\end{align}
The second equality uses the fact that \(\bP_{\fX_{\div}}\) and \(\cI\cI^+\) are \(L^2\) orthogonal projection operators respectively to \(\fX_{\div}(W)\subset\fX(W)\) and \(\cI(\fB)\subset\fX(W)\), and  \(\vec v\) is an element of both subspaces.

We are left to show \(\llangle\vec v|[\eta_1]-[\eta_0]\rrangle = 0\).
By \eqref{eq:IDiscreteEulerEtaA} we have
\begin{align}
    &[\eta_1]-[\eta_0] = (\exp(-\Deltait t\LD_{\vec v})-1)[\eta_0]\\
    &= \textstyle \left((1-\Deltait t\LD_{\vec v}+{\Deltait t^2\over 2!}\LD_{\vec v}^2 - {\Deltait t^3\over 3!}\LD_{\vec v}^3+\cdots) - 1\right)[\eta_0]\\
    &=
    \textstyle
    \left(-\Deltait t \LD_{\vec v} + {\Deltait t^2\over 2!}\LD_{\vec v}^2-{\Deltait t^3\over 3!}\LD_{\vec v}^3+\cdots\right)[\eta_0]\\
    &=
    \textstyle
    \LD_{\vec v}\underbrace{\left(-\Deltait t + {\Deltait t^2\over 2!}\LD_{\vec v}-{\Deltait t^3\over 3!}\LD_{\vec v}^2+\cdots\right)[\eta_0]}_{\eqqcolon [\tilde\eta]}.
\end{align}
Therefore,
\begin{align}
    \llangle \vec v|[\eta_1]-[\eta_0]\rrangle = \llangle \vec v|\LD_{\vec v}[\tilde\eta]\rrangle = -\llangle \underbrace{[\vec v,\vec v]}_{=0}|[\tilde\eta]\rrangle =0.
\end{align}
\qed

\subsection{Proof of \thmref{thm:COFLIPCoadjointPreserving}}
\label{app:COFLIPCoadjointPreserving}
The advection in \eqref{eq:COFLIP-ODEa} is a coadjoint action, and therefore it preserves the coadjoint orbit.  The advection in \eqref{eq:ImplicitMidpointA} preserves the coadjoint orbit by the same argument as \secref{sec:CoadjointOrbitConservationProof}.
\qed

\subsection{Proof of \thmref{thm:EnergyCorrection}}
\label{app:EnergyCorrection}
By construction \((\bff^{(\sfn+1)}-\bff^{(\sfn)})\bot (\bff^{(\sfn+1)}+\bff^{(\sfn)})\).  Therefore \(|\bff^{(\sfn+1)}|_{\fB}^2 - |\bff^{(\sfn)}|_{\fB}^2=0\).
\qed

\end{document}